\def\cmntsoff{}
\renewcommand{\p@subsection}{}
\renewcommand{\p@subsubsection}{}
\providecommand{\ignore}[1]{}
\newif\ifcmnt
    \providecommand{\aucmnt}[1]{#1}
    \providecommand{\Pcolor}{\color{gray}}
    \providecommand{\aucmnt}[1]{}
    \providecommand{\Pcolor}{}
\newcommand{\Pc}[1]{\aucmnt{{\Pcolor \textbf{[}Permanent comment: {\color{gray}#1}\textbf{]}}}}
\newcommand{\Ac}[1]{\aucmnt{{\Pcolor \textbf{[}Authors' record: {\color{gray}#1}\textbf{]}}}}
\newcommand{\Prob}{\mathbb{P}}
\newcommand{\Exp}{\mathbb{E}}
\newcommand{\TV}{\mathrm{TV}}
\newcommand{\PD}{\mathrm{PD}}
\newcommand{\Unif}{\mathrm{Unif}}
\newcommand{\knuth}[1]{\left\llbracket #1 \right\rrbracket}
\newcommand{\eps}{\epsilon}
\newcommand{\epse}{\epsilon_{h}}
\newcommand{\epsx}{\epsilon_{x}}
\newcommand{\rls}{\mathbb{R}}
\newcommand{\cmplx}{\mathbb{C}}
\newcommand{\ints}{\mathbb{Z}}
\newcommand{\nats}{\mathbb{N}}
\newcommand{\vdim}{\mathrm{dim}}
\newcommand{\conc}{%
  \mathord{
    \mathchoice
    {\raisebox{1ex}{\scalebox{.7}{$\frown$}}}
    {\raisebox{1ex}{\scalebox{.7}{$\frown$}}}
    {\raisebox{.7ex}{\scalebox{.5}{$\frown$}}}
    {\raisebox{.7ex}{\scalebox{.5}{$\frown$}}}
  }
}
\newcommand{\defeq}{\doteq}
\newcommand{\Rng}{\mathrm{Rng}}
\newcommand{\Supp}{\mathrm{Supp}}
\newcommand{\Cvx}{\mathrm{Cvx}}
\newcommand{\Cone}{\mathrm{Cone}}
\newcommand{\Xtrm}{\mathrm{Extr}}
\newcommand{\CPTP}{\mathrm{CPTP}}
\newcommand{\pCP}{\mathrm{pCP}}
\newcommand{\CP}{\mathrm{CP}}
\newcommand{\pospart}[1]{\left[#1\right]_+}
\newcommand{\suppproj}[1]{\knuth{#1>0}}
\newcommand{\trdist}[2]{\TV(#1,#2)}
\newcommand{\purdist}[2]{\PD(#1,#2)}
\newcommand{\Rpow}[3]{\cR_{#1}\left(#2\middle|#3\right)}
\newcommand{\hatRpow}[3]{\hat\cR_{#1}\left(#2\middle|#3\right)}
\newcommand{\Ppow}[3]{\cP_{#1}\left(#2\middle|#3\right)}
\newcommand{\hatPpow}[3]{\hat\cP_{#1}\left(#2\middle|#3\right)}
\newcommand{\tildeDrel}[3]{\tilde D_{#1}\left({#2{\parallel} #3}\right)}
\newcommand{\Spec}{\mathrm{Spec}}
\newcommand{\QEF}{$\mathrm{QEF}$\xspace}
\newcommand{\QEFP}{$\mathrm{QEFP}$\xspace}
\newcommand{\QEFs}{$\mathrm{QEF}$s\xspace}
\newcommand{\QEFPs}{$\mathrm{QEFP}$s\xspace}
\newcommand{\Sfnt}[1]{\mathbf{#1}} 
\newcommand{\Pfnt}[1]{\mathsf{#1}} 
\newcommand{\cA}{\mathcal{A}}
\newcommand{\cB}{\mathcal{B}}
\newcommand{\cC}{\mathcal{C}}
\newcommand{\cD}{\mathcal{D}}
\newcommand{\cE}{\mathcal{E}}
\newcommand{\cF}{\mathcal{F}}
\newcommand{\cG}{\mathcal{G}}
\newcommand{\cH}{\mathcal{H}}
\newcommand{\cK}{\mathcal{K}}
\newcommand{\cL}{\mathcal{L}}
\newcommand{\cM}{\mathcal{M}}
\newcommand{\cN}{\mathcal{N}}
\newcommand{\cO}{\mathcal{O}}
\newcommand{\cP}{\mathcal{P}}
\newcommand{\cR}{\mathcal{R}}
\newcommand{\cS}{\mathcal{S}}
\newcommand{\cT}{\mathcal{T}}
\newcommand{\cU}{\mathcal{U}}
\newcommand{\cV}{\mathcal{V}}
\newcommand{\cW}{\mathcal{W}}
\newcommand{\cX}{\mathcal{X}}
\newcommand{\cY}{\mathcal{Y}}
\newcommand{\cZ}{\mathcal{Z}}
\newcommand{\fkC}{\mathfrak{C}}
\newcommand{\fkP}{\mathfrak{P}}
\newcommand{\one}{\mathds{1}}
\numberwithin{equation}{section}
\newtheorem{theorem}{Theorem}[section]
\newtheorem*{theorem*}{Theorem}
\newtheorem{lemma}[theorem]{Lemma}
\newtheorem*{lemma*}{Lemma}
\newtheorem{corollary}[theorem]{Corollary}
\newtheorem{definition}[theorem]{Definition}
\begin{document} 

\title{Quantum Probability Estimation for Randomness with Quantum Side Information}

\author{Emanuel Knill}
\affiliation{National Institute of Standards and Technology, Boulder, Colorado 80305, USA}
\affiliation{Center for Theory of Quantum Matter, University of Colorado, Boulder, Colorado 80309, USA}
\author{Yanbao Zhang}
\affiliation{NTT Basic Research Laboratories, NTT Corporation, 3-1
Morinosato-Wakamiya, Atsugi, Kanagawa 243-0198, Japan}
\affiliation{NTT Research Center for Theoretical Quantum Physics, NTT Corporation, 
3-1 Morinosato-Wakamiya, Atsugi, Kanagawa 243-0198, Japan}
\author{Honghao Fu}
\affiliation{Joint Institute for Quantum Information and Computer Science, University of Maryland, College Park, Maryland 20740, USA}

\begin{abstract} 
  We develop a quantum version of the probability estimation framework [arXiv:1709.06159] for randomness generation with quantum side information. We show that most of the properties of probability estimation hold for quantum probability estimation (QPE). This includes asymptotic optimality at constant error and randomness expansion with logarithmic input entropy. QPE is implemented by constructing model-dependent quantum estimation factors (QEFs), which yield statistical confidence upper bounds on data-conditional normalized R\'enyi powers. This leads to conditional min-entropy estimates for randomness generation.  The bounds are valid for relevant models of sequences of experimental trials without requiring independent and identical or stationary behavior.  QEFs may be adapted to changing conditions during the sequence and trials can be stopped any time, such as when the results so far are satisfactory.  QEFs can be constructed from entropy estimators to improve the bounds for conditional min-entropy of classical-quantum states from the entropy accumulation framework [Dupuis, Fawzi and Renner, arXiv:1607.01796]. QEFs are applicable to a larger class of models, including models permitting  experimental devices with super-quantum but non-signaling behaviors and semi-device dependent models.  The improved bounds are relevant for finite data or error bounds of the form $e^{-\kappa s}$, where $s$ is the number of random bits produced.  We give a general construction of entropy estimators based on maximum probability estimators, which exist for many configurations.  For the class of $(k,2,2)$ Bell-test configurations we provide schemas for directly optimizing QEFs to overcome the limitations of entropy-estimator-based constructions. We obtain and apply QEFs for examples involving the $(2,2,2)$ Bell-test configuration to demonstrate substantial improvements in finite-data efficiency.
\end{abstract}

\maketitle
\tableofcontents

\SetKwInOut{Given}{Given}
\SetKwInOut{Input}{Input}
\SetKwInOut{Output}{Output}
\SetInd{0.5em}{1em}

\section{Overview}

\subsection{Introduction}

For a relevant overview of the problem of device-independent
randomness generation and expansion and how probability estimation
(PE) solves this problem for classical side-information, see
Ref.~\cite{knill:qc2017a}. Here we establish the mathematical
foundations for quantum probability estimation (QPE), which implements
most of the features of PE from Ref.~\cite{knill:qc2017a} for quantum
side-information. The features implemented include: (i) Sound
conditional min-entropy estimation for general models covering 
device-independent and device-dependent configurations without
assuming stationarity or independence of trials. (ii) Forward
adaptability to changing experimental conditions and the ability to
stop acquiring trials early when satisfied. (iii) Asymptotically
optimal rates at constant error bounds. (iv) Uncomplicated and clean
exponential expansion with highly biased inputs. (v) Accessible
constructions for available experimental configurations.  We did not
implement a generalization to ``soft'' estimators that would allow use
of information not intended to be part of the extractor input.  In
addition, while we have general effective methods for PE optimization,
effective methods for unrestricted QPE optimization presently exist
only for special configurations, which include standard Bell-test
configurations.

The first insight of the PE framework is that it is possible to
directly estimate the data-dependent side information and input
conditional probabilities for a sequence of trials. The estimate is a
traditional statistical one, giving confidence upper bounds on the
conditional probability of the data. The second insight is that these
estimates can be used to estimate conditional min-entropy for use with
classical-proof strong randomness extractors to produce near-uniform
random bits, or directly to prove soundness of bits extracted with
arbitrary strong randomness extractors. The third insight is that
probability estimates can be obtained by martingale methods from
probability estimation factors (PEFs) that are computed for each trial.

In the presence of quantum side information, instead of estimating
conditional probabilities, we estimate conditional R\'enyi powers for
the observed data given the inputs and the side information.  The
conditional R\'enyi powers are non-commutative generalizations of the
conditional probabilities estimated in PE.  R\'enyi entropies have
played major roles in previous works showing that it is possible to
generate randomness in a device-independent way with $\Pfnt{E}$
holding quantum side
information~\cite{miller_c:qc2014a,miller_c:qc2014b,dupuis:qc2016a,arnon-friedman:qc2018a}.
Most of the properties of R\'enyi entropies rest on properties
established for R\'enyi powers, so estimating the latter may be viewed
as more fundamental.  The conditional R\'enyi powers are estimated via
quantum estimation factors (\QEFs), replacing PEFs in PE.  We prove
that chaining \QEFs by multiplying them for a sequence of trials
yields \QEFs for the sequence as a whole.  As a result, \QEFs (more
precisely, their inverses) may be seen as accumulating
conditional R\'enyi power estimates, so the framework could
alternatively be called ``R\'enyi power accumulation''.  The trials,
their models and the \QEFs in a chain can depend arbitrarily on data
from previous trials, as a result of which it is also possible to stop
trials whenever sufficient R\'enyi power has been accumulated.  Other
approaches to randomness generation have not explicitly developed
these capabilities to the same extent. Because the R\'enyi power
estimates depend on the specific data observed, they imply but are
separate from any entropy estimates for the state as a whole. A main
result is that like PEFs for PE, \QEFs yield a conditional min-entropy
estimate that can be used directly with quantum-proof strong
extractors.

The conceptual principles of QPE rest on statistical estimates of
probabilities rather then entropic analyses, and the proofs of the
mathematical results characterizing \QEFs and establishing their
chainability reflect these principles.  However, given the common
goals of the entropy accumulation framework~\cite{dupuis:qc2016a} and
QPE, it is not surprising that there are connections between the two.
Every \QEF yields an entropy estimator, which is equivalent to an
instance of affine min-tradeoff functions as defined in the entropy
accumulation framework.  Conversely, \QEFs can be constructed from
entropy estimators. However, the construction is not reversible in the
sense that \QEFs obtained from entropy estimators belong to a
restricted class of \QEFs with strictly worse performance than the
original \QEFs from which the entropy estimator was derived.  In the
examples of Sect.~\ref{subsec:examples}, the performance is
substantially worse.

Our construction of \QEFs from entropy estimators and its consequences
for conditional min-entropy estimation parallel the corresponding
results in Ref.~\cite{dupuis:qc2016a}.  A corollary of our
construction is an improved version of the entropy accumulation
theorem (EAT, Thm.~4.4 of Ref.~\cite{dupuis:qc2016a}) for the case of
conditional min-entropy of classical-quantum states.  The EAT is
formulated for quantum-quantum states, but for randomness generation
there is no need to estimate conditional min-entropy for such states,
so we do not pursue this generalization here. Neither do we consider
extensions to estimating smooth max-entropy, which is another
capability of the EAT.  Unlike the original EAT, our construction leads to
exponential randomness expansion without protocol complications, where
the input entropy is a simple logarithm of the output entropy.
We remark that there is now a refinement of the EAT which
yields ``second-order'' improvements similar to ours and also
achieves exponential randomness expansion~\cite{dupuis:qc2018a}.

The QPE framework has more flexibility for models of the quantum side
information. In particular, we can obtain randomness secure against
any non-signaling devices, quantum or otherwise, provided the
side information is still quantum.  At the time of writing, there are
few min-tradeoff functions suitable for use with the EAT. We provide a
large family of entropy estimators from which \QEFs can be constructed
and optimized. In general, we prefer to optimize \QEFs directly
whenever possible, and we show that the optimization problem can be
solved numerically for the important class of $(k,2,2)$-Bell-test
configurations.

Like entropy accumulation, QPE is asymptotically optimal at constant
error bounds. This does not imply optimality for finite data, for
randomness expansion, or when error bounds decrease exponentially with
the randomness produced.  For this regime, we do not know what the
optimal rates are, but like PE for classical side information, QPE
performs substantially better than other methods developed so far for
quantum side information. For this, we consider two closely related
problems. Suppose we are given a model for the side information after
any sequence of trials, and we anticipate a particular distribution
for the results from each trial. The first problem is to determine the
minimum number of trials $n$ required to obtain $k$ random bits at a
given error bound $\epsilon$. The second is to determine the
asymptotic rate of random bits that can be produced given that the
error bound is of the form $e^{-\kappa n}$. For the EAT and QPE, 
the solutions of the two problems are
essentially equivalent, but the second problem has the advantage of a
clear asymptotic formulation not affected by finite $n$.  
For $\kappa=0$, the maximum rate is determined by the asymptotic
equipartition property~\cite{tomamichel:qc2009a}.

The problems of the previous paragraph are motivated by relevant
applications such as randomness beacons~\cite{fischer:qc2011a} or
low-latency randomness generation. In these cases, a fixed-size block
of random bits, uniform within a given error bound, needs to be
produced within a short time.  This is typically far from an
asymptotic regime, where the amount of randomness generated is much
larger than the log-error bound and there is a long delay
from protocol initiation to randomness availability.  A relevant
finite problem for benchmarking purposes is to produce $512$ random
bits certified to be within $2^{-64}$ of uniform. The performance of a
particular protocol is determined by the resources required. We
usually fix the observed trial distribution, assume that it is
independent and identical, then ask for trade-off curves for
the number of trials and the number of initial random bits
required. The initial random bits are needed for input choices and for
the extractor seed.  Under many circumstances, the initial random bits
may come from a public source.  Here, the assumptions on the trial
distribution are a completeness property, where in an ideal setup we
expect to be able configure the experiment so that overall frequencies
approach the assumed ones. Soundness of the protocols does not depend
on the specific distributions, only on the model.

Since the completion of this preprint, parts of this work have been
published. Ref.~\cite{zhang_y:qc2020a} covers the basic theory of QEFs
for randomness generation and Ref.~\cite{zhang_y:qc2018a} describes an
experimental implementation for repeated and low-latency production of
blocks of $512$ random bits.

\subsection{Summary of Main Results}

The purpose of this manuscript is to provide the mathematical
foundations for quantum probability estimation. The technical
results in the manuscript may be difficult to interpret without having
worked through the parts leading up to them.  For accessibility, in
this section we summarize the main results without precise definitions.

We consider systems consisting of classical variables $C$ and $Z$ and
a quantum system containing the side information $\Pfnt{E}$.  For the
present purposes, these symbols may be treated as system labels. In
quantum terms, a joint state of the systems may be written as
$\rho_{CZ\Pfnt{E}}=\sum_{cz}\dyad{cz}\otimes\rho_{\Pfnt{E}}(cz)$ with respect
to the classical basis of $C$ and $Z$, where
$\sum_{cz}\tr(\rho_{\Pfnt{E}}(cz))=1$. We treat $Z$ as the input and
$C$ as the output system. In a typical Bell test, $Z$ is the sequence
of measurement settings choices (or inputs) and $C$ is the sequence of
measurement outcomes (or outputs), where the inputs and outputs may
contain choices and results from multiple devices. The joint state
given is the final state after the experiment, which consists of a
sequence of trials generating results $C_{i}Z_{i}$ so that
$C=(C_{i})_{i=1}^{n}$ and $Z=(Z_{i})_{i=1}^{n}$.  A model for the
experiment is the set of final states that can occur and is normally
constructed by chaining models for each trial. The models must be
chained while satisfying a Markov condition on the inputs similar to
the Markov condition required for EAT channel
chains~\cite{arnon-friedman:qc2018a}. To avoid the Markov condition
one can drop the use of explicit inputs by including them in $C$. For
example, see Protocol~\ref{prot:condimplicit}, which requires that the
conditional min-entropy witnessed exceeds the number of bits required for the
inputs.  
For Bell tests, the trial models are constrained by
non-signaling conditions and, for quantum devices, by the requirement 
that the results can be achieved with measurements of separate quantum
systems according to the configuration. We develop a general framework
for models and their construction in Sect.~\ref{sec:qmodels}.  We
explain how models capture standard configurations for device-dependent and device-independent randomness generation in
Sect.~\ref{sec:models:examples}.  Configurations modeled with explicit
quantum systems and quantum processes producing the data are readily
accounted for, as are scenarios where the devices may exhibit
unspecified super-quantum behaviors, as long as the side information
is still quantum.

Let $\alpha>1$ and $\beta=\alpha-1$. Given $\rho_{CZ\Pfnt{E}}$ as
above, define $\rho(z)=\sum_{c}\rho(cz)$, where we omit the $\Pfnt{E}$
system label when this is the only quantum system in play.  For a
given state $\rho_{CZ\Pfnt{E}}$, the normalized, sandwiched,
conditional $\alpha$-R\'enyi power for value $cz$ of $CZ$ is given by
\begin{equation}
  \hatRpow{\alpha}{\rho(cz)}{\rho(z)} =
  \frac{1}{\tr(\rho(cz))} \tr( (\rho(z)^{-\beta/(2\alpha)}\rho(cz)\rho(z)^{-\beta/(2\alpha)})^{\alpha}).
\end{equation}
If $\Pfnt{E}$ is one-dimensional, then $\mu(cz)\doteq\rho(cz)$ is a
probability distribution and the conditional R\'enyi power becomes
$(\mu(cz)/\mu(z))^{\beta}$, a power of the probability of $c$
conditional on $z$. In the probability estimation
framework~\cite{knill:qc2017a}, the main goal is to estimate such
conditional probabilities. Here, the non-commutative generalization is
to estimate the conditional R\'enyi powers.

The success of probability estimation framework rests on the
construction of probability estimation factors (PEFs) which yield
probability estimates via a martingale analysis. Quantum estimation
factors (\QEFs) with power $\beta$ are functions $F:cz\mapsto
F(cz)\geq 0$ such that for all states $\rho_{CZ\Pfnt{E}}$ in the
model, $F$ satisfies the \QEF inequality
\begin{equation}
  \sum_{cz} \tr(\rho(cz)) F(cz) \hatRpow{\alpha}{\rho(cz)}{\rho(z)}\leq 1.
\end{equation}
We do not use an explicit martingale analysis for \QEFs. Instead we
show directly that \QEFs for the trial models can be multiplied to
yield \QEFs for the sequence of trials. \QEFs for later trials may
depend on data from earlier trials, so we refer to this procedure as
\QEF chaining.  \QEFs and their variations are defined in
Sect.~\ref{sec:qefdefs}.  That they can be chained is
Thm.~\ref{thm:qefchainmain}.  It appears that the sandwiched R\'enyi
powers are particularly well suited for chaining. We have not succeeded
in chaining other quantities that yield conditional min-entropy
estimates.

The main result for \QEFs is that they
yield confidence upper bounds on the conditional R\'enyi powers:
\begin{theorem*}
  \emph{(Thm.~\ref{thm:qefs_estimate})} If $F$ is a \QEF with power
  $\beta$ for a model, and $\rho_{CZ\Pfnt{E}}$ is a state in the
  model, then $[0,1/(\epsilon F(cz))]$ is a significance-level
  $\epsilon$ confidence interval for
  $\hatRpow{\alpha}{\rho(cz)}{\rho(z)}$ with respect to the
  probability distribution $\tr(\rho(cz))$ induced on $CZ$ by
  $\rho_{CZ\Pfnt{E}}$.
\end{theorem*}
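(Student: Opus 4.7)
The plan is to reduce the statement to a direct application of Markov's inequality, using the \QEF inequality as the expectation bound that Markov requires. Let $\mu(cz) = \tr(\rho(cz))$ denote the induced distribution on $CZ$ and consider, on the probability space $(CZ,\mu)$, the nonnegative random variable
\begin{equation*}
  X(cz) \doteq F(cz)\,\hatRpow{\alpha}{\rho(cz)}{\rho(z)} .
\end{equation*}
Nonnegativity of $X$ follows because $F(cz)\geq 0$ by assumption and because the normalized sandwiched R\'enyi power is nonnegative, being the trace of a positive semidefinite operator raised to the power $\alpha > 1$ (up to the positive normalization $1/\tr(\rho(cz))$). The defining \QEF inequality is then precisely the statement that $\mathbb{E}_\mu[X] = \sum_{cz}\mu(cz)X(cz) \leq 1$.

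With this expectation bound in hand, Markov's inequality yields
\begin{equation*}
  \Pr\nolimits_\mu\!\big[X \geq 1/\epsilon\big] \;\leq\; \epsilon\, \mathbb{E}_\mu[X] \;\leq\; \epsilon .
\end{equation*}
On the event $\{F(cz)>0\}$, the inequality $X(cz) < 1/\epsilon$ is equivalent to $\hatRpow{\alpha}{\rho(cz)}{\rho(z)} < 1/(\epsilon F(cz))$, so the true R\'enyi power lies in $[0,1/(\epsilon F(cz))]$ with $\mu$-probability at least $1-\epsilon$. On the complementary event $\{F(cz)=0\}$, the stated interval is $[0,\infty]$, which contains any nonnegative quantity trivially, so those outcomes only improve the coverage probability. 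Combining the two cases gives the required significance-level $\epsilon$ confidence statement.

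The only real obstacle I anticipate is notational rather than mathematical: one must confirm that $\hatRpow{\alpha}{\rho(cz)}{\rho(z)}$ is well-defined under the adopted conventions when $\rho(z)$ has a nontrivial kernel or when $\tr(\rho(cz))=0$, so that the sum $\sum_{cz}\mu(cz)X(cz)$ and the event manipulations above are meaningful term by term. This is standard and is usually resolved by restricting the negative powers of $\rho(z)$ to its support and by adopting the convention that outcomes of $\mu$-probability zero contribute zero to the expectation (and lie in the trivial interval by default). Modulo this definitional cleanup, the entire argument is a one-line Markov bound applied to the \QEF inequality and needs no further ingredients.
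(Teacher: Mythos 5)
Your proof is correct and follows essentially the same route as the paper: interpret the \QEF inequality as the bound $\Exp_{\mu}[F\cdot\hatRpow{\alpha}{\rho(CZ)}{\rho(Z)}]\leq 1$ and apply Markov's inequality to the nonnegative random variable $F(CZ)\hatRpow{\alpha}{\rho(CZ)}{\rho(Z)}$. The only cosmetic difference is that you bound $\Prob[X\geq 1/\epsilon]$ while the paper bounds $\Prob[X>1/\epsilon]$; both suffice, and your careful handling of the $F(cz)=0$ and $\tr(\rho(cz))=0$ cases (the latter resolved by the paper's convention that the normalized R\'enyi power equals $1$ there) fills in exactly the definitional corners the paper leaves implicit.
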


For randomness generation, \QEFs are used to estimate conditional
min-entropy with an error bound. If the estimate is larger than a
protocol threshold, a quantum-proof strong extractor can be applied to
the outputs to obtain a string of nearly uniform random bits.  The
number of bits is somewhat less than the estimate in order to take into account
extractor constraints.  Let $H^{\epsilon}_{\infty}(C|Z\Pfnt{E},\Phi')$
denote the smooth quantum conditional min-entropy for the state of
$CZ\Pfnt{E}$ conditional on the event $\Phi'$ defined as a set of
values $cz$ of $CZ$.  The smoothness parameter $\epsilon$ is an error
bound that chains directly with error bounds of extractors. It is
defined with respect to purified distance, but may be interpreted
as total variation distance for chaining with protocols whose
error bounds use the latter distance.  The conditional R\'enyi power estimate
provided by a \QEF implies a conditional min-entropy estimate suitable
for randomness generation protocols:
\begin{theorem*}
  \emph{(Thm.~\ref{thm:bnds_from_qef})}
  Suppose that $F$ is a \QEF with power $\beta$ for a model, and
  $\rho_{CZ\Pfnt{E}}$ is a state in the model.  Fix $1\geq p>0$ and
  $\epsilon>0$ and write $\Phi=\{cz:F(cz)\geq
  1/(p^{\beta}(\epsilon^{2}/2))\}$.  Let $\Phi'\subseteq\Phi$ and let
  $\kappa=\sum_{cz\in\Phi'}\tr(\rho(cz))$ be the probability of the
  event $\Phi'$ according to the state. Then
  $H^{\epsilon}_{\infty}(C|Z\Pfnt{E},\Phi') \geq -\log(p) +
  \frac{\alpha}{\beta}\log(\kappa)$.
\end{theorem*}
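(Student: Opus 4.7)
The plan is to convert the R\'enyi-power bound implicit in the \QEF property into a bound on the sandwiched $\alpha$-R\'enyi divergence of the post-selected state, and then to invoke the standard inequality bridging sandwiched R\'enyi entropy and smooth min-entropy.

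First, I would restrict the \QEF inequality
\begin{equation*}
\sum_{cz}\tr(\rho(cz))F(cz)\hatRpow{\alpha}{\rho(cz)}{\rho(z)}\leq 1
\end{equation*}
to $cz\in\Phi'\subseteq\Phi$ and use $F(cz)\geq 2/(p^{\beta}\eps^{2})$ on $\Phi$ to obtain
\begin{equation*}
\sum_{cz\in\Phi'}\tr\!\bigl((\rho(z)^{-\beta/(2\alpha)}\rho(cz)\rho(z)^{-\beta/(2\alpha)})^{\alpha}\bigr)\;\leq\; p^{\beta}\eps^{2}/2.
\end{equation*}
Introduce the subnormalized post-selected state $\tilde\rho_{CZ\Pfnt{E}}=\sum_{cz\in\Phi'}\dyad{cz}\otimes\rho_{\Pfnt{E}}(cz)$ (of trace $\kappa$) and the conditional state $\sigma_{CZ\Pfnt{E}}=\tilde\rho_{CZ\Pfnt{E}}/\kappa$. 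Because $CZ$ is classical, $\rho_{Z\Pfnt{E}}^{-\beta/(2\alpha)}$ is block-diagonal in the $z$-basis and the preceding sum equals $\tr\!\bigl((\rho_{Z\Pfnt{E}}^{-\beta/(2\alpha)}\tilde\rho_{CZ\Pfnt{E}}\rho_{Z\Pfnt{E}}^{-\beta/(2\alpha)})^{\alpha}\bigr)$, so the bound can be restated as $\tildeDrel{\alpha}{\tilde\rho_{CZ\Pfnt{E}}}{\one_{C}\otimes\rho_{Z\Pfnt{E}}}\leq \log p+(1/\beta)\log(\eps^{2}/2)$.

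Next, I would use the scaling $\tildeDrel{\alpha}{c\omega}{\tau}=\tildeDrel{\alpha}{\omega}{\tau}+(\alpha/\beta)\log c$ with $c=\kappa$ to pass from $\tilde\rho$ to $\sigma$, giving $\tildeDrel{\alpha}{\sigma_{CZ\Pfnt{E}}}{\one_{C}\otimes\rho_{Z\Pfnt{E}}}\leq \log p+(1/\beta)\log(\eps^{2}/2)-(\alpha/\beta)\log\kappa$. Since $\rho_{Z\Pfnt{E}}$ is a density operator and therefore a legitimate candidate in the variational formula $\tilde H_{\alpha}^{\uparrow}(C|Z\Pfnt{E})_{\sigma}=-\inf_{\tau_{Z\Pfnt{E}}}\tildeDrel{\alpha}{\sigma_{CZ\Pfnt{E}}}{\one_{C}\otimes\tau_{Z\Pfnt{E}}}$, plugging in $\tau_{Z\Pfnt{E}}=\rho_{Z\Pfnt{E}}$ gives
\begin{equation*}
\tilde H_{\alpha}^{\uparrow}(C|Z\Pfnt{E})_{\sigma}\;\geq\;-\log p+\frac{1}{\beta}\log\frac{2}{\eps^{2}}+\frac{\alpha}{\beta}\log\kappa.
\end{equation*}
Finally, the standard conversion $H^{\eps}_{\infty}(C|Z\Pfnt{E})_{\sigma}\geq \tilde H_{\alpha}^{\uparrow}(C|Z\Pfnt{E})_{\sigma}-(1/\beta)\log(2/\eps^{2})$, with smoothness measured in purified distance, cancels the $(1/\beta)\log(2/\eps^{2})$ terms and yields the claimed $-\log p+(\alpha/\beta)\log\kappa$.

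The key design choice is to use $\rho_{Z\Pfnt{E}}$ rather than the $\sigma$-marginal as the reference on $Z\Pfnt{E}$: this is what makes the \QEF bound directly applicable, since the \QEF inequality is formulated in terms of R\'enyi powers with the \emph{unconditional} $\rho(z)$ in the denominator. The main obstacle I expect is bookkeeping: tracking the $\kappa^{\alpha}$ from the homogeneity carefully so that it produces the $(\alpha/\beta)\log\kappa$ correction in the right direction, and aligning the $(1/\beta)\log(2/\eps^{2})$ constants arising from the \QEF threshold on $\Phi$ and from the purified-distance conversion from $\tilde H_{\alpha}^{\uparrow}$ to $H^{\eps}_{\infty}$ so that they cancel exactly.
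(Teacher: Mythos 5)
Your proof is correct and follows essentially the same route as the paper: both restrict the QEF inequality to $\Phi'$, pull out $\kappa^{\alpha}$ by homogeneity of the R\'enyi power, quantize the sum as a single sandwiched power with $\rho_{Z\Pfnt{E}}$ as the reference, and invoke the sandwiched-R\'enyi-to-smooth-min-entropy conversion (Prop.~6.5 of Ref.~\cite{tomamichel:qc2015a}, which the paper applies directly rather than via $\tilde H_{\alpha}^{\uparrow}$). One small caveat: your scaling rule $\tildeDrel{\alpha}{c\omega}{\tau}=\tildeDrel{\alpha}{\omega}{\tau}+(\alpha/\beta)\log c$ corresponds to the unnormalized R\'enyi divergence, whereas the paper's $\tildeDrel$ is defined via $\hatRpow$ (dividing by $\tr\omega$) and scales by $+\log c$; the two conventions agree for the normalized $\sigma$, so the discrepancy washes out and your final bound is correct.
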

Here we used the convention $\log(0)=-\infty$.  We formulated the
theorem to parallel the statements of the EAT and the
  propositions that lead to the EAT in Ref.~\cite{dupuis:qc2016a}.  If
$CZ$ is generated by a sequence of trials chained with identical
models and $F$ is obtained by multiplying identical trial-wise \QEFs
$F_{0}$, then we can define a rate $h$ by $h \doteq -\log(p)/n$. The event
$\Phi$ can alternatively be expressed as
$\Phi=\{cz:\sum_{i}\log(F_{0}(c_{i}z_{i}))/\beta \geq
nh-2\log(\epsilon/\sqrt{2})/\beta\}$. This identifies $h$ as the
targeted conditional min-entropy rate, and we can interpret
$\log(F_{0}(c_{i}z_{i}))/\beta$ as the trial-wise contributions to the
final conditional min-entropy.  When configuring an experiment, the
goal is therefore to maximize the expected values of
$\log(F_{0}(c_{i}z_{i}))/\beta$. Comparing the bounds to the
corresponding ones for PEFs in Ref.~\cite{knill:qc2017a}, the main
difference is the change in the threshold requirement replacing the
term $\epsilon$ by $\epsilon^{2}/2$. An interpretation is that for the
same witnessed rate and for a positive conditional min-entropy bound,
twice as many trials are required to satisfy the error bound with
quantum side information than with classical side information.  A
similar phenomenon occurs when comparing parameters of quantum-proof
to classical-proof strong extractors, for example, see
Ref.~\cite{mauerer:2012}.

The QPE framework was motivated and developed as a generalization of
the PE framework~\cite{knill:qc2017a} to quantum side-information, which 
in turn arose from a program~\cite{bierhorst:qc2017a, bierhorst:qc2018a} 
for randomness generation based on test supermartingales~\cite{shafer:qc2009a} 
constructed from trial-wise test factors~\cite{zhang:2011}.  This led to 
the development of conditional R\'enyi power estimates. To obtain conditional
min-entropy estimates suitable for randomness generation we take
advantage of the connection between R\'enyi relative entropy and
conditional min-entropy~\cite{tomamichel:qc2009a}, which is also used
to prove the EAT from its prequel.

Explicit protocols for randomness generation that compose the
conditional min-entropy estimate with quantum-proof randomness
extractors are given in Sect.~\ref{sec:qef_protocols}.  For the
soundness of the protocols, the power $\beta$, the smoothness
$\epsilon$ and the target entropy $-\log(p)$ must be chosen before the
protocol, in particular before or at least independently of the data
being generated by the experiment. For \QEFs, it is possible to
optimize and update trial-wise \QEFs (with $\beta$ fixed in advance)
before each trial, but after the data is obtained no further
optimization is possible. These considerations apply to all randomness
generation protocols. For example, to apply the EAT, the number of
trials, the target conditional min-entropy rate $h$ and the affine
min-tradeoff function are fixed before the protocol and temptation to
optimize them after the protocol in view of the trial results must be
resisted.

In Ref.~\cite{knill:qc2017a}, effective 
algorithms for optimizing PEFs are described and implemented. We do
not have such algorithms for \QEFs but offer two general theoretical
constructions and a schema for optimizing \QEFs for Bell-test
configurations with two input choices and two possible outputs for
each station. The first construction is based on a relationship
between \QEFs and entropy estimators.  The function $K:cz\mapsto
K(cz)\in\rls$ is an entropy estimator for a model if for all states
$\rho_{CZ\Pfnt{E}}$ of the model,
\begin{equation}
  \sum_{cz}K(cz)\tr(\rho(cz)) \leq H_{1}(C|Z\Pfnt{E}),
\end{equation}    
where $H_{1}(C|Z\Pfnt{E})$ is the quantum conditional entropy of the
state. Every \QEF yields an entropy estimator.
\begin{theorem*}
  \emph{(Thm.~\ref{thm:qef_to_ee})}
  Suppose that $F$ is a \QEF with power $\beta$ for a model.
  Then $K:cz\mapsto \log(F(cz))/\beta$ is an entropy estimator for the model.
\end{theorem*}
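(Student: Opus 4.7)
The plan is to combine Jensen's inequality with the monotonicity of the sandwiched R\'enyi divergence in its order, so as to convert the \QEF bound into a statement about the von Neumann conditional entropy. Write $p(cz) = \tr(\rho(cz))$ and $p(c|z) = p(cz)/p(z)$, restricting throughout to the support of $p$ (terms with $p(cz)=0$ contribute zero everywhere). Concavity of $\log$ applied to the \QEF inequality gives
\begin{equation}
\sum_{cz}p(cz)\log F(cz) + \sum_{cz}p(cz)\log\hatRpow{\alpha}{\rho(cz)}{\rho(z)} \;\leq\; \log\sum_{cz}p(cz)F(cz)\hatRpow{\alpha}{\rho(cz)}{\rho(z)} \;\leq\; 0,
\end{equation}
so after dividing by $\beta > 0$ it suffices to prove $-\frac{1}{\beta}\sum_{cz}p(cz)\log\hatRpow{\alpha}{\rho(cz)}{\rho(z)} \leq H_1(C|Z\Pfnt{E})$.

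Next I would establish the pointwise identity
\begin{equation}
\frac{1}{\beta}\log\hatRpow{\alpha}{\rho(cz)}{\rho(z)} = \tildeDrel{\alpha}{\rho(cz)/p(cz)}{\rho(z)/p(z)} + \log p(c|z),
\end{equation}
obtained by pulling the factors $p(cz)$ and $p(z)$ out of the sandwiched expression in the definition of $\hatRpow{\alpha}$ and simplifying the exponents $\beta$ and $\alpha$ that appear. Averaging against $p(cz)$ and collecting terms gives
\begin{equation}
-\frac{1}{\beta}\sum_{cz}p(cz)\log\hatRpow{\alpha}{\rho(cz)}{\rho(z)} = H(C|Z) - \sum_{cz}p(cz)\tildeDrel{\alpha}{\rho(cz)/p(cz)}{\rho(z)/p(z)},
\end{equation}
where $H(C|Z) = -\sum_{cz}p(cz)\log p(c|z)$ is the classical Shannon conditional entropy.

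Finally I would invoke monotonicity of $\tildeDrel{\alpha}$ in its order, which gives $\tildeDrel{\alpha}{\cdot}{\cdot} \geq D(\cdot\|\cdot)$ for $\alpha > 1$, with $D$ the Umegaki relative entropy appearing in the limit $\alpha \to 1^+$. A short calculation, using $\sum_c p(c|z)\rho(cz)/p(cz) = \rho(z)/p(z)$ for each $z$, identifies $\sum_{cz}p(cz)D(\rho(cz)/p(cz)\|\rho(z)/p(z))$ with the conditional mutual information $I(C;\Pfnt{E}|Z)$ of the classical-quantum state. Combined with the chain rule $H_1(C|Z\Pfnt{E}) = H(C|Z) - I(C;\Pfnt{E}|Z)$, which holds because $Z$ is classical, this closes the argument. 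The main obstacle I anticipate is the bookkeeping in the pointwise identity, in particular the cancellation of the $p(cz)^{\beta}$ and $p(z)^{\beta}$ factors when normalizing the sandwiched expression, together with the identification of the $D$-expectation with $I(C;\Pfnt{E}|Z)$; both steps are routine but require care with the normalizations and with the convention handling the support of $p$.
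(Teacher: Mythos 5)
Your proof is correct, but it takes a genuinely different route from the paper's. The paper first applies power reduction to obtain a \QEF with arbitrarily small power $\gamma\in(0,\beta]$, then linearizes the exponential via $e^{x}\geq 1+x$, and finally sends $\gamma\searrow 0$ using Lem.~\ref{lem:drellimit1}. You instead apply Jensen's inequality (concavity of $\log$) to the \QEF inequality at the fixed order $\alpha=1+\beta$, which yields $\sum_{cz}\tr(\rho(cz))K(cz)\leq -\sum_{cz}\tr(\rho(cz))\tildeDrel{\alpha}{\rho(cz)}{\rho(z)}$ directly, and then invoke monotonicity of the sandwiched R\'enyi divergence in its order (Lem.~\ref{lem:monotone_rp} combined with Lem.~\ref{lem:drellimit1}) to replace $\tildeDrel{\alpha}{\cdot}{\cdot}$ by $\tildeDrel{1}{\cdot}{\cdot}$. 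Since the power-reduction lemma's own proof is a Jensen-plus-monotonicity argument, your approach essentially inlines and short-circuits it, giving a shorter dependency chain for the same conclusion. One piece of bookkeeping in your write-up is unnecessary: there is no need to renormalize $\rho(cz)$ and $\rho(z)$, split off the classical $H(C|Z)$, and identify the residual with $I(C;\Pfnt{E}|Z)$ before invoking a chain rule. The paper's definition of conditional entropy already reads $H_1(\rho(CZ)|Z\Pfnt{E})=-\sum_{cz}\tr(\rho(cz))\tildeDrel{1}{\rho(cz)}{\rho(z)}$, so the monotonicity step lands directly on the entropy-estimator inequality.
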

In the examples of Sect.~\ref{subsec:examples}, the entropy estimators
so obtained can have comparable performance to existing min-tradeoff
functions when used with EAT, but only at small powers.  We infer that
the \QEF and entropy-estimator or min-tradeoff-function
optimization problems are not well matched.

It is possible to obtain \QEFs from entropy estimators:
\begin{theorem*}
  \emph{(Thm.~\ref{thm:ee_to_qef})} Let $K$ be an entropy estimator
  for a model. Then there exists $\tilde c:\beta\in (0,1/2]\mapsto
  \tilde c(\beta)\in(0,u]$ such that $F:cz \mapsto e^{\beta
    K(cz)}/(1+\tilde c(\beta)\beta^{2}/2)$ is a \QEF with power
  $\beta$ for the model. The upper bound $u$ depends on the model and the
  image of $K$.
\end{theorem*}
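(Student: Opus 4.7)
My plan is to rewrite the \QEF inequality for $F(cz)=e^{\beta K(cz)}/(1+\tilde c(\beta)\beta^{2}/2)$ as a scalar inequality $f_\rho(\beta)\le 1+\tilde c(\beta)\beta^{2}/2$ in the parameter $\beta\in(0,1/2]$, where
\[
f_\rho(\beta)\;:=\;\sum_{cz}e^{\beta K(cz)}\,\tr\bigl((\rho(z)^{-\beta/(2\alpha)}\rho(cz)\rho(z)^{-\beta/(2\alpha)})^{\alpha}\bigr),\quad \alpha=\beta+1,
\]
and $\rho_{CZ\Pfnt{E}}$ ranges over the model. Clearing the normalization factor of $F$ inside $\sum_{cz}\tr(\rho(cz))F(cz)\hatRpow{\alpha}{\rho(cz)}{\rho(z)}\le 1$ shows the reduction. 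Since $f_\rho(0)=\sum_{cz}\tr(\rho(cz))=1$, it remains to control $f_\rho(\beta)-1$ for small $\beta$ uniformly in the model, then to choose $\tilde c(\beta)$ large enough.

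The first-order term vanishes in the right direction, and this is where the entropy-estimator hypothesis enters. Differentiating the sandwiched expression at $\beta=0$ (using $\rho(z)^{0}=\one$ on its support, $\partial_\beta\rho(z)^{-\beta/(2\alpha)}|_{\beta=0}=-\tfrac{1}{2}\log\rho(z)$, and $\partial_\alpha\tr(X^{\alpha})|_{\alpha=1}=\tr(X\log X)$) gives
\[
\frac{d}{d\beta}\bigg|_{\beta=0}\tr\bigl((\rho(z)^{-\beta/(2\alpha)}\rho(cz)\rho(z)^{-\beta/(2\alpha)})^{\alpha}\bigr)=\tr(\rho(cz)\log\rho(cz))-\tr(\rho(cz)\log\rho(z)).
\]
Summing over $cz$ and using $\rho(z)=\sum_{c}\rho(cz)$ identifies this sum as $-H_{1}(C|Z\Pfnt{E})$, the conditional entropy of the classical-classical-quantum state $\rho_{CZ\Pfnt{E}}$. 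Combining with $\partial_\beta e^{\beta K(cz)}|_{\beta=0}=K(cz)$ yields
\[
f_\rho'(0)=\sum_{cz}K(cz)\tr(\rho(cz))-H_{1}(C|Z\Pfnt{E})\;\le\;0,
\]
where the inequality is exactly the defining property of the entropy estimator $K$.

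Taylor's theorem with integral remainder then gives $f_\rho(\beta)-1\le\beta f_\rho'(0)+\tfrac{\beta^{2}}{2}\sup_{\xi\in[0,\beta]}f_\rho''(\xi)\le\tfrac{\beta^{2}}{2}\sup_{\xi\in[0,\beta]}f_\rho''(\xi)$, so setting $\tilde c(\beta):=\max\bigl(\delta,\sup_{\rho}\sup_{\xi\in[0,\beta]}f_\rho''(\xi)\bigr)$ for a small fixed $\delta>0$ and $u:=\tilde c(1/2)$ produces $\tilde c(\beta)\in(0,u]$ with the required inequality, provided the supremum is finite. The main obstacle is therefore a uniform finiteness bound on the second derivative of $f_\rho$ over $\rho$ in the model and $\xi\in[0,1/2]$. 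Differentiating the sandwiched trace twice confronts simultaneously the non-commutativity of $\rho(cz)$ with $\rho(z)^{-\beta/(2\alpha)}$, the joint $\beta$-dependence of the matrix power $\alpha=\beta+1$ and of the fractional exponent $-\beta/(2\alpha)$, and possible non-analyticity at the boundary of $\Spec(\rho(z))$. I would handle these by writing $X^{\alpha}=\exp(\alpha\log X)$ and using the Duhamel--Dyson identity $\partial_\beta e^{M(\beta)}=\int_{0}^{1}e^{sM}M'e^{(1-s)M}\,ds$ together with an analogous integral representation of $\log X$, expressing $f_\rho''$ as a finite sum of operator traces whose magnitudes are controlled by $\max_{cz}|K(cz)|^{2}$ and by a spectral scale of $\rho(z)$ implicit in the model (for instance, a dimension-dependent floor on the smallest nonzero eigenvalue of $\rho(z)$ inherited from the structure of the model's quantum state space). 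These two ingredients are precisely what the theorem attributes to ``the model and the image of $K$''.
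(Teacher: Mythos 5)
Your first two steps are sound and match the paper's strategy in spirit: set $f_\rho(0)=1$, observe that the entropy-estimator hypothesis makes $f_\rho'(0)\le 0$ (your derivative calculation at $\beta=0$ is correct — this is essentially Lem.~\ref{lem:drellimit1}), and push the whole burden onto a second-order remainder. Where your plan breaks is in the ``main obstacle'' you flag yourself.

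First, the paper does not work with the sandwiched R\'enyi power directly. It first proves that $e^{\beta K(CZ)}/(1+c_{P}(\beta))$ is a \QEFP using the \emph{Petz} power $\Ppow{\alpha}{\rho}{\sigma}=\tr(\rho^{\alpha}\sigma^{-\beta})$, and then appeals to Lem.~\ref{lem:petz_sandwiched} (Araki--Lieb--Thirring: $\Ppow{\alpha}{\rho}{\sigma}\ge\Rpow{\alpha}{\rho}{\sigma}$) to conclude that for $\beta\le 1$ the same $F$ is a \QEF. This detour is not cosmetic: in the Petz form the $\beta$-dependence collapses to a single operator power, $\tr\bigl(\xi\,(\rho\otimes\one)\,(e^{a}\rho\otimes\sigma^{-T})^{\beta}\bigr)$ with $\xi=\dyad{\phi}$ a maximally entangled vectorization, so all derivatives in $\beta$ have the simple closed form $g^{(k)}(\gamma)=\tr\bigl(\xi\,(\rho\otimes\one)\log(e^{a}\rho\otimes\sigma^{-T})^{k}(e^{a}\rho\otimes\sigma^{-T})^{\gamma}\bigr)$, which is manifestly nonnegative for even $k$. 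Trying the same with the sandwiched form forces you to differentiate through the composite exponent $-\beta/(2\alpha)$ with $\alpha=1+\beta$, an outer $\alpha$-power, and an inner sandwich; there is no comparably clean representation, which is exactly why you are driven to Duhamel/Dyson machinery.

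Second, and more seriously, your proposed route to the uniform bound on $f_\rho''$ — ``a dimension-dependent floor on the smallest nonzero eigenvalue of $\rho(z)$ inherited from the model'' — does not exist. Models in this framework are arbitrary subsets of $\cS(CZ\Pfnt{E})$ closed under nonnegative scaling; nothing prevents $\rho(z)$ from having eigenvalues arbitrarily close to zero in directions where $\rho(cz)$ has nontrivial overlap. The theorem statement promises $u$ depending on the model and on the image of $K$, and the paper delivers exactly that by a different mechanism: after expressing the second-derivative bound as an integral $\int_{[-1,1]} \tr(\tau(cz)^{1+\chi}\tau(z)^{-\chi})\,d\mu(\chi)$ against a measure concentrated at a few points (Lem.~\ref{lem:meas_to_powerbnd}), it controls each term using spectrum-agnostic inequalities: $\tr(\tau(cz)^{1+\chi}\tau(z)^{-\chi})\le\tr(\tau(cz))$ for $\chi\in[0,1]$ (dominance, Lem.~\ref{lem:conditionmonotone_rp}) and $\sum_c\tr(\tau(cz)^{1+\chi}\tau(z)^{-\chi})\le\tr(\tau(z))\,N^{-\chi}$ for $\chi\in(-1,0)$ (the dimension bound with $N=|\Rng(C)|$). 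The resulting bound depends only on $N$, on $\max_c K(cz)$ and $\min_c K(cz)$, and on the probability distribution $\tr(\tau(Z))$, never on the spectra of $\tau(cz)$ or $\tau(z)$. This is the missing idea in your sketch: you need a bound that is uniform over all states consistent with the model, and dimension bounds on R\'enyi powers achieve that, whereas a spectral floor cannot.
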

The \QEFs so obtained belong to the special class of Petz \QEFs
(\QEFPs). Because the construction is essentially model-agnostic, it
does not yield optimal \QEFs. In particular, the strategy of
optimizing entropy estimators and then determining \QEFs accordingly does
not yield good \QEFs for finite data.  A function $\tilde c$ is
explicitly obtained in Thm.~\ref{thm:ee_to_qef}.  This theorem can
substitute for the EAT prequel, Prop.~4.5 of
Ref.~\cite{dupuis:qc2016a} to obtain improvements on the EAT bounds
for conditional min-entropy. 
(Similar improvements are also obtained in Ref.~\cite{dupuis:qc2018a}.)
For this we optimize $\beta$ given the
number of trials and a targeted conditional min-entropy rate, see the
handicapped comparison in Sect.~\ref{subsec:handeat}. We also include
examples that demonstrates the broad applicability of 
\QEFs and the significant improvements achievable by direct \QEF 
construction, see Sect.~\ref{subsec:examples}.

The connection between entropy estimators and \QEFPs relies on a
R\'enyi relative entropy bounding technique from
Ref.~\cite{tomamichel:qc2009a} that is also used for the connection
between R\'enyi relative entropy and min-tradeoff functions that is
needed for the proof of the EAT in Ref.~\cite{dupuis:qc2016a}.  This
suggests the view that the EAT fundamentally rests on QPE via \QEFPs.
Our work makes this connection explicit, thereby enabling
extensions, improvements and broader applicability of the results.

An application of entropy estimators and their \QEFPs is a proof that
asymptotically optimal conditional min-entropy rates are achieved with
\QEFPs. As suggested in Ref.~\cite{arnon-friedman:qc2016a}, this
follows from the quantum asymptotic equipartition
property~\cite{tomamichel:qc2009a}. We provide the necessary convexity
arguments to determine entropy estimators that witness achievability
of optimal rates. 

To remedy the lack of availability of general entropy estimators, we
show how entropy estimators can be obtained from max-prob estimators.
The function $B:cz\mapsto B(cz)\in\rls$ is a max-prob estimator for a
model if for all states $\rho_{CZ\Pfnt{E}}$ of the model,
$\sum_{cz}\tr(\rho(cz))B(cz)\geq
\max_{cz}(\tr(\rho(cz))/\tr(\rho(z)))$. Note that the definition
depends only on the classical probability distributions of $CZ$ that
are allowed by the model and can therefore be designed for general
non-signaling distributions. In particular, it is of foundational
interest that they can be used for sound and complete randomness
generation assuming only non-signaling constraints on the experimental
devices, which may have super-quantum capabilities. However, if
super-quantum devices are reused in subsequent protocols,
composability may be compromised in ways that are not accounted for by a quantum analysis.

Max-prob estimators are used in probability estimation to directly
construct PEFs for exponential randomness expansion. Non-trivial
max-prob estimators exist for Bell-test configurations. For \QEFs, the
direct construction from max-prob estimators fails, but it is possible
to obtain entropy estimators by a similar method. The \QEFPs then
derived from these entropy estimators can be used for exponential
randomness expansion.
\begin{theorem*}
  \emph{(Thm.~\ref{thm:expexpansion} and its proof)} Suppose that $B$
  is a max-prob estimator for a trial model with $Z$ uniformly
  distributed such that there exists $\rho_{CZ\Pfnt{E}}$ in the model
  satisfying $\sum_{cz}\tr(\rho(cz)) B(cz)<1$.  Then there is a
  configuration with highly biased probability distributions 
  of independent and identical trial inputs and \QEFPs for this configuration
  such that for $n$ trials, the conditional min-entropy witnessed is
  at least $ng$ and the input entropy is $\log(n)g'$ for some
  constants $g,g'>0$.  The bias of the input distribution
  depends on $n$.
\end{theorem*}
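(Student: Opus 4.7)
The plan is to run $n$ independent and identical trials with an input distribution $\nu_n$ that is heavily biased toward a single setting, derive a \QEFP for the biased trial model from the given max-prob estimator, and chain it. The chained \QEFP then delivers a linear-in-$n$ conditional min-entropy bound via Thm.~\ref{thm:bnds_from_qef}, while the biased input distribution contributes only logarithmically many input bits.

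Concretely, I fix a setting $z_0$ at which the anticipated state exhibits randomness. The hypothesis $\sum_{cz}\tr(\rho(cz))B(cz)<1$ implies $\max_{cz}\tr(\rho(cz))/\tr(\rho(z))<1$, so for every $z$ (and in particular my chosen $z_0$) the conditional output distribution is not a point mass. I define the biased i.i.d.\ input distribution $\nu_n(z_0)=1-q_n$ and $\nu_n(z)=q_n/(|Z|-1)$ for $z\neq z_0$, with $q_n=\Theta(1/n)$ so that $nH(\nu_n)=g'\log n+o(\log n)$ for an explicit $g'>0$. The trial model for the biased inputs consists of the same conditional state families $\rho(\cdot|z)$ as the original uniform-input model but with input marginal $\nu_n$; a max-prob estimator $B_n$ for this biased model is obtained from $B$ by the reweighting $B_n(cz)=B(cz)/(|Z|\nu_n(z))$, which preserves the max-prob inequality by direct substitution.

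Applying the paper's reduction from max-prob estimators to entropy estimators to $B_n$ yields an entropy estimator $K_n$ for the biased trial model. Because the conditional entropies $H_1(C|Z{=}z,\Pfnt{E})$ on a fixed state are properties of the conditional state families $\rho(\cdot|z)$ alone, and because $z_0$ was chosen so that $H_1(C|Z{=}z_0,\Pfnt{E})$ on $\rho$ is strictly positive, the expected value $\Exp_\rho[K_n]$ under the anticipated i.i.d.\ state is bounded below by a positive constant $h$ independent of $n$ once $n$ is large enough. I then invoke Thm.~\ref{thm:ee_to_qef} to convert $K_n$ into a trial-wise \QEFP $F_n$ at a fixed power $\beta\in(0,1/2]$ chosen small enough that the second-order correction $\tilde c(\beta)\beta^2/2$ is dominated by $\beta h$. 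Chaining $n$ independent copies of $F_n$ via Thm.~\ref{thm:qefchainmain} yields an $n$-trial \QEFP $F^{(n)}$; a standard i.i.d.\ concentration argument places most of the anticipated probability on the event $\{\log F^{(n)}/\beta\geq ng\}$ for a constant $g>0$, and plugging this into Thm.~\ref{thm:bnds_from_qef} with $p=2^{-ng}$ delivers a smooth conditional min-entropy of at least $ng+O(1)$ at any fixed smoothness $\epsilon$.

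The main obstacle is showing that $\Exp_\rho[K_n]$ stays bounded below by a positive constant as $n\to\infty$: the reweighting $B_n=B/(|Z|\nu_n)$ makes $B_n(cz)$ large (of order $n$) for $z\neq z_0$, and one must verify that the max-prob-to-entropy-estimator reduction applied to $B_n$ is not swamped by these outliers once they are weighted by the small probabilities $\nu_n(z)\tr(\rho(c|z))$. Controlling this trade-off and pinning down the positive constant $h$ is where the detailed computation happens; once $h>0$ is in hand, the remainder is a direct assembly of Thms.~\ref{thm:ee_to_qef}, \ref{thm:qefchainmain}, and~\ref{thm:bnds_from_qef}.
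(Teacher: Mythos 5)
Your overall plan — heavily biased i.i.d. inputs, a reweighted max-prob estimator for the biased model, passage to an entropy estimator via Thms.~\ref{thm:ee_uniformbnd}/\ref{thm:ee_conduniformbnd}, then a \QEFP via Thm.~\ref{thm:ee_to_qef}, chaining via Thm.~\ref{thm:qefchainmain}, and finishing via Thm.~\ref{thm:bnds_from_qef} — is in the same spirit as the paper's spot-checking argument (Sect.~\ref{sec:maxprobest:exex} and Thms.~\ref{thm:spotchecking_rate} and~\ref{thm:expexpansion}). The paper uses a test bit $T$ and sets $B_r(CZ0)=1$, $B_r(CZ1)=1+(B(CZ)-1)/r$, whereas you bias $Z$ directly and reweight $B_n=B/(|Z|\nu_n)$; your reweighting does give a valid max-prob estimator and the identity $\Exp_{\nu_n}[K_n]=-\log\bar b$ holds just as in Lem.~\ref{lem:kr_rate}. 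But there are two substantive gaps.

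First, ``a fixed power $\beta$'' cannot work. The correction constant $\tilde c(\beta)$ in Thm.~\ref{thm:ee_to_qef} depends on the entropy estimator through the quantities $k_{\max}(z)$ and $\bar w_\gamma(z)$, and for your $K_n$ these are of order $1/q_n=\Theta(n)$ on the rare inputs $z\neq z_0$. After weighting by $\nu_n(z)\sim q_n$ the bound becomes $c_P(\beta)\lesssim\beta^2/q_n$, valid only for $\beta\lesssim q_n$, and the log-prob rate is $-\log\bar b - O(c_P(\beta)/\beta)$. With $\beta$ held fixed as $n\to\infty$ this diverges to $-\infty$; with $\beta=\beta_n\sim q_n\sim 1/n$ it stays bounded below by a positive constant. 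This is exactly the content of Thm.~\ref{thm:spotchecking_rate} ($\beta\leq dr$, log-prob rate $\geq -\log\bar b - d'\beta/r$) and the choice $\beta_n=cr_n$, $r_n=c'/n^{\gamma}$ in the proof of Thm.~\ref{thm:expexpansion}. Without that $n$-dependence of $\beta$ the argument fails.

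Second, you misidentify the ``main obstacle.'' Showing $\Exp_\rho[K_n]$ is bounded below is automatic: by the same cancellation as Eq.~\ref{eq:lem:br_maxprobest:1}, $\Exp_{\nu_n}[B_n]=\Exp_{\mathrm{Unif}(Z)\nu(C|Z)}[B]$ is constant, so with $\bar b$ set to that value $\Exp_{\nu_n}[K_n]=-\log\bar b$ exactly. The real technical work is bounding $c_P(\beta)$ uniformly in $n$ along the path $\beta\propto q_n$, which is what Lem.~\ref{lem:meas_to_powerbnd} and the constants $d,d'$ of Thm.~\ref{thm:spotchecking_rate} provide. Your side argument that $H_1(C|Z{=}z_0,\Pfnt{E})>0$ is both unnecessary (the entropy rate comes from the identity above) and unsound as stated: $\sum_{cz}\tr(\rho(cz))B(cz)<1$ only bounds the classical conditional max-prob, and $\Pfnt{E}$ could in principle hold a copy of $C$, so one cannot conclude positive quantum conditional entropy at a single input from this hypothesis.
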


In Sect.~\ref{sec:k22configs} we consider the standard
$(k,2,2)$-Bell-test configurations involving $k$ stations, two input
choices at each station and two possible outputs for each input.  It
is well-known that the quantum devices in such configurations can be
reduced to devices measuring one qubit in each station. For $k=2$ the reduction is
well explained in~\cite{pironio:qc2009a}, Sect.~2.4.1, where the main
mathematical results needed are from Ref.~\cite{tsirelson:qc1993a} and
Ref.~\cite{masanes:qc2006a}. We establish a general form of this
observation for arbitrary $k$ and suitable for use with \QEF
optimization. As a result, the \QEF optimization problem for
$(k,2,2)$-Bell-test configurations can be effectively solved by
numerical methods, after exploiting concavity and convexity properties
of the relevant quantities. 

Finally, in Sect.~\ref{subsec:examples} we construct \QEFs from PEFs
for examples involving
$(2,2,2)$-Bell-test configurations.  We apply \QEFs to the data from
the first demonstration of certified conditional min-entropy with
respect to classical side information~\cite{pironio:qc2010a}.  Our
analysis shows that \QEFs would have yielded more bits while being
secure against quantum side information. To illustrate the excellent
finite-data performance of \QEFs, we consider the minimum number of
trials required for three families of standard quantum states of the
devices to show orders of magnitude improvement over EAT.  We
highlight the improvement by determining the number of trials required
for the reference example of $512$ bits with error bound $2^{-64}$
with the distributions observed in the loophole-free Bell test used
previously for randomness generation with classical side-information
in Ref.~\cite{bierhorst:qc2018a}.

\section{Preliminaries}
\label{sec:preliminaries}

\subsection{Basics}

Let $\cH$ be a finite dimensional Hilbert space.  $B(\cH)$ is the set
of operators on $\cH$, $A(\cH)$ the subset of self-adjoint
(equivalently, Hermitian) operators, $S(\cH)$ the subset of Hermitian,
positive semidefinite operators, $S_{1}(\cH)=\{A\in S(\cH):\tr(A)=1\}$
the set of density operators, and $S_{\leq 1}(\cH)=\{A\in
S(\cH):\tr(A)\leq 1\}$.  For vectors $\ket{\psi}\in\cH$, we abbreviate
$\hat\psi=\dyad{\psi}$.  If $\ket{\psi}$ is normalized, then
$\hat\psi$ is the projector onto the one-dimensional subspace spanned
by $\ket{\psi}$.  For $\sigma\in B(\cH)$ we write $\sigma\ge 0$ if
$\sigma\in S(\cH)$.  The comparison $\sigma\geq\tau$ is equivalent to
$\sigma-\tau\geq 0$.  For $\sigma\in A(\cH)$, the support of $\sigma$
is the span of the eigenvectors of $\sigma$ with non-zero eigenvalues.
The support of $\sigma$ is denoted by $\Supp(\sigma)$, and the
projector onto the support of $\sigma$ is denoted by
$\knuth{\sigma\not=0}$.  For $\sigma,\tau\in S(\cH)$, we write
$\sigma\ll\tau$ if $\Supp(\sigma)\subseteq\Supp(\tau)$. Equivalently,
$\sigma\ll\tau$ iff there exists $\lambda>0$ such that
$\sigma<\lambda\tau$.  For Hermitian $\sigma$, the spectrum
$\Spec(\sigma)$ is the family of eigenvalues of $\sigma$ accounting
for multiplicity.  To be specific, we treat the spectrum as a vector
of real numbers in descending order.  We use the fact that
$\Spec(A^{\dagger}A)= \Spec(AA^{\dagger})$.  For $\sigma\in A(\cH)$
without full support, we define $\sigma^{-1}$ as the relative
inverse. That is, given a spectral decomposition of $\sigma$ in the
form $\sigma= \sum_{j}\lambda_{j}\hat j$ with $\tr(\hat j \hat i) =
\delta_{i,j}$ and $\lambda_{j}\not= 0$, we have
$\sigma^{-1}=\sum_{j}\lambda_{j}^{-1}\hat j$.  If $\Pi$ is the
projector onto the support of $\sigma$, then $
\sigma\sigma^{-1}=\Pi\sigma\sigma^{-1}\Pi = \Pi$ and
$\Supp(\sigma^{-1})=\Supp(\sigma)$.  We define $\log(\sigma)$ in the
same relative way.  For $\sigma\in A(\cH)$ with spectral
decomposition $\sigma=\sum_j \lambda_j\hat j$, the positive part of
$\sigma$ is defined as $\pospart{\sigma}=\sum_{j:\lambda_{j}>0}
\lambda_{j}\hat j$.  The absolute value is $|\sigma| =
\sum_{j}|\lambda_{j}|\hat j = \pospart{\sigma}+\pospart{-\sigma}$.
For arbitrary $A\in B(\cH)$, define $|A|=\sqrt{A^{\dagger}A}$.  The
projector onto the support of $\pospart{\sigma}$ is denoted by
$\suppproj{\sigma}$.  We need two properties of positive parts:

\begin{lemma}\label{lem:trpospart}
  $\tr(\pospart{\sigma})$ is monotone in $\sigma$,
  and for $\sigma\ge 0$, $\tau\ge 0$ we have
  $\tr(\sigma\suppproj{\sigma-\tau})\geq \tr(\pospart{\sigma-\tau})$.
\end{lemma}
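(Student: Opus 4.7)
The plan is to treat the two claims separately, both via the variational characterization $\tr(\pospart{\sigma}) = \max_{0\leq P\leq \one}\tr(\sigma P)$, where the maximum is attained by $P=\suppproj{\sigma}$, the spectral projector onto the strictly positive eigenspace of $\sigma$. This identity is immediate from a spectral decomposition $\sigma=\sum_j\lambda_j\hat j$: any $0\leq P\leq \one$ satisfies $\tr(\sigma P)=\sum_j\lambda_j\tr(\hat j P)\leq\sum_{j:\lambda_j>0}\lambda_j\tr(\hat j P)\leq\sum_{j:\lambda_j>0}\lambda_j$, with equality when $P$ contains each $\hat j$ with $\lambda_j>0$ in its support.

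For monotonicity, suppose $\sigma_1\leq\sigma_2$ and let $P^\star=\suppproj{\sigma_1}$ achieve the maximum for $\sigma_1$. Then
\begin{equation}
  \tr(\pospart{\sigma_1})=\tr(\sigma_1 P^\star)\leq \tr(\sigma_2 P^\star)\leq \max_{0\leq P\leq\one}\tr(\sigma_2 P)=\tr(\pospart{\sigma_2}),
\end{equation}
where the first inequality uses $\sigma_2-\sigma_1\geq 0$ and $P^\star\geq 0$, so $\tr((\sigma_2-\sigma_1)P^\star)\geq 0$. This gives the first assertion with no further work.

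For the second claim, set $\Pi=\suppproj{\sigma-\tau}$. By definition of the positive part, $\pospart{\sigma-\tau}=\Pi(\sigma-\tau)\Pi$, so
\begin{equation}
  \tr(\pospart{\sigma-\tau})=\tr((\sigma-\tau)\Pi)=\tr(\sigma\Pi)-\tr(\tau\Pi)\leq\tr(\sigma\Pi)=\tr(\sigma\suppproj{\sigma-\tau}),
\end{equation}
where the inequality uses $\tr(\tau\Pi)\geq 0$, which holds because $\tau\geq 0$ and $\Pi\geq 0$ (so $\tau$ and $\Pi$ share the cyclic trace inequality $\tr(\tau\Pi)=\tr(\tau^{1/2}\Pi\tau^{1/2})\geq 0$).

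I do not expect any real obstacle; both pieces reduce to the single observation that $\tr(X Y)\geq 0$ whenever $X,Y\geq 0$, combined with the fact that $\suppproj{\cdot}$ is the optimizer in the variational formula. The only care needed is that $\pospart{\sigma-\tau}$ can be written as $(\sigma-\tau)\Pi$ (equivalently $\Pi(\sigma-\tau)=\Pi(\sigma-\tau)\Pi$), which follows because $\Pi$ commutes with $\sigma-\tau$ as it is built from its spectral projectors.
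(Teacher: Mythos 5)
Your proof is correct. For the monotonicity claim, you take a genuinely different (and arguably cleaner) route: the paper cites Thm.~2.10 of Ref.~\cite{carlen:qc2009a}, which establishes that $\sigma\mapsto\tr(f(\sigma))$ is monotone for any continuous monotone increasing $f$ (here $f(x)=(|x|+x)/2$); this is a non-trivial external result whose proof itself goes through a variational argument or majorization. You instead derive the specific variational formula $\tr(\pospart{\sigma}) = \max_{0\leq P\leq \one}\tr(\sigma P)$ from the spectral decomposition in two lines and read off monotonicity directly, making the argument self-contained and avoiding the citation entirely. What the paper's approach buys is brevity and generality (the same citation covers other monotone $f$ appearing elsewhere); what yours buys is transparency and independence from the cited text. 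For the second claim, the two proofs are essentially identical — the paper works coordinate-wise in an eigenbasis of $\sigma-\tau$ and drops the $\tau_{ii}\geq 0$ terms, while you phrase the same computation at the operator level via $\pospart{\sigma-\tau}=\Pi(\sigma-\tau)\Pi$ and $\tr(\tau\Pi)\geq 0$; these are two notations for one calculation.
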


\begin{proof}
  Since $\pospart{\sigma} = f(\sigma)$ with $f(x)=(|x|+x)/2$ and $f$
  is continuous and monotone increasing, $\tr(\pospart{\sigma})$ is
  monotone in $\sigma$ according to Ref.~\cite{carlen:qc2009a},
  Thm.~2.10.  Let $\ket{i}$ be an orthonormal basis of eigenvectors of
  $\sigma-\tau$ with $(\sigma-\tau)\ket{i} = \lambda_{i}\ket{i}$.
  Write $\sigma_{ii}=\tr(\sigma\hat i)$ and $\tau_{ii}=\tr(\tau\hat
  i)$.  Then
  \begin{align}
    \tr(\pospart{\sigma-\tau})
     &= \sum_{i:\lambda_{i}>0}\sigma_{ii}-\tau_{ii}\notag\\
     &\leq \sum_{i:\lambda_{i}>0}\sigma_{ii}\notag\\
     & = \sum_{i:\lambda_{i}>0}\tr(\sigma\hat i)\notag\\
     &= \tr(\sigma \sum_{i:\lambda_{i}>0}\hat i)\notag\\
     &= \tr(\sigma\suppproj{\sigma-\tau}).
  \end{align}
\end{proof}

A linear map $\cE:B(\cH)\rightarrow B(\cH')$ is positive if
$\cE(S(\cH))\subseteq S(\cH')$.  The map $\cE$ is a pure completely
positive map ($\pCP$ map) if it is of the form $\cE(\rho)=A\rho
A^{\dagger}$ for some $A\in B(\cH)$.  A completely positive map ($\CP$
map) is a positive linear combination of $\pCP$ maps. A $\CP$ map
$\cE:B(\cH)\rightarrow B(\cH')$ can be expressed non-uniquely in the
form $\cE(\rho)=\sum_{i}A_{i}\rho A_{i}^{\dagger}$.  $\cE$ is
trace-preserving if $\tr(\cE(\rho))=\tr(\rho)$ or equivalently,
$\sum_{i}A_{i}^{\dagger}A_{i}=\one$. A quantum operation is a $\CP$
map that is trace preserving. Quantum operations are also referred to
as $\CPTP$ maps.

For $n\in\nats$, $[n]=\{k\in\nats:1\leq k\leq n\}$.  For maps
$f:X\rightarrow Y$, we extend $f$ to subsets $\cX$ of $X$ according to
$f(\cX)=\{f(x):x\in \cX\}$. For a formula $\phi$ with free variables,
the expression $\knuth{\phi}$ is a function from the set of values of
the free variables to $\{0,1\}$ defined as $\knuth{\phi}=1$ for values
of the variables where $\phi$ is true and $\knuth{\phi}=0$ otherwise.
There should be no confusion with the case where $\knuth{\ldots}$ is
applied to a comparison of a given Hermitian operator and a real
number to define a projector.

A subset $\cC$ of a vector space is convex if
$\sum_{i=1}^{k}\lambda_{i}c_{i}\in\cC$ whenever $c_{i}\in\cC$,
$\lambda_{i}\geq 0$ for all $i\in[k]$ and $\sum_{i=1}^{k}\lambda_{i} =
1$. Vectors $\sum_{i=1}^{k}\lambda_{i}c_{i}$ with $\lambda_{i}\geq 0$
and $\sum_{i=1}^{k}\lambda_{i}=1$ are referred to as convex
combinations of the $c_{i}$.  For any $\cC$, the convex closure
$\Cvx(\cC)$ of $\cC$ is the set of all convex combinations of members
of $\cC$. We write $\Cone(\cC)=[0,\infty)\Cvx(\cC)$ for the convex
cone generated by $\cC$. The set of extreme points of $\cC$ is denoted
by $\Xtrm(\cC)$.

\subsection{Systems}

We distinguish between systems and their state spaces.  We denote and
label quantum systems with $\Pfnt{A,B,\ldots, E,\ldots,U,V,W,X,Y,Z}$.
In this work, $\Pfnt{E}$ plays a distinguished role as a universal
quantum system for defining models or as the system carrying the
quantum side information. We often use $\Pfnt{U,V,W}$ to denote generic
quantum systems.  For a quantum system $\Pfnt{U}$, its Hilbert space
is $\cH(\Pfnt{U})$ with dimension $\vdim(\Pfnt{U})$. $\cS(\Pfnt{U})$
is the set of positive semidefinite operators on $\cH(\Pfnt{U})$, and
$\cS_{1}(\Pfnt{U})=\{\rho\in\cS(\Pfnt{U}): \tr(\rho)=1\}$ is the set
of density operators of $\Pfnt{U}$.  Members of $\cS_{1}(\Pfnt{U})$
are referred to as the states of $\Pfnt{U}$. States $\rho$ are
considered to be normalized by the condition $\tr(\rho)=1$, and
general members of $\cS(\Pfnt{U})$ are referred to as
unnormalized states. We abbreviate
$B(\cH(\Pfnt{U}))=\cB(\Pfnt{U})$ and $A(\cH(\Pfnt{U}))=\cA(\Pfnt{U})$.
If $\vdim(\Pfnt{U})=1$, we call $\Pfnt{U}$ trivial and
$\cS(U)=[0,\infty)$.  The set of systems in play has a joint state. We
use juxtaposition to combine systems, so $\Pfnt{UV}$ combines systems
$\Pfnt{U}$ and $\Pfnt{V}$. Its Hilbert space is
$\cH(\Pfnt{UV})=\cH(\Pfnt{U})\otimes\cH(\Pfnt{V})$.

We need to refer to subsystem factorizations of quantum state spaces.
For a Hilbert space $\cH$, a factorization of $\cH$ is a
representation of $\cH$ in the form
$\cH=\bigoplus_{k}\cH_{k}\otimes\cC_{k}\;\oplus \cR$.  Technically,
such factorizations are realized by an isomorphism, but we freely
identify the two sides without making this isomorphism explicit.
Given this factorization, states of $\cH_{k}\otimes\cC_{k}$ are also
states of $\cH$, and we construct unnormalized states of the form
$\sigma_{k}\otimes\tau_{k}$ accordingly with $\sigma_{k}\in
S(\cH_{k})$ and $\tau_{k}\in S(\cC_{k})$. The state space membership
may be left implicit when the factorization is clear and the index
sets match, here by using the same index-symbol $k$ with implicit
index set $K$.

We identify classical systems with classical variables (CVs).
Notationally and operationally we treat CVs as random variables
(RVs) without specified probability distributions.  CVs are
denoted by capital letters $A,B,C,\ldots,U,V,W,X,Y,Z,\Omega$.  In this
work, $A,B,C,X,Y,Z$ play a distinguished role, and $U,V,W$ are often
used as generic CVs.  Like RVs, as mathematical objects CVs are
functions from an underlying set $\Omega$, which we assume is
finite. Accordingly, a CV $U$ has an associated space of values denoted 
by $\Rng(U)$ with cardinality $|\Rng(U)|$.  Values of CVs
are denoted by the corresponding lower case letter.  Thus the symbol
$u$ denotes values of $U$. This implies that in a CV context, the
symbol $u$ is typed and always refers to a member of $\Rng(U)$. This
simplifies notation. For example, $\sum_{u}\ldots =
\sum_{u\in\Rng(U)}\ldots$ and $\{u:\ldots\} = \{u\in
\Rng(U):\ldots\}$.  If we need distinct symbols of this type we use
primed symbols such as $u'$ or explicitly specify the symbols'
membership. In a context where a CV $U$ has an associated state,
possibly joint with other CVs and quantum systems, we refer to the
process of obtaining a value $u$ of $U$ as instantiating $U$, with the
connotation that the value was not available for inspection before it
was instantiated.

The CV $U$ is trivial if $|\Rng(U)|=1$.  We freely construct
CVs by concatenation denoted by juxtaposition.  For example, if $U$
and $V$ are CVs, then $UV$ is a CV with values $uv$. If $u$ and $v$
are strings or sequences, then $uv$ is the concatenation of the two strings
or sequences.  Otherwise, $uv$ may be interpreted as the pair or
two-element sequence with first element $u$ and 
second element $v$. Any 
of the typical mathematical realizations of these concepts may be used.

The CV $F$ is determined by the CV $U$ if for some function $\cF$ on
$\Rng(U)$, for all $\omega\in\Omega$, $F(\omega)=\cF(U(\omega))$.  We
introduce such determined CVs as $F(U)$, which specifies that $F$ is a
CV determined by $U$ as well as a function $u\mapsto F(u)$.  This
overloads the symbol $F$. Its meaning is determined by the type of the
argument.  The special expression $F(U)$ may be considered to refer to
both meanings while emphasizing the type of the argument of $F$ as a
value of $U$.  Thus, given an expression $\cF(u)$, we may define
$F(U)$ by specifying a function $F:u\mapsto\cF(u)$ and call $F(U)$ a
function of $U$, or we may specify $F(U)$ by an identity of the form
$F(U)=\cF(U)$, which we also consider equivalent to the statement
$\forall u: F(u)=\cF(u)$.  We remark that in expressions such as $F(U)$  
or $F(U)=\cF(U)$, the symbol $U$ plays the role of a free variable with
arbitrary values in $\Rng(U)$. We may introduce objects such as
$\rho(U)$ that are primarily functions of CV values and not intended
to be interpreted as determined by CVs themselves.

When considering sequences of trials for randomness generation, the
final state involves a CV consisting of a sequence of individual trial
CVs. We use boldface to distinguish such CVs. A sequence CV $\Sfnt{U}$
is defined in terms of the trial CVs $U_{i}$ by
$\Sfnt{U}=U_{1}U_{2}\ldots U_{N}$ and has values
$\Sfnt{u}=u_{1}u_{2}\ldots u_{N}$. Here, $N$ is an absolute upper
bound on the number of trials that might be considered before a
protocol stops. We always assume that such an upper bound exists. The actual
number of trials considered is denoted by $n$.  To refer to initial
and final segments of $\Sfnt{U}$ we use the notation $\Sfnt{U}_{\leq
  k}=U_{1}\ldots U_{k}$ and similarly for $\Sfnt{U}_{<k}$,
$\Sfnt{U}_{\geq k}$ and $\Sfnt{U}_{>k}$.  The length of $\Sfnt{U}$ is
denoted by $|\Sfnt{U}|$. Similarly, if $U$ is a string, the number of
letters in $U$ is denoted by $|U|=\log_{l}(|\Rng(U)|)$, where $l$ is
the size of the alphabet of the string.  We may treat string CVs as
sequence CVs without using the explicit boldface.

A CV's state is a probability distribution on its values.  $\cS(U)$ is
the set of unnormalized, non-negative distributions on $U$, and
$\cS_{1}(U)$ is the set of probability distributions on $U$.  If $U$
is a CV, then $\Pfnt{U}$ is its quantization. The Hilbert space
of $\Pfnt{U}$ has a classical basis whose members are
$\ket{u}$. Probability distributions $\mu(U)$ of $U$ are associated
with the corresponding states $\sum_{u}\mu(u)\hat u$ diagonal in the
classical basis.  Probabilities and expectations with respect to the
probability distribution $\mu(U)$ are expressed as
$\Prob_{\mu(U)}(\phi)=\sum_{u}\mu(u) \knuth{\phi}$ and
$\Exp_{\mu(U)}(G(U)) = \sum_{u}\mu(u) G(u)$.

\subsection{Classical-Quantum States}

We study joint states of classical-quantum systems. For a CV $U$ and a
quantum system $\Pfnt{V}$, $U\Pfnt{V}$ is the joint system.  We define
the set of $\cS(\Pfnt{V})$-valued distributions of $U$ as
\begin{equation}
  \cS(U\Pfnt{V}) = \left\{\rho:
    u\mapsto \rho(u)\in\cS(\Pfnt{V})\right\}.
\end{equation}
The members of $\cS(U\Pfnt{V})$ may be considered as CVs with values
in $\cS(\Pfnt{V})$, so we denote these members by $\rho(U)$.  If
$\Pfnt{V}$ is clear from context or generic, we refer to $\rho(U)$ as
a state-valued distribution, or just a distribution of $U$ or a state
of $U\Pfnt{V}$, although the values are unnormalized states of $\Pfnt{V}$.

For the purpose of universality, we may consider $\Pfnt{V}$ with
infinite-dimensional $\cH(\Pfnt{V})$. However, by default we assume
that the values $\rho(u)$ of distributions are finite rank.  A
$\cS(\Pfnt{V})$-valued distribution $\rho(U)$ is normalized if
$\tr(\sum_{u}\rho(u))=1$.  The set of normalized distributions of $U$
is denoted by $\cS_{1}(U\Pfnt{V})$.  The set of sub-normalized
distributions is $\cS_{\leq 1}(U\Pfnt{V})=\{\rho(U)\in\cS(U\Pfnt{V}):
\tr(\sum_{u}\rho(u))\leq 1\}$.  The set $\cS_{1}(U\Pfnt{V})$ is
the set of states of $U\Pfnt{V}$.  For finite-dimensional
$\cH(\Pfnt{V})$, it is consistent with the conventional, quantized
definition of the set of classical-quantum states of $\Pfnt{UV}$ as the set of density
operators of the form $\sum_{u}\hat u\otimes\rho(u)$.  If $\Pfnt{V}$
is trivial and $\rho(U)$ is normalized, then $\rho(U)$ is a
probability distribution.  Our notational choices are
designed to be compatible with those in Ref.~\cite{knill:qc2017a} when
specialized to trivial $\Pfnt{V}$ for handling classical side
information.  We use symbols such as $\rho,\sigma,\tau,\chi,\zeta,\xi$
for general states and $\mu,\nu$ for probability distributions.

In this work we normally consider finite CVs and density operators
with finite support. The soundness of randomness generation protocols
is relative to a model, which is a set of state-valued distributions,
see Sect.~\ref{sec:qmodels}.  Some models are most conveniently
formulated with states in an infinite-dimensional Hilbert space, but
we define them so that the relevant state-valued distributions have
finite support in the Hilbert space.  The support of a distribution
$\rho(U)$ is the linear span of the supports of the $\rho(u)$. The
projector onto the support is the smallest projector $\Pi$ such that
for all $u$, $\Pi\rho(u)=\rho(u)$. While the technical arguments are
restricted to effectively finite dimensional situations, in most cases
the consequences for randomness generation extend to
countable-dimension side information.  To verify this requires
approximating a model's infinite-support trace-class states by model
states with finite-dimensional support.  

A positive map $\cE:B(\cH(\Pfnt{V}))\rightarrow B(\cH(\Pfnt{W}))$
induces a map $\cS(U\Pfnt{V})\rightarrow\cS(V\Pfnt{W})$ defined by
$\rho(U)\mapsto (\cE\circ \rho)(U)=\cE(\rho(U))$. If $\cE$ is
trace-preserving, then the map restricts to
$\cS_{1}(U\Pfnt{V})\rightarrow\cS_{1}(U\Pfnt{W})$.

We adapt RV and probability distribution conventions to denote and
manipulate state-valued distributions.  If $\rho(UV)$ is a
$\cS(\Pfnt{W})$-valued distribution, then $\rho(uv)$ refers to the
value of the distribution at $uv$.  According to marginalization
conventions, $\rho(U)$ is the marginal state-valued distribution of
$U$ and defined as $\rho(U)=\sum_{v}\rho(Uv)$.  With this, $\rho() =
\sum_{uv}\rho(uv)$ is the marginal state of $\Pfnt{W}$.   We
abbreviate $\rho=\rho()$ whenever the meaning is clear from context.
Conventions for events apply: If $\cX,\cY\subseteq\Rng(UV)$, then
$\rho(\cX)=\sum_{uv\in\cX}\rho(uv)$ and
$\rho(\cX,\cY)=\rho(\cX\cap\cY)$. We can specify subsets using logical
expressions in the CVs.  If $\phi(U,V)$ is such a logical formula with
free variables $U$ and $V$, we define $\{\phi\} = 
\{\phi(U,V)\}=\{uv:\phi(u,v)\}$.  In arguments of a distribution, the
curly brackets are normally omitted.  With this, we have the 
identities $\rho(u)=\rho(U=u)=\rho(\{U=u\})$.  Thus, our conventions 
imply that the expression $\rho(V,U=u)$ defines a distribution $\sigma(V)$ 
depending on $V$ only, but since this can be confusing we circumvent 
such expressions whenever possible.

We also adapt the usual conventions for conditioning.  We define
conditioning on a CV event according to the states obtained
conditionally on observing the event. If $\rho(UV)\in\cS(UV\Pfnt{W})$
and $\phi(U,V)$ is a formula with free variables $U$ and $V$, then
$\rho(UV|\phi)=\knuth{\phi(U,V)}\rho(UV)/\tr(\rho(\phi))$.  We define
$\rho(uv|\phi)=0$ if $\tr(\rho(\phi))=0$.  Note that if
$\tr(\rho(\phi))\ne 0$, then $\tr(\rho(UV|\phi)) = 1$ and therefore
$\rho(UV|\phi)\in\cS_{1}(UV)$.  In view of conventions for point
events, the expression $\rho(U|v)$ is interpreted as
$\rho(U|v)=\big(u\mapsto\rho(u|V=v)=\rho(uv)/\tr(\rho(v))\big)$.

For chaining purposes, we distinguish distributions $\rho(UV)$ for
  which $\rho(V)=\mu(V)\rho$ for a probability distribution
  $\mu(V)$. In this case $\rho(|v)=\rho$ is independent of $v$, that is,
  the systems $V$ and $\Pfnt{E}$ are independent.  We
  define $\cS((U|V)\Pfnt{E})=\{\tau(UV):\textrm{$\tau(|V)=\tau$
    independent of $V$}\}$.  Members of this set of distributions may
  be written as $\sigma(U|V)\in\cS((U|V)\Pfnt{E})$, the idea being
  that up to normalization, $\sigma(U|V)$ could have been obtained by
  conditioning some \(\sigma(UV)\) on $V$, where \(\sigma(|V)\) is
  independent of $V$. In this situation $\sigma(UV)$ is
  unspecified until we provide the probability distribution $\mu(V)$,
  at which point we can define $\sigma(UV)= \mu(V)\sigma(U|V)$. 

If $\rho(X)\in\cS(X\Pfnt{U})$ and $\rho(Y)\in\cS(Y\Pfnt{V})$, then
$\rho(X)\otimes\rho(Y)\in\cS(XY\Pfnt{UV})$.  If $F(U)$ is a function
of $U$, then $F$ pushes distributions forward according to
$(F_{*}\rho)(f)=\rho(F(U)=f)$. For clarity, the marginalization
conventions do not apply when distributions are expressed in terms of
compound constructions such as $\cE(\rho(UV))$, $\cP(\rho(X))$ or
$\fkC(\rho(U);\ldots)$ without an explicit final CV argument of the
form $\ldots(UV\ldots)$. The CV arguments of the proper construction 
are bound variables and not intended to be substituted by values. 
The
construction's expression refers to a distribution with CVs determined
by the specific expression.

We occasionally define state-valued distributions using anonymous
mapping notation, which includes the equivalence $\rho(UV) =
(uv\mapsto \rho(uv))$. For example, the expression $u\mapsto
\rho/|\Rng(U)|$ defines the uniform distribution on $U$ independent of
$\Pfnt{E}$ with the reduced density matrix of $\Pfnt{E}$ the state
$\rho$. In quantized terms this is the joint state
$\one_{\Pfnt{U}}/|\Rng(U)|\otimes\rho$, a notation with similar
complexity. The uniform probability distribution of $V$ is defined as
$\Unif(V): v\mapsto 1/|\Rng(V)|$ or equivalently $\Unif(V)= \big(v\mapsto
1/|\Rng(V)|\big)$.  Here, the quantum system is trivial.

We define POVMs of $\Pfnt{V}$ with outcomes $U$ as linear maps
$\cP:\cS(\Pfnt{V})\rightarrow\cS(U)$ of the form
$\cP(\rho)(U)=\tr(P_{U}\rho)$ with $P_{u}\in\cS(\Pfnt{V})$ for all $u$
and $\sum_{u}P_{u}=\one_{\Pfnt{V}}$.  Without confusion and
  following tradition, we refer to families of operators
  $P_{U}=(P_{u})_{u}$ satisfying these conditions as POVMs.
 The term ``POVM'' is an abbreviation for ``positive, operator-valued
  measure''.  We can naturally apply $\cP$ to members of
$\cS(X\Pfnt{VW})$ by defining $\cP(\rho(X))(XU)\in\cS(XU\Pfnt{W})$
according to
\begin{equation}
  \cP(\rho(X))(xu)=
  \tr_{\Pfnt{V}}((P_{u}\otimes\one_{\Pfnt{W}})\rho(x)).
\end{equation}
POVMs defined in this way remove the quantum system being
measured. POVMs do not specify what happens to the measured system, so
if we want to retain the measured system, we need to consider quantum
operations with classical outputs.

For the purpose of explicit conditioning on inputs, we make use of the
concept of short quantum Markov chains~\cite{hayden:qc2004b}.  We
define these chains for the class of states used here. For the general
definition, see the references.

\begin{definition} The distribution $\rho(UVW)\in\cS(UVW\Pfnt{E})$ is a
  \emph{short quantum Markov chain over $W\Pfnt{E}$}, written as $\rho(UVW)\in
  U\leftrightarrow W\Pfnt{E}\leftrightarrow V$, if for all $w$, there
  is a factorization $\cH(\Pfnt{E})=\bigoplus_{k}\cU_{w,k}\otimes
  \cV_{w,k}\;\oplus\cR$ such that $\rho(UVw)=\bigoplus_{k}
  \sigma_{w,k}(U)\otimes \tau_{w,k}(V)$.
\end{definition}
The definition is symmetric in $U$ and $V$.  That is,
$\rho(UVW)\in U\leftrightarrow W\Pfnt{E}\leftrightarrow V$ iff
$\rho(UVW)\in V\leftrightarrow W\Pfnt{E}\leftrightarrow U$.

\subsection{Distances}

We use the half trace distance as the extension of total variation ($\TV$) distance from
probability distributions to states for compatibility with classical
protocols and conventions.  Purified distance is  more
natural when dealing with quantum side information, partly because it is well-behaved with respect to extension to previously traced-out quantum systems, see Ref.~\cite{tomamichel:qc2012a}, Cor.~3.6, Pg.~52. Since purified
distance is an upper bound on half trace distance, this usually does not
complicate comparisons.

\begin{definition} \label{def:tv_distance}
  Let $\rho(U),\sigma(U)\in\cS_{1}(U\Pfnt{W})$. The \emph{$\TV$ distance
  between $\rho(U)$ and $\sigma(U)$} is given by
  \begin{equation}
    \trdist{\rho(U)}{\sigma(U)}=\frac{1}{2}\sum_{u}\tr(|\rho(u)-\sigma(u)|).
  \end{equation}
\end{definition}

We remark that the $\TV$ distance between $\rho(U)$ and $\sigma(U)$ is 
the same as that between the two quantized states
$\sum_{u}\hat u\otimes\rho(u)$ and $\sum_{u}\hat u\otimes\sigma(u)$.
The $\TV$ distance is $1/2$ of the conventional trace distance. We use
the name and the factor of $1/2$ for consistency with the conventions
for probability distributions and the treatment of randomness
generation in the presence of classical side information. It ensures
that the results of Ref.~\cite{knill:qc2017a} are directly comparable
to the results in this manuscript and that there are no discrepancies
when interpreting protocol soundness.  In works emphasizing general
quantum states, it is extended to trace-class operators and called the
generalized trace distance~\cite{tomamichel:qc2015a}.  Composition
with other classical protocols behaves as expected since the $\TV$
distance satisfies the triangle inequality (as it should) and the
data-processing inequality, see Ref.~\cite{nielsen:qc2001a},
Sect.~9.2.1 or the extensions in Ref.~\cite{tomamichel:qc2015a},
Sect.~3.2.  The next lemmas establish basic properties of $\TV$
distance needed later. Versions of these lemmas can be found in the
cited literature.

\begin{lemma}\label{lem:trdist_pospart}
  Let $\rho(U),\sigma(U)\in\cS_{1}(U\Pfnt{W})$. Then
  \begin{equation}
    \trdist{\rho(U)}{\sigma(U)} = \sum_{u}\tr(\pospart{\rho(u)-\sigma(u)}).
  \end{equation}
\end{lemma}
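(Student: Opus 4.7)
The plan is to reduce the statement to the elementary identities $|A| = \pospart{A} + \pospart{-A}$ and $A = \pospart{A} - \pospart{-A}$ that hold for any Hermitian $A$, applied pointwise to $A_u = \rho(u) - \sigma(u)$, together with the observation that the traces $\tr(\rho(u)-\sigma(u))$ sum to zero since $\rho(U)$ and $\sigma(U)$ are both normalized.

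First I would unpack the definition of $\TV$ distance to write
\begin{equation}
  \trdist{\rho(U)}{\sigma(U)} = \tfrac{1}{2}\sum_{u}\tr(\pospart{\rho(u)-\sigma(u)}) + \tfrac{1}{2}\sum_{u}\tr(\pospart{\sigma(u)-\rho(u)}),
\end{equation}
using $|A_u| = \pospart{A_u} + \pospart{-A_u}$ from the Preliminaries. Next I would exploit normalization: since $\tr(\sum_u \rho(u)) = \tr(\sum_u \sigma(u)) = 1$, the decomposition $A_u = \pospart{A_u} - \pospart{-A_u}$ gives
\begin{equation}
  \sum_{u}\tr(\pospart{\rho(u)-\sigma(u)}) - \sum_{u}\tr(\pospart{\sigma(u)-\rho(u)}) = \sum_{u}\tr(\rho(u)-\sigma(u)) = 0,
\end{equation}
so the two sums in the first display coincide. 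Substituting back yields the claimed identity.

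There is no real obstacle; the argument is a one-line application of the Jordan decomposition of Hermitian operators combined with normalization. The only thing worth being careful about is confirming that the conventions in the Preliminaries agree — namely that $|A|$ has been defined as $\pospart{A} + \pospart{-A}$ for self-adjoint $A$, which is stated explicitly in the paragraph preceding Lemma~\ref{lem:trpospart}, so no additional justification is required.
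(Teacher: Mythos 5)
Your proof is correct and follows essentially the same route as the paper's: decompose $|\rho(u)-\sigma(u)|$ via the Jordan decomposition $|A|=\pospart{A}+\pospart{-A}$, then use normalization and $A=\pospart{A}-\pospart{-A}$ to show the two positive-part sums agree. No meaningful difference.
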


\begin{proof}
  In general
  $|\xi-\chi|=\pospart{\xi-\chi}+\pospart{\chi-\xi}$ and
  $\tr(\xi)-\tr(\chi)=\tr(\xi-\chi)=\tr(\pospart{\xi-\chi})-\tr(\pospart{\chi-\xi})$.
  Since $\sum_{u}\tr(\rho(u))=\sum_{u}\tr(\sigma(u))$,   
  we find that $\sum_{u}\tr(\pospart{\rho(u)-\sigma(u)})
  =\sum_{u}\tr(\pospart{\sigma(u)-\rho(u)})$ and
  \begin{align}
    \trdist{\rho(U)}{\sigma(U)} &=
    \sum_{u}\frac{1}{2}\tr(|\rho(u)-\sigma(u)|)\notag\\
    &=\sum_{u}\frac{1}{2}\tr(\pospart{\rho(u)-\sigma(u)}
    +\pospart{\sigma(u)-\rho(u)})\notag\\
    &=\frac{1}{2}\sum_{u}\tr(\pospart{\rho(u)-\sigma(u)})
    +\frac{1}{2}\sum_{u}\tr(\pospart{\sigma(u)-\rho(u)})\notag\\
    &=\sum_{u}\tr(\pospart{\rho(u)-\sigma(u)}).
  \end{align}
\end{proof}

\begin{lemma}\label{lem:trdist_weight}
  Let $\rho(U),\sigma(U)\in\cS_{1}(U\Pfnt{W})$. If there exists
  $\tau(U)\in\cS(U\Pfnt{W})$ with $\tr(\tau)\geq 1-\epsilon$,
  $\tau(U)\leq\rho(U)$ and $\tau(U)\leq\sigma(U)$, then
  $\trdist{\rho(U)}{\sigma(U)}\leq \epsilon$.
\end{lemma}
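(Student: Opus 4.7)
The plan is to reduce to Lemma~\ref{lem:trdist_pospart}, which rewrites the $\TV$ distance as a sum of traces of positive parts, and then exploit the monotonicity of $\tr(\pospart{\cdot})$ established in Lemma~\ref{lem:trpospart}.

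First I would write
\begin{equation}
  \rho(u)-\sigma(u) = \bigl(\rho(u)-\tau(u)\bigr) - \bigl(\sigma(u)-\tau(u)\bigr).
\end{equation}
Both differences on the right-hand side are positive semidefinite by the hypotheses $\tau(U)\leq\rho(U)$ and $\tau(U)\leq\sigma(U)$, so subtracting the second term only decreases the operator, giving $\rho(u)-\sigma(u)\leq \rho(u)-\tau(u)$. Applying monotonicity of $\tr(\pospart{\cdot})$ from Lemma~\ref{lem:trpospart} yields
\begin{equation}
  \tr\!\bigl(\pospart{\rho(u)-\sigma(u)}\bigr) \leq \tr\!\bigl(\pospart{\rho(u)-\tau(u)}\bigr) = \tr\!\bigl(\rho(u)-\tau(u)\bigr),
\end{equation}
where the last equality uses that $\rho(u)-\tau(u)\geq 0$ already.

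Then I would sum over $u$ and invoke Lemma~\ref{lem:trdist_pospart}:
\begin{equation}
  \trdist{\rho(U)}{\sigma(U)} = \sum_{u}\tr\!\bigl(\pospart{\rho(u)-\sigma(u)}\bigr) \leq \sum_{u}\tr\!\bigl(\rho(u)-\tau(u)\bigr) = 1-\tr(\tau) \leq \epsilon,
\end{equation}
using $\tr(\sum_u\rho(u))=1$ since $\rho(U)\in\cS_{1}(U\Pfnt{W})$, and the hypothesis $\tr(\tau)\geq 1-\epsilon$.

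There is no real obstacle here; the whole argument is a two-line application of the preceding lemmas. The only thing to be careful about is keeping track of whether one should bound $\rho-\sigma$ by $\rho-\tau$ or by $\sigma-\tau$, and the choice is immaterial since the same $\tau$ dominates both ways, giving the same bound by symmetry.
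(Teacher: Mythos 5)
Your proof is correct and follows the same route as the paper: rewrite the $\TV$ distance via Lemma~\ref{lem:trdist_pospart} as a sum of traces of positive parts, bound $\rho(u)-\sigma(u)\leq\rho(u)-\tau(u)$ using $\sigma(u)-\tau(u)\geq 0$, and apply the monotonicity of $\tr(\pospart{\cdot})$ from Lemma~\ref{lem:trpospart}. No differences worth noting.
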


\begin{proof}
  Suppose that $\tau(U)$ has the given properties.  Then the $\TV$ distance is
  \begin{align}
    \trdist{\rho(U)}{\sigma(U)}
    &= \sum_{u}\tr(\pospart{\rho(u)-\sigma(u)})\notag\\
    &=\sum_{u}\tr(\pospart{(\rho(u)-\tau(u))-(\sigma(u)-\tau(u))})\notag\\
    &\leq\sum_{u}\tr(\pospart{\rho(u)-\tau(u)})\notag\\
    &=\sum_{u}\tr(\rho(u)-\tau(u))\notag\\
    &\leq \epsilon.
  \end{align}
  For the inequality of the third line, we have that for all $u$,
  $(\sigma(u)-\tau(u))\geq 0$, so we can apply the first part of
  Lem.~\ref{lem:trpospart}. 
\end{proof}

\begin{lemma}\label{lem:pmax_fillin}
  Let $\tau(UV)\in\cS_{\leq 1}(UV\Pfnt{W})$ 
  and $\sigma(V)\in\cS_{1}(V\Pfnt{W})$ with
  $\tau(UV)\leq p\sigma(V)$ and $p|\Rng(U)|\geq 1$.  Then there exists
  $\rho(UV)\in\cS_{1}(UV\Pfnt{W})$ such that $\tau(UV)\leq\rho(UV)\leq
  p\sigma(V)$.
\end{lemma}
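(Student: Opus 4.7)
The plan is to construct $\rho(UV)$ as an interpolation between $\tau(UV)$ and $p\sigma(V)$, using the slack between the two to push up the total trace to $1$.

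First, define the slack $\delta(UV) = p\sigma(V) - \tau(UV)$, viewed as a state-valued distribution of $UV$. By the hypothesis $\tau(UV)\leq p\sigma(V)$, we have $\delta(uv)\geq 0$ for every $uv$. Computing the total trace gives
\begin{equation}
\sum_{uv}\tr(\delta(uv)) = p\,|\Rng(U)|\sum_{v}\tr(\sigma(v)) - \sum_{uv}\tr(\tau(uv)) = p\,|\Rng(U)| - \tr(\tau),
\end{equation}
which is nonnegative because $p\,|\Rng(U)|\geq 1\geq \tr(\tau)$.

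Next, I would set $\rho(UV) = \tau(UV) + t\,\delta(UV)$ and pick $t$ so that $\tr(\rho)=1$. Solving gives
\begin{equation}
t = \frac{1-\tr(\tau)}{p\,|\Rng(U)| - \tr(\tau)},
\end{equation}
(with the convention that in the degenerate case $p\,|\Rng(U)|=\tr(\tau)=1$ we simply take $\rho=\tau$, which then forces $\tau=p\sigma(V)$ by equality of total traces). The numerator is in $[0,1]$ because $\tau\in\cS_{\leq 1}(UV\Pfnt{W})$, and the denominator dominates the numerator because $p\,|\Rng(U)|\geq 1$; hence $t\in[0,1]$.

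With this choice, $\rho(UV)=(1-t)\tau(UV) + t\cdot p\sigma(V)$ on each $uv$-block, so $\rho(uv)\geq 0$ for all $uv$, and $\tr(\rho)=1$ by construction, placing $\rho(UV)\in\cS_{1}(UV\Pfnt{W})$. The two sandwiching inequalities are immediate: $\rho(UV)-\tau(UV) = t\,\delta(UV)\geq 0$ and $p\sigma(V)-\rho(UV) = (1-t)\delta(UV)\geq 0$. There is no substantive obstacle here; the only subtlety is verifying $t\in[0,1]$, which uses the hypothesis $p\,|\Rng(U)|\geq 1$ precisely once, and handling the boundary case where the denominator vanishes.
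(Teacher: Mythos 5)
Your proof is correct and takes essentially the same route as the paper: both construct $\rho(UV)=\tau(UV)+t\bigl(p\sigma(V)-\tau(UV)\bigr)$ with $t=(1-\tr(\tau))/\bigl(p|\Rng(U)|-\tr(\tau)\bigr)$ (the paper writes this as $\epsilon/\delta$). Your explicit handling of the degenerate case $p|\Rng(U)|=\tr(\tau)=1$ is a small clean-up the paper leaves implicit.
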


\begin{proof}
  Let $\epsilon =  1-\tr(\tau)$ and
  $\delta = \sum_{uv}\tr(p\sigma(v)-\tau(uv))=|\Rng(U)|
  p-(1-\epsilon)\geq \epsilon$. Let
  $\xi(UV) = (\epsilon/\delta)(p\sigma(V)-\tau(UV))$.  Then
  $\tr(\xi)=\epsilon$ and $0\leq \xi(UV)\leq
  p\sigma(V)-\tau(UV)$. Define $\rho(UV)=\tau(UV)+\xi(UV)$.
  Then $\rho(UV)$ satisfies the desired conditions. 
\end{proof}

\begin{definition}\label{def:pd_distance}
  For $\sigma\in S_{1}(\cH)$ and $\tau\in S_{\leq 1}(\cH)$,
  the \emph{purified distance
  between $\sigma$ and $\tau$} is given by
  \begin{equation}
    \purdist{\sigma}{\tau}=
      \sqrt{1-\big(\tr(|\sqrt{\sigma}\sqrt{\tau}|)\big)^{2}}.
  \end{equation}
  For $\sigma(U)\in \cS_{1}(U\Pfnt{W})$
  and $\tau(U)\in \cS_{\leq 1}(V\Pfnt{W})$,
  \begin{equation}
    \purdist{\sigma(U)}{\tau(U)}
     = \sqrt{1-\left(\sum_{u}\tr(|\sqrt{\sigma(u)}\sqrt{\tau(u)}|)\right)^{2}}.
  \end{equation}
   The \emph{fidelity between $\sigma(U)$ and $\tau(U)$} is
   $F(\sigma(U),\tau(U))=\sum_{u}\tr(|\sqrt{\sigma(u)}\sqrt{\tau(u)}|)$.
\end{definition}

The definition of purified distance can be extended to $S_{\leq
  1}(\cH)$ in the first argument, but the expression becomes more
involved. We do not need the extension. The relevant properties of
purified distance can be determined from Tbl.~3.1, Pg.~48 in
Ref.~\cite{tomamichel:qc2012a} and the subsequent sections, given the
definition of purified distance in terms of fidelity (Def.~3.3,
Pg.~49).  We remark that the extension of purified distance to
distributions is consistent with the definition of purified distance
for the quantization of the distributions, see property (vi) in the
referenced table. That is, the purified distance between
  $\rho(U)\in \cS_{1}(U\Pfnt{W})$ and $\sigma(U)\in
  \cS_{1}(U\Pfnt{W})$ is the same as that between the quantized states
  $\sum_{u}\hat u\otimes\rho(u)$ and $\sum_{u}\hat
  u\otimes\sigma(u)$.  

The purified
distance satisfies the triangle inequality (as it should) and the
data-processing inequality, see  Ref.~\cite{tomamichel:qc2012a},
Prop.~3.2, Pg.~50 and Thm.~3.4, Pg.~51.
We also need the following relationships:

\begin{lemma}
  \label{lem:purdistprops}
  If $\rho(U),\sigma(U)\in \cS_{1}(U\Pfnt{W})$ and
  $\tau(U)\in \cS_{\leq 1}(U\Pfnt{W})$ such
  that $\tau(U)\leq \sigma(U)$, then
  $\purdist{\rho(U)}{\sigma(U)}\leq \purdist{\rho(U)}{\tau(U)}$
  and $\trdist{\rho(U)}{\sigma(U)}\leq\purdist{\rho(U)}{\sigma(U)}
  \leq \sqrt{2\trdist{\rho(U)}{\sigma(U)}}$.
\end{lemma}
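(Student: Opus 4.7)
The plan is to reduce all three inequalities to standard properties of purified distance, fidelity and trace distance for density operators by first passing to the quantizations $\sum_{u}\hat u\otimes\rho(u)$, $\sum_{u}\hat u\otimes\sigma(u)$ and $\sum_{u}\hat u\otimes\tau(u)$. As remarked just before the lemma, the classical-quantum conventions are set up so that both $\trdist{\cdot}{\cdot}$ and $\purdist{\cdot}{\cdot}$ agree between the distributions and their quantizations, and likewise the inequality $\tau(U)\le\sigma(U)$ lifts to a PSD inequality between the quantized subnormalized states. Thus we only need to verify the three inequalities for ordinary (sub)normalized density operators $\rho,\sigma,\tau$ on a single Hilbert space, with $\rho,\sigma$ normalized, $\tau\le\sigma$, and $\tr(\tau)\leq 1$.

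For the first inequality, I would appeal directly to monotonicity of fidelity in its subnormalized argument: if $0\leq\tau\leq\sigma$ and $\rho$ is a state, then $\tr|\sqrt{\rho}\sqrt{\tau}|\leq\tr|\sqrt{\rho}\sqrt{\sigma}|$; this is one of the standard properties collected in Tbl.~3.1 and Sect.~3.2 of Ref.~\cite{tomamichel:qc2012a} (where it is phrased for the generalized fidelity $F_{*}$, which coincides with our fidelity when one argument is normalized). Since $\purdist{\rho}{\xi}=\sqrt{1-F(\rho,\xi)^{2}}$ is a strictly decreasing function of $F$ on $[0,1]$, this monotonicity of fidelity immediately yields $\purdist{\rho(U)}{\sigma(U)}\le\purdist{\rho(U)}{\tau(U)}$.

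For the two Fuchs--van de Graaf-type bounds, I would apply the standard inequalities for normalized states $\rho,\sigma$,
\begin{equation}
1-F(\rho,\sigma)\;\leq\;\trdist{\rho}{\sigma}\;\leq\;\sqrt{1-F(\rho,\sigma)^{2}},
\end{equation}
where the half-trace normalization is compatible with our Def.~\ref{def:tv_distance}. The right-hand inequality is exactly $\trdist{\rho(U)}{\sigma(U)}\leq\purdist{\rho(U)}{\sigma(U)}$. For the outer bound, I would factor
\begin{equation}
\purdist{\rho(U)}{\sigma(U)}^{2} \;=\; 1-F^{2} \;=\; (1-F)(1+F)\;\leq\; 2(1-F)\;\leq\;2\,\trdist{\rho(U)}{\sigma(U)},
\end{equation}
using $F\le 1$ in the middle step and $1-F\leq\trdist{\rho}{\sigma}$ at the end, and then take square roots.

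The only nontrivial obstacle is confirming the monotonicity of fidelity under $\tau\le\sigma$ with $\tau$ subnormalized, since the naive argument through $\sqrt{\tau}\le\sqrt{\sigma}$ does not immediately produce $\sqrt{\tau}\rho\sqrt{\tau}\le\sqrt{\sigma}\rho\sqrt{\sigma}$; the clean path is to invoke the cited property in Ref.~\cite{tomamichel:qc2012a} (equivalently obtained via an Uhlmann-type representation of $F$ as a supremum over purifications, which makes monotonicity in each argument manifest). Everything else is bookkeeping with the definitions.
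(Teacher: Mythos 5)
Your proposal is correct and follows essentially the same route as the paper: both reduce to the quantized states via the remarks after Defs.~\ref{def:tv_distance} and~\ref{def:pd_distance}, both get the first inequality from monotonicity of the (generalized) fidelity under $\tau\le\sigma$ (the paper cites this as property (v) of Tbl.~3.1 in Ref.~\cite{tomamichel:qc2012a}), and both get the second from the Fuchs--van~de~Graaf relations (which is precisely Prop.~3.3 there). The only difference is that you unfold the $\PD\le\sqrt{2\,\TV}$ step via the factoring $1-F^{2}=(1-F)(1+F)\le 2(1-F)$, whereas the paper simply cites the proposition; the content is the same.
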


\begin{proof}
  The first statement follows from property (v) of Tbl.~3.1, Pg.~48,
  and the second from Prop.~3.3, Pg.~50 of
  Ref.~\cite{tomamichel:qc2012a}, in view of the two remarks
  after Defs.~\ref{def:tv_distance} and~\ref{def:pd_distance}.
\end{proof}

\subsection{R\'enyi Powers}

We adopt the convention that the trace has higher priority than power
so that $\tr(A)^{\alpha}=(\tr(A))^{\alpha}$. Since many works have the
opposite convention, we often use the additional parentheses to
disambiguate.

\begin{definition}\label{def:sandwiched_Renyi}
  Let $0\leq \rho\ll\sigma$, $\alpha>1$ and $\beta=\alpha-1$. The
  \emph{sandwiched R\'enyi power of order $\alpha$ of $\rho$
    conditional on $\sigma$} is defined as
  \begin{equation}
    \Rpow{\alpha}{\rho}{\sigma} =
    \tr( \left(\sigma^{-\beta/(2\alpha)}\rho\sigma^{-\beta/(2\alpha)}\right)^{\alpha} ).
  \end{equation}
  The \emph{Petz R\'enyi power of order $\alpha$
    of $\rho$ conditional on $\sigma$} is defined as
  \begin{equation}
    \Ppow{\alpha}{\rho}{\sigma} =
    \tr( \rho^{\alpha}\sigma^{-\beta}).
  \end{equation}
  Both R\'enyi powers are defined to be identically $0$ if both
  $\rho=0$ and $\sigma=0$.  

  The \emph{normalized R\'enyi powers} are defined by
  \begin{align}
    \hatRpow{\alpha}{\rho}{\sigma} &=
      \frac{1}{\tr(\rho)}\Rpow{\alpha}{\rho}{\sigma},\notag\\
    \hatPpow{\alpha}{\rho}{\sigma} &=
      \frac{1}{\tr(\rho)}\Ppow{\alpha}{\rho}{\sigma},\notag\\
  \end{align}
  for $\tr(\rho)>0$. For $\tr(\rho)=0$ they are defined to be
  identically $1$.
\end{definition}
Throughout this work, we use the convention that the symbols $\alpha$
and $\beta$ satisfy $\alpha>1$ and $\beta=\alpha-1>0$. We normally do
not reiterate these constraints on $\alpha$ and $\beta$.  For Petz
R\'enyi powers we generally also assume $\alpha\leq 2$.  By default,
R\'enyi powers are sandwiched.  We only consider R\'enyi powers of
order $\alpha>1$, but they are well-defined and useful for
$0<\alpha<1$.  A pedagogical introduction to R\'enyi powers and their
properties is in Ref.~\cite{tomamichel:qc2015a}. See Sect.~4.3 for the
sandwiched R\'enyi powers and Sect.~4.4 for the Petz R\'enyi powers.
The focus in Ref.~\cite{tomamichel:qc2015a} and most other references
is on R\'enyi divergences, which are entropic quantities obtained from
the R\'enyi powers, although many of the fundamental properties are
derived by an analysis of the latter.  The divergences share a set of
properties given in Sect.~4.1.1 and~4.1.2 of
Ref.~\cite{tomamichel:qc2015a} and labeled (I)-(X). The next lemmas
give properties of R\'enyi powers that we need. The Roman numerals in
the headings refer to the labels used in
Ref.~\cite{tomamichel:qc2015a} for related properties of R\'enyi
divergences.

\begin{lemma}\label{lem:petz_sandwiched}
  We have
  $\Ppow{\alpha}{\rho}{\sigma}\geq \Rpow{\alpha}{\rho}{\sigma}$.
\end{lemma}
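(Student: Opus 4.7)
The plan is to reduce the inequality to a single application of the Araki--Lieb--Thirring (ALT) inequality. Writing $A = \sigma^{-\beta/\alpha}$ and $B = \rho$, the definition of the sandwiched power gives
\begin{equation}
  \Rpow{\alpha}{\rho}{\sigma} = \tr\!\left((A^{1/2} B A^{1/2})^{\alpha}\right),
\end{equation}
while $A^{\alpha} = \sigma^{-\beta}$ and $B^{\alpha} = \rho^{\alpha}$, so
\begin{equation}
  \Ppow{\alpha}{\rho}{\sigma} = \tr(\rho^{\alpha}\sigma^{-\beta}) = \tr(B^{\alpha} A^{\alpha}).
\end{equation}
Hence the desired inequality is exactly
\begin{equation}
  \tr\!\left((A^{1/2} B A^{1/2})^{\alpha}\right) \;\leq\; \tr(A^{\alpha} B^{\alpha}),
\end{equation}
which is the ALT inequality for $A,B \geq 0$ and exponent $\alpha \geq 1$ (recall $\alpha > 1$ is in force throughout the paper). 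I would cite the standard reference, for example Sect.~4 of Ref.~\cite{tomamichel:qc2015a}, where this form of ALT is recorded.

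First, I would dispose of the trivial cases: if both $\rho=0$ and $\sigma=0$, both powers are $0$ by convention and equality holds. Otherwise $\rho \ll \sigma$ implies $\sigma \neq 0$, so the relative inverse $\sigma^{-1}$ is well defined on $\Supp(\sigma)$. Next, since $\rho \ll \sigma$, every operator in sight is supported inside $\Supp(\sigma)$, and all the traces above may be computed there; in particular $A = \sigma^{-\beta/\alpha}$ is a genuine positive operator on $\Supp(\sigma)$, which is all that ALT requires.

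After the substitutions, the single application of ALT closes the proof; the verification that $A^{\alpha} = \sigma^{-\beta}$ and $(A^{1/2}BA^{1/2})^{\alpha}$ reproduces the sandwiched power is an immediate matching of exponents ($\alpha \cdot (\beta/\alpha) = \beta$). The main (and essentially only) conceptual step is recognizing the sandwiched-vs.-Petz comparison as precisely the scope of ALT; the rest is bookkeeping of exponents and a brief justification that the convention $0$ for the degenerate case does not interfere. No issues with infinite dimension arise because the paper restricts to operators of finite support.
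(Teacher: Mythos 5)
Your proof is correct and matches the paper's argument: both reduce the claim to a single application of the Araki--Lieb--Thirring inequality with the same effective substitution (your $A^{1/2}$ is the paper's $B=\sigma^{-\beta/(2\alpha)}$ and your $B$ is the paper's $A=\rho$), differing only in which standard form of ALT is quoted and which reference is cited. Your additional remarks on the degenerate case and on restricting to $\Supp(\sigma)$ are harmless and slightly more careful than the paper's one-line justification.
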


\begin{proof}
  This follows from the Araki-Lieb-Thirring inequality
  $\tr(B^{\gamma}A^{\gamma}B^{\gamma})\geq \tr((BAB)^{\gamma})$ for all $\gamma\geq 1$,
  $A\geq 0$ and $B\geq 0$, where we set $\gamma=\alpha$, $A=\rho$ and $B=\sigma^{-\beta/(2\alpha)}$.
  See Ref.~\cite{bhatia:qc1997a}, Pg.~258.
\end{proof}

\begin{lemma}\label{lem:rp_continuity}
  \emph{(I) Continuity of R\'enyi powers.}  Suppose that
  $0<\rho\ll\sigma$.  The R\'enyi powers
  $\Rpow{\alpha}{\rho'}{\sigma'}$ and $\Ppow{\alpha}{\rho'}{\sigma'}$
  are continuous at $\rho'=\rho,\sigma'=\sigma$ in each of $\rho'$ and
  $\sigma'$.
\end{lemma}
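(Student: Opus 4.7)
The plan is to exploit the fact that, under the hypothesis $0<\rho\ll\sigma$, both $\rho$ and $\sigma$ are strictly positive definite. Indeed, $0<\rho$ forces $\Supp(\rho)=\cH$, and then $\rho\ll\sigma$ gives $\Supp(\sigma)\supseteq\Supp(\rho)=\cH$, so $\sigma>0$ as well. Since the set of strictly positive operators is open in $A(\cH)$, there is a neighborhood of $(\rho,\sigma)$ consisting entirely of strictly positive operators, on which the relative inverses appearing in the definitions of $\Rpow{\alpha}{\cdot}{\cdot}$ and $\Ppow{\alpha}{\cdot}{\cdot}$ coincide with ordinary matrix inverses.

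On this neighborhood, I would express each R\'enyi power as a composition of standard continuous maps. For the sandwiched case, the decomposition is: (i) the relative inverse power $\sigma'\mapsto\sigma'^{-\beta/(2\alpha)}$, continuous on strictly positive operators by functional calculus applied to $x\mapsto x^{-\beta/(2\alpha)}$ on $(0,\infty)$; (ii) the bilinear conjugation $(B,\rho')\mapsto B\rho'B$; (iii) the matrix power $A\mapsto A^{\alpha}$, continuous on $S(\cH)$ since $x\mapsto x^{\alpha}$ is continuous on $[0,\infty)$ for $\alpha>0$; and (iv) the trace. For the Petz case, the decomposition is even simpler: $\rho'\mapsto\rho'^{\alpha}$, $\sigma'\mapsto\sigma'^{-\beta}$, matrix multiplication, and trace, each continuous on the appropriate domain. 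Continuity in $\rho'$ (with $\sigma'=\sigma$ fixed) and in $\sigma'$ (with $\rho'=\rho$ fixed) then both follow from composing continuous maps at the point of interest.

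The main (essentially only) point requiring care is continuity of the relative inverse power $\sigma'\mapsto\sigma'^{-\gamma}$ for $\gamma>0$. If $\sigma$ had any zero eigenvalues, arbitrarily small perturbations $\sigma'$ could acquire new tiny positive eigenvalues on which the inverse blows up, and continuity of the relative inverse in the operator-norm topology would fail. The strict positivity of $\rho$ coming from the hypothesis $0<\rho$, which forces $\sigma$ to be strictly positive as well, is precisely what places $(\rho,\sigma)$ in the open subset of strictly positive operators where functional calculus of continuous functions on $(0,\infty)$ produces continuous operator-valued maps. This is the step where the full strength of the hypothesis is used; with it, the proof is just composition of continuous building blocks.
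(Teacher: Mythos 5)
Under a literal reading of $0<\rho$ as strict positive-definiteness, your argument is correct: the deduction that $\sigma>0$ follows from $\rho\ll\sigma$ is sound, and the lemma then reduces to routine continuity of matrix functional calculus on the open cone of positive-definite operators, exactly as you lay out. For comparison, the paper's own proof is only a citation (to Mueller-Lennert et al., Sect.~IV.B, for the sandwiched case, and to an exercise in Tomamichel, Sect.~4.4.1, for the Petz case), so you supply an argument where the paper supplies none.

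The difficulty is that the lemma is almost certainly intended in the more general setting where $\sigma$ need not have full support, and that is where the content lies. The remark immediately after the lemma --- ``Given appropriate conditions on the support of $\sigma$, joint continuity also holds'' --- is vacuous under your reading: if $\sigma$ is forced to be invertible, joint continuity is automatic and needs no support conditions. The cited references also treat the general case, and the lemma's expected use (justifying the limiting trick $\sigma+\epsilon\one\to\sigma$ for a singular $\sigma$ with $\rho\ll\sigma$, as in the discussion of Lem.~\ref{lem:conditionmonotone_rp}) requires exactly the regime your proof excludes. In that regime the relative-inverse power $\sigma'\mapsto\sigma'^{-\gamma}$ is not continuous at $\sigma$, as you yourself observe, so composing continuous building blocks cannot get off the ground. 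What keeps the composite $\Rpow{\alpha}{\rho}{\sigma'}$ continuous is a cancellation hidden by that decomposition: a perturbation of $\sigma$ can introduce eigenvalues of order $\epsilon$ whose eigenvectors lie $O(\epsilon)$-close to $\ker\sigma$; the operator $\sigma'^{-\beta/(2\alpha)}$ then carries a factor $\epsilon^{-\beta/(2\alpha)}$ on those directions, but because $\rho\ll\sigma$ (so $\rho$ annihilates $\ker\sigma$) the overlap of those eigenvectors with $\rho$ is also of order $\epsilon$, and since $\beta/(2\alpha)=(\alpha-1)/(2\alpha)<1$ the products vanish as $\epsilon\to 0$. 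Establishing this interplay between the blowup of $\sigma'^{-\gamma}$ and the decaying overlap with $\Supp(\rho)$ is the real content of the cited references; your argument sidesteps it by taking the hypothesis to exclude the interesting case.
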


Given appropriate conditions on the support of $\sigma$, joint
continuity also holds.

\begin{proof}
  For the sandwiched R\'enyi entropy this is shown in
  Ref.~\cite{mueller-lennert:qc2013a}, Sect.~IV.B.  For the Petz R\'enyi
  entropy, this is stated as an exercise at the end of Sect.~4.4.1 in
  Ref.~\cite{tomamichel:qc2015a}. 
\end{proof}

\begin{lemma}\label{lem:conditionmonotone_rp}
  \emph{(X) Dominance property of R\'enyi powers.}
  For $0\leq\rho\ll\sigma\leq\sigma'$,
  $\Rpow{\alpha}{\rho}{\sigma'}\leq \Rpow{\alpha}{\rho}{\sigma}$.
  If $\alpha\leq 2$, then 
  $\Ppow{\alpha}{\rho}{\sigma'}\leq \Ppow{\alpha}{\rho}{\sigma}$.
\end{lemma}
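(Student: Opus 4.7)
The plan is to reduce each inequality to operator monotonicity of fractional powers combined with the monotonicity of $X\mapsto\tr(X^{\alpha})$ on the positive cone, after handling the relative-inverse convention by a standard regularization argument.

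\emph{Sandwiched case.} First I rewrite $\Rpow{\alpha}{\rho}{\sigma}=\tr((\rho^{1/2}\sigma^{-\beta/\alpha}\rho^{1/2})^{\alpha})$. This follows from $\Spec(A^{\dagger}A)=\Spec(AA^{\dagger})$ applied to $A=\rho^{1/2}\sigma^{-\beta/(2\alpha)}$, since the trace of an $\alpha$-power depends only on the spectrum. Assume for the moment that $\sigma,\sigma'$ are strictly positive definite with $\sigma\leq\sigma'$. Operator monotonicity of $x\mapsto x^{-q}$ for $q=\beta/\alpha\in(0,1)$ yields $\sigma^{-\beta/\alpha}\geq\sigma'^{-\beta/\alpha}$, and conjugation by $\rho^{1/2}$ preserves this ordering, giving $\rho^{1/2}\sigma^{-\beta/\alpha}\rho^{1/2}\geq\rho^{1/2}\sigma'^{-\beta/\alpha}\rho^{1/2}$. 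Finally, Weyl's monotonicity theorem for eigenvalues, together with scalar monotonicity of $x\mapsto x^{\alpha}$ on $[0,\infty)$, implies that $X\mapsto\tr(X^{\alpha})$ is monotone in $X\geq 0$, so $\Rpow{\alpha}{\rho}{\sigma'}\leq\Rpow{\alpha}{\rho}{\sigma}$.

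\emph{Petz case with $\alpha\leq 2$.} Now $\beta=\alpha-1\in(0,1]$, so $x\mapsto x^{-\beta}$ is operator monotone decreasing on the strictly positive cone. For strictly positive definite $\sigma\leq\sigma'$ this gives $\sigma^{-\beta}\geq\sigma'^{-\beta}$, and since $\rho^{\alpha}\geq 0$, cyclicity of the trace produces $\tr(\rho^{\alpha}\sigma^{-\beta})-\tr(\rho^{\alpha}\sigma'^{-\beta})=\tr(\rho^{\alpha/2}(\sigma^{-\beta}-\sigma'^{-\beta})\rho^{\alpha/2})\geq 0$. The restriction $\alpha\leq 2$ is essential, because $x\mapsto x^{-\beta}$ fails to be operator monotone decreasing once $\beta>1$.

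\emph{Lifting to the pseudo-inverse case.} To remove the strict positivity assumption, replace $\sigma,\sigma'$ by $\sigma_{\epsilon}=\sigma+\epsilon\one$ and $\sigma'_{\epsilon}=\sigma'+\epsilon\one$, which are strictly positive definite and still satisfy $\sigma_{\epsilon}\leq\sigma'_{\epsilon}$. The preceding arguments then apply at each $\epsilon>0$, giving the desired inequalities with $\sigma_{\epsilon},\sigma'_{\epsilon}$ in place of $\sigma,\sigma'$. Because $\rho\ll\sigma\leq\sigma'$ also implies $\rho\ll\sigma'$, Lem.~\ref{lem:rp_continuity} permits passing to the limit $\epsilon\to 0^{+}$ on both sides of each inequality; the case $\rho=0$ is immediate from the definition. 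The main obstacle is this regularization step: one must verify that the potentially divergent contribution of $(\sigma+\epsilon\one)^{-\beta/\alpha}$ on $\ker(\sigma)$ is suppressed by the support condition $\rho\ll\sigma$, so that the limit coincides with the relative-inverse definition used in Def.~\ref{def:sandwiched_Renyi}---this is precisely what the cited continuity lemma encapsulates.
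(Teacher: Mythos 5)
Your proof is correct and follows essentially the same route as the paper's: you rewrite $\Rpow{\alpha}{\rho}{\sigma}$ as $\tr((\rho^{1/2}\sigma^{-\beta/\alpha}\rho^{1/2})^{\alpha})$ via the $\Spec(A^{\dagger}A)=\Spec(AA^{\dagger})$ identity, chain operator monotonicity of $-x^{-q}$ for $q\in(0,1]$ with conjugation and trace monotonicity of $X\mapsto\tr(X^{\alpha})$, handle the Petz case analogously with $x^{-\beta}$ for $\beta\leq 1$, and remove the full-support assumption by the $\sigma+\epsilon\one$ regularization together with Lem.~\ref{lem:rp_continuity}. The only cosmetic divergence is that you derive trace monotonicity from Weyl's eigenvalue inequalities rather than citing it directly (the paper uses Ref.~\cite{carlen:qc2009a}, Thm.~2.10), which is an equivalent elementary justification.
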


\begin{proof}
  The relevant arguments can be found in Sects.~4.3 and~4.4 of
  Ref.~\cite{tomamichel:qc2015a}.  \Pc{Here are the details. Since
    \begin{align}
      \Spec\left(\sigma^{-\beta/(2\alpha)}\rho\sigma^{-\beta/(2\alpha)}\right)
      &=\Spec\left(\sigma^{-\beta/(2\alpha)}\rho^{1/2}
        \left(\sigma^{-\beta/(2\alpha)}\rho^{1/2}\right)^{\dagger}\right)\notag\\
      &=\Spec\left(\left(\sigma^{-\beta/(2\alpha)}\rho^{1/2}\right)^{\dagger}
        \sigma^{-\beta/(2\alpha)}\rho^{1/2}\right)\notag\\
      &=\Spec\left(\rho^{1/2}\sigma^{-\beta/\alpha}\rho^{1/2}\right)
    \end{align}
    and $\tr(\xi^{\alpha})=\sum \Spec(\xi^{\alpha})=\sum
    \left(\Spec\xi\right)^{\alpha}$, we can write
    \begin{equation}
      \Rpow{\alpha}{\rho}{\sigma} =
      \tr(\left(\rho^{1/2}\sigma^{-\beta/\alpha}\rho^{1/2}\right)^{\alpha}).
    \end{equation}
    Since $0<\beta/\alpha<1$, the function $A\mapsto -A^{-\beta/\alpha}$
    is operator monotone for $A>0$ (Ref.~\cite{bhatia:qc1997a},
    Prop.~V.1.6 and Thm.~ V.1.9), as is $B\mapsto XBX^{\dagger}$
    (Ref.~\cite{bhatia:qc1997a}, Lem.~V.1.5).  For all $\alpha\geq 0$,
    $C\mapsto \tr(C^{\alpha})$ is monotone in $C$ for $C\geq 0$.
    (Ref.~\cite{carlen:qc2009a}, Thm.~2.10). When $\sigma$ has full
    support, the monotonicity property of the sandwiched R\'enyi power
    follows by composing these monotonicity properties with $A=\sigma$,
    $B=A^{-\beta/\alpha}$ $X=\rho^{1/2}$ and
    $C=\rho^{1/2}\sigma^{-\beta/\alpha}\rho^{1/2}$.  To deal with the
    case where $\sigma$ does not have full support, and $\rho\not=0$, we
    can use the trick of replacing $\sigma$ with $\sigma+\epsilon\one$
    and $\sigma'$ with $\sigma'+\epsilon\one$, let $\epsilon>0$ go to
    zero and invoke continuity of the R\'enyi powers.  If $\rho=0$,
    $\Rpow{\alpha}{\rho}{\sigma} = \Rpow{\alpha}{\rho}{\sigma'} = 0$.
    For the Petz R\'enyi power, write $\Ppow{\alpha}{\rho}{\sigma} =
    \tr(\rho^{\alpha/2}\sigma^{-\beta}\rho^{\alpha/2})$ and apply
    monotonicity of $A\mapsto -A^{-\beta}$ for $0<\beta\leq 1$.}
\end{proof}

\begin{lemma}\label{lem:rp_sumineq}
  Let $0\leq \rho_{i}$ and $\rho=\sum_{i}\rho_{i}$. Then
  \begin{equation}
    \sum_{i}\Rpow{\alpha}{\rho_{i}}{\rho}\leq \tr(\rho).
  \end{equation}
  If $\alpha\leq 2$, then
  \begin{equation}
    \sum_{i}\Ppow{\alpha}{\rho_{i}}{\rho}\leq \tr(\rho).
  \end{equation}
\end{lemma}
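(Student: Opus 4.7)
The plan is to derive both inequalities as a one-step consequence of the dominance property (Lemma~\ref{lem:conditionmonotone_rp}) combined with an elementary evaluation of the self-conditional R\'enyi power.

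First I will note that $\rho-\rho_i=\sum_{j\neq i}\rho_j\geq 0$, so $0\leq\rho_i\leq\rho$ for every $i$, and in particular $\rho_i\ll\rho_i\leq\rho$. Applying Lemma~\ref{lem:conditionmonotone_rp} with the pair $(\sigma,\sigma')=(\rho_i,\rho)$ in the second slot then yields
\begin{equation}
  \Rpow{\alpha}{\rho_i}{\rho} \leq \Rpow{\alpha}{\rho_i}{\rho_i},
\end{equation}
and analogously $\Ppow{\alpha}{\rho_i}{\rho}\leq\Ppow{\alpha}{\rho_i}{\rho_i}$ when $\alpha\leq 2$ (the hypothesis on $\alpha$ is only needed for the Petz monotonicity).

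Second, I will evaluate the right-hand sides by a direct calculation using the relative inverse. On the support of $\rho_i$ we have $\rho_i^{-\beta/(2\alpha)}\rho_i\rho_i^{-\beta/(2\alpha)}=\rho_i^{1-\beta/\alpha}=\rho_i^{1/\alpha}$, so taking the $\alpha$-th power and tracing gives $\Rpow{\alpha}{\rho_i}{\rho_i}=\tr(\rho_i)$. The Petz case is even quicker: $\Ppow{\alpha}{\rho_i}{\rho_i}=\tr(\rho_i^{\alpha}\rho_i^{-\beta})=\tr(\rho_i^{\alpha-\beta})=\tr(\rho_i)$. The degenerate case $\rho_i=0$ is covered by the convention in Definition~\ref{def:sandwiched_Renyi}, which gives $0=\tr(\rho_i)$ on both sides.

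Summing over $i$ then produces $\sum_i\Rpow{\alpha}{\rho_i}{\rho}\leq\sum_i\tr(\rho_i)=\tr(\rho)$, and the Petz analogue. I do not anticipate any real obstacle: all of the substantive content has already been shipped into Lemma~\ref{lem:conditionmonotone_rp}, and what remains is a single trace identity, plus the bookkeeping of the zero case via the definitional convention.
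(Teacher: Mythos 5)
Your proof is correct and takes essentially the same approach as the paper: both apply the dominance property (Lemma~\ref{lem:conditionmonotone_rp}) with $\rho_i\leq\rho$ to reduce to the self-conditional R\'enyi power $\Rpow{\alpha}{\rho_i}{\rho_i}=\tr(\rho_i)$, then sum. You merely spell out the trace identity and the $\rho_i=0$ case more explicitly.
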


\begin{proof}
  By Lem.~\ref{lem:conditionmonotone_rp}, we have
  \begin{equation}
    \sum_{i}\Rpow{\alpha}{\rho_{i}}{\rho}
      \leq     \sum_{i}\Rpow{\alpha}{\rho_{i}}{\rho_{i}}
      = \sum_{i}\tr(\rho_{i})=\tr(\rho),
  \end{equation}
  and similarly for the Petz R\'enyi power when $\alpha\leq 2$.
\end{proof}

\begin{lemma}\label{lem:convex_rp}
  \emph{Log-convexity of R\'enyi powers:} For $0\leq\rho\ll\sigma$ the
  function $\alpha\mapsto \log(\Rpow{\alpha}{\rho}{\sigma})$ is
  convex, and so is
  $\alpha\mapsto\log(\Ppow{\alpha}{\rho}{\sigma})$.
\end{lemma}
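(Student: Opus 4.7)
The plan is to establish log-convexity by proving the equivalent multiplicative inequalities
\begin{align}
  \Ppow{\alpha}{\rho}{\sigma} &\leq \Ppow{\alpha_0}{\rho}{\sigma}^{\theta}\Ppow{\alpha_1}{\rho}{\sigma}^{1-\theta},\notag\\
  \Rpow{\alpha}{\rho}{\sigma} &\leq \Rpow{\alpha_0}{\rho}{\sigma}^{\theta}\Rpow{\alpha_1}{\rho}{\sigma}^{1-\theta},
\end{align}
for $\alpha=\theta\alpha_0+(1-\theta)\alpha_1$ with $\theta\in[0,1]$ and $\alpha_0,\alpha_1>1$. By a standard continuity argument (as in Lem.~\ref{lem:rp_continuity}) we may assume $\sigma$ has full support, and when $\rho=0$ both sides are $0$ by convention.

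For the Petz case, I would use the tracial matrix H\"older inequality $|\tr(AB)|\leq \|A\|_p\|B\|_q$ with $1/p+1/q=1$, $\|X\|_r=\tr(|X|^r)^{1/r}$.  Because $\rho$ commutes with its powers and likewise for $\sigma$, we may factor $\rho^{\alpha}=\rho^{\theta\alpha_0}\rho^{(1-\theta)\alpha_1}$ and $\sigma^{1-\alpha}=\sigma^{\theta(1-\alpha_0)}\sigma^{(1-\theta)(1-\alpha_1)}$, and then use cyclicity to rewrite
\begin{equation}
  \tr(\rho^{\alpha}\sigma^{1-\alpha})=\tr(XY),\qquad X=\sigma^{\theta(1-\alpha_0)/2}\rho^{\theta\alpha_0}\sigma^{\theta(1-\alpha_0)/2},\;Y=\sigma^{(1-\theta)(1-\alpha_1)/2}\rho^{(1-\theta)\alpha_1}\sigma^{(1-\theta)(1-\alpha_1)/2},
\end{equation}
after inserting the appropriate symmetrizations.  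With $p=1/\theta$ and $q=1/(1-\theta)$, H\"older gives the desired bound once one verifies $\|X\|_{1/\theta}^{1/\theta}=\Ppow{\alpha_0}{\rho}{\sigma}$ and analogously for $Y$; this is a direct calculation using $\|A\|_r^r=\tr((A^{\dagger}A)^{r/2})$ and the restriction $\alpha\leq 2$ (needed to keep $\theta\alpha_0\leq 1$ in the relevant range and guarantee the H\"older exponents behave correctly).

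For the sandwiched case, the direct H\"older argument fails because $\alpha$ appears both in the outer exponent and inside the conjugation $\sigma^{-\beta/(2\alpha)}\rho\sigma^{-\beta/(2\alpha)}$.  The standard remedy is complex interpolation: let $\alpha(z)=z\alpha_0+(1-z)\alpha_1$ for $z$ in the strip $0\leq \mathrm{Re}(z)\leq 1$, set $\beta(z)=\alpha(z)-1$, and define the analytic function
\begin{equation}
  F(z) = \tr\bigl((\sigma^{-\beta(z)/(2\alpha(z))}\rho\,\sigma^{-\beta(z)/(2\alpha(z))})^{\alpha(z)}\bigr)
\end{equation}
after rewriting it as an inner product of analytic operator-valued functions (following Mueller-Lennert et al.~\cite{mueller-lennert:qc2013a} and Tomamichel~\cite{tomamichel:qc2015a}, Sect.~4.3).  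Bounding $|F|$ on the two vertical boundaries by $\Rpow{\alpha_1}{\rho}{\sigma}$ and $\Rpow{\alpha_0}{\rho}{\sigma}$ respectively (using unitary invariance on the imaginary-axis shifts) and invoking Hadamard's three-lines theorem yields the log-convexity inequality on the real segment.

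I expect the main obstacle to be the sandwiched case: the interpolation setup requires choosing the right analytic extension so that the boundary values correspond exactly to the endpoint R\'enyi powers, and care is needed to ensure that fractional powers of possibly singular $\sigma$ behave analytically (handled by the support reduction and the trick of replacing $\sigma$ with $\sigma+\epsilon\one$ and taking $\epsilon\to 0^{+}$ at the end via Lem.~\ref{lem:rp_continuity}).  The Petz case, in contrast, reduces cleanly to matrix H\"older once the factorization is arranged.
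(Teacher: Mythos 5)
The paper's own proof is a one-line citation to Tomamichel's survey (Cor.~4.2 and~4.3), so you're attempting to supply the argument from scratch. Your sandwiched-case sketch is pointed in the right direction, but your Petz-case argument has a concrete error.

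\textbf{Petz case.} The identity $\tr(\rho^{\alpha}\sigma^{1-\alpha})=\tr(XY)$ with your $X$ and $Y$ is false. Starting from $\tr(\sigma^{(1-\alpha)/2}\rho^{\theta\alpha_0}\rho^{(1-\theta)\alpha_1}\sigma^{(1-\alpha)/2})$, to reach $\tr(XY)$ you must slide $\sigma^{(1-\theta)(1-\alpha_1)/2}$ past $\rho^{\theta\alpha_0}$ (and $\sigma^{\theta(1-\alpha_0)/2}$ past $\rho^{(1-\theta)\alpha_1}$), but $\rho$ and $\sigma$ do not commute; cyclicity of the trace lets you move factors around the ends of the product, not interchange adjacent non-commuting factors. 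Even if the identity held, $\|X\|_{1/\theta}^{1/\theta}=\tr\bigl(\bigl(\sigma^{\theta(1-\alpha_0)/2}\rho^{\theta\alpha_0}\sigma^{\theta(1-\alpha_0)/2}\bigr)^{1/\theta}\bigr)$ is not $\tr(\rho^{\alpha_0}\sigma^{1-\alpha_0})$ unless the operators commute — you cannot push the $1/\theta$ inside. The auxiliary ``$\theta\alpha_0\leq 1$'' claim is also spurious: it is not generally true for $\theta\in[0,1]$, $\alpha_0>1$, and the lemma being proved carries no $\alpha\leq 2$ restriction. A correct and elementary proof is available here: expand in separate eigenbases $\rho=\sum_i \rho_i\,\hat e_i$, $\sigma=\sum_j\sigma_j\,\hat f_j$ to get
\begin{equation}
  \tr(\rho^{\alpha}\sigma^{1-\alpha}) \;=\; \sum_{i,j}|\langle e_i|f_j\rangle|^{2}\,\rho_i^{\alpha}\sigma_j^{1-\alpha},
\end{equation}
a non-negative linear combination of functions $\alpha\mapsto \rho_i^{\alpha}\sigma_j^{1-\alpha}$ that are log-affine in $\alpha$ (the terms with $\rho_i=0$ vanish, and $\rho\ll\sigma$ rules out contributions with $\rho_i>0$, $\sigma_j=0$). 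A non-negative sum of log-convex functions is log-convex, which gives the Petz case with no Hölder machinery.

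\textbf{Sandwiched case.} Three-lines interpolation is indeed how this is usually established (Beigi; M\"uller-Lennert et al.; Frank-Lieb), but the function you write down, $F(z)=\tr\bigl((\sigma^{-\beta(z)/(2\alpha(z))}\rho\,\sigma^{-\beta(z)/(2\alpha(z))})^{\alpha(z)}\bigr)$, is not analytic in $z$: you are raising a $z$-dependent (and already non-analytic, since it involves $|\cdot|$-type powers once you allow complex exponents) positive operator to a $z$-dependent complex power, and the $\alpha(z)$-th power of a holomorphic operator family is generically not holomorphic. You acknowledge this (``after rewriting it as an inner product of analytic operator-valued functions''), but that rewriting is exactly where the work is — one needs a bilinear pairing $\tr(G(z)^{\dagger}H(z))$ with $G,H$ genuinely analytic, or equivalently a Stein-type interpolation for weighted Schatten norms, and the boundary estimates then need the unitarity of $\sigma^{it}$ on its support plus a Hölder step at each vertical line. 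As written this is a sketch with the crucial step left unresolved; it is plausible but not a proof. The epsilon-regularization via Lem.~\ref{lem:rp_continuity} at the end is the right way to handle rank-deficient $\sigma$.

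In short: replace the Petz argument by the eigenbasis expansion, and flesh out the analytic function used in the sandwiched interpolation (or, alternatively, invoke the variational formula of Frank-Lieb as Tomamichel does, which sidesteps complex analysis entirely).
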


\begin{proof}
  These are the first halves of Cor.~4.2, Pg.~56 (sandwiched R\'enyi
  power) and of Cor.~4.3, Pg.~62 (Petz R\'enyi power) of
  Ref.~\cite{tomamichel:qc2015a}.
\end{proof}

\begin{lemma}\label{lem:monotone_rp}
  \emph{Monotonicity of R\'enyi powers:} For $0\leq\rho\ll\sigma$ the
  function $\alpha\mapsto \hatRpow{\alpha}{\rho}{\sigma}^{1/\beta}$ is
  non-decreasing, and so is
  $\alpha\mapsto\hatPpow{\alpha}{\rho}{\sigma}^{1/\beta}$.
\end{lemma}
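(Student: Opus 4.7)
The plan is to reduce monotonicity of $\alpha\mapsto \hatRpow{\alpha}{\rho}{\sigma}^{1/\beta}$ (and its Petz analogue) to the standard convex-analysis fact that the chord slopes of a convex function from a fixed base point are non-decreasing. The source of convexity is Lemma~\ref{lem:convex_rp}; the base point is $\alpha=1$.

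First, I would take logarithms and set $f(\alpha)=\log\Rpow{\alpha}{\rho}{\sigma}$, so that
\begin{equation*}
  \log\hatRpow{\alpha}{\rho}{\sigma}^{1/\beta}
  \;=\;\frac{f(\alpha)-\log\tr(\rho)}{\alpha-1}.
\end{equation*}
The key observation is that $\log\tr(\rho)$ is precisely the boundary value $f(1)$: letting $\Pi=\knuth{\sigma\ne 0}$, the hypothesis $\rho\ll\sigma$ gives $\Pi\rho\Pi=\rho$, and the defining expression evaluated (or taken as a limit) at $\alpha\to 1^+$ yields $\Rpow{1}{\rho}{\sigma}=\tr(\Pi\rho\Pi)=\tr(\rho)$.

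Second, I would combine log-convexity of $f$ on $(1,\infty)$ from Lemma~\ref{lem:convex_rp} with continuity at $\alpha=1$ to extend convexity to $[1,\infty)$. Then for $1<\alpha_1<\alpha_2$, writing $\alpha_1=(1-t)\cdot 1+t\alpha_2$ with $t=(\alpha_1-1)/(\alpha_2-1)\in(0,1)$, convexity gives $f(\alpha_1)-f(1)\le t\bigl(f(\alpha_2)-f(1)\bigr)$, which rearranges to
\begin{equation*}
  \frac{f(\alpha_1)-f(1)}{\alpha_1-1}\;\le\;\frac{f(\alpha_2)-f(1)}{\alpha_2-1}.
\end{equation*}
Exponentiating yields the claimed monotonicity. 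The Petz case is identical, using the Petz half of Lemma~\ref{lem:convex_rp} together with $\Ppow{1}{\rho}{\sigma}=\tr(\rho\Pi)=\tr(\rho)$.

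The main technical nuisance (not really an obstacle) is the handling of the base point: Lemma~\ref{lem:convex_rp} is stated for $\alpha>1$, so one must either extend $f$ continuously to $\alpha=1$ (straightforward from the defining expressions and $\rho\ll\sigma$, cf.\ Lemma~\ref{lem:rp_continuity}), or apply the chord-slope inequality on $(1,\infty)$ with a base point $\alpha_0>1$ and take $\alpha_0\to 1^+$. Either route is routine and should be mentioned explicitly to justify including the endpoint in the convexity statement.
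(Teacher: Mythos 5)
Your argument is correct. The paper does not give a proof at all; it simply cites the ``second halves'' of Cor.~4.2 and Cor.~4.3 of Tomamichel's book, while Lemma~\ref{lem:convex_rp} cites the ``first halves.'' What you have done is reconstruct the standard proof of those second halves from the first halves: log-convexity plus the identity $\Rpow{1}{\rho}{\sigma}=\Ppow{1}{\rho}{\sigma}=\tr(\rho)$ (which holds because $\sigma^{0}$ is, under the paper's relative-power convention, the support projector $\Pi$ of $\sigma$, and $\rho\ll\sigma$ forces $\Pi\rho\Pi=\rho$), followed by the chord-slope monotonicity of a convex function from the fixed base point $\alpha=1$. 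The algebra $\log\hatRpow{\alpha}{\rho}{\sigma}^{1/\beta}=\bigl(f(\alpha)-f(1)\bigr)/(\alpha-1)$ with $f(\alpha)=\log\Rpow{\alpha}{\rho}{\sigma}$ is right, and the rearrangement $f(\alpha_1)-f(1)\le t\bigl(f(\alpha_2)-f(1)\bigr)$ with $t=(\alpha_1-1)/(\alpha_2-1)$ gives exactly the slope inequality you need. Your caveat about the endpoint is appropriate: Lemma~\ref{lem:convex_rp}, read with the paper's standing convention $\alpha>1$, is stated on the open half-line, so one either invokes continuity (Lemma~\ref{lem:rp_continuity}) to include $\alpha=1$, or takes $\alpha_0\to 1^+$; both are routine. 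This is not a genuinely different route from the cited source---it \emph{is} the proof of the cited source---but it is a genuinely different presentation from the paper's, which gives no argument. In exchange for a bit more text you get a self-contained derivation that makes the logical dependency of Lemma~\ref{lem:monotone_rp} on Lemma~\ref{lem:convex_rp} explicit, which the citation hides.
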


\begin{proof}
  These are the second halves of Cor.~4.2, Pg.~56 (sandwiched R\'enyi
  power) and Cor.~4.3, Pg.~62 (Petz R\'enyi power) of
  Ref.~\cite{tomamichel:qc2015a}.
\end{proof}

\begin{lemma}\label{lem:jconvex_rp}
  \emph{Joint convexity of R\'enyi powers:}
  The function $\rho,\sigma\mapsto \Rpow{\alpha}{\rho}{\sigma}$
  is jointly convex in $\rho$ and $\sigma$ on its domain, 
  and similarly for the Petz R\'enyi powers when $\alpha\leq 2$.
\end{lemma}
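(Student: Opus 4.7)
The plan is to follow the citation pattern used for the preceding R\'enyi power lemmas (Lems.~\ref{lem:petz_sandwiched}--\ref{lem:jconvex_rp}) and reduce joint convexity to classical operator-inequality results, most of which are collected in Ref.~\cite{tomamichel:qc2015a}, Sects.~4.3--4.4.

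For the Petz R\'enyi power, I would appeal to Ando's convexity theorem (the companion of Lieb's concavity theorem): the map $(\rho,\sigma)\mapsto \tr(\rho^{p}\sigma^{1-p})$ is jointly convex on pairs of positive operators for $p\in[1,2]$. Setting $p=\alpha$ and noting that $1-\alpha=-\beta$ yields joint convexity of $\Ppow{\alpha}{\rho}{\sigma}=\tr(\rho^{\alpha}\sigma^{-\beta})$ for $\alpha\in(1,2]$, which is precisely the stated range. The only technical point is a limiting argument to handle the case where $\sigma$ (or one of the $\sigma_i$ in a convex combination) is rank deficient: I would add a small $\epsilon\one$, apply convexity on the nondegenerate cone, and send $\epsilon\to 0$ using the continuity property in Lem.~\ref{lem:rp_continuity}. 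Monotonicity of $\rho\mapsto\rho^{\alpha}$ etc., combined with the support hypothesis $\rho\ll\sigma$, ensures the limit behaves as required.

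For the sandwiched R\'enyi power, joint convexity for $\alpha>1$ is not a polynomial trace statement and cannot be reduced to Lieb/Ando directly. The cleanest route is to use the variational representation of $\Rpow{\alpha}{\rho}{\sigma}$ as a supremum of affine functionals in $\rho$, with the dual variable providing the convexity in $\sigma$; this is the mechanism used by Frank and Lieb to establish data-processing, and it also yields joint convexity for $\alpha>1$ as a by-product. An expression of this kind is recorded in Ref.~\cite{tomamichel:qc2015a}, Sect.~4.3, alongside Cor.~4.2.  Once the variational formula is in hand, joint convexity follows because a pointwise supremum of jointly affine (hence jointly convex) maps is jointly convex. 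The same $\epsilon$-regularization plus Lem.~\ref{lem:rp_continuity} handles support issues.

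The main obstacle is the sandwiched case: verifying the correct variational representation and the right sign of convexity at $\alpha>1$ is where all the analytic content sits, and I would not try to reproduce that derivation here but instead cite Ref.~\cite{tomamichel:qc2015a} (and, through it, Frank--Lieb), consistent with how Lems.~\ref{lem:conditionmonotone_rp}--\ref{lem:monotone_rp} are handled above.
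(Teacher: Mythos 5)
Your proposal is correct and rests on the same underlying sources the paper invokes: the paper's proof simply cites Prop.~3 of Frank--Lieb (Ref.~\cite{frank:qc2013a}) for the sandwiched case and Prop.~4.8, Pg.~61 of Ref.~\cite{tomamichel:qc2015a} (which is Ando's convexity theorem specialized to $\alpha\in(1,2]$) for the Petz case. The extra $\epsilon$-regularization and the variational-representation commentary you add are consistent with how those references establish the result, so this is essentially the same citation-based argument.
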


\begin{proof}
  For the sandwiched R\'enyi powers, see Prop.~3 of Ref.~\cite{frank:qc2013a}.
  For the Petz R\'enyi powers, this is Prop.~4.8, Pg.~61 in
  Ref.~\cite{tomamichel:qc2015a}.
\end{proof}

\begin{lemma}\label{lem:dataprocessing_rp}
  \emph{(VIII) Data-processing inequality for R\'enyi powers:} Let
  $\cE$ be a quantum operation and $0\leq\rho\ll\sigma$. Then
  $\Rpow{\alpha}{\cE(\rho)}{\cE(\sigma)}\leq
  \Rpow{\alpha}{\rho}{\sigma}$ and similarly for the Petz R\'enyi
  powers when $\alpha\leq 2$.
\end{lemma}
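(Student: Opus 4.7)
The plan is to reduce the general inequality to two special cases via Stinespring dilation. Writing the quantum operation as $\cE(\xi)=\tr_{\Pfnt{A}}(V\xi V^{\dagger})$ for an isometry $V:\cH(\Pfnt{U})\to\cH(\Pfnt{U}')\otimes\cH(\Pfnt{A})$, the statement decomposes into data-processing for (a) isometric conjugation $\xi\mapsto V\xi V^{\dagger}$ and (b) the partial trace $\tr_{\Pfnt{A}}$, which I would handle separately.

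For (a) I would show that isometric conjugation preserves the R\'enyi powers. The key identities are $V^{\dagger}V=\one$, $(V\sigma V^{\dagger})^{-\beta/(2\alpha)}=V\sigma^{-\beta/(2\alpha)}V^{\dagger}$ (relative inverse on $\mathrm{Rng}(V)$), and $\tr((VBV^{\dagger})^{\alpha})=\tr(B^{\alpha})$ for $B\geq 0$. A direct computation using these gives
\begin{equation*}
(V\sigma V^{\dagger})^{-\beta/(2\alpha)}(V\rho V^{\dagger})(V\sigma V^{\dagger})^{-\beta/(2\alpha)} = V\bigl(\sigma^{-\beta/(2\alpha)}\rho\sigma^{-\beta/(2\alpha)}\bigr)V^{\dagger},
\end{equation*}
whose $\alpha$-th power has trace $\Rpow{\alpha}{\rho}{\sigma}$. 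The Petz case is analogous using $(V\sigma V^{\dagger})^{-\beta}=V\sigma^{-\beta}V^{\dagger}$ and $(V\rho V^{\dagger})^{\alpha}=V\rho^{\alpha}V^{\dagger}$.

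For (b) I would combine the joint convexity established in Lem.~\ref{lem:jconvex_rp} with a unitary twirl. Choose a unitary $1$-design $\{U_{k}\}_{k=1}^{N}$ on $\cH(\Pfnt{A})$ so that $\frac{1}{N}\sum_{k}(\one\otimes U_{k})\xi(\one\otimes U_{k}^{\dagger})=\tr_{\Pfnt{A}}(\xi)\otimes\one_{\Pfnt{A}}/\vdim(\Pfnt{A})$. Writing $\tilde\rho=V\rho V^{\dagger}$, $\tilde\sigma=V\sigma V^{\dagger}$ and using joint convexity together with unitary invariance (the unitary case of (a)) gives
\begin{equation*}
\Rpow{\alpha}{\cE(\rho)\otimes\one_{\Pfnt{A}}/\vdim(\Pfnt{A})}{\cE(\sigma)\otimes\one_{\Pfnt{A}}/\vdim(\Pfnt{A})} \leq \frac{1}{N}\sum_{k}\Rpow{\alpha}{\tilde\rho}{\tilde\sigma} = \Rpow{\alpha}{\rho}{\sigma}.
\end{equation*}
A short calculation from the definition yields the tensor identity $\Rpow{\alpha}{\eta\otimes\tau}{\chi\otimes\tau}=\Rpow{\alpha}{\eta}{\chi}\,\tr(\tau^{\alpha-\beta})=\Rpow{\alpha}{\eta}{\chi}$ for $\tau$ a state (and analogously for the Petz power, using $\tau^{\alpha}\tau^{-\beta}=\tau$), which removes the tensor factor on the left and completes the proof. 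The Petz case works identically provided $\alpha\leq 2$, exactly where Lem.~\ref{lem:jconvex_rp} supplies joint convexity.

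The main obstacle will be the bookkeeping around supports: when $\sigma$ lacks full support, the relative inverse lives on $\Supp(\sigma)$, and after conjugation by $V$ it lives on $V\Supp(\sigma)V^{\dagger}\subseteq\mathrm{Rng}(V)$, so the identities in step (a) need to be interpreted relatively. I would handle this either by a small regularization $\sigma\mapsto\sigma+\epsilon\one$ followed by a limit using continuity (Lem.~\ref{lem:rp_continuity}) and the dominance property (Lem.~\ref{lem:conditionmonotone_rp}), or by tracking support projections through each identity; the remaining work is purely algebraic manipulation on top of joint convexity and Stinespring.
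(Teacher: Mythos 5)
Your proposal is correct, but it takes a genuinely different route from the paper: the paper's ``proof'' of Lem.~\ref{lem:dataprocessing_rp} is just a citation to Refs.~\cite{frank:qc2013a,beigi:qc2013a} for the sandwiched case and Ref.~\cite{tomamichel:qc2015a}, Sect.~4.4.1 for the Petz case, with no internal argument given. You instead derive the result from Lem.~\ref{lem:jconvex_rp} (which the paper has already established by citation) via the classical Lindblad--Uhlmann reduction: Stinespring dilation $\cE=\tr_{\Pfnt{A}}(V\cdot V^{\dagger})$, exact invariance under isometric conjugation (your step~(a)), and then the twirl-plus-joint-convexity argument for the partial trace (your step~(b)). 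The algebra checks out: the relative-inverse identity $(V\sigma V^{\dagger})^{-\gamma}=V\sigma^{-\gamma}V^{\dagger}$ holds because $V$ maps $\Supp(\sigma)$ isometrically onto $\Supp(V\sigma V^{\dagger})$; the $1$-design twirl correctly realizes $\tr_{\Pfnt{A}}(\cdot)\otimes\one_{\Pfnt{A}}/\vdim(\Pfnt{A})$; joint convexity applies because $\rho\ll\sigma$ propagates through conjugation; and your tensor-product identity is right (indeed $\alpha-\beta=1$, so the extra factor is $\tr(\tau)=1$). The Petz case is confined to $\alpha\leq 2$ exactly because that is where Lem.~\ref{lem:jconvex_rp} supplies joint convexity, matching the lemma's hypothesis. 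What your approach buys is making the logical dependence explicit: data processing follows from joint convexity \emph{within the paper's own toolkit}, rather than being imported as an independent black-box fact, at the cost of some support-projection bookkeeping which you correctly flag and which can indeed be dispatched by the $\sigma\mapsto\sigma+\epsilon\one$ regularization plus Lems.~\ref{lem:rp_continuity} and~\ref{lem:conditionmonotone_rp}. One small expository point: it is worth stating explicitly that the dilation ancilla is finite-dimensional (automatic in this paper's finite-dimensional setting), since the $1$-design twirl implicitly relies on that.
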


\begin{proof}
  For the sandwiched R\'enyi powers, see
  Ref.~\cite{frank:qc2013a,beigi:qc2013a}.  For the Petz R\'enyi
  powers, see Sect.~4.4.1 of Ref.~\cite{tomamichel:qc2015a}.
\end{proof}

\subsection{Quantum Relative Entropy}

Most of this work concerns estimation of R\'enyi powers so R\'enyi
entropies and divergences play a secondary role.  However, according
to the quantum asymptotic equipartition
property~\cite{tomamichel:qc2009a}, the asymptotic rate for randomness
generation is determined by quantum relative entropies. The quantum
relative entropy arises naturally as a limit of R\'enyi divergences.

Throughout this work, logarithms are base $e$ and entropies are
expressed in nits (the natural units of information) unless explicitly
specified otherwise. This simplifies calculus; conversion is only
needed when composing with extractors to specify the relationships
between certified conditional min-entropy and lengths of bit
strings. For results mentioning entropies, the conversion between nits
and bits usually just requires replacing log base $e$ with log base
$2$. Exceptions are the theorems of Sect.~\ref{subsec:handeat} stating
EAT and \QEFP bounds, which are not intended to be used in
applications.

\begin{definition}
  Let $0\leq \rho\ll\sigma$ and $\alpha>1$.  The \emph{sandwiched R\'enyi
  divergence of order $\alpha$ for $\rho$ given $\sigma$} is
  \begin{equation}
    \tildeDrel{\alpha}{\rho}{\sigma}
    = \frac{1}{\beta}\log(\hatRpow{\alpha}{\rho}{\sigma}).
  \end{equation}
  (This is Def.~4.3, Pg.~53 in Ref.~\cite{tomamichel:qc2015a}.)
\end{definition}

\begin{lemma}\label{lem:drellimit1} 
  Let $0< \rho\ll\sigma$. The limit of $\tildeDrel{\alpha}{\rho}{\sigma}$ as $\alpha\searrow 1$ exists
  and satisfies
  \begin{equation}
    \tildeDrel{1}{\rho}{\sigma} \defeq
    \lim_{\alpha\searrow 1}\tildeDrel{\alpha}{\rho}{\sigma}=
    \tr(\rho(\log(\rho)-\log(\sigma)))/\tr(\rho),
    \label{eq:relentlim1}
  \end{equation}
  which is the quantum relative entropy.  
\end{lemma}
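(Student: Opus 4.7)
The plan is to recognize this as an $\tfrac{0}{0}$ L'H\^opital situation. Write $\beta=\alpha-1$ and note that the normalized R\'enyi power satisfies $\hatRpow{1}{\rho}{\sigma}=1$ (since $\beta=0$ makes the inner operator just $\rho$ and the trace divides out), so $\log(\hatRpow{1+\beta}{\rho}{\sigma})$ vanishes at $\beta=0$. Since $0<\rho\ll\sigma$, the operator $X(\beta)\defeq\sigma^{-\beta/(2(1+\beta))}\rho\sigma^{-\beta/(2(1+\beta))}$ is a smooth (in fact real-analytic) function of $\beta$ in a neighborhood of $0$ when restricted to the support of $\sigma$, and $X(0)=\rho$ has full support there, so $(\beta,X)\mapsto \tr(X^{1+\beta})$ is smooth at $(0,\rho)$. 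Hence the desired limit equals
\begin{equation}
\tildeDrel{1}{\rho}{\sigma}
= \left.\frac{d}{d\beta}\log\!\big(\hatRpow{1+\beta}{\rho}{\sigma}\big)\right|_{\beta=0}
= \frac{1}{\tr(\rho)}\left.\frac{d}{d\beta}\tr\!\big(X(\beta)^{1+\beta}\big)\right|_{\beta=0}.
\end{equation}

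Next, I would rewrite $\tr(X(\beta)^{1+\beta})=\tr(e^{(1+\beta)\log X(\beta)})$ on $\Supp(\rho)$ and differentiate using the standard identity $\tfrac{d}{d\beta}\tr(e^{M(\beta)})=\tr\big(e^{M(\beta)}M'(\beta)\big)$, which follows from Duhamel's formula plus cyclicity of trace. Evaluating at $\beta=0$ gives
\begin{equation}
\left.\frac{d}{d\beta}\tr\!\big(X(\beta)^{1+\beta}\big)\right|_{\beta=0}
= \tr\!\left(\rho\Big(\log\rho + \left.\tfrac{d}{d\beta}\log X(\beta)\right|_{\beta=0}\Big)\right).
\end{equation}

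The final ingredient is the identity $\tr\big(\rho\,L_{\log,\rho}(A)\big)=\tr(A)$ for the Fr\'echet derivative $L_{\log,\rho}$ of $\log$ at $\rho$, which is immediate in an eigenbasis of $\rho$ since the Daleckii--Krein representation gives diagonal entries $1/\lambda_i$ and off-diagonal divided differences that drop out under $\tr(\rho\,\cdot)$. Applying this with $A=X'(0)$ computed from the expansion $\beta/(2(1+\beta))=\beta/2+O(\beta^2)$, yielding $X'(0)=-\tfrac12\big((\log\sigma)\rho+\rho\log\sigma\big)$, we get $\tr(X'(0))=-\tr(\rho\log\sigma)$, and assembling the pieces produces exactly $\tr(\rho(\log\rho-\log\sigma))/\tr(\rho)$, as claimed.

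The main obstacle is justifying the termwise differentiation and managing non-commutativity between $\rho$ and $\log\sigma$; both are handled by working throughout on $\Supp(\rho)\subseteq\Supp(\sigma)$, where the spectral calculus $\beta\mapsto\sigma^{-\beta/(2(1+\beta))}$ is analytic, and by invoking the trace-cyclicity identities above rather than attempting to evaluate $L_{\log,\rho}(A)$ explicitly. Alternatively, the result is recorded for example in Ref.~\cite{tomamichel:qc2015a} (see the discussion of sandwiched R\'enyi divergences at $\alpha\to 1$), and one could simply cite it; the sketch above indicates how to derive it from scratch if preferred.
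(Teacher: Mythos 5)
The paper's own proof is a one-line citation to Prop.~4.5 of Ref.~\cite{tomamichel:qc2015a}, so your from-scratch L'H\^opital/Fr\'echet derivation is a genuinely different (and more self-contained) route, and your ingredients — recognizing the $0/0$ form, pulling out $\hatRpow{1}{\rho}{\sigma}=1$, the identity $\tr(\rho\,L_{\log,\rho}(A))=\tr(A)$ from Daleckii--Krein, and the expansion $\sigma^{-\beta/(2\alpha)}=\one-\tfrac{\beta}{2}\log\sigma+O(\beta^2)$ — are correct and do assemble to $\tr(\rho(\log\rho-\log\sigma))/\tr(\rho)$.

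There is, however, a genuine gap in the justification of smoothness. You work with $X(\beta)=\sigma^{-\beta/(2\alpha)}\rho\,\sigma^{-\beta/(2\alpha)}$ and claim ``$X(0)=\rho$ has full support there'' (on $\Supp(\sigma)$). That holds only if $\Supp(\rho)=\Supp(\sigma)$; the lemma's hypothesis $\rho\ll\sigma$ allows $\Supp(\rho)\subsetneq\Supp(\sigma)$. In that case $\Supp(X(\beta))=\sigma^{-\beta/(2\alpha)}\Supp(\rho)$ is a $\beta$-dependent subspace, so the relative $\log X(\beta)$ lives on a moving domain, the identification $X(\beta)^{1+\beta}=e^{(1+\beta)\log X(\beta)}$ is only meaningful fiberwise, and $X'(0)=-\tfrac12((\log\sigma)\rho+\rho\log\sigma)$ is not supported in $\Supp(\rho)$, which puts it outside the natural domain for the Fr\'echet derivative of $\log$ at a rank-deficient $\rho$. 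The fix is the same rearrangement the paper uses elsewhere (cf.\ the argument inside Lem.~\ref{lem:conditionmonotone_rp}): since $\Spec(AA^{\dagger})=\Spec(A^{\dagger}A)$, one has $\tr(X(\beta)^{1+\beta})=\tr(Y(\beta)^{1+\beta})$ with $Y(\beta)=\rho^{1/2}\sigma^{-\beta/\alpha}\rho^{1/2}$. Now $Y(\beta)$ is supported in the fixed subspace $\Supp(\rho)$ for every $\beta$, $Y(0)=\rho$ is invertible there, $\beta\mapsto Y(\beta)$ is real-analytic on that subspace, and $Y'(0)=-\rho^{1/2}(\log\sigma)\rho^{1/2}$ is supported in $\Supp(\rho)$, so your Fr\'echet-derivative identity applies cleanly and still gives $\tr(Y'(0))=-\tr(\rho\log\sigma)$. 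With that substitution the rest of your computation is correct.
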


\begin{proof}
  This is Prop.~4.5, Pg.~57 of Ref.~\cite{tomamichel:qc2015a}.
\end{proof}

\subsection{Min-Entropy}

Quantum min-entropy characterizes the randomness that is
available in a given system. We define the relevant quantities for the
family of classical-quantum states treated in this work, where $C$ is
the output CV, $Z$ is the input CV and $\Pfnt{E}$ is the system
containing the quantum side information.  We can instantiate these
variables in each context as we wish.  For example, we can consider
the situation where we let $Z$ be a trivial CV, which is equivalent to
just leaving it out.

\begin{definition}
  Let $\rho(CZ)\in\cS_{\leq 1}(CZ\Pfnt{E})$.  Then \emph{$\rho(CZ)$ has max-prob
    $p$ given $Z\Pfnt{E}$} if there exists $\sigma(Z)\in
  \cS_{1}(Z\Pfnt{E})$ such that $\rho(CZ)\leq p\sigma(Z)$.  The
  \emph{exact max-prob of $\rho(CZ)$ given $Z\Pfnt{E}$} is
  \begin{equation}
    P_{\max}(\rho(CZ)|Z\Pfnt{E}) = \inf\left\{p:\textrm{there exists  
      $\sigma(Z)\in\cS_{1}(Z\Pfnt{E})$ such that
      $\rho(CZ)\leq p\sigma(Z)$}\right\}.
  \end{equation}
  The quantity $H_{\infty}(\rho(CZ)|Z\Pfnt{E})=-\log(P_{\max}(\rho(CZ)|Z\Pfnt{E}))$
  is called the \emph{conditional min-entropy of $\rho(CZ)$ given $Z\Pfnt{E}$}.
\end{definition}
When writing conditional quantities like $P_{\max}$, we put the state
with its CV arguments first. The conditioned systems are always
classical and consist of every CV that does not occur in the
conditioner.

We need a lemma to switch between conditioning on a CV and conditioning
on its quantization.

\begin{lemma}\label{lem:pmaxcz}
  Let $\rho(CZ)\in\cS_{1}(CZ\Pfnt{E})$ and
  define $\tau(C)=\sum_{z}\hat z\otimes \rho(Cz)\in\cS_{1}(C\Pfnt{ZE})$.
  Then $P_{\max}(\rho(CZ)|Z\Pfnt{E})=P_{\max}(\tau(C)|\Pfnt{ZE})$.
\end{lemma}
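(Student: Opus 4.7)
The plan is to show both inequalities $\leq$ and $\geq$ between the two infima, by passing between state-valued distributions $\sigma(Z)\in\cS_{1}(Z\Pfnt{E})$ on one side and density operators $\sigma'\in\cS_{1}(\Pfnt{ZE})$ on the other, using the fact that $\tau(C)$ is block-diagonal with respect to the classical basis of $\Pfnt{Z}$.

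First, I would unfold both definitions. The left-hand max-prob is the infimum of $p$ for which there exists $\sigma(Z)\in\cS_{1}(Z\Pfnt{E})$ with $\rho(cz)\leq p\sigma(z)$ for every value $cz$ (here the inequality is between operators on $\cH(\Pfnt{E})$). The right-hand max-prob is the infimum of $p$ for which there exists $\sigma'\in\cS_{1}(\Pfnt{ZE})$ with $\tau(c)\leq p\sigma'$ for every $c$, where $\tau(c)=\sum_{z}\hat z\otimes\rho(cz)$ is viewed as an operator on $\cH(\Pfnt{ZE})=\cH(\Pfnt{Z})\otimes\cH(\Pfnt{E})$.

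For the direction $P_{\max}(\tau(C)|\Pfnt{ZE})\leq P_{\max}(\rho(CZ)|Z\Pfnt{E})$, I would take any admissible $\sigma(Z)$ on the left and define $\sigma'=\sum_{z}\hat z\otimes\sigma(z)$. Then $\sigma'\in\cS_{1}(\Pfnt{ZE})$ because $\sum_{z}\tr(\sigma(z))=1$, and
\begin{equation}
  p\sigma'-\tau(c)=\sum_{z}\hat z\otimes\bigl(p\sigma(z)-\rho(cz)\bigr)\geq 0,
\end{equation}
since each summand is positive semidefinite by assumption and the blocks are supported on orthogonal subspaces.

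For the reverse direction, I would take any admissible $\sigma'$ on the right and extract the $(z,z)$ diagonal block $\sigma(z)\defeq(\bra{z}\otimes\one_{\Pfnt{E}})\,\sigma'\,(\ket{z}\otimes\one_{\Pfnt{E}})$. Conjugating the operator inequality $\tau(c)\leq p\sigma'$ by the projector $\hat z\otimes\one_{\Pfnt{E}}$ preserves positivity, yielding $\hat z\otimes\rho(cz)\leq p\,\hat z\otimes\sigma(z)$, hence $\rho(cz)\leq p\sigma(z)$. Since $\sum_{z}\tr(\sigma(z))=\tr(\sigma')=1$, we have $\sigma(Z)\in\cS_{1}(Z\Pfnt{E})$, so the same $p$ is admissible on the left. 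The main (small) obstacle is just being careful that conjugation by the non-normalized partial isometry $\ket{z}\otimes\one_{\Pfnt{E}}$ correctly identifies the block with a valid operator on $\cH(\Pfnt{E})$ and that the normalization condition $\sum_{z}\tr(\sigma(z))=1$ carries over, both of which are routine. Taking infima over $p$ on both sides then gives equality.
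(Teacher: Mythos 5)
Your proof is correct and takes essentially the same route as the paper: for one direction you lift $\sigma(Z)$ to $\sigma'=\sum_z\hat z\otimes\sigma(z)$, and for the other you extract the diagonal blocks of $\sigma'$ via conjugation by $\hat z\otimes\one_{\Pfnt{E}}$, which is exactly what the paper's $\sigma(z)=\tr_{\Pfnt{Z}}(\hat z\sigma'\hat z)$ amounts to. The only cosmetic difference is that the paper phrases the block extraction with a partial trace while you use the partial isometry $\ket{z}\otimes\one_{\Pfnt{E}}$, and the paper leaves the normalization check $\sum_z\tr(\sigma(z))=1$ implicit.
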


\begin{proof}
  For $\sigma(Z)\in\cS_{1}(Z\Pfnt{E})$ such that $\rho(CZ)\leq
  p\sigma(Z)$, we have that $\tau(C)\leq p \sum_{z}\hat z\otimes\sigma(z)$. This
  implies that $P_{\max}(\rho(CZ)|Z\Pfnt{E})\geq
  P_{\max}(\tau(C)|\Pfnt{ZE})$.  For the reverse inequality, consider
  $\sigma'\in\cS_{1}(\Pfnt{ZE})$ such that $\tau(C)\leq p\sigma'$.
  Since the map $\xi\mapsto \hat z\xi\hat z$ is positive, it preserves
  operator ordering and $\hat z\otimes \rho(Cz)=\hat z\tau(C)\hat z\leq
  p\hat z\sigma'\hat z$.  With $\sigma(Z)$ defined
  by $\sigma(z)=\tr_{\Pfnt{Z}}\hat
  z\sigma'\hat z$, it follows that $\rho(CZ)\leq p\sigma(Z)$.
\end{proof}

\begin{definition}
  Let $\rho(CZ)\in\cS_{1}(CZ\Pfnt{E})$.  System $Z\Pfnt{E}$'s
  \emph{guessing probability for $\rho(CZ)$} is
  \begin{equation}
    G_{\max}(\rho(CZ)|Z\Pfnt{E}) =
    \sup\left\{\sum_{cz}\tr(P_{c|z}\rho(cz)):\textrm{for all $z$
      $(P_{c|z})_{c}$ is a POVM}\right\}.
  \end{equation}
\end{definition}

\begin{lemma}\label{lem:maxprob_guess}
  Let $\rho(CZ)\in\cS_{1}(CZ\Pfnt{E})$. Then
  \begin{equation}
    G_{\max}(\rho(CZ)|Z\Pfnt{E})=P_{\max}(\rho(CZ)|Z\Pfnt{E}).
  \end{equation}
\end{lemma}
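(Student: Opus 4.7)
The plan is to prove both inequalities $G_{\max}(\rho(CZ)|Z\Pfnt{E}) \leq P_{\max}(\rho(CZ)|Z\Pfnt{E})$ and the reverse, recognizing the identity as the operational interpretation of conditional min-entropy via semidefinite programming (SDP) duality, applied separately for each value of $Z$.

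The easy direction is $G_{\max} \leq P_{\max}$. Suppose $\sigma(Z)\in\cS_{1}(Z\Pfnt{E})$ witnesses $\rho(CZ) \leq p\,\sigma(Z)$, and let $(P_{c|z})_{c}$ be any POVM for each $z$. Since $P_{c|z} \geq 0$ and $p\sigma(z)-\rho(cz) \geq 0$,
\[
  \sum_{cz}\tr(P_{c|z}\rho(cz)) \leq p\sum_{cz}\tr(P_{c|z}\sigma(z)) = p\sum_{z}\tr(\sigma(z)) = p,
\]
using $\sum_{c}P_{c|z}=\one$. Taking the supremum over POVMs and the infimum over feasible $p$ gives $G_{\max}\leq P_{\max}$.

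For the reverse, exploit that both expressions decouple over $z$. On the $G_{\max}$ side, POVMs are chosen independently at each $z$, so $G_{\max} = \sum_{z} G_{z}$ with $G_{z}\defeq\sup\{\sum_{c}\tr(P_{c}\rho(cz)) : (P_{c})_{c}\text{ a POVM on }\Pfnt{E}\}$. On the $P_{\max}$ side, substitute $\tau_{z}\defeq p\sigma(z)$ to rewrite $P_{\max}$ as $\inf\sum_{z}\tr(\tau_{z})$ over $\tau_{z}\in\cS(\Pfnt{E})$ with $\tau_{z}\geq\rho(cz)$ for all $c$; the normalization $\sum_{z}\tr(\sigma(z))=1$ is absorbed into the objective. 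The key per-$z$ identity
\[
  G_{z} = \inf\bigl\{\tr(\tau) : \tau\in\cS(\Pfnt{E}),\; \tau\geq\rho(cz)\text{ for all }c\bigr\}
\]
is an instance of SDP strong duality. Weak duality is precisely the computation of the previous paragraph restricted to a single $z$. Strong duality holds by Slater's condition, which is verified by the strictly feasible primal point $P_{c}=\one/|\Rng(C)|$ and the strictly feasible dual point $\tau = \lambda\one$ for sufficiently large $\lambda$. Summing over $z$ yields $G_{\max}=P_{\max}$.

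The main obstacle is the appeal to SDP strong duality; if a self-contained argument is preferred, one can instead argue directly that an optimal $\tau_{z}$ exists by compactness (the feasible set intersected with $\{\tr(\tau_{z})\leq p_{0}\}$ for any feasible $p_{0}$ is compact) and construct matching POVM elements via the complementary slackness relations $\tau_{z}P_{c|z}=\rho(cz)P_{c|z}$, handling degeneracies on the orthogonal complement of $\Supp(\tau_{z})$ by filling in with an arbitrary POVM. Invoking standard SDP duality, however, is cleaner and avoids the support case analysis.
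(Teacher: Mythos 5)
Your proof is correct, but it takes a different route from the paper's. The paper quantizes: it defines $\tau(C)=\sum_{z}\hat z\otimes\rho(Cz)\in\cS_{1}(C\Pfnt{ZE})$, invokes Lem.~\ref{lem:pmaxcz} to transfer $P_{\max}$ across the quantization, cites König, Renner and Schaffner~\cite{koenig:qc2009a} (Thm.~1) for $P_{\max}(\tau(C)|\Pfnt{ZE})=G_{\max}(\tau(C)|\Pfnt{ZE})$, and then proves by a POVM correspondence that $G_{\max}(\tau(C)|\Pfnt{ZE})=G_{\max}(\rho(CZ)|Z\Pfnt{E})$. You instead bypass the quantization entirely: you decouple both quantities over $z$ (observing that $G_{\max}$ is a sum of independent per-$z$ suprema and that $P_{\max}$, after the substitution $\tau_z\defeq p\sigma(z)$, is a sum of independent per-$z$ infima) and apply SDP strong duality to each block, verifying Slater's condition with $P_c=\one/|\Rng(C)|$ and $\tau=\lambda\one$. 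Both proofs ultimately rest on the same SDP duality, since König et al.~Thm.~1 is itself that duality, but your version is self-contained and avoids the quantization lemma and the POVM-translation step, while the paper's version is shorter by outsourcing the duality to the literature. One small remark: the constraint $\tau\in\cS(\Pfnt{E})$ you impose in the dual is automatic from $\tau\geq\rho(cz)\geq 0$ and need not be included, though it is harmless. Your decoupling of $P_{\max}$ is also correct in the edge case, since $\rho(CZ)$ is normalized so the resulting $p=\sum_z\tr(\tau_z)$ is nonzero and the rescaling back to a normalized $\sigma(Z)$ is well-defined.
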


\begin{proof}
  Let $\tau(C)=\sum_{z}\hat z\otimes\rho(Cz)$.  According to
  Ref.~\cite{koenig:qc2009a}, Thm.~1,
  $P_{\max}(\tau(C)|\Pfnt{Z}\Pfnt{E})=G_{\max}(\tau(C)|\Pfnt{Z}\Pfnt{E})$.
  According to Lem.~\ref{lem:pmaxcz} it suffices to show that
  $G_{\max}(\tau(C)|\Pfnt{Z}\Pfnt{E})=G_{\max}(\rho(CZ)|Z\Pfnt{E})$.
  Let $(P_{c|z})_{c}$ be $z$-indexed POVMs. Then 
  $P_{c}=(\sum_{z}\hat z\otimes P_{c|z})_{c}$
  is a POVM, and 
  \begin{equation}
    \sum_{cz}\tr(P_{c|z}\rho(cz))=\sum_{c}\tr(\tr_{\Pfnt{Z}}(P_{c}\tau(c)))=
    \sum_{c}\tr(P_{c}\tau(c)),
  \end{equation}
  from which it follows that $G_{\max}(\tau(C)|\Pfnt{Z}\Pfnt{E})\geq
  G_{\max}(\rho(CZ)|Z\Pfnt{E})$.
  For the reverse inequality, let
  $(P_{c})_{c}$ be a POVM on $\Pfnt{ZE}$. We have
  \begin{align}
    \sum_{c}\tr(P_{c}\tau(c)) =
      \sum_{cz}\tr(P_{c}(\hat z\otimes\rho(cz)))
      = \sum_{cz}\tr(\tr_{\Pfnt{Z}}(
      P_{c}(\hat z\otimes\one_{\Pfnt{E}}))\rho(cz)).\label{eq:lem:maxprob_guess:1}
  \end{align}
  Let $P_{c|z}=\tr_{\Pfnt{Z}}( P_{c}(\hat z\otimes\one_{\Pfnt{E}}))
  =\tr_{\Pfnt{Z}}((\hat z\otimes\one_{\Pfnt{E}}) P_{c}(\hat
  z\otimes\one_{\Pfnt{E}}))$. Then $(P_{c|z})_{c}$ is a POVM for each
  $z$, and Eq.~\ref{eq:lem:maxprob_guess:1} and arbitrariness of
  $(P_{c})_{c}$ implies that $G_{\max}(\tau(C)|\Pfnt{Z}\Pfnt{E})\leq
  G_{\max}(\rho(CZ)|Z\Pfnt{E})$.
\end{proof}

\begin{definition}
  Let $\rho(CZ)\in\cS_{1}(CZ\Pfnt{E})$.  The \emph{conditional entropy of
  $\rho(CZ)$ given $Z\Pfnt{E}$} is
  \begin{equation}
    H_{1}(\rho(CZ)|Z\Pfnt{E})=
    -\sum_{cz}\tr\left(\rho(cz)\big(\log(\rho(cz))-\log(\rho(z))\big)\right)
    = -\sum_{cz}\tr(\rho(cz))\tildeDrel{1}{\rho(cz)}{\rho(z)}.
  \end{equation}
\end{definition}

\begin{lemma}\label{lem:minent_ent}
  $H_{1}(\rho(CZ)|Z\Pfnt{E})\geq H_{\infty}(\rho(CZ)|Z\Pfnt{E})$.
\end{lemma}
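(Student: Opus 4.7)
The strategy is to use a state $\sigma(Z)$ nearly achieving the max-prob as a reference, apply operator monotonicity of the logarithm to bound $\log\rho(cz)$ by a scalar plus $\log\sigma(z)$, then recognize the residual term as a non-negative quantum relative entropy.

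Fix $p>P_{\max}(\rho(CZ)|Z\Pfnt{E})$ and choose $\sigma(Z)\in\cS_1(Z\Pfnt{E})$ with $\rho(cz)\leq p\sigma(z)$ for all $c,z$. A standard regularization (e.g.\ replacing $\rho(cz)$ by $\rho(cz)+\epsilon\Pi_{\sigma(z)}$, where $\Pi_{\sigma(z)}$ is the support projector of $\sigma(z)$, and sending $\epsilon\searrow 0$ at the end) lets us apply operator monotonicity of $\log$ on $\Supp(\sigma(z))$ to obtain $\log\rho(cz)\leq(\log p)\one+\log\sigma(z)$. Multiplying by $\rho(cz)$, tracing, and summing over $c,z$ while using $\sum_c\rho(cz)=\rho(z)$ and $\sum_{cz}\tr(\rho(cz))=1$ yields
\begin{equation*}
\sum_{cz}\tr(\rho(cz)\log\rho(cz))\;\leq\;\log p+\sum_z\tr(\rho(z)\log\sigma(z)).
\end{equation*}
Unfolding the definition of $H_1$ then produces
\begin{equation*}
H_1(\rho(CZ)|Z\Pfnt{E})\;\geq\;-\log p+\sum_z\tr\bigl(\rho(z)(\log\rho(z)-\log\sigma(z))\bigr).
\end{equation*}

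The remaining sum is the quantum relative entropy $\tildeDrel{1}{\tilde\rho_{ZE}}{\tilde\sigma_{ZE}}$ of the normalized classical-quantum states $\tilde\rho_{ZE}=\sum_z\hat z\otimes\rho(z)$ and $\tilde\sigma_{ZE}=\sum_z\hat z\otimes\sigma(z)$. It is non-negative: by Lemma~\ref{lem:drellimit1} it equals the $\alpha\searrow 1$ limit of $\tildeDrel{\alpha}{\tilde\rho_{ZE}}{\tilde\sigma_{ZE}}$, and for each $\alpha>1$ the data-processing inequality (Lemma~\ref{lem:dataprocessing_rp}) applied to the total-trace map gives $\hatRpow{\alpha}{\tilde\rho_{ZE}}{\tilde\sigma_{ZE}}\geq\hatRpow{\alpha}{1}{1}=1$, whence $\tildeDrel{\alpha}\geq 0$. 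Therefore $H_1\geq-\log p$, and letting $p\searrow P_{\max}$ yields $H_1\geq-\log P_{\max}=H_\infty$.

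The main technical subtlety is the support handling when invoking operator monotonicity of $\log$, since $\rho(cz)$ and $\sigma(z)$ need not share supports; the regularization above cleanly manages this by moving the argument to $\Supp(\sigma(z))$ and invoking continuity of the trace expressions as $\epsilon\searrow 0$.
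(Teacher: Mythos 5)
Your argument is correct and takes a genuinely different route from the paper's. The paper introduces the interpolating family $\tilde H_{\alpha} = -\sum_{cz}\tr(\rho(cz))\tildeDrel{\alpha}{\rho(cz)}{\rho(z)}$, uses Lemma~\ref{lem:monotone_rp} to argue it is non-increasing in $\alpha$, and identifies $H_1$ and $H_\infty$ as the $\alpha\searrow 1$ and $\alpha\nearrow\infty$ endpoints, citing Ref.~\cite{tomamichel:qc2015a} for the latter. You instead work directly with a near-optimal max-prob witness $\sigma(Z)$: operator monotonicity of $\log$ yields a trace bound on $\tr(\rho(cz)\log\rho(cz))$, and after unfolding $H_1$ the residue is the relative entropy $\tildeDrel{1}{\sum_z\hat z\otimes\rho(z)}{\sum_z\hat z\otimes\sigma(z)}$, whose non-negativity you establish from the data-processing inequality applied to the trace map together with Lemma~\ref{lem:drellimit1}. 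Your route is more self-contained and makes the role of the max-prob witness explicit, whereas the paper compresses the argument into cited properties of the family $\tilde H_\alpha$.

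One point to tighten: with the paper's relative-$\log$ convention, the operator inequality $\log\rho(cz)\leq(\log p)\one+\log\sigma(z)$ fails whenever $\rho(cz)$ has a nontrivial kernel inside $\Supp(\sigma(z))$ --- the left-hand side is zero there while the right-hand side can be strictly negative. The operator inequality only holds for the regularized $\rho(cz)+\epsilon\Pi_{\sigma(z)}$, which is dominated by $(p+\epsilon/\lambda_{\min}(\sigma(z)))\sigma(z)$ rather than by $p\sigma(z)$; what survives the $\epsilon\searrow 0$ limit (using $x\log x\to 0$) is only the traced inequality $\tr(\rho(cz)\log\rho(cz))\leq(\log p)\tr(\rho(cz))+\tr(\rho(cz)\log\sigma(z))$. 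So the regularization is doing real work, not merely ``managing supports,'' and the intermediate conclusion after sending $\epsilon\searrow 0$ should be stated as a trace inequality, not an operator inequality. The rest --- the normalization $\sum_{cz}\tr(\rho(cz))=1$, the inclusion $\rho(z)\ll\sigma(z)$ following from $\rho(cz)\leq p\sigma(z)$, and the non-negativity of the residual divergence --- is sound.
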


\begin{proof}
  Define $\tilde H_{\alpha}(\rho(CZ)|Z\Pfnt{E})
  =-\sum_{cz}\tr(\rho(cz))\tildeDrel{\alpha}{\rho(cz)}{\rho(z)}$.  The
  lemma follows from $H_{1}=\lim_{\alpha\searrow 1}\tilde H_{\alpha}$,
  $H_{\infty}=\lim_{\alpha\nearrow \infty}\tilde H_{\alpha}$ and monotonicity of
  $\tilde H_{\alpha}$ in $\alpha$.  These facts can
  be found in Ref.~\cite{tomamichel:qc2015a}.  The first limit is an
  application of Lem.~\ref{lem:drellimit1}.  For the second limit, see
  Ref.~\cite{tomamichel:qc2015a}, Def.~4.2, Pg.~52 and the comment at the
  beginning of Sect.~4.3.2.  That $\tilde H_{\alpha}$ is
  non-increasing in $\alpha$ follows from
  Lem.~\ref{lem:monotone_rp}. 
\end{proof}

\subsection{Smooth Min-Entropy}

\begin{definition}\label{def:smooth_max_prob}
  Let $\rho(CZ)\in\cS_{1}(CZ\Pfnt{E})$. Then
  \emph{$\rho(CZ)$ has $\epsilon$-smooth max-prob $p$ given
    $Z\Pfnt{E}$}
  if there exists a $\rho'(CZ)\in\cS_{\leq 1}(CZ\Pfnt{E})$  with
  $\purdist{\rho(CZ)}{\rho'(CZ)}\leq \epsilon$ and
  $P_{\max}(\rho'(CZ)|Z\Pfnt{E})\leq p$.
  The \emph{exact $\epsilon$-smooth max-prob of $C$ 
  given $Z\Pfnt{E}$ at $\rho(CZ)$} is
  \begin{equation}
    P^{\epsilon}_{\max}(\rho(CZ)|Z\Pfnt{E}) =
    \inf\{P_{\max}(\rho'(CZ)|Z\Pfnt{E}): \rho'(CZ)\in\cS_{\leq 1}(CZ\Pfnt{E}),
    \purdist{\rho'(CZ)}{\rho(CZ)}\leq\epsilon\}.
  \end{equation}
  The quantity $H_{\infty}^{\epsilon}(\rho(CZ)|Z\Pfnt{E}) = -\log(
  P^{\epsilon}_{\max}(\rho(CZ)|Z\Pfnt{E}))$ is called the \emph{smooth
    conditional min-entropy of $\rho(CZ)$ given $Z\Pfnt{E}$}. Here,
  the smoothing is with respect to the purified distance, as in
  Refs.~\cite{koenig:qc2009a, tomamichel:qc2012a}.
\end{definition}

For relevant cases, the witnesss $\rho'(CZ)$ in the definition of
$\epsilon$-smooth max-prob can be assumed to be normalized
states. This observation is formalized by the next lemma.

\begin{lemma}\label{lem:purdist_normalized}
  Suppose that $\rho(CZ)\in\cS_{1}(CZ\Pfnt{E})$ has $\epsilon$-smooth max-prob
  $p$ given $Z\Pfnt{E}$ with $p|\Rng(C)|\geq 1$. Then
  there exists $\rho''(CZ)\in\cS_{1}(CZ\Pfnt{E})$ such that
  $\purdist{\rho(CZ)}{\rho''(CZ)}\leq \epsilon$ and
  $P_{\max}(\rho''(CZ)|Z\Pfnt{E})\leq p$.
\end{lemma}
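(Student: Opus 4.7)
The plan is to leverage Lem.~\ref{lem:pmax_fillin} to fill in any sub-normalized witness for $\epsilon$-smooth max-prob to a normalized one while preserving the max-prob bound, and then to use the monotonicity of purified distance from Lem.~\ref{lem:purdistprops} to conclude that the resulting normalized state is still within purified distance $\epsilon$ of $\rho(CZ)$.

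First, from Def.~\ref{def:smooth_max_prob} applied to the hypothesis, obtain a sub-normalized witness $\rho'(CZ) \in \cS_{\leq 1}(CZ\Pfnt{E})$ with $\purdist{\rho(CZ)}{\rho'(CZ)} \leq \epsilon$ and $P_{\max}(\rho'(CZ)|Z\Pfnt{E}) \leq p$. By the definition of $P_{\max}$, there exists $\sigma(Z) \in \cS_{1}(Z\Pfnt{E})$ with $\rho'(CZ) \leq p\,\sigma(Z)$. (If the infimum defining $P_{\max}$ is not attained, a convergent subsequence argument in the effectively finite-dimensional setting produces such $\sigma(Z)$.)

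Second, apply Lem.~\ref{lem:pmax_fillin} with $U=C$, $V=Z$, $\tau(UV)=\rho'(CZ)$, and $\sigma(V)=\sigma(Z)$; the hypothesis $p|\Rng(C)| \geq 1$ is exactly the precondition required. This yields $\rho''(CZ) \in \cS_{1}(CZ\Pfnt{E})$ satisfying $\rho'(CZ) \leq \rho''(CZ) \leq p\,\sigma(Z)$. The right-hand inequality, together with the definition of $P_{\max}$, immediately gives $P_{\max}(\rho''(CZ)|Z\Pfnt{E}) \leq p$.

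Third, since $\rho''(CZ) \in \cS_{1}(CZ\Pfnt{E})$ is normalized, $\rho'(CZ) \in \cS_{\leq 1}(CZ\Pfnt{E})$ is sub-normalized, and $\rho'(CZ) \leq \rho''(CZ)$, Lem.~\ref{lem:purdistprops} (applied with its $\rho$, $\sigma$, $\tau$ playing the roles of our $\rho(CZ)$, $\rho''(CZ)$, $\rho'(CZ)$, respectively) yields $\purdist{\rho(CZ)}{\rho''(CZ)} \leq \purdist{\rho(CZ)}{\rho'(CZ)} \leq \epsilon$, completing the proof. The only potential friction is ensuring an explicit minimizer $\sigma(Z)$ exists in step one rather than merely an infimizing sequence, but this is handled cleanly by compactness in finite dimensions, so the argument is otherwise a direct composition of the two cited lemmas.
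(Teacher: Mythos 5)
Your proof is correct and follows essentially the same route as the paper's: extract a sub-normalized witness $\rho'(CZ)$ and a dominating $\sigma(Z)$, apply Lem.~\ref{lem:pmax_fillin} to fill in to a normalized $\rho''(CZ)$, then invoke the monotonicity of purified distance from Lem.~\ref{lem:purdistprops}. The only difference is that you explicitly flag and resolve (via finite-dimensional compactness) the question of whether a $\sigma(Z)$ achieving $\rho'(CZ)\leq p\,\sigma(Z)$ exactly can be extracted from the infimum in the definition of $P_{\max}$, a point the paper's proof passes over silently.
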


\begin{proof}
  Let $\rho'(CZ)\in\cS_{\leq 1}(CZ\Pfnt{E})$ and
  $\sigma(Z)\in\cS_{1}(Z\Pfnt{E})$ such that
  $\purdist{\rho(CZ)}{\rho'(CZ)}\leq \epsilon$ and $\rho'(CZ)\leq
  p\sigma(Z)$.  By Lem.~\ref{lem:pmax_fillin}, there exists
  $\rho''(CZ)\in\cS_{1}(CZ\Pfnt{E})$ such that
  $\rho'(CZ)\leq\rho''(CZ)$ and $\rho''(CZ)\leq p\sigma(Z)$. 
  So $P_{\max}(\rho''(CZ)|Z\Pfnt{E})\leq p$, and by
  Lem.~\ref{lem:purdistprops}, 
  \begin{equation}
    \purdist{\rho(CZ)}{\rho''(CZ)}\leq
    \purdist{\rho(CZ)}{\rho'(CZ)}\leq\epsilon.
  \end{equation}  
\end{proof}

\begin{definition}\label{def:tvsmooth_max_prob}
Let $\rho(CZ)\in\cS_{1}(CZ\Pfnt{E})$.  Then \emph{$\rho(CZ)$ has
    $\TV{:}\epsilon$-smooth max-prob $p$ given $Z\Pfnt{E}$} if there
  exists a $\rho'(CZ)\in\cS_{1}(CZ\Pfnt{E})$ with
  $\trdist{\rho(CZ)}{\rho'(CZ)}\leq \epsilon$ and
  $P_{\max}(\rho'(CZ)|Z\Pfnt{E})\leq p$.  The \emph{$\TV$:exact
    $\epsilon$-smooth max-prob of $C$ given $Z\Pfnt{E}$ 
  at $\rho(CZ)$} is
  \begin{equation}
    P^{\TV{:}\epsilon}_{\max}(\rho(CZ)|Z\Pfnt{E}) =
    \inf\{P_{\max}(\rho'(CZ)|Z\Pfnt{E}): \rho'(CZ)\in\cS_{1}(CZ\Pfnt{E}),
    \trdist{\rho'(CZ)}{\rho(CZ)}\leq\epsilon\}.
  \end{equation}
  The quantity $H_{\infty}^{\TV{:}\epsilon}(\rho(CZ)|Z\Pfnt{E}) =
  -\log( P^{\TV{:}\epsilon}_{\max}(\rho(CZ)|Z\Pfnt{E}))$ is called the
  \emph{TV:smooth conditional min-entropy of $\rho(CZ)$ given
    $Z\Pfnt{E}$}. Here, the smoothing is with respect to the TV
  distance, as first proposed in Ref.~\cite{renner:qc2005}.
\end{definition}
We remark that the definitions are monotonic in the smoothness
parameter $\epsilon$.  For example, if
$P^{\epsilon}_{\max}(\rho(CZ)|Z\Pfnt{E})\leq p$ and
$\epsilon'>\epsilon$, then
$P^{\epsilon'}_{\max}(\rho(CZ)|Z\Pfnt{E})\leq p$. Besides using $\TV$
distance instead of purified distance, the second definition requires
that the state being compared is normalized.  This is unproblematic
for max-prob bounds greater than $1/|\Rng(C)|$, and smaller bounds are
generally not helpful, see the next lemma.  As explained in
Ref.~\cite{tomamichel:qc2012a}, when dealing with quantum information,
purified distance is preferred and the fact that it exceeds TV
distance means that there are few complications when chaining with
classical protocols or extractors, or for interpreting results in
familiar probabilistic terms.

We can readily switch from smoothing with purified distance to
smoothing with $\TV$ distance by applying the next lemma. Switching in
the other direction involves a square-root increase of smoothing
parameter; we do not consider this switch here.

\begin{lemma}
  \label{lem:pmax_tv_from_pur}
  Let $\rho(CZ)\in\cS_{1}(CZ\Pfnt{E})$ have $\epsilon$-smooth max-prob
  $p$ given $Z\Pfnt{E}$ with $p|\Rng(C)|\geq 1$. Then
  $P^{\TV{:}\epsilon}_{\max}(\rho(CZ)|Z\Pfnt{E})\leq p$.
  It follows that
  \begin{equation}
    P_{\max}^{\TV{:}\epsilon}(\rho(CZ)|Z\Pfnt{E})\leq
    \max(P_{\max}^{\epsilon}(\rho(CZ)|Z\Pfnt{E}),1/|\Rng(C)|).\label{eq:hinfpur_tv}
  \end{equation}
\end{lemma}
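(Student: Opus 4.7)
The plan is to prove the first statement by combining the two preliminary lemmas already developed for this purpose: Lemma~\ref{lem:purdist_normalized} lets us upgrade the witness of $\epsilon$-smooth max-prob to a \emph{normalized} witness (using the hypothesis $p|\Rng(C)|\geq 1$), and Lemma~\ref{lem:purdistprops} lets us replace purified distance by $\TV$ distance, since the latter is dominated by the former.

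Concretely, first I would invoke Lemma~\ref{lem:purdist_normalized} on $\rho(CZ)$ with the given $p$ satisfying $p|\Rng(C)|\geq 1$ to obtain $\rho''(CZ)\in\cS_{1}(CZ\Pfnt{E})$ with $\purdist{\rho(CZ)}{\rho''(CZ)}\leq\epsilon$ and $P_{\max}(\rho''(CZ)|Z\Pfnt{E})\leq p$. Then by Lemma~\ref{lem:purdistprops} applied to the quantizations (as a classical-quantum state, $\rho(CZ)$ is equivalent to its diagonal quantization for both distance notions),
\[
\trdist{\rho(CZ)}{\rho''(CZ)}\leq \purdist{\rho(CZ)}{\rho''(CZ)}\leq\epsilon.
\]
Thus $\rho''(CZ)$ is a valid witness in the infimum defining $P^{\TV{:}\epsilon}_{\max}$, yielding the first claim $P^{\TV{:}\epsilon}_{\max}(\rho(CZ)|Z\Pfnt{E})\leq p$.

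For the ``it follows'' inequality~\eqref{eq:hinfpur_tv}, I would split into two cases based on the size of $P_{\max}^{\epsilon}(\rho(CZ)|Z\Pfnt{E})$ relative to $1/|\Rng(C)|$. If $P_{\max}^{\epsilon}(\rho(CZ)|Z\Pfnt{E})\geq 1/|\Rng(C)|$, then for every $p>P_{\max}^{\epsilon}(\rho(CZ)|Z\Pfnt{E})$ we have $p|\Rng(C)|>1$ and $\rho(CZ)$ has $\epsilon$-smooth max-prob $p$ by definition of the infimum, so the first part applies, giving $P^{\TV{:}\epsilon}_{\max}(\rho(CZ)|Z\Pfnt{E})\leq p$; taking $p\searrow P_{\max}^{\epsilon}$ gives the bound. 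If instead $P_{\max}^{\epsilon}(\rho(CZ)|Z\Pfnt{E})<1/|\Rng(C)|$, take $p=1/|\Rng(C)|$: the bound $p$ strictly exceeds the infimum, so $\rho(CZ)$ has $\epsilon$-smooth max-prob $p$, and $p|\Rng(C)|=1\geq 1$, so again the first part applies and yields $P^{\TV{:}\epsilon}_{\max}(\rho(CZ)|Z\Pfnt{E})\leq 1/|\Rng(C)|$.

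There is no substantial obstacle here; the work has already been done in the two cited preliminary lemmas. The only mild subtlety is the case analysis in the second part, where one must be careful that $\epsilon$-smooth max-prob at a bound strictly above the infimum is automatically witnessed, so that the $p|\Rng(C)|\geq 1$ hypothesis of the first part can be invoked with the chosen $p$.
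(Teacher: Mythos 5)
Your proposal is correct and follows the same route as the paper's proof: Lemma~\ref{lem:purdist_normalized} to normalize the witness, Lemma~\ref{lem:purdistprops} to pass from purified to $\TV$ distance, and then the choice $p=\max(1/|\Rng(C)|,P_{\max}^{\epsilon}(\rho(CZ)|Z\Pfnt{E}))$ for Eq.~\ref{eq:hinfpur_tv}. Your explicit limiting case analysis in the second part is slightly more careful than the paper's terse ``set $p=\max(\ldots)$ and apply the result just proven,'' correctly acknowledging that the infimum defining $P_{\max}^{\epsilon}$ need not be attained.
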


\begin{proof}
  By Lem.~\ref{lem:purdist_normalized},
  there exists $\rho''(CZ)\in\cS_{1}(CZ\Pfnt{E})$ such that
  $\purdist{\rho(CZ)}{\rho''(CZ)}\leq \epsilon$ and
  $P_{\max}(\rho''(CZ)|Z\Pfnt{E})\leq p$.
  By
  Lem.~\ref{lem:purdistprops}, 
  \begin{equation}
    \trdist{\rho(CZ)}{\rho''(CZ)}\leq
      \purdist{\rho(CZ)}{\rho''(CZ)}.
  \end{equation}
  Hence $P^{\TV{:}\epsilon}_{\max}(\rho(CZ)|Z\Pfnt{E})\leq p$. 
  For Eq.~\ref{eq:hinfpur_tv}, we set
  $p=\max(1/|\Rng(C)|,P_{\max}^{\epsilon}(\rho(CZ)|Z\Pfnt{E}))$
  and apply the result just proven.
\end{proof}

\begin{lemma}\label{lem:addcondition}
  Let $\rho(CZ)\in\cS_{1}(CZ\Pfnt{E})$. Then
  \begin{equation}
    P^{\epsilon}_{\max}(\rho(CZ)|Z\Pfnt{E})\leq
    |\Rng(Z)|\;P^{\epsilon}_{\max}(\rho(CZ)|\Pfnt{E}).
  \end{equation}
\end{lemma}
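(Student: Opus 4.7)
The plan is to establish the inequality by directly translating a witness for smooth max-prob conditional on $\Pfnt{E}$ into one for smooth max-prob conditional on $Z\Pfnt{E}$, at the cost of a factor $|\Rng(Z)|$ that corresponds to randomizing over $Z$ uniformly.

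First I would unpack the definition of $P^{\epsilon}_{\max}(\rho(CZ)|\Pfnt{E})$: it is the infimum of $P_{\max}(\rho'(CZ)|\Pfnt{E})$ over $\rho'(CZ)\in\cS_{\leq 1}(CZ\Pfnt{E})$ with $\purdist{\rho(CZ)}{\rho'(CZ)}\leq\epsilon$. Here $CZ$ is treated as a single joint classical system and conditioning is on $\Pfnt{E}$ only, so the corresponding max-prob condition unpacks (in parallel with the two-argument version) to: there exists $\sigma\in\cS_{1}(\Pfnt{E})$ such that $\rho'(cz)\leq p\sigma$ for every value $cz$. It is enough to show that any such $\rho'$ and $p$ yield $P_{\max}(\rho'(CZ)|Z\Pfnt{E})\leq |\Rng(Z)| p$, because the same $\rho'$ is a $\purdist{}{}$-close state and so witnesses the smooth bound on the left-hand side; taking the infimum over $p$ then gives the claim.

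Given $\sigma\in\cS_{1}(\Pfnt{E})$ with $\rho'(cz)\leq p\sigma$ for all $cz$, I would define a witness $\sigma'(Z)\in\cS_{1}(Z\Pfnt{E})$ by the uniform extension
\begin{equation}
  \sigma'(z) = \frac{\sigma}{|\Rng(Z)|}.
\end{equation}
This is normalized since $\sum_{z}\tr(\sigma'(z))=\tr(\sigma)=1$. Then for every $cz$,
\begin{equation}
  \rho'(cz) \leq p\sigma = |\Rng(Z)|\, p\, \sigma'(z),
\end{equation}
so $\rho'(CZ) \leq |\Rng(Z)| p\, \sigma'(Z)$, which by definition gives $P_{\max}(\rho'(CZ)|Z\Pfnt{E}) \leq |\Rng(Z)| p$.

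Combining these steps: for any $p > P^{\epsilon}_{\max}(\rho(CZ)|\Pfnt{E})$ there is a smoothing $\rho'(CZ)$ witnessing the bound on the right, and by the construction above the same $\rho'(CZ)$ witnesses $P^{\epsilon}_{\max}(\rho(CZ)|Z\Pfnt{E})\leq |\Rng(Z)|p$. Taking the infimum over $p$ yields the stated inequality. There is no genuine obstacle here — the only mild subtlety is matching the definitional convention for max-prob when the conditioner contains no classical variable, which is handled by identifying $CZ$ with a single classical output and reading off $\sigma\in\cS_{1}(\Pfnt{E})$ as the appropriate witness state.
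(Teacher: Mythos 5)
Your proof is correct and follows exactly the argument used in the paper: take a smoothing witness $\rho'(CZ)$ with $\rho'(CZ)\leq p\tau$ for some $\tau\in\cS_1(\Pfnt{E})$, promote $\tau$ to a normalized $\sigma(Z)=\tau/|\Rng(Z)|\in\cS_1(Z\Pfnt{E})$, and observe that $\rho'(CZ)\leq|\Rng(Z)|\,p\,\sigma(Z)$ so the same $\rho'$ witnesses the bound conditioned on $Z\Pfnt{E}$. Taking $p\searrow P^{\epsilon}_{\max}(\rho(CZ)|\Pfnt{E})$ completes it, just as in the paper.
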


\begin{proof}
  This is Lem.~6.8, Pg. 95 of Ref.~\cite{tomamichel:qc2015a}.  Consider
  an arbitrary $p>P_{\max}^{\epsilon}(\rho(CZ)|\Pfnt{E})$. Then there
  exist $\rho'(CZ)\in \cS_{\leq 1}(CZ\Pfnt{E})$ and
  $\tau\in\cS_{1}(\Pfnt{E})$ such that
  $\purdist{\rho(CZ)}{\rho'(CZ)}\leq\epsilon$ and $\rho'(CZ)\leq
  p\tau$, which we can rewrite as $\rho'(CZ) \leq p|\Rng(Z)|\;
  \tau/|\Rng(Z)|$. Define $\sigma(Z)=\tau/|\Rng(Z)|$, which is in
  $\cS_{1}(Z\Pfnt{E})$.  Therefore $\rho'(CZ)$ and $\sigma(Z)$ witness
  that $P_{\max}^{\epsilon}(\rho(CZ)|Z\Pfnt{E})\leq |\Rng(Z)|p$.
  Letting $p\searrow P_{\max}^{\epsilon}(\rho(CZ)|Z\Pfnt{E})$ proves
  the lemma.
\end{proof}

\begin{lemma}\label{lem:pmaxdetermined}
  Let $\rho(CZH)\in\cS_{1}(CZH\Pfnt{E})$, and suppose that $Z=Z(H)$ is
  determined by $H$, then $P^{\epsilon}_{\max}(\rho(CH)|\Pfnt{E}) \leq
  P^{\epsilon}_{\max}(\rho(CZ)|\Pfnt{E})$ and
  $P^{\epsilon}_{\max}(\rho(CZ)|Z\Pfnt{E})\leq
  P^{\epsilon}_{\max}(\rho(CH)|H\Pfnt{E})$.
\end{lemma}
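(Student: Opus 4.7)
The plan is to prove the two inequalities separately by explicit witness manipulation, treating each as a monotonicity statement for smooth max-prob under the classical map $h\mapsto Z(h)$. The second inequality coarsens the conditioner (from $H\Pfnt{E}$ to $Z\Pfnt{E}$), which can only decrease the smooth max-prob of $C$; the first coarsens the system (from $CH$ to $CZ$), which can only increase it. The first is more delicate since it requires lifting a witness for the coarser state back to the finer state.

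For the second inequality, I fix $p>P^{\epsilon}_{\max}(\rho(CH)|H\Pfnt{E})$, take witnesses $\rho'(CH)\in\cS_{\leq 1}(CH\Pfnt{E})$ with $\purdist{\rho'(CH)}{\rho(CH)}\leq\epsilon$ and $\tau(H)\in\cS_1(H\Pfnt{E})$ with $\rho'(CH)\leq p\tau(H)$, and push both through the channel $h\mapsto Z(h)$: $\rho'(cz)=\sum_{h:Z(h)=z}\rho'(ch)$ and $\tau'(z)=\sum_{h:Z(h)=z}\tau(h)$. The operator inequality $\rho'(CZ)\leq p\tau'(Z)$ follows by termwise summation, $\tau'(Z)\in\cS_1(Z\Pfnt{E})$ since the total trace is preserved, and data processing for purified distance gives $\purdist{\rho'(CZ)}{\rho(CZ)}\leq\purdist{\rho'(CH)}{\rho(CH)}\leq\epsilon$ (the channel sends $\rho(CH)$ to $\rho(CZ)$ because $Z=Z(H)$).

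For the first inequality, I would first establish the equivalence $P^{\epsilon}_{\max}(\rho(CH)|\Pfnt{E})=P^{\epsilon}_{\max}(\rho(CZH)|\Pfnt{E})$, where $\rho(czh)=\rho(ch)\delta_{z,Z(h)}$. This follows from the classical isometry $\ket{h}\mapsto\ket{Z(h)}\otimes\ket{h}$: a $CH$-witness extends to a $CZH$-witness via the same delta, and a $CZH$-witness restricts to a $CH$-witness via $\rho'(ch)=\rho''(c,Z(h),h)$, both directions preserving PD (since $\rho(CZH)$ is supported on $\{z=Z(h)\}$) and the operator bound. It then remains to show $P^{\epsilon}_{\max}(\rho(CZH)|\Pfnt{E})\leq P^{\epsilon}_{\max}(\rho(CZ)|\Pfnt{E})$. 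Given a witness $\rho'(CZ)\leq p\sigma$ at PD $\leq\epsilon$ from $\rho(CZ)$, I would apply Uhlmann's theorem to a purification of $\rho(CZH\Pfnt{E})$ (which simultaneously purifies $\rho(CZ\Pfnt{E})$) to obtain an extension $\rho''(CZH\Pfnt{E})$ of $\rho'(CZ\Pfnt{E})$ with $\purdist{\rho''(CZH)}{\rho(CZH)}\leq\epsilon$. Then dephase in the $CZH$ classical basis to obtain a classical-quantum $\rho'''(CZH)\in\cS_{\leq 1}(CZH\Pfnt{E})$. Dephasing preserves the $CZ$ marginal at $\rho'(CZ)$ and preserves the PD bound since $\rho(CZH)$ is already classical in $CZH$, and the operator bound propagates via $\rho'''(czh)\leq\sum_h\rho'''(czh)=\rho'(cz)\leq p\sigma$.

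The main obstacle is the Uhlmann extension combined with the dephasing step: one must simultaneously obtain an extension of $\rho'(CZ)$, force it into classical-quantum form on $CZH$, and preserve both the PD bound and the operator bound. Dephasing resolves this because the target $\rho(CZH)$ already lives in the classical-diagonal subspace, so CPTP monotonicity of purified distance keeps the smoothing bound intact, while the operator bound follows automatically from dominance of each diagonal block by the $CZ$ marginal.
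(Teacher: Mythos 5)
Both inequalities are proved correctly. Where the paper's own proof simply cites two data-processing inequalities from Tomamichel's book (Prop.~6.4 for the output side, Thm.~6.2 for the conditioning side), you reconstruct both bounds from scratch by explicit witness manipulation, so the route is genuinely different even though the spirit (data processing) is the same. Your argument for the second inequality is exactly the constructive proof of conditioner-side data processing specialized to the classical coarse-graining $h\mapsto Z(h)$: pushing $\rho'(CH)\leq p\tau(H)$ through the channel preserves the operator inequality termwise, preserves normalization of $\tau$, and only decreases purified distance, with $\rho(CH)\mapsto\rho(CZ)$ because $Z=Z(H)$. For the first inequality, your two-stage decomposition is a correct self-contained replacement for the cited Prop.~6.4: the isometry identification of $CH$ with $CZH$ is essentially costless precisely because $\rho(CZH)$ is supported on $\{z=Z(h)\}$, so off-diagonal terms contribute nothing to the fidelity in either direction; and the lift of a $CZ$-witness to a $CZH$-witness via the extension property of purified distance (Cor.~3.6 of Tomamichel, which the paper itself invokes elsewhere) followed by classical dephasing handles exactly the three points where a naive treatment could fail -- the PD bound (data processing plus $\rho(CZH)$ being a dephasing fixed point), preservation of the $CZ$-marginal (already classical), and the operator bound (each diagonal block dominated by the $CZ$-marginal block). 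The trade-off is self-containedness versus length: the paper's proof is a two-citation one-liner, while yours lets a reader verify the claim without opening the reference, at the cost of carrying the extension-plus-dephasing machinery. One small stylistic caution: reusing the symbol $\rho'(cz)$ for the pushed-forward witness after $\rho'(ch)$ was already in play is harmless here, but in a setting where both appear simultaneously it would be cleaner to give the image a fresh name.
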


\begin{proof}
  These are instances of data-processing inequalities for smooth
  conditional min-entropy.  Since $Z$ is determined by $H$, the first
  statement is a consequence of Prop.~6.4, Pg.~96 of
  Ref.~\cite{tomamichel:qc2015a}, according to which applying a
  function to a classical system does not increase the
  $\epsilon$-smooth conditional min-entropy of the system conditional on other
  systems.  For the second, the transformation $h\mapsto Z(h)$ can be
  considered as a $\CPTP$ map of system $H$ to the system $Z$, where
  these are the systems in the conditioners of the smooth max-probs
  being compared.  The inequality is therefore obtained from Thm.~6.2,
  Pg.~95 of Ref.~\cite{tomamichel:qc2015a}, according to which a
  $\CPTP$ process applied to the conditioning system does not decrease
  the $\epsilon$-smooth conditional min-entropy.
\end{proof}

\subsection{Extractors}

For randomness generation protocols, we assume that
a quantum-proof strong extractor $\cE$ is available.

\begin{definition} \label{def:extractor}
  Let $C$, $S$ and $R$ be CVs. Define $n=\log_{2}(|\Rng(C)|)$,
  $k_{s}=\log_{2}(|\Rng(S)|)$ and $k_{o}=\log_{2}(|\Rng(R)|)$.  Here
  $S$ is a seed CV with probability distribution $\mu(S)=\Unif(S)$ and
  independent of all other systems.  Consider a function $\cE:(C,S;n,
  k_{s},k_{o},k_{i},\epsx)\mapsto \Rng(R)$. Define
  $\bar\cE:c,s\mapsto\cE(c,s)s$, where the parameters are implicit.
  The function $\cE$ is a \emph{quantum-proof strong extractor with
    parameters $(n,k_{s},k_{o},k_{i},\epsx)$} if for every
  $\rho(CS)\in\cS_{1}(CS\Pfnt{E})$  of the form
  $\rho(CS)=\rho(C)\Unif(S)$ that satisfies
  $P_{\max}(\rho(C)|\Pfnt{E})\leq 2^{-k_{i}}$, the extractor and seed
  output $\bar\cE$ is close to uniform and independent of $\Pfnt{E}$
  with distance
  \begin{equation}
    \purdist{\rho(\bar\cE)}{\Unif(RS)\rho}\leq \epsx.
  \end{equation}
\end{definition}
This definition of quantum-proof extractors differs from
others such as Ref.~\cite{mauerer:2012} by requiring small purified
distance instead of small $\TV$ distance. With this change we can take
advantage of extensions to previously traced-out quantum systems.

In this work, we use the term \emph{extractor} to refer to a function
$\cE$ that is a quantum-proof strong extractor provided that the
parameters $(n,k_{s},k_{o},k_{i},\epsx)$ satisfy constraints that we
refer to as the \emph{extractor constraints}.  (The convention in this
manuscript for parameters and their ordering differs from that in
Ref.~\cite{knill:qc2017a}.)  We assume that the extractor constraints
include the conditions $1\leq k_{i}\leq n$, $k_{s}\geq 0$, $k_{o}\leq
k_{i}$, and $0<\epsx\leq 1$. We generally deal with bit strings $C$,
$S$ and $R$, so we also assume that $n$, $k_{s}$ and $k_{o}$ are
integers.

A specific quantum-proof strong extractor with reasonably low seed
requirements is the TMPS extractor based on Ref.~\cite{mauerer:2012},
which we applied in Ref.~\cite{bierhorst:qc2017a} using the
implementation available at
\url{https://github.com/usnistgov/libtrevisan}.  Simplified
constraints for this extractor include $2\leq k_{o}\leq k_{i}\leq n$
and
\begin{align}
  k_{o}+4\log_{2}(k_{o}) &\le k_{i} -  4\log_2(1/\delta_{x} )-6, \notag\\
  k_{s} &\ge 36 \log_{2}(k_{o}) (\log_2(4nk_{o}^2/\delta_{x} ^2))^2.
  \label{eq:TMPS_bounds}
\end{align}
Here, $\delta_{x}$ is the desired error in terms of $\TV$ distance.
To ensure that the purified distance is at most $\epsx$, we set
$\delta_{x}=\epsx^{2}/2$, see Lem.~\ref{lem:purdistprops}. 
See Ref.~\cite{bierhorst:qc2017a} for the smaller expression for $k_{s}$
in terms of $\delta_{x}$
used by the implementation. Better extractors exist in theory,
but full implementations are still rare.

\section{Models}
\label{sec:qmodels}

\subsection{Definitions}

\begin{definition}
  A \emph{model $\cC(U)$ for $U\Pfnt{E}$} is a subset of
  $\cS(U\Pfnt{E})$ closed under multiplication by non-negative real
  numbers.  The set of normalized distributions in $\cC(U)$ is
  $\cN(\cC(U))=\{\rho(U)\in\cC:\tr(\rho)=1\}$.  
  The model $\cC(U)$ is \emph{null}
  if its only member is the zero distribution
  given by $u\mapsto 0$. 
\end{definition} 
If $\cC(U)$ is not null, we can
reconstruct $\cC(U)$ from $\cN(\cC(U))$ by
$\cC(U)=[0,\infty)\cN(\cC(U))$.  We normally omit ``for $U\Pfnt{E}$''
when introducing a model. In this case, the default quantum system is
$\Pfnt{E}$.

Expressions of the form $\cC(U)$ with $U$ a CV are reserved for
models.  We may subscript $\cC$ to distinguish models in context.  The
notation $\cC(U)$ indicates the CV or CVs that the members of the
model depend on and does not indicate function application or a CV
construction.  We adapt the marginalization conventions for
state-valued distributions for models. Thus if $\cC(UV)$ is a model,
then $\cC(U)=\{\rho(U):\rho(UV)\in\cC(UV)\}$ and
$\cC=\{\rho:\rho(UV)\in\cC(UV)\}$. When a model is expressed in terms
a compound construction such as $\cE(\cC(UV))$, $\cM(\cC(UV);\ldots)$
or $\cC(U)\circ\cC_{U}(V)$ without a final CV argument, the
marginalization conventions do not apply.

A \emph{classical model} $\cC(U)$ is a model for $U$, which means that
the quantum system is trivial and the model consists of a set of
unnormalized distributions on $U$. In this case, $\cN(\cC(U))$
consists of probability distributions and is a standard statistical
model.  For any model $\cC(U)$, $\tr(\cC(U))$ is a classical model.

We consider several closure properties and operations on models.
First we define $V$-conditional quantum operations on $\cB(\Pfnt{E})$
as a family $\cE_{V}$ of $v$-dependent quantum operations $\cE_{v}$ on
$\cB(\Pfnt{E})$.  As an operation, $\cE_{V}$ transforms members of
$\cS(UV\Pfnt{E})$ according to $\cE_{V}:\rho(uv)\mapsto
\cE_{v}(\rho(uv))$.  Among the many closure properties that can be
satisfied by models, we distinguish the following:

\begin{definition}
  The model $\cC(U)$ is \emph{closed under the linear map
    $\cE:\cB(\Pfnt{E})\rightarrow \cB(\Pfnt{E})$} if $\cE(\cC(U))
  \subseteq \cC(U)$.  $\cC(U)$ is \emph{$\pCP$-closed} if it is
  closed under $\pCP$ maps, \emph{$\CP$-closed} if it is
  closed under $\CP$ maps and \emph{$\CPTP$-closed} if it is closed under
  trace-preserving $\CP$ maps.  The model $\cC(UV)$ is \emph{closed under
    $V$-conditional quantum operations} if $\cE_{V}(\cC(UV))\subseteq
  \cC(UV)$ for every $V$-conditional quantum operation $\cE_{V}$.
\end{definition}
In this work, many results are established under the condition that 
the model involved is $\pCP$-closed. As $\pCP$ maps are special $\CP$ maps
and closure under $\pCP$ maps is weaker than closure under $\CP$ maps,
these results automatically apply if the model is  $\CP$-closed.

We may also consider closedness under special families of $\CP$ maps,
for instance the family of $\CP$ maps that preserve the projectors of
a partition of unity.  For each closedness property in the definitions
above, there is a corresponding closure operation. We use suggestive
notation for closure operations.  For example $\Cvx(\cC(U))$ is
convex closure, $\pCP(\cC(U))$ is closure under $\pCP$ maps, and
$\CPTP_{V}(\cC(UV))$ is closure under $V$-conditional $\CPTP$ maps.

\subsection{General Constructions}

Models $\cC(U)$ arise from constraints on the physical processes that result in
the distributions $\rho(U)$ in $\cC(U)$.  It is possible to associate quantum models  
to classical models.
\begin{definition}\label{def:maxextension}
  Let $\cC(U)$ be a classical model. Then the
  \emph{maximal extension of $\cC(U)$ to $\Pfnt{E}$} is defined as
  \begin{equation}
    \cM(\cC(U);\Pfnt{E})=\{\rho(U):
    \tr(\sigma\rho(U))\in\cC(U)\textrm{\ for all $\sigma\in\cS(\Pfnt{E})$}\}.
    \label{eq:maxextmodelsdef}
  \end{equation}
\end{definition}
In this definition, if $\cC(U)$ is convex closed, one can restrict
$\sigma$ to pure states when verifying membership in $\cM(\cC(U);\Pfnt{E})$
according to Eq.~\ref{eq:maxextmodelsdef}. 

\begin{lemma}\label{lem:maxextend_is_CPclosed}
  If $\cC(U)$ is a classical model, then $\cM(\cC(U);\Pfnt{E})$ is $\CP$-closed.
\end{lemma}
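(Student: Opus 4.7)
The plan is to unpack the definition: I must show that for every $\CP$ map $\cE:\cB(\Pfnt{E})\to\cB(\Pfnt{E})$ and every $\rho(U)\in\cM(\cC(U);\Pfnt{E})$, the image $\cE(\rho(U))$ still satisfies the defining condition, namely $\tr(\sigma\,\cE(\rho(U)))\in\cC(U)$ for every $\sigma\in\cS(\Pfnt{E})$. The natural move is to transfer $\cE$ off of $\rho(U)$ and onto $\sigma$ via the adjoint, since the membership condition for $\cM$ is phrased as a trace against an arbitrary positive semidefinite operator.

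Concretely, I would fix a Kraus representation $\cE(\rho)=\sum_i A_i\rho A_i^\dagger$ and define $\cE^*(\sigma)=\sum_i A_i^\dagger\sigma A_i$. Two standard facts are needed: (i) for each value $u$, $\tr(\sigma\,\cE(\rho(u)))=\tr(\cE^*(\sigma)\,\rho(u))$, by cyclicity of trace applied termwise; and (ii) $\cE^*$ is itself $\CP$, so $\sigma\in\cS(\Pfnt{E})$ implies $\cE^*(\sigma)\in\cS(\Pfnt{E})$. Both are immediate from the Kraus form and the definition of $\cS(\Pfnt{E})$ as the cone of positive semidefinite operators in this paper.

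Now I would fix an arbitrary $\sigma\in\cS(\Pfnt{E})$, set $\sigma'=\cE^*(\sigma)\in\cS(\Pfnt{E})$, and observe
\begin{equation}
    \tr(\sigma\,\cE(\rho(U)))\;=\;\tr(\sigma'\,\rho(U))\;\in\;\cC(U),
\end{equation}
where the membership holds by the assumption that $\rho(U)\in\cM(\cC(U);\Pfnt{E})$ and the definition in Eq.~\eqref{eq:maxextmodelsdef}. Since $\sigma$ was arbitrary, this shows $\cE(\rho(U))\in\cM(\cC(U);\Pfnt{E})$, as required.

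I do not anticipate a genuine obstacle: the definition of $\cM(\cC(U);\Pfnt{E})$ is tailored to testing against all positive operators, and $\CP$ maps are precisely those whose adjoints preserve positivity, so the closure is essentially built in. The only thing worth flagging explicitly is that $\cE^*(\sigma)$ need not be normalized even when $\sigma$ is, but this causes no issue because $\cS(\Pfnt{E})$ here consists of all positive semidefinite operators (not just density operators) and $\cC(U)$ is closed under multiplication by non-negative reals; so any rescaling hidden in the passage from $\sigma$ to $\cE^*(\sigma)$ is absorbed by the model's cone property.
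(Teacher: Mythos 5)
Your proof is correct and takes essentially the same route as the paper's: both fix a Kraus representation, use cyclicity of the trace to move the map onto $\sigma$ (your $\cE^*(\sigma)$ is exactly the operator $\chi=\sum_i A_i^\dagger\sigma A_i$ the paper introduces), observe that this operator stays in $\cS(\Pfnt{E})$, and invoke the defining condition in Eq.~\eqref{eq:maxextmodelsdef}.
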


\begin{proof}
  Let $\rho(U)\in\cM(\cC(U);\Pfnt{E})$ and let $\cE:\tau\mapsto
  \sum_{i}A_{i}\tau {A_{i}}^{\dagger}$ be a $\CP$ map. 
  Given $\sigma\in\cS(\Pfnt{E})$, let
  $\chi = \sum_{i}{A_{i}}^{\dagger}\sigma A_{i}\in\cS(\Pfnt{E})$ and evaluate
  \begin{align}
    \tr(\sigma\cE(\rho(U))) &=
    \tr(\sigma\sum_{i} A_{i}\rho(U) {A_{i}}^{\dagger})\notag\\
    &= \sum_{i}\tr(\sigma A_{i}\rho(U){A_{i}}^{\dagger})\notag\\
    &= \sum_{i}\tr({A_{i}}^{\dagger}\sigma A_{i}\rho(U))\notag\\
    &= \tr(\sum_{i}{A_{i}}^{\dagger}\sigma A_{i}\rho(U))\notag\\
    &= \tr(\chi\rho(U)) \in\cC(U).
  \end{align}
  Since $\sigma\in\cS(\Pfnt{E})$ is
  arbitrary, it follows that $\cE(\rho(U))\in\cM(\cC(U);\Pfnt{E})$.
\end{proof}

If $\cC(U)$ is the classical model arising from a Bell-test configuration with
only non-signaling assumptions and no additional quantum constraints,
then the maximal extension of $\cC(U)$ to $\Pfnt{E}$ makes no physical
assumptions on the protocol devices other than non-signaling and
therefore allows the devices to exhibit super-quantum
correlations. The models obtained when the devices and $\Pfnt{E}$ are
jointly quantum are more constrained. They arise from families of
POVMs as follows.
\begin{definition}\label{def:inducedmodel}
  Let $\fkP(U)$ be a family of POVMs of $\Pfnt{D}$ with outcomes $U$.
  The model for $U\Pfnt{E}$ \emph{induced by $\fkP(U)$} is defined by
  \begin{equation}
    \cM(\fkP(U);\Pfnt{E})=
    \{\cP(\sigma)(U):\sigma\in\cS(\Pfnt{DE}),\cP\in\fkP(U)\}.
    \label{eq:inducedmodelsdef}
  \end{equation}
\end{definition}
Expressions of the form $\fkP(U)$ with $U$ a CV are reserved for
families of POVMs.  The notation $\fkP(U)$ indicates the outcome CV of
the members and does not indicate function application or a CV
construction.  We may subscript $\fkP$ to distinguish families in
context.  If \(\cC(U)\) is an induced model, then the maximal extension
of \(\tr(\cC(U))\) contains \(\cC(U)\). On the other hand, 
for Bell-test configurations, adding all quantum constraints
to a classical non-signaling model $\cC(V)$ and constructing the
maximal extension of $\cC(V)$ need not be equivalent to inducing a
model from a suitably constrained set of POVMs.  Further research is
required to explore the relationships between maximal extensions and
induced models.

\begin{lemma}\label{lem:induced_cpclosed}
  For any family $\fkP(U)$ of POVMs of $\Pfnt{D}$ with outcomes $U$, 
  the induced model $\cM(\fkP)$ is $\CP$-closed.
\end{lemma}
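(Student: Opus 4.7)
The plan is to show that any $\rho(U)\in\cM(\fkP(U);\Pfnt{E})$ remains in the induced model after applying a $\CP$ map $\cE:\cB(\Pfnt{E})\to\cB(\Pfnt{E})$, by exhibiting a new seed state $\sigma'\in\cS(\Pfnt{DE})$ such that $\cE(\rho(U))=\cP(\sigma')(U)$ for the \emph{same} POVM $\cP\in\fkP(U)$. Since the POVM is unchanged, we only need to push the $\CP$ map ``through'' the measurement onto the $\Pfnt{E}$-side of $\sigma$.

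Concretely, write $\rho(U)=\cP(\sigma)(U)$ with $\sigma\in\cS(\Pfnt{DE})$ and $\cP\in\fkP(U)$, so that $\rho(u)=\tr_{\Pfnt{D}}((P_{u}\otimes\one_{\Pfnt{E}})\sigma)$. Expand $\cE(X)=\sum_{i}A_{i}XA_{i}^{\dagger}$ and compute
\begin{align}
  \cE(\rho(u))
    &= \sum_{i}A_{i}\,\tr_{\Pfnt{D}}((P_{u}\otimes\one_{\Pfnt{E}})\sigma)\,A_{i}^{\dagger}\notag\\
    &= \sum_{i}\tr_{\Pfnt{D}}\!\left((\one_{\Pfnt{D}}\otimes A_{i})(P_{u}\otimes\one_{\Pfnt{E}})\sigma(\one_{\Pfnt{D}}\otimes A_{i}^{\dagger})\right)\notag\\
    &= \tr_{\Pfnt{D}}\!\left((P_{u}\otimes\one_{\Pfnt{E}})\,(\mathrm{id}_{\Pfnt{D}}\otimes\cE)(\sigma)\right),
\end{align}
where the second equality uses that $\cE$ acts only on $\Pfnt{E}$ and hence commutes with $\tr_{\Pfnt{D}}$ (each $A_{i}$ can be pulled inside as $\one_{\Pfnt{D}}\otimes A_{i}$), and the third uses that $\one_{\Pfnt{D}}\otimes A_{i}$ commutes with $P_{u}\otimes\one_{\Pfnt{E}}$.

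Now set $\sigma'=(\mathrm{id}_{\Pfnt{D}}\otimes\cE)(\sigma)$. Since $\cE$ is completely positive, $\mathrm{id}_{\Pfnt{D}}\otimes\cE$ preserves positive semidefiniteness, so $\sigma'\in\cS(\Pfnt{DE})$. The display above then reads $\cE(\rho(u))=\tr_{\Pfnt{D}}((P_{u}\otimes\one_{\Pfnt{E}})\sigma')=\cP(\sigma')(u)$, proving $\cE(\rho(U))\in\cM(\fkP(U);\Pfnt{E})$.

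There is no genuine obstacle here; the only subtlety is the bookkeeping of how $\cE$ on $\Pfnt{E}$ lifts to $\mathrm{id}_{\Pfnt{D}}\otimes\cE$ on $\Pfnt{DE}$, and verifying that this lift commutes with both the partial trace over $\Pfnt{D}$ and with multiplication by the measurement operator $P_{u}\otimes\one_{\Pfnt{E}}$. Once that commutation is made explicit via a Kraus representation (exactly as in the proof of Lem.~\ref{lem:maxextend_is_CPclosed}), the conclusion is immediate and the same seed POVM works throughout, which is what makes the induced model closed under arbitrary $\CP$ maps on $\Pfnt{E}$.
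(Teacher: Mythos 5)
Your proof is correct and is the same argument the paper gives, only spelled out: the paper's terse observation that ``POVMs of $\Pfnt{D}$ with outcomes $U$ commute with $\CP$ maps on $\cS(\Pfnt{E})$'' is exactly the Kraus-representation computation you carry out, with $\sigma'=(\mathrm{id}_{\Pfnt{D}}\otimes\cE)(\sigma)$ as the new seed state.
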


\begin{proof}
  It suffices to observe that by definition, $\CP$ maps on $\cS(\Pfnt{E})$
  preserve $\cS(\Pfnt{E})$, and POVMs of $\Pfnt{D}$ with outcomes $U$
  commute with $\CP$ maps on $\cS(\Pfnt{E})$.
\end{proof}

For induced models, $\Pfnt{D}$ consists of the devices used by a
protocol and the POVMs can be constrained by partial trust in device
behavior. For example, in many situations, the trust involves
assumptions that $Z$ is an input with known probability distribution,
and that there exists a system decomposition of the devices according
to protocol parties, with the POVMs acting independently on the
subsystems.  In partially device-dependent applications, one may also
trust the form of the specific measurements or the dimensions of
the subsystems. It is possible to generalize the definition of induced
models by restricting the measured states of $\cS(\Pfnt{DE})$ to a
model of $\Pfnt{DE}$ or of $W\Pfnt{DE}$ for some CV $W$.

Both maximal extensions and induced models are defined uniformly,
independent of the dimension of $\Pfnt{E}$. We can take the state
space of $\Pfnt{E}$ to be an infinite dimensional Hilbert space, but
according to our finiteness assumptions, we restrict to states
with finite support.

\subsection{Chaining Models}

\begin{definition}\label{def:chaining}
  Let $\cC(U)$ be a model for $U\Pfnt{E}$ and for each $u$,
  let $\cC_{u}(V)$ be model for $V\Pfnt{E}$.  We write
  $\cC_{U}(V)$ for the $u$-indexed family of models
  consisting of the $\cC_{u}(V)$.
  The result of \emph{chaining $\cC(U)$ and $\cC_{U}(V)$} is the
  model for $UV\Pfnt{E}$ defined by
  \begin{equation}
    \cC(U)\circ\cC_{U}(V) = 
    \{\rho(UV):
    \textrm{$\rho(U)\in\cC(U)$ and for all $u$,
      $\rho(uV)\in\cC_{u}(V)$}\}.
  \end{equation}
\end{definition}
Chained models can be null unless $\cC(U)$ and $\cC_{U}(V)$ are
sufficiently rich.

The next lemma shows that quantum operations distribute over chaining.
\begin{lemma}
  Let $\cC(U)$ be a model for $U\Pfnt{E}$, $\cC_{U}(V)$ a family of models
  for $V\Pfnt{E}$ and $\cE:\cB(\Pfnt{E})\rightarrow\cB(\Pfnt{E})$ a
  positive linear map. Then $\cE(\cC(U)\circ\cC_{U}(V))\subseteq
  \cE(\cC(U))\circ\cE(\cC_{U}(V))$.  In particular, if $\cC(U)$ and the
  $\cC_{u}(V)$ are closed under $\cE$, then so is $\cC(U)\circ\cC_{U}(V)$.
\end{lemma}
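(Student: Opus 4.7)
The plan is to unfold the definitions on both sides and use linearity of $\cE$ together with the marginalization conventions for state-valued distributions. Take an arbitrary $\rho(UV)\in\cC(U)\circ\cC_{U}(V)$. By Def.~\ref{def:chaining}, this means $\rho(U)\in\cC(U)$ and $\rho(uV)\in\cC_{u}(V)$ for every $u$. The goal is to show that $\cE(\rho(UV))$ satisfies the two analogous conditions with $\cC(U)$ replaced by $\cE(\cC(U))$ and $\cC_{u}(V)$ replaced by $\cE(\cC_{u}(V))$.

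First I would verify the marginal condition. Applying $\cE$ to the distribution $\rho(UV)$ produces the distribution $uv\mapsto \cE(\rho(uv))$, whose marginal at $u$ is $\sum_{v}\cE(\rho(uv))$. By linearity of $\cE$, this equals $\cE\bigl(\sum_{v}\rho(uv)\bigr)=\cE(\rho(u))$. Hence the marginal of $\cE(\rho(UV))$ on $U$ is $\cE(\rho(U))$, which lies in $\cE(\cC(U))$ because $\rho(U)\in\cC(U)$. Next I would handle the fiber condition: for fixed $u$, the fiber $(\cE\circ\rho)(uV)$ is literally $v\mapsto\cE(\rho(uv))=\cE(\rho(uV))$, and since $\rho(uV)\in\cC_{u}(V)$ we get $\cE(\rho(uV))\in\cE(\cC_{u}(V))$. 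Combining, $\cE(\rho(UV))\in \cE(\cC(U))\circ\cE(\cC_{U}(V))$, which proves the inclusion.

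For the ``in particular'' clause, assume $\cE(\cC(U))\subseteq\cC(U)$ and $\cE(\cC_{u}(V))\subseteq\cC_{u}(V)$ for all $u$. Then $\cE(\cC(U))\circ\cE(\cC_{U}(V))\subseteq \cC(U)\circ\cC_{U}(V)$ directly from Def.~\ref{def:chaining} (the defining conditions only become more restrictive when the underlying models shrink). Chaining this with the inclusion just proved yields $\cE(\cC(U)\circ\cC_{U}(V))\subseteq \cC(U)\circ\cC_{U}(V)$, i.e., closedness of the chain under $\cE$.

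There is essentially no obstacle: the only subtle point is remembering that the marginalization convention applies to $\cE(\rho(UV))$ because it is written in the form of a distribution indexed by $UV$ (the composition $\cE\circ\rho$ is applied pointwise in $uv$), as opposed to the compound-construction case where the convention is suspended. Linearity of $\cE$ is what makes the marginal commute with the map, so positivity is not actually used for the inclusion itself; it is merely needed to ensure that $\cE$ maps the ambient space $\cS(\Pfnt{E})$ into itself so the target model is well-defined.
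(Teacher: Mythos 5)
Your proof is correct and follows essentially the same route as the paper's: unfold Def.~\ref{def:chaining}, use linearity of $\cE$ to commute it with the marginal sum, and check the fiber condition pointwise in $u$. The only differences are cosmetic — you spell out the ``in particular'' clause and observe that positivity is not strictly used for the inclusion, both of which the paper leaves implicit.
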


\begin{proof}
  Let $\rho(UV)\in\cC(U)\circ\cC_{U}(V)$ and consider
  $\rho'(UV)=\cE(\rho(UV))$. Since $\rho'(U)=\cE(\rho(U))$, we have
  $\rho'(U)\in\cE(\cC(U)$).  Similarly, for each $u$,
  $\rho'(uV)=\cE(\rho(uV))\in\cE(\cC_{u}(V))$.  It follows that
  $\rho'(UV)\in\cE(\cC(U))\circ\cE(\cC_{U}(V))$.
\end{proof}

When the CV over which a model is defined consists of inputs and
outputs where we later condition on the inputs, we need to restrict
the composed models so that future inputs are effectively independent
of the past outputs given $\Pfnt{E}$ and the past inputs. Because
$\Pfnt{E}$ is quantum, this is formulated by means of a short quantum
Markov chain.  In the next definition, $\Sfnt{CZ}$ and $CZ$ are
separate CVs with no relationship assumed. In an experiment consisting
of a sequence of trials, $\Sfnt{CZ}$ are the outputs and inputs of the
trials so far, and $CZ$ is the output and input of the next trial.
\begin{definition}\label{def:condchaining}
  Let $\cC(\Sfnt{CZ})$ be a model for $\Sfnt{CZ}\Pfnt{E}$ and
  $\cC_{\Sfnt{CZ}}(CZ)$ a family of models for $CZ\Pfnt{E}$.  The set
  of models obtained by \emph{chaining $\cC(\Sfnt{CZ})$ and
    $\cC_{\Sfnt{CZ}}(CZ)$ with conditionally independent inputs} is
  written as $\cC(\Sfnt{CZ}) \circ_{Z|\Sfnt{Z}} \cC_{\Sfnt{CZ}}(CZ)$
  and consists of the members $\rho(\Sfnt{CZ}CZ)$ of $\cC(\Sfnt{CZ}) \circ
  \cC_{\Sfnt{CZ}}(CZ)$ such that $\rho(\Sfnt{CZ}Z)\in Z\leftrightarrow
  \Sfnt{Z}\Pfnt{E}\leftrightarrow \Sfnt{C}$.
\end{definition}

\subsection{Input-Output Models}

When considering models for $CZ\Pfnt{E}$, $Z$ is normally an input CV
that can be freely chosen in some sense. We may expect conditional
  distributions of $C$ given $Z=z$ are independent of $z$.  For
classical side information, this idea was captured with some
generality by models that are free for $Z$ in
Ref.~\cite{knill:qc2017a}.  For quantum side information, the
conditional constraints are captured by models for $(C|Z)\Pfnt{E}$
according to the next definition.

\begin{definition}
  $\cC(C|Z)$ is a \emph{model for $(C|Z)\Pfnt{E}$} if
  $\cC(C|Z)\subseteq \cS((C|Z)\Pfnt{E})$ and $\cC(C|Z)$ is closed
  under multiplication by non-negative real numbers.
\end{definition}
By default, the quantum system for $\cC(C|Z)$ is $\Pfnt{E}$
and we normally omit the phrase ``for $(C|Z)\Pfnt{E}$''.

If $\cC(Z)$ is a classical model for $Z$ and $\cC(C|Z)$ 
is a model for $(C|Z)\Pfnt{E}$, then we can formalize the idea 
that we freely choose inputs according to $\cC(Z)$ with the 
conditional distributions constrained by $\cC(C|Z)$ as follows:

\begin{definition}
  Let $\cC(Z)$ and $\cC(C|Z)$ be models where $\cC(Z)$ is classical.
  The \emph{free-for-$Z$ chaining
  of $\cC(Z)$ with $\cC(C|Z)$} is defined as
  \begin{align}
    \cC(Z)\ltimes\cC(C|Z)
    = \{\nu(Z)\rho(C|Z):\nu(Z)\in\cC(Z), \rho(C|Z)\in\cC(C|Z)\}.
  \end{align}
  If $\cC(Z)=[0,\infty)\mu(Z)$, we abbreviate
  $\cC(Z)\ltimes\cC(C|Z)=\mu(Z)\ltimes\cC(C|Z)$.
\end{definition}
Here is a more general form of free-for-$Z$ chaining that allows for
quantum side information on $Z$. 

\begin{definition}
  Let $\cC(Z)$ be a model for $Z\Pfnt{V}$ and $\cC(C|Z)$ a model for
  $(C|Z)\Pfnt{W}$. The \emph{free-for-$Z$ chaining of $\cC(Z)$ with
    $\cC(C|Z)$} is the model
  of $CZ\Pfnt{V}\Pfnt{W}$ 
  given by
  \begin{equation}
    \cC(Z)\ltimes\cC(C|Z) = \{\sigma(Z)\otimes \rho(C|Z):
      \sigma(Z)\in\cC(Z), \rho(C|Z)\in\cC(C|Z)\}.
  \end{equation}
\end{definition}

\subsection{Constructing Models for Experimental Configurations}
\label{sec:models:examples}

The models introduced above can represent all experimental
configurations involving quantum side information.  In particular,
they can represent configurations involving a sequence of trials with
devices that perform measurements based on random input choices. The
simplest case is where the side information is in a quantum system
$\Pfnt{E}$ that has no interaction with the experimental devices after
the experiment starts. If $\Pfnt{E}$ has independent dynamics during
the experiment and protocol, we can time-shift the dynamics to the
initial state and then treat $\Pfnt{E}$ as being static.  From the
point of view of the experimenter, the initial state of $\Pfnt{E}$ is
a density operator $\rho$. If the devices are quantum, then $\rho$ is
the marginal state of $\Pfnt{E}$ for the initial joint quantum state 
of the devices and $\Pfnt{E}$.  The joint state can depend on initial,
classical information that the experiment may depend on. We
condition on all such information and omit it from further
consideration.  By the end of the experiment classical data
$\Sfnt{CZ}$ is obtained, which includes the inputs $\Sfnt{Z}$ and
outputs $\Sfnt{C}$ of the devices.  The inputs come from a random
source, which must be modeled along with everything else, but is often
constrained to produce random bits independently of $\Pfnt{E}$ and the
devices. The relevant part of the final state is the joint state of
$\Sfnt{CZ}$ and $\Pfnt{E}$, which can be described by
$\rho(\Sfnt{CZ})$ and satisfies that
$\sum_{\Sfnt{cz}}\rho(\Sfnt{cz})=\rho$. The model must be formulated
so that any such final state that may be encountered is in the model.

We construct models by chaining individual trials.  Given that
$\Pfnt{E}$ does not interact with the results $\Sfnt{cz}$ of the
experiment so far, the (unnormalized) state of $\Pfnt{E}$ is
$\sigma=\rho(\Sfnt{cz})$, where $\rho(\Sfnt{CZ})$ is in the model
$\cC(\Sfnt{CZ})$ for the past.  The model $\cC_{\Sfnt{cz}}(CZ)$ for
the next trial may depend on the past and constrains on the results
$CZ$ of the next trial. The state of $\Pfnt{E}$ given the next
  trial results $cz$ and the past is $\sigma(cz)$, and we require
that $\sigma(CZ)$ is in $\cC_{\Sfnt{cz}}(CZ)$.  Thus chaining
$\cC(\Sfnt{CZ})$ with $\cC_{\Sfnt{CZ}}(CZ)$ according to
Def.~\ref{def:chaining} yields the model for the results including
$CZ$.

When chaining, the trial models are motivated by physical constraints
on the devices used.  For quantum experiments, the current
state $\rho_{\Pfnt{E}}=\rho(\Sfnt{cz})$
of $\Pfnt{E}$ must be related to a joint
state $\rho_{\Pfnt{ED}}$ of $\Pfnt{E}$ and the devices $\Pfnt{D}$ by
performing a  measurement on the devices $\Pfnt{D}$ and then 
tracing out $\Pfnt{D}$. We make no assumptions on the joint state and
its dependence on $\Sfnt{cz}$ other than the requirement that
$\rho(\Sfnt{CZ})$ is in the model for the past results.  The
experiment is constructed to constrain the way in which the devices
can use fresh random input $Z=z$ to perform a measurement during the
next trial. The constraints are typically described by constraints on
the $z$-dependent POVMs that are applied.  These may be modeled by a
single family $\fkP$ of POVMs, where the $z$-dependence is transferred
to structural constraints on the POVMs. For example, consider the
experimental configuration of a two-station, $l$-input, $m$-output
Bell test (the $(2,l,m)$-Bell-test configuration) with inputs $X,Y$
and outputs $A,B$ where the input distribution is uniform. In this
case, we have a factorization $\cV\otimes\cW$ of the devices' Hilbert
space for this trial and write the POVM in the form $P_{XA}\otimes
Q_{YB}$ where $\sum_{a}P_{xa}=\one/l$,
$\sum_{b}Q_{yb}=\one/l$. With $\fkP$ the set of all such
POVMs, the trial model becomes the model induced by $\fkP$ according
to Def.~\ref{def:inducedmodel}, and this model chains as desired with
the past.  See Sect.~\ref{sec:k22configs} for a detailed analysis of
$(k,2,2)$-Bell-test configurations.

In the trial model considered in the previous paragraph, the
observable probability distributions of the inputs and outputs form
the set of quantum-realizable distributions for this configuration,
which is a subset of non-signaling distributions. The distribution
$\mu(ABXY)$ is non-signaling if $\mu(A|XY)=\mu(A|X)$ and
$\mu(B|XY)=\mu(B|Y)$, so a station's observed output distribution does
not depend on the inputs of the other station. We can drop the
assumption that the devices are quantum and consider the trial model
where the only restriction is that conditional on $\Pfnt{E}$, the
observed probability distributions are non-signaling. This idea is
captured by the maximal extension of the non-signaling distributions
according to Def.~\ref{def:maxextension}.  While it is not realistic
at this time to think that super-quantum devices exist and can be
exploited by an otherwise quantum entity $\Pfnt{E}$, that randomness
can be generated for this model is of fundamental interest. Caution is
required when reusing super-quantum devices in multiple protcols as
composability may be compromised in ways that are not yet accounted for.

We remark that there is no restriction on the dynamics of the devices
between trials, nor is there any reason to explicitly represent this
dynamics.  The model keeps track only of the state of $\Pfnt{E}$,
and with the formulation of the trial models as maximal extensions or
induced models, any quantum systems or quantum operations that the
devices use over the course of the experiment are subsumed by the
trial models and the chaining constructions.

If the inputs are published or may become known to $\Pfnt{E}$, final
probabilities and entropies are conditioned on the inputs.  For
randomness generation, one option is to estimate the joint min-entropy
of inputs and outputs conditional on the side information and
eliminate the input entropy by subtracting the number of bits that
generated the inputs before applying an extractor, see
Protocol~\ref{prot:condimplicit}. For input distributions with low
entropy per trial, this is inefficient, so we need a direct method of
conditioning on inputs. Direct methods developed so far require that
model chaining is restricted to chaining with conditionally
independent inputs according to Def.~\ref{def:condchaining}, which
imposes an additional restriction on the relationship between the next
input and the past.  The conditional independence restriction is
satisfied if the input distribution is fixed and the inputs are
assumed to be independent of the devices and $\Pfnt{E}$. More
generally, it is satisfied if the source for the inputs has only
classical initial correlations with the devices and $\Pfnt{E}$, so
that given a classical part of $\Pfnt{E}$ the input distribution is
independent of the devices and the quantum part of $\Pfnt{E}$.

It is desirable to have models that can capture restricted
interactions between $\Pfnt{E}$ and the devices. Consider the case
where $\Pfnt{E}$ controls the source of the states used by the devices
for producing the outputs. We study the following two different types
of interactions. First,  we assume that the interaction is
representable by a strictly one-way communication, which means that
for a given trial, $\Pfnt{E}$ includes a subsystem $\Pfnt{S}$ that is
prepared and then transferred permanently to the devices.  All such
transfers can be time-shifted to before the protocol to return to the
situation of the strictly non-interacting $\Pfnt{E}$ already
discussed.  Second,  a more challenging and interesting situation we can study is where
$\Pfnt{E}$ learns the inputs of the past trials before preparing a
state and transferring it to the devices for the next trial. For this
situation we can start with the model for the past trials, close under
$Z$-conditional quantum operations, then use chaining, with
conditionally independent inputs if necessary.  The $Z$-conditional
quantum operations model the change of state of $\Pfnt{E}$ when
$\Pfnt{E}$ prepares a state in a source subsystem after having learned
the previous inputs and transfers the subsystem to the devices. 
In view of the QEF property presented as Lem.~\ref{lem:qef_cptpzclosure},
QEFs constructed under the first type of interaction works as well
under the second type of interaction. 

We finish this section with EAT models, which are the models that are
determined by EAT channel chains as required to apply the EAT for
randomness generation. The term ``EAT channel'' is from
Refs.~\cite{arnon-friedman:qc2016a,arnon-friedman:qc2018a}, but for an
authoritative definition and statement of the EAT, see
Ref.~\cite{dupuis:qc2016a}.  \Pc{The first two references forgot to
  specify that the min-tradeoff function has to be convex.}  An EAT
channel chain is a sequence of CPTP maps $\cN_{i}$ composed in a
specific way.  As defined in Ref.~\cite{arnon-friedman:qc2016a}
(Def. 5), $\cN_{i}$ is a CPTP map transforming system $\Pfnt{R}_{i-1}$
into $C_{i}Z_{i}\Pfnt{R}_{i}$, where $C_{i}$ here is $A_{i}B_{i}$
there and $Z_{i}$ here is $I_{i}$ there.  \Pc{The notation is
  different in Ref.~\cite{arnon-friedman:qc2018a}.} The systems
$\Pfnt{R}_{i}$ represent the devices used for trial $i$.  The
definition of EAT channels also includes a CV $X_{i}$ that is
determined by $C_{i}$ and $Z_{i}$.  Because it is determined, 
$X_{i}$ plays no role in our treatment.  For the EAT the CVs $X_{i}$
indirectly enable the possibility that the affine (or convex)
min-tradeoff function used in the EAT can quantify the final
conditional min-entropy in a way that depends on $i$.  This in turn
allows use of different types of trials in a single sequence, provided
that the type of the $i$'th trial is determined by information that
was or could have been public before the start of the trial. 
For QPE this is readily accounted for by the built-in option for
dependence on the past of both the models and the \QEFs.

The initial state of an EAT channel chain is a joint state of
$\Pfnt{R}_{0}\Pfnt{E}$. An experiment consists of applying the
$\cN_{i}$ sequentially to the system $\Pfnt{R}_{i-1}$ without touching
$\Pfnt{E}$ or the previously generated CVs. That is, for the $i$'th
trial, $\cN_{i}\otimes \one_{\Pfnt{E}}$ is applied to the quantum
systems.  The Markov chain condition applies at each step, namely for
the state after applying $\cN_{i}$ it is required that
\begin{equation}
  \Sfnt{C}_{<i} \leftrightarrow \Sfnt{Z}_{<i}\Pfnt{E}\leftrightarrow Z_{i}.
\end{equation}
Since after time-shifting one-way communications there is no interaction between $\Pfnt{E}$ and the devices (or
the CVs) after the initial state is determined, this fits the
non-interacting scenario introduced above. Each $\cN_{i}$ can be
expressed as a POVM $P^{(i)}_{C_{i}Z_{i}}$ of $\Pfnt{R_{i-1}}$ with
outcome $C_{i}Z_{i}$ followed by an outcome-conditional CPTP map to
transform $\Pfnt{R}_{i-1}$ into $\Pfnt{R}_{i}$. As far as the EAT is
concerned, the relevant properties are captured by associating with each
trial the model induced by $\fkP_{i}=\{P^{(i)}_{C_{i}Z_{i}}\}$ on
$C_{i}Z_{i}\Pfnt{E}$ in the sense that the EAT applies to chains of
these models.  After the experiment is formulated in terms of models
in our framework, the Markov chain condition for the EAT channel chain
is equivalent to the requirement that the model is chained with
conditionally independent inputs.

\section{Quantum Estimation Factors}
\label{sec:qefs}

\subsection{Definition and Equivalent Conditions}
\label{sec:qefdefs}

\begin{definition}
  The real-valued function $F(CZ)$ is a \emph{quantum estimation
    factor (\QEF) with power $\beta>0$ for $C|Z$ and the model
  $\cC(CZ)$} if $F(CZ)\geq 0$ and for all $\rho(CZ)\in\cC(CZ)$
  with $\rho\not=0$, $F(CZ)$
  satisfies the
  \emph{\QEF inequality with power $\beta$ at $\rho(CZ)$ for $C|Z$} given by
  \begin{equation}
    \sum_{cz}F(cz)\Rpow{\alpha}{\rho(cz)}{\rho(z)}
    \leq \Rpow{\alpha}{\rho}{\rho}=\tr(\rho).
    \label{eq:qefdef1}
  \end{equation}
  The real-valued function $F(CZ)$ is a \emph{Petz quantum estimation
    factor (\QEFP) with power $\beta>0$ for $C|Z$ and the model
    $\cC(CZ)$} if $F(CZ)\geq 0$ and for all $\rho(CZ)\in\cC(CZ)$ with
  $\rho\not=0$, $F(CZ)$ satisfies the \emph{\QEFP inequality with
    power $\beta$ at $\rho(CZ)$ for $C|Z$} given by
  \begin{equation}
    \sum_{cz}F(cz)\Ppow{\alpha}{\rho(cz)}{\rho(z)}
    \leq \Ppow{\alpha}{\rho}{\rho}=\tr(\rho).
    \label{eq:pefdef1}
  \end{equation}
\end{definition}
Both sides of the \QEF and \QEFP inequalities are positive
homogeneous of degree $1$ in $\rho(CZ)$. It follows that for $F(CZ)$
to be a \QEF (or \QEFP), it is necessary and sufficient that the \QEF
(or \QEFP) inequality holds for normalized distributions in $\cN(\cC(CZ))$.  For
normalized $\rho(CZ)$, the right-hand side of the \QEF and \QEFP inequalities
evaluate to $1$. We use \QEFPs primarily as a tool for constructing
\QEFs.

\begin{lemma}
  If $F(CZ)$ is a \QEFP with power $\beta\leq 1$, then $F(CZ)$ is a \QEF with power
  $\beta$.  This holds for all models.
\end{lemma}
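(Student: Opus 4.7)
The plan is to deduce the \QEF inequality directly from the \QEFP inequality using the fact that the Petz R\'enyi power dominates the sandwiched R\'enyi power pointwise, as established in Lemma~\ref{lem:petz_sandwiched}. The hypothesis $\beta \leq 1$ (equivalently $\alpha \leq 2$) is what makes the \QEFP meaningful in this paper's conventions, but the Petz-dominates-sandwiched inequality itself requires only $\alpha \geq 1$ (so that the Araki-Lieb-Thirring inequality goes in the needed direction).

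Concretely, I would fix an arbitrary $\rho(CZ) \in \cC(CZ)$ with $\rho \neq 0$ and chain the two inequalities at each value $cz$. Since $F(cz) \geq 0$ by the definition of a \QEFP, Lemma~\ref{lem:petz_sandwiched} gives
\begin{equation*}
  F(cz)\,\Rpow{\alpha}{\rho(cz)}{\rho(z)} \;\leq\; F(cz)\,\Ppow{\alpha}{\rho(cz)}{\rho(z)}
\end{equation*}
for every $cz$. Summing over $cz$ and then applying the \QEFP inequality~\eqref{eq:pefdef1} yields
\begin{equation*}
  \sum_{cz} F(cz)\,\Rpow{\alpha}{\rho(cz)}{\rho(z)}
  \;\leq\; \sum_{cz} F(cz)\,\Ppow{\alpha}{\rho(cz)}{\rho(z)}
  \;\leq\; \tr(\rho),
\end{equation*}
which is exactly the \QEF inequality~\eqref{eq:qefdef1} at $\rho(CZ)$. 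Since $\rho(CZ)$ was an arbitrary nonzero member of $\cC(CZ)$ and $F(CZ) \geq 0$, this certifies $F(CZ)$ as a \QEF with power $\beta$ for $C|Z$ and $\cC(CZ)$.

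There is no real obstacle: the only nontrivial ingredient is Lemma~\ref{lem:petz_sandwiched}, and the nonnegativity of $F$ is what lets pointwise domination survive the weighted summation. The restriction $\beta \leq 1$ is simply inherited from the standing convention $\alpha \leq 2$ under which \QEFPs are defined; no further use of it is needed in the argument.
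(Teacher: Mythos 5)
Your proof is correct and is essentially identical to the paper's: the paper also proves this by applying the pointwise domination $\Ppow{\alpha}{\sigma}{\tau} \geq \Rpow{\alpha}{\sigma}{\tau}$ from Lem.~\ref{lem:petz_sandwiched} to each summand of the \QEF inequality. Your added observation that the $\beta\leq 1$ hypothesis is only inherited from the standing convention under which \QEFPs are used, and plays no role in the chaining step itself, is accurate.
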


\begin{proof}
  It suffices to apply the inequality $\Ppow{\alpha}{\sigma}{\tau}
  \geq \Rpow{\alpha}{\sigma}{\tau}$ (Lem.~\ref{lem:petz_sandwiched})
  to each summand of the \QEF inequality.
\end{proof}

The next lemmas give conditions for \QEFs that can be used when $\cC$
is closed under appropriate $\pCP$ maps.  The first is an alternative
form that may be useful for finding \QEFs, particularly for the
special cases in Sect.~\ref{sec:qefspec}.  The second is needed when
constructing \QEFs by \QEF chaining.  We remind that according to
  our marginalization convention, if $\rho(CZ)$ is a state of
  $CZ\Pfnt{E}$, we write the marginal state of $\Pfnt{E}$ as
  $\rho=\sum_{cz}\rho(cz)$.  

\begin{lemma}
  \label{lem:qefdef2}
  Let $\cC(CZ)$ be a model such that for all $\tau(CZ)\in\cC(CZ)$ we
  have the closure condition $\tau^{\gamma/2}\tau(CZ) \tau^{\gamma/2}\in\cC(CZ)$ 
  for  $\gamma=\beta$ and $\gamma=-\beta/\alpha$. Then $F(CZ)$ is a
  \QEF with power $\beta$ for $C|Z$ and $\cC(CZ)$ iff $F\geq 0$ and
  for all $\tau(CZ)\in\cC$,
  \begin{equation}
    \sum_{cz}F(cz)    \Rpow{\alpha}{\tau^{\beta/2}\tau(cz)\tau^{\beta/2}}{\tau^{\beta/2}\tau(z)\tau^{\beta/2}}
    \leq \Rpow{\alpha}{\tau}{\one}=\tr(\tau^{\alpha}).
    \label{eq:qefdef2}
\end{equation}
\end{lemma}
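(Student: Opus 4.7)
The plan is to show that the two forms of the inequality are related by an invertible conjugation at the level of the distribution $\rho(CZ)$, and that the two closure hypotheses on $\cC(CZ)$ are exactly what is needed to transport membership in the model across this conjugation in both directions.

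For the forward direction, suppose $F$ is a \QEF and fix $\tau(CZ)\in\cC(CZ)$ with $\tau\ne 0$. Define
\begin{equation}
  \rho(CZ) \;=\; \tau^{\beta/2}\tau(CZ)\tau^{\beta/2}.
\end{equation}
By the closure assumption with $\gamma=\beta$, $\rho(CZ)\in\cC(CZ)$. Marginalizing yields $\rho(z)=\tau^{\beta/2}\tau(z)\tau^{\beta/2}$ and $\rho=\tau^{\beta/2}\tau\,\tau^{\beta/2}=\tau^{\alpha}$, so $\tr(\rho)=\tr(\tau^{\alpha})$. Applying the \QEF inequality~\eqref{eq:qefdef1} to $\rho(CZ)$ then produces exactly~\eqref{eq:qefdef2} for $\tau(CZ)$.

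For the converse, assume~\eqref{eq:qefdef2} holds and fix $\rho(CZ)\in\cC(CZ)$ with $\rho\ne 0$. Define
\begin{equation}
  \tau(CZ) \;=\; \rho^{-\beta/(2\alpha)}\rho(CZ)\rho^{-\beta/(2\alpha)},
\end{equation}
using the relative inverse of $\rho$. By the closure assumption with $\gamma=-\beta/\alpha$, $\tau(CZ)\in\cC(CZ)$. Using $\alpha-\beta=1$, a direct computation gives $\tau=\rho^{1-\beta/\alpha}=\rho^{1/\alpha}$ on the support of $\rho$, hence $\tau^{\beta/2}=\rho^{\beta/(2\alpha)}$ and $\tau^{\alpha}=\rho$ as operators on $\Supp(\rho)$. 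Because $0\leq\rho(cz)\leq\rho$ implies $\rho(cz)$ is supported on $\Supp(\rho)$, we have $\Pi\rho(cz)\Pi=\rho(cz)$ for the support projector $\Pi=\rho^{\beta/(2\alpha)}\rho^{-\beta/(2\alpha)}$, and similarly for $\rho(z)$. Therefore
\begin{align}
  \tau^{\beta/2}\tau(cz)\tau^{\beta/2} &= \rho(cz), &
  \tau^{\beta/2}\tau(z)\tau^{\beta/2} &= \rho(z), &
  \tr(\tau^{\alpha}) &= \tr(\rho).
\end{align}
Substituting into~\eqref{eq:qefdef2} for this particular $\tau(CZ)$ recovers the \QEF inequality~\eqref{eq:qefdef1} at $\rho(CZ)$.

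The only step that requires care is the bookkeeping of supports when working with the relative inverse $\rho^{-\beta/(2\alpha)}$ in the converse direction; this is handled cleanly by observing that $\rho(cz)\leq\rho(z)\leq\rho$ ensures everything takes place on $\Supp(\rho)$, so the conjugation by $\rho^{\pm\beta/(2\alpha)}$ acts as the identity on the relevant operators. The non-negativity of $F$ is preserved trivially in both directions, completing the equivalence.
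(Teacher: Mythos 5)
Your proposal is correct and follows essentially the same route as the paper's own proof: the forward direction sets $\rho(CZ)=\tau^{\beta/2}\tau(CZ)\tau^{\beta/2}$ and applies the \QEF inequality, and the converse sets $\tau(CZ)=\rho^{-\beta/(2\alpha)}\rho(CZ)\rho^{-\beta/(2\alpha)}$ and inverts the conjugation, with the two closure hypotheses used exactly where you use them. The only difference is that you spell out the support bookkeeping (the projector $\Pi$ acting trivially on $\rho(cz)$ and $\rho(z)$ because their supports lie in $\Supp(\rho)$), which the paper leaves implicit; that is a clean and worthwhile addition, not a change of approach.
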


The closure condition in the lemma is satisfied if $\cC(CZ)$ is
$\pCP$-closed.

\begin{proof}
  Suppose that $F(CZ)$ is a \QEF with power $\beta$ for $C|Z$ and $\cC(CZ)$.
  Then $F(CZ)\geq 0$. For any $\tau(CZ)\in\cC$, define
  $\rho(CZ)=\tau^{\beta/2}\tau(CZ)\tau^{\beta/2}\in\cC(CZ)$.  Since
  $\rho=\tau^{\alpha}$, the right-hand side of Eq.~\ref{eq:qefdef2} is
  $\tr(\rho)$, matching the right-hand side of the \QEF inequality at
  $\rho(CZ)$.  Since $\rho(Z)=\tau^{\beta/2}\tau(Z)\tau^{\beta/2}$,
  the left-hand side of Eq.~\ref{eq:qefdef2} matches that of the \QEF
  inequality. Since the \QEF inequality at $\rho(CZ)$ is satisfied 
  by assumption, so is Eq.~\ref{eq:qefdef2}.

  Suppose that $F(CZ)$ satisfies the condition in the lemma. Then
  $F(CZ)\geq 0$. To show that $F(CZ)$ is a \QEF, consider any
  $\rho(CZ)\in\cC(CZ)$. To verify the \QEF inequality at $\rho(CZ)$, we
  reverse the transformation of the previous paragraph by defining
  $\tau(CZ)=\rho^{-\beta/(2\alpha)}\rho(CZ)\rho^{-\beta/(2\alpha)}\in\cC(CZ)$.
  We have $\tau=\rho^{1/\alpha}$, so
  $\tau^{\beta/2}\tau(CZ)\tau^{\beta/2}=\rho(CZ)$ and
  $\tau^{\beta/2}\tau(Z)\tau^{\beta/2}=\rho(Z)$. The expressions in
  Eq.~\ref{eq:qefdef2} are therefore identical to the corresponding
  ones in the \QEF inequality at $\rho(CZ)$, so the former implies the
  latter, as desired.
\end{proof}

\begin{lemma}\label{lem:qefdef3}
  Let $F(CZ)$ be a \QEF with power $\beta$ for $C|Z$ and $\cC(CZ)$.
  Consider $\sigma(CZ)\in\cC(CZ)$ and $\zeta(Z)\in\cS(Z\Pfnt{E})$ such
  that $\sigma(Z)\ll\zeta(Z)$ and define
  \begin{align}
    \xi(UZ)&=\sigma(Z)\knuth{U=0}+\zeta(Z)\knuth{U=1},\notag\\
    \chi&=\zeta^{-\beta/(2\alpha)}\sigma\zeta^{-\beta/(2\alpha)},\notag\\
    \rho(CZ)&=
    \chi^{\beta/2}
    \zeta^{-\beta/(2\alpha)}\sigma(CZ)\zeta^{-\beta/(2\alpha)}
    \chi^{\beta/2},
  \end{align}
  where $\Rng(U)=\{0,1\}$.
  If $\xi(UZ)\in U\leftrightarrow\Pfnt{E}\leftrightarrow Z$ and
  $\rho(CZ)\in\cC(CZ)$, then
  \begin{equation}
    \sum_{cz}F(cz)\Rpow{\alpha}{\sigma(cz)}{\zeta(z)} \leq
    \Rpow{\alpha}{\sigma}{\zeta}.
    \label{eq:qefdef3}
  \end{equation}
\end{lemma}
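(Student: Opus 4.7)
The plan is to apply the \QEF inequality at $\rho(CZ)$, which is in $\cC(CZ)$ by hypothesis, and to identify its two sides with the corresponding expressions in Eq.~\ref{eq:qefdef3}. For the right-hand side, the definition of $\rho$ gives $\rho=\chi^{\beta/2}\chi\chi^{\beta/2}=\chi^{\alpha}$, hence $\tr(\rho)=\tr(\chi^{\alpha})=\Rpow{\alpha}{\sigma}{\zeta}$, matching the right-hand side of Eq.~\ref{eq:qefdef3}. (If $\sigma=0$ then $\sigma(CZ)=0$, so both sides of Eq.~\ref{eq:qefdef3} vanish; henceforth I may assume $\sigma\neq 0$, and since $\sigma\ll\zeta$ this forces $\chi\neq 0$ and hence $\rho\neq 0$, so the \QEF inequality is applicable.) The whole argument therefore reduces to the termwise identity
\begin{equation}
  \Rpow{\alpha}{\rho(cz)}{\rho(z)}=\Rpow{\alpha}{\sigma(cz)}{\zeta(z)} \qquad \textrm{for every $cz$.}
\end{equation}

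To prove this identity I would unpack the short quantum Markov chain hypothesis $\xi(UZ)\in U\leftrightarrow\Pfnt{E}\leftrightarrow Z$, which provides a factorization $\cH(\Pfnt{E})=\bigoplus_{k}\cU_{k}\otimes\cV_{k}\oplus\cR$ together with positive operators $a_{k},b_{k}\in\cS(\cU_{k})$ and state-valued distributions $\tau_{k}(Z)$ such that
\begin{equation}
  \sigma(Z)=\bigoplus_{k}a_{k}\otimes\tau_{k}(Z), \qquad \zeta(Z)=\bigoplus_{k}b_{k}\otimes\tau_{k}(Z),
\end{equation}
with $a_{k}\ll b_{k}$ in each block (forced by $\sigma\ll\zeta$). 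Setting $c_{k}=b_{k}^{-\beta/(2\alpha)}a_{k}b_{k}^{-\beta/(2\alpha)}$ and using $1-\beta/\alpha=1/\alpha$, the products defining $\chi$ collapse termwise because the $\tau_{k}$ exponents combine cleanly, yielding $\chi=\bigoplus_{k}c_{k}\otimes\tau_{k}^{1/\alpha}$ and $\chi^{\beta/2}=\bigoplus_{k}c_{k}^{\beta/2}\otimes\tau_{k}^{\beta/(2\alpha)}$.

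With these in hand I would compute $\rho(z)$ and the ``sandwich'' operator $M=\rho(z)^{-\beta/(2\alpha)}\chi^{\beta/2}\zeta^{-\beta/(2\alpha)}$ in this block decomposition. A direct term-by-term calculation gives $\rho(z)=\bigoplus_{k}c_{k}^{\alpha}\otimes\tau_{k}(z)$ (all $\tau_{k}^{\pm\beta/(2\alpha)}$ factors cancel against themselves or the projector on $\Supp(\tau_{k})$), and $M=\bigoplus_{k}\Pi_{k}^{c}b_{k}^{-\beta/(2\alpha)}\otimes\tau_{k}(z)^{-\beta/(2\alpha)}$, where $\Pi_{k}^{c}$ is the projector onto $\Supp(c_{k})$. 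Since $M$ is Hermitian, the definition of $\rho(cz)$ yields
\begin{equation}
  \rho(z)^{-\beta/(2\alpha)}\rho(cz)\rho(z)^{-\beta/(2\alpha)} = M\sigma(cz)M,
\end{equation}
whose $\alpha$-th-power trace I would identify with $\Rpow{\alpha}{\sigma(cz)}{\zeta(z)}$.

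The main obstacle will be the support bookkeeping: in general $\Supp(c_{k})$ may be strictly smaller than $\Supp(b_{k})$, so $M\neq\zeta(z)^{-\beta/(2\alpha)}$ on all of $\Supp(\zeta(z))$. I plan to resolve this by noting that $\sigma(cz)\leq\sigma(z)$ forces $\Supp(\sigma(cz))\subseteq\bigoplus_{k}\Supp(a_{k})\otimes\Supp(\tau_{k}(z))$, and on this subspace one directly verifies $\Pi_{k}^{c}b_{k}^{-\beta/(2\alpha)}P_{a_{k}}=b_{k}^{-\beta/(2\alpha)}P_{a_{k}}$ (with $P_{a_{k}}$ the projector onto $\Supp(a_{k})$), because $b_{k}^{-\beta/(2\alpha)}$ maps $\Supp(a_{k})$ bijectively onto $\Supp(c_{k})$. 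This implies $M\sigma(cz)M=\zeta(z)^{-\beta/(2\alpha)}\sigma(cz)\zeta(z)^{-\beta/(2\alpha)}$. Taking $\tr(\cdot^{\alpha})$ completes the termwise identity, and the claim then follows immediately from the \QEF inequality applied to $\rho(CZ)$. A cleaner but less direct alternative, should the support analysis prove fragile, would be to perturb $a_{k}\mapsto a_{k}+\varepsilon\one$ and $b_{k}\mapsto b_{k}+\varepsilon\one$, run the full-support argument, and then pass to the limit $\varepsilon\searrow 0$ via continuity of the R\'enyi powers (Lem.~\ref{lem:rp_continuity}).
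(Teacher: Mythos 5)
Your high-level strategy matches the paper's: reduce both sides of the inequality to the \QEF inequality at $\rho(CZ)$, match the right-hand sides via $\rho=\chi^{\alpha}$, and establish the termwise identity $\Rpow{\alpha}{\rho(cz)}{\rho(z)}=\Rpow{\alpha}{\sigma(cz)}{\zeta(z)}$ using the Markov-chain block decomposition. Your block computations of $\chi$, $\chi^{\beta/2}$, $\rho(z)$, and $M$ are all correct, and the support identity $\Pi_{k}^{c}b_{k}^{-\beta/(2\alpha)}P_{a_{k}}=b_{k}^{-\beta/(2\alpha)}P_{a_{k}}$ (because $b_{k}^{-\beta/(2\alpha)}$ restricted to $\Supp(b_{k})$ is injective and carries $\Supp(a_{k})$ onto $\Supp(c_{k})$) is exactly the key nontrivial observation.

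There is, however, one genuine error: you assert $M$ is Hermitian. From your own formula $M=\bigoplus_{k}\Pi_{k}^{c}b_{k}^{-\beta/(2\alpha)}\otimes\tau_{k}(z)^{-\beta/(2\alpha)}$, the block $\Pi_{k}^{c}b_{k}^{-\beta/(2\alpha)}$ is Hermitian only if $\Pi_{k}^{c}$ commutes with $b_{k}$, i.e.\ only if $\Supp(c_{k})=b_{k}^{-\beta/(2\alpha)}(\Supp(a_{k}))$ is $b_{k}$-invariant. That holds when $[P_{a_{k}},b_{k}]=0$ but fails in general (take $a_{k}$ a rank-one projector onto a non-eigenvector of $b_{k}$). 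Consequently the display $\rho(z)^{-\beta/(2\alpha)}\rho(cz)\rho(z)^{-\beta/(2\alpha)}=M\sigma(cz)M$ should read $M\sigma(cz)M^{\dagger}$, since $M^{\dagger}=\zeta^{-\beta/(2\alpha)}\chi^{\beta/2}\rho(z)^{-\beta/(2\alpha)}$ is what appears on the right of $\sigma(cz)$. Fortunately the rest of your argument repairs this automatically: the support identity (and its adjoint $P_{a_{k}}b_{k}^{-\beta/(2\alpha)}\Pi_{k}^{c}=P_{a_{k}}b_{k}^{-\beta/(2\alpha)}$) shows $MP=\zeta(z)^{-\beta/(2\alpha)}P$ for the projector $P=\bigoplus_{k}P_{a_{k}}\otimes\suppproj{\tau_{k}(z)\not=0}$ onto $\Supp(\sigma(z))$, and since $\sigma(cz)=P\sigma(cz)P$, one gets $M\sigma(cz)M^{\dagger}=\zeta(z)^{-\beta/(2\alpha)}\sigma(cz)\zeta(z)^{-\beta/(2\alpha)}$ directly, which is all you need. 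So: drop the Hermitianity claim, replace $M\sigma(cz)M$ by $M\sigma(cz)M^{\dagger}$, and the proof is correct. One small bonus of your route over the paper's: you obtain an outright operator equality rather than the paper's spectral equivalence $\sim_{U}$ obtained by cycling $\sigma(CZ)^{1/2}$ to the outside; the paper's reorganization sidesteps computing $\rho(z)^{-\beta/(2\alpha)}$ explicitly, while yours requires it but yields a slightly sharper intermediate statement.
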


The condition $\rho(CZ)\in\cC(CZ)$ is satisfied if $\cC(CZ)$ is
$\pCP$-closed. The main purpose of the lemma is to enable a change in
the conditioner in the \QEF inequality from the marginal state to 
another one. This requires conditions on the relationship between the
two conditioners. The conditions are expressed by introducing the
auxiliary CV $U$ and state $\xi(UZ)$ and include the short Markov
chain condition in the lemma.  The lemma simplifies in the absence of
inputs or when the input distribution is fixed and known,
see the next section.

\begin{proof}
  By the definition of short quantum Markov chains, there is a
  factorization
  $\cH(\Pfnt{E})=\bigoplus_{i}\cU_{i}\otimes\cZ_{i}\;\oplus\cR$ such
  that $\sigma(Z)=\bigoplus_{i}\sigma_{i}\otimes\xi_{i}(Z)$ and
  $\zeta(Z)=\bigoplus_{i}\zeta_{i}\otimes\xi_{i}(Z)$, where $\sigma(Z)\ll\zeta(Z)$ implies
  $\sigma_{i}\ll\zeta_{i}$ for each $i$.  In order to
  derive the inequality in Eq.~\ref{eq:qefdef3} from the \QEF
  inequality, we can assure a match of the right-hand sides with
  \begin{align} 
    \rho&= 
    \left(\zeta^{-\beta/(2\alpha)}\sigma\zeta^{-\beta/(2\alpha)}\right)^{\alpha}\notag\\
    &=
    \left(\zeta^{-\beta/(2\alpha)}\sigma\zeta^{-\beta/(2\alpha)}\right)^{\beta/2}
    \zeta^{-\beta/(2\alpha)}\sigma\zeta^{-\beta/(2\alpha)}
    \left(\zeta^{-\beta/(2\alpha)}\sigma\zeta^{-\beta/(2\alpha)}\right)^{\beta/2}.
  \end{align}
  This motivates the definitions of $\chi$ and $\rho(CZ)$.
  The support assumptions ensure that the
  supports of $\sigma$ and $\sigma(CZ)$ are contained in that of
  $\zeta$.

  For a match of the left-hand sides of the target inequalities, we
  need to verify that
  $\Rpow{\alpha}{\rho(CZ)}{\rho(Z)}=\Rpow{\alpha}{\sigma(CZ)}{\zeta(Z)}$.
  For this it suffices that
  \begin{equation}
    \rho(Z)^{-\beta/(2\alpha)}\rho(CZ)\rho(Z)^{-\beta/(2\alpha)}
    \sim_{U} \zeta(Z)^{-\beta/(2\alpha)}\sigma(CZ)\zeta(Z)^{-\beta/(2\alpha)},
  \end{equation}
  where $\sim_{U}$ denotes equality up to conjugation by a unitary operator, or
  equivalently, that the two sides have the same spectrum with
  multiplicities.  The support assumptions ensure that the support of
  $\sigma(CZ)$ is contained in that of $\zeta(Z)$ for the right-hand
  side of the spectral equivalence.  Starting from the left-hand side, we
  get
  \begin{align}
    \rho(Z)^{-\beta/(2\alpha)}\rho(CZ)\rho(Z)^{-\beta/(2\alpha)} \hspace*{-1.5in}&\notag\\
    &=
    \rho(Z)^{-\beta/(2\alpha)}\chi^{\beta/2}
    \zeta^{-\beta/(2\alpha)}\sigma(CZ)\zeta^{-\beta/(2\alpha)}
    \chi^{\beta/2}\rho(Z)^{-\beta/(2\alpha)} \notag\\
    &\sim_{U} \sigma(CZ)^{1/2} \zeta^{-\beta/(2\alpha)}
    \chi^{\beta/2}\rho(Z)^{-\beta/\alpha}\chi^{\beta/2}
    \zeta^{-\beta/(2\alpha)} \sigma(CZ)^{1/2}\notag\\
    &= \sigma(CZ)^{1/2} \zeta^{-\beta/(2\alpha)} \chi^{\beta/2}\left(\chi^{\beta/2}
      \zeta^{-\beta/(2\alpha)}\sigma(Z)\zeta^{-\beta/(2\alpha)}
      \chi^{\beta/2}\right)^{-\beta/\alpha} \chi^{\beta/2}\zeta^{-\beta/(2\alpha)}
    \sigma(CZ)^{1/2},\label{eq:lem:qefdef3:A}
  \end{align}
  where the equivalence in the third line follows from $A^{\dagger}A
  \sim_{U} A A^{\dagger}$ for all operators $A$.  The expression
  between the two terms $\sigma(CZ)^{1/2}$ factors with respect to the
  representation of $\cH(\Pfnt{E})$, so we can compute each factor
  separately.  First determine
  \begin{equation}
    \chi = \bigoplus_{i}\zeta_{i}^{-\beta/(2\alpha)}\sigma_{i}\zeta_{i}^{-\beta/(2\alpha)}
      \otimes \xi_{i}^{1/\alpha}
  \end{equation}
  and define
  $\chi_{i}=\zeta_{i}^{-\beta/(2\alpha)}\sigma_{i}\zeta_{i}^{-\beta/(2\alpha)}$
  so that $\chi=\bigoplus_{i}\chi_{i}\otimes \xi_{i}^{1/\alpha}$.
  From this,
  \begin{equation}
    \chi^{\beta/2}\zeta^{-\beta/(2\alpha)} 
    =\bigoplus_{i}\chi_{i}^{\beta/2}\zeta_{i}^{-\beta/(2\alpha)}\otimes\one_{i},
  \end{equation}
  where $\one_{i}$ is the projector onto the support of $\xi_{i}$ in
  $\cZ_{i}$.  Since
  $\sigma(Z)=\bigoplus_{i}\sigma_{i}\otimes\xi_{i}(Z)$, we have for
  the inner expression on the right-hand side of
  Eq.~\ref{eq:lem:qefdef3:A}
  \begin{align}
    \left(\chi^{\beta/2}
      \zeta^{-\beta/(2\alpha)}\sigma(Z)\zeta^{-\beta/(2\alpha)}
      \chi^{\beta/2}\right)^{-\beta/\alpha} &=
    \bigoplus_{i}\left(\chi_{i}^{\beta/2}\zeta_{i}^{-\beta/(2\alpha)}
      \sigma_{i}\zeta_{i}^{-\beta/(2\alpha)}\chi_{i}^{\beta/2}\right)^{-\beta/\alpha}
    \otimes \xi_{i}(Z)^{-\beta/\alpha} \notag\\
    &= \bigoplus_{i}
    \left(\chi_{i}^{\beta/2}\chi_{i}\chi_{i}^{\beta/2}\right)^{-\beta/\alpha}\otimes
    \xi_{i}(Z)^{-\beta/\alpha}
    \notag\\
    &= \bigoplus_{i}\chi_{i}^{-\beta}\otimes\xi_{i}(Z)^{-\beta/\alpha}.
  \end{align}
  Define the support projectors $\Pi_{i}=\suppproj{\chi_{i}}$ and
  $\Pi=\suppproj{\chi}=\bigoplus_{i}\Pi_{i}\otimes\one_{i}$.  Substituting
  the identities obtained and continuing from the end of
  Eq.~\ref{eq:lem:qefdef3:A} we get
  \begin{align}
    \rho(Z)^{-\beta/(2\alpha)}\rho(CZ)\rho(Z)^{-\beta/(2\alpha)}
    \hspace*{-1.5in}&\notag\\
    &\sim_{U}\sigma(CZ)^{1/2}\left(
      \bigoplus_{i}\zeta_{i}^{-\beta/(2\alpha)}\chi_{i}^{\beta/2}
      \chi_{i}^{-\beta}
      \chi_{i}^{\beta/2}\zeta_{i}^{-\beta/(2\alpha)}\otimes
      \xi_{i}(Z)^{-\beta/\alpha}
    \right)\sigma(CZ)^{1/2}\notag\\
    &=\sigma(CZ)^{1/2}\left(\bigoplus_{i}
      \zeta_{i}^{-\beta/(2\alpha)}\Pi_{i}\zeta_{i}^{-\beta/(2\alpha)}\otimes
      \xi_{i}(Z)^{-\beta/\alpha}
    \right)\sigma(CZ)^{1/2}\notag\\
    &=\sigma(CZ)^{1/2}\left(\bigoplus_{i}
      \zeta_{i}^{-\beta/(2\alpha)}\Pi_{i}\zeta_{i}^{-\beta/(2\alpha)}\otimes
      \xi_{i}(Z)^{-\beta/(2\alpha)}\one_{i}\xi_{i}(Z)^{-\beta/(2\alpha)}
    \right)\sigma(CZ)^{1/2}\notag\\
    &=\sigma(CZ)^{1/2}\zeta(Z)^{-\beta/(2\alpha)}
    \Pi\zeta(Z)^{-\beta/(2\alpha)}\sigma(CZ)^{1/2}\notag\\
    &\sim_{U}\Pi\zeta(Z)^{-\beta/(2\alpha)}\sigma(CZ)\zeta(Z)^{-\beta/(2\alpha)}\Pi.
    \label{eq:lem:qefdef3:last}
  \end{align}
  The support of $\zeta(Z)^{-\beta/(2\alpha)}\sigma(CZ)\zeta(Z)^{-\beta/(2\alpha)}$
  is contained in that of 
  $\zeta(Z)^{-\beta/(2\alpha)}\sigma(Z)\zeta(Z)^{-\beta/(2\alpha)}$,
  which is the direct sum of the supports of 
  $\zeta_{i}^{-\beta/(2\alpha)}\sigma_{i}\zeta_{i}^{-\beta/(2\alpha)}\otimes \xi_{i}(Z)^{1/\alpha}$
  and therefore contained in the support of $\chi$.
  The support projector $\Pi$ can therefore be eliminated from the final
  expression in Eq.~\ref{eq:lem:qefdef3:last} to finish the proof of the lemma.
\end{proof}

\subsection{\QEF Conditions for Special Cases}
\label{sec:qefspec}

The conditions in Eqs.~\ref{eq:qefdef1},~\ref{eq:qefdef2}
and~\ref{eq:qefdef3} simplify when the probability distribution of $Z$
is given and independent of $\Pfnt{E}$.
\begin{lemma}
  \label{lem:qeffreeforz}
  Let $\mu(Z)$ be a probability distribution and
  $\cC(CZ)=\mu(Z)\ltimes\cC(C|Z)$.  Consider $F(CZ)\geq 0$.  Then
  $F(CZ)$ is a \QEF with power $\beta$ for $C|Z$ and $\cC(CZ)$ iff for all
  $\rho(CZ)\in\cC(CZ)$,
  \begin{equation}
    \sum_{cz}F(cz)\mu(z)\Rpow{\alpha}{\rho(c|z)}{\rho}
    \leq \tr(\rho).\label{eq:qefdef1_nu}
  \end{equation}
  If $\cC(C|Z)$ is $\pCP$-closed,
  then $F(CZ)$ is a \QEF with power $\beta$ for $C|Z$ and $\cC(CZ)$ iff for all
  $\tau(CZ)\in\cC(CZ)$,
  \begin{equation}
    \sum_{cz}F(cz)\mu(z)\tr(\tau(c|z)^{\alpha})
    \leq \tr(\tau^{\alpha}).\label{eq:qefdef2_nu}
  \end{equation}
  If $\cC(C|Z)$ is $\pCP$-closed and $F(CZ)$ is a \QEF with power $\beta$
  for $C|Z$ and $\cC(CZ)$, then for all $\sigma(CZ)\in\cC(CZ)$ and
  $\zeta\gg\sigma$,
  \begin{equation}
    \sum_{cz}F(cz)\mu(z)\Rpow{\alpha}{\sigma(c|z)}{\zeta} \leq
    \Rpow{\alpha}{\sigma}{\zeta}.
    \label{eq:qefdef3_nu}
  \end{equation}
\end{lemma}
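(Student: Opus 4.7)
My plan is to derive the three statements in order, using each to establish the next and appealing to the general Lemmas~\ref{lem:qefdef2} and~\ref{lem:qefdef3} only where they fit cleanly.

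For Eq.~\ref{eq:qefdef1_nu}, I would exploit the simultaneous-scaling identity $\Rpow{\alpha}{\lambda\xi}{\lambda\chi}=\lambda^{\alpha-\beta}\Rpow{\alpha}{\xi}{\chi}=\lambda\Rpow{\alpha}{\xi}{\chi}$ (using $\alpha-\beta=1$), which is immediate from the definition of the sandwiched R\'enyi power. Any $\rho(CZ)\in\cC(CZ)=\mu(Z)\ltimes\cC(C|Z)$ factors as $\rho(cz)=\mu(z)\rho(c|z)$ with $\rho(C|z)$ independent of $z$, so $\rho(z)=\mu(z)\rho$. Pulling a factor of $\mu(z)$ out of each R\'enyi power converts the \QEF inequality Eq.~\ref{eq:qefdef1} at $\rho(CZ)$ into Eq.~\ref{eq:qefdef1_nu} term by term, and the argument is reversible since every element of $\cC(CZ)$ has this form.

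For Eq.~\ref{eq:qefdef2_nu}, I would apply Lem.~\ref{lem:qefdef2}. First I check that $\pCP$-closure of $\cC(C|Z)$ lifts to the closure condition $\tau^{\gamma/2}\tau(CZ)\tau^{\gamma/2}\in\cC(CZ)$ for $\gamma\in\{\beta,-\beta/\alpha\}$: writing $\tau(cz)=\mu(z)\tau(c|z)$ with marginal $\tau=\sum_c\tau(c|z)$, the transformed object is $\mu(Z)$ times the $\pCP$ image of $\tau(C|Z)$ under $\xi\mapsto\tau^{\gamma/2}\xi\tau^{\gamma/2}$, which lies in $\cC(C|Z)$. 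Then Eq.~\ref{eq:qefdef2} is available. Substituting $\tau(cz)=\mu(z)\tau(c|z)$ yields $\tau^{\beta/2}\tau(cz)\tau^{\beta/2}=\mu(z)\tau^{\beta/2}\tau(c|z)\tau^{\beta/2}$ and $\tau^{\beta/2}\tau(z)\tau^{\beta/2}=\mu(z)\tau^{\alpha}$; the same scaling identity collapses each R\'enyi power to $\mu(z)\tr(\tau(c|z)^{\alpha})$, giving Eq.~\ref{eq:qefdef2_nu}.

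For Eq.~\ref{eq:qefdef3_nu} I would \emph{not} route through Lem.~\ref{lem:qefdef3}. The natural substitution $\zeta(Z)=\mu(Z)\zeta$ makes the auxiliary $\rho(CZ)$ of that lemma acquire a factor $\mu(z)^{1/\alpha}$ rather than $\mu(z)$, so $\rho(CZ)\notin\mu(Z)\ltimes\cC(C|Z)$ and the hypothesis $\rho(CZ)\in\cC(CZ)$ of Lem.~\ref{lem:qefdef3} fails. Instead, given $\sigma(CZ)\in\cC(CZ)$ and $\zeta\gg\sigma$, I would set $\tau(C|Z)=\zeta^{-\beta/(2\alpha)}\sigma(C|Z)\zeta^{-\beta/(2\alpha)}$, which is a $Z$-independent $\pCP$ image of $\sigma(C|Z)$ and hence in $\cC(C|Z)$ by $\pCP$-closure. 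Then $\tau(CZ)=\mu(Z)\tau(C|Z)\in\cC(CZ)$, and applying Eq.~\ref{eq:qefdef2_nu} to $\tau(CZ)$ gives the result once one identifies $\tr(\tau^{\alpha})=\Rpow{\alpha}{\sigma}{\zeta}$ and $\tr(\tau(c|z)^{\alpha})=\Rpow{\alpha}{\sigma(c|z)}{\zeta}$. The principal obstacle is this third part: Lem.~\ref{lem:qefdef3}'s conditioner-change construction is incompatible with the free-for-$Z$ product structure (the $\mu(z)^{1/\alpha}$ mismatch breaks the independence needed for membership in $\cC(CZ)$), so one must bypass it and use $\pCP$-closure of $\cC(C|Z)$ inside the reformulation of part~2 instead.
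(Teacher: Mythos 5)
Your proofs of Eqs.~\ref{eq:qefdef1_nu} and~\ref{eq:qefdef2_nu} track the paper's exactly (substitution into the \QEF definition and into Lem.~\ref{lem:qefdef2}, respectively), and the degree-one scaling identity and the $\pCP$-closure check you spell out are correct.

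For Eq.~\ref{eq:qefdef3_nu}, your route is valid and arguably cleaner: push $\sigma(C|Z)$ through the single $Z$-independent $\pCP$ map $\xi\mapsto\zeta^{-\beta/(2\alpha)}\xi\zeta^{-\beta/(2\alpha)}$ to land in $\cC(C|Z)$, then feed $\mu(Z)\tau(C|Z)$ into Eq.~\ref{eq:qefdef2_nu}, which you have already established. You never touch the auxiliary CV $U$ or the quantum Markov-chain hypothesis. The paper instead specializes Lem.~\ref{lem:qefdef3} with $\zeta(Z)=\mu(Z)\zeta$. However, the obstruction you cite for that route is not real. You claim the auxiliary $\rho(CZ)$ of Lem.~\ref{lem:qefdef3} acquires a factor $\mu(z)^{1/\alpha}$ and hence leaves $\mu(Z)\ltimes\cC(C|Z)$. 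That would be the outcome of sandwiching $\sigma(cz)$ between the $Z$-indexed powers $\zeta(z)^{-\beta/(2\alpha)}=\mu(z)^{-\beta/(2\alpha)}\zeta^{-\beta/(2\alpha)}$, since $1-2\beta/(2\alpha)=1/\alpha$. But in Lem.~\ref{lem:qefdef3} the operators $\chi$ and the sandwiching powers in the definition of $\rho(CZ)$ are built from the \emph{marginals} $\zeta=\sum_z\zeta(z)$ and $\sigma=\sum_{cz}\sigma(cz)$ on $\Pfnt{E}$ alone, with no $Z$-dependence. Substituting $\sigma(cz)=\mu(z)\sigma(c|z)$ then gives
\begin{equation*}
  \rho(cz)=\mu(z)\,\chi^{\beta/2}\zeta^{-\beta/(2\alpha)}\sigma(c|z)\zeta^{-\beta/(2\alpha)}\chi^{\beta/2},
\end{equation*}
which is $\mu(z)$ times a $z$-independent $\pCP$-image of $\sigma(c|z)$, and therefore lies in $\mu(Z)\ltimes\cC(C|Z)$ by $\pCP$-closure of $\cC(C|Z)$. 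The short-Markov-chain condition also holds, since $\xi(UZ)=\big(\sigma\knuth{U=0}+\zeta\knuth{U=1}\big)\mu(Z)$ tensor-factors. So both routes are open; yours is a clean shortcut, not a repair of a broken argument.
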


\begin{proof}
  The first equivalence follows by substitution in the \QEF definition
  and the second by substitution in Lem.~\ref{lem:qefdef2}.  For the
  last claim, define $\zeta(Z)=\mu(Z)\zeta$.  The distribution
  $\xi(UZ)$ defined in Lem.~\ref{lem:qefdef3} can be written as
  $\xi(UZ)=\left(\sigma\knuth{U=0}+\zeta\knuth{U=1}\right)\mu(Z)$,
  which satisfies $\xi(UZ)\in U\leftrightarrow\Pfnt{E}\leftrightarrow
  Z$ with respect to the trivial factorization
  $\cH(\Pfnt{E})=\cH(\Pfnt{E})\otimes\cmplx$. The claim then follows
  by substitution in Eq.~\ref{eq:qefdef3}.
\end{proof}

The \QEF conditions further simplify in the absence of inputs, namely
when $Z$ is trivial and can be omitted.
\begin{lemma}
  \label{lem:qefnoz}
  Let $\cC(C)$ be a model and $F(C)\geq 0$. Then
  $F(C)$ is a \QEF with power $\beta$ for $\cC(C)$ iff for all
  $\rho(C)\in\cC(C)$,
  \begin{equation}
    \sum_{c}F(c)\Rpow{\alpha}{\rho(c)}{\rho}
    \leq \tr(\rho).\label{eq:qefdef1_noz}
  \end{equation}
  If $\cC$ is $\pCP$-closed,
  then $F(C)$ is a \QEF with power $\beta$ for $\cC(C)$ iff for all
  $\tau(C)\in\cC(C)$,
  \begin{equation}
    \sum_{c}F(c)\tr(\tau(c)^{\alpha})
    \leq \tr(\tau^{\alpha}).\label{eq:qefdef2_noz}
  \end{equation}
  If $\cC$ is $\pCP$-closed,
  and $F(C)$ is a \QEF with power $\beta$ for $\cC(C)$,
  then for all $\sigma(C)\in\cC(C)$ and $\zeta\gg\sigma$,
  \begin{equation}
    \sum_{c}F(c)\Rpow{\alpha}{\sigma(c)}{\zeta} \leq
    \Rpow{\alpha}{\sigma}{\zeta}.
    \label{eq:qefdef3_noz}
  \end{equation}
\end{lemma}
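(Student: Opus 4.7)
The plan is to obtain each of the three claims by specializing the corresponding result of Section~\ref{sec:qefdefs} to the case where $Z$ is trivial. When $Z$ is trivial we have $\rho(z)=\rho$ for every state-valued distribution, so the marginal over $\Pfnt{E}$ and the conditioner coincide, and various Markov-chain conditions collapse automatically.

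For the first equivalence, I would substitute $\rho(z)=\rho$ directly into Eq.~\ref{eq:qefdef1} to recover Eq.~\ref{eq:qefdef1_noz}; no further work is needed, since this is exactly the defining inequality of a \QEF restricted to the input-less setting.

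For the second equivalence, I would invoke Lem.~\ref{lem:qefdef2}, observing that $\pCP$-closure of $\cC$ implies the closure condition $\tau^{\gamma/2}\tau(C)\tau^{\gamma/2}\in\cC$ (it is the action of the $\pCP$ map $\rho\mapsto\tau^{\gamma/2}\rho\tau^{\gamma/2}$). With $Z$ trivial, the inner argument of the R\'enyi power on the left-hand side of Eq.~\ref{eq:qefdef2} reduces as
\begin{equation}
\Rpow{\alpha}{\tau^{\beta/2}\tau(c)\tau^{\beta/2}}{\tau^{\alpha}}
=\tr\!\left(\bigl(\tau^{-\beta/2}\tau^{\beta/2}\tau(c)\tau^{\beta/2}\tau^{-\beta/2}\bigr)^{\alpha}\right)
=\tr(\tau(c)^{\alpha}),
\end{equation}
using $(\tau^{\alpha})^{-\beta/(2\alpha)}=\tau^{-\beta/2}$ on the support of $\tau$, and the right-hand side $\tr(\tau^{\alpha})$ is already in the desired form. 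This reproduces Eq.~\ref{eq:qefdef2_noz}.

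For the third claim, I would specialize Lem.~\ref{lem:qefdef3}. With $Z$ trivial, the auxiliary state reduces to $\xi(U)=\sigma\,\knuth{U=0}+\zeta\,\knuth{U=1}$, and the short quantum Markov chain condition $U\leftrightarrow\Pfnt{E}\leftrightarrow Z$ is trivially satisfied because $Z$ is a one-point CV. The transported distribution
\begin{equation}
\rho(C)=\chi^{\beta/2}\zeta^{-\beta/(2\alpha)}\sigma(C)\zeta^{-\beta/(2\alpha)}\chi^{\beta/2},
\qquad\chi=\zeta^{-\beta/(2\alpha)}\sigma\zeta^{-\beta/(2\alpha)},
\end{equation}
is the image of $\sigma(C)$ under the $\pCP$ map $\rho\mapsto A\rho A^{\dagger}$ with $A=\chi^{\beta/2}\zeta^{-\beta/(2\alpha)}$, so $\rho(C)\in\cC(C)$ by $\pCP$-closure. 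The hypotheses of Lem.~\ref{lem:qefdef3} are therefore met, and Eq.~\ref{eq:qefdef3} specializes directly to Eq.~\ref{eq:qefdef3_noz}.

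There is no real obstacle here; each part is a routine verification that the technical side conditions of the parent lemmas (closure under $\pCP$ conjugation and the Markov property) become automatic once $Z$ is removed, and that the simplifying identity $(\tau^{\alpha})^{-\beta/(2\alpha)}=\tau^{-\beta/2}$ collapses the sandwiched R\'enyi power into a plain trace of a power. The only step that demands a moment's care is confirming the support assumptions when applying this identity, which is handled by the standing convention $\rho\ll\sigma$ in Def.~\ref{def:sandwiched_Renyi}.
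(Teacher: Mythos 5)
Your proof is correct and, modulo compression of an intermediate step, takes the same route as the paper: the paper's one-line proof simply invokes Lem.~\ref{lem:qeffreeforz} with $Z$ trivial, and since Lem.~\ref{lem:qeffreeforz} was itself obtained by specializing the \QEF definition, Lem.~\ref{lem:qefdef2}, and Lem.~\ref{lem:qefdef3} to a fixed input distribution $\mu(Z)$, your direct specialization of those same three results to a one-point $Z$ unfolds exactly the same substitutions. Your explicit verifications — that $\pCP$-closure supplies the closure hypothesis of Lem.~\ref{lem:qefdef2}, that $(\tau^{\alpha})^{-\beta/(2\alpha)}=\tau^{-\beta/2}$ on the support of $\tau$, and that the short quantum Markov chain condition holds via the trivial factorization $\cH(\Pfnt{E})=\cH(\Pfnt{E})\otimes\cmplx$ — match the content the paper leaves implicit in ``simplify.''
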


\begin{proof}
  Apply Lem.~\ref{lem:qeffreeforz} and simplify. 
\end{proof}

\subsection{\QEF Properties}
\label{sec:qefproperties}

\begin{lemma}\label{lem:trival_QEF}
  For $C|Z$ and all models, the function $F(CZ)=1$ is a \QEF with
    power $\beta$ for each $\beta>0$, and a \QEFP with power $\beta$
    for each $\beta\in (0,1]$. 
\end{lemma}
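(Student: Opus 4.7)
The plan is to recognize that the statement reduces immediately to the sum inequalities for R\'enyi powers already established as Lemma~\ref{lem:rp_sumineq}. Substituting $F \equiv 1$ into the \QEF inequality~\eqref{eq:qefdef1} yields the claim $\sum_{cz}\Rpow{\alpha}{\rho(cz)}{\rho(z)} \leq \tr(\rho)$ for every $\rho(CZ)$ in the model, with no reference to the model's structure. The right-hand side $\tr(\rho)$ comes from $\Rpow{\alpha}{\rho}{\rho}=\tr(\rho)$, which is immediate from $\rho^{-\beta/(2\alpha)}\rho\,\rho^{-\beta/(2\alpha)} = \suppproj{\rho}$ composed $\alpha$ times.

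The key observation is that the conditioner $\rho(z)$ only depends on $z$, so for each fixed $z$ the operators $\{\rho(cz)\}_c$ are non-negative and sum to $\rho(z)$. That is exactly the hypothesis of Lemma~\ref{lem:rp_sumineq}, which gives
\begin{equation}
\sum_{c}\Rpow{\alpha}{\rho(cz)}{\rho(z)} \leq \tr(\rho(z)).
\end{equation}
Summing this over $z$ and using $\sum_{z}\tr(\rho(z))=\tr(\rho)$ establishes the \QEF inequality. Since nothing was assumed about the model other than $\rho(CZ)$ being a valid state-valued distribution, the constant function $1$ qualifies as a \QEF with any power $\beta>0$ for every model.

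For the \QEFP claim, the argument is identical, except that we apply the Petz half of Lemma~\ref{lem:rp_sumineq}, whose hypothesis is precisely $\alpha\leq 2$, equivalently $\beta\in(0,1]$. This is why the Petz statement is restricted to that range of $\beta$ while the sandwiched one is not. No substantive obstacle arises; the whole proof is a one-line appeal to the pre-established sum inequality on each $z$-slice, followed by summation over $z$.
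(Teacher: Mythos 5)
Your proof is correct and takes essentially the same route as the paper's: both reduce the claim to Lem.~\ref{lem:rp_sumineq} applied on each $z$-slice (where $\sum_c \rho(cz) = \rho(z)$) and then sum over $z$, with the Petz version restricted to $\alpha \le 2$ for the same reason. One small slip in a side remark: $\rho^{-\beta/(2\alpha)}\rho\,\rho^{-\beta/(2\alpha)}$ equals $\rho^{1/\alpha}$ (restricted to $\Supp(\rho)$), not the support projector $\suppproj{\rho}$; raising $\rho^{1/\alpha}$ to the power $\alpha$ and tracing still gives $\tr(\rho)$, so the identity $\Rpow{\alpha}{\rho}{\rho}=\tr(\rho)$ holds and the main argument is unaffected.
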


\begin{proof}
  It suffices to verify Eq.~\ref{eq:qefdef1}.
  \begin{align}
    \sum_{cz}F(cz)\Rpow{\alpha}{\rho(cz)}{\rho(z)} &=
    \sum_{cz}\Rpow{\alpha}{\rho(cz)}{\rho(z)}\notag\\
    &=\sum_{z} \sum_{c}\Rpow{\alpha}{\rho(cz)}{\rho(z)}\notag\\
    &\leq \sum_{z}\Rpow{\alpha}{\rho(z)}{\rho(z)}\notag\\
    &= \sum_{z}\tr(\rho(z))\notag\\
    &=\tr(\rho),
  \end{align}
  where we applied Lem.~\ref{lem:rp_sumineq} for the inequality in the
  third line. In this argument, we can replace the sandwiched by the
  Petz R\'enyi power provided $\beta\leq 1$.
\end{proof}

\begin{lemma}
  Let $F(CZ)$ be a \QEF with power $\beta$ for $C|Z$ and $\cC(CZ)$.
  Then for all $\beta'\geq \beta$,  $F(CZ)$ is a \QEF with power
  $\beta'$ for $C|Z$ and $\cC(CZ)$.
\end{lemma}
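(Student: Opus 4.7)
The plan is to prove the stronger pointwise inequality $\Rpow{\alpha'}{\rho(cz)}{\rho(z)} \leq \Rpow{\alpha}{\rho(cz)}{\rho(z)}$ for every $\rho(CZ) \in \cC(CZ)$ and every $cz$; multiplying by $F(cz) \geq 0$ and summing then immediately upgrades the assumed \QEF inequality at power $\beta$ to the \QEF inequality at power $\beta'$. The two tools I will use are the dominance property (Lemma~\ref{lem:conditionmonotone_rp}) and the log-convexity of R\'enyi powers in $\alpha$ (Lemma~\ref{lem:convex_rp}), combined with the structural observation that $\rho(cz) \leq \rho(z)$ for classical-quantum states, since $\rho(z) = \sum_{c'} \rho(c'z)$ is a sum of positive semidefinite operators.

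First I would dispose of the $cz$ with $\rho(cz)=0$, for which $\Rpow{\gamma}{\rho(cz)}{\rho(z)}=0$ and both sides of the \QEF inequality receive no contribution. For $\rho(cz)\neq 0$, the dominance property applied to $\rho(cz) \leq \rho(z)$ gives $\Rpow{\gamma}{\rho(cz)}{\rho(z)} \leq \Rpow{\gamma}{\rho(cz)}{\rho(cz)} = \tr(\rho(cz))$ for every $\gamma>1$, where the last equality follows from the elementary identity $(\sigma^{-(\gamma-1)/(2\gamma)}\sigma\sigma^{-(\gamma-1)/(2\gamma)})^{\gamma} = \sigma$ on the support of $\sigma$ (the composite exponent collapses to $1$). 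Taking logs, the function $L(\gamma) = \log \Rpow{\gamma}{\rho(cz)}{\rho(z)}$ is uniformly bounded above by $\log \tr(\rho(cz))$ on $(1,\infty)$, and by Lemma~\ref{lem:convex_rp} it is convex there.

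The core step is then a convex-analysis observation: a convex function on a half-line that is bounded above must be non-increasing on that half-line. Indeed, if there were $1 < \gamma_1 < \gamma_2$ with $L(\gamma_1) < L(\gamma_2)$, extrapolating along the secant past $\gamma_2$ using convexity would force $L(\gamma)\to+\infty$ as $\gamma\to\infty$, contradicting the uniform upper bound. Hence $\Rpow{\alpha'}{\rho(cz)}{\rho(z)} \leq \Rpow{\alpha}{\rho(cz)}{\rho(z)}$, and multiplying by $F(cz)\geq 0$, summing over $cz$, and invoking the \QEF hypothesis at power $\beta$ yield
\begin{equation*}
\sum_{cz} F(cz)\Rpow{\alpha'}{\rho(cz)}{\rho(z)} \leq \sum_{cz} F(cz)\Rpow{\alpha}{\rho(cz)}{\rho(z)} \leq \tr(\rho),
\end{equation*}
which is the \QEF inequality at power $\beta'$. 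The only real obstacle is recognizing that the classical-quantum structure feeds directly into the dominance lemma and that log-convexity together with a uniform upper bound already forces monotonicity in $\gamma$ without requiring any genuinely new quantum inequality beyond what is stated in Sect.~\ref{sec:preliminaries}.
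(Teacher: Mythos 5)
Your proof is correct and takes essentially the same route as the paper's: both deduce pointwise monotonicity of $\gamma\mapsto\Rpow{\gamma}{\rho(cz)}{\rho(z)}$ from log-convexity (Lem.~\ref{lem:convex_rp}) combined with a uniform upper bound on $(1,\infty)$, observing that a convex function on a half-line that is bounded above must be non-increasing. The only cosmetic difference is that you obtain the upper bound directly from the dominance property applied to $\rho(cz)\leq\rho(z)$ (Lem.~\ref{lem:conditionmonotone_rp}), whereas the paper invokes Lem.~\ref{lem:trival_QEF}; since that lemma is itself proved from the same dominance property via Lem.~\ref{lem:rp_sumineq}, the two arguments are substantively identical.
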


\begin{proof}
  Consider any $\rho(CZ)\in\cC(CZ)$. All expressions in the
  calculation below are homogeneous of the same degree, so we may
  assume that $\tr(\rho)=1$. If not, it suffices to rescale $\rho(CZ)$
  to ensure this condition. In view of the \QEF inequality, it
  suffices to show that the function $g_{cz}:
  \beta'\mapsto\Rpow{1+\beta'}{\rho(cz)}{\rho(z)}$ is non-increasing
  for all $cz$. According to Lem.~\ref{lem:trival_QEF},
  $\sum_{cz}\Rpow{1+\beta'}{\rho(cz)}{\rho(z)}\leq 1$, and since the
  summands are non-negative, for each $cz$ we have
  $\Rpow{1+\beta'}{\rho(cz)}{\rho(z)}\leq 1$.  For the $cz$ with
  $\rho(cz)=0$, $\Rpow{1+\beta'}{\rho(cz)}{\rho(z)}=0$ for 
  all $\beta'$ and $g_{cz}$ is non-increasing. 
  For the $cz$ with $\rho(cz)>0$ the function $\log(g_{cz})$
  is non-positive. Log-convexity of R\'enyi powers
  (Lem.~\ref{lem:convex_rp}) implies that the slope of $\log(g_{cz})$ 
  is non-decreasing. In view of $-\infty<\log(g_{cz})\leq 0$, 
  the slope of $\log(g_{cz})$ at any $\beta'$ cannot become positive, 
  otherwise when $\beta'\nearrow \infty$ the value of $\log(g_{cz})$ 
  would become positive. Thus $\log(g_{cz})$ is non-increasing and 
  since $x\mapsto\log(x)$ is order-preserving, 
  $g_{cz}$ is also non-increasing. 
\end{proof}

\begin{lemma}
  Let $F(CZ)$ be a \QEF with power $\beta$ for $C|Z$ and $\cC(CZ)$.
  Then for $0<\gamma\leq 1$, $F(CZ)^{\gamma}$ is a \QEF with power
  $\gamma\beta$ for $C|Z$ and $\cC(CZ)$. This also holds with
  ``\QEF'' replaced by ``\QEFP''.
\end{lemma}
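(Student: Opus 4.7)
The plan is to combine the monotonicity of the normalized R\'enyi power in $\alpha$ (Lem.~\ref{lem:monotone_rp}) with Jensen's inequality for the concave map $x\mapsto x^\gamma$, which is valid because $0<\gamma\leq 1$. Both sides of the \QEF inequality are positive homogeneous of degree $1$ in $\rho(CZ)$, so I would first reduce to the case $\tr(\rho)=1$, whereby $p(cz)\doteq\tr(\rho(cz))$ is a probability distribution on $CZ$ and the target reduces to showing $\sum_{cz}p(cz)F(cz)^\gamma\hatRpow{\alpha'}{\rho(cz)}{\rho(z)}\leq 1$ with $\alpha'=1+\gamma\beta$, given the hypothesis $\sum_{cz}p(cz)F(cz)\hatRpow{\alpha}{\rho(cz)}{\rho(z)}\leq 1$.

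The key pointwise step uses Lem.~\ref{lem:monotone_rp}: since the map $\alpha\mapsto\hatRpow{\alpha}{\rho(cz)}{\rho(z)}^{1/(\alpha-1)}$ is non-decreasing and $\alpha'\leq\alpha$, raising the resulting inequality to the power $\gamma\beta=\alpha'-1$ gives $\hatRpow{\alpha'}{\rho(cz)}{\rho(z)}\leq\hatRpow{\alpha}{\rho(cz)}{\rho(z)}^\gamma$ for each $cz$ with $\rho(cz)\ne 0$; terms with $\rho(cz)=0$ contribute zero to the sums because $p(cz)=0$. Substituting this bound and then applying Jensen to $x\mapsto x^\gamma$ with respect to the probability measure $p$ gives
\begin{equation*}
\sum_{cz}p(cz)F(cz)^\gamma\hatRpow{\alpha'}{\rho(cz)}{\rho(z)}\leq\sum_{cz}p(cz)\bigl(F(cz)\hatRpow{\alpha}{\rho(cz)}{\rho(z)}\bigr)^\gamma\leq\Bigl(\sum_{cz}p(cz)F(cz)\hatRpow{\alpha}{\rho(cz)}{\rho(z)}\Bigr)^\gamma\leq 1,
\end{equation*}
which is the required \QEF inequality at $\rho(CZ)$ with power $\gamma\beta$.

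The \QEFP case is identical after replacing the sandwiched R\'enyi power $\cR_\alpha$ by the Petz R\'enyi power $\cP_\alpha$ and invoking the Petz half of Lem.~\ref{lem:monotone_rp}; the constraint $\alpha\leq 2$ assumed in the \QEFP context is preserved since $\alpha'=1+\gamma\beta\leq 1+\beta=\alpha\leq 2$. I do not expect any serious obstacle: the only subtlety is the handling of $cz$ with $\rho(cz)=0$, which is resolved by the convention in Def.~\ref{def:sandwiched_Renyi} together with the vanishing of $p(cz)$ at such values.
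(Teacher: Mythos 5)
Your proof is correct and follows essentially the same route as the paper's: normalize to $\tr(\rho)=1$, apply Jensen's inequality for the concave map $x\mapsto x^\gamma$ with respect to the induced probability distribution on $CZ$, and invoke monotonicity of the normalized R\'enyi power in $\alpha$ (Lem.~\ref{lem:monotone_rp}). The only difference is cosmetic: you apply monotonicity pointwise first and Jensen second, while the paper writes the Jensen step first with $\hatRpow{1+\gamma\beta}{\rho(cz)}{\rho(z)}^{1/\gamma}$ as the argument and then replaces it by the larger $\hatRpow{1+\beta}{\rho(cz)}{\rho(z)}$ inside the sum; these two orderings are equivalent.
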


The transformation $F\mapsto F^{\gamma}$ in the lemma is referred
to as \emph{power reduction by $\gamma$}.

\begin{proof}
  Consider any $\rho(CZ)\in\cC(CZ)$. All expressions in the
  calculation below are homogeneous of the same degree, so we may
  assume that $\tr(\rho)=1$.  Define the probability distribution
  $\mu(CZ)$ by $\mu(cz)=\tr(\rho(cz))$.  We check the \QEF inequality
  at $\rho(CZ)$:
  \begin{align}
    \sum_{cz}F(cz)^{\gamma} \Rpow{1+\gamma\beta}{\rho(cz)}{\rho(z)}&=
    \sum_{cz}F(cz)^{\gamma}\mu(cz)
    \hatRpow{1+\gamma\beta}{\rho(cz)}{\rho(z)}
    \notag\\
    &=\sum_{cz}\mu(cz)\left(F(cz)
      \hatRpow{1+\gamma\beta}
        {\rho(cz)}{\rho(z)}^{1/\gamma}\right)^{\gamma}\notag\\
    &\leq\left(\sum_{cz}\mu(cz)F(cz)
      \hatRpow{1+\gamma\beta}
        {\rho(cz)}{\rho(z)}^{1/\gamma}\right)^{\gamma},
  \end{align}
  since for $\gamma\in(0, 1]$ the function $x\mapsto x^{\gamma}$  is 
  concave and the sums are expectations with respect to $\mu(CZ)$.
  By monotonicity of R\'enyi powers (Lem.~\ref{lem:monotone_rp}),
  we have $\hatRpow{1+\gamma\beta}
        {\rho(cz)}{\rho(z)}^{1/(\beta\gamma)}
  \leq\hatRpow{1+\beta}
        {\rho(cz)}{\rho(z)}^{1/\beta}$, so we can continue
  where we left off to get
  \begin{align}
    \sum_{cz}F(cz)^{\gamma}\Rpow{1+\gamma\beta}{\rho(cz)}{\rho(z)} &\leq
    \left(\sum_{cz}F(cz)\mu(cz)
      \hatRpow{1+\beta}
        {\rho(cz)}{\rho(z)}\right)^{\gamma}\notag\\
      &=\left(\sum_{cz}F(cz)
      \Rpow{1+\beta}{\rho(cz)}{\rho(z)}\right)^{\gamma}\notag\\
    &\leq 1,
  \end{align}
  since $F$ is assumed to be a \QEF with power $\beta$.  The lemma
  follows by arbitrariness of $\rho(CZ)\in\cC(CZ)$.  In this
  argument, we can replace the sandwiched by the Petz R\'enyi power.
\end{proof}

Since the inequality in Eq.~\ref{eq:qefdef1} is linear in $F(CZ)$, the
set of \QEFs is convex. By positive homogeneity of the \QEF inequality
in $\rho(CZ)$, it suffices to check the trace-normalized
$\rho(CZ)\in\cN(\cC(CZ))$.  Further, as a consequence of the next
lemma, it suffices to check the \QEF inequalities on any subset of
$\cN(\cC(CZ))$ whose convex closure contains $\cN(\cC(CZ))$.

\begin{lemma}\label{lem:extremal_qef_constraints}
  $F(CZ)$ is a \QEF with power $\beta$ for $C|Z$ and $\cC(CZ)$ iff
  $F(CZ)$ is a \QEF with power $\beta$ for $C|Z$ and $\Cone(\cC(CZ))$.
  This also holds with ``\QEF'' replaced by ``\QEFP'' provided
  $\beta\leq 1$.
\end{lemma}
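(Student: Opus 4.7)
The plan is to use joint convexity of R\'enyi powers to reduce the \QEF condition on $\Cone(\cC(CZ))$ to the \QEF condition on $\cC(CZ)$. The forward direction ($F$ a \QEF for $\Cone(\cC(CZ))$ implies $F$ a \QEF for $\cC(CZ)$) is immediate since $\cC(CZ)\subseteq\Cone(\cC(CZ))$.

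For the reverse direction, I would first note that because $\cC(CZ)$ is closed under multiplication by non-negative reals (by definition of a model), any positive scalar can be absorbed into the summands of a convex combination, so $\Cone(\cC(CZ))=\Cvx(\cC(CZ))$. Hence it suffices to verify the \QEF inequality at an arbitrary convex combination $\rho(CZ)=\sum_{i}\lambda_{i}\rho_{i}(CZ)$ with $\lambda_{i}\geq 0$, $\sum_{i}\lambda_{i}=1$, and $\rho_{i}(CZ)\in\cC(CZ)$ (restricting to indices where $\rho_{i}\neq 0$, so that the hypothesis of the \QEF inequality applies to each $\rho_{i}$). The marginal satisfies $\rho(z)=\sum_{i}\lambda_{i}\rho_{i}(z)$, and support conditions are automatic because $\rho_{i}(cz)\leq \rho_{i}(z)$ and $\rho(cz)\leq\rho(z)$.

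The key step is to invoke joint convexity of R\'enyi powers (Lem.~\ref{lem:jconvex_rp}) applied to the pairs $(\rho(cz),\rho(z))$ and $(\rho_{i}(cz),\rho_{i}(z))$:
\begin{equation}
\Rpow{\alpha}{\rho(cz)}{\rho(z)} \leq \sum_{i}\lambda_{i}\Rpow{\alpha}{\rho_{i}(cz)}{\rho_{i}(z)}.
\end{equation}
Multiplying by $F(cz)\geq 0$, summing over $cz$, swapping sums, and applying the \QEF inequality for each $\rho_{i}(CZ)\in\cC(CZ)$ yields
\begin{equation}
\sum_{cz}F(cz)\Rpow{\alpha}{\rho(cz)}{\rho(z)} \leq \sum_{i}\lambda_{i}\sum_{cz}F(cz)\Rpow{\alpha}{\rho_{i}(cz)}{\rho_{i}(z)} \leq \sum_{i}\lambda_{i}\tr(\rho_{i}) = \tr(\rho),
\end{equation}
which is the \QEF inequality at $\rho(CZ)$. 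The identical argument with Petz R\'enyi powers gives the \QEFP statement, provided $\beta\leq 1$ (equivalently $\alpha\leq 2$), which is precisely the range in which Lem.~\ref{lem:jconvex_rp} asserts joint convexity for $\cP_{\alpha}$.

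There is no substantial obstacle: the whole argument reduces to one application of joint convexity together with the observation that $\Cone$ and $\Cvx$ coincide for sets closed under non-negative scaling. The only minor care needed is to restrict attention to indices with $\rho_{i}\neq 0$ before invoking the hypothesis, and to note that positive homogeneity of both sides of the \QEF inequality in $\rho(CZ)$ (R\'enyi powers of order $\alpha$ scale as $\lambda^{\alpha-\beta}=\lambda$ under joint scaling $(\rho,\sigma)\mapsto(\lambda\rho,\lambda\sigma)$) is what allowed the reduction from $\Cone$ to $\Cvx$ in the first place.
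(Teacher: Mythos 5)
Your proof is correct and follows essentially the same route as the paper: the forward direction is immediate by inclusion, and the reverse direction is exactly one application of joint convexity of R\'enyi powers (Lem.~\ref{lem:jconvex_rp}), exchange of sums, and the trial-wise \QEF inequality. The extra observations you supply — that $\Cone(\cC)=\Cvx(\cC)$ when $\cC$ is already closed under nonnegative scaling, that one may restrict to indices with $\rho_i\neq 0$, and that both sides are positive-homogeneous of degree $\alpha-\beta=1$ — are correct and merely make explicit what the paper leaves implicit.
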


\begin{proof}
  It suffices to check that if the \QEF
  inequality holds at $\rho_{i}(CZ)\in\cC(CZ)$ for $i\in I$,
  then it holds at every convex combination
  $\rho(CZ)=\sum_{i}\lambda_{i}\rho_{i}(CZ)$.  By joint convexity of
  conditional R\'enyi powers (Lem.~\ref{lem:jconvex_rp}),
  \begin{equation}
    \Rpow{\alpha}{\rho(CZ)}{\rho(Z)}\leq
    \sum_{i}\lambda_{i}\Rpow{\alpha}{\rho_{i}(CZ)}{\rho_{i}(Z)}.
  \end{equation}
  Therefore
  \begin{align}
    \sum_{cz}F(cz)\Rpow{\alpha}{\rho(CZ)}{\rho(Z)} &\leq
    \sum_{cz}F(cz)\sum_{i}\lambda_{i}\Rpow{\alpha}{\rho_{i}(cz)}{\rho_{i}(z)}
    \notag\\
    &=\sum_{i}\lambda_{i}\sum_{cz}F(cz)\Rpow{\alpha}{\rho_{i}(cz)}{\rho_{i}(z)}
    \notag\\
    &\leq \sum_{i}\lambda_{i}\tr(\rho_{i})\notag\\
    &= \tr(\sum_{i}\lambda_{i}\rho_{i})\notag\\
    &= \tr(\rho).
  \end{align}
  In this argument, we can replace the sandwiched by the Petz R\'enyi
  power provided $\beta\leq 1$.
\end{proof}

It may be difficult to determine manageable subsets of $\cN(\cC(CZ))$
whose convex closure contains $\cN(\cC(CZ))$. If
$\Cone(\cC'(CZ))\supseteq\cC(CZ)$, then any \QEF for $\cC'(CZ)$ is a
\QEF for $\cC(CZ)$, so a strategy for constructing \QEFs is to find
better behaved models $\cC'(CZ)$ whose convex closure contains
$\cC(CZ)$.

According to the next lemma, \QEFs of a model are \QEFs of the closure of
the model under $Z$-conditional quantum operations.

\begin{lemma}\label{lem:qef_cptpzclosure}
  Let $\cC(CZ)$ be a model for $CZ\Pfnt{E}$ and let
  $\CPTP_{Z}(\cC(CZ))$ be the set of distributions that can be
  obtained by applying a $Z$-conditional quantum operation to members of
  $\cC(CZ)$.  Then $F(CZ)$ is a \QEF with power $\beta$ for $C|Z$ and
  $\cC(CZ)$ iff $F(CZ)$ is a \QEF with power $\beta$ for $C|Z$ and
  $\CPTP_{Z}(\cC(CZ))$.  This also holds with ``\QEF'' replaced by
  ``\QEFP'' provided $\beta\leq 1$.
\end{lemma}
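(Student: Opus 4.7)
The plan is to exploit the data-processing inequality for Rényi powers (Lemma~\ref{lem:dataprocessing_rp}) together with the non-negativity of $F(CZ)$. The ``if'' direction is immediate from the inclusion $\cC(CZ)\subseteq\CPTP_{Z}(\cC(CZ))$: taking the identity map as a $Z$-conditional quantum operation shows every member of $\cC(CZ)$ lies in $\CPTP_{Z}(\cC(CZ))$, so a \QEF for the larger model restricts to one for the smaller.

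For the ``only if'' direction, let $\rho(CZ)\in\cC(CZ)$ and let $\cE_{Z}$ be an arbitrary $Z$-conditional quantum operation, producing $\rho'(CZ)\in\CPTP_{Z}(\cC(CZ))$ with $\rho'(cz)=\cE_{z}(\rho(cz))$. Because each $\cE_{z}$ is trace-preserving, $\rho'(z)=\cE_{z}(\rho(z))$ and $\tr(\rho')=\tr(\rho)$. For each $cz$ with $\rho(cz)\neq 0$ (so that $\rho(cz)\ll\rho(z)$ automatically), Lemma~\ref{lem:dataprocessing_rp} applied to $\cE_{z}$ gives
\begin{equation}
  \Rpow{\alpha}{\rho'(cz)}{\rho'(z)} = \Rpow{\alpha}{\cE_{z}(\rho(cz))}{\cE_{z}(\rho(z))} \leq \Rpow{\alpha}{\rho(cz)}{\rho(z)},
\end{equation}
and the inequality holds trivially (both sides zero) when $\rho(cz)=0$. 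Since $F(cz)\geq 0$, multiplying and summing preserves the inequality, so
\begin{equation}
  \sum_{cz} F(cz)\Rpow{\alpha}{\rho'(cz)}{\rho'(z)} \leq \sum_{cz} F(cz)\Rpow{\alpha}{\rho(cz)}{\rho(z)} \leq \tr(\rho) = \tr(\rho'),
\end{equation}
where the second inequality is the \QEF inequality for $F$ at $\rho(CZ)\in\cC(CZ)$. By arbitrariness of $\cE_{Z}$ and $\rho(CZ)$, $F(CZ)$ is a \QEF with power $\beta$ for $\CPTP_{Z}(\cC(CZ))$.

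The \QEFP version proceeds verbatim, using the Petz branch of Lemma~\ref{lem:dataprocessing_rp}, which requires $\alpha\leq 2$, i.e., $\beta\leq 1$, explaining the hypothesis. No step should present serious difficulty: the only thing to be careful about is that data-processing holds pointwise in $z$ (so we apply it to $\cE_{z}$ for each fixed $z$ separately) and that trace-preservation lines up the normalization on both sides so the right-hand side of the \QEF inequality is unchanged under $\cE_{Z}$.
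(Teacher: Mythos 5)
Your proof is correct and follows essentially the same route as the paper's: both reduce the claim to the data-processing inequality for R\'enyi powers applied pointwise in $z$ to each $\cE_{z}$, then use trace-preservation to match the right-hand side $\tr(\rho)=\tr(\rho')$. You make the trivial ``if'' direction and the $\rho(cz)=0$ edge case explicit, which the paper leaves implicit, but there is no substantive difference in approach.
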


\begin{proof}
  The lemma follows from the data-processing inequality for R\'enyi
  powers (Lem.~\ref{lem:dataprocessing_rp}).  It suffices to check
  that if the \QEF inequality holds at $\rho(CZ)\in\cC(CZ)$ and
  $\cE_{Z}$ is a $Z$-conditional quantum operation, then it holds at
  $\sigma(CZ) =\cE_{Z}(\rho(CZ))$:
  \begin{align}
    \sum_{cz}F(cz)\Rpow{\alpha}{\sigma(cz)}{\sigma(z)} &=
    \sum_{cz}F(cz)\Rpow{\alpha}{\cE_{z}(\rho(cz))}{\cE_{z}(\rho(z))}\notag\\
     &\leq \sum_{cz}F(cz)\Rpow{\alpha}{\rho(cz)}{\rho(z)}\notag\\
     &\leq\tr(\rho)\notag\\
     &=\sum_{z}\tr(\rho(z))\notag\\
     &=\sum_{z}\tr(\cE_{z}(\rho(z)))\notag\\
     &=\tr(\sigma),
  \end{align}
  since each $\cE_{z}$ is trace-preserving.  Again, in this argument, we can
  replace the sandwiched by the Petz R\'enyi power provided $\beta\leq
  1$.
\end{proof}

\subsection{Chaining \QEFs}

The next theorem shows that \QEFs can be chained with conditionally
independent inputs.  We do not know whether this is true for \QEFPs.

\begin{theorem}\label{thm:qefchainmain}
  Let $\cC(\Sfnt{CZ})$ be a model for $\Sfnt{CZ}\Pfnt{E}$ and for each
  $\Sfnt{cz}$, let $\cC_{\Sfnt{cz}}(CZ)$ be a $\pCP$-closed model for
  $CZ\Pfnt{E}$.  If $G$ is a \QEF with power $\beta$ for
  $\Sfnt{C}|\Sfnt{Z}$ and $\cC(\Sfnt{CZ})$, and for each $\Sfnt{cz}$, $F_{\Sfnt{cz}}$ is
  a \QEF with power $\beta$ for $C|Z$ and $\cC_{\Sfnt{cz}}(CZ)$, then
  $G(\Sfnt{CZ})F_{\Sfnt{CZ}}(CZ)$ is a \QEF with power $\beta$ for
  $\Sfnt{C}C|\Sfnt{Z}Z\Pfnt{E}$ and
  $\cC(\Sfnt{CZ})\circ_{Z|\Sfnt{Z}}\cC_{\Sfnt{CZ}}(CZ)$.
\end{theorem}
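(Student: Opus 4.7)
The plan is to verify the \QEF inequality with power $\beta$ for the function $F(\Sfnt{CZ}CZ) = G(\Sfnt{CZ})F_{\Sfnt{CZ}}(CZ)$ at an arbitrary $\rho(\Sfnt{CZ}CZ) \in \cC(\Sfnt{CZ}) \circ_{Z|\Sfnt{Z}} \cC_{\Sfnt{CZ}}(CZ)$ by splitting the sum as an iterated sum, inner over $cz$ at fixed $\Sfnt{cz}$ and outer over $\Sfnt{cz}$. On the inner sum I will use Lem.~\ref{lem:qefdef3} to replace the conditioner $\rho(\Sfnt{z}z)$ by $\rho(\Sfnt{z})$, thereby decoupling the two trials' R\'enyi-power contributions so that the outer sum becomes an instance of the \QEF inequality for $G$ applied to the marginal $\rho(\Sfnt{CZ}) \in \cC(\Sfnt{CZ})$.

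To apply Lem.~\ref{lem:qefdef3} at fixed $\Sfnt{cz}$, I set $\sigma(CZ) = \rho(\Sfnt{cz}CZ)$, which lies in $\cC_{\Sfnt{cz}}(CZ)$ by the definition of chaining, and $\zeta(Z) = \rho(\Sfnt{z}Z)$. The support condition $\sigma(Z) \ll \zeta(Z)$ is immediate from $\rho(\Sfnt{cz}Z) \le \rho(\Sfnt{z}Z)$, and the auxiliary state $\rho(CZ)$ produced by the lemma lies in $\cC_{\Sfnt{cz}}(CZ)$ since that model is $\pCP$-closed. The main obstacle is verifying the short Markov chain condition $\xi(UZ) \in U \leftrightarrow \Pfnt{E} \leftrightarrow Z$; this is exactly where the conditional independence built into $\circ_{Z|\Sfnt{Z}}$ must be used. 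By definition, $\rho(\Sfnt{CZ}Z) \in Z \leftrightarrow \Sfnt{Z}\Pfnt{E} \leftrightarrow \Sfnt{C}$, so fixing $\Sfnt{z}$ yields a factorization $\cH(\Pfnt{E}) = \bigoplus_k \cU_{\Sfnt{z},k} \otimes \cV_{\Sfnt{z},k} \oplus \cR$ with $\rho(Z\Sfnt{C}\Sfnt{z}) = \bigoplus_k \sigma_{\Sfnt{z},k}(Z) \otimes \tau_{\Sfnt{z},k}(\Sfnt{C})$. Specializing the $\Sfnt{C}$-factor to the fixed value $\Sfnt{c}$ and to the sum over all $\Sfnt{c}'$ gives the two components of $\xi$, producing
\begin{equation}
  \xi(UZ) = \bigoplus_k \sigma_{\Sfnt{z},k}(Z) \otimes \bigl(\tau_{\Sfnt{z},k}(\Sfnt{c})\knuth{U=0} + \textstyle\sum_{\Sfnt{c}'}\tau_{\Sfnt{z},k}(\Sfnt{c}')\knuth{U=1}\bigr),
\end{equation}
which exhibits the required Markov structure with $U$ carried on $\cV_{\Sfnt{z},k}$ and $Z$ carried on $\cU_{\Sfnt{z},k}$.

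With Lem.~\ref{lem:qefdef3} in hand, the inner sum satisfies $\sum_{cz} F_{\Sfnt{cz}}(cz) \Rpow{\alpha}{\rho(\Sfnt{cz}cz)}{\rho(\Sfnt{z}z)} \leq \Rpow{\alpha}{\rho(\Sfnt{cz})}{\rho(\Sfnt{z})}$, and the outer sum weighted by $G(\Sfnt{cz})$ is bounded by $\tr(\rho)$ using the \QEF inequality for $G$ at $\rho(\Sfnt{CZ}) \in \cC(\Sfnt{CZ})$, which completes the verification. Apart from the Markov-chain bookkeeping, the argument is straightforward algebraic manipulation; I do not expect to need any chainability of the individual models beyond the hypothesized $\pCP$-closedness of the trial models, and the proof does not obviously extend to \QEFPs because Lem.~\ref{lem:qefdef3} is stated only for sandwiched R\'enyi powers.
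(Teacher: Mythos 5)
Your proposal is correct and follows essentially the same route as the paper's proof: both decompose the \QEF inequality into an inner sum over $cz$ handled by Lem.~\ref{lem:qefdef3} (with $\sigma(CZ) = \rho(\Sfnt{cz}CZ)$, $\zeta(Z) = \rho(\Sfnt{z}Z)$), verify the short quantum Markov chain condition from the $\circ_{Z|\Sfnt{Z}}$ factorization exactly as you do, invoke $\pCP$-closure of $\cC_{\Sfnt{cz}}(CZ)$ for the membership requirement, and then feed the result into the outer \QEF inequality for $G$ at the marginal $\rho(\Sfnt{CZ})$.
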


For the models constructed for experiments consisting of sequences of
trials discussed in Sect.~\ref{sec:models:examples}, the trial models
are maximal extensions or induced and therefore $\pCP$-closed since
$\pCP$ maps are special cases of $\CP$ maps
(Lems.~\ref{lem:maxextend_is_CPclosed}
and~\ref{lem:induced_cpclosed}).  The $\pCP$-closure condition can be
weakened by taking advantage of the specific membership condition in
Lem.~\ref{lem:qefdef3} as indicated in the proof.

\begin{proof}
  Consider any $\sigma(\Sfnt{CZ}CZ)\in
  \cC(\Sfnt{CZ})\circ_{Z|\Sfnt{Z}}\cC_{\Sfnt{CZ}}(CZ)$.  We show below
  that for each $\Sfnt{cz}$,
  \begin{equation}
    \sum_{cz}F_{\Sfnt{cz}}(cz)\Rpow{\alpha}{\sigma(\Sfnt{cz}cz)}{\sigma(\Sfnt{z}z)}
    \leq \Rpow{\alpha}{\sigma(\Sfnt{cz})}{\sigma(\Sfnt{z})}.
    \label{thm:qefchain:eq:1}
  \end{equation}
  Once this is shown, the theorem follows from
  \begin{align}
    \sum_{\Sfnt{cz}cz}G(\Sfnt{cz})F_{\Sfnt{cz}}(cz)
    \Rpow{\alpha}{\sigma(\Sfnt{cz}cz)}{\sigma(\Sfnt{z}z)}
    \hspace*{-2in}&\notag\\
    &=
    \sum_{\Sfnt{cz}}G(\Sfnt{cz})\sum_{cz}F_{\Sfnt{cz}}(cz)
    \Rpow{\alpha}{\sigma(\Sfnt{cz}cz)}{\sigma(\Sfnt{z}z)}\notag\\
    &\leq \sum_{\Sfnt{cz}}G(\Sfnt{cz})
    \Rpow{\alpha}{\sigma(\Sfnt{cz})}{\sigma(\Sfnt{z})}\notag\\
    &\leq \Rpow{\alpha}{\sigma}{\sigma},
  \end{align}
  where we applied Eq.~\ref{thm:qefchain:eq:1}, model chaining Def.~\ref{def:chaining}, and the assumption
  that $G$ is a \QEF for $\Sfnt{C}|\Sfnt{Z}$ and $\cC(\Sfnt{CZ})$. Thus $G(\Sfnt{CZ})F_{\Sfnt{CZ}}(CZ)$ is
  a \QEF as claimed.

  To show Eq.~\ref{thm:qefchain:eq:1}, we apply
  Lem.~\ref{lem:qefdef3} with $\sigma(CZ)$ there replaced by
  $\sigma(\Sfnt{cz}CZ)$ here, $\zeta(Z)$ there by $\sigma(\Sfnt{z}Z)$
  here, and $F(CZ)$ there by $F_{\Sfnt{cz}}(CZ)$ here.  By
  definition of chaining, $\sigma(\Sfnt{cz}CZ)\in\cC_{\Sfnt{cz}}(CZ)$.
  We verify that the Markov chain condition there follows from
  $\sigma(\Sfnt{CZ}Z)\in
  \Sfnt{C}\leftrightarrow\Sfnt{Z}\Pfnt{E}\leftrightarrow Z$
  according to the definition of chaining with conditionally
  independent inputs. For each $\Sfnt{z}$, there is a
  factorization $\cH(\Pfnt{E}) =
  \bigoplus_{i}\cD_{i}\otimes\cZ_{i}\;\oplus\cR$ for
  which $\sigma(\Sfnt{Cz}Z) =
  \bigoplus_{i}\sigma_{i}(\Sfnt{C})\otimes\zeta_{i}(Z)$ for some
  $\sigma_{i}(\Sfnt{C})$ and $\zeta_{i}(Z)$ that depend implicitly on
  $\Sfnt{z}$. This implies
  $\sigma(\Sfnt{z}Z)=\bigoplus_{i}\sigma_{i}\otimes\zeta_{i}(Z)$.  To
  verify the Markov chain condition of Lem.~\ref{lem:qefdef3}, we
  define $\xi(ZU)=\sigma(\Sfnt{cz}Z)\knuth{U=0}+
  \sigma(\Sfnt{z}Z)\knuth{U=1}$. Then
  \begin{equation}
    \xi(ZU) = \bigoplus_{i}\left(
      \sigma_{i}(\Sfnt{c})\knuth{U=0}+\sigma_{i}\knuth{U=1}
    \right)\otimes\zeta_{i}(Z),
  \end{equation}
  which implies $\xi(ZU)\in U\leftrightarrow \Pfnt{E}\leftrightarrow Z$.
  The membership condition of Lem.~\ref{lem:qefdef3} is satisfied
  since the $\cC_{\Sfnt{cz}}(CZ)$ are assumed to be $\pCP$-closed.
  For the purpose of weakening this condition the explicit distributions
  that need to be in $\cC_{\Sfnt{cz}}(CZ)$ are 
  \begin{equation}
    \rho(\Sfnt{cz}CZ)= 
    \chi(\Sfnt{cz})^{\beta/2}\sigma(\Sfnt{z})^{-\beta/(2\alpha)}
    \sigma(\Sfnt{cz}CZ)\sigma(\Sfnt{z})^{-\beta/(2\alpha)}\chi(\Sfnt{cz})^{\beta/2},
  \end{equation}
  where
  \begin{equation}
    \chi(\Sfnt{cz}) =
    \sigma(\Sfnt{z})^{-\beta/(2\alpha)}
    \sigma(\Sfnt{cz})\sigma(\Sfnt{z})^{-\beta/(2\alpha)}.
  \end{equation}
\end{proof}

Although it is an immediate consequence of the results so far, we give
the next corollary for emphasis, and so that we can use it explicitly when
discussing models relevant to experimental configurations.

\begin{corollary}\label{cor:qefclosechain}
  In Thm.~\ref{thm:qefchainmain}, we may close 
  $\cC(\Sfnt{CZ})$ under $\Sfnt{Z}$-conditional quantum operations and
  positive combinations before chaining.
\end{corollary}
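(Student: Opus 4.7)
The plan is to derive the corollary by direct combination of two invariance results already established earlier in this section, followed by an application of Theorem~\ref{thm:qefchainmain} with an enlarged model. First I would set $\cC'(\Sfnt{CZ}) \defeq \Cone(\CPTP_{\Sfnt{Z}}(\cC(\Sfnt{CZ})))$ and use Lem.~\ref{lem:qef_cptpzclosure} together with Lem.~\ref{lem:extremal_qef_constraints} to conclude that the set of \QEFs with power $\beta$ for $\Sfnt{C}|\Sfnt{Z}$ and $\cC(\Sfnt{CZ})$ coincides with the set of \QEFs for $\Sfnt{C}|\Sfnt{Z}$ and $\cC'(\Sfnt{CZ})$. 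The two lemmas compose without difficulty because each asserts equality of the two sets of \QEFs, so iterating the closure operations leaves the \QEF set unchanged.

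Next I would apply Thm.~\ref{thm:qefchainmain} directly, but with $\cC'(\Sfnt{CZ})$ playing the role of the ``past'' model in the hypothesis. The hypothesis on the trial model $\cC_{\Sfnt{cz}}(CZ)$ being $\pCP$-closed is untouched, and the hypothesis on $G$ being a \QEF for $\Sfnt{C}|\Sfnt{Z}$ and $\cC'(\Sfnt{CZ})$ holds by the first step whenever $G$ is a \QEF for the original $\cC(\Sfnt{CZ})$. The theorem then yields that $G(\Sfnt{CZ})F_{\Sfnt{CZ}}(CZ)$ is a \QEF with power $\beta$ for $\Sfnt{C}C|\Sfnt{Z}Z$ and the chained model $\cC'(\Sfnt{CZ})\circ_{Z|\Sfnt{Z}}\cC_{\Sfnt{CZ}}(CZ)$.

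Finally I would observe that $\cC(\Sfnt{CZ})\circ_{Z|\Sfnt{Z}}\cC_{\Sfnt{CZ}}(CZ) \subseteq \cC'(\Sfnt{CZ})\circ_{Z|\Sfnt{Z}}\cC_{\Sfnt{CZ}}(CZ)$, since $\cC(\Sfnt{CZ}) \subseteq \cC'(\Sfnt{CZ})$ and the Markov chain requirement $Z \leftrightarrow \Sfnt{Z}\Pfnt{E} \leftrightarrow \Sfnt{C}$ is imposed identically in both chained models (see Def.~\ref{def:condchaining}). Because the \QEF inequality holding for every member of a model automatically holds for every member of a subset, $G(\Sfnt{CZ})F_{\Sfnt{CZ}}(CZ)$ is a \QEF for the original chained model as well.

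The only point requiring any care is making sure that the ``positive combinations'' closure does not break the conditional-independence structure required by $\circ_{Z|\Sfnt{Z}}$; but this closure acts only on the past model $\cC(\Sfnt{CZ})$, while the Markov condition is a constraint on the enlarged joint distribution that is checked membership-wise, so the subset inclusion in the last step is immediate. All other computational content has already been absorbed into Lem.~\ref{lem:qef_cptpzclosure}, Lem.~\ref{lem:extremal_qef_constraints}, and Thm.~\ref{thm:qefchainmain}, so no new R\'enyi-power manipulations are required.
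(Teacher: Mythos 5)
Your proof is correct and follows the same route as the paper: the paper's own proof is a one-line citation of Lemmas~\ref{lem:extremal_qef_constraints} and~\ref{lem:qef_cptpzclosure} followed by Theorem~\ref{thm:qefchainmain}, which is exactly what you carried out in detail. Your final observation about the Markov condition surviving the enlargement and the monotonicity of chaining in the past model is precisely the remark the paper makes immediately after the corollary, so nothing is missing.
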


\begin{proof}
  This follows from Thm.~\ref{thm:qefchainmain} after
  applying Lems.~\ref{lem:extremal_qef_constraints} and~\ref{lem:qef_cptpzclosure}.
\end{proof}

The $\Sfnt{Z}$-conditional quantum operations on $\cC(\Sfnt{CZ})$ may
affect the quantum Markov chain condition, but in chaining with
conditionally independent inputs, only cases where the condition
survives are passed on to the chained model. Since chaining is
monotone in the models being chained, no states are lost by closing
$\cC(\Sfnt{CZ})$ before chaining.

In Sect.~\ref{sec:models:examples} we mentioned some situations where
the quantum Markov chain condition applies, such as when the
distribution of the inputs is fixed and independent of $\Pfnt{E}$.
When such situations do not apply, we rely on physical constraints
satisfied by the experiments to make sure that the actual states after
the trials satisfy the quantum Markov chain condition.  Alternatively,
we use the strategy where input entropy is eliminated when the
extractor is applied and \QEFs are designed without conditioning on
inputs.

\subsection{\QEFs as Estimators}
\label{sec:qefest}

\QEFs and \QEFPs can be interpreted as estimators of normalized
R\'enyi powers. We formalize this interpretation for \QEFs.  Let $F(CZ)$
be a \QEF with power $\beta$ for $C|Z$ and $\cC(CZ)$.  Consider
$\rho(CZ)\in\cN(\cC(CZ))$.  We can interpret $1/(\epsilon F(CZ))$ as a
level-$\epsilon$ confidence upper bound on
$\hatRpow{\alpha}{\rho(CZ)}{\rho(Z)}$ in the following sense:

\begin{theorem}\label{thm:qefs_estimate}
  Let $F(CZ)$ be a \QEF with power $\beta$ for $C|Z$ and $\cC(CZ)$. Then for all $\rho(CZ)\in\cN(\cC(CZ))$,
  \begin{equation}
    \Prob_{\mu(CZ)}\left(1/(\epsilon F(CZ))<\hatRpow{\alpha}{\rho(CZ)}{\rho(Z)}
    \right)\leq \epsilon,
  \end{equation}
  where
  $\mu(CZ)=\tr(\rho(CZ))$. 
\end{theorem}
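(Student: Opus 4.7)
The plan is to rewrite the \QEF inequality as an expectation bound under $\mu(CZ)$ and then apply Markov's inequality to the nonnegative random variable $F(CZ)\hatRpow{\alpha}{\rho(CZ)}{\rho(Z)}$. Since $\rho(CZ)\in\cN(\cC(CZ))$ is trace-normalized, the right-hand side of the \QEF inequality (Eq.~\ref{eq:qefdef1}) equals $1$. Using the identity $\Rpow{\alpha}{\rho(cz)}{\rho(z)} = \tr(\rho(cz))\,\hatRpow{\alpha}{\rho(cz)}{\rho(z)} = \mu(cz)\,\hatRpow{\alpha}{\rho(cz)}{\rho(z)}$, the \QEF inequality reads
\begin{equation}
  \Exp_{\mu(CZ)}\!\left[F(CZ)\hatRpow{\alpha}{\rho(CZ)}{\rho(Z)}\right]
  =\sum_{cz}\mu(cz)F(cz)\hatRpow{\alpha}{\rho(cz)}{\rho(z)}\leq 1.
\end{equation}

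Next I would apply Markov's inequality. Since $F(CZ)\geq 0$ and $\hatRpow{\alpha}{\rho(CZ)}{\rho(Z)}\geq 0$, the product is a nonnegative random variable with expectation at most $1$, so
\begin{equation}
  \Prob_{\mu(CZ)}\!\left(F(CZ)\hatRpow{\alpha}{\rho(CZ)}{\rho(Z)}>1/\epsilon\right)\leq \epsilon.
\end{equation}
Finally I would translate the event into the form of the statement: on values $cz$ with $F(cz)>0$, the event $F(cz)\hatRpow{\alpha}{\rho(cz)}{\rho(z)}>1/\epsilon$ is equivalent to $1/(\epsilon F(cz))<\hatRpow{\alpha}{\rho(cz)}{\rho(z)}$, and on values $cz$ with $F(cz)=0$ the quantity $1/(\epsilon F(cz))$ is taken to be $+\infty$, so the event in the theorem statement is automatically false there (noting that $\hatRpow{\alpha}{\rho(cz)}{\rho(z)}$ is finite by assumption). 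Hence the two events coincide up to a $\mu$-null set and the probability bound transfers.

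There is no substantial obstacle: the only subtlety is the bookkeeping around $F(cz)=0$ (and the degenerate $\mu(cz)=0$ case, where the summand contributes nothing to the expectation and the value of $\hatRpow{\alpha}{\rho(cz)}{\rho(z)}$ is irrelevant under $\Prob_{\mu(CZ)}$). The main conceptual content is simply that the \QEF inequality is a rescaled statement that $F\cdot\hat{\mathcal R}_{\alpha}$ has mean at most $1$ under $\mu$, and the confidence bound is an immediate Markov-style consequence.
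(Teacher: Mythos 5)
Your proposal is correct and follows exactly the paper's argument: interpret the \QEF inequality at $\rho(CZ)$ as the statement $\Exp_{\mu(CZ)}\!\left[F(CZ)\hatRpow{\alpha}{\rho(CZ)}{\rho(Z)}\right]\leq 1$ and then apply Markov's inequality to the nonnegative random variable $F(CZ)\hatRpow{\alpha}{\rho(CZ)}{\rho(Z)}$. The extra bookkeeping you supply around $F(cz)=0$ and $\mu(cz)=0$ is sound and makes explicit a rearrangement step that the paper leaves implicit.
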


According to the theorem, the interval $[0,1/(\epsilon F(CZ))]$ has
coverage probability at least $1-\epsilon$ for
$\hatRpow{\alpha}{\rho(CZ)}{\rho(Z)}$ which is what is required of a
confidence interval at level $\epsilon$ (or confidence level
$1-\epsilon$).

\begin{proof}
  According to the \QEF inequality at $\rho(CZ)$,
  \begin{align}
    \Exp_{\mu(CZ)}\left(F(CZ)\hatRpow{\alpha}{\rho(CZ)}{\rho(Z)}\right)
    &= \sum_{cz}\mu(cz)F(cz)\hatRpow{\alpha}{\rho(cz)}{\rho(z)} \notag\\
    &= \sum_{cz}F(cz)\tr(\rho(cz))
        \hatRpow{\alpha}{\rho(cz)}{\rho(z)}\notag\\
    &=  \sum_{cz}F(cz)\Rpow{\alpha}{\rho(cz)}{\rho(z)}\notag\\
    &\leq 1.
  \end{align}
  Since $F(CZ)\hatRpow{\alpha}{\rho(CZ)}{\rho(Z)}\geq 0$ and
  by the Markov inequality,
  \begin{equation}
    \Prob_{\mu(CZ)}(F(CZ)\hatRpow{\alpha}{\rho(CZ)}{\rho(Z)} >1/\epsilon)\leq \epsilon.
  \end{equation}
  The theorem follows by rearranging the inequality defining
  the event in the probability on the left-hand side.
\end{proof}

We remark that the normalized $\alpha$-R\'enyi powers generalize the
$\beta$-power of conditional probabilities when $\Pfnt{E}$ is trivial.
This motivates our terminology and the description of the framework as
``quantum probability estimation''.

\begin{lemma}
  Let $0\leq\rho\ll\sigma$ and $p\geq 0$. Then
  \begin{equation}
    p^{\beta}\tr(\pospart{\rho-p\sigma})\leq
    p^{\beta}\tr(\rho\suppproj{\rho-p\sigma})\leq \Rpow{\alpha}{\rho}{\sigma}.
  \end{equation}
  \label{lem:renyimaxprob}
\end{lemma}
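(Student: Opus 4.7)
The first inequality is immediate from Lemma~\ref{lem:trpospart}, applied with $\sigma\leftarrow\rho$ and $\tau\leftarrow p\sigma$ (both non-negative under the hypotheses): that lemma gives $\tr(\rho\suppproj{\rho-p\sigma})\geq\tr(\pospart{\rho-p\sigma})$, and multiplying by $p^{\beta}\geq 0$ yields the first claimed bound.

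For the second inequality, write $Q=\suppproj{\rho-p\sigma}$. Since $Q$ commutes with the self-adjoint operator $\rho-p\sigma$, one has $Q\rho Q-pQ\sigma Q=\pospart{\rho-p\sigma}\geq 0$, and in particular $\tr(\rho Q)\geq p\tr(\sigma Q)$. The plan is then to apply the data-processing inequality for sandwiched R\'enyi powers (Lemma~\ref{lem:dataprocessing_rp}) to the binary measurement channel $\cE(\tau)=\tr(Q\tau)\hat 0+\tr((1-Q)\tau)\hat 1$, where $\hat 0$ and $\hat 1$ denote any two orthogonal rank-one projectors. This $\cE$ is a quantum operation, so data processing gives $\Rpow{\alpha}{\rho}{\sigma}\geq\Rpow{\alpha}{\cE(\rho)}{\cE(\sigma)}$.

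Because $\cE(\rho)$ and $\cE(\sigma)$ are simultaneously diagonal, the right-hand side collapses to the classical expression $\tr(\rho Q)^{\alpha}\tr(\sigma Q)^{-\beta}+\tr(\rho(1-Q))^{\alpha}\tr(\sigma(1-Q))^{-\beta}\geq\tr(\rho Q)^{\alpha}\tr(\sigma Q)^{-\beta}$, where the non-negative second summand is dropped. Combined with $\tr(\rho Q)\geq p\tr(\sigma Q)$, this gives $\tr(\rho Q)^{\alpha}\tr(\sigma Q)^{-\beta}=\tr(\rho Q)\bigl(\tr(\rho Q)/\tr(\sigma Q)\bigr)^{\beta}\geq p^{\beta}\tr(\rho Q)$, which is the desired second inequality whenever $\tr(\sigma Q)>0$.

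The only loose end is the degenerate case $\tr(\sigma Q)=0$: then $Q\sigma Q=0$ forces $\sigma^{1/2}Q=0$, so $\mathrm{range}(Q)\perp\Supp(\sigma)\supseteq\Supp(\rho)$ by $\rho\ll\sigma$, whence $\tr(\rho Q)=0$ and the inequality holds trivially. The main conceptual step, and the only real idea in the proof, is recognizing the two-outcome measurement onto $\{Q,1-Q\}$ as the channel that, via data processing, reduces the operator-valued bound to an elementary classical Markov-type calculation; once that channel is in place, the operator-level dominance $Q\rho Q\geq pQ\sigma Q$ supplies the factor $p^{\beta}$.
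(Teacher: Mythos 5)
Your argument is correct, and it takes a cleaner route than the paper does in the second inequality. The paper applies data processing (Lem.~\ref{lem:dataprocessing_rp}) to the pinching channel that decoheres in the eigenbasis of $\rho-p\sigma$, then works with the resulting diagonal entries $\rho_{ii},\sigma_{ii}$ one index at a time, using $\rho_{ii}\geq p\sigma_{ii}$ on the positive-eigenvalue block. You instead data-process through a coarser two-outcome measurement channel $\tau\mapsto\tr(Q\tau)\hat 0+\tr((\one-Q)\tau)\hat 1$ with $Q=\suppproj{\rho-p\sigma}$, and the only classical ingredient you need is the single aggregate inequality $\tr(Q\rho)\geq p\,\tr(Q\sigma)$, which you derive from the operator identity $Q\rho Q-pQ\sigma Q=\pospart{\rho-p\sigma}\geq 0$ (valid because $Q$ is a spectral projector of $\rho-p\sigma$ and annihilates $\pospart{p\sigma-\rho}$). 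This buys you a shorter calculation: you never need the per-eigenvalue comparisons, and there is no term-wise application of $x\mapsto x^{\beta}$. It is worth noting that your channel is a further coarse-graining of the paper's pinching channel, so the intermediate R\'enyi power you produce is in fact no larger than the paper's; both collapse to the same final bound $p^{\beta}\tr(\rho Q)$. Your handling of the degenerate case $\tr(Q\sigma)=0$ is also correct and is a detail the paper's proof quietly absorbs into $\rho\ll\sigma$ without comment.
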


This lemma is one step in the proof of Prop.~6.2, Pg.~95 of
Ref.~\cite{tomamichel:qc2012a}, where it is applied with Petz R\'enyi
entropy in mind. That it works for sandwiched R\'enyi entropy is
established in the proof of Lem.~B.4., Ref.~\cite{dupuis:qc2016a}.

\begin{proof}
  The first inequality of the lemma follows from
  Lem.~\ref{lem:trpospart}.  For the second inequality, let
  $(\ket{i})_{i=1}^{k}$ be an eigenbasis of $\pospart{\rho-p\sigma}$
  ordered so that
  $\ket{i}$ has positive eigenvalue iff $i\in[l]$, where $l$
  is the number of positive eigenvalues of $\pospart{\rho-p\sigma}$
  counting multiplicity. 
  Write
  $\rho_{ii}=\bra{i}\rho\ket{i}$ and
  $\sigma_{ii}=\bra{i}\sigma\ket{i}$.  Because
  $\rho-p\sigma=\pospart{\rho-p\sigma}-\pospart{p\sigma-\rho}$, and since
  $\pospart{\rho-p\sigma}$ and $\pospart{p\sigma-\rho}$ have orthogonal
  supports, we have
  $\tr(\pospart{\rho-p\sigma})=\sum_{i=1}^{l}(\rho_{ii}-p\sigma_{ii})$
  and for each $i\in[l]$, $\rho_{ii}\geq p\sigma_{ii}$.  Since
  $\rho\ll\sigma$, $\rho_{ii}>0$ implies $\sigma_{ii}>0$.  From the
  data-processing inequality for R\'enyi powers
  (Lem.~\ref{lem:dataprocessing_rp}) with respect to decoherence in
  the $(\ket{i})_{i=1}^{k}$ basis,
  \begin{align}
    \Rpow{\alpha}{\rho}{\sigma}
    &\geq \Rpow{\alpha}{\sum_{i}\rho_{ii}\hat i}{\sum_{i}\sigma_{ii}\hat i}\notag\\
    &= \sum_{i=1}^{k}\Rpow{\alpha}{\rho_{ii}\hat i}{\sigma_{ii}\hat i}
    \notag\\
    &= \sum_{i=1}^{k}\rho_{ii}\frac{\rho_{ii}^{\beta}}{\sigma_{ii}^{\beta}}\notag\\
    &\geq \sum_{i=1}^{l}\rho_{ii}\frac{\rho_{ii}^{\beta}}{\sigma_{ii}^{\beta}}\notag\\
    &\geq \sum_{i=1}^{l}\rho_{ii} p^{\beta},
  \end{align}
  where the last inequality follows from $\rho_{ii}\geq p\sigma_{ii}$ for all $i\in[l]$. 
  Continuing
  \begin{align}
    \Rpow{\alpha}{\rho}{\sigma}&\geq
    p^{\beta}\sum_{i=1}^{l} \tr(\rho\hat i)\notag\\
    &= p^{\beta}\tr(\rho\sum_{i=1}^{l} \hat i)\notag\\
    &\geq p^{\beta}\tr(\rho\suppproj{\rho-p\sigma}).
  \end{align}
\end{proof}

The next theorem suggests another way in which \QEFs can be
interpreted as estimators.  The statement is not far from a
conditional min-entropy estimate.

\begin{theorem}\label{thm:qef_suppproj}
  Let $\rho(CZ)\in \cS_{1}(CZ\Pfnt{E})$ and
  suppose that $F(CZ)\geq 0$ satisfies the \QEF inequality with power $\beta$
  at $\rho(CZ)$ for $C|Z$.  Then for all $\epsilon>0$,
  \begin{equation}
    \sum_{cz}\tr(
    \pospart{\rho(cz)-\frac{1}{(\epsilon F(cz))^{1/\beta}}\rho(z)}) 
    \leq \sum_{cz}\tr(\rho(cz)
    \suppproj{\rho(cz)-\frac{1}{(\epsilon F(cz))^{1/\beta}}\rho(z)}) \leq  
    \epsilon.
  \end{equation}
\end{theorem}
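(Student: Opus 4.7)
The plan is to apply Lemma~\ref{lem:renyimaxprob} pointwise in $cz$ and then sum, using the \QEF inequality at $\rho(CZ)$ to obtain the final bound. The key observation is that $\rho(cz) \leq \rho(z) = \sum_{c'}\rho(c'z)$, so $\rho(cz)\ll\rho(z)$ and the hypothesis of Lemma~\ref{lem:renyimaxprob} is satisfied for each pair.

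For each $cz$, I would set $p_{cz} = 1/(\epsilon F(cz))^{1/\beta}$, so that $p_{cz}^{\beta} = 1/(\epsilon F(cz))$. Applying Lemma~\ref{lem:renyimaxprob} to $\rho = \rho(cz)$ and $\sigma = \rho(z)$ then yields
\begin{equation}
\frac{1}{\epsilon F(cz)}\tr\bigl(\pospart{\rho(cz)-p_{cz}\rho(z)}\bigr) \leq \frac{1}{\epsilon F(cz)}\tr\bigl(\rho(cz)\suppproj{\rho(cz)-p_{cz}\rho(z)}\bigr) \leq \Rpow{\alpha}{\rho(cz)}{\rho(z)}.
\end{equation}
Multiplying through by $\epsilon F(cz)$ (which is well-defined whenever $F(cz)>0$) and summing over $cz$ gives
\begin{equation}
\sum_{cz}\tr\bigl(\pospart{\rho(cz)-p_{cz}\rho(z)}\bigr) \leq \sum_{cz}\tr\bigl(\rho(cz)\suppproj{\rho(cz)-p_{cz}\rho(z)}\bigr) \leq \epsilon\sum_{cz}F(cz)\Rpow{\alpha}{\rho(cz)}{\rho(z)}.
\end{equation}
The final step is to apply the \QEF inequality at $\rho(CZ)$, which upper-bounds $\sum_{cz}F(cz)\Rpow{\alpha}{\rho(cz)}{\rho(z)}$ by $\tr(\rho) = 1$, since $\rho(CZ)\in\cS_{1}(CZ\Pfnt{E})$ is normalized. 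This yields the desired bound $\epsilon$.

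The only subtlety is handling pairs $cz$ with $F(cz)=0$, for which $p_{cz}$ is formally infinite. Since $\rho(cz)\ll\rho(z)$, for such $cz$ the operator $\rho(cz)-p_{cz}\rho(z)$ has no positive eigenvalues, so both $\pospart{\cdot}$ and $\suppproj{\cdot}$ contribute zero to the sum; these terms can simply be dropped from the argument above. The ``main obstacle'' is really just bookkeeping: verifying the support condition $\rho(cz)\ll\rho(z)$ so that Lemma~\ref{lem:renyimaxprob} applies, and confirming that the multiplication by $\epsilon F(cz)$ cancels exactly the factor $p_{cz}^{\beta}$ that appears on the left-hand side of the cited lemma. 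After that, the proof is a one-line summation followed by an appeal to the \QEF inequality.
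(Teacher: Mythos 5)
Your proof is correct and follows essentially the same route as the paper's: both reduce the statement to Lemma~\ref{lem:renyimaxprob} applied term-by-term with $p_{cz}=1/(\epsilon F(cz))^{1/\beta}$, then sum and invoke the \QEF inequality. The only cosmetic difference is that the paper cites Lemma~\ref{lem:trpospart} directly for the first inequality while you obtain both inequalities from Lemma~\ref{lem:renyimaxprob} at once (which itself relies on Lemma~\ref{lem:trpospart}), and you spell out the $F(cz)=0$ edge case that the paper leaves implicit.
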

The theorem does not require $F(CZ)$ to be a \QEF for a specific model.

\begin{proof}
  The first inequality is an application of Lem.~\ref{lem:trpospart}.
  For the second, we apply Lem.~\ref{lem:renyimaxprob} as follows:
  \begin{align}
    \sum_{cz}\tr(\rho(cz)
    \suppproj{\rho(cz)-\frac{1}{(\epsilon F(cz))^{1/\beta}}\rho(z)})
    \hspace*{-2in}&\notag\\
    &= \sum_{cz}\epsilon F(cz)\frac{1}{\epsilon F(cz)}\tr(\rho(cz)
    \suppproj{\rho(cz)-\frac{1}{(\epsilon F(cz))^{1/\beta}}\rho(z)})
    \notag\\
    &\leq
    \sum_{cz}\epsilon F(cz)\Rpow{\alpha}{\rho(cz)}{\rho(z)}\notag\\
    &\leq \epsilon,
  \end{align}
  according to the \QEF inequality and since $\tr(\rho)=1$.
\end{proof}

\subsection{Entropy Estimates From \QEFs}

\begin{theorem}
  \label{thm:qef_epmax}
  Let $\rho(CZ)\in\cS_{1}(CZ\Pfnt{E})$ and suppose that $F(CZ)\geq 0$
  satisfies the \QEF inequality with power $\beta$ at $\rho(CZ)$ for $C|Z$.  
  Fix $1\geq p>0$ and $\eps>0$ and write $\phi(CZ)=\left(F(CZ)\geq
    1/(p^{\beta}\epsilon)\right)$.  Let $\phi'(CZ)$ satisfy
  $\{\phi'(CZ)\}\subseteq\{\phi(CZ)\}$, and define
  $\kappa=\tr(\rho(\phi'))$. Then
  \begin{equation}
    \kappa \sum_{cz:\phi'(cz)} \tr(
    \pospart{\rho(cz|\phi')-
     \frac{p}{\kappa}\rho(z)})\leq\epsilon.
  \end{equation}
\end{theorem}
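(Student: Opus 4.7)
The plan is to derive this from Theorem~\ref{thm:qef_suppproj} by a substitution argument that replaces the data-dependent bound $1/(\epsilon F(cz))^{1/\beta}$ with the uniform bound $p$ on the event $\phi'$, and by absorbing the normalization factor $\kappa$ into the positive part.

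First I would rewrite the left-hand side to eliminate the conditional distribution. Since $\rho(cz|\phi') = \knuth{\phi'(cz)}\rho(cz)/\kappa$, we have $\kappa\rho(cz|\phi') = \rho(cz)$ whenever $\phi'(cz)$ holds. Pulling the scalar $\kappa > 0$ inside the positive part using homogeneity gives
\begin{equation}
  \kappa\sum_{cz:\phi'(cz)}\tr\bigl(\pospart{\rho(cz|\phi')-\tfrac{p}{\kappa}\rho(z)}\bigr)
  = \sum_{cz:\phi'(cz)}\tr\bigl(\pospart{\rho(cz)-p\,\rho(z)}\bigr).
\end{equation}

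Next I would use the defining property of the event $\phi$ to compare $p$ with $1/(\epsilon F(cz))^{1/\beta}$. For $cz$ with $\phi'(cz)$ true we also have $\phi(cz)$ true, so $F(cz)\geq 1/(p^{\beta}\epsilon)$, equivalently $p\geq 1/(\epsilon F(cz))^{1/\beta}$. This yields the operator inequality
\begin{equation}
  \rho(cz)-p\,\rho(z) \leq \rho(cz)-\tfrac{1}{(\epsilon F(cz))^{1/\beta}}\rho(z),
\end{equation}
and by monotonicity of $\tr(\pospart{\cdot})$ established in Lemma~\ref{lem:trpospart},
\begin{equation}
  \tr\bigl(\pospart{\rho(cz)-p\,\rho(z)}\bigr) \leq \tr\bigl(\pospart{\rho(cz)-\tfrac{1}{(\epsilon F(cz))^{1/\beta}}\rho(z)}\bigr).
\end{equation}

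Finally I would sum over $cz$ with $\phi'(cz)$, extend the sum to all $cz$ (the summands being non-negative), and apply Theorem~\ref{thm:qef_suppproj} to obtain the bound $\epsilon$. There is essentially no obstacle; the only subtlety is making sure the $\kappa$-absorption step is valid, which follows from positive homogeneity of $[\cdot]_{+}$ in its argument, and checking that the case $\kappa=0$ is handled (in which case the left-hand side is $0$ trivially, since either the sum is empty or the $\rho(cz|\phi')$ convention makes the summands vanish).
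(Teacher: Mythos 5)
Your proof is correct and is essentially the paper's proof, with the same ingredients — positive homogeneity of $\pospart{\cdot}$, the implication $\phi'\Rightarrow\phi$ giving $p\geq 1/(\epsilon F(cz))^{1/\beta}$, monotonicity of $\tr(\pospart{\cdot})$ from Lem.~\ref{lem:trpospart}, and Thm.~\ref{thm:qef_suppproj} — merely applied in the reverse order (you reduce the claim to Thm.~\ref{thm:qef_suppproj}; the paper starts from it). Your explicit remark about the $\kappa=0$ case is the paper's "without loss of generality, $\kappa>0$."
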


The quantity $\kappa$ is the probability that $\phi'$ holds at
$\rho(CZ)$. Again, the theorem does not require $F(CZ)$ to be 
a \QEF for a specific model.

\begin{proof}
  Without loss of generality, let $\kappa>0$.
  Define $p(cz)=1/(\epsilon F(cz))^{1/\beta}$.
  For $cz$ satisfying $\phi'(cz)$, we have $p\geq p(cz)$.
  By Thm.~\ref{thm:qef_suppproj} 
  \begin{align}
    \epsilon &\geq \sum_{cz} \tr(\pospart{\rho(cz)-p(cz)\rho(z)})
    \notag\\
    &\geq \sum_{cz:\phi'(cz)}
    \tr(\pospart{\rho(cz)-p(cz)\rho(z)})\notag\\
    &= \sum_{cz:\phi'(cz)}
    \tr(\pospart{\rho(cz)\knuth{\phi'(cz)}-p(cz)\rho(z)})\notag\\
    &= \kappa \sum_{cz:\phi'(cz)} \tr(\frac{1}{\kappa}
    \pospart{\rho(cz)\knuth{\phi'(cz)}-p(cz)\rho(z)})\notag\\
    &= \kappa \sum_{cz:\phi'(cz)} \tr(
    \pospart{\frac{1}{\kappa}\rho(cz)\knuth{\phi'(cz)}-
      \frac{p(cz)}{\kappa}\rho(z)})\notag\\
    &= \kappa \sum_{cz:\phi'(cz)} \tr( \pospart{\rho(cz|\phi')-
      \frac{p(cz)}{\kappa}\rho(z)})\notag\\
    &\geq \kappa \sum_{cz:\phi'(cz)} \tr( \pospart{\rho(cz|\phi')-
      \frac{p}{\kappa}\rho(z)}),
  \end{align}
  since $\tr(\pospart{\chi})$ is monotone in $\chi$.
\end{proof}

We can obtain a conditional min-entropy bound from
Thm.~\ref{thm:qef_epmax} after applying Lem.~6.1, Pg.~94 of
Ref.~\cite{tomamichel:qc2012a} and Lem.~\ref{lem:renyimaxprob}, in the
spirit of Prop.~6.2, Pg.~95 of the same reference.  This proposition was
extended to sandwiched R\'enyi entropies by Lem.~B.4 of
Ref.~\cite{dupuis:qc2016a}. The statement of Lem.~B.4 contains an 
unnecessary restriction $\alpha\leq 2$: The data processing inequality 
for sandwiched R\'enyi entropy applies for all $\alpha>1$. The 
same result for all $\alpha>1$ 
is a consequence of Prop.~6.5, Pg.~99 of Ref.~\cite{tomamichel:qc2015a}. 
Instead of deriving a conditional min-entropy bound from Thm.~\ref{thm:qef_epmax}, 
we apply this Prop.~6.5  to the conditional R\'eny power bound in the first 
part of the next theorem, in order to obtain the conditional max-prob bound 
in the second part. 

\begin{theorem}\label{thm:bnds_from_qef}
  Let $\rho(CZ)\in \cS_{1}(CZ\Pfnt{E})$ and suppose that $F(CZ)\geq 0$
  satisfies the \QEF inequality with power $\beta$ at $\rho(CZ)$ for
  $C|Z$.  Fix $\delta,q\in(0,1]$, and set
  $p=q/\delta^{1/\beta}$. Write $\phi(CZ)=\left(F(CZ)\geq
    1/(q^{\beta})\right)$.  Let $\phi'(CZ)$ satisfy
  $\{\phi'(CZ)\}\subseteq\{\phi(CZ)\}$, and define
  $\kappa=\tr(\rho(\phi'))$. Then
  \begin{equation}
    \sum_{cz}\Rpow{\alpha}{\rho(cz|\phi')}{\rho(z)}\leq \frac{q^{\beta}}{\kappa^{\alpha}}
    \label{eq:thm:from_qef:ralpha}
  \end{equation}
  and
  \begin{equation}
    P^{\sqrt{2\delta}}_{\max}(\rho(cz|\phi')|Z\Pfnt{E})\leq
    P^{\sqrt{2\delta-\delta^{2}}}_{\max}(\rho(cz|\phi')|Z\Pfnt{E})
    \leq \frac{p}{\kappa^{\alpha/\beta}}.
    \label{eq:thm:from_qef:pmax}
  \end{equation}
\end{theorem}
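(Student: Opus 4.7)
The plan is to split the proof into two parts. First, I derive Eq.~\ref{eq:thm:from_qef:ralpha} directly from the QEF inequality at $\rho(CZ)$. Second, I convert the resulting R\'enyi power bound into the smooth max-prob statement in Eq.~\ref{eq:thm:from_qef:pmax} by applying Prop.~6.5, Pg.~99 of Ref.~\cite{tomamichel:qc2015a}, as announced in the paragraph preceding the theorem.

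For the first part, I start from $\sum_{cz}F(cz)\Rpow{\alpha}{\rho(cz)}{\rho(z)}\leq 1$, restrict the sum to $cz$ satisfying $\phi'$, and use $F(cz)\geq 1/q^{\beta}$ on this set (since $\{\phi'(cz)\}\subseteq\{\phi(cz)\}$) to obtain $\sum_{cz:\phi'(cz)}\Rpow{\alpha}{\rho(cz)}{\rho(z)}\leq q^{\beta}$. Since $\rho(cz|\phi')=\rho(cz)\knuth{\phi'(cz)}/\kappa$ and the sandwiched R\'enyi power is positively homogeneous of degree $\alpha$ in its first argument, the sum on the left-hand side of Eq.~\ref{eq:thm:from_qef:ralpha} equals $\kappa^{-\alpha}$ times the restricted sum, yielding the claimed upper bound $q^{\beta}/\kappa^{\alpha}$.

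For the second part, I recognize $\sum_{cz}\Rpow{\alpha}{\rho(cz|\phi')}{\rho(z)}$ as the sandwiched R\'enyi power of the quantization $\tilde\rho=\sum_{cz}\hat c\otimes\hat z\otimes\rho(cz|\phi')$ relative to $\one_{\Pfnt{C}}\otimes\sum_{z}\hat z\otimes\rho(z)$, since both operators are block-diagonal in $CZ$ so that the power factorizes across blocks. Combined with $\tr(\tilde\rho)=1$, this rewrites Eq.~\ref{eq:thm:from_qef:ralpha} as an upper bound on $\tildeDrel{\alpha}{\tilde\rho}{\one_{\Pfnt{C}}\otimes\sum_{z}\hat z\otimes\rho(z)}$ by $\log(q)-\frac{\alpha}{\beta}\log(\kappa)$. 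Using $\sum_{z}\hat z\otimes\rho(z)$ as a candidate in the infimum defining the arrow-up conditional R\'enyi entropy lower-bounds that entropy, and Prop.~6.5 of Ref.~\cite{tomamichel:qc2015a} then furnishes the conditional-min-entropy inequality
\begin{equation*}
  H_{\infty}^{\epsilon}(\tilde\rho|\Pfnt{ZE}) \;\geq\; -\log(q)+\frac{\alpha}{\beta}\log(\kappa) - \frac{1}{\beta}\log\frac{1}{1-\sqrt{1-\epsilon^{2}}}.
\end{equation*}
Substituting $\epsilon=\sqrt{2\delta-\delta^{2}}$ yields $1-\sqrt{1-\epsilon^{2}}=\delta$, and exponentiating produces $P^{\sqrt{2\delta-\delta^{2}}}_{\max}(\tilde\rho|\Pfnt{ZE}) \leq q/(\delta^{1/\beta}\kappa^{\alpha/\beta}) = p/\kappa^{\alpha/\beta}$. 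The first inequality in Eq.~\ref{eq:thm:from_qef:pmax} then follows from monotonicity of $P_{\max}^{\epsilon}$ in $\epsilon$ together with the elementary inequality $\sqrt{2\delta}\geq\sqrt{2\delta-\delta^{2}}$.

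The main obstacle is the bookkeeping translation between the CV-based definition of $P_{\max}^{\epsilon}(\rho(CZ|\phi')|Z\Pfnt{E})$ in Def.~\ref{def:smooth_max_prob} and the quantized smooth max-prob $P_{\max}^{\epsilon}(\tilde\rho|\Pfnt{ZE})$ at which Prop.~6.5 of Ref.~\cite{tomamichel:qc2015a} is formulated. This is handled by the smooth generalization of Lem.~\ref{lem:pmaxcz}, which follows from the same block-wise argument together with the fact that purified distance is preserved by the classical-basis quantization of CVs. Once this identification is in place, the rest reduces to direct algebraic manipulations.
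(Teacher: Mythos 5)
Your proposal is correct and follows essentially the same route as the paper's proof: part one via restriction of the QEF sum to $\phi'$, the bound $F(cz)\geq 1/q^{\beta}$, and degree-$\alpha$ homogeneity of the R\'enyi power; part two via Prop.~6.5 of Ref.~\cite{tomamichel:qc2015a} applied to the quantized block-diagonal states with $\delta = 1-\sqrt{1-\epsilon^{2}}$. The only stylistic difference is that you phrase the invocation of Prop.~6.5 through the arrow-up conditional R\'enyi entropy, whereas the paper reads off the $\sigma$-constrained smooth max-prob bound directly and then observes from Def.~\ref{def:smooth_max_prob} that fixing $\sigma(Z)=\rho(Z)$ yields an upper bound on $P^{\epsilon}_{\max}$ (so no explicit "smooth generalization" of Lem.~\ref{lem:pmaxcz} is needed); both phrasings say the same thing.
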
 
Again, the theorem does not require $F(CZ)$ to be a \QEF for a specific model. 

\begin{proof}
  For the first part, it suffices to rewrite the QEF inequality and drop terms:
  \begin{align}
    1 &\geq \sum_{cz}F(cz)\Rpow{\alpha}{\rho(cz)}{\rho(z)}\notag\\
    &\geq \sum_{cz}F(cz)\knuth{\phi'(cz)}\Rpow{\alpha}{\rho(cz)}{\rho(z)}\notag\\
    &\geq \sum_{cz}\frac{1}{q^{\beta}}\knuth{\phi'(cz)}
    \Rpow{\alpha}{\rho(cz)}{\rho(z)}\notag\\
    &=\sum_{cz}\frac{1}{q^{\beta}}
    \Rpow{\alpha}{\knuth{\phi'(cz)}\rho(cz)}{\rho(z)}\notag\\
    &=\sum_{cz}\frac{\kappa^{\alpha}}{q^{\beta}}
    \Rpow{\alpha}{\knuth{\phi'(cz)}\rho(cz)/\kappa}{\rho(z)}\notag\\
    &=\sum_{cz}\frac{\kappa^{\alpha}}{q^{\beta}}
    \Rpow{\alpha}{\rho(cz|\phi')}{\rho(z)}.
  \end{align}
  The claimed inequality is obtained by multiplying both sides
  by $q^{\beta}/\kappa^{\alpha}$.

  For the second part, we interpret Eq.~\ref{eq:thm:from_qef:ralpha}
  as a sandwiched $\alpha$-R\'enyi relative entropy bound. According 
    to Def.~\ref{def:sandwiched_Renyi} we have 
  \begin{align} \label{eq:renyi_power_decomposition}
    & \Rpow{\alpha}{\sum_{cz}\hat{c}\otimes\hat{z}\otimes\rho(cz|\phi')}{\sum_{z}\one\otimes\hat{z}\otimes\rho(z)} \notag \\
    &=\tr( \bigg( \Big(\sum_{z}\one\otimes\hat{z}\otimes\rho(z)^{-\beta/(2\alpha)}\Big) \Big(\sum_{cz}\hat{c}\otimes\hat{z}\otimes\rho(cz|\phi')\Big) \Big(\sum_{z}\one\otimes\hat{z}\otimes\rho(z)^{-\beta/(2\alpha)}\Big) \bigg)^{\alpha} ) \notag \\
    &= \tr( \sum_{cz}\hat{c}\otimes\hat{z}\otimes \Big(\rho(z)^{-\beta/(2\alpha)}\rho(cz|\phi')\rho(z)^{-\beta/(2\alpha)}\Big)^\alpha) \notag \\
    &= \sum_{cz} \tr(\Big(\rho(z)^{-\beta/(2\alpha)}\rho(cz|\phi')\rho(z)^{-\beta/(2\alpha)}\Big)^\alpha) \notag \\
    &=\sum_{cz} \Rpow{\alpha}{\rho(cz|\phi')}{\rho(z)}.
  \end{align} 
  We can now apply Prop.~6.5, Pg. 99 of Ref.~\cite{tomamichel:qc2015a}.  
  We convert to our notation, and substitute for $\epsilon$ in the reference 
  according to $\delta=1-\sqrt{1-\epsilon^{2}}$ (equivalently, $\epsilon =
  \sqrt{2\delta-\delta^{2}}$), the operator $\rho$ there by
  $\sum_{cz}\hat{c}\otimes\hat{z}\otimes\rho(cz|\phi')$ here, and
  $\sigma$ there by $\sum_{z}\one\otimes\hat{z}\otimes\rho(z)$ here.
  This gives 
  \begin{align} \label{eq:renyi_power_bound}
    & \inf_{\rho'}\inf\{p': \rho'(CZ)\leq p'\rho(Z), \rho'(CZ)\in\cS_{\leq 1}(CZ\Pfnt{E}),
    \purdist{\rho'(CZ)}{\rho(CZ|\phi')}\leq\sqrt{2\delta-\delta^{2}}\} \notag \\
    & \leq \left(\frac{1}{\delta}\Rpow{\alpha}{\sum_{cz}\hat{c}\otimes\hat{z}\otimes\rho(cz|\phi')}{\sum_{z}\one\otimes\hat{z}\otimes\rho(z)}\right)^{1/\beta}.
  \end{align}
  Taking note of the definition of $P_{\max}^{\epsilon}$ in Def.~\ref{def:smooth_max_prob}, 
  we get 
  \begin{align} \label{eq:smooth_entropy_bound}
    & P_{\max}^{\sqrt{2\delta-\delta^{2}}}(\rho(CZ|\phi')|Z\Pfnt{E}) \notag \\
    & \leq \inf_{\rho'}\inf\{p': \rho'(CZ)\leq p'\rho(Z), \rho'(CZ)\in\cS_{\leq 1}(CZ\Pfnt{E}),
    \purdist{\rho'(CZ)}{\rho(CZ|\phi')}\leq\sqrt{2\delta-\delta^{2}}\}. 
  \end{align}
  Combining Eqs.~\ref{eq:renyi_power_decomposition},~\ref{eq:renyi_power_bound}, and~\ref{eq:smooth_entropy_bound},
  we get 
  \begin{equation}
    P_{\max}^{\sqrt{2\delta-\delta^{2}}}(\rho(CZ|\phi')|Z\Pfnt{E})
    \leq \left(\frac{1}{\delta}\sum_{cz}\Rpow{\alpha}{\rho(cz|\phi')}{\rho(z)}\right)^{1/\beta}.
  \end{equation}
  Continuing from the right-hand side and applying
  Eq.~\ref{eq:thm:from_qef:ralpha}, we get
  \begin{align}
    P_{\max}^{\sqrt{2\delta-\delta^{2}}}(\rho(CZ|\phi')|Z\Pfnt{E})
    &\leq \left(\frac{q^{\beta}}{\delta \kappa^{\alpha}}\right)^{1/\beta}\notag\\
    &= \frac{p}{\kappa^{\alpha/\beta}}.
  \end{align}
  Since $P_{\max}^{\epsilon}$ is monotonic in the smoothness parameter
  $\epsilon$, the proof of the second part of the theorem is complete.
\end{proof}

We also use a simplified version of Thm.~\ref{thm:bnds_from_qef} where
$F(CZ)$ has a uniform lower bound:
\begin{corollary}\label{cor:qef_epmax2}
  Fix $\delta,p\in(0,1]$.  Let $\rho(CZ)\in \cS_{1}(CZ\Pfnt{E})$ and
  suppose that $F(CZ)\geq 1/(p^{\beta}\delta)$ satisfies the \QEF
  inequality with power $\beta$ at $\rho(CZ)$ for $C|Z$.  Then
  $P_{\max}^{\sqrt{2\delta}}(\rho(CZ)|Z\Pfnt{E})\leq p$.
\end{corollary}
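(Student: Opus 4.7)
The plan is to derive this corollary as a direct specialization of Theorem~\ref{thm:bnds_from_qef}. The key observation is that the uniform lower bound $F(CZ) \geq 1/(p^{\beta}\delta)$ is exactly what is needed to make the event $\phi(CZ)$ in Theorem~\ref{thm:bnds_from_qef} trivially hold everywhere, which will force $\kappa = 1$ and eliminate the $\kappa^{\alpha/\beta}$ factor from the conclusion.

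Concretely, I would set $q = p\,\delta^{1/\beta}$, so that the parameter denoted $p$ in Theorem~\ref{thm:bnds_from_qef} (namely $q/\delta^{1/\beta}$) coincides with the $p$ of the corollary. Note $q \in (0,1]$ since $p,\delta\in(0,1]$, so the hypotheses of Theorem~\ref{thm:bnds_from_qef} are met. With this choice, the threshold becomes $1/q^{\beta} = 1/(p^{\beta}\delta)$, so the hypothesis $F(CZ) \geq 1/(p^{\beta}\delta)$ means that $\phi(CZ) := (F(CZ) \geq 1/q^{\beta})$ holds for every $cz$. I would then take $\phi'(CZ) = \phi(CZ)$, which gives $\{\phi'(CZ)\} = \Rng(CZ)$ and hence $\kappa = \tr(\rho(\phi')) = \tr(\rho) = 1$, while $\rho(CZ\mid\phi') = \rho(CZ)$.

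Applying the second conclusion of Theorem~\ref{thm:bnds_from_qef} then yields
\begin{equation}
P_{\max}^{\sqrt{2\delta-\delta^{2}}}(\rho(CZ)|Z\Pfnt{E}) \leq \frac{q/\delta^{1/\beta}}{\kappa^{\alpha/\beta}} = \frac{p}{1^{\alpha/\beta}} = p.
\end{equation}
Since $\sqrt{2\delta - \delta^{2}} \leq \sqrt{2\delta}$ and $P_{\max}^{\epsilon}(\rho(CZ)|Z\Pfnt{E})$ is monotone non-increasing in $\epsilon$ (as noted after Definitions~\ref{def:smooth_max_prob} and~\ref{def:tvsmooth_max_prob}), we conclude $P_{\max}^{\sqrt{2\delta}}(\rho(CZ)|Z\Pfnt{E}) \leq P_{\max}^{\sqrt{2\delta-\delta^{2}}}(\rho(CZ)|Z\Pfnt{E}) \leq p$, which is the desired bound.

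There is no real obstacle here: the corollary is essentially a bookkeeping specialization, and the only things to check are (i) the parameter substitution $q = p\,\delta^{1/\beta}$ satisfies $q\in(0,1]$, (ii) the uniform lower bound converts the event $\phi$ into the trivial event so that $\kappa = 1$, and (iii) the smoothness parameter can be loosened from $\sqrt{2\delta-\delta^{2}}$ to $\sqrt{2\delta}$ by monotonicity. All the substantive work has been done in Theorem~\ref{thm:bnds_from_qef}.
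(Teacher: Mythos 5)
Your proof is correct and is the same approach as the paper's one-line proof, which simply invokes Eq.~\ref{eq:thm:from_qef:pmax} with $\kappa=1$; you have spelled out the parameter substitution $q=p\,\delta^{1/\beta}$ and the check $q\in(0,1]$ that justify that invocation. The final monotonicity step is slightly redundant since Eq.~\ref{eq:thm:from_qef:pmax} already states $P^{\sqrt{2\delta}}_{\max}\le P^{\sqrt{2\delta-\delta^2}}_{\max}$ as its first inequality, but this is harmless.
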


\begin{proof}
  It suffices to apply Eq.~\ref{eq:thm:from_qef:pmax} with $\kappa=1$.
\end{proof}

\section{\QEF-based Randomness Generation Protocols}
\label{sec:qef_protocols}

\subsection{Protocol Soundness and Completeness}
\label{subsec:soundness}

A generic randomness generation protocol $\cG$ produces three outputs:
a bit string of length $k_{o}$, a length $k_{u}$ bit string consisting
of potentially reusable random bits and a ``flag'' indicating failure
or success.  We write $\cG = (\cG_{X},\cG_{S},\cG_{P})$ accordingly
where $\cG_{X}$ is the bit string of length $k_{o}$, $\cG_{S}$ the bit
string of length $k_{u}$ and $\Rng(\cG_{P})=\{0,1\}$.  The values $0$
and $1$ of $\cG_{P}$ indicate failure and success, respectively.  The
outputs $\cG_{X}$, $\cG_{S}$ and $\cG_{P}$ are determined by CVs
associated with a sequence of trials involving the devices of the
protocols and a seed bit-string CV. Parameters of $\cG$ include
$k_{o}$, $k_{u}$, the length $k_{s}$ of the seed CV, and a target
error bound $\epsilon$. Other parameters may be relevant before the
protocol is invoked, such as the maximum number of trials $N$ and, after it
has executed, the number $n$ of trials actually performed and the
number of bits $k_{z}$ of input randomness used.

Informally, a protocol is $\epsilon$-sound if its output is within
$\epsilon$ of an ideal protocol. The distance measure used determines
the protocol's composability properties.  There is some variation in
the soundness definitions for randomness generation protocols in the
literature. We prove soundness with respect to purified distance,
which is stronger than other definitions. It implies soundness with
respect to TV distance including the devices, which is better
behaved for composability analyses.

\begin{definition}
  Let $\Sfnt{CZ}$ and $S$ be CVs, where $S$ is a length $k_{s}$ bit
  string, and let $\sigma(S)$ be the uniform distribution, 
  that is  $\sigma(S)=\Unif(S)$. 
  A randomness generation protocol $\cG=(\cG_{X},\cG_{S},\cG_{P})$ determined by
  $\Sfnt{CZ}S$ is \emph{$\epsilon$-sound for $\Sfnt{C}|\Sfnt{Z}$ at
    $\rho(\Sfnt{CZ})\in\cS_{1}(\Sfnt{CZ}\Pfnt{E})$} if there exists
  $\tau(Z)\in\cS_{1}(\Sfnt{Z}\Pfnt{E})$ such that
  \begin{equation}
    \PD\left(\vphantom{\big|}(\rho\otimes \sigma)(\cG_{X}\cG_{S}\Sfnt{Z}|\cG_{P}=1),
      \Unif(\cG_{X}\cG_{S})\otimes\tau(\Sfnt{Z})\right)
    \tr(\vphantom{\big|}(\rho\otimes\sigma)(\cG_{P}=1))
    \leq\epsilon.\label{eq:soundnessdef}
  \end{equation}
  $\cG$ is \emph{$\epsilon$-sound for $\Sfnt{C}|\Sfnt{Z}$ and model
    $\cC(\Sfnt{CZ})$} if it is $\epsilon$-sound for
  $\Sfnt{C}|\Sfnt{Z}$ at all $\rho(\Sfnt{CZ})\in\cC(\Sfnt{CZ})$.
  $\cG$ is \emph{$\kappa$-complete for $\Sfnt{C}|\Sfnt{Z}$ and model
    $\cC(\Sfnt{CZ})$} if there exists
  $\rho(\Sfnt{CZ})\in\cC(\Sfnt{CZ})$ such that
  $\tr((\rho\otimes\sigma)(\cG_{P}=1))\geq \kappa$.  
\end{definition}
If required for clarity, we may refer to the soundness in this
definition as \emph{PD soundness}.  Completeness is important to
ensure that protocols can be usefully realized.  For our protocols and
models with extractable randomness, completeness is readily achieved
with an exponentially good completeness parameter. In practice,
completeness parameters cannot be relied on to be exponentially good.
Further, the idea of device-independent protocols is that the devices
are minimally trusted, so regardless of completeness or other
expectations of the experimental configuration, provisions for failure
must be made to mitigate denial-of-service and mundane device
faults. Soundness makes sure that any randomness produced has
guaranteed performance even in the context of probabilities of success
that are temporarily or permanently far from $1$.

It is possible to consider soundness statements involving seed CVs
whose distributions are not uniform, but this requires extractors
satisfying stronger conditions than the quantum-proof strong
extractors considered here. See~\cite{kessler:qc2017a} and the
references therein for recent work with less-than-perfect seeds.

It may be desirable to have the purified distance conditional on success
be bounded by $\delta$ given that the success probability is larger
than some small threshold $\kappa$. For this it suffices to choose the
soundness error $\epsilon$ as $\epsilon\leq \delta\kappa$.  If one
wishes to be equally conservative for both $\delta$ and $\kappa$, it
makes sense to set $\epsilon=\delta^{2}$.

The purified distance allows for extension to the devices to enable
analysis of protocol composition involving the same devices, where the
devices may have memory. This kind of composition can introduce the
possibility of memory attacks, whereby the devices leak information
about past results through leakage channels enabled by later
protocols~\cite{barrett_j:qc2012a}.  For our randomness generation
protocols, such a leakage channel is introduced by the success
variable $\cG_{P}$: The devices can modify their future behavior so
that the variables $\cG_{P}$ in later protocols depend on the past
results.  This favors protocols with no possibility of failure such as
Protocol~\ref{prot:condgen_banked} below. A detailed discussion of
memory attacks for randomness generation is in the 
supplemental material of
Ref.~\cite{barrett_j:qc2012a}. We note that our protocols have fixed
length outputs, which avoids leakage channels based on the length of
the output but does not eliminate implementation-dependent leakage
channels such as variations in timing or side-effects of using
randomness.

We do not formally analyze composition of randomness-generation
protocols with the same devices, and unrestricted composability is not
assured.  But to support such composition, we require that the devices
are permanently isolated from $\Pfnt{E}$ and that they never gain
knowledge of seeds used for randomness extraction.  The latter
supports the following strategy to mitigate $\cG_{P}$-based leakage
channels: Anticipate the number of future instances of the protocol
and reduce the number of bits extracted from the current protocol
accordingly, similarly to how settings entropy is eliminated in
Protocol~\ref{prot:condimplicit} below. The requirements may be
difficult to guarantee in a practical setting but can be weakened once
the randomness generated is used, see the discussion in
Ref.~\cite{barrett_j:qc2012a}.

PD soundness implies a strong TV distance-based soundness.  Let
$\Pfnt{D}$ be the system of the devices and let
$\rho'(\cG_{X}\cG_{S}\Sfnt{Z}\cG_{P})\in
\cS_{1}(\cG_{X}\cG_{S}\Sfnt{Z}\cG_{P}\Pfnt{DE})$ be the final state of
the protocol. Thus $\rho(\cG_{X}\cG_{S}\Sfnt{Z}\cG_{P}) =
\tr_{\Pfnt{D}}\rho'(\cG_{X}\cG_{S}\Sfnt{Z}\cG_{P})$.  Here,
information about $\Sfnt{C}$ may be contained in the quantum part of
the state carried by $\Pfnt{D}$.  If the protocol is PD $\epsilon$-sound,
then the extension property of purified distance
(Ref.~\cite{tomamichel:qc2012a}, Cor.~3.6, Pg.~52) and the
relationship between $\TV$ and purified distances
(Lem.~\ref{lem:purdistprops}) imply that there exists a state
$\tau'(\cG_{X}\cG_{S}\Sfnt{Z})
\in\cS_{1}(\cG_{X}\cG_{S}\Sfnt{Z}\Pfnt{DE})$ such that
\begin{equation} \TV\left(\rho'(\cG_{X}\cG_{S}\Sfnt{Z}|\cG_{P}=1),
    \tau'(\cG_{X}\cG_{S}\Sfnt{Z})\right)
  \tr(\vphantom{\big|}(\rho\otimes\sigma)(\cG_{P}=1)) \leq\epsilon
  \label{eq:tvsoundness}
\end{equation} and
\begin{equation} \tr_{\Pfnt{D}}\tau'(\cG_{X}\cG_{S}\Sfnt{Z}) =
  \Unif(\cG_{X}\cG_{S})\otimes\tau(\Pfnt{Z}),
  \label{eq:tvsoundnesstr}
\end{equation} with $\tau(\Sfnt{Z})$ witnessing PD $\epsilon$-soundness.  
This construction can be used to justify the informal idea that
$\epsilon$-soundness relates to how close the protocol endstate
is from that of an ideal protocol.
From $\tau'(\cG_{X}\cG_{S}\Sfnt{Z})$, we
can construct an ideal protocol endstate
$\xi(\cG_{X}\cG_{S}\Sfnt{Z}\cG_{P})\in
\cS_{1}(\cG_{X}\cG_{S}\Sfnt{Z}\cG_{P}\Pfnt{DE})$ that
includes the devices and is $\epsilon$-close in $\TV$ distance to the
actual state:
\begin{equation} \xi(\cG_{X}\cG_{S}\Sfnt{Z}\cG_{P})=
\tr(\vphantom{\big|}(\rho\otimes\sigma)(\cG_{P}=1))\tau'(\cG_{X}\cG_{S}\Sfnt{Z})
+ \rho'(\cG_{X}\cG_{S}\Sfnt{Z},\cG_{P}=0).
\end{equation} This state satisfies $\epsilon$-soundness with
$\epsilon=0$ and agrees with the protocol endstate conditionally
on failure. The existence of this state motivates the definition of
soundness and our use of purified distance. But for composability
analysis, we use TV $\epsilon$-soundness including the devices, which
we define as existence of the state $\tau'$ satisfying 
Eqs.~\ref{eq:tvsoundness} and~\ref{eq:tvsoundnesstr}.

The protocols below are proved to be PD $\epsilon$-sound regardless of
the incoming state and dependence on initial classical variables that
may be public and on variables determined from initial information. TV
soundness extends to such initial variables without changing the error
bound.  From the previous paragraph, the TV soundness error is
uniformly bounded by $\epsilon$ given these initial variables, as is
the distance from an ideal protocol conditional on the initial
variables.  We can define an unconditional ideal protocol by having it
act as the conditional ideal protocol given the initial variables.
The probability distribution for the initial variables is the same for
the actual and the ideal protocol. The TV distance between two states
classical on $R$ with identical marginal distribution on $R$ is the
expected $R$-conditional TV distance.  It follows that the distance
between the two is the expected distance conditional on the initial
variables, which is less than $\epsilon$.

\Ac{It would be helpful to replace $\TV$ and purified distance with
  a distance measure that respects classical variables in that it
  agrees with $\TV$ distance on classical states and purified
  distance on fully quantum states. Perhaps one could define the
  distance between normalized $\rho(U)$ and $\sigma(U)$ by
  $\sum_{u}\min\{\TV(\phi,\psi): \textrm{$\phi,\psi$ are
    purifications of $\rho(u),\sigma(u)$}\}$.  For this definition,
  take note of the identity $\PD(\xi,\chi)=\min\{\TV(\phi,\psi):
  \textrm{$\phi,\psi$ are purifications of $\xi,\chi$}\}$ for
  normalized states $\xi,\chi$, see Ref.~\cite{tomamichel:qc2012a},
  Pg.~49.  One should check the that this expression has the
  properties required of a distance in information processing.}

\subsection{Protocols with \QEFs}

We define three sound randomness generation protocols given a \QEF.
Whether they are complete depends on the model and the \QEF.  The
results established later show that if a trial model permits proper
randomness generation in principle, then completeness with
exponentially good completeness parameter is readily achieved for
sequences of independent and identical (i.i.d.) trials, each constrained by
the trial model. This generally follows from large deviation results
applied to sums of i.i.d. RVs. 
For our protocols, these RVs are the logarithms of the 
\QEFs.  We do not explore the relevant arguments further here.

In this section we consider monolithic \QEFs and $CZ$, meaning
that we do not explicitly subdivide the results into a sequence of trials.
Thus, $CZ$ stands for all results, whether or not they were obtained
in a sequence of trials, and the \QEFs are the final \QEFs, obtained
by chaining if necessary.
Protocol-related issues when the \QEFs are determined by chaining are
discussed in Sect.~\ref{subsect:trialwise}. Anticipating the amount of
conditional min-entropy that can be certified is the topic of
Sect.~\ref{subsect:logprobrates}.

The first protocol directly composes Thm.~\ref{thm:bnds_from_qef} on the
relationship between \QEFs and smooth max-prob with a quantum-proof
strong extractor. The protocol is displayed in
Protocol~\ref{prot:condgen_direct}. We use the notation $a^{\conc k}$
to denote the $k$-fold concatenation of $a$ with itself.

\vspace*{\baselineskip}
\begin{algorithm}[H]
  \caption{Input-conditional randomness
    generation.}\label{prot:condgen_direct}
  
  \Input{Number of bits of randomness $k_{o}$ to be generated. Error
    bound $\epsilon\in(0,1]$. }
  
  \Given{Access to CVs ${CZ}$ and $S$, where $S$ is uniformly distributed
  and independent of all other systems. All CVs are represented by bit strings. 
  A \QEF $F({CZ})$ with  power $\beta$ for ${C}|{Z}$ and model
    $\cC({CZ})$.  A quantum-proof strong extractor $\cE$.
    }  
  
  \Output{Length $k_{o}$ bit string $\cG_{X}$,
    $\cG_{S}=S$, $\cG_{P}\in\{0,1\}$.}  
  
  \BlankLine 

  Define $n=|C|$,  $k_{s}=|S|$\; 
  
  Define $\cX=\{(k_{i},\epsx):\textrm{$(n,k_{s},k_{o},k_{i},\epsx<\epsilon)$
    satisfies the extractor 
    constraints for $\cE$}\}$
  \tcp*{See the paragraph after Def.~\ref{def:extractor}.}

  Get an instance $s$ of $S$\;

  \eIf{$\cX$ is empty}
  { Return $\cG_{P}=0$, $\cG_{X}=0^{\conc k_{o}}$, $\cG_{S}=s$
    \tcp*{Protocol failed.} }
  { Choose $(k_{i},\epsx)\in\cX$\;
    
    Set $\epse=(\epsilon-\epsx)$\; 
    If $\alpha>2$, then set $p= 2^{-k_{i}}\epsilon^{(\alpha-2)/\beta}$, otherwise set $p=2^{-k_{i}}$\;
    
    Set $f_{\min}=1/(p^{\beta}(\epse^{2}/2))$ \tcp*{Choose $(k_{i},\epsx)$ to
      minimize $f_{\min}$.} 

    Get an instance ${cz}$ of  ${CZ}$\; 
    
    Compute $f=F({cz})$.  
    
    \eIf{$f<f_{\min}$}
    {
      Return $\cG_{P}=0$, $\cG_{X}=0^{\conc k_{o}}$, $\cG_{S}=s$
      \tcp*{Protocol failed.} 
    }
    { Return $\cG_{P}=1$,
      $\cG_{X}=\cE(c,s;n,k_{s},k_{o},k_{i},\eps_{x})$, $\cG_{S}=s$   
      \tcp*{Protocol succeeded.} 
    } 
  
  }
\end{algorithm}

\begin{theorem}\label{thm:condgen_direct}
  Protocol~\ref{prot:condgen_direct} is an $\epsilon$-sound randomness generation protocol
  for ${C}|{Z}$ and model $\cC({CZ})$.
\end{theorem}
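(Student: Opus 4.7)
I would split the argument into a vacuous regime (handled trivially) and a main regime where I derive a smooth max-prob bound on the success-conditional state from the QEF inequality and combine it with the extractor guarantee by the triangle inequality. If $\cX$ is empty the protocol deterministically returns $\cG_P = 0$ and soundness holds vacuously. Otherwise fix the chosen $(k_i,\epsx)\in\cX$; set $\epse = \epsilon - \epsx > 0$, $\delta = \epse^2/2$ (so $f_{\min} = 1/(p^\beta\delta)$ matches Thm.~\ref{thm:bnds_from_qef} with $q = p\delta^{1/\beta}$), $\Phi = \{cz : F(cz)\geq f_{\min}\}$ (so $\cG_P = 1$ iff $CZ\in\Phi$), and $\kappa = \tr(\rho(\Phi))$. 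Assume $\kappa > 0$ (else soundness is again vacuous) and put $\tilde\rho(CZ) = \rho(CZ|\Phi)$.

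The QEF inequality at $\rho(CZ)$, restricted to $cz\in\Phi$ and using $F(cz)\geq f_{\min}$ there, gives $\sum_{cz\in\Phi}\Rpow{\alpha}{\rho(cz)}{\rho(z)} \leq p^\beta\delta$; homogeneity of the sandwiched R\'enyi power in its first argument then yields $\sum_{cz}\Rpow{\alpha}{\tilde\rho(cz)}{\rho(z)} \leq p^\beta\delta/\kappa^\alpha$. Applying Prop.~6.5 of Ref.~\cite{tomamichel:qc2015a} (the conversion step already used in the proof of Thm.~\ref{thm:bnds_from_qef}) with a free smoothing parameter $\delta'\in(0,1]$ produces $\rho'(CZ)\in\cS_{\leq 1}(CZ\Pfnt{E})$ with $\purdist{\tilde\rho(CZ)}{\rho'(CZ)} \leq \sqrt{2\delta' - (\delta')^2}$ and $P_{\max}(\rho'(CZ)|Z\Pfnt{E}) \leq (p^\beta\delta/(\kappa^\alpha\delta'))^{1/\beta}$; Lem.~\ref{lem:purdist_normalized} (whose side condition holds because $k_i\leq n$) lets me take $\rho'(CZ)$ normalized.

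The crux is to choose $\delta'$ so the extractor applies and the weighted PD is small. For $\alpha \leq 2$ I would take $\delta' = \delta/\kappa^\alpha$: the max-prob collapses to $p = 2^{-k_i}$, and $\kappa\sqrt{2\delta'} = \sqrt{2\delta\kappa^{2-\alpha}} \leq \sqrt{2\delta} = \epse$ since $\kappa^{2-\alpha} \leq 1$. For $\alpha > 2$ I would take $\delta' = \delta/\kappa^2$: the max-prob is $p/\kappa^{(\alpha-2)/\beta}$, which with the protocol's $p = 2^{-k_i}\epsilon^{(\alpha-2)/\beta}$ is $\leq 2^{-k_i}$ exactly when $\kappa\geq\epsilon$, and the weighted smoothness is $\kappa\sqrt{2\delta/\kappa^2} = \epse$. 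The constraint $\delta'\leq 1$ cuts off $\kappa$ below a threshold that reduces in both cases to $\kappa < \epsilon$ (using $\delta^{1/\alpha} \leq \epse$ for $\alpha \leq 2$ since $2/\alpha \geq 1$ and $\epse \leq 1$); in this complementary small-$\kappa$ range the trivial bound $\kappa\cdot \PD \leq \kappa < \epsilon$ closes the argument. In the main regime I absorb $Z$ into the side information via Lem.~\ref{lem:pmaxcz}, invoke the extractor guarantee on $\rho'$ to get $\purdist{\bar\cE_*(\rho' \otimes \Unif(S))}{\Unif(\cG_X\cG_S) \otimes \rho'(Z)} \leq \epsx$, set $\tau(Z) = \rho'(Z)\in\cS_1(Z\Pfnt{E})$, and combine by the triangle inequality with data processing of PD through the extractor to obtain $\kappa \cdot \purdist{\tilde\rho(\cG_X\cG_S Z)}{\Unif(\cG_X\cG_S) \otimes \tau(Z)} \leq \kappa\sqrt{2\delta' - (\delta')^2} + \kappa\epsx \leq \epse + \epsx = \epsilon$.

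The main obstacle is the $\kappa^{-\alpha/\beta}$ factor in Prop.~6.5's bound: it rules out a direct extractor application and forces the $\alpha$-dependent choice of $\delta'$, matched by the $\epsilon^{(\alpha-2)/\beta}$ prefactor in the protocol's $p$, to balance the weighted smoothing error against the extractor input condition. The complementary small-$\kappa$ regime where Prop.~6.5's validity $\delta'\leq 1$ breaks must be disposed of separately by the trivial $\kappa$-bound.
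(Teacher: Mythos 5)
Your proof is correct and takes essentially the same route as the paper's: derive a conditional R\'enyi-power bound from the QEF inequality, convert it to a smooth max-prob bound via Prop.~6.5 of Ref.~\cite{tomamichel:qc2015a}, normalize via Lem.~\ref{lem:purdist_normalized}, apply the extractor, and close with the triangle and data-processing inequalities plus the trivial bound for small $\kappa$. The paper fixes the smoothing parameter uniformly as $(\epse/\kappa)^2/2$ (i.e.\ your $\delta/\kappa^2$) and then absorbs the leftover $\kappa^{(2-\alpha)/\beta}$ factor using $\kappa\leq 1$ when $\alpha\leq 2$, whereas you choose $\delta'$ $\alpha$-dependently so the max-prob lands exactly on $p$; the two parametrizations are interchangeable and yield the same bound.
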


\begin{proof}
  According to our modeling assumptions, the model applies
  conditionally on the past, which includes the protocol inputs
  $k_{o}$ and $\epsilon$ and the specific choice for $(k_{i}, k_s, \epsx)$
  made in the protocol, as these parameters are determined before
  ${CZ}$ is instantiated. Let $\rho({CZ})\in\cC({CZ})$ be the specific
  state from which ${CZ}$ is instantiated to ${cz}$ in the protocol.
  Let $\phi({CZ})=\left(F({CZ})\geq
    f_{\min}\right)=\left(\cG_{P}=1\right)$.  Define
  $\kappa=\tr(\rho(\phi))$. First consider the case $\kappa\in[\epsilon,1]$.
 In Thm.~\ref{thm:bnds_from_qef}, set
  $\delta=(\epse/\kappa)^{2}/2$ and $p$ there to
  $p\kappa^{2/\beta}$ here. With these substitutions, $q$ there
  satisfies
  $q=(p\kappa^{2/\beta})((\epse/\kappa)^{2}/2)^{1/\beta}=
  p(\epse^{2}/2)^{1/\beta}= 1/f_{\min}^{1/\beta}$ so that the lower
  bound on $F(CZ)$ in the definition of $\phi(CZ)$ there is $f_{\min}$,
  which is the lower bound on $F(CZ)$ in the protocol
  required for success, that is for $\cG_{P}=1$. 
  Applying Thm.~\ref{thm:bnds_from_qef} therefore gives
  \begin{equation}
    P^{\epse/\kappa}_{\max}(\rho({CZ}|\phi)|{Z}\Pfnt{E})
    \leq p\kappa^{2/\beta}/\kappa^{\alpha/\beta} = p \kappa^{(2-\alpha)/\beta},
    \label{eq:thm:condgen_direct:1}
  \end{equation}
  where $p$ is defined in the protocol so that for $\epsilon\leq
  \kappa \leq 1$, we have $p\kappa^{(2-\alpha)/\beta}\leq 2^{-k_{i}}$.
  Specifically, if $\alpha\leq 2$, then
  $p\kappa^{(2-\alpha)/\beta}\leq p=2^{-k_{i}}$, and if $\alpha>2$, then
  $p\kappa^{(2-\alpha)/\beta}\leq
  p\epsilon^{(2-\alpha)/\beta}=2^{-k_{i}}$.  Hence, when $\epsilon\leq
  \kappa \leq 1$ we have
  $P^{\epse/\kappa}_{\max}(\rho({CZ}|\phi)|{Z}\Pfnt{E})\leq
  2^{-k_{i}}$.  That is, there exists $\rho'(CZ)\in
  \cS_{1}(CZ\Pfnt{E})$ such that $P_{\max}(\rho'(CZ)|Z\Pfnt{E})\leq
  2^{-k_{i}}$ and $\PD(\rho'(CZ), \rho({CZ}|\phi))\leq \epse/\kappa$
  (see Lem.~\ref{lem:purdist_normalized}, where the extractor
  constraints ensure that $2^{-k_{i}}\leq 2^{n}=|\Rng(C)|$).  As in
  the definition of soundness, let $\sigma(S)=\Unif(S)$. Because the
  parameters $n,k_{s},k_{o},k_{i},\epsx$ satisfy the extractor
  constraints, we get
  \begin{equation}\label{eq:ex_witness_1}
    \PD\left((\rho'\otimes \sigma)(\cG_{X}\cG_{S}{Z}),
      \Unif(\cG_{X}\cG_{S})\otimes\rho'({Z})\right)\leq \epsx.
  \end{equation}
  Since  $\PD(\rho'(CZ), \rho({CZ}|\phi))\leq \epse/\kappa$ and the  
  purified distance
  satisfies the data-processing inequality, 
  \begin{equation}\label{eq:ex_witness_2}
    \PD\left((\rho\otimes \sigma)(\cG_{X}\cG_{S}{Z}|\cG_{P}=1),
    (\rho'\otimes \sigma)(\cG_{X}\cG_{S}{Z})\right)\leq \epse/\kappa.
  \end{equation}
  The triangle inequality for the purified distance together with   
  Eqs.~\ref{eq:ex_witness_1} and~\ref{eq:ex_witness_2} yield
  \begin{equation}
  \PD\left((\rho\otimes \sigma)(\cG_{X}\cG_{S}{Z}|\cG_{P}=1),  \Unif(\cG_{X}\cG_{S})\otimes\rho'({Z})\right)\leq \epsx+\epse/\kappa.
  \end{equation}
  We multiply both sides by $\kappa$ for
  \begin{equation}
    \PD\left((\rho\otimes \sigma)(\cG_{X}\cG_{S}{Z}|\cG_{P}=1),
      \Unif(\cG_{X}\cG_{S})\otimes\rho'({Z})\right)\kappa
    \leq\epsx\kappa +\epse\leq\epsx+\epse=\epsilon.
  \end{equation}
  For $\kappa<\epsilon$, since the purified distance cannot be larger than one,
  \begin{equation}
    \PD\left((\rho\otimes \sigma)(\cG_{X}\cG_{S}{Z}|\cG_{P}=1),
      \Unif(\cG_{X}\cG_{S})\otimes\rho({Z}|\cG_{P}=1)\right)\kappa
    \leq\kappa<\epsilon,
  \end{equation}
  so the condition for $\epsilon$-soundness is satisfied 
  for the full range of values of $\kappa$. 
\end{proof}

Next we define a protocol that avoids failure by taking advantage of
banked randomness. It has the advantage of simplicity at the cost of
occasionally producing randomness that is not entirely fresh, which
adds effective latency. Of course, in situations where we can
experimentally ensure completeness, it is possible to make the
probability of requiring banked randomness extremely small.  The
protocol is displayed in Protocol~\ref{prot:condgen_banked}.

\vspace*{\baselineskip}
\begin{algorithm}[H]
  \caption{Input-conditional randomness
    generation with banked randomness.}\label{prot:condgen_banked}
  \Input{Number of bits of randomness $k_{o}$ to be generated. Error
    bound $\epsilon\in(0,1]$. }
  
  \Given{Access to CVs ${CZ}$, $S$ and $B$, where $|B|=k_{o}$ and $SB$ is uniformly
    distributed and independent of all other systems.  All CVs are represented by bit strings. 
    A \QEF $F({CZ})$ with power $\beta$ for ${C}|{Z}$ and
    model $\cC({CZ})$.  A quantum-proof strong extractor $\cE$.
    }
  
  \Output{Length $k_{o}$ bit string $\cG_{X}$,
    $\cG_{S}=S$, $\cG_{P}\in\{0,1\}$.}  
  
  \BlankLine 
  
  Define $n=|C|$,  $k_{s}=|S|$\; 

  Define $\cX=\{(k_{i},\epsx):\textrm{$(n+k_{o},k_{s},k_{o},k_{i},\epsx<\epsilon)$
    satisfies the extractor constraints for $\cE$}\}$\;

  Get an instance $s$ of $S$\;

  \eIf{$\cX$ is empty}
  { Get an instance $b_{\leq k_{o}}$ of $B_{\leq k_{o}}$\;

    Return $\cG_{P}=1$,
      $\cG_{X}=b_{\leq k_{o}}$, $\cG_{S}=s$
      \tcp*{Return only banked randomness.}
  }
  { Choose $(k_{i},\epsx)\in\cX$\;
    
    Set $\epse=(\epsilon-\epsx)$\; Set $p= 2^{-k_{i}}$\;
    
    Set $f_{\min}=1/(p^{\beta}(\epse^{2}/2))$ \tcp*{Choose $(k_{i},\epsx)$ to
      minimize $f_{\min}$} 

    Get an instance ${cz}$ of  ${CZ}$\; 
    
    Compute $f=F({cz})$.  
    
    \eIf{$f\geq f_{\min}$}
    { Return $\cG_{P}=1$,
      $\cG_{X}=\cE(c0^{\conc k_{o}},s;n+k_{o},k_{s},k_{o},k_{i},\eps_{x})$, $\cG_{S}=s$
      \tcp*{No banked randomness needed.}
    } 
    {
      Set $k_{b}=\lceil\log_{2}(f_{\min}/f)/\beta\rceil$\;

      Get an instance $b_{\leq k_{b}}$ of $B_{\leq k_{b}}$\;

      Return $\cG_{P}=1$,
      $\cG_{X}=\cE(cb_{\leq k_{b}}0^{\conc k_{o}-k_{b}},s;
      n+k_{o},k_{s},k_{o},k_{i},\eps_{x})$, $\cG_{S}=s$
      \tcp*{Needed $k_{b}$ bits of banked randomness.}

    }
  }
\end{algorithm}

\begin{theorem}
  Protocol~\ref{prot:condgen_banked} is a complete and $\epsilon$-sound
  randomness generation protocol
  for ${C}|{Z}$ and model $\cC({CZ})$.
\end{theorem}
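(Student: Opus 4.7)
The plan is to reduce the analysis to that of Theorem~\ref{thm:condgen_direct} by showing that the banking mechanism uniformly raises the effective QEF value to at least $f_{\min}$. Completeness is immediate: since $\cG_P = 1$ always, the protocol is $1$-complete. For PD soundness, observe that emptiness of $\cX$ depends only on $n$, $k_s$, $k_o$, and $\epsilon$, not on any instantiated CV. If $\cX = \emptyset$, then $\cG_X = B_{\leq k_o}$ is uniform and independent of all other systems by construction, and Eq.~\ref{eq:soundnessdef} is trivially satisfied with $\tau(\Sfnt{Z}) = \rho(\Sfnt{Z})$ and left-hand side $0$.

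For the case $\cX \neq \emptyset$, I would package the extractor input as a single CV $C' = CB_{\leq K_B}0^{\conc(k_o - K_B)}$ of length $n + k_o$, where $K_B = K_B(CZ)$ is $0$ when $F(CZ) \geq f_{\min}$ and $\lceil \log_2(f_{\min}/F(CZ))/\beta \rceil$ otherwise. The joint distribution then satisfies $\rho(c_1 c_2, z) = 2^{-K_B(c_1 z)} \rho(c_1 z)$ on valid $c_2 = b 0^{\conc(k_o - K_B(c_1 z))}$ and vanishes elsewhere. The key step is to define $F'(c_1 c_2, z) = F(c_1 z) \cdot 2^{\beta K_B(c_1 z)}$ on the valid support and show two properties: (i) $F'$ satisfies the \QEF inequality at $\rho(C'Z)$, and (ii) $F'(c'z) \geq f_{\min}$ wherever $\rho(c'z) > 0$. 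Property (ii) is the design constraint on $K_B$ in the two branches of the protocol. For (i), summing over the $2^{K_B}$ valid values of $c_2$ and using the homogeneity $\Rpow{\alpha}{\lambda\rho}{\sigma} = \lambda^\alpha\Rpow{\alpha}{\rho}{\sigma}$, each $(c_1, z)$ contributes $F(c_1 z) \cdot 2^{K_B(1 + \beta - \alpha)}\Rpow{\alpha}{\rho(c_1 z)}{\rho(z)} = F(c_1 z)\Rpow{\alpha}{\rho(c_1 z)}{\rho(z)}$ since $\alpha = \beta + 1$, and summation over $c_1 z$ is then at most $\tr(\rho)$ by the original \QEF inequality for $F$ at $\rho(CZ)$. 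The main conceptual observation is the precise calibration $2^{\beta K_B}$, which exactly compensates the $\alpha$-th power scaling; it also matches the intuition that each uniform independent bit appended to the extractor input contributes one bit of conditional min-entropy and hence a factor $2^\beta$ in the R\'enyi power estimate.

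I would then apply Theorem~\ref{thm:bnds_from_qef} at $\rho(C'Z)$ with $\phi' = \phi = \{F'(C'Z) \geq f_{\min}\}$, $q = 1/f_{\min}^{1/\beta}$, and $\delta = \epse^2/2$. Property (ii) yields $\kappa = 1$, whence $p = q/\delta^{1/\beta} = 2^{-k_i}$ and $P_{\max}^{\epse}(\rho(C'Z)|Z\Pfnt{E}) \leq 2^{-k_i}$ after bounding $\sqrt{2\delta - \delta^2} \leq \epse$. Lemma~\ref{lem:purdist_normalized} then supplies a normalized $\rho'(C'Z)$ within purified distance $\epse$ of $\rho(C'Z)$ with $P_{\max}(\rho'(C'Z)|Z\Pfnt{E}) \leq 2^{-k_i}$. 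The extractor guarantee at parameters $(n+k_o, k_s, k_o, k_i, \epsx)$, the data-processing inequality for purified distance under the deterministic map $C' \mapsto \cG_X$, and the triangle inequality combine exactly as in the proof of Theorem~\ref{thm:condgen_direct} to produce the bound $\epsx + \epse = \epsilon$. The main obstacle is pinning down the calibration in (i); everything afterwards is a re-run of the direct protocol's template.
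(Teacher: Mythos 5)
Your proof is correct and captures exactly the conceptual move the paper makes: treat the banked bits as an appended final trial with $k_b$ uniform bits, observe that the $2^{\beta k_b}$ multiplier restores the QEF threshold, and then run the template of Theorem~\ref{thm:condgen_direct} with $\kappa=1$. The one place where you diverge is that the paper does not verify the chained QEF inequality by hand. Instead, it observes that $G_{cz}:b'\mapsto 2^{\beta k_b}$ is a QEF with power $\beta$ for the trivial banked-randomness trial model $\{\Unif(B'_{\leq k_b})\rho:\rho\in\cS(\Pfnt{E})\}$, and then invokes the QEF chaining theorem (Theorem~\ref{thm:qefchainmain}) to conclude that $F'(CZB')=F(CZ)G_{CZ}(B')$ is a QEF for the chained model. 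From $F'\geq f_{\min}$ it then applies Corollary~\ref{cor:qef_epmax2} (your $\kappa=1$ specialization of Theorem~\ref{thm:bnds_from_qef}) and finishes exactly as you describe. Your explicit homogeneity computation $\Rpow{\alpha}{2^{-K_B}\rho(c_1z)}{\rho(z)}=2^{-\alpha K_B}\Rpow{\alpha}{\rho(c_1z)}{\rho(z)}$ combined with the $2^{K_B}$ count of valid $c_2$ values and the cancellation $1+\beta-\alpha=0$ is precisely what the chaining theorem abstracts away; so your version is more self-contained and makes the calibration visible (avoiding any need to check the $\pCP$-closure and Markov hypotheses of the chaining theorem for this degenerate last trial), while the paper's version is shorter and more modular. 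You also explicitly dispatch the $\cX=\emptyset$ branch, which the paper leaves implicit; this is a small but legitimate completion.
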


\begin{proof}
  If $f\geq f_{\min}$ in the protocol, set $k_{b}=0$.  The protocol
  can be thought of as one that adds a final trial conditionally on
  $F(cz)<f_{\min}$, where the final trial has output $B'=B_{\leq
    k_{b}}0^{\conc k_{o}-k_{b}}$, which is a bit string of length
  $k_{o}$ and model $\{\Unif(B'_{\leq
    k_{b}})\rho:\rho\in\cS(\Pfnt{E})\}$.  We can define $G_{cz}(b') =
  2^{\beta k_{b}}$, which is a \QEF with power $\beta$ for the last
  trial, and chain $F$ with $G_{cz}$ to get a \QEF
  $F'(CZB')=F(CZ)G_{CZ}(B')$ with power $\beta $ for $CB'|Z$ and the
  chained model.  By construction, $F'(CZB') \geq f_{\min}$, so we can
  apply Cor.~\ref{cor:qef_epmax2} to show that for any $\rho(CZB')$
  in the chained model,
  $P^{\epsilon_{h}}_{\max}(\rho(CZB')|Z\Pfnt{E})\leq p$.  The theorem
  follows because $\cE$ is a quantum-proof strong extractor, its
  parameters satisfy the extractor constraints, the incoming smooth
  max-prob is less than $2^{-k_{i}}$, and the data-processing and
  triangle inequalities for the purified distance.
\end{proof}

The third protocol conditions on inputs indirectly by exploiting the
privacy amplification capabilities of extractors. We give a version
not relying on banked randomness. The only difference to the first
protocol is that the conditional min-entropy certified internally
needs to also account for the maximum number of bits that contribute
to the inputs.  An advantage is that the models for which this
protocol works need not involve chaining with explicitly conditional
inputs.  The protocol is displayed in Protocol~\ref{prot:condimplicit}.

\vspace*{\baselineskip}
\begin{algorithm}[H]
  \caption{Randomness
    generation with implicit input conditioning.}\label{prot:condimplicit}
  \Input{Number of bits of randomness $k_{o}$ to be generated. Error
    bound $\epsilon\in(0,1]$. }
  
  \Given{Access to CVs $CZ$ and $S$, where $Z=Z(H)$ is determined by a CV $H$ and $S$
    is uniformly distributed and independent of all other systems.
    All CVs are represented by bit strings.  A \QEF $F(CZ)$ with power $\beta$
    for $CZ$ and model $\cC(CZ)$.  A quantum-proof
    strong extractor $\cE$.}
  
  \Output{Length $k_{o}$ bit string $\cG_{X}$,
    $\cG_{S}=S$, $\cG_{P}\in\{0,1\}$.}  
  
  \BlankLine 
  
  Define $n=|CZ|$,  $k_{s}=|S|$, $k_{z}=|H|$\; 

  Define $\cX=\{(k_{i},\epsx):\textrm{$(n,k_{s},k_{o},k_{i},\epsx<\epsilon)$
    satisfies the extractor constraints for $\cE$}\}$\;

  Get an instance $s$ of $S$\;

  \eIf{$\cX$ is empty}
  { Return $\cG_{P}=0$, $\cG_{X}=0^{\conc k_{o}}$, $\cG_{S}=s$ \tcp*{Protocol failed.}
   }
  { Choose $(k_{i},\epsx)\in\cX$\;
    
    Set $\epse=(\epsilon-\epsx)$\; If $\alpha>2$, set $p=
    2^{-k_{i}-k_{z}}\epsilon^{(\alpha-2)/\beta}$, otherwise set
    $p=2^{-k_{i}-k_{z}}$\;
    
    Set $f_{\min}=1/(p^{\beta}(\epse^{2}/2))$ \tcp*{Choose $(k_{i},\epsx)$ to
      minimize $f_{\min}$.} 

    Get an instance $cz$ of  $CZ$\; 
    
    Compute $f=F(cz)$.  
    
    \eIf{$f<f_{\min}$}
    {
      Return $\cG_{P}=0$, $\cG_{X}=0^{\conc k_{o}}$, $\cG_{S}=s$ \tcp*{Protocol failed.}
    }
    { Return $\cG_{P}=1$,
      $\cG_{X}=\cE(cz,s;n,k_{s},k_{o},k_{i},\eps_{x})$, $\cG_{S}=s$
      \tcp*{Protocol succeeded.}
    } 
  
  }
\end{algorithm}

\begin{theorem}
  Protocol~\ref{prot:condimplicit} is an $\epsilon$-sound
  randomness generation protocol
  for ${C}|{Z}$ and model $\cC({CZ})$.
\end{theorem}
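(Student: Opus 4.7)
The plan is to mirror the proof of Thm.~\ref{thm:condgen_direct}, with two modifications. First, the \QEF here is for the monolithic CV $CZ$ rather than for $C|Z$, so the underlying inequality reads $\sum_{cz}F(cz)\Rpow{\alpha}{\rho(cz)}{\rho}\leq\tr(\rho)$ and the smooth max-prob bound it yields is conditional on $\Pfnt{E}$ alone rather than on $Z\Pfnt{E}$.  Second, the extractor is applied to $C$ only with $\Pfnt{E}$ as side information, so we must pass from a bound on the max-prob of $\rho(CZ|\phi)$ given $\Pfnt{E}$ to one on the max-prob of $\rho(C|\phi)$ given $\Pfnt{E}$.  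The sole cost of this passage is a multiplicative factor $|\Rng(Z)|\leq 2^{k_{z}}$ from marginalization, which is exactly absorbed by the $2^{-k_{z}}$ factor built into the protocol's choice of $p$.

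Fix $\rho(CZ)\in\cC(CZ)$, let $\phi(CZ)=(F(CZ)\geq f_{\min})=(\cG_{P}=1)$, and set $\kappa=\tr(\rho(\phi))$.  Consider first $\kappa\in[\epsilon,1]$.  Invoke Thm.~\ref{thm:bnds_from_qef} with the CV ``$Z$'' of that theorem taken to be trivial and its ``$C$'' identified with the pair $CZ$ here, using $\delta=(\epse/\kappa)^{2}/2$. The protocol's threshold $f_{\min}=1/(p^{\beta}(\epse^{2}/2))$ gives $q=p\,\kappa^{2/\beta}$ in the theorem, so that
\[
P^{\epse/\kappa}_{\max}(\rho(CZ|\phi)\,|\,\Pfnt{E})\leq p\,\kappa^{(2-\alpha)/\beta}.
\]
By Lem.~\ref{lem:purdist_normalized}, this bound is witnessed by some $\rho'(CZ)\in\cS_{1}(CZ\Pfnt{E})$ with $\PD(\rho(CZ|\phi),\rho'(CZ))\leq\epse/\kappa$ and $\rho'(CZ)\leq p\,\kappa^{(2-\alpha)/\beta}\,\tau$ for some $\tau\in\cS_{1}(\Pfnt{E})$.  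Marginalizing $Z$, $\rho'(C)=\sum_{z}\rho'(Cz)\leq |\Rng(Z)|\,p\,\kappa^{(2-\alpha)/\beta}\,\tau\leq 2^{k_{z}}\,p\,\kappa^{(2-\alpha)/\beta}\,\tau$, while tracing out $Z$ preserves purified distance by data-processing, so
\[
P^{\epse/\kappa}_{\max}(\rho(C|\phi)\,|\,\Pfnt{E})\leq 2^{k_{z}}\,p\,\kappa^{(2-\alpha)/\beta}.
\]

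The protocol's $p$ is chosen so the right-hand side is at most $2^{-k_{i}}$: for $\alpha\leq 2$ the exponent $(2-\alpha)/\beta\geq 0$ combined with $\kappa\leq 1$ gives $\kappa^{(2-\alpha)/\beta}\leq 1$; for $\alpha>2$ the factor $\epsilon^{(\alpha-2)/\beta}$ in $p$ absorbs $\kappa^{-(\alpha-2)/\beta}\leq\epsilon^{-(\alpha-2)/\beta}$ since $\kappa\geq\epsilon$.  With the smooth max-prob bounded by $2^{-k_{i}}$ and the extractor parameters satisfying the extractor constraints by construction, the quantum-proof strong extractor guarantee, together with the triangle and data-processing inequalities for purified distance, yields
\[
\PD\bigl((\rho\otimes\Unif(S))(\cG_{X}\cG_{S}Z\,|\,\cG_{P}=1),\,\Unif(\cG_{X}\cG_{S})\otimes\rho'(Z)\bigr)\,\kappa\leq\epsx\,\kappa+\epse\leq\epsilon,
\]
exactly as at the close of the proof of Thm.~\ref{thm:condgen_direct}.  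The case $\kappa<\epsilon$ is immediate since purified distance is bounded by $1$. The only genuinely new step is the marginalization over $Z$; the main thing to check is that its sole cost is the factor $|\Rng(Z)|\leq 2^{k_{z}}$, which the protocol's parameter choices are calibrated to cancel.
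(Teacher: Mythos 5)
There is a genuine gap in the middle step. After obtaining $P^{\epse/\kappa}_{\max}(\rho(CZ|\phi)\,|\,\Pfnt{E})\leq p\,\kappa^{(2-\alpha)/\beta}\leq 2^{-k_i-k_z}$ (correct, and this matches the paper), you marginalize $Z$ out of the output to get a bound on $P^{\epse/\kappa}_{\max}(\rho(C|\phi)\,|\,\Pfnt{E})$ and then apply the extractor with $\Pfnt{E}$ alone as side information. But the soundness criterion in Eq.~\ref{eq:soundnessdef} requires the extractor output to factor against $\Sfnt{Z}$: the adversary learns $Z$. A bound on $P_{\max}(\rho(C)|\Pfnt{E})$ controls closeness to uniform conditional on $\Pfnt{E}$ alone and says nothing about correlations with $Z$ (a $\cG_X$ that is a deterministic function of $Z$ can still be near-uniform given a trivial $\Pfnt{E}$). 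So the final displayed inequality, with $\rho'(Z)$ in the second argument of $\PD$, does not follow from the extractor guarantee you invoke.

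The passage the protocol is calibrated for goes the opposite way: rather than marginalize $Z$ out, you must \emph{add} $Z$ to the conditioner, i.e.\ obtain $P^{\epse/\kappa}_{\max}(\rho(CZ|\phi)\,|\,Z\Pfnt{E})\leq 2^{-k_i}$ so the extractor can be invoked with $\Pfnt{ZE}$ as the quantized side-information system (cf.\ Lem.~\ref{lem:pmaxcz}). The paper does this via Lem.~\ref{lem:pmaxdetermined} to pass between $Z$ and $H$ and Lem.~\ref{lem:addcondition} to absorb the new conditioner at cost $|\Rng(H)|=2^{k_z}$. Concretely, from $\rho'(CZ)\leq p'\tau$ with $\tau\in\cS_1(\Pfnt{E})$, set $\sigma(Z)=\tau/|\Rng(Z)|\in\cS_1(Z\Pfnt{E})$ to get $\rho'(CZ)\leq p'|\Rng(Z)|\,\sigma(Z)$; the arithmetic factor is the same $|\Rng(Z)|\leq 2^{k_z}$ you computed, just applied to the right quantity. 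The two operations look numerically identical here because both cost a factor of $2^{k_z}$, which is exactly why the error is easy to miss, but only the conditioner-adding direction delivers a state of the form required for the extractor and hence for Eq.~\ref{eq:soundnessdef}.
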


\begin{proof}
  The proof follows that of Protocol~\ref{prot:condgen_direct}. For the
  initial part, $CZ$ are both considered output and there is no explicit input.
  For the case $\kappa\in[\epsilon,1]$, the max-prob established for this protocol is
  \begin{equation}
    P^{\epse/\kappa}_{\max}(\rho({CZ}|\phi)|\Pfnt{E})
    \leq p\kappa^{2/\beta}/\kappa^{\alpha/\beta} \leq 2^{-k_{i}-k_{z}}.
  \end{equation}
  Since $Z$ is determined by $H$ and invoking Lem.~\ref{lem:addcondition}
  and Lem.~\ref{lem:pmaxdetermined} we get
  \begin{align}
    P^{\epse/\kappa}_{\max}(\rho({CZ}|\phi)|{Z}\Pfnt{E})
    &\leq    P^{\epse/\kappa}_{\max}(\rho({CH}|\phi)|{H}\Pfnt{E})\notag\\
    &\leq 2^{k_{z}}    P^{\epse/\kappa}_{\max}(\rho({CH}|\phi)|\Pfnt{E})\notag\\
    &\leq 2^{k_{z}}P^{\epse/\kappa}_{\max}(\rho({CZ}|\phi)|\Pfnt{E})\notag\\
    &\leq 2^{-k_{i}}.
  \end{align}
  The rest of the proof of Protocol~\ref{prot:condgen_direct} now applies
  without change.
\end{proof}

\subsection{Trial-Wise \QEF Computation for Protocols}
\label{subsect:trialwise}

For the applications we have in mind, the \QEFs $F(\Sfnt{CZ})$ used by the protocols
arise by chaining trial-wise \QEFs $F_i(C_iZ_i)$ for a sequence of trials, where 
the final model is an appropriate chaining of the trial models. An advantage of
\QEFs is that they can be adapted while the trials are acquired.  A
consequence is that one can stop acquiring trials as soon as the
chained \QEF witnesses sufficiently small R\'enyi power.  For
definiteness, we let $k$ be the number of trials performed (or
analyzed) so far. According to \QEF chaining, the next trial's \QEF
$F_{k+1}(C_{k+1}Z_{k+1})$ can depend arbitrarily on $(\Sfnt{cz})_{\leq k}$, the results from
trials so far. In particular, one can check the statistics of recent
trials to see whether the observed probability distribution of $CZ$
changed and if so, adapt the next trial's \QEFs accordingly.  Further,
if the chained \QEF so far, $\prod_{i=1}^{k}F_{i}(c_{i}z_{i})$, already exceeds the threshold for the
protocol, then one can set all future \QEFs $F_i(C_iZ_i)$ with $i>k$ to $1$. Since this
eliminates any contribution from future trials to the final chained
\QEF value, it is not necessary to perform the future trials at this
point.  Since the trial models can also depend on the past, one can
change the configuration between trials. If there is a change in trial
model, it must also be determined by $(\Sfnt{cz})_{\leq k}$ and the next
trial's \QEF needs to take the change into account.  Changes that do
not affect the model are not so restricted. For example, there are no
restrictions on device recalibration between trials.

Unlike \QEFs, soft PEFs as defined in Ref.~\cite{knill:qc2017a} can
directly use available information not determined by
$(\Sfnt{cz})_{\leq k}$ to choose the next trial's model and PEF.  We
have not implemented softening for \QEFs.  However, this is not a
fundamental obstacle. A feature of the CV $\Sfnt{CZ}$ as used in the
randomness generation protocols above is that $\Sfnt{C}$ must be
provided to the extractor, while $\Sfnt{Z}$ must be conditioned on.  A
simple method to enable use of information obtained during an
experiment besides $(\Sfnt{cz})_{\leq k}$ is the following:
Periodically, at predictable intervals, insert special trials with
output consisting of the information that one wishes to use in future
trials, but no input. These trials' outputs are ultimately included in
the extractor input or conditioned on via the method in
Protocol~\ref{prot:condimplicit}, which can add a moderate amount of
complexity to the extractor calculation. The \QEFs for the special
trials are set to $1$, so these trials contribute no
conditional min-entropy. Future trial's models and \QEFs can then depend on the
special trials' outputs in addition to the normal trial results.

We remark that when computing chained \QEF given by $\prod_{i=1}^{n}F_i(C_{i}Z_{i})$ 
with floating point numbers, to avoid overflow of the mantissa, it is good practice to
work with the logarithm of the \QEF and add the logarithms of the
trial-wise \QEFs $F_i(C_{i}Z_{i})$.

\subsection{\QEF Rates and Optimization}
\label{subsect:logprobrates}

Consider a trial model $\cC(CZ)$, a non-negative function $F(CZ)$, a
probability distribution $\nu(CZ)$ and a \QEF power $\beta$.
We treat $\nu(CZ)$ as the design or the predicted probability distribution
for $CZ$. 
\begin{definition}
  The \emph{log-prob rate of $F(CZ)$ at $\nu(CZ)$} is
  $\sum_{cz}\nu(cz)\log(F(cz))/\beta=
  \Exp_{\nu(CZ)}\left(\vphantom{\big|}\log(F(CZ))\right)/\beta$.
\end{definition}
If $F(CZ)$ is a \QEF with power $\beta$ for $C|Z$ and $\cC(CZ)$, then
from Thm.~\ref{thm:bnds_from_qef} we can see that the log-prob rate
can be interpreted as the expected conditional min-entropy of
$C|Z\Pfnt{E}$ witnessed by $F(CZ)$ without adjusting for the error
bound or for probability of success. It is a useful predictor of the
smooth conditional min-entropy witnessed in a sequence of trials with
trial models $\cC(C_{i}Z_{i})$ identical to $\cC(CZ)$ except for the
change of CVs, where the experiment is configured so that the marginal
trial distributions are $\nu(C_{i}Z_{i})$, or at least close to
$\nu(C_{i}Z_{i})$ conditionally on the past.  In such a sequence of
trials, $\log(F(C_{i}Z_{i}))/\beta$ are approximately i.i.d. RVs and
their mean is typically close to the log-prob rate at $\nu(CZ)$. If
the error bound and lower bound on probability of success are
constant, then the asymptotic smooth conditional min-entropy rate
according to Thm.~\ref{thm:bnds_from_qef} for the chained \QEF
$\prod_{i=1}^{n}F(C_{i}Z_{i})$ is the log-prob rate of the trial-wise
\QEF $F(CZ)$.  We emphasize that the assumption on the trial
distributions is a completeness assumption and not required for sound
conditional min-entropy estimation with \QEFs.  If the experiment does
not perform according to expectation, the worst that can happen is
that we do not witness the expected amount of conditional min-entropy.

The log-prob rate neglects the reduction of conditional
min-entropy due to the error bound, which is a problem for finite data
or when the error bound grows with number of trials.
\begin{definition}\label{def:netlogprob}
  Given an error bound $\epsilon$ and $n$ trials, the \emph{error
    bound rate} of $\epsilon$ is $r=|\log(\epsilon)/n|$.  Let
  $\bar\kappa\geq\epsilon$ be the smallest probability of success that we
  need to protect against.  The \emph{expected quantum net log-prob of
    $F(CZ)$ at $\nu(CZ)$} is
  \begin{equation}\label{eq:netlogprob}
    n\Exp_{\nu(CZ)}\left(\log(F(CZ))\right)/\beta +
    \log(\epsilon^{2}\bar\kappa^{(\beta-1)\knuth{\beta>1}}/2)/\beta.  
  \end{equation}
  The \emph{quantum net
    log-prob rate of $F(CZ)$ at $\nu(CZ)$} is
  \begin{equation}
    \Exp_{\nu(CZ)}\left(\vphantom{\big|}\log(F(CZ))\right)/\beta-2r/\beta.
  \end{equation}
\end{definition}
The expected quantum net log-prob reflects the smooth conditional
entropy one can aim for if the experiment is designed for trials with
i.i.d. observable distributions $\nu(CZ)$ for each trial.  The
dependence on $\bar\kappa$ is motivated by the reference protocol
Protocol~\ref{prot:condgen_direct} but accounts for $\bar\kappa$ and
neglects the extractor constraints: Let $\cO_{F}$ be the log-prob rate
of $F(CZ)$ at $\nu(CZ)$.  Let $\kappa$ be the probability of success
of the protocol, assume $\kappa\geq\bar\kappa$ and consider the proof
of Thm.~\ref{thm:condgen_direct}. To motivate the definition of
expected quantum net log-prob, we neglect the extractor constraints
and the error $\epsilon_{x}$, set $\epsilon_{h}=\epsilon$, and choose
$f_{\min}=e^{n\beta\cO_{F}}$, which is the maximum $f_{\min}$ at which
we can hope to have a reasonable probability of success for
completeness.  For soundness, we set $\delta=(\epsilon/\kappa)^{2}/2$.
When applying Thm.~\ref{thm:bnds_from_qef}, we determine $p$ and $q$
by $f_{\min}= q^{-\beta}$ and $p=q\delta^{-1/\beta}$, so
$p=(f_{\min}\delta)^{-1/\beta}$. On success, the
$(\epsilon/\kappa)$-smooth conditional min-entropy is given by the
negative logarithm of the right-hand side of
Eq.~\ref{eq:thm:from_qef:pmax}, which evaluates to
\begin{equation}
  n\cO_{F}+\log(\delta\kappa^{\alpha})/\beta
  = n\cO_{F}+\log(\epsilon^{2}\kappa^{\alpha-2}/2)/\beta
  \geq n\cO_{F}+\log(\epsilon^{2}\bar\kappa^{(\beta-1)\knuth{\beta>1}}/2)/\beta,
\end{equation}
which is the expected quantum net log-prob.  If we set
  $\bar\kappa=\epsilon$, the right-hand side is the amount of
  randomness that would be obtained in
  Protocol~\ref{prot:condgen_direct} if the extractor constraints are
  neglected, $\epsilon_{x}=0$, $f_{\min}$ is chosen as above and
  $k_{i}=k_{o}$. The proof of Thm.~\ref{thm:condgen_direct} makes it
clear that there is nothing to be gained by considering
$\bar\kappa<\epsilon$: For success probabilities smaller than
$\epsilon$, $\epsilon$-soundness is automatically satisfied.

The quantum net log-prob rate does not take into account the bound on
the probability of success, effectively assuming that this bound is
constant. The quantum net log-prob rate accounts for the asymptotic
contribution of the error bound to the conditional min-entropy
witnessed by $F(CZ)$ according to Thm.~\ref{thm:bnds_from_qef}, where
the error bound $\epsilon$ for $n$ trials is determined by the error
bound rate $r$ according to $\epsilon = e^{-rn}$.  It is distinguished
from the net log-prob rate as defined for PEFs in
Ref.~\cite{knill:qc2017a} by the factor of $2$ multiplying $r$, which
originates in Thm.~\ref{thm:bnds_from_qef}. It reflects a doubling of
the number of trials required to satisfy error bounds for quantum side
information compared to what is required for classical side
information in the PE and QPE frameworks.

Given an experimental configuration with target $\nu(CZ)$, a
first goal is to maximize the log-prob rate subject to $F(CZ)$
being a \QEF with power $\beta$ for $C|Z$ and $\cC(CZ)$. 
The power $\beta$ can then be varied to maximize the expected quantum net log-prob.
Define
\begin{equation}
  Q_{\alpha}(F(CZ),\rho(CZ))=\sum_{cz}F(cz)\Rpow{\alpha}{\rho(cz)}{\rho(z)}.
\end{equation}
The power-$\beta$ \QEF condition for $C|Z$ and $\cC(CZ)$ is $Q_{\alpha}(F(CZ),\rho(CZ))\leq 1$
for all $\rho(CZ)\in\cN(\cC)$.  If the probability distribution of $Z$
is fixed, given by $\mu(Z)$, then for $\rho(CZ)\in\cC(CZ)$,
$\rho(z)=\mu(z)\rho$ and according to Eq.~\ref{eq:qefdef1_nu} the expression 
for $Q_{\alpha}$ simplifies to
\begin{equation}
  Q_{\alpha}(F(CZ),\rho(CZ)) = 
    \sum_{cz}\mu(z)F(cz)
    \tr((\rho^{-\beta/(2\alpha)}\rho(c|z)\rho^{-\beta/(2\alpha)})^{\alpha}).
\end{equation}
\QEFs are optimized by maximizing the log-prob rate. Instead of requiring
$F(CZ)$ to be a \QEF with power $\beta$ for $C|Z$ and $\cC(CZ)$, we formulate 
the \QEF optimization problem as follows:
\begin{alignat}{3}
    \textrm{Maximize:\ }&\sum_{cz}\nu(cz)\log(F(cz))-\log(f_{\max})\notag\\
    \textrm{Variables:\ }& F(CZ), f_{\max}\notag\\
    \textrm{Subject to:\ }&  F(CZ)\geq 0,\sum_{cz}F(cz)=1,\notag\\
    & f_{\max}=\max\{Q_{
      \alpha}(F(CZ),\rho(CZ)): \rho(CZ)\in\cN(\cC)\}.\label{prob:qefopt}
\end{alignat}
Every feasible solution $(F(CZ),f_{\max})$ determines the \QEF $F(CZ)/f_{\max}$
with power $\beta$ for $C|Z$ and $\cC(CZ)$
whose log-prob rate 
is the objective function divided by $\beta$.

\section{\QEFs and Entropy Estimators}
\label{sec:qefsandee}

\subsection{Entropy Estimators from \QEFs}

\begin{definition}
  The function $K(CZ)$ is an \emph{entropy estimator
    for $C|Z$ and $\cC(CZ)$} if for all $\rho(CZ)\in\cC(CZ)$,
  \begin{equation}
    \sum_{cz}K(cz)\tr(\rho(cz))
    \leq
    -\sum_{cz}\tr(\rho(cz)
    \left(\vphantom{\big|}\log(\rho(cz))-\log(\rho(z))\right)).
    \label{eq:eestdef}
  \end{equation}
  The \emph{entropy estimate} of $K(CZ)$ at $\rho(CZ)$
  is $\sum_{cz}K(cz)\tr(\rho(cz))$.
\end{definition}
Both sides of Eq.~\ref{eq:eestdef} are positive homogeneous of degree
$1$ in $\rho(CZ)$, which implies that $K(CZ)$ is an entropy estimator
for $\cC(CZ)$ iff it is an entropy estimator for $\cN(\cC(CZ))$.  For
normalized states, the right-hand side of Eq.~\ref{eq:eestdef} is the
conditional entropy $H_{1}(\rho(CZ)|Z\Pfnt{E})$ of $C|Z\Pfnt{E}$ with
respect to $\rho(CZ)$.

\begin{theorem}\label{thm:qef_to_ee}
  Let $F(CZ)^{\beta}$ be a \QEF with power $\beta$ for $C|Z$ and $\cC(CZ)$.
  Then $K(CZ)=\log(F(CZ))$ is an entropy estimator for $C|Z$ and $\cC(CZ)$.
\end{theorem}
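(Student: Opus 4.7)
The plan is to read off the entropy estimator inequality as the first-order consequence of the \QEF inequality at small powers. Fix $\rho(CZ)\in\cN(\cC(CZ))$ and set $\mu(cz)=\tr(\rho(cz))$, so that $\mu$ is a probability distribution on $CZ$ and $\rho(cz)\ll\rho(z)$ automatically since $\rho(z)\geq\rho(cz)$. The target inequality (Eq.~\ref{eq:eestdef}) is exactly what one expects by differentiating the \QEF inequality in the power at $0$, so the real job is to realize this derivative rigorously by combining power reduction, Jensen's inequality, and the relative-entropy limit of Lem.~\ref{lem:drellimit1}.

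First I apply the power reduction lemma from Sect.~\ref{sec:qefproperties} with exponent $\gamma\in(0,1]$ to the given \QEF $F^{\beta}$: this shows that $F^{\beta'}=(F^{\beta})^{\gamma}$ is a \QEF with power $\beta':=\beta\gamma\in(0,\beta]$ for $C|Z$ and $\cC(CZ)$. Using $\Rpow{1+\beta'}{\rho(cz)}{\rho(z)}=\mu(cz)\hatRpow{1+\beta'}{\rho(cz)}{\rho(z)}$, the corresponding \QEF inequality at $\rho(CZ)$ reads
\begin{equation}
\sum_{cz}\mu(cz)F(cz)^{\beta'}\hatRpow{1+\beta'}{\rho(cz)}{\rho(z)}\leq 1.\notag
\end{equation}
Taking $\log$, applying Jensen's inequality against the probability distribution $\mu$ (by concavity of $\log$), substituting $\log\hatRpow{1+\beta'}{\rho(cz)}{\rho(z)}=\beta'\tildeDrel{1+\beta'}{\rho(cz)}{\rho(z)}$, and dividing through by $\beta'>0$ yields
\begin{equation}
\sum_{cz}\mu(cz)\log F(cz)+\sum_{cz}\mu(cz)\tildeDrel{1+\beta'}{\rho(cz)}{\rho(z)}\leq 0.\notag
\end{equation}

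I then let $\beta'\searrow 0$. By Lem.~\ref{lem:drellimit1}, for each $cz$ with $\mu(cz)>0$ we have $\tildeDrel{1+\beta'}{\rho(cz)}{\rho(z)}\to\tildeDrel{1}{\rho(cz)}{\rho(z)}=\tr(\rho(cz)(\log\rho(cz)-\log\rho(z)))/\mu(cz)$; because the sum is over a finite index set, the limit passes through term by term. Rearranging gives
\begin{equation}
\sum_{cz}\log(F(cz))\tr(\rho(cz))\leq-\sum_{cz}\tr(\rho(cz)(\log\rho(cz)-\log\rho(z))),\notag
\end{equation}
which is the entropy estimator inequality for $K(CZ)=\log F(CZ)$.

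The obstacles are minor edge cases rather than any single hard step. If $F(cz)=0$ for some $cz$ with $\mu(cz)>0$, the left-hand side is $-\infty$ and the claim is trivial, so I may assume $F>0$ on the effective support of $\mu$. Terms with $\mu(cz)=0$ vanish on both sides under the standard $0\log 0=0$ convention, and the support hypothesis needed for Lem.~\ref{lem:drellimit1} is automatic from $\rho(z)\geq\rho(cz)$. The only potentially substantive input beyond algebra is continuity of $\tildeDrel{\alpha}{\rho}{\sigma}$ at $\alpha=1$, and this is supplied by Lem.~\ref{lem:drellimit1}.
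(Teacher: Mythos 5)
Your proof is correct and follows essentially the same route as the paper's: power reduction to small powers, rewriting the normalized R\'enyi power via the divergence $\tilde D_{1+\beta'}$, obtaining an inequality linear in the power, dividing by the power, and taking the limit to zero via Lem.~\ref{lem:drellimit1}. The only variation is the linearization step, where you apply Jensen's inequality for $\log$ to the whole sum while the paper applies $e^{x}\geq 1+x$ to each summand before summing; both yield $\sum_{cz}\mu(cz)\bigl(\log F(cz)+\tildeDrel{1+\beta'}{\rho(cz)}{\rho(z)}\bigr)\leq 0$ and are interchangeable here.
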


\begin{proof}
  Without loss of generality, consider $\rho(CZ)\in\cN(\cC(CZ))$.
  By power reduction, $F(CZ)^{\gamma}$ is a \QEF with power $\gamma$
  for $C|Z$ and $\cC(CZ)$ for all $0<\gamma\leq\beta$.   Hence
  \begin{align}
    1 &\geq \sum_{cz} F(cz)^{\gamma}\Rpow{1+\gamma}{\rho(cz)}{\rho(z)}\notag\\
    &=\sum_{cz} \tr(\rho(cz))
    F(cz)^{\gamma}\hatRpow{1+\gamma}{\rho(cz)}{\rho(z)}\notag\\
    &=\sum_{cz} \tr(\rho(cz)) \exp(\gamma\log(F(cz))+\gamma
    \tildeDrel{1+\gamma}{\rho(cz)}{\rho(z)} )\notag\\
    &\geq \sum_{cz}\tr(\rho(cz))
    \left(1+\gamma\left(\log(F(cz))
      +\tildeDrel{1+\gamma}{\rho(cz)}{\rho(z)}\right)\right)\notag\\
    &=1+\gamma\left(\sum_{cz}\tr(\rho(cz))\log(F(cz)) +
      \tr(\rho(cz))\tildeDrel{1+\gamma}{\rho(cz)}{\rho(z)}\right).
  \end{align}
  Subtracting $1$ on both sides and dropping the positive quantity
  $\gamma$ gives 
  \begin{align}
    \sum_{cz}\log(F(cz))\tr(\rho(cz))
    &\leq \sum_{cz}
    -\tr(\rho(cz))\tildeDrel{1+\gamma}{\rho(cz)}{\rho(z)},
  \end{align}
  where the right-hand side converges to
  $-\sum_{cz}\tr(\rho(cz)(\log(\rho(cz))-\log(\rho(z))))$ as
  $\gamma\searrow 0$ (Eq.~\ref{eq:relentlim1}), so $F(CZ)$ satisfies
  the entropy-estimator inequality Eq.~\ref{eq:eestdef}.
\end{proof}

\subsection{\QEFs from Entropy Estimators}

\begin{theorem}\label{thm:ee_to_qef}
  Let $K(CZ)$ be an entropy estimator for $C|Z$ and $\cC(CZ)$. Define
  $F(CZ)^{\beta}=e^{\beta K(CZ)}$ and 
  \begin{equation}
    c_{P}(\beta) = c_{P}(\beta;K(CZ)) = 
    \sup\left\{\sum_{cz}F(cz)\Ppow{1+\beta}{\tau(cz)}{\tau(z)}:
      \tau(CZ)\in\cN(\cC(CZ))
    \right\}-1.
  \end{equation}
  Then for $\beta\leq 1$, $F(CZ)^{\beta}/(1+c_{P}(\beta))$ is a \QEFP
  with power $\beta$ for $C|Z$ and $\cC(CZ)$. The function
  $c_{P}(\beta)$ can be extended to $\beta=0$ by taking the limit
  $\beta\searrow 0$ and satisfies $c_{P}(0)=0$ and $c_{P}$ is convex.
  Let $\iota_{0}\approx 2.065339$ be the positive solution $x$ to
  $2\coth(x)=x$.  Define $\llceil x\rrceil = \max(\iota_{0},x)$,
  $N=|\Rng(C)|$, $k_{\max}(z)=\max_{c}K(cz)$,
   and $\bar
  w_{\gamma}(z)=(1-\gamma)\max_{c}\left(\vphantom{\big|}
    \max\left(\vphantom{\big|}\log(N)-K(cz),K(cz)\right)\right)+\log(2)$.
  For $\beta< 1/2$, an upper bound on $c_{P}$ is given by
  $c_{P}(\beta) \leq \frac{\beta^{2}}{2}
  \sup\left\{c(\beta,\nu(Z)):\nu(CZ)\in\tr(\vphantom{\big|}\cN(\cC(CZ)))\right\}$, where 
  \begin{align}
     c(\beta,\nu(Z)) \defeq
    \frac{1}{3}\sum_{z}\nu(z)&
    \left(\vphantom{\frac{e^{k_{\max}(z)\beta}}{(1-\beta)^{2}}}
       2 \llceil\bar w_{0}(z)\rrceil\left(\llceil\bar w_{0}(z)\rrceil 
           + 2\coth(\llceil\bar w_{0}(z)\rrceil)\right)
        \right.\notag\\
      &\left.\hphantom{()} + 
       \frac{e^{k_{\max}(z)\beta}}{(1-\beta)^{2}}
       \llceil\bar w_{\beta}(z)\rrceil\left(\llceil\bar w_{\beta}(z)\rrceil 
           + 2\coth(\llceil\bar w_{\beta}(z)\rrceil)\right)
         \right).
  \end{align}
\end{theorem}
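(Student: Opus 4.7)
The plan is to verify the four claims in sequence, with the first three being direct consequences of definitions and established properties of Petz R\'enyi powers. The final inequality is the technical heart and will require a second-order Taylor argument.

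For the QEFP claim, I would note that by the very definition of $c_P(\beta)$, every $\tau(CZ)\in\cN(\cC(CZ))$ satisfies $\sum_{cz}F(cz)^\beta\,\Ppow{1+\beta}{\tau(cz)}{\tau(z)}\le 1+c_P(\beta)$. Dividing by $1+c_P(\beta)$ yields the QEFP inequality on normalized states, which by positive homogeneity of both sides extends to all of $\cC(CZ)$; the restriction $\beta\le 1$ is what is required for Petz R\'enyi powers to enjoy the chaining and sum inequalities (Lem.~\ref{lem:rp_sumineq}). For $c_P(0)=0$, substituting $\beta=0$ makes $F(CZ)^0=1$ and $\Ppow{1}{\tau(cz)}{\tau(z)}=\tr(\tau(cz))$, so for normalized $\tau$ the sum equals $1$ and hence the supremum over $\tau$ is $1$. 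For convexity, fix $\tau$ and $cz$; then
\begin{equation}
\beta\mapsto\log\bigl(e^{\beta K(cz)}\Ppow{1+\beta}{\tau(cz)}{\tau(z)}\bigr)
= \beta K(cz) + \log\Ppow{1+\beta}{\tau(cz)}{\tau(z)}
\end{equation}
is convex by log-convexity of Petz R\'enyi powers (Lem.~\ref{lem:convex_rp}), so each summand $e^{\beta K(cz)}\Ppow{1+\beta}{\tau(cz)}{\tau(z)}$ is log-convex hence convex in $\beta$. Summing and taking a supremum preserves convexity, so $c_P(\beta)=\sup_\tau(\cdots)-1$ is convex.

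For the explicit upper bound, fix any $\tau\in\cN(\cC(CZ))$ and define
\begin{equation}
h_\tau(\beta)=\sum_{cz}e^{\beta K(cz)}\Ppow{1+\beta}{\tau(cz)}{\tau(z)}-1.
\end{equation}
I would first compute $h_\tau(0)=0$ (as above) and then, using $\frac{d}{d\beta}\tr(\tau(cz)^{1+\beta}\tau(z)^{-\beta})\big|_{0}=\tr(\tau(cz)(\log\tau(cz)-\log\tau(z)))$, obtain
\begin{equation}
h_\tau'(0)=\sum_{cz}K(cz)\tr(\tau(cz))-H_1(\tau(CZ)|Z\Pfnt{E})\le 0
\end{equation}
by the defining inequality for the entropy estimator $K$. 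Taylor's theorem with integral remainder then gives $h_\tau(\beta)\le \tfrac{\beta^2}{2}\sup_{t\in[0,\beta]}h_\tau''(t)$, and taking the supremum over $\tau$ reduces the problem to a uniform bound on $h_\tau''$.

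The main obstacle is the explicit bound on $h_\tau''(t)$ for $t\in[0,\beta]$ with $\beta<1/2$. Differentiating $e^{tK(cz)}\tr(\tau(cz)^{1+t}\tau(z)^{-t})$ twice produces terms involving $K(cz)^2$, mixed terms $K(cz)\bigl(\log\tau(cz)-\log\tau(z)\bigr)$, and second-order operator-logarithm terms $\tr(\tau(cz)^{1+t}(\log\tau(cz))^2\tau(z)^{-t})$ and $\tr(\tau(cz)^{1+t}\tau(z)^{-t}(\log\tau(z))^2)$, together with Duhamel cross terms that reflect non-commutativity of $\tau(cz)$ and $\tau(z)$. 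I would bound each using the operator identity $\frac{d}{ds}A^s=A^s\log A$ and the fact that for states of classical-quantum form the support of $\tau(cz)$ is contained in that of $\tau(z)$, controlling $\|\log\tau(cz)-\log\tau(z)\|$ via classical probabilities $\tr(\tau(cz))/\tr(\tau(z))$ whose logarithm is, modulo the entropy-estimator gap, close to $-K(cz)$. The classical log-probabilities are in turn constrained to $[-\log N,\infty)$, which motivates $w(z)=\max_c\max(\log N-K(cz),K(cz))$ and the factor $(1-t)$ in $\bar w_t(z)$, reflecting the scaling $\tau(z)^{-t}$ in the operator power. The factor $1/(1-\beta)^2$ emerges from integrating the operator second derivative on $[0,\beta]\subset[0,1/2)$, while the clipping $\llceil\cdot\rrceil$ at $\iota_0$ (the positive solution of $2\coth(x)=x$) regularizes the $\coth$ factor that appears when optimizing a concentration-style bound of the form $x+2\coth(x)$ controlling the centered second moment of $K$ against the Shannon-type entropy variance. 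Assembling these bounds term-by-term and averaging over $z$ with weight $\nu(z)=\tr(\tau(z))$ produces the stated expression for $c(\beta,\nu(Z))$; taking the supremum over feasible marginals $\nu(Z)$ completes the bound $c_P(\beta)\le \tfrac{\beta^2}{2}\sup_\nu c(\beta,\nu(Z))$.
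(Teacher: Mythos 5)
Your treatment of the first three claims is correct and close to the paper's. The QEFP claim is indeed immediate from the definition of $c_P$ (though the parenthetical about $\beta\le 1$ being needed for ``chaining and sum inequalities'' is not the reason here — the division by $1+c_P(\beta)$ works for any $\beta$; the restriction is carried along from the paper's convention for Petz R\'enyi powers). Your convexity argument via log-convexity (Lem.~\ref{lem:convex_rp}) is a legitimate alternative to the paper's route, which instead shows $h^{(2)}(\gamma)\ge 0$ directly from the sign of the integrand; both yield the same conclusion.

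The fourth claim is where your proposal has a genuine gap, in two places. First, your Taylor step weakens the integral remainder to the Lagrange bound $h_\tau(\beta)\le\frac{\beta^2}{2}\sup_{t\in[0,\beta]}h_\tau''(t)$. The theorem's bound has the structure $\frac{1}{3}\bigl(2\cdot\text{(term at $0$)}+1\cdot\text{(term at $\beta$)}\bigr)$, i.e.\ a Simpson-type weighted average of $h''$ at the two endpoints. That structure cannot come from a $\sup$ bound; the paper obtains it by first proving $g^{(2)}$ is convex and then replacing $g^{(2)}(\gamma)$ inside $\int_0^\beta(\beta-\gamma)g^{(2)}(\gamma)\,d\gamma$ by its chord, which evaluates to $\frac{\beta^2}{2}\bigl(\frac{2}{3}g^{(2)}(0)+\frac{1}{3}g^{(2)}(\beta)\bigr)$. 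Your $\sup$ bound is strictly weaker and would not reproduce the stated $c(\beta,\nu(Z))$.

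Second, your plan for bounding $h_\tau''$ — direct differentiation of $\tr(\tau(cz)^{1+t}\tau(z)^{-t})$, Duhamel cross terms, controlling $\|\log\tau(cz)-\log\tau(z)\|$ via classical probabilities — is a sketch of intent, not a derivation, and it does not connect to where the specific constants in the bound come from. The paper instead lifts each $cz$-term into a doubled space via $\tr\bigl(\xi\,(\rho\otimes\one)\bigl(e^a\rho\otimes\sigma^{-T}\bigr)^\gamma\bigr)$, where $\rho\otimes\one$ and $\one\otimes\sigma^T$ commute, so the $k$-th derivative has a clean form $\tr\bigl(\xi\,(\rho\otimes\one)\log(e^a\rho\otimes\sigma^{-T})^k\bigl(e^a\rho\otimes\sigma^{-T}\bigr)^\gamma\bigr)$. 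It then introduces a positive measure $\mu_a$ on $[-1,1]$ satisfying $\int y^\chi\,d\mu_a(\chi)\ge\log(e^ay)^2\bigl(2/3+(1/3)(e^ay)^\beta\bigr)$ and proves a structural reduction (Lem.~\ref{lem:meas_to_powerbnd}) turning the operator bound into a sum of three integrals governed by the support of $\mu_a$ on $\{-1\}$, $(-1,0)$, and $[0,1]$, using the dimension bound on Petz R\'enyi powers for the $N^{-\chi}$ factor. The explicit $\mu$ is a three-point measure $\lambda_1(\delta_{-1+2\gamma}+\delta_1)+\lambda_0\delta_\gamma$; minimizing $2\cosh(v)a_1+a_0$ subject to $2\cosh(s)a_1+a_0\ge s^2$ is what produces $\iota_0$ as the positive root of $2\coth(x)=x$, the clipping $\llceil\cdot\rrceil$, the $x(x+2\coth(x))$ shape, and the $e^{k_{\max}(z)\beta}/(1-\beta)^2$ factor. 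None of these ingredients appear in your outline, and heuristics about ``Shannon-type entropy variance'' will not produce them. You would need to recover the tensor-lifted second-derivative formula, the convexity of $g^{(2)}$ (to get the $2/3,1/3$ weights), the measure lemma, the dimension bound, and the three-point-measure optimization to complete this part.
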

Note that the quantity $c(\beta,\nu(Z))$ is continuous, and it is well defined even if $\beta\in [1/2,1)$.
The definition of $c(\beta,\nu(Z))$ when $\beta\in [1/2,1)$ is used in the proof of 
Thm.~\ref{thm:eat_from_qef}. 

We demonstrate by example in Sect.~\ref{subsec:examples} that direct
constructions of \QEFs have much better performance than constructions
from entropy estimators. Direct constructions for $(k,2,2)$ Bell-test
configurations are given in Sect.~\ref{subsec:k22config}.  If it is
necessary to construct \QEFs from entropy estimators by applying
Thm.~\ref{thm:ee_to_qef}, the bound can be improved according to
expressions obtained in the proof, where we develop bounds suitable
for numerical implementation. Beyond taking advantage of input
probability constraints, the bounds are agnostic with regard to
specific properties of $\cC(CZ)$ and are therefore necessarily suboptimal.

The proof of Thm.~\ref{thm:ee_to_qef} is an elaboration on the
techniques for bounding R\'enyi entropies in
Ref.~\cite{tomamichel:qc2009a}, see the proof of Lem.~8 in this
reference. The same techniques also contribute to the proof of the
entropy accumulation theorem in~\cite{dupuis:qc2016a}, with similar
results for estimating conditional min-entropy. See the comparison in
the next section. Much of the complexity of the proof below arises
from squeezing out the best bounds possible given the constraints of
written text. The proof is presented to enable numerical
improvements and to provide information on limitations of the
technique. Improved bounds are readily obtained but matter primarily
when $\beta$ is not small. See relevant remarks in the proof.

\begin{proof}
  By definition of $c_{P}(\beta)$, $F(CZ)^{\beta}/(1+c_{P}(\beta))$
  satisfies the \QEFP inequality with power $\beta$
  at all $\tau(CZ)\in\cN(\cC(CZ))$, so the first claim is immediate.

  To determine an upper bound on $c_{P}(\beta)$, consider any
  $\tau(CZ)\in\cN(\cC(CZ))$.  The left-hand side of the \QEFP
  inequality with power $\beta$  (see Eq.~\eqref{eq:pefdef1}) at $\tau(CZ)$ for $C|Z$ and
  $F(CZ)^{\beta}$ is equivalent to
  \begin{equation}  
    h(\beta)=
    \sum_{cz}F(cz)^{\beta}
    \tr(\tau(cz)^{1+\beta}\tau(z)^{-\beta}).
    \label{eq:thm:ee_to_qef:qef_exp}
  \end{equation}
  The goal is to determine an upper 
  bound on $h(\beta)$ that depends on $\beta$ and the values of
  $K(CZ)$. In general, we may also take advantage of constraints on
  the probability distribution $\tr(\tau(CZ))$.  

  The $cz$-term in the sum for $h(\beta)$ is of the form
  \begin{equation}
    g(\beta)= g(\beta;a,\rho|\sigma)=
    \tr(\rho (e^{a}\rho)^{\beta}\sigma^{-\beta}),
  \end{equation}
  where $a=K(cz)$, $\rho=\tau(cz)$ and $\rho\ll\sigma=\tau(z)$. We
  bound $g(\beta)$ by Taylor expansion with a second-order
  remainder.  For this, it is convenient to express
  \begin{equation}
    \tr(\rho(e^{a}\rho)^{\beta}\sigma^{-\beta}) =
    \tr(\xi \left((\rho(e^{a}\rho)^{\beta})\otimes (\sigma^{T})^{-\beta}\right))
    = \tr(\xi (\rho\otimes \one)\left((e^{a}\rho\otimes\sigma^{-T})^{\beta}\right)),
  \end{equation}
  where $\xi=\dyad{\phi}$ with $\phi=\sum_{i}\ket{i}\otimes\ket{i}$ for
  some orthonormal basis $(\ket{i})_{i}$. 
  Write $g^{(k)}(\beta)$ for the $k$'th derivative of $g(\beta)$.
  For the Taylor expansion, we compute for $0\leq\gamma\leq \beta$
  \begin{align}
    g(0)&=\tr(\rho),\notag\\
    g^{(k)}(\gamma)&=
       \tr(\xi (\rho\otimes \one)\log(e^{a}\rho\otimes\sigma^{-T})^{k}
       ((e^{a}\rho\otimes\sigma^{-T})^{\gamma})).      
    \label{eq:thm:ee_to_qef:g''}
  \end{align}
  The factors after $\xi$ in the trace commute and multiply to a
  positive semidefinite operator for even $k$, so
  $g^{(2l)}(\gamma)\geq 0$ for all $l\in\nats$.  In
  particular, the second derivative is non-negative and convex.  We have
  \begin{equation}
    g(\beta)= g(0)+\beta g^{(1)}(0)
    + \int_{0}^{\beta}d\gamma \;(\beta-\gamma)g^{(2)}(\gamma),
    \label{eq:thm:ee_to_qef:taylor1}
  \end{equation}
  where by convexity we can replace $g^{(2)}(\gamma)$ by
  $(1-\gamma/\beta)g^{(2)}(0)+(\gamma/\beta) g^{(2)}(\beta)
  = g^{(2)}(0)+(\gamma/\beta)(g^{(2)}(\beta)-g^{(2)}(0))$ for an upper bound.
  Since $\int_{0}^{\beta}d\gamma\;(\beta-\gamma)\gamma/\beta =
  \beta^{2}/6$, we have the bound
  \begin{equation}
    g(\beta)\leq g(0)+\beta g^{(1)}(0)
    + \frac{\beta^{2}}{2}\left(\frac{2}{3}
      g^{(2)}(0)+\frac{1}{3}g^{(2)}(\beta)\right).
  \end{equation}
  To expand Eq.~\ref{eq:thm:ee_to_qef:qef_exp} in orders of $\beta$,
  we substitute $\rho$ by $\tau(cz)$, $\sigma$ by $\tau(z)$, $a$ by
  $K(cz)$ and replace the corresponding terms of
  Eq.~\ref{eq:thm:ee_to_qef:qef_exp} to obtain the bound
  \begin{align}
    h(\beta) &= h(0)+\beta h^{(1)}(0) +
    \int_{0}^{\beta}d\gamma(\beta-\gamma)h^{(2)}(\gamma)\notag\\
    &=
    h(0) + \beta \sum_{cz}\left(\vphantom{|_{1}^{1}}
      K(cz)\tr(\tau(cz))
      + \tr(\vphantom{\big|}\tau(cz)(\log(\tau(cz))-\log(\tau(z))))\right)\notag\\
    &\hphantom{=\;\;} 
    + \sum_{cz}\int_{0}^{\beta}d\gamma\;(\beta-\gamma)
    g^{(2)}(\gamma;K(cz),\tau(cz)|\tau(z))\notag\\
    &\leq 1 +\sum_{cz}\int_{0}^{\beta}d\gamma\;
    (\beta-\gamma)g^{(2)}(\gamma;K(cz),\tau(cz)|\tau(z))\notag\\
    &\leq 1+
    \sum_{cz}\frac{\beta^{2}}{2}\left(\frac{2}{3}
      g^{(2)}(0;K(cz),\tau(cz)|\tau(z))
      +\frac{1}{3}g^{(2)}(\beta; K(cz), \tau(cz)|\tau(z))\right),\label{eq:thm:ee_to_qef:h2betadef}
  \end{align}
  since $K(CZ)$ is an entropy estimator, which implies that
  $h^{(1)}(0)\leq 0$.  The results so far also establish that
  $h^{(2)}(\gamma)\geq 0$, so $h(\gamma)$ is convex. Since the
  supremum of convex functions is convex, so is $c_{P}$. 
  
  Write $h_{2}(\beta;K(cz),\tau(cz)|\tau(z))$ for the coefficient of
  $\beta^{2}/2$ of the $cz$-summand in the last line of
  Eq.~\ref{eq:thm:ee_to_qef:h2betadef}.  To prove the theorem, we
  determine a bound $b(\beta)\geq
  \sum_{cz}h_{2}(\beta;K(cz),\tau(cz)|\tau(z))$ expressed as an
  expectation over the probability distribution $\tr(\tau(Z))$ with
  no other dependence on $\tau(CZ)$. Then
  \begin{equation}
    h(\beta) \leq 1 + b(\beta) \beta^{2}/2.
    \label{eq:thm:ee_to_qef:hbeta}
  \end{equation}
  Since $h(0)=1$, the claim $c_{P}(\beta)\searrow 0$ follows once we
  establish that $b(\beta)$ is finite. To determine $b(\beta)$, we apply the 
  following lemma with $\beta\in (0,1]$. 

  \begin{lemma}\label{lem:meas_to_powerbnd}
    Fix $\beta>0$. For each $a\in\rls$, let $\mu_{a}$ be a positive measure on $[-1,1]$
    such that for all $y\in(0,\infty)$
    \begin{equation}
      \int_{[-1,1]}y^{\chi}d\mu_{a}(\chi) \geq
      \log(e^{a}y)^{2}\left(2/3+(1/3)(e^{a}y)^{\beta}\right).
    \end{equation}
    Let $k_{\max}(z)=\max_{c}K(cz)$, $k_{\min}(z)=\min_{c}K(cz)$,
    and $\bar k(z) = \sum_{c}K(cz)/N$ where $N=|\Rng(C)|$. Given $z$, let $p(a)=
    (a-k_{\min}(z))/(k_{\max}(z)-k_{\min}(z))$ so that $a =
    (1-p(a))k_{\min}(z) + p(a)k_{\max}(z)$.  Write
    $\bar\mu_{z,a}=p(a)\mu_{k_{\max}(z)}+(1-p(a))\mu_{k_{\min}(z)}$. Then for each $z$, 
     \begin{align}
      \sum_{c}h_{2}(\beta;K(cz),\tau(cz)|\tau(z))
       &\leq \tr(\tau(z))\Bigg(
       N \bar\mu_{z,\bar k(z)}(\{-1\})\notag\\
       &\hphantom{\leq \leq \tr(\tau(z))\Bigg(\;} + \int_{(-1,0)}N^{-\chi}d(\mu_{k_{\min}(z)}\vee\mu_{k_{\max}(z)})(\chi)\notag\\
       &\hphantom{\leq \leq \tr(\tau(z))\Bigg(\;}+ \max\left(
         \int_{[0,1]}d\mu_a(\chi):a\in\{k_{\min}(z),k_{\max}(z)\}\right)
       \Bigg).
      \label{eq:lem:meas_to_powerbnd:main}
    \end{align}
  \end{lemma}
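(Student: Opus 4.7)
The plan is to first reduce $h_{2}(\beta;a,\rho|\sigma)$ to a scalar sum over eigenvalue pairs of $\rho$ and $\sigma$ and use the measure hypothesis to dominate it by an integral of a trace expression, then split the integration domain $[-1,1]$ into three regions each controlled by a distinct operator inequality on the sum over $c$.

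First I would diagonalize $e^{a}\rho\otimes\sigma^{-T}$ in a product eigenbasis of $\rho$ and $\sigma$ and use the standard Hadamard-product reduction for $\xi(\rho\otimes\one)$ to express $g^{(2)}(\gamma;a,\rho|\sigma)$ as a scalar double sum over eigenvalue pairs $(p_{i},q_{j})$, weighted by $|\braket{i}{j}|^{2}$. Combining the weights $2/3$ at $\gamma=0$ and $1/3$ at $\gamma=\beta$, applying the measure hypothesis at $y_{ij}=p_{i}/q_{j}$ termwise, and interchanging sum and integral yields the pointwise bound
\begin{equation*}
  h_{2}(\beta;a,\rho|\sigma)\leq\int_{[-1,1]}\tr(\rho^{1+\chi}\sigma^{-\chi})\,d\mu_{a}(\chi).
\end{equation*}

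Next I would split $[-1,1]=\{-1\}\cup(-1,0)\cup[0,1]$ and bound each contribution to $\sum_{c}h_{2}(\beta;K(cz),\tau(cz)|\tau(z))$ via an operator inequality paired with a measure manipulation. For $\chi=-1$, bound $\tr(\tau(cz)^{0}\tau(z))=\tr(\suppproj{\tau(cz)}\tau(z))\leq\tr(\tau(z))$ per $c$ and choose $\mu_{K(cz)}(\{-1\})=\bar\mu_{z,K(cz)}(\{-1\})$; linearity of $p(\cdot)$ in $a$ then gives $\sum_{c}\bar\mu_{z,K(cz)}(\{-1\})=N\bar\mu_{z,\bar k(z)}(\{-1\})$, recovering the first term with factor $\tr(\tau(z))$. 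For $\chi\in(-1,0)$, invoke Lieb's joint concavity of $(A,B)\mapsto\tr(A^{1+\chi}B^{-\chi})$ (whose exponents $1+\chi$ and $-\chi$ lie in $(0,1)$ and sum to $1$) applied to the uniform mixture $\tfrac{1}{N}\sum_{c}\tau(cz)=\tau(z)/N$ to obtain $\sum_{c}\tr(\tau(cz)^{1+\chi}\tau(z)^{-\chi})\leq N^{-\chi}\tr(\tau(z))$, then dominate $\mu_{K(cz)}|_{(-1,0)}$ by the lattice join $(\mu_{k_{\min}(z)}\vee\mu_{k_{\max}(z)})|_{(-1,0)}$ and interchange sum and integral. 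For $\chi\in[0,1]$, use Lem.~\ref{lem:conditionmonotone_rp} with $\tau(cz)\leq\tau(z)$ to get $\tr(\tau(cz)^{1+\chi}\tau(z)^{-\chi})=\Ppow{1+\chi}{\tau(cz)}{\tau(z)}\leq\tr(\tau(cz))$ per $c$ (valid since $1+\chi\leq 2$), and bound $\int_{[0,1]}d\mu_{K(cz)}\leq\max\{\int_{[0,1]}d\mu_{a}:a\in\{k_{\min}(z),k_{\max}(z)\}\}$; summing over $c$ gathers $\tr(\tau(z))$ times this maximum.

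The main obstacle will be justifying the measure-lattice manipulations at intermediate $a=K(cz)$, since the lemma quantifies over arbitrary positive measures $\mu_{a}$ satisfying the integral hypothesis. One must exploit the freedom in choosing $\mu_{a}$ for intermediate $a$ to be compatible simultaneously with the interpolation $\bar\mu_{z,a}$ at $\chi=-1$, with domination by $\mu_{k_{\min}(z)}\vee\mu_{k_{\max}(z)}$ on $(-1,0)$, and with uniform mass control on $[0,1]$. Ensuring these choices yield a valid positive measure that still dominates the integrand $(\log(e^{a}y))^{2}(2/3+(1/3)(e^{a}y)^{\beta})$ at intermediate $a$ is delicate because this function is not convex in $a$ across the entire relevant range, so a uniform convex-combination replacement need not work without restricting the support of the interpolated measure or working piecewise on the three regions.
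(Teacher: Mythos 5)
Your architecture matches the paper's: reduce $h_2(\beta;a,\rho|\sigma)$ via a joint eigenbasis to a scalar sum, use the measure hypothesis termwise at $y_{ij}=\rho_i/\sigma_j$ to obtain $h_2\leq\int_{[-1,1]}\tr(\rho^{1+\chi}\sigma^{-\chi})\,d(\cdot)(\chi)$, and split $[-1,1]$ into $\{-1\}$, $(-1,0)$, $[0,1]$ with your three regional bounds (your $(-1,0)$ step via concavity in the first argument applied to the uniform mixture $\tau/N$ is a clean route to $\sum_c\tr(\tau(c)^{1+\chi}\tau^{-\chi})\leq N^{-\chi}\tr(\tau)$, equivalent to the paper's appeal to dimension bounds; the $[0,1]$ and $\{-1\}$ steps are also sound). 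The regional bounds are correct.

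The gap is exactly the obstacle you flag at the end, but you mis-diagnose it and as a result the proof does not close. You cannot feed the given $\mu_{K(cz)}$ into the pointwise bound and then ``choose'' $\mu_{K(cz)}(\{-1\})$, nor dominate its restriction to $(-1,0)$ by $\mu_{k_{\min}(z)}\vee\mu_{k_{\max}(z)}$: for intermediate $a$ the measure $\mu_a$ is arbitrary subject only to the integral hypothesis, and bears no relation to the two endpoint measures that appear in the conclusion. The resolution is to \emph{replace} $\mu_{K(cz)}$ in the pointwise bound by the interpolated measure $\bar\mu_{z,K(cz)}=p(K(cz))\,\mu_{k_{\max}(z)}+(1-p(K(cz)))\,\mu_{k_{\min}(z)}$, which, being a convex combination of only the two endpoint measures, automatically satisfies $\bar\mu_{z,K(cz)}\leq\mu_{k_{\min}(z)}\vee\mu_{k_{\max}(z)}$ and makes the $\{-1\}$ and $[0,1]$ manipulations trivial. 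This replacement is legitimate precisely because the right-hand side of the constraint \emph{is} convex in $a$ --- the claim you assert is false. With $y=e^t$ and then $x=(t+a)\beta$, convexity in $a$ of $(t+a)^2\big(2/3+(1/3)e^{(t+a)\beta}\big)$ reduces to convexity of $v(x)=x^2\big(2/3+(1/3)e^x\big)$, and $v''(x)=4/3+(1/3)\big((x+2)^2-2\big)e^x$, which is positive for $x\geq 0$ (since $(x+2)^2-2\geq 2$) and at least $4/3-2/3>0$ for $x<0$ (since $e^x\leq 1$ and $(x+2)^2-2\geq -2$). Given this convexity, $\int y^\chi\,d\bar\mu_{z,a}(\chi)$ is the convex combination of the endpoint integrals and therefore dominates the target at intermediate $a\in[k_{\min}(z),k_{\max}(z)]$, so $\bar\mu_{z,a}$ itself satisfies the hypothesis. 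Without this convexity step the interpolation on which your three regional bounds secretly depend is unjustified.
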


  \Pc{The join of two signed measures $\mu$ and $\nu$ is well-defined
    by $(\mu\vee\nu)(X)= \sup_{R\subseteq X}(\mu(R)+\nu(X{\setminus}
    R))$, where the variables in these expressions are restricted
    measurable sets.}

  When we apply this lemma, the measures are sums of point measures at
  values of $\chi$ that depend on $\beta$ but not on $K(cz)$. If
  $[\tau(cz),\tau(z)]=0$ for all $c$, then the lemma can be improved
  by restricting $y$ to $y\in (0,1]$ in the first inequality in the
  lemma. See the remark in the proof for the explanation.

  \begin{proof}
    For this proof, $z$ can be held fixed, so we omit it, writing
    $K(c)$ for $K(cz)$, $\tau(c)$ for $\tau(cz)$, $\tau$ for
    $\tau(z)$, and similarly for the measures to be found.  

    For the moment, we fix $c$ and write $a=K(c)$.  We express
    \begin{equation}
      h_{2}(\beta;a,\rho|\sigma)
      = \tr(\xi (\rho \otimes\one)\log(e^{a}\rho\otimes\sigma^{-T})^{2}
      \left((2/3)\one\otimes\one
        +(1/3)(e^{a}\rho\otimes\sigma^{-T})^{\beta}\right)).
    \end{equation}
    Since $\xi$ is positive semidefinite, an upper bound on
    $h_{2}(\beta;a,\rho|\sigma)$ 
    can be obtained by determining an operator upper bound on
    \begin{equation}
      X_{a}=(\rho\otimes\one) \log(e^{a}\rho\otimes \sigma^{-T})^{2}
      \left((2/3)\one\otimes\one
        +(1/3)(e^{a}\rho\otimes\sigma^{-T})^{\beta}\right).
    \end{equation}
    For this, we can
    work in a joint eigenbasis of the form $(\ket{i}_{1}\otimes\ket{j}_{2})_{ij}$ of
    $\rho\otimes \one$ and $\one\otimes\sigma^{T}$. We identify the corresponding
    eigenvalues as $\rho_{i}$ and $\sigma_{j}$, which are also the
    diagonal elements in this basis. The operator $X_{a}$ is also diagonal,
    with diagonal elements
    \begin{equation}
      x_{ij}=
      \rho_{i}\log(e^{a}\rho_{i}/\sigma_{j})^{2}
      ((2/3)+(1/3)(e^{a}\rho_{i}/\sigma_{j})^{\beta}).
    \end{equation}
    Write $y_{ij}=\rho_{i}/\sigma_{j}$. The terms where
    $y_{ij}=0$ do not contribute to the relevant sums because of the
    additional factor of $\rho_{i}$.  Remark: If $[\rho,\sigma]=0$,
    then we can choose a common eigenbasis for $\rho$ and $\sigma$ in
    the expression for $\xi$ to see that $y_{ij}=0$ for $i\not=j$, and
    if $\sigma\geq \rho$, $y_{ij}\in [0,1]$.
    
    The constraint on $\mu_{a}$ can be reexpressed with the change of variables
    $y=e^{t}$ in terms of $t\in\rls$ as
    \begin{equation}
      \int_{[-1,1]}e^{t\chi}d\mu_{a}(\chi) \geq
      (t+a)^{2}\left(2/3+(1/3)e^{(t+a)\beta}\right).
      \label{eq:lem:meas_to_powerbnd:tconstr}
    \end{equation}
    We prove that the right-hand side is convex in
    $a$. With the change of variables $x=(t+a)\beta$, this is
    equivalent to $v(x)=x^{2}(2/3 + (1/3)e^{x})$ being convex in $x$.
    Compute $v^{(2)}(x) = 4/3 + (1/3)(2+4x+x^{2})e^{x} =
    4/3+(1/3)((x+2)^{2}-2)e^{x}$.  For $x\geq 0$, $(x+2)^{2}-2>0$, so
    to show that $v^{(2)}>0$, it suffices to consider $x<0$. Then
    $e^{x}\in (0,1]$ and $(x+2)^{2}-2\geq -2$ so $v^{(2)}(x) \geq 4/3
    + (1/3)(-2)= 2/3>0$ as claimed.

    Applying the convexity established in the previous paragraph, for $a\in[k_{\min},k_{\max}]$,
    \begin{align}
      \int_{[-1,1]}e^{t\chi}d(p(a)\mu_{k_{\max}}+(1-p(a))\mu_{k_{\min}})(\chi)
      \hspace*{-1in}&\notag\\
      &\geq p(a)(t+k_{\max})^{2}\left(2/3+(1/3)e^{(t+k_{\max})\beta}\right)\notag\\
      &\hphantom{\geq\;}
      +(1-p(a))(t+k_{\min})^{2}\left(2/3+(1/3)e^{(t+k_{\min})\beta}\right)\notag\\
      &\geq
      (t+a)^{2}\left(2/3+(1/3)e^{(t+a)\beta}\right).
    \end{align}
    From this inequality and with $\bar\mu_{a} = p(a)\mu_{k_{\max}}+(1-p(a))\mu_{k_{\min}}$ as
    defined in the statement of the lemma, we get  
     \begin{align}
      x_{ij} &= \rho_{i}\log(e^{a}y_{ij})^{2}\left((2/3)+(1/3)(e^{a}y_{ij})^{\beta}\right)\notag\\
      &\leq \rho_{i}\int_{[-1,1]}
      y_{ij}^{\chi}d\bar\mu_{a}(\chi)\notag\\
      &= \int_{[-1,1]}\rho_{i}^{1+\chi}\sigma_{j}^{-\chi}d\bar\mu_{a}(\chi).
    \end{align}
    It follows that $X_{a}\leq \int_{[-1,1]}\rho^{1+\chi}\otimes
    (\sigma^{-T})^{\chi}d\bar\mu_{a}(\chi)$
    and 
    \begin{align}
      h_{2}(\beta;a,\rho|\sigma)
      &=\tr(\xi X_{a})\notag\\
      &\leq \int_{[-1,1]}\tr(\xi \left(\rho^{1+\chi}\otimes
    (\sigma^{-T})^{\chi}\right))d\bar\mu_{a}(\chi)\notag\\
      &=
      \int_{[-1,1]}\tr(\rho^{1+\chi}\sigma^{-\chi})d\bar\mu_{a}(\chi).
    \end{align}
    Substituting accordingly we get
    \begin{align}
      \sum_{c}h_{2}(\beta;K(c),\tau(c)|\tau) \hspace*{-1in}&\notag\\
      &\leq \sum_{c}
      \int_{[-1,1]}
      \tr(\tau(c)^{1+\chi}\tau^{-\chi})d\bar\mu_{K(c)}(\chi)\notag\\
      &=
      \sum_{c}\tr(\tau)\bar\mu_{K(c)}(\{-1\})+
      \int_{(-1,0)}\sum_{c}\tr(
      \tau(c)^{1+\chi}\tau^{-\chi})d\bar\mu_{K(c)}(\chi)\notag\\
      &\hphantom{\leq\;}
      +\int_{[0,1]}\sum_{c}
      \tr(\tau(c)^{1+\chi}\tau^{-\chi})d\bar\mu_{K(c)}(\chi).
      \label{eq:lem:meas_to_powerbnd:1}
    \end{align}

    For $0\leq \chi\leq 1$, we have
    $\tr(\tau(c)^{1+\chi}\tau^{-\chi})\leq\tr(\tau(c))$, as can be
    seen by applying Lem.~\ref{lem:conditionmonotone_rp} with $\rho$
    there replaced by $\tau(c)$ here, $\sigma$ there with $\tau(c)$
    here, $\sigma'$ there with $\tau$ here, and $\beta$ there with
    $\chi$ here. For $-1< \chi< 0$, the dimension bounds on R\'enyi
    powers imply that $\sum_{c}\tr(\tau(c)^{1+\chi}\tau^{-\chi})\leq
    \tr(\tau) N^{-\chi}$ (Ref.~\cite{tomamichel:qc2015a},
    Sect. 5.3.5).  \Pc{To prove this directly, we can apply the fact
      that the Petz R\'enyi power for $-1<\beta< 0$ is jointly concave
      (Ref.~\cite{tomamichel:qc2015a}, Prop.~4.8, Pg.~61): Identify
      $\Rng(C)$ with $\ints_{N}$, define $\tau_{k}(c)=\tau(c+k)$ with
      addition modulo $N$, and consider the convex combination
      $\bar\tau(c)=\sum_{k}\tau_{k}(c)/N = \tau/N$. Concavity with the
      conditioning operator fixed implies
      \begin{align}
        \tr(\tau) &= 
        \tr(\tau^{1+{\chi}}\tau^{-\chi})\notag\\
        &= N^{1+\chi}\tr(\left(\sum_{k}\tau_{k}(c)/N\right)^{1+{\chi}}\tau^{-\chi})\notag\\
        &\geq 
        N^{1+\chi}\sum_{k}\tr(\tau_{k}(c)^{1+{\chi}}\tau^{-\chi})/N\notag\\
        &= N^{\chi}\sum_{c}\tr(\tau(c)^{1+\chi}\tau^{-\chi}).
      \end{align}
    }
    We can now bound each summand at
    the end of Eq.~\ref{eq:lem:meas_to_powerbnd:1}.
    By linearity of $\bar\mu_{a}$ in $a$,
    \begin{align}
      \sum_{c}\tr(\tau)\bar\mu_{K(c)}(\{-1\})
      &= \tr(\tau)N\bar\mu_{\bar k}(\{-1\}).
    \end{align}
    For $a\in [k_{\min},k_{\max}]$, $\bar\mu_a\leq
    \mu_{k_{\min}}\vee \mu_{k_{\max}}$, where $\mu_{k_{\min}}\vee \mu_{k_{\max}}$
    is independent of $c$. Therefore
    \begin{align}
      \int_{(-1,0)}\sum_{c}\tr(
      \tau(c)^{1+\chi}\tau^{-\chi})d\bar\mu_{K(c)}(\chi)
      &\leq \int_{(-1,0)}\sum_{c}\tr(\tau(c)^{1+\chi}\tau^{-\chi})
      d(\mu_{k_{\min}}\vee\mu_{k_{\max}})(\chi)
      \notag\\
      &\leq \tr(\tau)\int_{(-1,0)}N^{-\chi}d(\mu_{k_{\min}}\vee\mu_{k_{\max}})(\chi).
    \end{align}
    Since $\tr(\tau(C))/\tr(\tau)$ is a probability distribution and
    for $a\in[k_{\min},k_{\max}]$, the integral $\int_{[0,1]}d\bar\mu_{a}(\chi)$ is
    between $\int_{[0,1]}d\bar\mu_{k_{\min}}(\chi)$ and
    $\int_{[0,1]}d\bar\mu_{k_{\max}}(\chi)$,
    \begin{align}
      \int_{[0,1]}\sum_{c}
      \tr(\tau(c)^{1+\chi}\tau^{-\chi})d\bar\mu_{K(c)}(\chi) &\leq
      \int_{[0,1]}\sum_{c}
      \tr(\tau(c))d\bar\mu_{K(c)}(\chi)\notag\\
      &= \tr(\tau)\sum_{c}
      \frac{\tr(\tau(c))}{\tr(\tau)}\int_{[0,1]}d\bar\mu_{K(c)}(\chi)\notag\\
      &\leq\tr(\tau)\max_{c}\int_{[0,1]}d\bar\mu_{K(c)}(\chi)\notag\\
      &\leq
      \tr(\tau)\max\left(\int_{[0,1]}
        d\bar\mu_{a}(\chi):a\in\{k_{\min},k_{\max}\}\right) \notag \\
       & \leq
      \tr(\tau)\max\left(\int_{[0,1]}
        d\mu_{a}(\chi):a\in\{k_{\min},k_{\max}\}\right).
    \end{align}
    Inserting these summands back into the right-hand side of
    Eq.~\ref{eq:lem:meas_to_powerbnd:1} gives the lemma.
  \end{proof}

  Motivated by the above lemma we consider the reparameterized constraint
  in Eq.~\ref{eq:lem:meas_to_powerbnd:tconstr}. 
  Reparameterizing a second time by replacing $t+a$ by $t$ gives
  \begin{equation}
    \int_{[-1,1]}e^{(t-a)\chi}d\mu_{a}(\chi) \geq t^{2}(2/3+(1/3)e^{t\beta}).
    \label{eq:thm:ee_to_qef:wmu}
  \end{equation}
  To simplify the problem,
  we express $\mu_{a}$ in terms of a weighted sum of measures $\nu$
  satisfying
  \begin{equation}
    \int_{[-1,1]}e^{(t-a)\chi}d\nu(\chi) \geq t^{2}e^{t\gamma},
  \end{equation}
  for $\gamma=0$ or $\gamma=\beta \in (0,1]$. See Eqs.~\eqref{eq:thm:ee_to_qef:solution1} and~\eqref{eq:thm:ee_to_qef:solution2} 
  below for our proposed solutions for $\mu_a$.  Replacing
  $d\nu(\chi)$ by $e^{a\chi}d\mu(\chi)$ and dividing
  both sides by $e^{t\gamma}$, we can equivalently determine $\mu$ such that
  for all $t\in\rls$,
  \begin{equation}
    \int_{[-1,1]}e^{t(\chi-\gamma)} d\mu(\chi)\geq t^{2}.
    \label{eq:thm:ee_to_qef:smu1}
  \end{equation}
  In view of
  the form of Eq.~\ref{eq:lem:meas_to_powerbnd:main} and in view of
  the reparameterization of measures, we wish to minimize
  \begin{equation}
    \int_{[-1,1]}N^{-\chi\knuth{\chi\leq 0}}e^{a\chi }d\mu(\chi).
    \label{eq:thm:ee_to_qef:smu2}
  \end{equation}
  Let $\delta_{x}$ denote the delta-function probability distribution
  defined by $\int f(y)d\delta_{x}(y) = f(x)$.  We converge on the
  choice
  \begin{equation}\mu= \lambda_{1}
    (\delta_{-1+2\gamma}+\delta_{1}) + \lambda_{0}\delta_{\gamma},
    \label{eq:thm:ee_to_qef:muchoice}
  \end{equation}
  for
  which the constraints in  Eq.~\eqref{eq:thm:ee_to_qef:smu1} become 
  \begin{equation}
    2\cosh((1-\gamma)t)\lambda_{1}  + \lambda_{0}\geq t^{2}
    \label{eq:thm:ee_to_qef:coshprob1}
  \end{equation}
  for all $t\in\rls$, where we determine $\lambda_{1}\geq 0$ and
  $\lambda_{0}\geq 0$ so that this inequality is tight.  We naturally
  arrived at this choice after considering more general forms that
  satisfy the constraints. See the comment after the proof for a
  discussion.  Subject to the constraints,  according to Eq.~\eqref{eq:thm:ee_to_qef:smu2} we minimize
  \begin{align}
    \lambda_{1} 
    \left(N^{(1-2\gamma)\knuth{\gamma\leq 1/2}}e^{-(1-2\gamma)a}
      + e^{a}\right)+\lambda_{0}e^{\gamma a}
    &= e^{\gamma a}\left(
      \lambda_{1} 
      \left(N^{(1-2\gamma)\knuth{\gamma\leq 1/2}}e^{-(1-\gamma)a}
        + e^{(1-\gamma)a}\right)+\lambda_{0}\right)\notag\\
    &=  e^{\gamma a}\left(2\cosh(w(a)) \lambda_{1}
      + \lambda_{0}\right),
    \label{eq:thm:ee_to_qef:coshprob2}
  \end{align}
  where $e^{w(a)}$ is the larger of the two solutions $x$ to the identity
  $x+1/x = e^{ l(a)}$ with $ l(a) =
  \log(N^{(1-2\gamma)\knuth{\gamma\leq 1/2}}e^{-(1-\gamma)a} +
  e^{(1-\gamma)a})\geq \log(2)$. Thus $e^{w(a)}= \left(e^{
      l(a)}+\sqrt{e^{2 l(a)}-4}\right)/2\leq e^{ l(a)}$,
  where the upper bound is a good approximation for large $
  l(a)$.  The function $a\mapsto e^{ l(a)}$ is convex and symmetric
  around its minimum at $a=\log(N^{(1-2\gamma)\knuth{\gamma\leq
      1/2}})/(2(1-\gamma))$, and as a result $w(a)$ is also minimized at and
  symmetric about this value, and monotone on each side.  From
  $\log(e^{x}+e^{-x})\leq |x|+\log(2)$, we obtain that for $1/2\leq\gamma\leq
  1$, $ l(a)\leq (1-\gamma)|a|+\log(2)$, and for
  $0\leq\gamma\leq 1/2$,
  \begin{align}
    w(a)\leq  l(a) &=
    \log(N^{1-2\gamma}e^{-(1-\gamma)a} +  e^{(1-\gamma)a})\notag\\
   &= \log(N^{(1-2\gamma)/2}\left(N^{(1-2\gamma)/2}e^{-(1-\gamma)a} 
    +  N^{-(1-2\gamma)/2}e^{(1-\gamma)a}\right))\notag\\ 
   &\leq(1-2\gamma)\log(N)/2
          + |(1-2\gamma)\log(N)/2-(1-\gamma)a| +\log(2)\notag\\
   &=\max\big( (1-2\gamma)\log(N)-(1-\gamma)a,(1-\gamma)a\big)+\log(2)\notag\\
   &\leq \max\big( (1-\gamma)\log(N)-(1-\gamma)a,(1-\gamma)a\big)\notag+\log(2)\\
   &= (1-\gamma)\max\big(\log(N)-a,
           a\big) + \log(2).
  \end{align}
  For the last inequality we opted for a simpler expression at the
  cost of worse bounds when $\gamma$ is not small. The better bound is
  readily taken into account by changing the next definitions and the
  corresponding ones in the theorem statement.  Let $\tilde w(a)=
  (1-\gamma)\max\left(\log(N)-a, a\right)+\log(2)$ so that $w(a)\leq
  \tilde w(a)$. We also define $\bar w(z)=\max_{c}\tilde w(K(cz)) =
  \max(\tilde w(k_{\max}(z)),\tilde w(k_{\min}(z)))$, consistent with
  the theorem statement, but suppressing the subscript $\gamma$ for
  the moment.
    
  The minimization problem defined by the constraints in
  Eq.~\ref{eq:thm:ee_to_qef:coshprob1} and the objective function in
  Eq.~\ref{eq:thm:ee_to_qef:coshprob2} can be transformed to an
  instance of
  \begin{alignat}{3}
    \textrm{Minimize:\ } & 2\cosh(v)a_{1}+a_{0}\notag\\
    \textrm{Variables:\ } & a_{1},a_{0}\notag\\
    \textrm{Subject to:\ } & 
    2\cosh(s)a_{1}+a_{0}\geq s^{2} \textrm{\ for all $s\in\rls$},\notag\\
    &a_{1}\geq 0, a_{0}\geq 0,
    \label{eq:thm:ee_to_qef:simplified_coshprob}
  \end{alignat}
  for a given $v\geq 0$; the transformation is described below, right after Eq.~\eqref{eq:thm:ee_to_qef:aim}.   To satisfy the
  constraint, we determine the minimum value $f(s_{0})$ of
  $f(s)=f(s;a_{1})=2\cosh(s)a_{1} -s^{2}$. Decreasing either $a_{1}$ or $a_{0}$
  reduces the objective function. To minimize the objective function, we can 
  set $a_{0}=-f(s_{0})$ if $f(s_{0})\leq 0$ and $a_{0}=0$ otherwise.  In the second case, when $f(s_{0}) >0$,
  it is possible to further reduce $a_{1}$ to decrease the objective function. Thus the optimal value for 
  $a_{1}$ is $0$, which is not possible as the first constraint in Eq.~\eqref{eq:thm:ee_to_qef:simplified_coshprob}  would be violated. 
  In this way we find that the minimum is achieved with $f(s_0)\leq 0$, and $a_{1}$ and $a_{0}$ 
  are both determined by the single parameter $s_0$. As a result, in the process of determining
  the minimum of $f(s)$,  we parametrize $a_{1}$ and $a_{0}$ in terms
  of $s_{0}$.

  The minimum of $f(s)$ is achieved at a critical point $s_{0}$
  satisfying $f^{(1)}(s_{0})= 2\sinh(s_{0})a_{1} -2s_{0} = 0$. One
  such critical point is $s_{0}=0$.  By the symmetry of $f(s)$ over $s=0$, 
  it suffices to consider $s_{0}\geq 0$.  Without loss of generality, we can consider
  only the case where $a_{1}<1$. The reason is as follows: Consider $f^{(2)}(s)=2\cosh(s)a_{1} -2$.
  This is positive for $a_{1} \geq 1$ and $s>0$, in which case there
  are no positive critical points as $f^{(1)}(s=0)=0$. Hence  the minimum of $f(s)$ is $f(0)=2
  a_{1}$. However, according to the argument in the previous paragraph, the minimum of the 
  objective function is achieved when $f(s_0)\leq 0$. In particular the minimum is not achieved for $a_{1} \ge 1$.  
   When $a_{1}<1$,  the slope $f^{(2)}$ of $f^{(1)}$ is increasing for $s\geq 0$, negative at $s=0$, and positive for $s$
  large enough. Consequently, $f^{(1)}$ first decreases from $f^{(1)}(s=0)=0$ and then monotonically increases, 
   from which it follows that there is exactly one critical point $s_{0}>0$ for $f$, which determines the minimum of
  $f$.  By making use of the critical-point equation to express
  \begin{equation}
    a_{1}=a_{1}(s_{0})=s_{0}/\sinh(s_{0}),
    \label{eq:thm:ee_to_qef:a1}
  \end{equation}
  we have $f(s_{0};a_{1}(s_{0}))=2s_{0}\coth(s_{0})-s_{0}^{2}$.  Note that
  because $\sinh(s_{0})>s_{0}$, we have $a_1=s_{0}/\sinh(s_{0})<1$. The
  function $x\in(0,\infty)\mapsto 2x\coth(x)-x^{2}$ approaches $2$ as
  $x\searrow 0$ and has derivative $2\coth(x) - 2x/\sinh(x)^{2} - 2 x  = 2\coth(x)(1-x \coth(x))$ which is negative for $x>0$. Negativity
  follows from $\sinh(x)=\int_{0}^{x}\cosh(t)dt \leq x\cosh(x)$.
  Therefore $f(s_{0};a_{1}(s_{0}))$ is decreasing in $s_{0}$, thus
  negative for $s_{0}> \iota_{0}$ where $\iota_{0}>0$ uniquely
  satisfies $2\coth(\iota_{0})=\iota_{0}$.  By numerical calculation,
  $\iota_{0}\in(2.065338,2.065339)$.   For $s_0 <\iota_0$, $f(s_0) >0$, but according to the argument 
  in the previous paragraph, we should have $f(s_0)\leq 0$ in order to achieve the minimum of the objective function. 
  We now constrain $s_{0}\geq \iota_{0}$ and parametrize $a_{1}$ and $a_{0}$ in terms of $s_{0}$,
  with $a_{1}$ given in Eq.~\ref{eq:thm:ee_to_qef:a1} and $a_{0}\geq
  0$ given by
  \begin{equation}
    a_{0}=a_{0}(s_{0})=-f(s_{0};a_{1}(s_{0})) =
    s_{0}(s_{0}-2\coth(s_{0})).
    \label{eq:thm:ee_to_qef:a2}
  \end{equation}
  Here $a_{0}$ is increasing and $a_{1}$ is decreasing in $s_{0}$ for
  $s_{0}>0$.  For the latter, the function $x\mapsto x/\sinh(x)$ has
  derivative $(\sinh(x)- x\cosh(x))/\sinh(x)^{2}\leq 0$.

  It remains to minimize $2\cosh(v)a_{1}+a_{0}$ over
  $s_{0}\geq \iota_{0}$.  Rewrite
  \begin{align}\label{eq:thm:ee_to_qef:obj_reform}
      2\cosh(v) a_{1}+ a_{0} &= 2\cosh(v) a_{1}+ s_{0}(s_{0}-2\coth(s_{0}))\notag\\
    &=  2\cosh(v) a_{1} +a_{1}\sinh(s_{0})(s_{0}-2\coth(s_{0}))\notag\\
    &= a_{1}(2\cosh(v)+ s_{0}\sinh(s_{0})-2\cosh(s_{0})),
  \end{align}
  and differentiate by $s_{0}$
  \begin{align}
    \frac{d}{ds_{0}}(2\cosh(v) a_{1}+ a_{0}) &=
    \left(\frac{d}{ds_{0}} a_{1}\right)(2\cosh(v)+ s_{0}\sinh(s_{0})-2\cosh(s_{0}))
    \notag\\
    &\hphantom{=\;}+ a_{1}(\sinh(s_{0}) + s_{0}\cosh(s_{0})-2\sinh(s_{0}))\notag\\
    &= \left(\frac{d}{ds_{0}} a_{1}\right)(2\cosh(v)+ s_{0}\sinh(s_{0})-2\cosh(s_{0}))\notag\\
    &\hphantom{=\;}
    + a_{1}(s_{0}\cosh(s_{0})-\sinh(s_{0})).
  \end{align}
  Since $\frac{d}{ds_{0}} a_{1} =
  (\sinh(s_{0})-s_{0}\cosh(s_{0}))/\sinh(s_{0})^{2}$, we can replace
  the second factor of the second summand by $- a_{1}\sinh(s_{0})^{2} \frac{d}{d
    s_{0}} a_{1}$ to get
  \begin{align}\label{eq:thm:ee_to_qef:obj_der_reform}
    \frac{d}{ds_{0}}(2\cosh(v) a_{1}+ a_{0}) &=
    \left(\frac{d}{ds_{0}} a_{1}\right)\left(2\cosh(v)+ s_{0}\sinh(s_{0})-2\cosh(s_{0})
    - a_{1}\sinh(s_{0})^{2}\right)\notag\\
    &= \left(\frac{d}{ds_{0}} a_{1}\right)
    (2\cosh(v)+s_{0}\sinh(s_{0})- 2\cosh(s_{0}) -s_{0}\sinh(s_{0}))\notag\\
    &=\left(\frac{d}{ds_{0}} a_{1}\right)
    (2\cosh(v)-2\cosh(s_{0})).
  \end{align}   
   Since $a_{1}$ is decreasing in $s_{0}$, that is,
  $\frac{d}{ds_{0}} a_{1}<0$, we need to consider the following two cases in order to find the minimum 
  of the function in Eq.~\eqref{eq:thm:ee_to_qef:obj_reform} over the region $s_{0}\geq \iota_{0}$. First,
 consider the case that $v>\iota_{0}$.  The derivative in Eq.~\eqref{eq:thm:ee_to_qef:obj_der_reform} is negative 
 when $\iota_{0}\leq s_{0}<v$, becomes zero when $s_{0}=v$, and is positive when $s_{0}>v$. Therefore, the 
 function in Eq.~\eqref{eq:thm:ee_to_qef:obj_reform} takes its minimum when $s_{0}=v$.  Second, in 
 the case that $v\leq \iota_{0}$ the derivative in Eq.~\eqref{eq:thm:ee_to_qef:obj_der_reform} is always non-negative
 when $s_{0}\geq \iota_{0}$. Hence, the minimum of the function in Eq.~\eqref{eq:thm:ee_to_qef:obj_reform} 
 is achieved when $s_{0}=\iota_{0}$.  Accordingly, we set
  $s_{0}=\max(\iota_{0},v)$.  Define $\llceil
  x\rrceil=\max(\iota_{0},x)$. Substituting for $a_{0}$ and $a_{1}$
  gives
  \begin{align}
    a_{0} 
     &= \llceil v\rrceil(\llceil v\rrceil - 2\coth(\llceil v\rrceil))\notag\\
     &=\max(0,v(v-2 \coth(v))), \notag\\
    a_{1}
      &= \llceil v\rrceil \csch(\llceil v\rrceil)\notag\\
      &= \min(\iota_{0}\csch(\iota_{0}),v\csch(v)),\notag\\
    2\cosh(v)a_{1}+a_{0} &\leq 2\cosh(\llceil v\rrceil)a_{1}+a_{0}\notag\\
    &= \llceil v\rrceil^{2}.
    \label{eq:thm:ee_to_qef:aim}
  \end{align}
  
  To return to Eqs.~\ref{eq:thm:ee_to_qef:coshprob1}
  and~\ref{eq:thm:ee_to_qef:coshprob2}, we identify
  $s=(1-\gamma)t$ to match constraints. In
  Eq.~\ref{eq:thm:ee_to_qef:coshprob1}, this requires multiplying
  both sides by $(1-\gamma)^{2}$ to match the constraint of
  Eq.~\ref{eq:thm:ee_to_qef:simplified_coshprob}, after which we
  must identify $\lambda_{1}(1-\gamma)^{2}=a_{1}$ and
  $\lambda_{0}(1-\gamma)^{2}=a_{0}$.  For the objective function, we
  consider Eq.~\ref{eq:thm:ee_to_qef:coshprob2} to identify
  $v=w(a)$, as the positive prefactor
  $e^{\gamma a}/(1-\gamma)^{2}$ does not affect the optimizing
  variables. Since $s_{0}=\llceil w(a)\rrceil$,  this yields
  \begin{align}
    \lambda_{0,\gamma}(a) &= \frac{1}{(1-\gamma)^{2}} \llceil
    w_{\gamma}(a)\rrceil (\llceil w_{\gamma}(a)\rrceil-2\coth(\llceil
    w_{\gamma}(a)\rrceil)),
    \notag\\
    \lambda_{1,\gamma}(a) &= \frac{1}{(1-\gamma)^{2}} \llceil
    w_{\gamma}(a)\rrceil \csch(\llceil w_{\gamma}(a)\rrceil),
    \label{eq:thm:ee_to_qef:lambdas}
  \end{align}
  where we now make the parameter $\gamma$ explicit with subscripts
  and make $a$ visible as an argument of the $\lambda_{i}$.
  To apply Lem.~\ref{lem:meas_to_powerbnd}, we expand 
  \begin{equation}
    \mu_{a,\gamma} =
    \lambda_{1,\gamma}(a)\delta_{-1+2\gamma}+\lambda_{0,\gamma}(a)\delta_{\gamma}
    + \lambda_{1,\gamma}(a)\delta_{1}
  \end{equation}
  according to Eq.~\ref{eq:thm:ee_to_qef:muchoice}, where we now make
  the dependence on $a$ visible as a subscript. We then apply the
  replacement $d\nu_{a,\gamma}(\chi)$ by
  $e^{a\chi}d\mu_{a,\gamma}(\chi)$ used to arrive at the constraint of
  Eq.~\ref{eq:thm:ee_to_qef:smu1}, and finally express the
  $d\mu_{a}(\chi)$ required for applying
  Lem.~\ref{lem:meas_to_powerbnd} as the weighted combination of
  $d\nu_{a,\gamma}(\chi)=e^{a\chi}d\mu_{a,\gamma}(\chi)$ with
  $\gamma=0$ and $\gamma=\beta$ suggested by the form of
  Eq.~\ref{eq:thm:ee_to_qef:wmu}.  This gives
  \begin{equation}  \label{eq:thm:ee_to_qef:solution1}
      d\mu_{a}(\chi) = \frac{1}{3}e^{a\chi}\left(2d\mu_{a,0}(\chi)
      +d\mu_{a,\beta}(\chi)\right).
  \end{equation}
  The construction above ensures that $\mu_{a}$ satisfies the
  condition in Lem.~\ref{lem:meas_to_powerbnd}.
  Expanding in terms of the parameters found we get
  \begin{align}\label{eq:thm:ee_to_qef:solution2}
      \mu_{a} &= \frac{2\lambda_{1,0}(a)}{3}e^{-a}\delta_{-1}\notag\\
    &\hphantom{=\;}
    +\frac{\lambda_{1,\beta}(a)}{3}e^{-a(1-2\beta)}\delta_{-1+2\beta}\notag\\
    &\hphantom{=\;}
    + \frac{2\lambda_{0,0}(a)}{3}\delta_{0}
    +\frac{\lambda_{0,\beta}(a)}{3}e^{a\beta}\delta_{\beta} +
    \frac{2\lambda_{1,0}(a)+\lambda_{1,\beta}(a)}{3}e^{a}\delta_{1}.
  \end{align}
  For $\beta< 1/2$,
  the terms of Lem.~\ref{lem:meas_to_powerbnd} behind $\tr(\tau(z))$ are
  \begin{align}
    &N \bar\mu_{z,\bar k(z)}(\{-1\}) &&=
    \frac{2N}{3}\left(
      \lambda_{1,0}(k_{\min}(z))e^{-k_{\min}(z)}
      \vphantom{\frac{\bar
          k(z)-k_{\min}(z)}{k_{\max}(z)-k_{\min}(z)}}\right.\notag\\
    &&&\hphantom{=\;}\left.
      + \frac{\bar k(z)-k_{\min}(z)}{k_{\max}(z)-k_{\min}(z)}
      \left(\lambda_{1,0}(k_{\max}(z))e^{-k_{\max}(z)}
        -\lambda_{1,0}(k_{\min}(z))e^{-k_{\min}(z)}\right)
      \right)
    ,\notag\\
    &\mathrlap{\int_{(-1,0)}N^{-\chi}d(\mu_{k_{\min}(z)}\vee\mu_{k_{\max}(z)})(\chi)} 
    \notag\\
    &&&= \frac{1}{3}\max\left(
      N^{1-2\beta}\lambda_{1,\beta}(a)e^{-a(1-2\beta)}
      :a\in\{k_{\min}(z),k_{\max(z)}\}
    \right)
    ,\notag\\
    \mathrlap{\max\left(
      \int_{[0,1]}d\mu_{a}(\chi)
      :a\in\{k_{\min}(z),k_{\max}(z)\}\right)}\notag\\
    &&&=
    \frac{1}{3}\max\left(
      \int_{[0,1]}2 e^{a\chi}d\mu_{a,0}(\chi)
      + 
      \int_{[0,1]} e^{a\chi}d\mu_{a,\beta}(\chi)
      :a\in\{k_{\min}(z),k_{\max}(z)\}\right)
    \notag\\
    &&&=
    \frac{1}{3}\max\left(2\lambda_{0,0}(a)
        +\lambda_{0,\beta}(a)e^{a\beta} +
        (2\lambda_{1,0}(a)+\lambda_{1,\beta}(a))e^{a}\right.\notag\\
      &&&\left.\hphantom{.=\frac{1}{3}\max()} :
      a\in\{k_{\min}(z),k_{\max(z)}\}\right)
   .
   \label{eq:thm:ee_to_qef:three1}
  \end{align}
  These expressions are ready to implement for specific applications.
  It remains to obtain the bound in the statement of the theorem.  For
  this, we use the bound $w_{\gamma}(a)\leq \tilde w_{\gamma}(a)$
  obtained earlier.  

  We first simplify the third expression in
  Eq.~\ref{eq:thm:ee_to_qef:three1} by means of the inequality
  \begin{align}
    \int_{[0,1]}e^{a\chi}d\mu_{a,\gamma}(\chi)
    &\leq \int_{[-1,1]}N^{-\chi\knuth{\chi\leq 0}}
    e^{a\chi}d\mu_{a,\gamma}(\chi).
  \end{align}
  The right-hand side is the quantity in
  Eq.~\ref{eq:thm:ee_to_qef:smu2} that was evaluated in
  Eq.~\ref{eq:thm:ee_to_qef:coshprob2} and then minimized.  It is
  related to the third quantity given and bounded in
  Eq.~\ref{eq:thm:ee_to_qef:aim} by the conversion from the $a_{i}$ to
  the $\lambda_{i}$ and a scale, namely by a factor of $e^{\gamma
    a}/(1-\gamma)^{2}$. This gives
  \begin{align}
    \int_{[0,1]}e^{a\chi}d\mu_{a,\gamma}(\chi)
    &\leq e^{\gamma
      a}(2\cosh(w_{\gamma}(a))\lambda_{1,\gamma}(a)+\lambda_{0,\gamma}(a))
    \notag\\
    &\leq\frac{e^{\gamma a}}{(1-\gamma)^{2}}\llceil w_{\gamma}(a)\rrceil^{2}\notag\\
    &\leq\frac{e^{\gamma k_{\max}(z)}}{(1-\gamma)^{2}}\llceil w_{\gamma}(a)\rrceil^{2}\notag\\
    &\leq\frac{e^{\gamma k_{\max}(z)}}{(1-\gamma)^{2}}\llceil \tilde w_{\gamma}(a)\rrceil^{2}.
  \end{align}
  The third expression is therefore bounded by
  \begin{align}
    \frac{1}{3}\max\left(
      2\llceil\tilde w_{0}(a)\rrceil ^{2}
     +
     \frac{e^{\beta k_{\max}(z)}}{(1-\beta)^{2}}\llceil \tilde w_{\beta}(a)\rrceil^{2}
      ,a\in\{k_{\min}(z),k_{\max}(z)\}\right)\hspace*{-2.5in}&\notag\\
    &= \frac{1}{3}\left(2\llceil\bar w_{0}(z)\rrceil ^{2}
     +
     \frac{e^{\beta k_{\max}(z)}}{(1-\beta)^{2}}\llceil \bar w_{\beta}(z)\rrceil^{2}\right),
  \end{align}
  where $\bar w_{\beta}(z)$ is as defined in the theorem statement.

  Next, the first expression of Eq.~\ref{eq:thm:ee_to_qef:three1} is bounded by
  \begin{align}
    N \bar\mu_{z,\bar k(z)}(\{-1\}) 
    &\leq
    \frac{2}{3}\max\left(
      Ne^{-a}\lambda_{1,0}(a):a\in\{k_{\min}(z),k_{\max}(z)\}
      \right),
  \end{align}
  which differs from the second expression of Eq.~\ref{eq:thm:ee_to_qef:three1} only in the initial factor
  and a replacement of $\beta$ by $0$.  In view of the definition of
  $w_{\gamma}(a)$ after Eq.~\ref{eq:thm:ee_to_qef:coshprob2} and the
  expression for $\lambda_{1,\gamma}(a)$ in Eq.~\ref{eq:thm:ee_to_qef:lambdas},
   \begin{align}
     N^{1-2\gamma}e^{-a(1-2\gamma)}\lambda_{1,\gamma}(a)
     &\leq
     e^{a\gamma}\left(N^{1-2\gamma}e^{-a(1-\gamma)}
       + e^{a(1-\gamma)}\right)\lambda_{1,\gamma}(a)\notag\\
     &= e^{a\gamma}2\cosh(w_{\gamma}(a))\lambda_{1,\gamma}(a)\notag\\
     &\leq e^{k_{\max}(z)\gamma}2\cosh(w_{\gamma}(a))\lambda_{1,\gamma}(a)\notag\\
     &= \frac{e^{k_{\max}(z)\gamma}}{(1-\gamma)^{2}}
     2\cosh(w_{\gamma}(a))\llceil w_{\gamma}(a)\rrceil\csch(\llceil w_{\gamma}(a)\rrceil)
     \notag\\
     &\leq\frac{e^{k_{\max}(z)\gamma}}{(1-\gamma)^{2}}
     2\cosh(\llceil w_{\gamma}(a)\rrceil)\llceil w_{\gamma}(a)\rrceil\csch(\llceil w_{\gamma}(a)\rrceil)
     \notag\\
     &=\frac{e^{k_{\max}(z)\gamma}}{(1-\gamma)^{2}}
     2 \llceil w_{\gamma}(a)\rrceil\coth(\llceil w_{\gamma}(a)\rrceil)\notag\\
     &\leq\frac{e^{k_{\max}(z)\gamma}}{(1-\gamma)^{2}}
     2 \llceil \tilde w_{\gamma}(a)\rrceil\coth(\llceil \tilde w_{\gamma}(a)\rrceil),
   \end{align}
   where the last inequality follows from monotonicity of $x\coth(x)$.  \Pc{The
     derivative of $x\coth(x)$ is $\coth(x)-x\csch(x)^{2}$.  After
     dividing by $\csch(x)^{2}$ we get $\cosh(x)\sinh(x)-x$, and
     $\sinh(x)\geq x$ for $x\geq 0$.}  With this we can combine the
   bounds for the first and second expressions to
   \begin{align}
     \mathrlap{\max\left(
       \frac{4}{3}\llceil\tilde w_{0}(a)\rrceil\coth(\llceil\tilde w_{0}(a)\rrceil):
       a\in\{k_{\min},k_{\max}\}\right)}\hspace*{1in}&\notag\\
      &+
     \max\left(
       \frac{2 e^{k_{\max}(z)\beta}}{3(1-\beta)^{2}}
       \llceil\tilde w_{\beta}(a)\rrceil\coth(\llceil\tilde w_{\beta}(a)\rrceil):
       a\in\{k_{\min},k_{\max}\}\right)\notag\\
     &=
     \frac{1}{3}\left(\vphantom{\frac{e^{k_{\max}(z)\beta}}{(1-\beta)^{2}}}
       4\llceil\bar w_{0}(z)\rrceil\coth(\llceil\bar w_{0}(z)\rrceil)
       \right.\notag\\
       &\left.\hphantom{=\frac{2}{3}\max()}
       +2\frac{e^{k_{\max}(z)\beta}}{(1-\beta)^{2}}
         \llceil\bar w_{\beta}(z)\rrceil\coth(\llceil\bar w_{\beta}(z)\rrceil)\right).
   \end{align}
   By combining the bounds on all three
   expressions we get
   \begin{align}
     \sum_{c}h_{2}(\beta;K(cz),\tau(cz)|\tau(z))
     &\leq
     \frac{\tr(\tau(z))}{3}\left(
       \vphantom{\frac{e^{k_{\max}(z)\beta}}{(1-\beta)^{2}}}
       2 \llceil\bar w_{0}(z)\rrceil\left(\llceil\bar w_{0}(z)\rrceil 
           + 2\coth(\llceil\bar w_{0}(z)\rrceil)\right)
        \right.\notag\\
      &\left.\hphantom{\leq\frac{\tau(z)}{3}()} + 
       \frac{e^{k_{\max}(z)\beta}}{(1-\beta)^{2}}
       \llceil\bar w_{\beta}(z)\rrceil\left(\llceil\bar w_{\beta}(z)\rrceil 
           + 2\coth(\llceil\bar w_{\beta}(z)\rrceil)\right)\right).
   \end{align}  
   The bound in the theorem statement follows.
\end{proof}

The form of the measure $\mu$ in Eq.~\ref{eq:thm:ee_to_qef:muchoice}
for the proof of Thm.~\ref{thm:ee_to_qef} is guided by its potential
for closed-form determination of optimal parameters.  It is not
optimal for minimizing Eq.~\ref{eq:thm:ee_to_qef:smu2} subject to
Eq.~\ref{eq:thm:ee_to_qef:smu1}, which is the intent at that point in
the proof. An optimal measure $\mu$ is of the form
$b_{-}\delta_{-1}+b_{0}\delta_{0}+b_{+}\delta_{1}$.  To see this,
reparameterize $\mu$ by $d\mu(\chi)= N^{\chi\knuth{\chi\leq 0}}e^{-a\chi
  }d\mu'(\chi)$. The problem in terms of $\mu'$ is to
\begin{alignat}{3}
  \textrm{Minimize:\ } & \int_{[-1,1]}d\mu'(\chi)\notag\\
  \textrm{Variable:\ } & \textrm{The positive measure $\mu'$}\notag\\
  \textrm{Subject to:\ } & 
  \int_{[-1,1]}N^{\chi\knuth{\chi\leq 0}}e^{t(\chi-\gamma)-a\chi }d\mu'(\chi)\geq t^{2}
   \textrm{\ for all $t\in\rls$} .
\end{alignat}
Let $\mu'$ be a feasible solution.
For any positive measure $\nu$, reals $c\leq d$, measurable
$I\subseteq [c,d]$, and real parameter $s$, convexity of $x\mapsto e^{sx}$
implies
\begin{align}
  e^{sc}\int_{I}\frac{d-\chi}{d-c}d\nu(\chi)
  +e^{sd}\int_{I}\frac{\chi-c}{d-c}d\nu(\chi)
  &= \int_{I}\left(\frac{d-\chi}{d-c}e^{sc}+\frac{\chi-c}{d-c}e^{sd}\right)d\nu(\chi)
  \notag\\
  &\geq
  \int_{I}e^{s(c(d-\chi)/(d-c)+d(\chi-c)/(d-c))}d\nu(\chi)\notag\\
  &=\int_{I}e^{s\chi}d\nu(\chi).
\end{align}
By applying this inequality with $\nu=\mu'$, first with $c=-1$, $d=0$,
$I=[-1,0)$ and $s=\log(N)+t-a$, then with $c=0$, $d=1$, $I=(0,1]$ and
$s=t-a$, we find that the measure
\begin{align}
  \mu'' &= \delta_{-1}\int_{[-1,0)}(-\chi)d\mu'(\chi)
             +\delta_{1}\int_{(0,1]}\chi d\mu'(\chi)\notag\\
           &\hphantom{=\;}  +
             \delta_{0}\left(\mu'(\{0\})+
               \int_{[-1,0)}(\chi+1)d\mu'(\chi) +
               \int_{(0,1]}(1-\chi)d\mu'(\chi)\right)
\end{align}
is a feasible solution with the same value for the objective function.
The measure $\mu''$ can be interpreted as a redistribution of $\mu'$
to point measures at $-1$, $0$ and $1$. It is possible to apply this
technique to improve the bound in Thm.~\ref{thm:ee_to_qef} by
redistributing the contribution of
$\lambda_{1,\gamma}\delta_{-1+2\beta}$ to $\chi=-1$ and $\chi=0$ and
of $\lambda_{0,\gamma}\delta_{\beta}$ to $\chi=0$ and $\chi=1$.  This
mostly helps when $\tilde w(a)$ is not large.

\subsection{Comparison to the EAT}
\label{subsec:handeat}

The entropy accumulation theorem (EAT) is the main result of
Ref.~\cite{dupuis:qc2016a} (Thm. 4.4). It uses a different framework
for describing models, where models are obtained from an explicit
quantum representation of the devices.  The EAT estimates conditional
min-entropy from min-tradeoff functions applied to the observed
frequencies of a CV. The estimate can be used with quantum-quantum
states.  Here we consider the case of classical-quantum states
matching our scenarios, a restriction also used in
Ref.~\cite{arnon-friedman:qc2016a} for the same reasons.  Models
$\cC(\Sfnt{CZ})$ in our framework that fit the conditions of the EAT
are EAT models as introduced and related to 
EAT channel chains in Sect.~\ref{sec:models:examples}. EAT models are
chained with conditionally independent inputs from models induced by
POVMs associated with a given class of quantum processes.  With our
notation, the following is an instance of the EAT:

\begin{theorem} \label{thm:eat} \emph{Entropy Accumulation Theorem for
    Conditional Min-Entropy~\cite{dupuis:qc2016a}:} Let $K(CZ)$ be an entropy
  estimator for $C|Z$ and $\cC(CZ)$, where $\cC(CZ)$ is the trial
  model for EAT model $\cC(\Sfnt{CZ})$ with $n$ trials. Fix
  $\epsilon\in(0,1)$ and an entropy goal $h$ per trial. Let
  $\phi(\Sfnt{CZ})=(\sum_{i=1}^{n}K(C_{i}Z_{i})\geq nh)$.  Suppose
  $\{\phi'(\Sfnt{CZ})\}\subseteq\{\phi(\Sfnt{CZ})\}$ and
  $\rho(\Sfnt{CZ})\in\cC(\Sfnt{CZ})$.  Define
  $\kappa=\tr(\rho(\phi'))$,  $k_{\infty}=\max_{cz}|K(cz)|$
  and $N=|\Rng(C)|$. Then
  \begin{equation}
    H_{\infty}^{\epsilon}(\Sfnt{C}|\Sfnt{Z}\Pfnt{E};\rho(\Sfnt{CZ}|\phi'))
    \geq nh -
    2\sqrt{\log_{2}(e)}\big(\log(1+2N)+\lceil k_{\infty}\rceil\big)
    \sqrt{|\log(\epsilon^{2}\kappa^{2}/2)|}
    \sqrt{n}.
  \end{equation}
\end{theorem}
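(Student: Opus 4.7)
The plan is to derive this EAT statement as a consequence of Theorems~\ref{thm:ee_to_qef}, \ref{thm:qefchainmain}, and~\ref{thm:bnds_from_qef}, with an optimization over the \QEF power $\beta$ at the end. Fix $\beta\in(0,1/2]$. First I would invoke Thm.~\ref{thm:ee_to_qef} to promote the trial entropy estimator $K$ into the \QEFP (and hence \QEF) $F_\beta(CZ)=e^{\beta K(CZ)}/(1+c_P(\beta))$ with power $\beta$ for the trial model $\cC(CZ)$. Since $\cC(\Sfnt{CZ})$ is an EAT model, its trials chain with conditionally independent inputs, so Thm.~\ref{thm:qefchainmain} implies that
\begin{equation*}
F_\beta(\Sfnt{CZ})\;=\;(1+c_P(\beta))^{-n}\exp\!\Bigl(\beta\sum_{i=1}^n K(C_iZ_i)\Bigr)
\end{equation*}
is a \QEF with power $\beta$ for $\Sfnt{C}|\Sfnt{Z}$ and $\cC(\Sfnt{CZ})$.

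Next, on the event $\phi$ we have $\beta\sum_i K(C_iZ_i)\geq n\beta h$, so $\{\phi\}\subseteq\{F_\beta(\Sfnt{CZ})\geq q^{-\beta}\}$ with $q=(1+c_P(\beta))^{n/\beta}e^{-nh}$. Applying Thm.~\ref{thm:bnds_from_qef} with this $q$, the smoothing $\delta=\epsilon^2/2$ (so that $\sqrt{2\delta}=\epsilon$), and $p=q/\delta^{1/\beta}$ therefore gives $P^\epsilon_{\max}(\rho(\Sfnt{CZ}|\phi')|\Sfnt{Z}\Pfnt{E})\leq p/\kappa^{(1+\beta)/\beta}$, which after taking $-\log$ yields
\begin{equation*}
H^\epsilon_\infty(\Sfnt{C}|\Sfnt{Z}\Pfnt{E};\rho(\Sfnt{CZ}|\phi'))\;\geq\;nh\;-\;\frac{n}{\beta}\log\!\bigl(1+c_P(\beta)\bigr)\;+\;\frac{1}{\beta}\log\!\bigl(\epsilon^2\kappa^2/2\bigr)\;+\;\log\kappa.
\end{equation*}

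Now I would upper-bound $\log(1+c_P(\beta))\leq c_P(\beta)\leq \tfrac12 c\beta^2$ using the explicit bound of Thm.~\ref{thm:ee_to_qef}. After simplifying the expressions $\llceil\bar w_\gamma(z)\rrceil(\llceil\bar w_\gamma(z)\rrceil+2\coth(\llceil\bar w_\gamma(z)\rrceil))$ and $e^{\beta k_{\max}(z)}/(1-\beta)^2$ with coarse bounds using $\bar w_\gamma(z)\leq \log N+k_\infty+\log 2$ (uniform in $z$ and $\gamma\in\{0,\beta\}$), the constant $c$ collapses to something of order $(\log(1+2N)+\lceil k_\infty\rceil)^2$. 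The lower bound on $H^\epsilon_\infty$ then has the form $nh-\tfrac12 cn\beta+\tfrac{1}{\beta}\log(\epsilon^2\kappa^2/2)+\log\kappa$. Balancing the two $\beta$-dependent terms with $\beta^\star=\sqrt{-2\log(\epsilon^2\kappa^2/2)/(cn)}$ produces a penalty of $\sqrt{-2cn\log(\epsilon^2\kappa^2/2)}$, and the conversion from nats to bits in the final presentation contributes the $\sqrt{\log_2 e}$ factor appearing in the theorem.

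The main obstacle will be tuning the coarse bound on $c_P(\beta)$ from Thm.~\ref{thm:ee_to_qef} so that the resulting constant matches the stated $2\sqrt{\log_2 e}(\log(1+2N)+\lceil k_\infty\rceil)$; the explicit bound there is written in a form (with $\coth$ factors and an $e^{\beta k_{\max}}/(1-\beta)^2$ prefactor) that does not map directly onto the EAT constant, and some care is required to avoid losing constant factors. A secondary technical nuisance is the edge case where the optimizing $\beta^\star$ would exceed $1/2$, violating the hypothesis of Thm.~\ref{thm:ee_to_qef}; this regime corresponds to very small $n$ or very weak error bounds and can be absorbed either by a separate trivial estimate ($H^\epsilon_\infty\geq 0$ when the right-hand side is negative) or by monotonically truncating at $\beta=1/2$ and showing that the resulting bound is at least as large as the claimed one.
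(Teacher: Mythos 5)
You have misidentified what is being proved. Theorem~\ref{thm:eat} is not established in this paper at all: the citation to Ref.~\cite{dupuis:qc2016a} in its heading is load-bearing, and the text immediately after the statement explains that this is the EAT from that reference, translated into the present notation with a bits-to-nats conversion; it is stated here only as a baseline for comparison. There is no proof in the paper to compare against. The route you sketch---entropy estimator $\rightarrow$ \QEFP via Thm.~\ref{thm:ee_to_qef}, chaining via Thm.~\ref{thm:qefchainmain}, conversion to smooth min-entropy via Thm.~\ref{thm:bnds_from_qef}, then optimizing $\beta$---is precisely the paper's proof of the \emph{separate} Theorem~\ref{thm:eat_from_qef}, which is advertised as a sharper second-order bound, not as a re-derivation of the EAT.

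As a derivation of the EAT statement itself, your plan has genuine gaps that cannot be tuned away. The coefficient produced by the \QEF machinery is $\sqrt{2}\sqrt{\tilde c(\bar\beta)}$, which has a different functional form (a $\coth$ factor, a product structure, and a $\llceil\cdot\rrceil$ cutoff at $\iota_0$) from the EAT's $2\sqrt{\log_2 e}\,(\log(1+2N)+\lceil k_\infty\rceil)$; the paper's comparison shows only that the ratio approaches $\sqrt{2\log_2 e}\approx 1.699$ asymptotically in $N$ or $k_\infty$. For small parameters the handicapped \QEF coefficient can actually be \emph{larger} than the EAT's---e.g.\ at $N=2$, $k_\infty=0$ one has $\llceil\log 4\rrceil=\iota_0$, giving $\sqrt{2\,\iota_0(\iota_0+2\coth\iota_0)}\approx 4.13$, versus $2\sqrt{\log_2 e}\,\log 5\approx 3.87$---so the \QEF bound does not uniformly dominate and cannot serve as a universal proxy. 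Your treatment of the $\bar\beta>\beta_{\max}$ edge case is also not closed: the right-hand side of the EAT is negative only when $n<4\log_2(e)(\log(1+2N)+\lceil k_\infty\rceil)^2|\log(\epsilon^2\kappa^2/2)|/h^2$, which depends on $h$, whereas $\bar\beta>\beta_{\max}$ reads $n<2|\log(\epsilon^2\kappa^2/2)|/(\tilde c(0)\beta_{\max}^2)$, which does not; absorbing one regime into the other would require an a priori bound relating $h$ to the model parameters that you have not supplied. (There is also a small algebra slip: expanding $\log(\kappa^{\alpha/\beta})$ gives $\tfrac1\beta\log(\epsilon^2\kappa/2)+\log\kappa$, not $\tfrac1\beta\log(\epsilon^2\kappa^2/2)+\log\kappa$; the paper obtains a $\kappa^2$ only after applying $\kappa^\alpha\geq\kappa^2$ as a weakening.) The correct reading is that Thm.~\ref{thm:eat} is a citation, and the paper's contribution along your proposed lines is the distinct, hypothesis-bearing Thm.~\ref{thm:eat_from_qef}.
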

The EAT in Ref.~\cite{dupuis:qc2016a} is expressed in terms of bits.
We convert terms on both sides of the inequality to nits and change to
logarithms base $e$, which requires a factor of $\sqrt{\log_{2}(e)}$
for the error term. The version of the EAT given here omits the possibility
that the trial models may vary according to a predetermined schedule, which
can be taken into account in the min-tradeoff functions that
substitute for entropy estimators in Ref.~\cite{dupuis:qc2016a}.  For
the purpose of this comparison, we consider only the case where each
trial is constrained by the same model.

The EAT is formulated for affine min-tradeoff functions, not entropy
estimators. For the models under consideration, affine min-tradeoff
functions correspond to entropy estimators. With our notation, an
affine min-tradeoff function for $C|Z$ and $\cC(CZ)$ can be written as
a linear function $f:\mu(CZ)\mapsto f(\mu(CZ))\in \rls$ such that for
all $\rho(CZ)\in\cN(\cC(CZ))$, $f(\tr(\rho(CZ)))\leq
H_{1}(\rho(CZ)|Z\Pfnt{E})$. Since $f$ is linear, for probability
distributions $\mu(CZ)$ we can write
$f(\mu(CZ))=\sum_{cz}a_{cz}\mu(cz)+a_{0} =
\sum_{cz}(a_{cz}+a_{0})\mu(cz)$, so $f(\mu(CZ))=\Exp_{\mu(CZ)}K(CZ)$ where
$K(CZ):cz\mapsto a_{cz}+a_{0}$ is an entropy estimator.

A version of the EAT with a better coefficient of the $\sqrt{n}$ term
can be obtained by combining Thms.~\ref{thm:bnds_from_qef}
and~\ref{thm:ee_to_qef}. 

\begin{theorem}\label{thm:eat_from_qef}
  Let $0<\beta_{\max}<1/2$.
  Suppose that $\tilde c(\beta)$ is a continuous, non-decreasing function 
  of $\beta\in[0,\beta_{\max}]$ satisfying 
  $\tilde c(\beta)\geq c(\beta)\defeq \sup\left\{c(\beta,\nu(Z)):\nu(CZ)\in\tr(\vphantom{\big|}\cN(\cC(CZ)))\right\}$ with $c(\beta,\nu(Z))$ as defined in
  Thm.~\ref{thm:ee_to_qef}.  Define 
  \begin{equation}
    \bar\beta= \frac{\sqrt{2|\log(\epsilon^{2}\kappa^{2}/2)|}}
      {\sqrt{n \tilde c(0)}}.
  \end{equation}
  For $\bar\beta\leq\beta_{\max}$ and
  with the notation and assumptions of Thm.~\ref{thm:eat}
  \begin{equation}
    H_{\infty}^{\epsilon}(\Sfnt{C}|\Sfnt{Z}\Pfnt{E};\rho(\Sfnt{CZ}|\phi'))
    \geq nh  - \sqrt{2} \sqrt{\tilde c(\bar\beta)}\sqrt{|\log(\epsilon^{2}\kappa^{2})/2|}
    \sqrt{n}.
    \label{eq:thm:eat_from_qef:main}
  \end{equation}
\end{theorem}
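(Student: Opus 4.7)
The plan is to combine the entropy-estimator-to-\QEF construction of Thm.~\ref{thm:ee_to_qef} with \QEF chaining (Thm.~\ref{thm:qefchainmain}) and the conditional min-entropy bound of Thm.~\ref{thm:bnds_from_qef}, then choose the power $\beta$ to balance the two competing contributions to the error. Concretely, for any $\beta\in(0,\beta_{\max}]$, Thm.~\ref{thm:ee_to_qef} yields a \QEFP $F^{(\beta)}(CZ) = e^{\beta K(CZ)}/(1 + \tilde c(\beta)\beta^{2}/2)$ with power $\beta$ for $C|Z$ and the trial model $\cC(CZ)$, where the hypothesis $\tilde c(\beta)\ge c(\beta)\ge c_{P}(\beta)$ is used. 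Since $\beta\le 1/2<1$, $F^{(\beta)}$ is also a \QEF. The EAT model is the chaining with conditionally independent inputs of the $\pCP$-closed trial models (as discussed in Sect.~\ref{sec:models:examples}), so Thm.~\ref{thm:qefchainmain} implies that $F_{\mathrm{tot}}(\Sfnt{CZ}) = \prod_{i=1}^{n} F^{(\beta)}(C_{i}Z_{i})$ is a \QEF with power $\beta$ for $\Sfnt{C}|\Sfnt{Z}$ and $\cC(\Sfnt{CZ})$.

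On $\phi'\subseteq\phi$, $\sum_{i}K(C_{i}Z_{i})\ge nh$, hence
\begin{equation}
F_{\mathrm{tot}}(\Sfnt{CZ}) \ge \frac{e^{\beta nh}}{(1 + \tilde c(\beta)\beta^{2}/2)^{n}}.
\end{equation}
Setting $q$ so that $1/q^{\beta}$ equals this lower bound, applying Thm.~\ref{thm:bnds_from_qef} with $\delta=\epsilon^{2}/2$ (so the smoothness $\sqrt{2\delta}=\epsilon$), and taking the logarithm of the resulting conditional max-prob bound with $\alpha=1+\beta$ gives
\begin{equation}
H_{\infty}^{\epsilon}(\Sfnt{C}|\Sfnt{Z}\Pfnt{E};\rho(\Sfnt{CZ}|\phi'))
\ge nh - \frac{n}{\beta}\log\!\big(1 + \tilde c(\beta)\beta^{2}/2\big)
+ \frac{1}{\beta}\log(\epsilon^{2}/2) + \frac{1+\beta}{\beta}\log\kappa.
\end{equation}
Using $\log(1+x)\le x$ and $\kappa^{1+\beta}\ge\kappa^{2}$ (valid for $\kappa\in(0,1]$ and $\beta\le 1$), the defect from $nh$ is at most $n\tilde c(\beta)\beta/2 + L/\beta$, where $L=|\log(\epsilon^{2}\kappa^{2}/2)|$.

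To finish, choose $\beta^{\star}=\sqrt{2L/(n\tilde c(\bar\beta))}$. Since $\tilde c$ is non-decreasing, $\tilde c(\bar\beta)\ge\tilde c(0)$, and therefore $\beta^{\star}\le\bar\beta\le\beta_{\max}$, so the construction above is valid at $\beta=\beta^{\star}$ and $\tilde c(\beta^{\star})\le\tilde c(\bar\beta)$. Substituting $\beta^{\star}$ balances the two summands of the defect bound and yields the $\sqrt{n}$-type expression on the right-hand side of Eq.~\ref{eq:thm:eat_from_qef:main}. The substantive technical input is Thm.~\ref{thm:ee_to_qef} itself, which has already been proven; the only delicacy in the plan is in keeping the inequalities tight enough to recover the stated prefactor, in particular relating the exact smoothness $\sqrt{2\delta-\delta^{2}}$ to $\sqrt{2\delta}$ and bounding $\kappa^{\alpha}$ by $\kappa^{2}$. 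The overall scaling $\sqrt{n\,\tilde c(\bar\beta)\,L}$ then follows immediately from the classical minimization of $A\beta+B/\beta$ at $\beta=\sqrt{B/A}$.
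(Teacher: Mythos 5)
Your proof is correct and follows essentially the same route as the paper's: build the per-trial \QEFP $e^{\beta K(CZ)}/(1+\tilde c(\beta)\beta^2/2)$ from Thm.~\ref{thm:ee_to_qef}, chain it with Thm.~\ref{thm:qefchainmain}, apply Thm.~\ref{thm:bnds_from_qef} with $\delta=\epsilon^2/2$ to obtain the defect bound $n\tilde c(\beta)\beta/2 + L/\beta$, and then choose $\beta$ to balance the two terms. The one place you genuinely diverge is the endgame: the paper solves the fixed-point equation $\beta_0 = f(\beta_0)$ for $f(\beta)=\sqrt{2L/(n\tilde c(\beta))}$, which forces it to artificially extend $\tilde c$ past $\beta_{\max}$ and argue continuity and divergence to guarantee the fixed point exists in $[0,1)$; you instead take $\beta^{\star}=f(\bar\beta)$ directly, observe $\beta^{\star}\leq\bar\beta\leq\beta_{\max}$ by monotonicity of $\tilde c$, bound $\tilde c(\beta^{\star})\leq\tilde c(\bar\beta)$, and substitute. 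Your version is cleaner and does the same work with fewer moving parts. One small infelicity: you write ``$\tilde c(\beta)\ge c(\beta)\ge c_{P}(\beta)$'', but Thm.~\ref{thm:ee_to_qef} only gives $c_P(\beta)\leq(\beta^2/2)c(\beta)\leq(\beta^2/2)\tilde c(\beta)$; the inequality chain should carry the factor $\beta^2/2$. Your explicit formula for $F^{(\beta)}$ already has this factor, so the argument is unaffected, but the prose should match.
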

\begin{proof}
  Let $G(\Sfnt{CZ})$ be the \QEF with power $\beta$ for
  $\Sfnt{C}|\Sfnt{Z}$ and $\cC(\Sfnt{CZ})$ obtained from chaining the
  \QEFP given by $e^{\beta K(CZ)}/(1+c_{P}(\beta))$ in
  Thm.~\ref{thm:ee_to_qef}.  Then
  $G(\Sfnt{CZ})=\prod_{i=1}^{n}e^{\beta
    K(C_{i}Z_{i})}/(1+c_{P}(\beta))$ and
  \begin{equation}
    \log(G(\Sfnt{CZ}))/\beta
    = \sum_{i=1}^{n}K(C_{i}Z_{i}) - n\log(1+c_{P}(\beta))/\beta
    \geq \sum_{i=1}^{n}K(C_{i}Z_{i}) - n c_{P}(\beta)/\beta.\label{eq:thm:eat_from_qef:1}
  \end{equation}
  The targeted threshold is $\sum_{i}K(C_{i}Z_{i})\geq nh$. The threshold
  in Thm.~\ref{thm:bnds_from_qef} with $F(CZ)$ there replaced by $G(\Sfnt{CZ})$ here
  is equivalent to
  \begin{equation}
    \sum_{i=1}^{n}\log(G(C_{i}Z_{i}))/\beta \geq -\log(p)-\log(\delta)/\beta.
  \end{equation}
  We set $\delta = \epsilon^{2}/2$ to achieve the error bound and
  determine $p$ by $-\log(p) = nh-n
  c_{P}(\beta)/\beta+\log(\delta)/\beta$.  The event $\{\phi\}$
    here is defined as $\{\sum_{i=1}^{n}K(C_{i}Z_{i})\geq nh\}$, and
    from Eq.~\ref{eq:thm:eat_from_qef:1}, $\phi$ implies
  \begin{equation}
    \log(G(\Sfnt{CZ}))/\beta \geq nh - n c_{P}(\beta)/\beta =-\log(p)-\log(\delta)/\beta
        = \log(\frac{1}{p\delta^{1/\beta}}),
  \end{equation}
  which matches the expression for $\phi$  in Thm.~\ref{thm:bnds_from_qef}.  
  The event $\phi'$ thus satisfies the conditions of Thm.~\ref{thm:bnds_from_qef}.  
  The conditional min-entropy bound is 
  \begin{align}
    -\log(p/\kappa^{\alpha/\beta}) &=
     nh-n c_{P}(\beta)/\beta+\log(\delta)/\beta+\log(\kappa^{\alpha})/\beta\notag\\
    &= nh-n c_{P}(\beta)/\beta-|\log(\epsilon^{2}\kappa^{\alpha}/2)|/\beta\notag\\
     &\geq nh - n \beta c(\beta)/2 - |\log(\epsilon^{2}\kappa^{\alpha}/2)|/\beta\notag\\
     &\geq nh - n \beta \tilde c(\beta)/2 - |\log(\epsilon^{2}\kappa^{\alpha}/2)|/\beta\notag\\
     &\geq nh - n \beta \tilde c(\beta)/2 - |\log(\epsilon^{2}\kappa^{2}/2)|/\beta,
     \label{eq:thm:eat_from_qef:2}
  \end{align}
  provided that $\beta\leq\beta_{\max}$.

  For the next step we need to extend the validity of the inequality
  $\tilde c(\beta)\geq c(\beta)$ to all $\beta<1$.  For
  $\beta>\beta_{\max}$, we define $\tilde c(\beta)=\max(\tilde
  c(\beta_{\max}), \max_{\beta'\in[0,\beta]} c(\beta'))$, which is still continuous and
  non-decreasing.  The quantity $-\log(p/\kappa^{\alpha/\beta})$ is a
  lower bound on the left-hand side of
  Eq.~\ref{eq:thm:eat_from_qef:main}, so we could choose
  $\beta\leq\beta_{\max}$ to maximize the last expression in
  Eq.~\ref{eq:thm:eat_from_qef:2}. To simplify the problem and find suboptimal solutions, 
    we use the case where $\tilde c(\beta)$ is independent of $\beta$ as a template.
     Specifically, if we replace $\tilde c(\beta)$ be a constant $\tilde c$ and maximize
      the last expression in Eq.~\ref{eq:thm:eat_from_qef:2}, we obtain the identity
      $\beta=\sqrt{2|\log(\epsilon^2\kappa^2/2)|}/\sqrt{n \tilde c}$. Substituting
      back $\tilde c(\beta)$ for $\tilde c$, we obtain the identity $\beta=f(\beta)$, where $f(\beta)\defeq
  \sqrt{2|\log(\epsilon^2\kappa^2/2)|}/\sqrt{n \tilde c(\beta)}$, and we choose $\beta$ to satisfies this identity. 
    Since $c(\beta)$ diverges as
  $\beta\nearrow 1$, $c(0)>0$, and $\tilde c(\beta)$ is non-decreasing in $\beta$ and satisfies $\tilde c(\beta)\geq c(\beta)$, the
  function $f(\beta)$  is
  positive at $\beta=0$, non-increasing in $\beta$, and goes to $0$ as $\beta\nearrow 1$.
    Moreover, since $\tilde c$ is continuous, so
    is $f$.  Accordingly there is a solution $\beta_{0}<1$ to the
  fixed-point equation
  \begin{equation}
    \beta_{0}=f(\beta_0)=\frac{\sqrt{2|\log(\epsilon^2\kappa^2/2)|}}{\sqrt{n \tilde c(\beta_{0})}}.
  \end{equation}
  Since $\tilde c(\beta)$ is non-decreasing in $\beta$, we have
  $\beta_{0}\leq\bar\beta$. Thus from Eq.~\ref{eq:thm:eat_from_qef:2}
  we obtain
  \begin{align}
    -\log(p/\kappa^{\alpha/\beta})& \geq nh- n \beta_0 \tilde c(\beta_0)/2 - |\log(\epsilon^{2}\kappa^{2}/2)|/\beta_0 \notag \\
    &= nh - \sqrt{2} \sqrt{\tilde c(\beta_{0})}\sqrt{|\log(\epsilon^2\kappa^2/2)|}\sqrt{n}
    \notag\\
    &\geq nh - \sqrt{2} \sqrt{\tilde c(\bar\beta)}\sqrt{|\log(\epsilon^2\kappa^2/2)|}\sqrt{n}.
  \end{align}
  
  The condition on $\bar\beta$ in the statement of the theorem is
  required to stay within the domain of the unextended function $\tilde c$.
\end{proof}

For comparison to the EAT, we determine a bound $\tilde c(\beta)\geq
c(\beta)$ satisfying the conditions in Thm.~\ref{thm:eat_from_qef}.
For a handicapped but direct comparison, we make conservative
estimates in terms of parameters that occur in the EAT to obtain
moderate improvements over the EAT. The main advantage of
Thm.~\ref{thm:eat_from_qef} is that one can choose $\tilde c(\beta)$
less conservatively, taking advantage of the average over inputs
in the expression for $c(\beta, \nu(Z))$ in Thm~\ref{thm:ee_to_qef},
which enables effective use of estimators that are heavily weighted toward
rare inputs.  This enables the clean exponential-expansion results of
Sect.~\ref{sec:maxprobest:exex}.

Let $k_{\max}=\max_{cz}K(cz)$ and $\bar
w_{\gamma}=\max_{z}\bar w_{\gamma}(z) = (1-\gamma)\max_{cz}(\max(\log(N)-K(cz),K(cz)))+\log(2)
=(1-\gamma)\max_{cz}(\log(N)/2+|\log(N)/2-K(cz)|)+\log(2)$.  We may
assume that $k_{\max}\geq 0$ as the entropy estimator is otherwise
useless. We have
\begin{align}
  c(\beta) &= 
  \sum_{z}\tr(\tau(z))\frac{1}{3}
  \left(\vphantom{\frac{e^{k_{\max}(z)\beta}}{(1-\beta)^{2}}}
    2 \llceil\bar w_{0}(z)\rrceil\big(\llceil\bar w_{0}(z)\rrceil 
      + 2\coth(\llceil\bar w_{0}(z)\rrceil)\big)
  \right.\notag\\
  &\left.\hphantom{\sum_{z}\tau(z)\frac{1}{3}()} + 
    \frac{e^{k_{\max}(z)\beta}}{(1-\beta)^{2}}
    \llceil\bar w_{\beta}(z)\rrceil\big(\llceil\bar w_{\beta}(z)\rrceil 
      + 2\coth(\llceil\bar w_{\beta}(z)\rrceil)\big)
  \right)\notag\\
  &\leq\frac{1}{3}\left(
    2 \llceil\bar w_{0}\rrceil\big(\llceil\bar w_{0}\rrceil 
      + 2\coth(\llceil\bar w_{0}\rrceil)\big)
    + 
    \frac{e^{k_{\max}\beta}}{(1-\beta)^{2}}
    \llceil\bar w_{\beta}\rrceil\big(\llceil\bar w_{\beta}\rrceil 
      + 2\coth(\llceil\bar w_{\beta}\rrceil)\big)\right),
  \label{eq:eatcomp1}
\end{align} 
where in the last step we used the facts that the functions 
$f(x)=x^2$ and $g(x)=x\coth(x)$ are monotonically increasing in $x$ when $x\geq0$. 
For a more specific comparison based on the parameters of
Thm.~\ref{thm:eat}, namely $k_{\infty}$, $N$, $\epsilon$, $\kappa$ and
$n$, we use $\bar w'_{\gamma}= (1-\gamma)(\log(N)+k_{\infty})+\log(2)
\geq \bar w_{\gamma}$.  Define $\tilde c(\beta)$ as the last
expression of Eq.~\ref{eq:eatcomp1} with $k_{\max}$, $\bar w_{0}$ and
$\bar w_{\beta}$ replaced by $k_{\infty}$, $\bar w'_{0}$ and $\bar
w'_{\beta}$, respectively. Then $\tilde c(\beta)$ is non-decreasing in
$\beta$ for $\beta<1$ 
and we can apply Thm.~\ref{thm:eat_from_qef} with any $\beta_{\max}<1/2$.  \Pc{To check
  monotonicity, the case where $\bar w'_{\beta}\leq \iota_{0}$ follows
  by inspection. For $\bar w'_{\beta}\geq\iota_{0}$, cancel prefactors of
  $1-\beta$ as needed and take into account the fact that $x\mapsto
  \coth(x)$ is decreasing for $x>0$, so $\beta \mapsto
  \coth((1-\beta)c)$ is increasing in $\beta$ for $c>0$.}

We first consider the asymptotic behavior as $n\rightarrow \infty$.
For simplicity, assume that $N\geq 4$, so that
$\log(2N)\geq\iota_{0}$.  We compare the coefficients 
$u_{\mathrm{EAT}}$ and $u_{\mathrm{QEF}}$ of the
$\sqrt{|\log(\epsilon^{2}\kappa^{2})/2|}\sqrt{n}$ terms in the
conditional min-entropy bounds. In Thm.~\ref{thm:eat_from_qef}, $\bar
\beta=O(1/\sqrt{n})$, so for large $n$ and with $\tilde c(\beta)$
as defined in the previous paragraph, we can
set $\bar \beta=0$. This gives
\begin{equation}
  u_{\mathrm{QEF}} = -\sqrt{2}
  \sqrt{(\log(2N)+k_{\infty})(\log(2N)+k_{\infty}+2\coth(\log(2N)+k_{\infty}))},
  \label{eq:eat_from_qef:sqrtn_prefix}
\end{equation}
where $2\leq 2\coth(\log(2N)+k_{\infty})\leq 2\coth(\log(8)) \approx 2.0635$.
This may be compared to 
\begin{equation}
   u_{\mathrm{EAT}}=-2\sqrt{\log_{2}(e)}\left(\log(1+2N)+\lceil k_{\infty}\rceil\right).
   \label{eq:eat:sqrtn_prefix}
\end{equation}
The terms involving $N$ and $k_{\infty}$ are similar and approach each
other for large $N$ or $k_{\infty}$. The constant initial factors in
Eq.~\ref{eq:eat_from_qef:sqrtn_prefix} and
Eq.~\ref{eq:eat:sqrtn_prefix} are $\sqrt{2}$ and $2\sqrt{\log_{2}(e)}$
respectively, which implies that $u_{\mathrm{EAT}}/u_{\mathrm{QEF}}$
approaches $\sqrt{2\log_{2}(e)}\approx 1.699$.  Of course, for large
$n$, the relative difference in conditional min-entropy witnessed
disappears.

For applications such as low-latency generation of a block of random
bits, optimal randomness expansion, or randomness with exponentially
small error, the above asymptotic regime is not relevant.  For the
next comparison, we parameterize the error term with
$l_{\epsilon}=|\log(\epsilon^{2}\kappa^{2}/2)|$.  We consider the
problem of determining the smallest $n$ for which there is positive
conditional min-entropy given $l_{\epsilon}$ and the threshold rate
$h$ for the entropy estimators in Thms.~\ref{thm:eat}
and~\ref{thm:eat_from_qef}.  This problem is closely related to the
problem where given an error bound rate $r$, we wish to determine the
infimum of the threshold rates $h$ such that if $l_{\epsilon}=rn$, the
asymptotic conditional min-entropy is positive. For the EAT, given
$l_{\epsilon}$ and $h$, the smallest value of $n$ for which the
conditional min-entropy lower-bound is positive is at least
\begin{equation}
  n_{\min,\mathrm{EAT}}(h,l_{\epsilon}) \defeq
  4\log_{2}(e)(\log(1+2N)+k_{\infty})^{2}l_{\epsilon}/h^{2}.
\end{equation}
If we set $l_{\epsilon}=rn$, then the smallest $h$ for which the
entropy lower-bound is non-negative is at least
\begin{equation}
  h_{\min,\mathrm{EAT}}(r) \defeq \left(4\log_{2}(e)(\log(1+2N)+k_{\infty})^{2} r\right)^{1/2}.
\end{equation}
The two expressions are related by $n_{\min,\mathrm{EAT}}(h,l_{\epsilon})h^{2}/l_{\epsilon}
= h_{\min,\mathrm{EAT}}(r)^{2}/r$.  In general, suppose we are given a
function $h_{\min}:r\mapsto h_{\min}(r)$ such that for all
$h>h_{\min}(r)$, the asymptotic conditional min-entropy with error
bound $l_{\epsilon}=rn$ is positive.  Then we can estimate the minimum
$n$ required for positive entropy given $l_{\epsilon}$ and $h$ from
$r_{\max}(h)=\sup\{r:h_{\min}(r)\leq h\}$ by computing $n$ according
to $n= l_{\epsilon}/r_{\max}(h)$. The estimate may be off because an
asymptotic computation of $h_{\min}(r)$ neglects lower-order terms,
but in the case of the EAT, it gives a valid answer.  In view of these
considerations, we compare the EAT and QEF constructions by
determining which has larger $r_{\max}(h)$. For this, we determine
$h_{\min,\mathrm{QEF}}(r)$ according to Thm.~\ref{thm:eat_from_qef}:
\begin{equation}
  h_{\min,\mathrm{QEF}}(r) = \left(2\tilde c(\bar \beta) r\right)^{1/2},
  \label{eq:hminqef}
\end{equation}
where we now use the function $\tilde c$ introduced after
Eq.~\ref{eq:eatcomp1} and $\bar\beta$ is
given in terms of $r$ by
\begin{equation}
  \bar\beta = \frac{\sqrt{2r}}{\sqrt{\tilde c(0)}}.
\end{equation}
Eq.~\ref{eq:hminqef} requires $\bar\beta\leq \beta_{\max}$, where
$\beta_{\max}< 1/2$, so we restrict $r$ accordingly.
An analytic comparison of the two expressions for $r_{\max}$ derived
from $h_{\min}$ is not simple, but we can plot specific examples for a
visual comparison. For this we consider relevant values of $N=2,4,8$
and $k_{\infty}=1,\log(N)$ and plot $r_{\max}$ as a function of
$h\in(0,\log(N))$, see Fig.~\ref{fig:eat_to_qef}. \Pc{\texttt{Octave}
  code to generate the data for these plots is embedded in the \TeX
  file.}  The values of $r_{\max}$ for \QEFs are up to a factor of
$2$ larger than those for the EAT. Such improvements in rates can be
significant in resource-limited applications.

\ignore{
function hmin = hmineat(r,N,kinf);
   # Vectorized.
   hmin = sqrt((4*log2(e)*(log(1+2.*N)+kinf).^2) .* r);
end;

global iota0;
function [f,fp]  = iotafn(x);
  f = 2*coth(x)-x;
  fp = -2/(sinh(x)^2) - 1;
end;
iota0 = fsolve(@iotafn,2);
function hmin = hminqef(r,N,kinf);
  # Vectorized.
  global iota0;
  w0 = max(iota0,log(N)+kinf+log(2));
  tildec0 = w0.*(w0+2*coth(w0));
  barb = sqrt(2.*r./tildec0);
  wb = max(iota0,(1-barb).*(log(N)+kinf)+log(2));
  tildecb = wb.*(wb+2*coth(wb));
  tildecbbar = (2/3)*tildec0 + (1/3)*(tildecb.*exp(kinf.*barb)./((1-barb).^2));
  hmin = sqrt(2*(tildecbbar.*r));
end;

function mr = mreat(h,N,kinf); 
   # Can use matched column vectors.
   mr = zeros(size(h));
   for k=(1:size(mr)(1));
     mr(k) = fminbnd(@(x)((hmineat(x,N(k),kinf(k))-h(k))^2), 0,h(k));
   end;
end;
function mr = mrqef(h,N,kinf); 
   # Can use matched column vectors.
   mr = zeros(size(h));
   for k=(1:size(mr)(1));
     mr(k) = fminbnd(@(x)((hminqef(x,N(k),kinf(k))-h(k))^2), 0,h(k));
   end;
end;

vals = [[2,1];[4,1];[8,1];[2,log(2)];[4,log(4)];[8,log(8)]];
data = {};
for j=(1:size(vals)(1));
  N=vals(j,1); kinf=vals(j,2);
  hpts = (0.01:0.05:log(N))';
  vpts = ones(size(hpts));
  mreats = mreat(hpts,N*vpts,kinf*vpts);
  mrqefs = mrqef(hpts,N*vpts,kinf*vpts);
  data{j} = [hpts,mreats,mrqefs];
end;

save('rmax_data.oct','data');

load('rmax_data.oct');
graphics_toolkit('gnuplot'); 

hold off;
h=figure(1);
set(h,'defaultaxesfontname','Helvetica') 
set(h,'papersize',[3.5,3.5]);
set(h,'paperposition',[0,0,3.5,3.5]);
plot(data{1}(:,1),data{1}(:,2), 'color', 'k', 'linewidth', 2, 'linestyle', '--');
xlabel('Entropy rate (nits/trial)','fontsize',10,'interpreter','tex');
ylabel('Maximum error bound rate (nits/trial)','fontsize',10,'interpreter','tex');
set(gca(),'fontsize',10);
xlim([0,2]);
ylim([0,.07]);
hold on;
plot(data{1}(:,1),data{1}(:,3), 'color', 'k', 'linewidth', 2, 'linestyle', '-');

plot(data{2}(:,1),data{2}(:,2), 'color', [0.3, 0.6, 1], 'linewidth', 2, 'linestyle', '--');
plot(data{2}(:,1),data{2}(:,3), 'color', [0.3, 0.6, 1], 'linewidth', 2, 'linestyle', '-');

plot(data{3}(:,1),data{3}(:,2), 'color', [1, 0.3, 0.0], 'linewidth', 2, 'linestyle', '--');
plot(data{3}(:,1),data{3}(:,3), 'color', [1, 0.3, 0.0], 'linewidth', 2, 'linestyle', '-');

print -dpdf 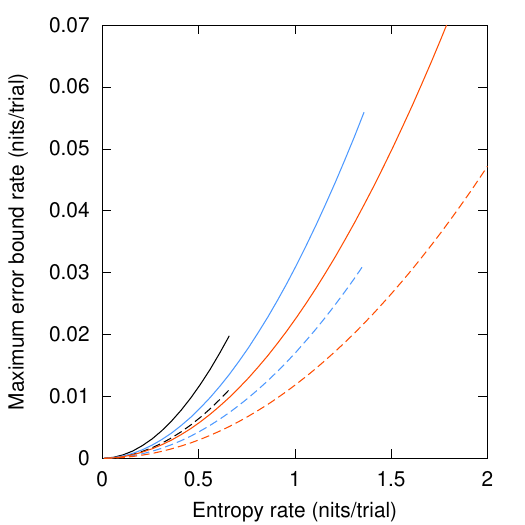

hold off;
h=figure(2);
set(h,'defaultaxesfontname','Helvetica') 
set(h,'papersize',[3.5,3.5]);
set(h,'paperposition',[0,0,3.5,3.5]);
plot(data{4}(:,1),data{4}(:,2), 'color', 'k', 'linewidth', 2, 'linestyle', '--');
xlabel('Entropy rate (nits/trial)','fontsize',10,'interpreter','tex');
set(gca(),'fontsize',10);
xlim([0,2]);
ylim([0,.07]);
hold on;
plot(data{4}(:,1),data{4}(:,3), 'color', 'k', 'linewidth', 2, 'linestyle', '-');

plot(data{5}(:,1),data{5}(:,2), 'color', [0.3, 0.6, 1], 'linewidth', 2, 'linestyle', '--');
plot(data{5}(:,1),data{5}(:,3), 'color', [0.3, 0.6, 1], 'linewidth', 2, 'linestyle', '-');

plot(data{6}(:,1),data{6}(:,2), 'color', [1, 0.3, 0.0], 'linewidth', 2, 'linestyle', '--');
plot(data{6}(:,1),data{6}(:,3), 'color', [1, 0.3, 0.0], 'linewidth', 2, 'linestyle', '-');

print -dpdf 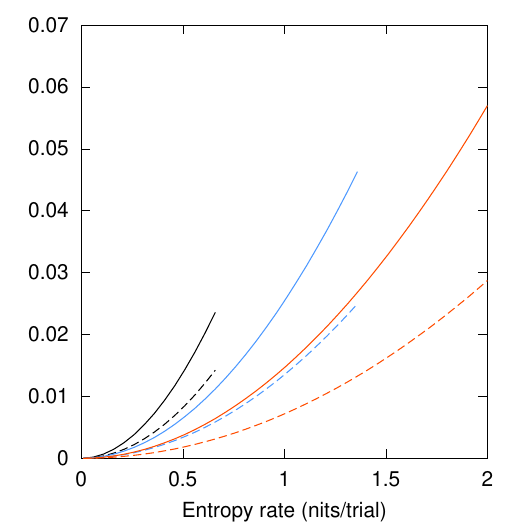

}

\begin{figure}
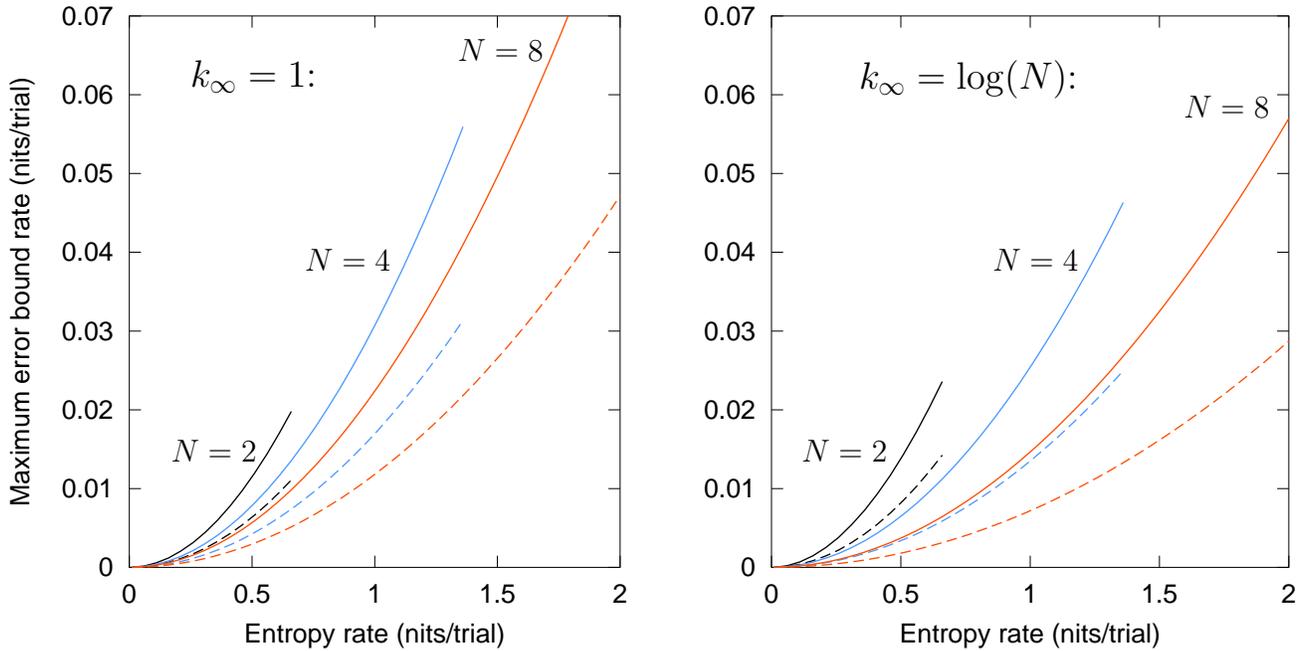

  \begin{center}
    \begin{picture}(7,3.5)(.25,0)
      \put(0,0){\includegraphics[]{eat_to_qef1}}
      \put(1,3.1){\makebox(0,0)[tl]{\large$k_{\infty}=1$:}}
      \put(0.9,1){\makebox(0,0)[bl]{$N=2$}}
      \put(1.6,2){\makebox(0,0)[bl]{$N=4$}}
      \put(2.4,3.1){\makebox(0,0)[bl]{$N=8$}}
      \put(3.5,0){\includegraphics[]{eat_to_qef2}}
      \put(4.5,3.1){\makebox(0,0)[tl]{\large$k_{\infty}=\log(N)$:}}
      \put(4.2,1){\makebox(0,0)[bl]{$N=2$}}
      \put(5.2,2){\makebox(0,0)[bl]{$N=4$}}
      \put(6.2,2.8){\makebox(0,0)[bl]{$N=8$}}
    \end{picture}
  \end{center}
  \caption{Maximum error bound rates versus entropy threshold rates.
    The left plot has $k_{\infty}=1$, the right has
    $k_{\infty}=\log(N)$, where $N=|\Rng(C)|$.  Three pairs of curves are shown in each
    plot, for $N=2,4,8$ as labeled.  The dashed lines show the EAT
    curves, and the solid lines show the \QEFP curves according to the
    handicapped calculations in the text. From the maximum error bound
    rate $r_{\max}$ one can estimate the minimum number $n_{\min}$ of
    trials required for positive smooth conditional min-entropy
    with an error bound of $\epsilon$ at
    probability of success $\kappa=1$. The estimate is given by
    $n_{\min}=|\log(\epsilon^{2}/2)|/r_{\max}$.  The higher \QEFP
    curves imply about half the number of trials are required.
    Further improvements are possible by taking full advantage of
    Thm.~\ref{thm:ee_to_qef} and its proof.
    Achievable entropy threshold rates are determined by the entropy
    estimator and the trial probability distribution.
    \label{fig:eat_to_qef}}
\end{figure}

The values of $h$ occurring in the comparison have not been
constrained.  But since they play the role of a threshold rate for an
entropy estimator, the probability that the entropy estimate exceeds
$nh$ must be sufficiently large. Values of $h$ for which this is not
the case in a given situation are not relevant. For a given trial
distribution, this normally requires that $h$ is below the expected
value of the entropy estimator.

To finish this section, we remove the handicap to demonstrate the
broad applicability and finite-data efficiency of \QEFs. Let
$p\in(0,1)$ and consider the trial model $\cC(C)$ with
$\Rng(C)=\{0,1\}$, no inputs, and no quantum correlations, defined by
$\cC(C)=\Cvx\left(\{\mu(C)\rho: \mu(1)\leq p\}\right)$.  The
extremal states of $\cN(\cC(C))$ are of the form $\knuth{C=0}\hat
\psi$ and $((1-p)\knuth{C=0}+p\knuth{C=1})\hat\psi$.  This model is
equivalent to a classical-side-information model and may be relevant
for semi-device-dependent randomness generation.  For the extremal
states, if the number of times that $C=1$ is observed in $n$ trials is
$k$, then the probability of the experiment's output is at most
$p^{k}$ from $\Pfnt{E}$'s point of view. Converting this information
to a conditional min-entropy estimate without using \QEFs or the EAT
requires taking into account the probability that $k$ exceeds some
threshold. We do not attempt this conversion, but it suggests that it
is natural to analyze this model directly rather than to use \QEFs or
invoke the EAT.  However, \QEFs and the EAT are applicable and,
according to the optimality theorem Thm.~\ref{thm:optimality}, achieve
the asymptotically optimal rate for randomness generation.

\QEFs for $\cC(C)$ can be written in the form $F(C):c\mapsto
(\knuth{c=0}+f \knuth{c=1})/m$ where $f$ and $m$ are constrained so
that the \QEF inequality with power $\beta$ is satisfied.  The \QEF
inequalities for the two extremal states are
\begin{align}
  \frac{1}{m} &\leq 1\notag\\
  \frac{(1-p)^{\alpha}+f p^{\alpha}}{m} &\leq 1.
\end{align}
Thus $m\geq 1$, and given $m$, we choose $f$ as large as possible, which
gives $f=(m-(1-p)^{\alpha})/p^{\alpha}$.  The log-prob rate of $F(C)$
at $\mu(C):c\mapsto (1-q)\knuth{c=0}+q\knuth{c=1}$ with $q\in[0,p]$ is
\begin{equation}
  \cL_{q,\beta}(m)=
  \left(\vphantom{\big|}
    q\log((m-(1-p)^{\alpha})/p^{\alpha})-\log(m)\right)/\beta.
\end{equation}
To maximize the log-prob rate with respect to $m$, compute
\begin{equation}
  \beta\frac{d}{dm}\cL_{q,\beta}(m) = 
  \frac{q}{m-(1-p)^{\alpha}}-\frac{1}{m}
  = \frac{-(1-q)m +(1-p)^{\alpha}}{(m-(1-p)^{\alpha})m}.
\end{equation}
Since $1-q\geq 1-p$ and $\alpha>1$, $\frac{d}{dm}\cL_{q}(m)\leq 0$ for
$m\geq 1$, so the maximum is achieved at $m=1$.  

To illustrate the asymptotic optimality of QEFs established in
Sect.~\ref{subsec:qefoptimality}, we compute the limit $\beta\searrow
0$ of the log-prob rate.  Rearranging terms and the estimates
$(1-p)^{\beta}=1+\beta\log(1-p)+O(\beta^{2})$ and $\log(1+\beta
d+O(\beta^{2}))=\beta d + O(\beta^{2})$ give
\begin{align}
  \cL_{q,\beta}(1) &= \frac{q}{\beta}
  \left(\log(1-(1-p)^{1+\beta})-(1+\beta)\log(p)\right)\notag\\
  &=\frac{q}{\beta}\left(\log(p+(1-p)(1-(1-p)^{\beta})) -(1+\beta)\log(p)\right)
  \notag\\
  &=\frac{q}{\beta}\left(\log(p+(1-p)(-\beta\log(1-p)+O(\beta^{2})))
    -(1+\beta)\log(p)\right)\notag\\
  &=\frac{q}{\beta}\left(\log(p)+\log(1+((1-p)/p)(-\beta\log(1-p)+O(\beta^{2})))
    -(1+\beta)\log(p)\right)\notag\\
  &=\frac{q}{\beta}\left(\log(1-\beta(1-p)\log(1-p)/p + O(\beta^{2}))
    -\beta\log(p)\right)\notag\\
  &=\frac{q}{\beta}\left(-\beta(1-p)\log(1-p)/p + O(\beta^{2})
    -\beta\log(p)\right)\notag\\
  &=-\frac{q}{p}\left((1-p)\log(1-p)+
    p\log(p)\right)+O(\beta),
\end{align}
so $\cL_{q,0_{+}}(1) = (q/p)H(p)$, where $H(p)$ is the Shannon entropy
of the distribution $(1-p)\knuth{C=0}+p\knuth{C=1}$ in nits.  The
log-prob rate $\cL_{q,0_{+}}(1)$ can be recognized as the minimum
conditional entropy for states whose output distribution is
$(1-q)\knuth{C=0}+q\knuth{C=1}$ given the model $\cC(C)$, see
Sect.~\ref{subsec:qefoptimality}.

For comparing to the EAT, we fix $q\in(0,p]$ and consider the
simplified \QEF $F_{\beta}(C):c\mapsto
\knuth{c=0}+p^{-\beta}\knuth{c=1}$.  Because for $m=1$,
$f=(1-(1-p)^{\alpha})/p^{\alpha}\geq (1-(1-p))/p^{\alpha}=p^{-\beta}$,
this \QEF satisfies the \QEF inequalities.  Given $q$, from the
previous paragraph, the optimal log-prob rate is $h_{s}=(q/p)H(p)$.  The
log-prob rate of $F_{\beta}(C)$ is $h_{F}=q|\log(p)|\leq h_{s}$. For small
$p$, the ratio of the two rates approaches $1$. We determine the
minimum $n$ such that positive conditional min-entropy can be
certified. Let $\epsilon$ be the error bound and $\kappa$ the minimum
probability of success that we need to protect against.  For the EAT
with  entropy goal $h$ per trial,
\begin{align}
  n_{\min,\mathrm{EAT}} &\geq
  4\log_{2}(e)(\log(1+2N)+k_{\infty})^{2}|\log(\epsilon^{2}\kappa^{2}/2)|\frac{1}{h^{2}}
  \notag\\
  &> 4\log_{2}(e)\log(5)^{2}|\log(\epsilon^{2}\kappa^{2}/2)|\frac{1}{h^{2}},
\end{align}
where $4\log_{2}(e)\log(5)^{2}\approx 14.95$ and we set $k_{\infty}=0$ for
a lower bound. For the \QEF $F_{\beta}(C)$ with power $\beta$,
we apply Thm.~\ref{thm:bnds_from_qef} to get 
\begin{equation}
  n_{\min,\mathrm{QEF}} = 
  \left(|\log(\epsilon^{2}\kappa/2)|/\beta + |\log(\kappa)|\right)
                                           \frac{1}{h},
\end{equation}
where $\beta$ can be chosen arbitrarily large.  (To obtain this minimum $n_{\min,\mathrm{QEF}}$, 
we set $\delta=\epsilon^{2}/2$ and $q=e^{-nh}$ in Thm.~\ref{thm:bnds_from_qef} such that the $\epsilon$-smooth 
conditional min-entropy certified according to this theorem is bounded below by 
$n h +\log(\epsilon^{2}/2)/\beta+\log(\kappa^{\alpha/\beta})$.) For the explicit \QEF $F_{\beta}(C)$,
one can choose any $h\leq h_{F}$, but to satisfy completeness with
reasonable probabilities of success given the anticipated probability $q$ of $C=1$ 
and the QEF power $\beta$, the number
of trials needs to be at least some multiple of $1/q$.  For both the
EAT and \QEFs, useful values of $h$ are bounded by the optimal
log-prob rate $(q/p)H(p)$.  It is therefore clear that
$n_{\min,\mathrm{EAT}}$ has quadratically worse dependence on $q$ for
small $q$, and always depends on $\epsilon$ with significantly larger
prefactors. In contrast, $n_{\min,\mathrm{QEF}}$'s dependence on
$\epsilon$ can be suppressed by choosing large $\beta$. The effect of
the term $|\log(\kappa)|$ depends on what is considered the minimum
safe probability of success and the protocol.

We find similar, practical advantages of \QEF for the $(2,2,2)$
Bell-test configuration in Sect.~\ref{subsec:examples}, with clear
advantages for all useful probability distributions. 
As in the example above, the advantages can be particularly large 
at  probability distributions with low conditional entropy.

\subsection{Entropy Estimator Optimization Problem}

According to the above results, we can construct \QEFs from entropy
estimators, but the construction does not lead to a simple objective
function for entropy estimators.  However, one can seek optimal
entropy estimates.  Consider candidates for entropy estimators
$K(CZ)$.  The entropy estimate at the anticipated probability distribution $\nu$ is
$\cE(K;\nu)=\sum_{cz}\nu(cz) K(cz)$. The entropy estimator condition
is
\begin{equation}
  \sum_{cz}\tr(\rho(cz))K(cz)\leq
    -\sum_{cz}\tr(\vphantom{\big|}\rho(cz)(\log(\rho(cz))-\log(\rho(z))))
\end{equation}
for all $\rho(CZ)\in\cN(\cC(CZ))$.  When the probability distribution of
$Z$ is fixed at $\mu$, the constraint becomes
\begin{equation}
  \sum_{cz}\mu(z)\tr(\rho(c|z))K(cz)\leq
    -\sum_{cz}\mu(z)\tr(\vphantom{\big|}\rho(c|z)(\log(\rho(c|z))-\log(\rho))).
\end{equation}
The problem is then to
\begin{alignat}{3}
  \textrm{Maximize:\ }&\sum_{cz}\nu(cz) K(cz)\notag\\
  \textrm{Variables:\ }& K(CZ)\notag\\
  \textrm{Subject to:\ }&\sum_{cz}\mu(z)\tr(\rho(c|z))K(cz)\notag\\
  &\;\;\;\leq
  -\sum_{cz}\mu(z)\tr(\vphantom{\big|}\rho(c|z)(\log(\rho(c|z))-\log(\rho))) \textrm{\
    for all $\rho(CZ)\in\cN(\cC)$} .
\end{alignat}

\subsection{Optimality of \QEFs}
\label{subsec:qefoptimality}

In this section, we show that given the model $\cC(CZ)$ and a
probability distribution $\mu(CZ)\in\tr(\cN(\cC(CZ)))$ consistent with
the model, entropy estimators witness the maximum possible entropy
rates at $\mu(CZ)$.  By Thm.~\ref{thm:ee_to_qef} with $\beta\searrow
0$, these entropy rates are asymptotically achieved by log-prob rates
of \QEFs.  Thus QEFs are asymptotically optimal.
Optimality for min-tradeoff functions is mentioned in
Ref.~\cite{arnon-friedman:qc2016a}. For classical side information a
proof is in Ref.~\cite{knill:qc2017a}. Here we generalize this proof
for quantum side information. 

The optimality statement concerns the
experimentally desirable situation where the observed statistics are
i.i.d. for each trial.  \QEFs and entropy
estimators are designed for $\mu(CZ)$ but must be valid regardless of
how the observed statistics arise in the model.  The proof of
optimality requires relating information theoretic upper bounds on
achievable rates to lower bounds achieved by entropy estimators.  Both
are for i.i.d. states of the form $\rho(\Sfnt{CZ}) =
\bigotimes_{i=1}^{n}\rho(C_{i}Z_{i})$ in the chained model determined
by the fixed trial model $\cC(CZ)$, where the $\rho(C_{i}Z_{i})$ are
obtained from a fixed $\rho(CZ)\in\cC(CZ)$ by substitution of CVs.  We
abbreviate the expression for such states $\rho(\Sfnt{CZ})$ as
$\rho(CZ)^{\otimes n}$.  Because \QEFs remain valid under $\CPTP$ maps and
convex closure, we may assume that $\cC(CZ)$ is closed in both
respects. This ensures that the states $\rho(CZ)$ and their tensor
products are rich enough to witness the upper bounds without appealing
to ``mixed strategies'' for $\Pfnt{E}$.

\begin{theorem}\label{thm:optimality}
  Let $\cC(CZ)$ be a CPTP-closed and convex closed model and $\mu(CZ)$
  a distribution in the relative interior of $\tr(\cN(\cC(CZ)))$.
  Define
  \begin{align}
    g_{\mathrm{QEF}}(\mu(CZ)) &=
    \sup\Big\{\Exp_{\mu(CZ)}(K(CZ)):\notag\\
    &\hphantom{=\sup\Big\{}
    \textrm{$K(CZ)$ is an entropy estimator for $C|Z$ and $\cC(CZ)$}\Big\},
  \end{align}
  and
  \begin{align}
    s_{\infty}(\mu(CZ)) &= 
    \inf\Big\{\lim_{n\rightarrow \infty}\frac{1}{n}H^{\epsilon}_{\infty}(\Sfnt{C}|\Sfnt{Z}\Pfnt{E};
    \rho(CZ)^{\otimes n}):\notag\\
    &\hphantom{=\inf\Big\{}\rho(CZ)\in\cN(\cC(CZ)),\epsilon >0,\tr(\rho(CZ))=\mu(CZ)\Big\}.
  \end{align}
  Then $g_{\mathrm{QEF}}(\mu(CZ))=s_{\infty}(\mu(CZ))$.
\end{theorem}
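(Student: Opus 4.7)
The strategy is to introduce the intermediate quantity
\begin{equation}
  h(\nu) = \inf\{H_{1}(\rho(CZ)|Z\Pfnt{E}) : \rho(CZ)\in\cN(\cC(CZ)),\; \tr(\rho(CZ)) = \nu\},
\end{equation}
and prove $s_{\infty}(\mu)=h(\mu)=g_{\mathrm{QEF}}(\mu)$. The first equality follows term-by-term from the quantum asymptotic equipartition property for smooth conditional min-entropy~\cite{tomamichel:qc2009a}: for each fixed $\rho(CZ)\in\cN(\cC(CZ))$ with $\tr(\rho(CZ))=\mu$, $\frac{1}{n}H^{\epsilon}_{\infty}(\Sfnt{C}|\Sfnt{Z}\Pfnt{E};\rho(CZ)^{\otimes n})$ converges to $H_{1}(\rho(CZ)|Z\Pfnt{E})$ as $n\to\infty$, and the infima over $\rho(CZ)$ match on both sides. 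The easy direction $g_{\mathrm{QEF}}(\mu)\leq h(\mu)$ is immediate from the entropy-estimator inequality evaluated at each feasible $\rho(CZ)$: $\Exp_{\mu(CZ)}(K(CZ))\leq H_{1}(\rho(CZ)|Z\Pfnt{E})$ for every entropy estimator $K$, so taking infima in $\rho(CZ)$ and suprema in $K$ gives the bound.

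The nontrivial direction is $g_{\mathrm{QEF}}(\mu)\geq h(\mu)$, which I will get from a supporting-hyperplane argument requiring convexity of $h$ on $\cD=\tr(\cN(\cC(CZ)))$. To establish convexity, given $\nu_{1},\nu_{2}\in\cD$ and $\lambda_{1}+\lambda_{2}=1$ with $\lambda_{i}\geq 0$, pick near-infimizing $\rho_{i}(CZ)\in\cN(\cC(CZ))$ with $\tr(\rho_{i}(CZ))=\nu_{i}$ and isometries $V_{i}$ into $\cH(\Pfnt{E}_{1})\oplus\cH(\Pfnt{E}_{2})$ with orthogonal ranges. The block-diagonal state $\rho(cz)=\sum_{i}\lambda_{i}V_{i}\rho_{i}(cz)V_{i}^{\dagger}$ has marginal $\lambda_{1}\nu_{1}+\lambda_{2}\nu_{2}$ and lies in $\cN(\cC(CZ))$ by $\CPTP$-closure (absorbing the isometric embeddings) and convex closure. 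Because the block index is recoverable from $\Pfnt{E}$ by a projective measurement, the $\log\lambda_{i}$ contributions to $\tr(\rho(cz)\log\rho(cz))$ and $\tr(\rho(cz)\log\rho(z))$ cancel, yielding the exact decomposition $H_{1}(\rho|Z\Pfnt{E})=\sum_{i}\lambda_{i}H_{1}(\rho_{i}|Z\Pfnt{E}_{i})$ and hence $h(\lambda_{1}\nu_{1}+\lambda_{2}\nu_{2})\leq\lambda_{1}h(\nu_{1})+\lambda_{2}h(\nu_{2})$.

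With $h$ convex on the convex set $\cD$ and $\mu$ in its relative interior, finite-dimensional convex analysis supplies an affine minorant $L(\nu)=\sum_{cz}a_{cz}\nu(cz)+a_{0}$ of $h$ on $\cD$ satisfying $L(\mu)=h(\mu)$. Absorbing $a_{0}$ into the coefficients via $\sum_{cz}\nu(cz)=1$, the function $K(cz)=a_{cz}+a_{0}$ is an entropy estimator achieving $\Exp_{\mu(CZ)}(K(CZ))=h(\mu)$, completing the proof. The most delicate step is the convexity argument: one must verify carefully that the orthogonal-embedding construction remains in $\cC(CZ)$ (using both closure hypotheses) and that the conditional-entropy decomposition is an exact identity, not merely an upper bound. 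The subgradient step then follows without fuss because convex functions on convex subsets of finite-dimensional space are continuous on their relative interior and admit supporting affine functions there.
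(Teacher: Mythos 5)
Your proof is correct and follows essentially the same route as the paper: reduce $s_{\infty}$ to the infimal conditional entropy via the quantum asymptotic equipartition property, establish convexity of that infimum on $\tr(\cN(\cC(CZ)))$ by a block-diagonal (direct-sum) construction using both closure hypotheses, and then extract the desired entropy estimator as a supporting affine minorant at the relative-interior point $\mu$. The only cosmetic difference is that you spell out the cancellation of the $\log\lambda_{i}$ terms justifying additivity of $H_{1}$ over direct sums, whereas the paper simply cites that additivity.
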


The definition of $s_{\infty}(\mu(CZ))$ assumes constant error
bound \(\epsilon\) in taking the limit with respect to \(n\).
But it follows from Ref.~\cite{tomamichel:qc2009a} that the asymptotic dependence on \(\epsilon\) 
is such that error bounds decreasing sub-exponentially in $n$ can be
used.  To make the connection to Ref.~\cite{tomamichel:qc2009a},
  because \(H^{\epsilon}_{\infty}\) is monotonically decreasing in \(\epsilon\),
  we can replace the infimum in the definition of \(s_{\infty}(\mu(CZ))\) with an infimum over states
  of a limit as follows:
  \begin{align}
    s_{\infty}(\mu(CZ)) &= 
    \inf\Big\{\lim_{\epsilon\searrow 0}\lim_{n\rightarrow \infty}\frac{1}{n}H^{\epsilon}_{\infty}(\Sfnt{C}|\Sfnt{Z}\Pfnt{E};
    \rho(CZ)^{\otimes n}):\notag\\
    &\hphantom{=\inf\Big\{}\rho(CZ)\in\cN(\cC(CZ)),\tr(\rho(CZ))=\mu(CZ)\Big\}.
  \end{align} 
We use the condition that $\mu(CZ)$ is in the relative interior
to avoid issues that can arise at the boundary in the absence of
compactness of $\cC(CZ)$. Since every $\mu(CZ)\in\tr(\cN(\cC(CZ)))$ is
arbitrarily close to distributions in the relative interior, the
restriction does not have practical significance.

\begin{proof}
  According to the quantum asymptotic equipartition property, Thm.~1 of
  Ref.~\cite{tomamichel:qc2009a}, 
  \begin{equation}
    s_{\infty}(\mu(CZ)) =
    \inf\{H_{1}(\rho(CZ)|Z\Pfnt{E}):\rho(CZ)\in\cN(\cC(CZ)),\tr(\rho(CZ))=\mu(CZ)\}.
  \end{equation}
  Therefore, by the definition of entropy estimators, $g_{\mathrm{QEF}}\leq s_{\infty}$.

  We claim that $s_{\infty}(\nu(CZ))$ is a convex function of
  $\nu(CZ)\in\tr(\cN(\cC(CZ)))$.  Suppose that
  $\nu(CZ)=\lambda\nu_{1}(CZ)+(1-\lambda)\nu_{2}(CZ)$ with
  $\nu_{i}(CZ)\in\cN(\cC(CZ))$ and $\lambda\in[0,1]$.  Let
  $\rho_{i}(CZ)\in\cN(\cC(CZ))$ satisfy
  $\tr(\rho_{i}(CZ))=\nu_{i}(CZ)$ and
  $H_{1}(\rho_{i}(CZ)|Z\Pfnt{E})\leq s_{\infty}(\nu_{i}(CZ))+\delta$ with
  $\delta>0$ arbitrarily small.  Define
  $\rho(CZ)=\lambda\rho_{1}(CZ)\oplus (1-\lambda)\rho_{2}(CZ)$.
  Because of the closure properties of $\cC(CZ)$, we have
  $\rho(CZ)\in\cN(\cC(CZ))$. By additivity of $H_{1}$ over direct
  sums, $H_{1}(\rho(CZ)|Z\Pfnt{E})=\lambda
  H_{1}(\rho_{1}(CZ)|Z\Pfnt{E})+ (1-\lambda)
  H_{1}(\rho_{2}(CZ)|Z\Pfnt{E})$. It follows that
  $s_{\infty}(\nu(CZ))\leq \lambda s_{\infty}(\nu_{1}(CZ))
  +(1-\lambda) s_{\infty}(\nu_{2}(CZ)) + \delta$. Letting
  $\delta\searrow 0$ proves the claim.

  For concepts and properties used next, see~Ref.~\cite{boyd:qc2004a},
  particularly Ch.~3 on the convex conjugate of convex functions.  For
  $\mu(CZ)$ in the relative interior of $\tr(\cN(\cC(CZ)))$, for every
  $\delta>0$ there exists an ``affine underestimator''
  $\nu'(CZ)\mapsto \sum_{cz}K(cz) \nu'(cz)$ of $s_{\infty}$
  satisfying $\sum_{cz}K(cz) \nu'(cz) \leq s_{\infty}(\nu'(CZ))$
  for all $\nu'(CZ)\in\tr(\cN(\cC(CZ)))$ and $\sum_{cz}K(cz)\mu(cz)
  \geq s_{\infty}(\mu(CZ))-\delta$. This observation follows from
  Exercise~3.28 of Ref.~\cite{boyd:qc2004a}. Since
  $\sum_{cz}K(cz)\mu(CZ) = \Exp_{\mu(CZ)}(K(CZ)) \leq
  g_{\mathrm{QEF}}(\mu(CZ))$ and we can let $\delta\searrow 0$, this
  completes the proof of the theorem.
\end{proof}

\section{\QEFs and Max-Prob Estimators}
\label{sec:qef_constructs}

\subsection{Max-Prob Estimators}

So far we have shown how to determine \QEFs from entropy estimators.
There are presently few explicitly computable entropy estimators.
Examples can be obtained from the affine min-tradeoff functions
given in Refs.~\cite{arnon-friedman:qc2018a,kessler:qc2017a}.  In
this section, we assume that the inputs $Z$ are coming from a separate
and well-characterized source.

\begin{definition}
  $B(CZ)$ is a \emph{max-prob estimator for $C|Z$ and $\cC(CZ)$} if
  for all $\nu(CZ)\in\tr(\cC(CZ))$,
  \begin{equation}
    \Exp_{\nu(CZ)}(B(CZ)) \geq \max_{cz}\nu(c|z)
    \textrm{\ for all
      $\nu(CZ)\in\tr(\cC(CZ))$}.\label{eq:thm:ee_uniformbnd:maxprobbnd}
  \end{equation}
\end{definition}

Like entropy estimators, max-prob estimators for a model are max-prob
estimators for any submodel. Because the definition of max-prob
estimators depends only on $\tr(\cC(CZ))$, they are also max-prob
estimators of maximal extensions obtained from $\tr(\cC(CZ))$ provided
that the input distribution is fixed.

\begin{lemma}\label{lem:mpe_extension}
  Let $\mu(Z)$ be a probability distribution of $Z$ and $\cC(C|Z)$ a
  model for $(C|Z)\Pfnt{E}$.  If $B(CZ)$ is a max-prob estimator for
  $C|Z$ and $\mu(Z)\ltimes\cC(C|Z)$, then $B(CZ)$ is a max-prob estimator for
  $C|Z$ and $\cM(\Cvx(\mu(Z)\ltimes\tr(\cC(C|Z)));\Pfnt{E})$, the maximal extension of
  the model $\Cvx(\mu(Z)\ltimes\tr(\cC(C|Z)))$ for $CZ\Pfnt{E}$.
\end{lemma}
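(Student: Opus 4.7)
The plan is to reduce the claim to a purely classical statement. Since the max-prob estimator condition depends on $\rho(CZ)$ only through its classical marginal $\tr(\rho(CZ))$, it suffices to establish two things: (i) for every $\rho(CZ)\in\cM(\Cvx(\mu(Z)\ltimes\tr(\cC(C|Z)));\Pfnt{E})$, the marginal $\tr(\rho(CZ))$ lies in $\Cvx(\mu(Z)\ltimes\tr(\cC(C|Z)))$; and (ii) the max-prob estimator inequality for $B(CZ)$ is preserved under convex combinations of elements of $\mu(Z)\ltimes\tr(\cC(C|Z))$.

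For (i), I would apply Def.~\ref{def:maxextension} directly. It requires $\tr(\sigma\rho(CZ))\in\Cvx(\mu(Z)\ltimes\tr(\cC(C|Z)))$ for every $\sigma\in\cS(\Pfnt{E})$, and the choice $\sigma=\one_{\Pfnt{E}}$ (legitimate under the paper's effective finite-dimensionality convention for $\Pfnt{E}$) yields exactly $\tr(\rho(CZ))$.

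For (ii), the critical observation is that the free-for-$Z$ chaining forces every element of $\mu(Z)\ltimes\tr(\cC(C|Z))$ to have $Z$-marginal equal to $\mu(Z)$. Consequently any convex combination $\nu=\sum_i\lambda_i\nu_i$ with $\nu_i\in\mu(Z)\ltimes\tr(\cC(C|Z))$ also has $Z$-marginal $\mu(Z)$, so the conditional $\nu(c|z)=\nu(cz)/\mu(z)$ is linear in $\nu(cz)$. Since $\nu\mapsto\Exp_{\nu}(B(CZ))$ is linear and $\nu\mapsto\max_{cz}\nu(c|z)$ is a pointwise maximum of linear functionals and thus convex, the chain
\[
 \Exp_{\nu}(B(CZ))=\sum_i\lambda_i\Exp_{\nu_i}(B(CZ))\geq\sum_i\lambda_i\max_{cz}\nu_i(c|z)\geq\max_{cz}\nu(c|z)
\]
verifies the estimator inequality at $\nu$, invoking the assumed inequality at each $\nu_i$. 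Composing (i) and (ii) delivers the lemma.

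The main subtlety is keeping the role of the fixed $Z$-marginal straight: if $\mu(Z)$ were not held fixed across the convex combination, then $\nu(c|z)=\nu(cz)/\nu(z)$ would cease to be linear in $\nu(cz)$ and the convex-closure step would fail. The free-for-$Z$ construction is precisely what makes the step work, so beyond bookkeeping I do not foresee additional obstacles.
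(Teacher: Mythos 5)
Your proof is correct and follows essentially the same route as the paper's: you take $\sigma=\one$ in the definition of the maximal extension to reduce to the classical model $\Cvx(\mu(Z)\ltimes\tr(\cC(C|Z)))$, then use linearity of the expectation, linearity of the conditional $\nu(c|z)$ (which relies on the shared $Z$-marginal being $\mu(Z)$), and convexity/subadditivity of $\max_{cz}$ to pass the estimator inequality through convex combinations. The paper writes the convex decomposition a bit more explicitly (and uses the phrase ``maxima are subadditive'' where you invoke convexity of the pointwise maximum), but the argument is the same.
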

  
  We remark that $\cM(\Cvx(\mu(Z)\ltimes\tr(\cC(C|Z)));\Pfnt{E})=\mu(Z)\ltimes\cM(\Cvx(\tr(\cC(C|Z)));\Pfnt{E})$
 because the input distribution $\mu(Z)$ is fixed and $\Cvx(\mu(Z)\ltimes\tr(\cC(C|Z)))=\mu(Z)\ltimes\Cvx(\tr(\cC(C|Z)))$. 

\begin{proof}
  Let $\nu(CZ)$ be a probability distribution in
  $\tr(\cM(\Cvx(\mu(Z)\ltimes\tr(\cC(C|Z)));\Pfnt{E}))$.  By
  definition of maximal extensions,
  $\nu(CZ)\in\Cvx(\mu(Z)\ltimes\tr(\cC(C|Z)))$.  We have
  $\Cvx(\mu(Z)\ltimes\tr(\cC(C|Z)))=\mu(Z)\ltimes\Cvx(\tr(\cC(C|Z)))=
  \mu(Z)\ltimes\tr(\Cvx(\cC(C|Z)))$. We can therefore express
  $\nu(CZ)$ as a convex combination
  $\nu(CZ)=\sum_{i}\lambda_{i}\mu(Z)\tr(\tau_{i}(C|Z))$ with
  $\tau_{i}(C|Z)\in\cC(C|Z)$. Write
  $\nu_{i}(CZ)=\mu(Z)\tr(\tau_{i}(C|Z))\in
  \tr(\mu(Z)\ltimes\cC(C|Z))$.  Then
  $\nu(C|Z)=\sum_{i}\lambda_{i}\nu_{i}(C|Z)$.  Since $B(CZ)$ is a
  max-prob estimator and maxima are subadditive,
  \begin{align}
    \Exp_{\nu(CZ)}(B(CZ)) &= \sum_{i}\lambda_{i}(\Exp_{\nu_{i}(CZ)}B(CZ))\notag\\
       &\geq \sum_{i}\lambda_{i}\max_{cz}\nu_{i}(c|z)\notag\\
       &=\sum_{i}\max_{cz}(\lambda_{i}\nu_{i}(c|z))\notag\\
       &\geq \max_{cz}\left(\sum_{i}\lambda_{i}\nu_{i}(c|z)\right)\notag\\
       &=\max_{cz}(\nu(c|z)),
  \end{align}
  as required for the lemma.
\end{proof}

We remark that when the input distribution $\mu(Z)$ is fixed and the conditional distributions 
$\tr(\cC(C|Z))$ according to the model $\cC(C|Z)$ for $(C|Z)\Pfnt{E}$ are characterized by 
semidefinite constraints, max-prob estimators can be constructed by semidefinite programming.
See Sect.~VI.~A of Ref.~\cite{knill:qc2017a} for details. 

\subsection{Entropy Estimators From One-Trial Max-Prob}

\begin{theorem}\label{thm:ee_uniformbnd}
  Let $B(CZ)$ be a max-prob estimator for $CZ$ and $\cC(CZ)$.  For
  $0<\bar b$,  define
  \begin{equation}
    K(CZ)=-\log(\bar b)+1-B(CZ)/\bar b.
    \label{eq:approxbeta}
  \end{equation}
  Then $K(CZ)$ is an entropy estimator for $CZ$ and
  $\cC'(CZ)=\cM(\Cvx(\tr(\cC(CZ)));\Pfnt{E})$.
\end{theorem}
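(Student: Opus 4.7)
The theorem is proved by combining a tangent-line bound for $-\log$, the classical max-prob property, and concavity of quantum conditional entropy applied to the structure of the maximal extension. Observe first that $K(cz) = -\log\bar b + 1 - B(cz)/\bar b$ is exactly the tangent line of the convex function $x \mapsto -\log x$ at $x = \bar b$, so $-\log(x) \geq -\log\bar b + 1 - x/\bar b$ for every $x > 0$. Averaging against the classical marginal $\tr(\rho)$ immediately yields
\begin{equation*}
  \sum_{cz} K(cz)\tr(\rho(cz)) = -\log\bar b + 1 - \Exp_{\tr(\rho)}(B)/\bar b \leq -\log \Exp_{\tr(\rho)}(B),
\end{equation*}
so it suffices to show $-\log\Exp_{\tr(\rho)}(B) \leq H_1(\rho(CZ)|Z\Pfnt{E})$ for every $\rho(CZ) \in \cC'(CZ)$.

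\textbf{Classical case.} For $\nu \in \tr(\cC(CZ))$, the max-prob property gives $\Exp_\nu(B) \geq \max_{cz}\nu(c|z)$. Combining $H(\nu(C|z)) \geq H_\infty(\nu(C|z)) = -\log\max_c\nu(c|z)$ with the fact that an average over $z$ is at least its minimum yields $H_1^{\mathrm{cl}}(\nu|Z) \geq -\log\max_{cz}\nu(c|z) \geq -\log\Exp_\nu(B)$. Writing $\nu \in \Cvx(\tr(\cC))$ as $\sum_i \lambda_i \nu_i$ with $\nu_i \in \tr(\cC)$, concavity of classical conditional entropy together with convexity of $-\log$ (Jensen's inequality applied to $\sum_i \lambda_i \Exp_{\nu_i}(B)$) then propagates the bound to all of $\Cvx(\tr(\cC))$.

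\textbf{Quantum extension.} To upgrade to $\rho \in \cC'$ with nontrivial $\Pfnt{E}$, use concavity of $\rho \mapsto H_1(\rho(CZ)|Z\Pfnt{E})$. The maximal-extension structure should yield a decomposition $\rho(CZ) = \sum_j \mu_j \, \nu_j(CZ)\otimes \tau_j$ with $\nu_j \in \tr(\cC(CZ))$ probability distributions and $\tau_j \in \cS_1(\Pfnt{E})$; for each product summand, $H_1(\nu_j\otimes \tau_j|Z\Pfnt{E}) = H_1^{\mathrm{cl}}(\nu_j|Z)$. Linearity of $\Exp_{\tr(\cdot)}(K)$ and concavity of $H_1$ then give
\begin{equation*}
  \Exp_{\tr(\rho)}(K) = \sum_j \mu_j \Exp_{\nu_j}(K) \leq \sum_j \mu_j H_1^{\mathrm{cl}}(\nu_j|Z) \leq H_1(\rho(CZ)|Z\Pfnt{E}),
\end{equation*}
completing the proof.

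\textbf{Main obstacle.} The crux is the decomposition claim in the last paragraph: every $\rho \in \cC' = \cM(\Cvx(\tr(\cC));\Pfnt{E})$ is a convex combination of classical-quantum product states $\nu_j \otimes \tau_j$ with $\nu_j \in \tr(\cC)$. I would establish this via a separation argument, noting that a failure to decompose would produce a hyperplane separating $\rho$ from the convex hull of such products, and dualizing this hyperplane to a $\sigma \in \cS(\Pfnt{E})$ would place $\tr(\sigma\rho(CZ))$ outside the positive cone of $\Cvx(\tr(\cC))$, contradicting $\rho \in \cM(\Cvx(\tr(\cC));\Pfnt{E})$. If this abstract route is cumbersome, an alternative is to eigen-decompose the reduced state on $\Pfnt{E}$ and apply the classical bound outcome-by-outcome; the subtlety there is that measurement on $\Pfnt{E}$ changes the quantum conditional entropy in the wrong direction for a direct data-processing argument, so the separation/decomposition route appears cleanest.
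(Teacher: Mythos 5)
The tangent-line argument reducing the claim to the inequality $-\log\Exp_{\tr(\rho)}(B)\leq H_1(\rho(CZ)|\Pfnt{E})$ (note: $\Pfnt{E}$, not $Z\Pfnt{E}$, since $Z$ is part of the output here) matches the opening of the paper's proof, and the classical case is handled correctly. However, the ``Main obstacle'' step is a genuine gap, and the separation-argument sketch you offer to close it does not work.

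You claim every $\rho(CZ)\in\cM(\Cvx(\tr(\cC(CZ)));\Pfnt{E})$ decomposes as a convex combination $\sum_j\mu_j\,\nu_j(CZ)\otimes\tau_j$ with $\nu_j\in\Cvx(\tr(\cC))$. This is the assertion that the maximal (block-positive) tensor product of the classical cone $\Cvx(\tr(\cC))$ with $\cS(\Pfnt{E})$ coincides with the minimal (separable) one, which fails for general cones unless one of the factors is simplicial. Your separation sketch slides past the same point: a hyperplane separating $\rho$ from the cone of products is a ``witness'' $W=(W_{cz})_{cz}$ satisfying $\sum_{cz}\nu(cz)\tr(W_{cz}\tau)\geq 0$ for all $\nu\in\Cvx(\tr(\cC))$ and $\tau\geq 0$; such a $W$ need not factor as $a\otimes\sigma$ with $\sigma\in\cS(\Pfnt{E})$, so you cannot ``dualize to a $\sigma$'' and contradict membership in the maximal extension. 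Without the decomposition, the concavity-of-$H_1$ step has nothing to apply to.

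The paper sidesteps decomposition entirely. First, Lem.~\ref{lem:mpe_extension} shows $B(CZ)$ remains a max-prob estimator for $\cC'(CZ)$. Then one proves $\Exp_{\nu}(B)\geq P_{\max}(\tau(CZ)|\Pfnt{E})$ for any $\tau(CZ)\in\cN(\cC'(CZ))$ with $\nu=\tr(\tau(CZ))$ by invoking $P_{\max}=G_{\max}$ (Lem.~\ref{lem:maxprob_guess}): take a guessing POVM $(\Pi_{cz})$, note each post-measurement classical distribution $\nu_{c'z'}(CZ)=\tr(\Pi_{c'z'}\tau(CZ))/\tr(\Pi_{c'z'}\tau)$ lies in $\tr(\cC'(CZ))$ by $\pCP$-closure of the maximal extension (Lem.~\ref{lem:maxextend_is_CPclosed}), apply the max-prob estimator bound to each, and sum to get $\Exp_\nu(B)\geq\sum_{c'z'}\tr(\Pi_{c'z'}\tau(c'z'))$, the guessing probability. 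Combining with $H_1\geq H_\infty=-\log P_{\max}$ (Lem.~\ref{lem:minent_ent}) and your tangent-line step completes the proof. The moral is that closure of the model under CP maps lets you ``pull'' the optimal measurement on $\Pfnt{E}$ into the classical model, which is both where the hard content lives and why the guessing-probability characterization is the right tool.
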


For this theorem, $Z$ is considered as part of the output, not input. 
We can construct \QEFPs for $CZ$ and $\cC'(CZ)$ from the entropy 
estimators obtained above according to Thm.~\ref{thm:ee_to_qef} with 
$C$ there replaced by $CZ$ here and $Z$ there set to be trivial. If it
is necessary to condition on $Z$ later, this can be done according to
Protocol~\ref{prot:condimplicit}.  Thm.~\ref{thm:ee_conduniformbnd} below
considers the case where $Z$ is input to be conditioned on explicitly
with a known probability distribution. An advantage of
Thm.~\ref{thm:ee_uniformbnd} is that it is applicable even when the
distribution of $Z$ is not predetermined at each trial.  As discussed
in Ref.~\cite{knill:qc2017a}, it is a good idea to choose $\bar
b=\Exp_{\nu(CZ)}(B(CZ))$ where $\nu(CZ)$ is the anticipated trial probability
distribution of $CZ$.

\begin{proof}
  According to Lem.~\ref{lem:maxextend_is_CPclosed}, $\cC'(CZ)$ is
  $\pCP$-closed. Lem.~\ref{lem:mpe_extension} with $C$ there replaced
  by $CZ$ here and $Z$ there set to the trivial CV here implies that
  $B(CZ)$ is a maximum probability estimator for $CZ$ and $\cC'(CZ)$.
  Let $\tau(CZ)\in\cN(\cC'(CZ))$ and $\nu(CZ)=\tr(\tau(CZ))$.  Below,
  we show that $\Exp_{\nu(CZ)}(B(CZ))\geq
  P_{\max}(\tau(CZ)|\Pfnt{E})$. Given this inequality and since $-\log
  P_{\max}(\tau(CZ)|\Pfnt{E})\leq H_{1}(\tau(CZ)|\Pfnt{E})$
  (Lem.~\ref{lem:minent_ent} with the above replacements),
  for the theorem it suffices to show that $\Exp_{\nu(CZ)}(K(CZ))\leq
  -\log(\Exp_{\nu(CZ)}(B(CZ)))$.  From
  \begin{equation}
    -\log(x)=-\log(\bar b)-\log(x/\bar b)\geq -\log(\bar b) - (x/\bar b-1)
    = -\log(\bar b) + 1-x/\bar b,
  \end{equation}
  we get
  \begin{align}
    -\log(\Exp_{\nu(CZ)}(B(CZ))) 
    &\geq -\log(\bar b)+1-\Exp_{\nu(CZ)}(B(CZ))/\bar b\notag\\
    &= \Exp_{\nu(CZ)}(-\log(\bar b)+1-B(CZ)/\bar b)\notag\\
    &= \Exp_{\nu(CZ)}(K(CZ)).
  \end{align}

  To prove that $\Exp_{\nu(CZ)}(B(CZ))\geq
  P_{\max}(\tau(CZ)|\Pfnt{E})$, we apply the relationship between
  maximum guessing probability and $P_{\max}$,
  Lem.~\ref{lem:maxprob_guess}.  Let $(\Pi_{cz})_{cz}$ be a POVM with
  guessing probability $p= \sum_{cz}\tr(\Pi_{cz}\tau(cz))$.  Let
  $\nu_{c'z'}(CZ)=\tr(\Pi_{c'z'}\tau(CZ))/\tr(\Pi_{c'z'}\tau)$.  By
  $\pCP$-closure, $\nu_{c'z'}(CZ)\in\tr(\cC'(CZ))$.  For each $c'z'$
  we have $\Exp_{\nu_{c'z'}(CZ)}(B(CZ))\geq
  \max_{cz}(\nu_{c'z'}(cz))\geq \nu_{c'z'}(c'z')$.  Consequently
  \begin{align}
    p&=\sum_{c'z'}\nu_{c'z'}(c'z')\tr(\Pi_{c'z'}\tau)\notag\\
     &\leq\sum_{c'z'}\Exp_{\nu_{c'z'}(CZ)}(B(CZ))\tr(\Pi_{c'z'}\tau)\notag\\
     &=\sum_{c'z'}\sum_{cz}B(cz)\nu_{c'z'}(cz)\tr(\Pi_{c'z'}\tau)\notag\\
     &=\sum_{c'z'}\sum_{cz}B(cz)\tr(\Pi_{c'z'}\tau(cz))\notag\\
     &=\sum_{cz}B(cz)\tr(\sum_{c'z'}\Pi_{c'z'}\tau(cz))\notag\\
     &=\sum_{cz}B(cz)\tr(\tau(cz))\notag\\
     &=\Exp_{\nu(CZ)}(B(CZ)).
  \end{align}
  The claim follows because the POVM can be chosen so that $p$ is
  arbitrarily close to $P_{\max}(\tau(CZ)|\Pfnt{E})$.
\end{proof}

\begin{theorem}\label{thm:ee_conduniformbnd}
  Let $\cC(C|Z)$ be a \(\pCP\)-closed model for $(C|Z)\Pfnt{E}$, $\mu$ a
  probability distribution of $Z$ and $B(CZ)$ a max-prob estimator for
  $C|Z$ and $\mu(Z)\ltimes\cC(C|Z)$.  
  For $0<\bar b$, 
    define
  \begin{equation}
    K(CZ)=-\log(\bar b)+1-B(CZ)/\bar b.
    \label{eq:approxbetacond}
  \end{equation}
  Then $K(CZ)$ is an entropy estimator for $C|Z$ and 
  $\mu(Z)\ltimes\cC(C|Z)$.
\end{theorem}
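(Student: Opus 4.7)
The plan is to adapt the proof of Thm.~\ref{thm:ee_uniformbnd} to the explicit-input setting. The key structural observation is that for $\tau(CZ) \in \mu(Z)\ltimes\cC'(C|Z)$, the $(C|Z)$-model condition forces $\tau(c|z)$ to be independent of $z$; writing $\tau(c)$ for this common value, we have $\tau(cz)=\mu(z)\tau(c)$ and $\tau(z)=\mu(z)\tau$ with $\tau=\sum_c\tau(c)$. Substituting into the definition of $H_1$, the $\log\mu(z)$ contributions in $\log\tau(cz)-\log\tau(z)$ cancel (since $\tau(c)\ll\tau$), yielding $H_1(\tau(CZ)|Z\Pfnt{E}) = H_1(\tau(C)|\Pfnt{E})$. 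Combined with Lem.~\ref{lem:minent_ent}, the target reduces to $\Exp_{\nu(CZ)}(K(CZ)) \leq -\log P_{\max}(\tau(C)|\Pfnt{E})$, where $\nu(cz) = \mu(z)\tr(\tau(c))$. As in the proof of Thm.~\ref{thm:ee_uniformbnd}, the concavity inequality $-\log(x) \geq -\log(\bar b) + 1 - x/\bar b$ gives $\Exp_\nu(K(CZ)) \leq -\log\Exp_\nu(B(CZ))$, so everything reduces to establishing $\Exp_\nu(B(CZ)) \geq P_{\max}(\tau(C)|\Pfnt{E})$.

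First I would lift $B$'s max-prob estimator property from $\mu(Z)\ltimes\cC(C|Z)$ to $\mu(Z)\ltimes\cC'(C|Z)$. A $\CP$ map $\cE$ on $\Pfnt{E}$ commutes with the scalar factor $\mu(Z)$ and preserves the $z$-independence of the conditional values, so $\cE(\mu(Z)\ltimes\cC(C|Z)) = \mu(Z)\ltimes\cE(\cC(C|Z))$, and hence the $\CP$-closure of $\mu(Z)\ltimes\cC(C|Z)$ is exactly $\mu(Z)\ltimes\cC'(C|Z)$. The maximal extension $\cM(\Cvx(\mu(Z)\ltimes\tr(\cC(C|Z)));\Pfnt{E})$ is $\CP$-closed by Lem.~\ref{lem:maxextend_is_CPclosed} and contains $\mu(Z)\ltimes\cC(C|Z)$, hence it contains $\mu(Z)\ltimes\cC'(C|Z)$. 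Lem.~\ref{lem:mpe_extension} then delivers the extended max-prob estimator inequality on $\mu(Z)\ltimes\cC'(C|Z)$.

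The main technical step, and the place where care is required, is the core inequality $\Exp_\nu(B(CZ)) \geq P_{\max}(\tau(C)|\Pfnt{E})$, which I plan to prove by mirroring the POVM construction in the proof of Thm.~\ref{thm:ee_uniformbnd}. Fix a POVM $(\Pi_{c'})_{c'}$ on $\Pfnt{E}$ with guessing probability $p = \sum_{c'}\tr(\Pi_{c'}\tau(c'))$ arbitrarily close to $P_{\max}(\tau(C)|\Pfnt{E})$, and define the normalized state-valued distribution $\sigma_{c'}(C|Z) = \Pi_{c'}^{1/2}\tau(C|Z)\Pi_{c'}^{1/2}/\tr(\Pi_{c'}\tau)$. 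Each $\sigma_{c'}(C|Z)$ is obtained from $\tau(C|Z)$ by a $\pCP$ map on $\Pfnt{E}$ (rescaled by a positive constant) and inherits $z$-independence, so $\mu(Z)\sigma_{c'}(C|Z) \in \mu(Z)\ltimes\cC'(C|Z)$. Applying the extended max-prob estimator inequality to $\mu(Z)\sigma_{c'}(C|Z)$ and using $\max_c \tr(\sigma_{c'}(c)) \geq \tr(\sigma_{c'}(c')) = \tr(\Pi_{c'}\tau(c'))/\tr(\Pi_{c'}\tau)$, I would multiply through by $\tr(\Pi_{c'}\tau)$ and sum over $c'$: POVM completeness $\sum_{c'}\Pi_{c'} = \one$ collapses the left side to $\sum_{cz}\mu(z)B(cz)\tr(\tau(c)) = \Exp_\nu(B(CZ))$, while the right side equals $p$. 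Letting the POVM approach optimality yields the desired bound and closes the argument.
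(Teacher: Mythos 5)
Your proof is correct and matches the paper's strategy: extend $B$'s max-prob estimator property to $\mu(Z)\ltimes\cC'(C|Z)$ via Lem.~\ref{lem:mpe_extension}, reduce the entropy-estimator inequality to $\Exp_\nu(B)\geq P_{\max}$ via the concavity bound $-\log x\geq -\log\bar b + 1 - x/\bar b$ together with Lem.~\ref{lem:minent_ent}, and establish the $P_{\max}$ bound by a POVM argument using Lem.~\ref{lem:maxprob_guess}. The only (valid) streamlining is your use of the factorization $\tau(CZ)=\mu(Z)\tau(C)$ to replace the paper's $z$-indexed POVMs $(\Pi_{c|z})_c$ and $P_{\max}(\tau(CZ)|Z\Pfnt{E})$ with a single $z$-independent POVM $(\Pi_{c'})_{c'}$ and $P_{\max}(\tau(C)|\Pfnt{E})$, which coincide for such product states.
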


\begin{proof}
  Let $\tau(C|Z)\in\cC(C|Z)$ with $\tr(\tau(|z))=1$. Define
  $\nu(CZ)=\tr(\mu(Z)\tau(C|Z))$ and $\tau(CZ)=\mu(Z)\tau(C|Z)$.
  We have $\tau=\tau(|z)$, independent of $z$.   Below, we show that $\Exp_{\nu(CZ)}(B(CZ))\geq
  P_{\max}(\tau(CZ)|Z\Pfnt{E})$.  Given this inequality and since $-\log
  P_{\max}(\tau(CZ)|Z\Pfnt{E})\leq H_{1}(\tau(CZ)|Z\Pfnt{E})$
  (Lem.~\ref{lem:minent_ent}), for the theorem it suffices to show that
  $\Exp_{\nu(CZ)}(K(CZ))\leq -\log(\Exp_{\nu(CZ)}(B(CZ)))$, which follows from
  the same calculation as that given in the proof of
  Thm.~\ref{thm:ee_uniformbnd}.

  To prove that $\Exp_{\nu(CZ)}(B(CZ))\geq
  P_{\max}(\mu(Z)\tau(CZ)|Z\Pfnt{E})$, we again apply the relationship
  between maximum guessing probability and $P_{\max}$,
  Lem.~\ref{lem:maxprob_guess}.  For each $z$, let $(\Pi_{c|z})_{c}$ be
  a POVM with guessing probability
  $p_{z}=\sum_{c}\tr(\Pi_{c|z}\tau(c|z))$ and overall guessing
  probability $p=\sum_{z}\mu(z)p_{z}$.  Let
  $\nu_{c'|z'}(C|Z)=\tr(\Pi_{c'|z'}\tau(C|Z))/\tr(\Pi_{c'|z'}\tau)$.
  By $\pCP$-closure, $\nu_{c'|z'}\in\tr(\cC(C|Z))$.  For each $c'z'$,
  \begin{equation}
    \Exp_{\mu(Z)\nu_{c'|z'}(C|Z)}(B(CZ))\geq \max_{cz}(\nu_{c'|z'}(c|z))\geq
    \nu_{c'|z'}(c'|z').
  \end{equation}
  Consequently
  \begin{align}
    p=\sum_{z'}\mu(z')p_{z'}&
    =\sum_{c'z'}\mu(z')\nu_{c'|z'}(c'|z')\tr(\Pi_{c'|z'}\tau)\notag\\
    &\leq\sum_{c'z'}\mu(z')
    \Exp_{\mu(Z)\nu_{c'|z'}(C|Z)}(B(CZ))\tr(\Pi_{c'|z'}\tau)\notag\\
    &=\sum_{c'z'}\mu(z')
    \sum_{cz}B(cz)\mu(z)\nu_{c'|z'}(c|z)\tr(\Pi_{c'|z'}\tau)\notag\\
    &=\sum_{c'z'}\mu(z')
    \sum_{cz}B(cz)\mu(z)\tr(\Pi_{c'|z'}\tau(c|z))\notag\\
    &=\sum_{cz}B(cz)\mu(z)\tr(\sum_{z'}\mu(z')\sum_{c'}\Pi_{c'|z'}\tau(c|z))\notag\\
    &=\sum_{cz}B(cz)\mu(z)\tr(\sum_{z'}\mu(z')\tau(c|z))\notag\\
    &=\sum_{cz}B(cz)\mu(z)\tr(\tau(c|z))\notag\\
    &=\Exp_{\nu(CZ)}(B(cz)).
  \end{align}
  The claim follows because the POVMs can be chosen so
  that $p$ is arbitrarily close to
  $P_{\max}(\tau(CZ)|Z\Pfnt{E})$.
\end{proof}

\subsection{Exponential Expansion by Spot-Checking}
\label{sec:maxprobest:exex}

Let $\cC(C|Z)$ be a model for $(C|Z)\Pfnt{E}$ and $B(CZ)$ a max-prob
estimator for $C|Z$ and $\Unif(Z)\ltimes\cC(C|Z)$.  For
this section, we fix $\rho(C|Z)\in\cC(C|Z)$ with $\tr(\rho(|z))=1$ and
$\nu(CZ)=\Unif(Z)\tr(\rho(C|Z))$. Define
$\bar b=\Exp_{\nu(CZ)}(B(CZ))$.  We assume that $\bar b<1$ as
otherwise $\Unif(Z)\rho(C|Z)$ has only trivial max-prob witnessed
by $B(CZ)$. By definition, $\bar b> 0$. We also assume that the
  model $\cC(C|Z)$ is closed under $\pCP$ maps.  Both induced models
  and maximal extensions considered in this work are $\CP$-closed and
  $\pCP$-closed, and can therefore be used here.

The following repeats the treatment of spot-checking input
distributions in Ref.~\cite{knill:qc2017a}.  To simplify the analysis,
we take advantage of the fact that for configurations such as those of
Bell tests, we can hide the choice of whether or not to apply a test
trial from the devices. This corresponds to appending a test bit $T$
to $Z$, where $T=1$ indicates a test trial and $T=0$ indicates a fixed
one, with $Z=z_{0}$.  The model $\cC(C|ZT)$ is obtained from
$\cC(C|Z)$ by constraining $\sigma(C|Z0)=\sigma(C|Z1)$ and
$\sigma(C|Z0)\in\cC(C|Z)$. For any $\sigma(C|Z)\in\cC(C|Z)$ there is a
corresponding $\tilde\sigma(C|ZT)\in\cC(C|ZT)$ defined by
$\tilde\sigma(C|Zt)=\sigma(C|Z)$ for $t\in\{0,1\}$.  The map $\sigma(C|Z)\mapsto
\tilde\sigma(C|ZT)$ is a bijection between $\cC(C|Z)$ and $\cC(C|ZT)$.

Let $q=1/|\Rng(Z)|$.  Let $\mu_{r}$ be the probability distribution of
$ZT$ defined by $\mu_{r}(z1) = rq$ and
$\mu_{r}(z0)=(1-r)\knuth{z=z_{0}}$ for some value $z_{0}$ of
$Z$. Since we are interested in the case where $r$ is small, we assume $0<r<
1/2$.  The entropy of the distribution $\mu_{r}$ is given by
$S(\mu_{r})=H(r)+r\log(1/q)$, where $H(r)=-r\log(r)-(1-r)\log(1-r)$.
Let $\nu_{r}(CZT)=\mu_{r}(ZT)\tr(\rho(C|Z))$.  Define $B_{r}(CZT)$ by
\begin{align}
  B_{r}(CZ0) &= 1,\notag\\
  B_{r}(CZ1) &= 1+(B(CZ)-1)\frac{1}{r}.
  \label{eq:def_B_r}
\end{align}
Setting $B_{r}(CZT)$ to $1$ when $T=0$ is convenient, we did not
explore optimality of this choice. 

\begin{lemma}\label{lem:br_maxprobest}
  $B_{r}(CZT)$ is a max-prob estimator for $C|ZT$ and
  $\mu_{r}(ZT)\ltimes\cC(C|ZT)$.
\end{lemma}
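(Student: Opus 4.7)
\medskip
\noindent\textbf{Proof plan for Lem.~\ref{lem:br_maxprobest}.} The plan is a direct verification of the defining inequality for max-prob estimators. An arbitrary element of $\mu_r(ZT)\ltimes\cC(C|ZT)$ has the form $\mu_r(ZT)\tilde\sigma(C|ZT)$ where $\tilde\sigma(C|ZT)\in\cC(C|ZT)$. By the defining constraint on $\cC(C|ZT)$ there is a unique $\sigma(C|Z)\in\cC(C|Z)$ with $\tilde\sigma(C|Zt)=\sigma(C|Z)$ for both $t\in\{0,1\}$. Writing $\nu(CZT)=\mu_r(ZT)\tr(\tilde\sigma(C|ZT))$, I first observe that $\nu(c|zt)=\tr(\sigma(c|z))$ is independent of $t$, so the right-hand side of the max-prob inequality reduces to $\max_{czt}\nu(c|zt)=\max_{cz}\tr(\sigma(c|z))$.

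Next I would compute $\Exp_{\nu(CZT)}(B_r(CZT))$ by splitting the sum according to $T$. The $T=0$ slice contributes $(1-r)\sum_{c}\tr(\sigma(c|z_{0}))\cdot 1 = 1-r$, using $B_r(CZ0)=1$ and the normalization $\tr(\sigma(|z_{0}))=1$. The $T=1$ slice contributes
\begin{equation}
\sum_{cz} rq\,\tr(\sigma(c|z))\Bigl(1+\frac{B(cz)-1}{r}\Bigr) = r + \sum_{cz}q\,\tr(\sigma(c|z))\bigl(B(cz)-1\bigr) = r + \Exp_{\Unif(Z)\tr(\sigma(C|Z))}\bigl(B(CZ)\bigr) - 1,
\end{equation}
where I used $\sum_{z}q\,\tr(\sigma(|z))=1$. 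Adding the two slices the constants $(1-r)$ and $(r-1)$ cancel, giving the clean identity
\begin{equation}
\Exp_{\nu(CZT)}\bigl(B_r(CZT)\bigr) = \Exp_{\Unif(Z)\tr(\sigma(C|Z))}\bigl(B(CZ)\bigr).
\end{equation}

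To finish, I would invoke the hypothesis that $B(CZ)$ is a max-prob estimator for $C|Z$ and $\Unif(Z)\ltimes\cC(C|Z)$, which applied to $\Unif(Z)\tr(\sigma(C|Z))$ gives $\Exp_{\Unif(Z)\tr(\sigma(C|Z))}(B(CZ)) \geq \max_{cz}\tr(\sigma(c|z))$. Combining with the first step yields $\Exp_{\nu(CZT)}(B_r(CZT)) \geq \max_{czt}\nu(c|zt)$, which is the max-prob condition. There is really no conceptual obstacle here; the definition of $B_r$ in Eq.~\ref{eq:def_B_r} is engineered so that the $T=0$ and $T=1$ contributions telescope precisely to the uniform-$Z$ expectation of $B(CZ)$, and the only care needed is the bookkeeping of the $r$, $(1-r)$, and $q$ factors in $\mu_r$.
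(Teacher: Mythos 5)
Your proposal is correct and follows essentially the same route as the paper's proof: split the expectation of $B_r$ by the test bit $T$, observe that the constants $(1-r)$ and $(r-1)$ telescope to leave exactly $\Exp_{\Unif(Z)\tr(\sigma(C|Z))}(B(CZ))$, and then invoke the max-prob hypothesis on $B$. The only nitpick is that $\nu(c|z0)$ is $0$ (not $\tr(\sigma(c|z))$) for $z\neq z_0$ under the paper's zero-probability-conditional convention, but since those values are dominated by the $t=1$ conditionals this does not affect your conclusion $\max_{czt}\nu(c|zt)=\max_{cz}\tr(\sigma(c|z))$.
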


\begin{proof}
  Let $\sigma(C|Z)\in\cC(C|Z)$ with $\tr(\sigma(|z))=1$.  For the
  duration of this proof, define
  $\nu(CZ)=\tr(\Unif(Z)\sigma(C|Z))=q\tr(\sigma(C|Z))$ and
  $\nu_{r}(CZT)=\tr(\mu_{r}(ZT)\sigma(C|Z))$.  All normalized members
  of $\mu_{r}(ZT)\ltimes\cC(C|ZT)$ are of the form
  $\mu_{r}(ZT)\sigma(C|Z)$, so it suffices to confirm that
  $\Exp_{\nu_{r}(CZT)}B_{r}(CZT)\geq \max_{czt}\nu_{r}(c|zt)$.  We
  have
  \begin{align}
    \Exp_{\nu_{r}(CZT)}(B_{r}(CZT)) &=
      \sum_{czt}\nu_{r}(czt) B_{r}(czt)\notag\\
      &=\sum_{cz}\nu_{r}(cz0)B_{r}(cz0)
        +\sum_{cz}\nu_{r}(cz1)B_{r}(cz1)\notag\\
      &=\sum_{cz}\mu_{r}(z0)\tr(\sigma(c|z)) +
           \sum_{cz}\mu_{r}(z1)\tr(\sigma(c|z))(1+(B(cz)-1)/r)\notag\\
      &=\sum_{z}\mu_{r}(z0) +
           \sum_{cz}rq\tr(\sigma(c|z))(1+(B(cz)-1)/r)\notag\\
      &=(1-r) +
           \sum_{cz}r\Unif(z)\tr(\sigma(c|z))(1+(B(cz)-1)/r)\notag\\
      &=(1-r) +
           r\sum_{cz}\nu(cz)(1+(B(cz)-1)/r)\notag\\
      &=(1-r) + r  -1 + \Exp_{\nu(CZ)}(B(CZ))\notag\\
      &= \Exp_{\nu(CZ)}(B(CZ))\label{eq:lem:br_maxprobest:1}\\
      &\geq \max_{cz}\nu(c|z) = \max_{czt}\nu_{r}(c|zt),\notag 
  \end{align}
  since $\nu_{r}(c|z1)=\nu(c|z)$ and $\nu_{r}(c|z0)=\knuth{z=z_{0}}\nu(c|z)$,
  according to our convention that zero-probability conditionals are $0$.
\end{proof}

Let
\begin{equation}
  K_{r}(CZT) = -\log(\bar b)+1-B_{r}(CZT)/\bar b,
  \label{eq:ee_r_beta}
\end{equation}
where $\bar b$ is introduced in the first paragraph of this section and $B_{r}(CZT)$
is defined in Eq.~\eqref{eq:def_B_r}. In view of Thm.~\ref{thm:ee_conduniformbnd} 
and Lem.~\ref{lem:br_maxprobest}, $K_{r}(CZT)$ is an entropy estimator for $C|ZT$ 
and  $\mu_{r}(ZT)\ltimes\cC(C|ZT)$ provided that the model $\cC(C|ZT)$ is
$\pCP$-closed, which is assumed in this section.

\begin{lemma}\label{lem:kr_rate}
  For $\nu_{r}(CZT)=\tr(\mu_{r}(ZT)\rho(C|Z))$,
  $\Exp_{\nu_r(CZT)}(K_r(CZT)) = -\log(\bar b)$.
\end{lemma}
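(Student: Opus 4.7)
The plan is to compute $\Exp_{\nu_r(CZT)}(B_r(CZT))$ directly and then feed the result into the affine formula $K_r = -\log(\bar b) + 1 - B_r/\bar b$. By linearity of expectation the claim reduces to showing $\Exp_{\nu_r(CZT)}(B_r(CZT)) = \bar b$, since then
\begin{equation}
\Exp_{\nu_r(CZT)}(K_r(CZT)) = -\log(\bar b) + 1 - \bar b/\bar b = -\log(\bar b).
\end{equation}

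First I would split the expectation over $T$ into its $T=0$ and $T=1$ components using $\mu_r(z0) = (1-r)\knuth{z=z_0}$, $\mu_r(z1) = rq$, and $\nu_r(czt) = \mu_r(zt)\tr(\rho(c|z))$. On the $T=0$ branch, $B_r(CZ0) = 1$ so this piece contributes $\sum_z \mu_r(z0)\sum_c \tr(\rho(c|z)) = 1-r$, using that each $\tr(\rho(|z))=1$. On the $T=1$ branch, substituting the definition $B_r(CZ1) = 1 + (B(CZ)-1)/r$ and $\mu_r(z1) = rq$ gives
\begin{equation}
\sum_{cz} rq\,\tr(\rho(c|z))\bigl(1 + (B(cz)-1)/r\bigr) = r\sum_{cz}\nu(cz) + \sum_{cz}\nu(cz)(B(cz)-1) = \Exp_{\nu(CZ)}(B(CZ)) - 1 + r,
\end{equation}
where $\nu(CZ) = \Unif(Z)\tr(\rho(C|Z)) = q\tr(\rho(C|Z))$.

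Adding the two branches yields $\Exp_{\nu_r(CZT)}(B_r(CZT)) = \Exp_{\nu(CZ)}(B(CZ)) = \bar b$ by the definition of $\bar b$ in the paragraph preceding the lemma. This is the same cancellation that appears in Eq.~\ref{eq:lem:br_maxprobest:1} of the proof of Lem.~\ref{lem:br_maxprobest} applied with $\sigma = \rho$, so the computation can be quoted from there rather than redone in full. Substituting into the definition of $K_r$ gives the claim.

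There is essentially no obstacle here: the lemma is a bookkeeping consequence of how the rescaling $B \mapsto B_r$ in Eq.~\ref{eq:def_B_r} is tuned so that plugging in the reference distribution $\nu$ reproduces $\Exp_{\nu(CZ)}(B(CZ))$ and hence $\bar b$. The only point to watch is that the $T=0$ contribution $(1-r)$ is cancelled by the $-1+r$ emerging from the $T=1$ contribution, which is exactly what the choice $B_r(CZ0)=1$ is designed to arrange.
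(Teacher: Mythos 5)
Your proof is correct and follows essentially the same route as the paper's: linearity reduces the claim to $\Exp_{\nu_r(CZT)}(B_r(CZT)) = \bar b$, which the paper obtains by citing Eq.~\ref{eq:lem:br_maxprobest:1} (with $\sigma = \rho$) and the definition of $\bar b$, exactly as you observe. The only difference is that you unfold that cited computation explicitly rather than just referencing it.
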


\begin{proof}
  \begin{align}
    \Exp_{\nu_r(CZT)}(K_r(CZT)) &= -\log(\bar b)+1
    -\Exp_{\nu_{r}(CZT)}(B_{r}(CZT))/\bar b\notag\\
    = -\log(\bar b),
  \end{align}
  where the last equality follows from the more general
  Eq.~\ref{eq:lem:br_maxprobest:1} and the definition of $\bar b$.
\end{proof}

\begin{theorem}\label{thm:spotchecking_rate}
  With the notation of this section, there exist constants $d$ and
  $d'$ independent of $r$ such that for $0<\beta\leq dr$,
  \begin{equation}
    F_{r,\beta}(CZT) = \frac{e^{\beta K_{r}(CZT)}}{1+d'\beta^{2}/r}
    \label{eq:thm:spotchecking_rate:main}
  \end{equation}
  is a \QEFP with power $\beta$ for $C|ZT$ and $\mu_{r}(ZT)\ltimes\cC(C|ZT)$.
  The log-prob rate $g_{r,\beta}$ of $F_{r,\beta}$ at $\mu_{r}(ZT)\rho(C|Z)$ satisfies
  \begin{equation}
    g_{r,\beta}
      \geq -\log(\bar b)-d'\beta/r.
  \end{equation}
\end{theorem}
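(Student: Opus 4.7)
The plan is to obtain \QEFPs by composing results already established. First, Lemma~\ref{lem:br_maxprobest} shows that $B_r(CZT)$ is a max-prob estimator for $C|ZT$ and $\mu_r(ZT)\ltimes\cC(C|ZT)$ with $\Exp_{\nu_r(CZT)}(B_r(CZT))=\bar b$. Applying Theorem~\ref{thm:ee_conduniformbnd} with this value of $\bar b$ then yields that $K_r$, as defined in Eq.~\ref{eq:ee_r_beta}, is an entropy estimator for $C|ZT$ and $\mu_r(ZT)\ltimes\cC'(C|ZT)$, where $\cC'$ is the $\CP$ closure of $\cC$. Theorem~\ref{thm:ee_to_qef} now guarantees that $e^{\beta K_r(CZT)}/(1+c_P(\beta))$ is a \QEFP for every $\beta\leq 1/2$. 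Because the \QEFP inequality is preserved under scalar division by any factor no smaller than $1+c_P(\beta)$, it suffices to show $c_P(\beta)\leq d'\beta^2/r$: then $F_{r,\beta}(CZT)\leq e^{\beta K_r(CZT)}/(1+c_P(\beta))$ and $F_{r,\beta}$ is itself a \QEFP.

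The heart of the argument is the bound on $c_P(\beta)$. Since the input distribution is fixed at $\mu_r(ZT)$, Theorem~\ref{thm:ee_to_qef} gives $c_P(\beta)\leq (\beta^2/2)\,c(\beta,\mu_r(ZT))$, and one decomposes $c(\beta,\mu_r(ZT))$ by the test bit $T$. On the non-test branch $T=0$, the definition of $K_r$ makes $K_r(cz0)=-\log(\bar b)+1-1/\bar b$ constant in $c,z$, so $k_{\max}(z0)$ and $\bar w_\gamma(z0)$ are bounded by constants depending only on $\bar b$; the contribution is weighted by $\mu_r(z0)$ summing to $1-r$, giving an $O(1)$ term. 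On the test branch $T=1$, $K_r(cz1)$ contains an additive term $-(B(cz)-1)/(r\bar b)$ of size $O(\|B\|_\infty/r)$, so $\bar w_\gamma(z1)$ and $k_{\max}(z1)$ grow as $O(1/r)$ and the integrand $\mathrm{stuff}_\beta(z1)$ is $O(1/r^2)$. The constraint $\beta\leq dr$ for a suitably small $d$ keeps $k_{\max}(z1)\beta$ and $(1-\beta)^{-2}$ uniformly bounded in $r$, so the exponential factor $e^{k_{\max}(z1)\beta}/(1-\beta)^2$ does not spoil the estimate. Weighting the $O(1/r^2)$ contribution by the total test probability $r$ cancels one factor of $r$, leaving an overall bound of order $1/r$. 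Tracking the explicit constants (depending on $\bar b$, $\|B\|_\infty$, and $|\Rng(C)|$, but not on $r$) then yields $c(\beta,\mu_r(ZT))\leq 2d'/r$ for a constant $d'$ independent of $r$, hence $c_P(\beta)\leq d'\beta^2/r$ as required.

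The log-prob rate bound follows immediately. From the definition of $F_{r,\beta}$,
\begin{equation}
g_{r,\beta}=\Exp_{\nu_r(CZT)}\bigl[\log F_{r,\beta}(CZT)\bigr]/\beta
=\Exp_{\nu_r(CZT)}\bigl[K_r(CZT)\bigr]-\log(1+d'\beta^2/r)/\beta,
\end{equation}
and Lemma~\ref{lem:kr_rate} identifies the first expectation as $-\log(\bar b)$, while $\log(1+x)\leq x$ yields $\log(1+d'\beta^2/r)/\beta\leq d'\beta/r$, giving $g_{r,\beta}\geq -\log(\bar b)-d'\beta/r$. The principal obstacle is the tight accounting in the previous paragraph: one must arrange the two sources of $1/r$ scaling (the size of $K_r$ on test trials and the denominator $(1-\beta)^2$ together with $e^{k_{\max}\beta}$) so that they are controlled by the single factor $r$ from the test probability. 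The condition $\beta\leq dr$ is precisely what ensures that the exponential factor and $(1-\beta)^2$ remain uniformly bounded, so that the balance produces a clean $O(1/r)$ estimate with constants depending only on the trial model and $B$.
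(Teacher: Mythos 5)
Your proposal is correct and takes essentially the same route as the paper: obtain the entropy estimator $K_r$ from the max-prob estimator $B_r$ via Lemma~\ref{lem:br_maxprobest} and Theorem~\ref{thm:ee_conduniformbnd}, feed it to Theorem~\ref{thm:ee_to_qef}, split the bound $c(\beta,\mu_r(ZT))$ by the test bit $T$, use $\beta\leq dr$ with $d$ small enough to uniformly control $e^{k_{\max}(zt)\beta}$ and $(1-\beta)^{-2}$, and let the weight $r$ on the $T=1$ branch cancel one power of $1/r$ from the $O(1/r^2)$ contribution; the log-prob rate bound then follows from Lemma~\ref{lem:kr_rate} and $\log(1+x)\leq x$. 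The only point worth spelling out is that Theorem~\ref{thm:ee_conduniformbnd} delivers an entropy estimator for the $\CP$-closed model $\mu_r(ZT)\ltimes\cC'(C|ZT)$, which contains $\mu_r(ZT)\ltimes\cC(C|ZT)$, so the resulting \QEFP automatically restricts to the model in the theorem statement.
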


This theorem extends Thm.~50 from Ref.~\cite{knill:qc2017a} to \QEFPs
constructed from max-prob estimators.  We do not intend the constants
obtained in the proof to be used in practice. If necessary in an application, the \QEFPs and the
values of $\beta$ obtained according to the strategy here can be
optimized with numerical methods with the expressions obtained in
Thm.~\ref{thm:ee_to_qef} and its proof.

\begin{proof}
  The bound on the log-prob rate follows from Eq.~\ref{eq:thm:spotchecking_rate:main}
  by direct computation: With $\nu_{r}(CZT)=\tr(\mu_{r}(ZT)\rho(C|Z))$,
  \begin{align}
   g_{r,\beta}
    &=\Exp_{\nu_{r}(CZT)}(\log(F_{r,\beta}(CZT))/\beta)\notag\\
    &=\Exp_{\nu_{r}(CZT)}(K_{r}(CZT)) -\log(1+d'\beta^{2}/r)/\beta\notag\\
    &= -\log(\bar b)-\log(1+d'\beta^{2}/r)/\beta\notag\\
    &\geq-\log(\bar b)-d'\beta/r,
  \end{align}
  where we applied Lem.~\ref{lem:kr_rate} in the second-last step.

  For the main statement of the theorem, we apply
  Thm.~\ref{thm:ee_to_qef}, where $Z$ there becomes $ZT$ here and
    $K(CZ)$ there becomes $K_{r}(CZT)$ here.  Consider the upper
  bound $c(\beta)$ on $c_{P}(\beta,K_{r}(CZT))$ from Thm.~\ref{thm:ee_to_qef}.  
  Let $k_{\infty}(zt)=\max_{c}|K_{r}(czt)|$.   By the
    definition of $K_{r}(CZT)$ in Eq.~\eqref{eq:ee_r_beta} and in view
    of the assumption that $0<\bar b<1$, we have
    $k_{\infty}(zt)=\max_{c}|K_{r}(czt)|\leq -\log(\bar b)+1+
    \max_{c}|B_{r}(czt)|/\bar{b}$.  The expression for $B_{r}(czt)$
    with $r\in (0,1)$ implies that $\max_{c}|B_{r}(cz0)|=1$ and
    $\max_{c}|B_{r}(cz1)|=\max_{c}|B(cz)/r+(r-1)/r|\leq
    (\max_{c}|B(cz)|+1)/r$.  Therefore,
    $k_{\infty}(z0) \leq -\log(\bar b)+1+1/\bar b$ and
    $k_{\infty}(z1) \leq -\log(\bar b) + 1+(\max_{c} |B(cz)|+1)/\bar b
    r \leq (-\log(\bar b) + 1+(\max_{c} |B(cz)|+1)/\bar b )/r$.
  Define
  $d=\big(2\max_{cz}(-\log(\bar b)+1+(|B(cz)|+1)/\bar b)\big)^{-1}$ so
  that $k_{\infty}(z1)\leq 1/(2dr)$ and $k_{\infty}(z0)\leq
  1/(2d)$. Note that $d$ is independent of $r$ as stated in the
    theorem and $d\in (0,1/2)$.  In order to
    simplify the upper bound \(c(\beta)\) on $c_{P}(\beta,K_{r}(CZT))$ in
    Thm.~\ref{thm:ee_to_qef}, we increase the bound by
    replacing the quantities \(\bar w_{\gamma}(zt)\) for \(\gamma=0\) and \(\gamma=\beta\) by the larger quantity $v(zt)=k_{\infty}(zt)+\log(2N)+\iota_{0}$, and similarly, the quantity  $e^{k_{\max}(zt)\beta}$ by
    $e^{k_{\infty}(zt)\beta }$.
  With these replacements, the $\llceil\ldots\rrceil$ operation can be
  omitted and the terms combined for
  \begin{equation}
    c_{P}(\beta,K_{r}(CZT))\leq 
    \frac{\beta^{2}}{6}\sum_{zt}\mu_{r}(zt)
    \left(
      \left(2 + \frac{e^{ 
            k_{\infty}(zt)\beta}}{(1-\beta)^{2}}\right)v(zt)(v(zt)+\iota_{0})
    \right),
  \end{equation}
  noting that $\nu(Z)$ in Thm.~\ref{thm:ee_to_qef} becomes
  $\mu_{r}(ZT)$ here and $2\coth(v(zt))\leq 2\coth(\iota_{0})= \iota_{0}$
 in view of the monotonicity of the function $\coth(x)$ and  the definition of $\iota_{0}$.  
 The above bound on $c_{P}(\beta,K_{r}(CZT))$ is valid when $\beta<1/2$ according to Thm.~\ref{thm:ee_to_qef}.
 Since $0<d<1/2$ and $0<r<1$ (we actually assume that $0<r<1/2$ in this section), $0<dr< 1/2$. 
 For $\beta\leq dr$, we have $e^{k_{\infty}(zt)\beta }\leq
  e^{k_{\infty}(zt) dr}\leq e^{1/2}\leq 2$, and $(1-\beta)\geq 1/2$,
  so we can weaken the bound to
  \begin{align}
  c_{P}(\beta,K_{r}(CZT))&\leq 
    \frac{\beta^{2}}{6}\sum_{zt}\mu_{r}(zt)\,
      10\,v(zt)(v(zt)+\iota_{0})\notag \\
      &\leq \frac{5\beta^{2}}{3}\sum_{zt}\mu_{r}(zt)(v(zt)+\iota_{0})^2.
  \end{align}
  We have that $v(z0)+\iota_{0}\leq 1/(2d)+\log(2N)+2\iota_{0}$ and
  $v(z1)+\iota_{0}\leq (1/(2dr))+\log(2N)+2\iota_{0}\leq
  (1/(2d)+\log(2N)+2\iota_{0})/r$.  
  After separating the sum over $zt$ for
  $t=0$ and $t=1$, the bound weakens further to
  \begin{align}
    c_{P}(\beta,K_{r}(CZT))&\leq 
    \frac{5\beta^{2}}{3} \left((1-r)(1/(2d)+\log(2N)+2\iota_{0})^{2}
      + r(1/(2d)+\log(2N)+2\iota_{0})^{2}/r^{2}
      \right)\notag\\
      &\leq
      \frac{\beta^{2}}{r}\frac{5\times
        2}{3}(1/(2d)+\log(2N)+2\iota_{0})^{2}.
  \end{align}
  It now suffices to set $d' =
  10\,(1/(2d)+\log(2N)+2\iota_{0})^{2}/3$, which is independent of $r$ as stated in the theorem.
\end{proof}

Thm.~\ref{thm:spotchecking_rate} implies exponential expansion via the
argument used to prove exponential expansion in
Ref.~\cite{knill:qc2017a}, Thm 52. We formulate the theorem with
power-law error-bound rates to match the conclusion of Cor.~1.5 in
Ref.~\cite{miller_c:qc2014a}. Standard exponential expansion is
obtained by setting the parameter $\gamma$ in the next theorem to $\gamma=1$.

\begin{theorem}\label{thm:expexpansion}
  Let $F_{r,\beta}(CZT)$ be the family of \QEFPs of
  Thm.~\ref{thm:spotchecking_rate} for model
  $\mu_{r}(ZT)\ltimes\cC(C|ZT)$. Suppose that $\cC(\Sfnt{CZT})$ is
  obtained by chaining $\mu_{r}(ZT)\ltimes \cC(C|ZT)$ $n$ times, and
  $\rho(\Sfnt{CZT})\in\cC(\Sfnt{CZT})$ satisfies that
  $\tr(\rho(\Sfnt{CZT}))$ is i.i.d. with trial distribution
  $\tr(\rho(CZT))$,  with respect to which $\bar b<1$. Then given $\gamma\in(0,1]$,
  $l_{\epsilon}>0$, and error bound \ $\epsilon(n)$ defined by
  $\log(2/\epsilon(n)^{2}) = l_{\epsilon} n^{1-\gamma}$ for
  $\gamma\in[0,1]$, the expected quantum net log-prob
  $g_{\mathrm{net}}(n)$ at $\rho(\Sfnt{CZT})$ satisfies
  \begin{equation}
    g_{\mathrm{net}}(n) 
    = e^{\Omega(n^{\gamma-1} S_{\mathrm{net}}(n))},
  \end{equation}
  where $S_{\mathrm{net}}(n)$ is the net input entropy for $n$ trials.
\end{theorem}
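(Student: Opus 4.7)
The plan is to mirror the exponential-expansion argument of Thm.~53 in Ref.~\cite{knill:qc2017a}, with the trial-wise QEFPs $F_{r,\beta}(CZT)$ from Thm.~\ref{thm:spotchecking_rate} playing the role of the trial-wise PEFs used there. First, by Thm.~\ref{thm:qefchainmain} the product $\prod_{i=1}^{n}F_{r,\beta}(C_iZ_iT_i)$ is a QEFP of power $\beta$ for $\Sfnt{C}|\Sfnt{Z}\Sfnt{T}$ and the chained model $\cC(\Sfnt{CZT})$. Because $\tr(\rho(\Sfnt{CZT}))$ is i.i.d.\ with trial distribution $\tr(\rho(CZT))=\mu_r(ZT)\tr(\rho(C|Z))$, the expectation of $\log\prod_i F_{r,\beta}(C_iZ_iT_i)/\beta$ under $\rho(\Sfnt{CZT})$ equals $n\,g_{r,\beta}$, which Thm.~\ref{thm:spotchecking_rate} bounds below by $n(-\log\bar b - d'\beta/r)$.

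Substituting into Def.~\ref{def:netlogprob}, using $\beta<1/2$ so that $\bar\kappa^{(\beta-1)\knuth{\beta>1}}=1$, and using the parametrization $\log(2/\epsilon(n)^{2}) = l_\epsilon n^{1-\gamma}$, gives
\begin{equation}
g_{\mathrm{net}}(n) \;\geq\; n\!\left(-\log\bar b - \frac{d'\beta}{r}\right) - \frac{l_\epsilon n^{1-\gamma}}{\beta}.
\end{equation}
The heart of the proof is the choice $r = c_1 n^{-\gamma}$ and $\beta = c_2 n^{-\gamma}$ with constants $0<c_2\leq d c_1$ that keep the constraint $\beta\leq dr$ of Thm.~\ref{thm:spotchecking_rate} in force. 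This turns the bound into $g_{\mathrm{net}}(n)\geq n(-\log\bar b - d'c_2/c_1 - l_\epsilon/c_2)$, independent of $n$ inside the parentheses. I would first fix $c_1 > 4 d' l_\epsilon/(\log\bar b)^2$ to absorb $l_\epsilon$, then minimize over $c_2$ with $c_2 = \sqrt{l_\epsilon c_1/d'}$; the bracketed quantity becomes a strictly positive constant, so $g_{\mathrm{net}}(n) = \Omega(n)$.

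For the input side, the per-trial input entropy is $S(\mu_r)= H(r)+r\log(1/q)$; with $r=c_1 n^{-\gamma}$ and $n$ large one has $S(\mu_r) = \Theta(r\log(1/r)) = \Theta(n^{-\gamma}\log n)$, so $S_{\mathrm{net}}(n) = n S(\mu_r) = \Theta(n^{1-\gamma}\log n)$. Combining $g_{\mathrm{net}}(n)=\Omega(n)$ with this input-entropy growth yields the claimed exponential relation between $g_{\mathrm{net}}(n)$ and $n^{\gamma}S_{\mathrm{net}}(n)$: the $\log$ of the certified randomness grows at rate $\log n$, which is produced from input entropy that grows only polynomially (sub-linearly for $\gamma\in(0,1]$) in $n$.

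The hard part is balancing three competing conditions in a single choice of $(r,\beta)$: (i) $\beta\leq d r$ from Thm.~\ref{thm:spotchecking_rate}, (ii) keeping the per-trial rate $-\log\bar b - d'\beta/r$ strictly positive, and (iii) making the error correction $l_\epsilon n^{1-\gamma}/\beta$ at most linear in $n$ so that it can be absorbed by the per-trial rate. The two-stage optimization described above—tuning $c_1$ first to dominate $l_\epsilon$, then $c_2$ to balance the remaining two penalties—is exactly the quantum analogue of the PEF construction in Ref.~\cite{knill:qc2017a} and handles all three simultaneously.
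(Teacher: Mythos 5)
Your calculation up to $g_{\mathrm{net}}(n)=\Omega(n)$ matches the paper's: both use the same testing-rate schedule $r_n=\Theta(n^{-\gamma})$, $\beta_n=\Theta(n^{-\gamma})$, the same per-trial lower bound $g_{r,\beta}\geq -\log\bar b - d'\beta/r$ from Thm.~\ref{thm:spotchecking_rate}, and an equivalent two-stage tuning of the constants (the paper sets $c=\min(d,g_0/(3d'))$ and $c'=3l_\epsilon/(cg_0)$ where you instead minimize over $c_2$ first; both make the bracketed per-trial net rate a strictly positive constant). That part is fine, modulo some unverified side constraints ($\beta\leq dr$ for your specific $c_2$, and $r_n\leq q/e$ for the $S(\mu_r)\leq -2r\log r$ bound to hold) that are easily absorbed by taking $n$ large.

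The gap is the last step. You compute $S_{\mathrm{net}}(n)=nS(\mu_r)=\Theta(n^{1-\gamma}\log n)$ and then assert the conclusion without arithmetic. With that interpretation of $S_{\mathrm{net}}(n)$, $n^{\gamma}S_{\mathrm{net}}(n)=\Theta(n\log n)$, and $g_{\mathrm{net}}(n)=\Omega(n)=e^{O(\log n)}$ does \emph{not} satisfy $g_{\mathrm{net}}(n)=e^{\Omega(n\log n)}$ — in fact it fails badly. The quantity $S_{\mathrm{net}}(n)$ in the theorem must be read as the \emph{per-trial} input entropy $S(\nu_{r_n})$ at the scheduled rate $r_n$, not the total $nS(\nu_{r_n})$. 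The missing step is the inversion: from $S(\nu_{r_n})\leq -2r_n\log r_n = 2c'n^{-\gamma}(\gamma\log n-\log c')$ you get
\begin{equation}
n^{\gamma}S(\nu_{r_n})\leq 2\gamma c'\log\!\left(n/c'^{1/\gamma}\right),
\qquad\text{i.e.}\qquad
n\geq c'^{1/\gamma}e^{n^{\gamma}S(\nu_{r_n})/(2\gamma c')},
\end{equation}
and only then, sandwiching against $g_{\mathrm{net}}(n)\geq ng_0/3$, do you obtain $g_{\mathrm{net}}(n)\geq c'^{1/\gamma}e^{n^{\gamma}S(\nu_{r_n})/(2\gamma c')}\,g_0/3=e^{\Omega(n^{\gamma}S_{\mathrm{net}}(n))}$. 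Without this inversion and with the total-entropy reading, the final line of your argument is unproved and, as written, false.
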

The expansion in the theorem is from input entropy to output
conditional min-entropy.  To recover uniformly random bits still
requires randomness extraction, and we do not include the seed
requirements in our accounting here.  The theorem includes a
completeness statement via the assumed state $\rho(CZT)$ with respect
to which $\bar b$ is defined and $\bar b<1$.  Of course, it is not
necessary for the distribution to be i.i.d., this just makes sure that
the probability of witnessing exponentially large output smooth
min-entropy is $\Omega(1)$.  It suffices that with sufficiently high
probability, the observed frequencies are typical of such an i.i.d.
distribution. In Ref.~\cite{knill:qc2017a}, we discussed the
distribution of $\log(F_{r,\beta}(CZT))$ for the i.i.d. scenario in the 
presence of classical side information, establishing that the probability 
of success for protocols based on Thm.~52 there 
(a version of the theorem here in the presence of classical side information)  approaches $1$ with
sufficiently conservative choices of thresholds.  We expect the same property
  to extend for the i.i.d. scenario in the presence of quantum side information.

\begin{proof}
  We repeat the proof Thm.~52 in Ref.~\cite{knill:qc2017a} with minor
  modifications to obtain the more general statement of
  Thm.~\ref{thm:expexpansion}.  We determine constants $0<c<1$ and
  $0<c'$ for which the testing rate $r_{n}=c'/n^{\gamma}$ and the
  power $\beta_{n}=cr_{n} = cc'/n^{\gamma}$ achieve the goal of the
  theorem.  It suffices to prove the theorem assuming that $n$ is
  sufficiently large.  Let $d$ and $d'$ be the constants in
  Thm.~\ref{thm:spotchecking_rate}.  We require that $c\leq d$ to 
  ensure the statement that $\beta_{n}=cr_{n}\leq dr_{n}$ according to Thm.~\ref{thm:spotchecking_rate}.  
  From this theorem, Def.~\ref{def:netlogprob}, and since $\beta_{n}<1$ for
  sufficiently large $n$, the expected quantum net log-prob for power
  $\beta_{n}$ is bounded by
  \begin{align}
    g_{\mathrm{net}} &= n g_{r_{n},\beta_{n}}-\frac{\log(2/\epsilon(n)^{2})}{\beta_{n}}\notag\\
    &\geq n\left(-\log(\bar
      b)-\frac{d'\beta_{n}}{r_{n}}\right)-\frac{l_{\epsilon}
      n^{1-\gamma}}{\beta_{n}}.
    \label{eq:thm:expexp_gtotal}
  \end{align}
  The input entropy per trial $S(\nu_{r})=H(r)+r\log(1/q)$ is bounded
  from above by $-2r\log(r)$, provided we take $r\leq q/e$.  For this, note
  that $r\log(1/q)\leq r\log(1/(re))=-r\log(r)-r$ and
  $-(1-r)\log(1-r)\leq r$ since $-\log(1-r)=\log(1+r/(1-r))\leq r/(1-r)$.  
  For $r_{n}=c'/n^{\gamma}$, which is less than $q/e$ for sufficiently large $n$, 
  the expected number of test trials is $c'n^{1-\gamma}$ and the total
  input entropy satisfies $nS(\nu_{r_{n}}) \leq -2nr_n\log(r_n)=2c'
  n^{1-\gamma}(\gamma\log(n)-\log(c'))$, or equivalently $n\geq
  (c')^{1/\gamma}e^{n^{\gamma}S(\nu_{r_{n}})/(2\gamma c')}$.  Write
  $g_{0}=-\log(\bar b)$.  Substituting the expressions for $\beta_{n}$
  and $r_{n}$ in Eq.~\ref{eq:thm:expexp_gtotal},
  \begin{align}
    g_{\mathrm{net}}
    &\geq n g_{0}\left( 1-\frac{d' c}{g_{0}} - \frac{l_{\epsilon} }{cc' g_{0}}\right).
    \label{eq:thm:expexp_lowerbnd}
  \end{align}
  We first set $c=\min(d,g_{0}/(3d'))$, which ensures that
  $d'c/g_{0}\leq 1/3$. We then set $c'=3l_{\epsilon}/(cg_{0})$.  This
  gives the inequality
  \begin{equation}
    g_{\mathrm{net}} \geq n \frac{g_{0}}{3} \geq        
     (c')^{1/\gamma}e^{n^{\gamma}S(\nu_{r_{n}})/(2\gamma c')}\frac{g_{0}}{3},
    \label{eq:thm:expexp_explicit}
  \end{equation}
  which implies the theorem.
\end{proof}

\section{\QEFs for $(k,2,2)$-Bell-Test Configurations}
\label{sec:k22configs}

\subsection{$(k,2,2)$-Bell-Test Configurations}
\label{subsec:k22config}

We consider models induced by POVMs that are physically achievable on
the device side of $(k,2,2)$-Bell-test configurations.  A
$(k,2,2)$-Bell-test configuration involves $k$ stations (or parties or
devices), where each applies one of two binary-outcome measurements in
each trial. The trial CVs $C$ and $Z$ are both $k$-bit strings.  For
this section, we do not need to consider sequences of trials directly,
so $C_{i}$ and $Z_{i}$ refer to the $i$'th bits of these strings.  For
optimizing the log-prob, we assume a fixed input 
distribution given by $\mu(Z)$.  To define the induced model, the
total device Hilbert space is
$\cH(\Pfnt{D})=\bigotimes_{i=1}^{k}\cV^{(i)}$. The set of POVMs $\fkP$
consists of the families of positive semidefinite operators
$(\mu(z)P_{c|z})_{cz}$ with $P_{c|z}$ of the form
$P_{c|z}=\bigotimes_{i=1}^{k}P^{(i)}_{c_{i}|z_{i}}$ where
$P^{(i)}_{0|z_i}+P^{(i)}_{1|z_i} = \one_{\cV^{(i)}}$. Let
$\cC_{k22}(CZ)$ be the union of the $\cM(\fkP;\Pfnt{E})$
over choices for $\cV^{(i)}$.

For qubits (Hilbert space of dimension $2$),
let $\phi\in (-\pi, \pi]$ and  
\begin{align} \label{eq:qubit_povm}
  Q_{c|0;\phi}&=\frac{1}{2}(\one+(-1)^{c}\sigma_{z}), \notag\\
  Q_{c|1;\phi}
  &=\frac{1}{2}(\one+(-1)^{c}(\cos(\phi)\sigma_{z}+\sin(\phi)\sigma_{x})).
\end{align}

For $\theta$ a vector of length $k$, we define $Z$-indexed POVM operators
$P_{C|Z;\theta}$ by
$P_{c|z;\theta}=\bigotimes_{i=1}^{k}Q_{c_{i}|z_{i};\theta_{i}}$ for each 
$c$ and $z$, where $\theta_i\in (-\pi, \pi]$.

\begin{theorem}\label{thm:stdk22}
  Let $\mu(Z)$ be a fixed input distribution.
  The model $\cC_{k22}(CZ)$ consists of positive combinations
  of members $\rho(CZ)$ expressible in the form
  \begin{equation}
    \rho(CZ) = \mu(Z) U \tau^{1/2}P_{C|Z;\theta}\tau^{1/2}U^{\dagger}
  \end{equation}
  for an operator $\tau\geq 0$ and an isometry $U$ from
  $(\cmplx^{2})^{\otimes k}$ into $\cH(\Pfnt{E})$.
\end{theorem}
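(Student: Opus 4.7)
The plan is to reduce an arbitrary element of $\cC_{k22}(CZ)$ to a positive combination of elements living on a $k$-qubit device space with the standard POVM $P_{C|Z;\theta}$, via a sequence of standard reductions. First I would use Naimark's dilation at each station to replace the binary POVM $(P^{(i)}_{0|z_i},P^{(i)}_{1|z_i})$ by a binary projective measurement on an enlarged local space $\cV^{(i)}\otimes\cA^{(i)}$, with the ancilla $\cA^{(i)}$ initialized in a fixed pure state that is absorbed into the device state $\sigma$. Because $\pCP$/$\CP$ maps preserve the induced-model form (Lem.~\ref{lem:induced_cpclosed}), it suffices to prove the theorem assuming each local POVM is a pair of orthogonal projectors $(E^{(i)}_{0|z_i},E^{(i)}_{1|z_i})$.

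Next I would apply Jordan's lemma to each station: any two projectors $E^{(i)}_{0|0},E^{(i)}_{0|1}$ on $\cV^{(i)}$ admit a common block decomposition $\cV^{(i)}=\bigoplus_{j}\cV^{(i)}_{j}$ with $\dim\cV^{(i)}_{j}\le 2$ and both projectors block-diagonal. On each $2$-dimensional block one may choose an orthonormal basis in which the restricted projectors take the standard qubit form, so the restricted POVM becomes $Q_{c_{i}|z_{i};\phi^{(i)}_{j}}$ for some angle $\phi^{(i)}_{j}\in(-\pi,\pi]$; $1$-dimensional blocks are padded to $\cmplx^{2}$ with state supported on a single basis vector, yielding a trivially standard POVM for any choice of $\phi^{(i)}_{j}$. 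Tensoring across stations, the full device space decomposes as
\begin{equation}
\bigotimes_{i=1}^{k}\cV^{(i)} \;=\; \bigoplus_{\vec{j}}\;\bigotimes_{i=1}^{k}\cmplx^{2}_{j_{i}},
\end{equation}
and on the block labeled $\vec{j}$ the POVM is exactly $P_{C|Z;\theta_{\vec{j}}}$ with $\theta_{\vec{j}}=(\phi^{(i)}_{j_{i}})_{i=1}^{k}$.

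Let $\Pi_{\vec{j}}$ denote the projector onto the $\vec{j}$-th block of $\cH(\Pfnt{D})$. Since $P_{C|Z;\theta_{\vec{j}}}$ is block-diagonal, only the diagonal pieces $\sigma_{\vec{j}} = (\Pi_{\vec{j}}\otimes\one)\,\sigma\,(\Pi_{\vec{j}}\otimes\one)$ contribute, so
\begin{equation}
\rho(cz)\;=\;\mu(z)\sum_{\vec{j}}\tr_{\Pfnt{D}}\!\left((P_{c|z;\theta_{\vec{j}}}\otimes\one)\sigma_{\vec{j}}\right),
\end{equation}
which exhibits $\rho(CZ)$ as a positive combination indexed by $\vec{j}$. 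For each fixed $\vec{j}$, $\sigma_{\vec{j}}$ is a positive operator on $(\cmplx^{2})^{\otimes k}\otimes\cH(\Pfnt{E})$, so by a standard purification/Schmidt argument there exist a positive operator $\tau_{\vec{j}}\geq 0$ on $(\cmplx^{2})^{\otimes k}$ and an isometry $U_{\vec{j}}:(\cmplx^{2})^{\otimes k}\hookrightarrow\cH(\Pfnt{E})$ (after absorbing the purifying ancilla into $\Pfnt{E}$, enlarging it if necessary) such that $\tr_{\Pfnt{D}}((M\otimes\one)\sigma_{\vec{j}})=U_{\vec{j}}\tau_{\vec{j}}^{1/2}M\tau_{\vec{j}}^{1/2}U_{\vec{j}}^{\dagger}$ for every operator $M$ on $(\cmplx^{2})^{\otimes k}$. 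Applied with $M=P_{c|z;\theta_{\vec{j}}}$, this expresses each summand in the claimed form, completing the proof.

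The main obstacle will be the last step: executing the purification/Schmidt reduction for a generically mixed $\sigma_{\vec{j}}$ while producing an \emph{isometry} (not merely a contraction) from $(\cmplx^{2})^{\otimes k}$ into $\cH(\Pfnt{E})$. This likely forces either (i) passing to an enlarged $\cH(\Pfnt{E})$ and invoking the convention that the model is formulated for $\Pfnt{E}$ of unbounded dimension, or (ii) a mild convention about transposes (since the natural Schmidt identity yields $M^{T}$ rather than $M$ in a chosen basis), which is harmless because $\tau_{\vec{j}}\mapsto\tau_{\vec{j}}^{T}$ preserves positivity and because, at the final step, we may pre-compose $U_{\vec{j}}$ with a conjugation to undo the transpose. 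Subtleties also arise for the $1$-dimensional Jordan blocks, where the choice of $\phi^{(i)}_{j}$ is arbitrary; these need to be fixed consistently so that the tensor product of standard qubit POVMs is well-defined on the padded block.
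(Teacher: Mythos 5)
Your plan follows essentially the same route as the paper: Naimark dilation at each station to make the local POVMs projective, Jordan's lemma (Lem.~2 of Ref.~\cite{pironio:qc2009a}) to block-diagonalize into qubit summands, observation that cross-block terms of the device--$\Pfnt{E}$ state do not contribute so one may restrict to the block-diagonal pieces, and finally the Schmidt-form identity $\ket{\psi}=(\one\otimes\tau^{1/2})\sum_y\ket{y}\otimes\ket{y}/\sqrt{d}$ yielding the form $U\tau^{1/2}P_{C|Z;\theta}\tau^{1/2}U^{\dagger}$. You also correctly identify the transpose subtlety; the paper disposes of it by noting $P_{c|z;\theta}$ is real and symmetric in the chosen product basis, which is equivalent to your option (ii).

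The one genuine gap is your handling of the last step when $\sigma_{\vec{j}}$ is mixed. Your option (i), ``passing to an enlarged $\cH(\Pfnt{E})$,'' does not work: the theorem asserts the isometry maps into the \emph{given} $\cH(\Pfnt{E})$, the one on which $\rho(CZ)\in\cC_{k22}(CZ)$ already lives, and you are not free to replace it. Tracing an enlarged $\Pfnt{E}'$ back down to $\Pfnt{E}$ also destroys the form $U\tau^{1/2}P\tau^{1/2}U^{\dagger}$, so that detour is blocked. The correct resolution, and the one the paper uses, is to decompose the mixed $\sigma_{\vec{j}}$ (equivalently $\chi$) into an ensemble of \emph{unnormalized pure states on the original $\Pfnt{DE}$}: purify $\chi$ by adjoining an auxiliary system $\Pfnt{E}'$, expand in an orthonormal basis of $\cH(\Pfnt{E}')$, and project and trace out $\Pfnt{E}'$ to obtain $\chi=\sum_m\chi_m$ with each $\chi_m$ a rank-one positive operator on $\cH(\Pfnt{DE})$. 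Each $\chi_m$ then yields a term of the claimed form via the Schmidt identity, and summing over $m$ adds one more index to the positive combination the theorem already allows. Absent this ensemble decomposition, your claimed identity $\tr_{\Pfnt{D}}((M\otimes\one)\sigma_{\vec{j}})=U_{\vec{j}}\tau_{\vec{j}}^{1/2}M\tau_{\vec{j}}^{1/2}U_{\vec{j}}^{\dagger}$ simply fails for mixed $\sigma_{\vec{j}}$: the left side need not even have rank $\le 2^{k}$ as an operator on $\cH(\Pfnt{E})$, whereas the right side always does.
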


The theorem follows from a well-known analysis of this situation for
$k=2$ going back to Ref.~\cite{tsirelson:qc1993a} and
Ref.~\cite{masanes:qc2006a}; a nice version of this analysis is in
Ref.~\cite{pironio:qc2009a}, Sect.~2.4.1.

\begin{proof}
  According to the definition of induced models, we consider an
  initial state $\chi$ of $\Pfnt{DE}$ and a POVM
  $\mu(Z)P_{C|Z}\in\fkP$. This gives the generic state
  $\rho(CZ)=\mu(Z)\tr_{\Pfnt{D}}((P_{C|Z}\otimes\one)\chi)$ in
  $\cC_{k22}(CZ)$.  The usual dilation argument shows that we can
  extend $\cV^{(i)}$ and $P^{(i)}_{c_{i}|z_{i}}$ so that
  $P^{(i)}_{0|z_{i}}$ and $P^{(i)}_{1|z_{i}}$ are pairs of orthogonal
  and complete projectors. For $\mu(Z)P_{C|Z}$ replaced by the
  extended POVM, and $\cV^{(i)}$ replaced by its dilation, we still
  have $\rho(CZ)=\mu(Z)\tr_{\Pfnt{D}}((P_{C|Z}\otimes\one)\chi)$.
  With this,
  $A^{(i)}_{z_{i}}=\left(P^{(i)}_{0|z_{i}}-P^{(i)}_{1|z_{i}}\right)$
  are observables with eigenvalues in $\{-1,1\}$. Since there are two such
  observables for each $\cV^{(i)}$, Lem. 2 of
  Ref.~\cite{pironio:qc2009a} now applies so that
  $\cV^{(i)}=\oplus_{j}\cV^{(i)}_{j}$ with $\cV^{(i)}_{j}$ of
  dimension one or two and
  $A^{(i)}_{z_{i}}=\oplus_{j}A^{(i)}_{z_{i},j}$.  On the
  one-dimensional summands, $A^{(i)}_{b,j}=\pm \one$. We can add a
  second dimension on which the state has no support and extend
  $A^{(i)}_{b,j}$ to the added dimension so that
  $A^{(i)}_{b,j}=\pm\sigma_{z}$. We also extend the POVM operators so
  that their relationship to the $\pm 1$ eigenspaces of the
  $A^{(i)}_{b,j}$ is unchanged.  According to the proof of the
  referenced lemma, we may assume that on the two-dimensional
  summands, $A^{(i)}_{z_{i},j}$ act as conjugated Pauli
  matrices.  Thus, after
  extending the one-dimensional summands as described, for all $j$ we
  may choose logical bases such that $A^{(i)}_{0,j}=\sigma_{z}$ and
  $A^{(i)}_{1,j}=
  \cos(\theta_{j})\sigma_{z}+\sin(\theta_{j})\sigma_{x}$, for some
  $\theta_{j}$.

  The reasoning so far shows that $P_{C|Z}=\oplus_{l} P_{C|Z,l}$ with
  $P_{C|Z,l}$ acting on tensor products of two-dimensional subspaces
  of the subsystems $\cV^{(i)}$. The direct sum is over $l$ defined as
  sequences of $k$ indices, with each index labeling a direct summand 
  of the
  corresponding subsystem.  The transition elements of $\chi$ between the
  direct summands tensored with $\cH(\Pfnt{E})$ do not contribute to
  $\rho(CZ)$, so by zeroing these transition elements with the
  appropriate decoherence superoperator we may assume
  $\chi=\oplus_{l}\chi_{l}$.  Now $\rho(CZ)$ is a positive combination
  of the
  $\rho_{l}(CZ)=\mu(Z)\tr_{\Pfnt{D}}((P_{C|Z,l}\otimes\one)\chi_{l})$, which
  is in $\cC_{k22}(CZ)$. With this we have reduced the problem to one
  where the $\cV^{(i)}$ are two-dimensional and
  $P_{C|Z}=P_{C|Z;\theta}$ for some $\theta$ as defined after Eq.~\ref{eq:qubit_povm}.

  Next, we may assume that $\chi$ is pure. If not we purify $\chi$ with the
  addition of another system $\Pfnt{E'}$. Then $\chi=\sum_{m}\chi_{m}$
  where the $\chi_{m}$ are the unnormalized pure states obtained from the
  purification of $\chi$ by projecting onto the $m$'th basis state of
  $\cH(\Pfnt{E'})$ for some choice of orthonormal basis and tracing out the 
  system $\Pfnt{E'}$, and $\rho(CZ)$ is the
  sum of the $\rho_{m}(CZ)=\mu(Z)\tr_{\Pfnt{D}}((P_{C|Z;\theta}\otimes\one)\chi_{m})$ 
  with $\chi_{m}$ pure, which again are in $\cC_{k22}(CZ)$.
  
  Consider $\cU\otimes\cW$ with $\vdim(\cW)\geq\vdim(\cU)=d$ and a
  given orthonormal basis $\{\ket{y}_{\Pfnt{U}}\}_{y\in I}$ of $\cU$.
  Every pure state $\ket{\psi}$ of $\cU\otimes\cW$ can be written in
  the form $(\one\otimes\tau^{1/2})\sum_{y}\ket{y}_{\Pfnt{U}}
  \otimes\ket{y}_{\Pfnt{W}}/\sqrt{d}$, where the
    $\ket{y}_{\Pfnt{W}}$ are orthonormal and we can choose
  $\tau^{1/2}$ to be positive semidefinite and preserve the subspace
  spanned by the $\ket{y}_{\Pfnt{W}}$: One way to determine the
  $\ket{y}_{\Pfnt{W}}$ and $\tau$ is to let
  $\{\ket{y'}_{\Pfnt{U}}\}_{y'\in I'}$ be a Schmidt basis for the pure
  state $\ket{\psi}$ and $\lambda_{y'}\geq 0$ the corresponding
  Schmidt amplitudes, where the label set $I'$ is disjoint from $I$
  but $|I|=|I'|=d$.  With $\ket{y'}_{\Pfnt{W}}$ the corresponding
  partial Schmidt basis of $\cW$, define $\tau^{1/2}$ by
  $\tau^{1/2}\ket{y'}_{\Pfnt{W}}=\sqrt{d}\lambda_{y'}\ket{y'}_{\Pfnt{W}}$
  and $\tau^{1/2}\ket{\varphi}=0$ for $\ket{\varphi}$ orthogonal
    to the $\ket{y'}_{\Pfnt{W}}$.  With
  this, $\ket{\psi}= (\one\otimes\tau^{1/2})\sum_{y'\in
    I'}\ket{y'}_{\Pfnt{U}} \otimes\ket{y'}_{\Pfnt{W}}/\sqrt{d}$.  By
  the properties of maximally entangled states, there exists a partial
  orthonormal basis $\{\ket{y}_{\Pfnt{W}}\}_{y\in I}$ of $\cW$ such
  that $\sum_{y}\ket{y}_{\Pfnt{U}}\otimes\ket{y}_{\Pfnt{W}}/\sqrt{d}
  =\sum_{y'}\ket{y'}_{\Pfnt{U}}\otimes\ket{y'}_{\Pfnt{W}}/\sqrt{d}$.

  For $x$ a $k$-bit string, let $\ket{x}_{\Pfnt{D}}$ be the
  corresponding logical basis element of $\bigotimes_{i=1}^{k}\cV_{i}$
  considered as $k$ qubits.  Applying the
  observation of the previous paragraph and the reduction to
  $k$-qubits and pure states from before, define
  $\ket{\psi}_{\Pfnt{DE}}=\sum_{x}\ket{x}_{\Pfnt{D}}\otimes\ket{x}_{\Pfnt{E}}/2^{k/2}$
  so that
  $\chi=(\one\otimes\tau^{1/2})\dyad{\psi}(\one\otimes\tau^{1/2})$ for
  some positive semidefinite $\tau^{1/2}$.  Now
  \begin{align}
    \rho(cz) &= \mu(z)\tr_{\Pfnt{D}}\left((P_{c|z;\theta}\otimes\one)
    (\one\otimes\tau^{1/2})\dyad{\psi}(\one\otimes\tau^{1/2})\right)\notag\\
    &=
    \mu(z)\tau^{1/2}
    \tr_{\Pfnt{D}}\left((P_{c|z;\theta}
      \otimes\one)\dyad{\psi}\right)\tau^{1/2}\notag\\
   &=\mu(z) U\tilde \tau^{1/2}
    P_{c|z;\theta}^{T}\tilde \tau^{1/2}U^{\dagger},
  \end{align}
  where $U$ is the isometry that maps $\ket{x}_{\Pfnt{D}}$ to
  $\ket{x}_{\Pfnt{E}}$, $\tilde \tau^{1/2}=U^{\dagger
  }\tau^{1/2}U$, and the transpose is taken with respect to the basis
  $\ket{x}_{\Pfnt{D}}$.  To complete the proof, since $P_{c|z;\theta}$
  is real and symmetric in this basis,
  $P_{c|z;\theta}^{T}=P_{c|z;\theta}$.
\end{proof}

With Thm.~\ref{thm:stdk22} and Lem.~\ref{lem:extremal_qef_constraints}, 
the \QEF optimization problem Prob.~\ref{prob:qefopt} for $\cC_{k22}(CZ)$ and anticipated
probability distribution $\nu(CZ)\in\tr(\cN(\cC_{k22}(CZ)))$ simplifies to
a finite-dimensional problem. Let $\rho(CZ)\in\cC_{k22}(CZ)$ have the
form given in Thm.~\ref{thm:stdk22} with $\tr(\rho)=\tr(\tau)=1$.  In
view of the simplification of Eq.~\ref{eq:qefdef1_nu} for the fixed input
distribution $\mu(Z)$, the \QEF inequality with power $\beta$ for
$F(CZ)$ and $\cC_{k22}(CZ)$ at $\rho(CZ)$ is
\begin{align}
  1&\geq \sum_{cz}F(cz)\mu(z)\Rpow{\alpha}{\rho(c|z)}{\rho} \notag\\
  &= \sum_{cz}F(cz)\mu(z)\tr( \left(\tau^{-\beta/(2\alpha)}\tau^{1/2}
  P_{c|z;\theta} \tau^{1/2}\tau^{-\beta/(2\alpha)}\right)^{\alpha})
  \notag\\
  &=\sum_{cz}F(cz)\mu(z)\tr( \left(\tau^{1/(2\alpha)} P_{c|z;\theta}
  \tau^{1/(2\alpha)}\right)^{\alpha})\notag\\
  &=\sum_{cz}F(cz)\mu(z)\left(\tr( \tau^{1/(2\alpha)} P_{c|z;\theta}
  \tau^{1/(2\alpha)})\right)^{\alpha}\notag\\
  &=\sum_{cz}F(cz)\mu(z)\left(\tr( P_{c|z;\theta}\tau^{1/\alpha}
  P_{c|z;\theta})\right)^{\alpha},
\end{align}
where we used the fact that for a rank $1$ projector $\Pi$,
$\tr((\chi^{1/2}\Pi\chi^{1/2})^{\alpha})=\left(\tr(\chi^{1/2}\Pi\chi^{1/2})\right)^{\alpha}$
and cyclicity of the trace. In the last expression, one of the
projectors in the argument of the trace can be omitted.  
The \QEF optimization problem of Eq.~\ref{prob:qefopt} now reduces to the following:

\begin{alignat}{3}
  \textrm{Maximize:\ }&\sum_{cz}\nu(cz)\log(F(cz))-\log(f_{\max})\notag\\
  \textrm{Variables:\ }& F(CZ), f_{\max}\notag\\
  \textrm{Subject to:\ }& F(CZ)\geq 0,\sum_{cz}F(cz)=1,\notag\\
  & f_{\max}\geq \sum_{cz}\mu(z)F(cz)
  \left(\tr(P_{c|z;\theta}\tau^{1/\alpha}P_{c|z;\theta})\right)^{\alpha}
  \textrm{\ for all $\theta$ and $\tau\geq 0$ with $\tr(\tau)=1$}. \label{prob:qefoptk22}
\end{alignat}
As in Eq.~\ref{prob:qefopt}, the variable
$F(CZ)$ in this optimization problem is not a
\QEF, but every feasible solution $(F(CZ), f_{\max})$ determines
the \QEF $F(CZ)/f_{\max}$ with power $\beta$.

Define $Q_{\alpha}(F(CZ),\theta,\tau)=
  \sum_{cz}\mu(z)F(cz)\left(\tr(P_{c|z;\theta}\tau^{1/\alpha}P_{c|z;\theta})\right)^{\alpha}$.

\begin{lemma}\label{lem:qefoptk22_real}
  In Prob.~\ref{prob:qefoptk22}, $Q_{\alpha}(F(CZ),\theta,\tau)$ is concave in
  the density operator $\tau$, the operator $\tau$ may be restricted to
  be real, and it suffices to consider $\theta$ with
  $\theta_{i}\in[0,\pi]$.
\end{lemma}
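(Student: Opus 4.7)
The lemma makes three assertions---(i) concavity of $Q_{\alpha}(F,\theta,\cdot)$ in $\tau$, (ii) that one may restrict $\tau$ to be real, and (iii) that one may restrict each $\theta_{i}$ to $[0,\pi]$---and my plan is to establish concavity first, since it is the substantive step, then derive (ii) as an immediate symmetrization argument, and finally handle (iii) by exploiting a $\sigma_{z}$ conjugation symmetry of the qubit POVMs.

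For (i), I first observe that each qubit operator $Q_{c_{i}|z_{i};\theta_{i}}=\frac{1}{2}(\one+(-1)^{c_{i}}\vec{n}\cdot\vec{\sigma})$ is a rank-one Bloch-sphere projector, so $P_{c|z;\theta}=\bigotimes_{i}Q_{c_{i}|z_{i};\theta_{i}}$ is itself a rank-one projector, say $P_{c|z;\theta}=\ket{\psi_{c|z;\theta}}\bra{\psi_{c|z;\theta}}$ on $(\cmplx^{2})^{\otimes k}$. Using $P_{c|z;\theta}^{2}=P_{c|z;\theta}$ and cyclicity reduces the $cz$-summand of $Q_{\alpha}$ to $F(cz)\mu(z)\,\bra{\psi_{c|z;\theta}}\tau^{1/\alpha}\ket{\psi_{c|z;\theta}}^{\alpha}$. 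The substantive ingredient is then Epstein's concavity theorem (a specialization of Lieb's joint concavity): for any $p\in(0,1]$ and any operator $K$, the map $\tau\mapsto\tr((K^{\dagger}\tau^{p}K)^{1/p})$ is concave on positive semidefinite $\tau$; taking $K=\ket{\phi}\bra{\psi_{c|z;\theta}}$ for any fixed $\ket{\phi}$ and $p=1/\alpha$ yields concavity of $\tau\mapsto\bra{\psi_{c|z;\theta}}\tau^{1/\alpha}\ket{\psi_{c|z;\theta}}^{\alpha}$. A nonnegative combination of concave functions is concave, which gives (i).

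For (ii), note that $\sigma_{z}$ and $\sigma_{x}$ have real entries in the computational basis, so every $Q_{c_{i}|z_{i};\theta_{i}}$ and hence every $P_{c|z;\theta}$ is a real symmetric matrix. Since complex conjugation commutes with real operators and $\tr(P_{c|z;\theta}\tau^{1/\alpha})$ is a nonnegative real number, one obtains $Q_{\alpha}(F,\theta,\bar{\tau})=Q_{\alpha}(F,\theta,\tau)$. Concavity from (i) then gives $Q_{\alpha}(F,\theta,(\tau+\bar{\tau})/2)\geq Q_{\alpha}(F,\theta,\tau)$, and $(\tau+\bar{\tau})/2$ is real, positive semidefinite and of trace one whenever $\tau$ is, so the supremum over $\tau$ in the constraint is attained at a real density operator.

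For (iii), exploit $\sigma_{z}Q_{c|0;\phi}\sigma_{z}=Q_{c|0;\phi}$ and, via $\sigma_{z}\sigma_{x}\sigma_{z}=-\sigma_{x}$, $\sigma_{z}Q_{c|1;\phi}\sigma_{z}=Q_{c|1;-\phi}$. Let $S=\bigotimes_{i}\sigma_{z}^{\epsilon_{i}}$ with $\epsilon_{i}=1$ exactly when $\theta_{i}<0$ (recall $\theta_{i}\in(-\pi,\pi]$, so $-\pi$ does not occur); then $S$ is a real involutory unitary satisfying $SP_{c|z;\theta}S=P_{c|z;\theta'}$ with $\theta'_{i}=|\theta_{i}|\in[0,\pi]$. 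The substitution $\tau\mapsto S\tau S^{\dagger}$ preserves positivity, trace, and reality, while cyclicity yields $Q_{\alpha}(F,\theta,\tau)=Q_{\alpha}(F,\theta',S\tau S^{\dagger})$, so the constraint may be checked only for $\theta_{i}\in[0,\pi]$ without loss. The main obstacle is the concavity step (i); once Epstein's theorem is invoked the symmetrization arguments for (ii) and (iii) are routine.
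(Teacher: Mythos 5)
Your proof follows essentially the same route as the paper: Epstein's concavity theorem (stated in the paper as Carlen's Thm.~7.2, that $A\mapsto\tr((K^{\dagger}A^{1/\alpha}K)^{\alpha})$ is concave for $\alpha\geq 1$) handles the concavity, and the two symmetrization arguments---complex conjugation for reality, $\sigma_z$-conjugation for $\theta_i\in[0,\pi]$---are precisely the paper's; the paper handles the $\theta_i$ reduction one coordinate at a time while you conjugate by a tensor product $S=\bigotimes_i\sigma_z^{\epsilon_i}$ in one shot, which is a cosmetic difference.

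There is, however, a transposition slip in your choice of $K$. With $K=\ket{\phi}\bra{\psi_{c|z;\theta}}$ one has $K^{\dagger}\tau^{1/\alpha}K=\bra{\phi}\tau^{1/\alpha}\ket{\phi}\,\hat\psi_{c|z;\theta}$, so $\tr((K^{\dagger}\tau^{1/\alpha}K)^{\alpha})=\bra{\phi}\tau^{1/\alpha}\ket{\phi}^{\alpha}$---concavity in the $\phi$ direction, not the $\psi_{c|z;\theta}$ direction as you claim. You want $K=\ket{\psi_{c|z;\theta}}\bra{\phi}$, or simply $K=P_{c|z;\theta}$ as the paper does, which gives $\tr((K^{\dagger}\tau^{1/\alpha}K)^{\alpha})=\bra{\psi_{c|z;\theta}}\tau^{1/\alpha}\ket{\psi_{c|z;\theta}}^{\alpha}$ directly. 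Once $K$ is corrected the argument is sound and the remainder of the proof goes through.
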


\begin{proof}
  For the first claim, we apply the general fact that $A\mapsto
  \tr((K^{\dagger} A^{1/\alpha} K)^{\alpha})$ is a concave function in
  $A\geq 0$ given $\alpha\geq 1$, see Ref.~\cite{carlen:qc2009a},
  Thm.~7.2.  The concavity of $\left(\tr(P_{c|z;\theta}\tau^{1/\alpha}P_{c|z;\theta})\right)^{\alpha}$ 
  is obtained with $K=P_{c|z;\theta}$ and $A=\tau$, and since $K$ is now rank $1$, $\tr((K^{\dagger}
  A^{1/\alpha} K)^{\alpha})=(\tr(K^{\dagger} A^{1/\alpha}
  K))^{\alpha}$. It follows that $Q_{\alpha}(F(CZ), \theta,\tau)$ is a positive linear combination
  of concave functions and is therefore itself concave. 
   Concavity implies that the set of $\tau$ over which
  $Q_{\alpha}(F(CZ),\theta,\tau)$ needs to be maximized can be restricted to
  real matrices.  This follows from
  $Q_{\alpha}(F(CZ),\theta,\tau)=Q_{\alpha}(F(CZ),\theta,\bar\tau)$, which is a
  consequence of $P_{c|z;\theta}$ being real, so by concavity
  $Q_{\alpha}(F(CZ),\theta,(\tau+\bar\tau)/2)\geq Q_{\alpha}(F(CZ),\theta,\tau)$.  (Here
  we used mathematics conventions to denote conjugates of complex
  quantities by an overline).  For the last claim, for each $i$, let
  $\sigma_{z}^{(i)}$ be $\sigma_{z}$ acting on the $i$'th subsystem.  By
  periodicity, we may assume $\theta_{i}\in [-\pi,\pi]$.  Fix $i$ and
  define $\theta'$ by $\theta'_{i}=-\theta_{i}$ and
  $\theta'_{l}=\theta_{l}$ for $l\not=i$.  Then
  \begin{align}
    \tr(P_{c|z;\theta}\tau^{1/\alpha}P_{c|z;\theta})
    &= \tr(\sigma_{z}^{(i)}P_{c|z;\theta}\tau^{1/\alpha}P_{c|z;\theta}\sigma_{z}^{(i)}) \notag\\
    &= \tr(P_{c|z;\theta'}\sigma_{z}^{(i)}\tau^{1/\alpha}\sigma_{z}^{(i)}P_{c|z;\theta'})\notag\\
    &=\tr(P_{c|z;\theta'}(\sigma_{z}^{(i)}\tau\sigma_{z}^{(i)})^{1/\alpha}P_{c|z;\theta'}).
  \end{align}
  Since $\tau\mapsto\sigma_{z}^{(i)}\tau\sigma_{z}^{(i)}$ is a bijection
  of density matrices, the maximum over $\tau$ of the above expression
  does not change when $\theta$ is changed to $\theta'$.
  Therefore, if any $\theta_{i}\in [-\pi,0)$, we can replace it with
  $\theta'_{i}=-\theta_{i}$.
\end{proof}

\subsection{Schemas for \QEF Optimization}
\label{subsec:optschemas}

With the help of Lem.~\ref{lem:qefoptk22_real},
Prob.~\ref{prob:qefoptk22} can be attacked by numerical methods.  An
algorithm for solving Prob.~\ref{prob:qefoptk22} needs to certify that
$f_{\max}$ exceeds $Q(F(CZ),\theta,\tau)$ for all $\theta$ and density
operators $\tau\geq 0$. By concavity, given $\theta$, the maximum in
$\tau$ is unique, but the dependence of this maximum on $\theta$ is
less well behaved. We give a strategy for ensuring that $f_{\max}$
satisfies its constraint for all $\tau$ and $\theta$ with arbitrarily
small slack.

Let $H_{1}$ denote the displaced half unit circle in $\rls^{3}$
consisting of the points of the form $(\cos(\theta),\sin(\theta),1)$
with $\theta\in[0,\pi]$, and let $R_{1}$ be the set of semidefinite
operators operators $\chi$ on $k$ qubits that are real with respect to
the logical basis and satisfy $\tr(\chi^{\alpha})=1$.  For the purpose
of distinguishing factors in tensor products, for each $i$ let
$H_{1}^{(i)}$ be an identified copy of $H_{1}$.  Define
$\cR_{1}=R_{1}\otimes (\bigotimes_{i=1}^{k}H_{1}^{(i)})$. Let $\cR$ be
the tensor product of the vector spaces containing $R_{1}$ and the
$H_{1}^{(i)}$.  Write $r_{i}=(u_{i},v_{i},w_{i})$ for a point in
linear span of $H_{1}^{(i)}$.  For each $cz$ the map
\begin{align}
  L_{cz}:\chi,r_{1},\ldots,r_{k}  &\mapsto
  2^{-k}\tr\Big(\chi \notag\\
  &\hphantom{\mapsto\;}\times
  \bigotimes_{i} \big((1+(-1)^{c_{i}}\sigma_{z})\knuth{z_{i}=0} +
  (w_{i} +(-1)^{c_{i}}(u_{i}\sigma_{z}+v_{i}\sigma_{x})\knuth{z_{i}=1})\big)\Big)
\end{align}
is multilinear with respect to $\chi$ and each of the $r_{i}$.
It therefore lifts to a linear map $\tilde L_{cz}$ on $\cR$ so that
$\tilde L_{cz}(\chi\otimes r_{1}\otimes\ldots\otimes r_{k}) =
L_{cz}(\chi,r_{1},\ldots,r_{k})$.
This map satisfies
\begin{equation}
  \tilde L_{cz}(\chi\otimes (\cos(\theta_{1}),\sin(\theta_{1}),1)\otimes\ldots\otimes
  (\cos(\theta_{k}),\sin(\theta_{k}),1))
  = \tr(\chi P_{c|z;\theta}).
\end{equation}
Since $x\mapsto
|x|^{\alpha}$ is convex and the compositions of linear and convex maps
are convex, the map $|\tilde L_{cz}|^{\alpha}$ is convex.
Since positive linear combinations of convex maps are convex, the
map 
\begin{equation}\label{eq:def_tilde_Q}
  \tilde Q_{\alpha}: F(CZ),u\in\cR \mapsto \sum_{cz}\mu(z)F(cz)|\tilde L_{cz}(u)|^{\alpha}
\end{equation}
is convex. 

In Prob.~\ref{prob:qefoptk22} with $F(CZ)$ fixed, we can set
$f_{\max}$ to $f_{\max}(F(CZ))=\max_{u\in\cR_{1}}\tilde
Q(F(CZ),u)$. Given an algorithm to determine $f_{\max}(F(CZ))$, any
generic local search algorithm can be used to optimize $F(CZ)$, so we
focus on algorithms for $f_{\max}$. A certified upper bound
  on $f_{\max}$ suffices, and such a bound can be obtained by maximizing
$\tilde Q_{\alpha}$ over any convex set $\cR'\supseteq\Cvx(\cR_{1})$.
For example, if $\cP^{(i)}$ are convex polygons
satisfying $\Cvx(H_{1}^{(i)})\subseteq\cP^{(i)}$, then we can let
$\cR'=\Cvx(R_{1})\otimes \bigotimes_{i=1}^{k}\cP^{(i)}$.  
Because $\tilde Q_{\alpha}(F(CZ),u)$ is convex in $u$, the maximum
  is achieved on an extreme point of $\cR'$ and the upper
  bound becomes tight in the limit where the $\cP^{(i)}$ converge to
  $\Cvx\left(H_{1}^{(i)}\right)$.  The extreme points of $\cR'$ are tensor
products of some $\chi\in \Cvx(R_{1})$  with members
of the finite sets $\Xtrm(\cP^{(i)})$. 
\Pc{In fact, for extreme points, $\chi\in R_{1}$: Since $R_{1}$ is closed and bounded, $R_{1}$ contains the extreme points of $\Cvx(R_{1})$ (Thm 1.4.5 of Ref.~\cite{kadison:qf1997a}).}
 Provided we can effectively maximize over $\chi
\in \Cvx(R_{1})$, there are finitely many tensor products of
extreme points of $\Xtrm(\cP)$ to check.  Let
$r=\bigotimes_{i=1}^{k}r_{i}$ be in
$\bigotimes_{i=1}^{k}\Xtrm(\cP^{(i)})$, where
$r_{i}=(u_{i},v_{i},1)=((1+\epsilon_{i})\cos(\theta_{i}),(1+\epsilon_{i})\sin(\theta_{i}),1)$.
Then
\begin{equation}
  \tilde L_{cz}(\chi\otimes r) = \tr(\chi\otimes \bigotimes_{i=1}^{k} P_{i}),
\end{equation}
where for $z_{i}=0$, $P_{i}=(\one + (-1)^{c_{i}}\sigma_{z})/2$ and for
$z_{i}=1$, $P_{i}=(\one + (1+\epsilon_{i})(-1)^{c_{i}}\sigma_{\hat u_{i}})/2$
with
$\sigma_{\hat u_{i}}=\cos(\theta_{i})\sigma_{z}+\sin(\theta_{i})\sigma_{x}$.
An issue is that $P_{i}$ is not positive semidefinite, so the concavity
property with respect to $\tau$ with $\tau^{1/\alpha}=\chi$ does not
apply and maximizing over $\chi\in \Cvx(R_{1})$ is more difficult.  \Pc{The
  expression we have is of the form
  $|\tr(\tau^{1/\alpha}P)|^{\alpha}$.  For $P$ positive this is the
  same as $\tr(P^{1/2}\tau^{1/\alpha}P^{1/2})^{\alpha}$ which is
  concave in $\tau$. When $P$ is not positive, this relationship
  fails.}  To avoid this difficulty we give an algorithm that uses
inner approximations of $\Cvx(H_{1}^{(i)})$ instead.

For the simplest algorithm, let $\cX=(j\pi/m)_{j=0}^{m}$ evenly divide
$[0,\pi]$ with $m\geq 2$. Write
$r(\theta)=\bigotimes_{i=1}^{k}(\cos(\theta_{i}),\sin(\theta_{i}),1)$.
Let $\cX^{l}$ denote the $l$-fold cartesian product of $\cX$ with
  itself.  For each $r\in r\left(\cX^{k}\right)$, compute
$f_{\max}(r)=\max\{\tilde Q_{\alpha}(F(CZ),\tau^{1/\alpha}\otimes
r):\tau^{1/\alpha}\in R_{1}\}$, where the maximization is concave over
real $2^{k}\times 2^{k}$ density matrices $\tau$.  How to perform this
maximization will be explained later.  Given that $f_{\max}(r)$ has been determined for all $r\in
  r\left(\cX^{k}\right)$, a lower bound on $f_{\max}$ is given by
$f_{\max}\geq \max\left\{f_{\max}(r):r\in
  r\left(\cX^{k}\right)\right\}$. An upper bound can be obtained
  by recursively applying the next lemma.
\begin{lemma}\label{lem:fmax_intervalbnd}
  Consider $\theta,\theta'$ so that $\theta'-\theta=\phi e_{i}$ where
  $\phi\in(0,\pi/2]$ and $e_{i}=(\knuth{j=i})_{j=1}^{k}$.  Let $f =
  f_{\max}(r(\theta))$ and $f'= f_{\max}(r(\theta'))$. For $\varphi\in[0,\phi]$
  and $\theta''=\theta+\varphi e_{i}$,
  \begin{align}
    f_{\max}(r(\theta''))  &\leq u(\varphi)\defeq
    \frac{(\sin(\phi-\varphi)+\sin(\varphi))^{\beta}(\sin(\phi-\varphi) f +
      \sin(\varphi)f')}{\sin(\phi)^{\alpha}}.
    \label{eq:lem:fmax_intervalbnd:1}
  \end{align}
  The bound $u(\varphi)$ is log-concave in $\varphi$ and satisfies
  \begin{align}
    u(\varphi) &\leq \left(\frac{\phi}{\sin(\phi)}\right)^{\alpha}\max(f,f').
    \label{eq:lem:fmax_intervalbnd:2}
  \end{align}
\end{lemma}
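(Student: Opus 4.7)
The plan is to reduce the bound to a pointwise inequality at each admissible $\chi$ and each pair $(c,z)$, exploiting that only the $i$-th tensor slot of $r(\theta)$, $r(\theta')$, $r(\theta'')$ differs and that $\tilde L_{cz}$ is multilinear in those slots. With $\vartheta=\theta_i$, $a=\sin(\phi-\varphi)/\sin\phi$, $a'=\sin\varphi/\sin\phi$, and $s=a+a'$, the identity $\sin\phi\cos(\vartheta+\varphi)=\sin(\phi-\varphi)\cos\vartheta+\sin\varphi\cos(\vartheta+\phi)$ (and its sine analogue) yields, after a one-line Bloch-vector calculation,
\[
Q_{c_i|1;\vartheta+\varphi}=a\,Q_{c_i|1;\vartheta}+a'\,Q_{c_i|1;\vartheta+\phi}-\tfrac{s-1}{2}\one.
\]
A short check using $\sin(\phi-\varphi)+\sin\varphi=2\sin(\phi/2)\cos(\phi/2-\varphi)$ on $[0,\phi]\subseteq[0,\pi/2]$ gives $s\ge 1$, so the coefficient of $\one$ is nonpositive.

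For $z_i=1$, inserting this identity into the $i$-th slot of $\tilde L_{cz}(\chi\otimes r(\theta''))=\tr\bigl(\chi\bigotimes_j Q_{c_j|z_j;\theta_j''}\bigr)$ and using multilinearity gives
\[
\tilde L_{cz}(\chi\otimes r(\theta''))=a\,\tilde L_{cz}(\chi\otimes r(\theta))+a'\,\tilde L_{cz}(\chi\otimes r(\theta'))-\tfrac{s-1}{2}T_{cz},
\]
where $T_{cz}=\tr\bigl(\chi\bigotimes_{j\ne i}Q_{c_j|z_j;\theta_j}\otimes\one\bigr)\ge 0$ since $\chi\ge 0$ and each $Q_{c_j|z_j;\theta_j}\ge 0$. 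The left-hand side is also nonnegative (same reason), so dropping the defect term yields $0\le\tilde L_{cz}(\chi\otimes r(\theta''))\le ax+a'y$ with $x=\tilde L_{cz}(\chi\otimes r(\theta))\ge 0$ and $y=\tilde L_{cz}(\chi\otimes r(\theta'))\ge 0$. A scaled Jensen inequality, with weights $a/s$, $a'/s$ summing to $1$, applied to $t\mapsto t^\alpha$ on $[0,\infty)$, then gives $(ax+a'y)^\alpha\le s^\beta(ax^\alpha+a'y^\alpha)$. For $z_i=0$ the function $\tilde L_{cz}$ is independent of $\theta_i$ and the same bound holds trivially because $s^\alpha\ge 1$.

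Summing against $\mu(z)F(cz)$ and then taking $\max_\chi$, using $\max(g+h)\le\max g+\max h$, yields $f_{\max}(r(\theta''))\le s^\beta(af+a'f')$, which unfolds to exactly $u(\varphi)$ of Eq.~\ref{eq:lem:fmax_intervalbnd:1}. For log-concavity, any $h(\varphi)=A\sin(\phi-\varphi)+B\sin\varphi$ with $A,B>0$ satisfies $h''=-h$, so $(\log h)''=-1-(h'/h)^2<0$ on $[0,\phi]$; hence $\log(\sin(\phi-\varphi)+\sin\varphi)$ and $\log(\sin(\phi-\varphi)f+\sin\varphi f')$ are each concave, and $\log u(\varphi)$, which differs from their sum only by the constant $-\alpha\log\sin\phi$, is concave. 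For Eq.~\ref{eq:lem:fmax_intervalbnd:2}, rewrite
\[
u(\varphi)=\Bigl(\tfrac{\sin(\phi-\varphi)+\sin\varphi}{\sin\phi}\Bigr)^{\!\alpha}\cdot\tfrac{\sin(\phi-\varphi)f+\sin\varphi f'}{\sin(\phi-\varphi)+\sin\varphi};
\]
the second factor is a genuine convex combination of $f$ and $f'$, hence at most $\max(f,f')$, and $\sin x\le x$ gives $\sin(\phi-\varphi)+\sin\varphi\le\phi$.

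The only step requiring real thought is recognizing the non-convex trigonometric decomposition of $Q_{c_i|1;\vartheta+\varphi}$ and noticing that its defect $-\tfrac{s-1}{2}\one$ lies on the favorable side, so that the pointwise inequality $\tilde L_{cz}(\chi\otimes r(\theta''))\le a\tilde L_{cz}(\chi\otimes r(\theta))+a'\tilde L_{cz}(\chi\otimes r(\theta'))$ survives; everything after that (scaled Jensen, subadditivity of $\max$, and the elementary estimates $h''=-h$ and $\sin x\le x$) is mechanical.
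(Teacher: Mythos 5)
Your proof is correct and takes essentially the same approach as the paper. The exact identity $Q_{c_i|1;\vartheta+\varphi}=aQ_{c_i|1;\vartheta}+a'Q_{c_i|1;\vartheta+\phi}-\tfrac{s-1}{2}\one$ with $s\ge1$ is the paper's chord-geometry operator inequality $P''\le(\lambda P+(1-\lambda)P')/b$ under the substitutions $a=\lambda/b$, $a'=(1-\lambda)/b$, $s=1/b$, just obtained here by a direct product-to-sum trigonometric identity rather than plane geometry; the remaining steps (scaled Jensen with weights $a/s,a'/s$, subadditivity of $\max$ over $\chi$, log-concavity, and $\sin x\le x$) coincide with the paper's.
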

If only upper bounds $u$ and $u'$ respectively on $f$ and $f'$ are known, 
then upper bounds on $f_{\max}(r(\theta''))$ can be obtained from Eqs.~\ref{eq:lem:fmax_intervalbnd:1}
and~\ref{eq:lem:fmax_intervalbnd:2} with the replacement of $f$ and $f'$ by 
their upper bounds $u$ and $u'$.

\begin{proof}
  Write $f'' = f_{\max}(r(\theta''))$.  Let $\chi$ witness $f''$ in
  the sense that $f'' = \tilde Q_{\alpha}(F(CZ),\chi\otimes r(\theta''))$.  For
  each $cz$, consider the contribution $f''(cz)=\mu(z)F(cz)\tilde
  L_{cz}(\chi\otimes r(\theta''))^{\alpha}$ to $f''$.  If $z_{i}=0$,
  then
  \begin{equation}
    f''(cz) = \mu(z)F(cz)\tilde L_{cz}(\chi\otimes r(\theta))^{\alpha} = 
    \mu(z)F(cz)\tilde L_{cz}(\chi\otimes r(\theta'))^{\alpha},
  \end{equation}
  since for $z_{i}=0$, the $i$'th factor $P^{(i)}_{c_i|z_i,\psi_i}$ of $P_{c|z;\psi}$ does not
  depend on $\psi_{i}$.  For $z_{i}=1$, the $i$'th factor of
  $P_{c|z;\theta''}$ is $(\one + \cos(\theta_{i}+\varphi)\sigma_{z}+
  \sin(\theta_{i}+\varphi)\sigma_{x})/2$.  Let
  $a=(\cos(\theta_{i}),\sin(\theta_{i}))$,
  $a'=(\cos(\theta_{i}+\phi),\sin(\theta_{i}+\phi))$ and
  $a''=(\cos(\theta_{i}+\varphi),\sin(\theta_{i}+\varphi))$.  Then there
  exist $\lambda\in[0,1]$ and $b\in(0,1]$ such that $\lambda a+
  (1-\lambda)a' = b a''$. The values of $\lambda$ and $b$ 
  will be determined later.
  Given such $\lambda$ and $b$, we have
  \begin{equation}
    P^{(i)}_{c_{i}|z_{i};\theta_{i}+\varphi} \leq
    P^{(i)}_{c_{i}|z_{i};\theta_{i}+\varphi} + (1/b-1) \one= 
    (\lambda P^{(i)}_{c_{i}|z_{i};\theta_{i}}
    + (1-\lambda) P^{(i)}_{c_{i}|z_{i};\theta_{i}+\phi})/b.
    \label{eq:after_a'''}
  \end{equation}
  The operator inequality extends to
  \begin{equation}
    \chi\otimes P_{c|z;\theta''}
    \leq \left(\lambda (\chi\otimes P_{c|z;\theta})+(1-\lambda)(\chi\otimes P_{c|z;\theta'})\right)/b.
  \end{equation}
  By operator monotonicity, homogeneity and convexity it follows that
  \begin{equation}
    f''(cz) \leq \mu(z)F(cz)\left(\lambda\tilde L_{cz}(\chi\otimes r(\theta))^{\alpha} + (1-\lambda) 
    \tilde L_{cz}(\chi\otimes r(\theta'))^{\alpha}\right)/b^{\alpha}.
  \end{equation}
  Since $b<1$, this inequality is also satisfied for $z_{i}=0$.  Since
  $f \geq \tilde Q_{\alpha}(F(CZ),\chi\otimes r(\theta))$ and similarly for
  $f'$, after summing over $cz$ to add the contributions to $f''$, we
  conclude that
  \begin{equation}
    f'' \leq (\lambda f + (1-\lambda) f')/b^{\alpha}.  
  \end{equation}

  To determine $\lambda$ and $b$ in terms of $\phi$ and $\varphi$, we
  solve a geometrical problem involving chords. For this paragraph we
  use notational conventions from plane geometry.  Let $O$ be the
  center of a unit circle and $A$, $B$ and $C$ points on the
  circumference with $C$ between $A$ and $B$.  Write $\angle AOB
  =\phi$ and $\angle AOC = \varphi$. Let $M$ be the intersection of
  the lines $\overline{OC}$ and $\overline{AB}$. Let $x=AM$, $y=MB$
  and $b=OM$ be the lengths of the respective line segments.  Then $b
  \sin(\varphi)+b\sin(\phi-\varphi) = \sin(\phi)$ since the
  $\sin(\phi)/2$ is the area of $\triangle OAB$, $b\sin(\varphi)/2$ the
  area of $\triangle OAM$ and $b\sin(\phi-\varphi)/2$ the area of
  $\triangle OMB$. Thus
  $b=\sin(\phi)/(\sin(\varphi)+\sin(\phi-\varphi))$.  Since $\angle
  OAB=(\pi/2-\phi/2) $, $x \sin(\pi/2-\phi/2)=b\sin(\varphi)$ and
  $y\sin(\pi/2-\phi/2)=b\sin(\phi-\varphi)$. 
  From this we determine $\lambda=y/(x+y)= \sin(\phi-\varphi)/(\sin(\varphi)+\sin(\phi-\varphi))$.
  Summarizing, we
  have
  \begin{align}
    b &= \frac{\sin(\phi)}{\sin(\varphi)+\sin(\phi-\varphi)}\in(0, 1],\notag\\
    \lambda &= \frac{\sin(\phi-\varphi)}{\sin(\varphi)+\sin(\phi-\varphi)}\in[0,1].
  \end{align}
  By rotational symmetry, the desired identity
  $\lambda a+(1-\lambda)a' = b a''$ is satisfied with
  $a$, $a'$ and $a''$ as defined before Eq.~\ref{eq:after_a'''}.
  \Pc{Since
    $\sin(\phi)=
    \sin((\phi-\varphi)+\varphi)=
    \sin(\varphi)\cos(\phi-\varphi)+\sin(\phi-\varphi)\cos(\varphi)$,
    we also have
    \begin{align*}
      b&= \frac{\sin(\varphi)\cos(\phi-\varphi)+\sin(\phi-\varphi)\cos(\varphi)}
      {\sin(\varphi)+\sin(\phi-\varphi)}\\
      &= (1-\lambda)\cos(\phi-\varphi)+\lambda\cos(\varphi).
    \end{align*}
  }
  It is possible to maximize the upper bound $(\lambda f
  +(1-\lambda)f')/b^{\alpha}$ on $f''$ over $\varphi\in [0,\phi]$.
  In terms of $\varphi$, the bound is
  \begin{align}
    u(\varphi) &= \frac{\lambda f  +(1-\lambda)f'}{b^{\alpha}}\notag\\
    &= 
    \frac{(\sin(\phi-\varphi)+\sin(\varphi))^{\beta}(\sin(\phi-\varphi) f + \sin(\varphi)f')}{\sin(\phi)^{\alpha}}.
  \end{align}
  To show that the function $u(\varphi)$ has a unique maximum we
  prove log-concavity in $\varphi$. Consider
  \begin{equation}
    v(\varphi)=\log(\sin(\phi)^{\alpha}u(\varphi))
    =\beta\log(\sin(\phi-\varphi)+\sin(\varphi))+\log(\sin(\phi-\varphi) f + \sin(\varphi)f').\label{eq:lem:fmax_intervalbnd:3}
  \end{equation}
  As a functions of $\varphi$, both $\sin(\phi-\varphi)$ and $\sin(\varphi)$ are concave
  for the values of $\phi$ and $\varphi$ under consideration. Therefore,
  any linear combination $g(\varphi)=c\sin(\phi-\varphi)+c'\sin(\varphi)$ with $c,c'\geq 0$
  is concave. Since $\log$ is monotone increasing and concave,  
  $\log(g(\varphi))$ 
  is concave for any concave $g(\varphi)$. Consequently, $v(\varphi)$ is the sum of 
  two concave functions and therefore also concave. 

  We use the small angle approximation to upper bound
  $u(\varphi)$. Applying the inequalities $\sin(\phi-\varphi)\leq
  (\phi-\varphi)$ and $\sin(\varphi)\leq\varphi$ gives
  \begin{equation}
    u(\varphi) \leq \left(\frac{\phi}{\sin(\phi)}\right)^{\alpha}\max(f,f').
  \end{equation}
\end{proof}

The maximum of the bound $u(\varphi)$ defined in 
Eq.~\ref{eq:lem:fmax_intervalbnd:1} can be found as follows: With
$v(\varphi)$ as defined in Eq.~\ref{eq:lem:fmax_intervalbnd:3} and
considering the concavity of $v(\varphi)$, if the derivative
$v^{(1)}(0)\leq 0$ the maximum of $u(\varphi)$ is $f$, 
if $v^{(1)}(\phi)\geq 0$, the maximum is $f'$, and otherwise there is a 
unique critical point $\varphi_0$ between $0$ and $\phi$ for $v(\varphi)$, 
and the maximum of $u(\varphi)$ is $u(\varphi_0)$. The critical point 
is found by solving $v^{(1)}(\varphi_0)=0$.

We can now determine an upper bound on $f_{\max}$ from the values of
$f_{\max}(r)$ for $r\in r\left(\cX^{k}\right)$.  Write
$\theta_{>l}=(\theta_{l+i})_{i=1}^{k-l}$ and $\theta_{\leq
  l}=(\theta_{i})_{i=1}^{l}$ so that $\theta=\theta_{\leq
  l}\theta_{>l}$ with our concatenation conventions.  For any $l$
define
\begin{equation}
  f_{\max}(\theta_{>l})=\max_{\chi,\theta_{\leq l}}
  \tilde Q_{\alpha}(F(CZ),\chi\otimes r(\theta_{\leq l}\theta_{>l})),
\end{equation}
where we are overloading the symbol $f_{\max}$ by making it depend on
the type and length of the argument. 
The upper bound on $f_{\max}$ can be obtained recursively, where
at the $l$'th step we obtain upper bound $v(\theta_{>l})$ on
$f_{\max}(\theta_{>l})$, so that the $k$'th step yields an upper
bound on $f_{\max}$.  To initialize the procedure (the $0$'th step),
we determine $f_{\max}(\theta)$ for all $\theta=\theta_{>0}\in
\cX^{k}$. This requires a method for maximizing $\tau\in
S_{1}(\cH)\mapsto \tilde
Q_{\alpha}(F(CZ),\tau^{1/\alpha}\otimes r)$ for given $r$, and
such a method is given later in this section.  Let
$v(\theta)=f_{\max}(\theta)$.  For the $l$'th step, fix
$\theta_{>l}\in \cX^{k-l}$. From the previous steps, for all
$\theta_{l}\in \cX$, we have determined upper bounds
$v(\theta_{l}\theta_{>l})\geq f_{\max}(\theta_{l}\theta_{>l})$.  For
any pair of successive $\psi,\psi'\in\cX$,  we can
apply Lem.~\ref{lem:fmax_intervalbnd} to obtain a bound
$u(\psi,\psi') \geq f_{\max}(\psi''\theta_{>l})$ for all
$\psi''\in[\psi,\psi']$.  The maximum of these bounds is an upper
bound on $f_{\max}(\theta_{>l})$.  After having determined
  $u(\psi,\psi')$, we can set $v(\theta_{>l})=\max\{u(\psi,\psi'):
  \textrm{$\psi,\psi'$ are successive pairs in $\cX$}\}$. 

The upper and lower bounds on $f_{\max}$ obtained converge
with the resolution $m$ used for $\cX$. It is possible
to start at low resolution, and refine the subdivision $\cX$ if the
gap between lower and upper bounds is too large. However, not all
intervals need refinement and we can significantly reduce the work
required by selectively refining a cubical grid in
$\bigotimes_{i=1}^{k}H_{1}^{(i)}$. The grid-refinement algorithm's
state contains two data structures. 
Let $[0,\pi]^{l}$ denote the $l$-fold cartesian product of $[0,\pi]$ with
itself. The first data structure is $\cT$ and contains
the pairs of
$\theta\in [0,\pi]^{k}$ and the corresponding values
$f_{\max}(\theta)$
for which $f_{\max}(\theta)$ has
been determined.
The second is $\cK$ and consists of cuboidal
regions in $[0,\pi]^{k}$, where each region $K$ is
specified by its $2^{k}$ vertices. The region $K$ comes with an upper
bound $f_{\max}(K)\geq \max_{\theta\in K}f_{\max}(\theta)$. The
structure $\cK$ may be organized as a priority heap, 
where the priority of the region $K$ is determined by $f_{\max}(K)$. The region $K$'s vertices can
be given in the form $\theta+\sum_{i\in I}\varphi_{i} e_{i}$ for
subsets $I$ of $[k]$, and $K$ consists of the convex closure of the
set of these vertices. We require that 1) $\cT$ contains 
the vertices of regions in $\cK$, and 2) the union of the closed
cubical regions of $\cK$ is $[0,\pi]^{k}$.  We can also
ensure that the cubical regions have disjoint interiors.  The current
overall upper bound $f_{\max}$ is the maximum of $f_{\max}(K)$ over
regions $K$ in $\cK$. A lower bound is given by the maximum of
$f_{\max}(\theta)$ over the $\theta$ in $\cT$.  The algorithm is
initialized with a grid $\cX$ for some resolution $m\geq 2$.  For
this, it computes $f_{\max}(\theta)$ for each $\theta\in\cX$ and adds
$(\theta, f_{\max}(\theta))$ to $\cT$. It then iterates over the
cubical regions $K$ defined by $\cX$, computes $f_{\max}(K)$ and adds
$(K,f_{\max}(K))$ to $\cK$. We can compute $f_{\max}(K)$
for $K$ consisting of the convex closure of $\{\theta+\sum_{i\in
  I}\varphi_{i} e_{i}:I\subseteq [k]\}$ according to the strategy for
computing the global $f_{\max}$ given $\cX$.  For this, we replace
$\cX$ by $\prod_{i}\{\theta_{i},\theta_{i}+\varphi_{i}\}$, which is
the cartesian product of the sets $\{\theta_{i},\theta_{i}+\varphi_{i}\}$. 
The
strategy gives the value of $f_{\max}(K)$ for the region $K$ covered
by the convex closure of 
$\prod_{i}\{\theta_{i},\theta_{i}+\varphi_{i}\}$.  After
initialization, the algorithm updates the structures in each step by
refining the top region $K$ in $\cK$. If $K$ is the convex closure of
$\{\theta+\sum_{i\in I}\varphi_{i} e_{i}:I\subseteq [k]\}$, 
 a possible refinement strategy  is to divide each of $K$'s edges in 
 two for $2^{k}$ subregions defined as the convex closures $K_{J}$ of
$\{\theta+\sum_{i\in J}\varphi_{i} e_{i}/2+\sum_{i\in I}\varphi_{i}
e_{i}/2:I\subseteq [k]\}$
for $J\subseteq [k]$. 
For each new vertex
$\theta'$, if the vertex is not in $\cT$, the algorithm computes 
$f_{\max}(\theta')$ and adds $(\theta', f_{\max}(\theta'))$ to $\cT$.  
For each $K_{J}$ the algorithm computes $f_{\max}(K_{J})$ and adds 
$(K_{J}, f_{\max}(K_{J}))$ to $\cK$. The
original region $K$ is removed from $\cK$ at the beginning of the
refinement cycle.

To complete the schema for determining $f_{\max}$, we return to the
problem of maximizing the concave, homogeneous-of-degree-1 function
$g:\tau\in S_{1}(\cH)\mapsto \tilde Q_{\alpha}(F(CZ),\tau^{1/\alpha}\otimes r)$
for fixed $r\in \bigotimes_{i=1}^{k}H_{1}^{(i)}$.  It can in principle be
maximized by any method for concave maximization over a domain defined
by semi-definite constraints. Here we have a special domain and we can
take advantage of this. Further, $g$ is differentiable at full rank
$\tau$. Write $g$ in the form
\begin{equation}
  g(\tau) = \sum_{cz}\left(\tr(\tau^{1/\alpha}Q_{cz})\right)^{\alpha}
\end{equation}
for a family of positive semidefinite operators $Q_{cz}$.  Each
$Q_{cz}$ is a product of $(\mu(z)F(cz))^{1/\alpha}$ and a rank-1
projector $P_{c|z;\theta}$.  We begin by reducing the problem to the
case where it suffices to consider operators $\tau$ with full support on one
of the irreducible subspaces generated by the $Q_{cz}$.
Let $\Pi_{0}$ be the null-space projector for $\tau$.  Suppose that
$\Pi_{0}\ne 0$, and consider changing $\tau$ to $\tau'=(1-\epsilon)\tau+\epsilon
\Pi_{0}/\tr(\Pi_{0})$.  Then
\begin{equation}
  \tau'^{1/\alpha}=
  (1-\epsilon)^{1/\alpha}\tau^{1/\alpha}+(\epsilon/\tr(\Pi_{0}))^{1/\alpha}\Pi_{0}
  = \tau^{1/\alpha}+\gamma \epsilon^{1/\alpha}\Pi_{0} + O(\epsilon),
\end{equation}
with $\gamma=(\tr(\Pi_{0}))^{-1/\alpha}$. Consider the set $I$ of
$cz$ such that $\tr(\tau Q_{cz})> 0$ and $\tr(Q_{cz}\Pi_{0})>
0$. For $cz\in I$,  $\tr(\tau^{1/\alpha}Q_{cz})>0$ and
\begin{align}
  \left(\tr(\tau'^{1/\alpha}Q_{cz})\right)^{\alpha}
  &= \left(\tr(\tau^{1/\alpha}Q_{cz}) + \gamma\epsilon^{1/\alpha}\tr(\Pi_{0}
  Q_{cz})+ O(\epsilon)\right)^{\alpha} 
  \notag\\
  &= \left(\tr(\tau^{1/\alpha}Q_{cz})\right)^{\alpha}
     +\alpha (\tr(\tau^{1/\alpha}Q_{cz}))^{\beta}
     \gamma\epsilon^{1/\alpha}\tr(\Pi_{0}Q_{cz}) + o(\epsilon^{1/\alpha}).
\end{align}
If $\tr(Q_{cz}\tau)=0$ or $\tr(Q_{cz}\Pi_{0})=0$, then
$(\tr(\tau'^{1/\alpha}Q_{cz}))^{\alpha} =
(\tr(\tau^{1/\alpha}Q_{cz}))^{\alpha}+O(\epsilon)$.  It follows that
if $I$ is not empty, for small enough $\epsilon>0$, $g(\tau')-g(\tau)$
is dominated by positive terms of order $\epsilon^{1/\alpha}$ and,
unless $I$ is empty, $\tau$ does not maximize $g$.  The set $I$ is
empty iff for all $cz$ either $\tr(Q_{cz}\tau)=0$ or
$\tr(Q_{cz}\Pi_{0})=0$, which implies that every $Q_{cz}$ is supported
in $\one-\Pi_{0}$ or in $\Pi_{0}$. In other words, the $Q_{cz}$ can be
block-diagonalized with respect to $\Pi_{0}$.  Let $\{\Pi_{i}\}_{i}$
be a maximal complete set of projectors for which the $Q_{cz}$ are
block-diagonal. Equivalently, the $\Pi_{i}$ 
project onto the irreducible subspaces of the algebra generated by the
$Q_{cz}$ and generate the center of this algebra.  For an orthogonal
$U$ that commutes with all $Q_{cz}$, $(\tr((U\tau
U^{T})^{1/\alpha}Q_{cz}))^{\alpha}=
(\tr(\tau)^{1/\alpha}Q_{cz})^{\alpha}$ for all $cz$. Since averaging
over such $U$ is decoherence of $\tau$ with respect to the center of
the algebra generated by the $Q_{cz}$ and by concavity, the maximum of
$g$ is achieved for $\tau$ block-diagonal with respect to the
$\Pi_{i}$. We can then write $\tau$ as a mixture
  $\tau=\bigoplus_{i}\mu(i)\tau_{i}$ where the $\tau_{i}$ are density
  matrices supported in the $i$'th irreducible subspace and $\mu$ is a
  probability distribution. With this,
  $g(\tau)=\sum_{i}\mu(i)g(\tau_{i})$, so $g(\tau)\leq
  \max_{i}g(\tau_{i})$, and the problem reduces to the case where
  $\tau$ has full support in one of the irreducible subspaces.  
  We remark that for determining $f_{\max}$ it may be
necessary to check for  reducability of the $Q_{cz}$. In particular,
for the cases where $F(cz)$ has zeros or if any of the angles defining
the $Q_{cz}$ are $0$ or $\pi$, the algebra generated by the $Q_{cz}$
may not be complete, in which case the $Q_{cz}$ can be jointly block
diagonalized.

The previous paragraph implies that it suffices to consider the
general problem of maximizing a concave, homogeneous- of-degree-1 and
differentiable function $g:\tau\in S_{1}(\cH) \mapsto g(\tau)$ over
real positive density operators.  Let $\grad g$ be the derivative
expressed as a Hermitian operator so that for positive semidefinite
$\tau+\epsilon\Delta$,
$g(\tau+\epsilon\Delta)=g(\tau)+\epsilon\tr(\Delta\grad g
)+o(\epsilon)$. An iterative maximization algorithm updates
$\tau$ to $\tau'$ to approach the maximum.  For this problem, given a
density operator $\Delta$, we can update $\tau' = (1-\epsilon)\tau +
\epsilon\Delta$ to satisfy the constraints. By degree-1 homogeneity,
$\tr(\tau\grad g(\tau) )=g(\tau)$. \Pc{Homogeneity implies that
  $g(\tau+\epsilon \tau)=(1+\epsilon)g(\tau)= g(\tau)+\epsilon
  g(\tau)$. The statement follows because this has to match the
  first-order expansion in terms of $\grad g(\tau)$.}  Thus $g(\tau') =
(1-\epsilon)g(\tau)+ \epsilon\tr(\Delta\grad g(\tau))+o(\epsilon)$.
Write $\grad g(\tau)=\sum_{i=1}^{d}\lambda_{i}\Pi_{i}$ with $\Pi_{i}$
a complete family of orthogonal projectors onto the distinct
eigenvalue eigenspaces of $\grad g(\tau)$.  We order the eigenvalues
so that $\lambda_{1}$ is the maximum eigenvalue.  Then we have
$\tr(\Delta\grad g(\tau))\leq\lambda_{1}$, so it is natural to choose
directions $\Delta$ supported in $\Pi_{1}$.  The maximum is achieved
if $\lambda_{1}=g(\tau)$, in which case necessarily $\tau$ is
supported in $\Pi_{1}$, and $\Pi_{1}=\one$ since $\tau$ has full
support.  That is, $\Pi_{1}=\one$ is a necessary and sufficient
condition for maximum $g(\tau)$.  If this condition is not satisfied,
an update option is to set $\Delta=\Pi_{1}/\tr(\Pi_{1})$.  An
alternative is to set $\Delta=\knuth{\grad
  g(\tau)>g(\tau)}/\tr(\knuth{\grad g(\tau)>g(\tau)})$. One can choose
$\epsilon$ according to a schedule such as one of 
those used in the Frank-Wolfe algorithm~\cite{jaggi:qc2013a}, or
one can choose $\epsilon$ by performing a one-dimensional maximization
in the direction $\Delta$.  Concave maximization over density matrices
is also a task for maximum-likelihood state tomography, where a common
strategy is the $R\rho R$ algorithm~\cite{hradil:qc2004a}.  A diluted
version of this algorithm~\cite{rehacek:qc2006a} could be used here
also.  However, the methods discussed so far do not have good
convergence properties, so some exploration may be required to
determine the best update strategy. Convergence issues can be
mitigated by taking advantage of the fact that $\lambda_{1}$ is also
an upper bound on the maximum value of $g$, so $\lambda_{1}-g(\tau)$
is the gap and can be used as a stopping criterion, noting that we
often do not require extremely small gaps between upper and lower
bounds in our applications.

For computing $\grad g$, it suffices to consider the coefficients of
the form $g_{P}(\tau)=\tr(\tau^{1/\alpha}P)^{\alpha}$ of $\mu(z)F(cz)$ in
the sum for $\tilde Q_{\alpha}$. Here $P$ is a projector. We can write
the gradient in the form
\begin{equation} \label{eq:gradient_g_fun}
  \grad_{\tau} g_{P}(\tau) = \alpha\tr(\tau^{1/\alpha}P)^{\beta} X,
\end{equation}
where $X\defeq \grad_{\tau} \tr(\tau^{1/\alpha}P)$. To compute 
$X$ requires perturbation techniques. Write
$\tau'=\tau+\epsilon\Delta$ and express
$\tau=\sum_{i}\lambda_{i}\Pi_{i}$ in terms of its eigenspace
projectors, where the $\lambda_{i}$ are positive.  This enables a
unique decomposition of $\Delta$ in the form
$\Delta=\sum_{i}\Delta_{i}+[S,\tau]$, where the support of
$\Delta_{i}$ is in $\Pi_{i}$ and $S$ is skew-symmetric with
$\Pi_{i}S\Pi_{i}=0$ for each $i$.  To compute $\Delta_{i}$ and $S$ in
terms of $\Delta$, define $\Delta_{ij}=\Pi_{i} \Delta\Pi_{j}$.  Then
$\Delta_{i}=\Delta_{ii}$ and $S=\sum_{i\not=j}S_{ij}$ with $S_{ij} =
\Delta_{ij}/(\lambda_{j}-\lambda_{i})$.  For orthogonal $U$, $(U\tau
U^{T})^{1/\alpha}=U\tau^{1/\alpha}U^{T}$.  With $U=e^{\epsilon S}$,
$\gamma>0$ and $Y$ commuting with $\tau$, we have $U(\tau+\epsilon
Y)^{\gamma}U^{T} = \tau^{\gamma} + \epsilon
\gamma\tau^{\gamma-1}Y+\epsilon[S,\tau^{\gamma}] + O(\epsilon^{2})$,
where we used the assumption that $\tau$ is positive.  For
sufficiently small $\epsilon$, we can expand
\begin{align}
  (\tau+\epsilon\Delta)^{1/\alpha} & = \left(\tau
    + \sum_{i}\epsilon\Delta_{i} + \epsilon [S,\tau]\right)^{1/\alpha} \notag\\
  &= \left(U(\tau+
    \sum_{i}\epsilon\Delta_{i})U^{T}+O(\epsilon^{2})\right)^{1/\alpha}
  \notag\\
  &= \left(U(\tau+
    \sum_{i}\epsilon\Delta_{i}+O(\epsilon^{2}))U^{T}\right)^{1/\alpha}
  \notag\\
  &= U\left(\tau+
    \sum_{i}\epsilon\Delta_{i}+O(\epsilon^{2}))\right)^{1/\alpha}U^{T}
  \notag\\
  &= U\left(\left(\tau+
      \sum_{i}\epsilon\Delta_{i}\right)^{1/\alpha}+O(\epsilon^{2})\right)U^{T}
  \notag\\
  &= \tau^{1/\alpha} + \epsilon\frac{1}{\alpha}\tau^{-\beta/\alpha}\sum_{i}\Delta_{i}
  +\epsilon [S,\tau^{1/\alpha}]
  + O(\epsilon^{2})\notag\\
  &= \tau^{1/\alpha} + \epsilon\left(
    \frac{1}{\alpha}\sum_{i}\lambda_{i}^{-\beta/\alpha}\Delta_{i} +
    [S,\tau^{1/\alpha}]\right) +O(\epsilon^{2}).
\end{align}
Expressed with the $\Delta_{ij}$ this is
\begin{align}
  (\tau+\epsilon\Delta)^{1/\alpha}
  &= \tau^{1/\alpha}
  + \epsilon\left(
    \sum_{i}\frac{1}{\alpha}\lambda_{i}^{-\beta/\alpha}\Delta_{ii}
    +
    \sum_{i\not=j} \frac{1}{\lambda_{j}-\lambda_{i}}
    (\lambda_{j}^{1/\alpha}-\lambda_{i}^{1/\alpha}) \Delta_{ij}
  \right) + O(\epsilon^{2})\notag\\
  &= \tau^{1/\alpha}
  + \epsilon\left(
    \sum_{i}\frac{1}{\alpha}\lambda_{i}^{-\beta/\alpha}\Pi_{i}\Delta\Pi_{i}
    +
    \sum_{i\not=j} \frac{1}{\lambda_{j}-\lambda_{i}}
    (\lambda_{j}^{1/\alpha}-\lambda_{i}^{1/\alpha}) \Pi_{i}\Delta\Pi_{j}
  \right) + O(\epsilon^{2}).
\end{align}
With this,
\begin{align}
  g_{P}(\tau+\epsilon\Delta)^{1/\alpha} &=
  \tr((\tau+\epsilon\Delta)^{1/\alpha}P)\notag\\
  &= \tr(\tau^{1/\alpha}P)
   + \epsilon\left(\tr(\sum_{i }\frac{\lambda_{i}^{-\beta/\alpha}}{\alpha}
     \Pi_{i}P\Pi_{i}\Delta)
   + \tr(\sum_{i\not=j}\frac{\lambda_{j}^{1/\alpha}-\lambda_{i}^{1/\alpha}}
   {\lambda_{j}-\lambda_{i}}\Pi_{j}P\Pi_{i}\Delta)\right)\notag\\
 &\hphantom{=\;}
   + o(\epsilon)\notag\\
  &=
   \tr(\tau^{1/\alpha}P)
   + \epsilon
   \tr(\left(\sum_{i}\frac{\lambda_{i}^{-\beta/\alpha}}{\alpha} P_{ii} +
       \sum_{i\not=j} \frac{\lambda_{j}^{1/\alpha}-\lambda_{i}^{1/\alpha}}
   {\lambda_{j}-\lambda_{i}} P_{ji}
     \right)\Delta) + o(\epsilon),
\end{align}
where $P_{ij} \defeq \Pi_{i}P\Pi_{j}$.
With this equation and the definition of the gradient, 
we can determine that $X$ in Eq.~\ref{eq:gradient_g_fun} is given by
\begin{align}
  X = \sum_{i}\frac{\lambda_{i}^{-\beta/\alpha}}{\alpha} P_{ii} +
       \sum_{i\not=j} \frac{\lambda_{j}^{1/\alpha}-\lambda_{i}^{1/\alpha}}
   {\lambda_{j}-\lambda_{i}} P_{ji}.
\end{align}
Note that the limit of
$(\lambda_{j}^{1/\alpha}-\lambda_{i}^{1/\alpha})/(\lambda_{j}-\lambda_{i})$
as $\lambda_{j}\rightarrow \lambda_{i}$ is
$\lambda_{i}^{-\beta/\alpha}/\alpha$, so the potentially problematic
term for near-degenerate eigenvalues can be stably computed.  The
simplest way to avoid precision problems with this expression is to
always collapse nearby eigenvalues of $\tau$, where $\lambda_{i}$ and
$\lambda_{j}$ should be considered nearby if
$\left|\lambda_{i}^{1/\alpha}-\lambda_{j}^{1/\alpha}\right|\leq
\sqrt{\delta}$ with $\delta$ the machine precision. This limits
numerical errors in the computation of $X$ to approximately
$\sqrt{\delta}$.  However, the numerical error has less effect on the
validity of the upper bound on $g$ if we replace $\tau$ by $\tilde
\tau$ where $\tilde \tau$ is $\tau$ with nearby eigenvalues collapsed
and rescaled to satisfy the constraint $\tr(\tilde \tau) = 1$ before
determining the upper bound from the maximum eigenvalue of the
gradient.

A protocol-style outline of \QEF optimization is given in
Protocol~\ref{prot:qefopt_outline}.

\vspace*{\baselineskip}
\begin{algorithm}[H]
  \caption{Schema for \QEF optimization for the $(k,2,2)$-Bell-test configuration with
  known input distribution $\mu(Z)$.}\label{prot:qefopt_outline}
  \Input{The targeted trial probability distribution $\nu(CZ)$ and an initial candidate
    $F_{0}(CZ)\geq 0$, $\sum_{cz}F_{0}(cz)=1$ with its $f_{0,\max}$.}  

  \tcp{The input distribution is $\mu(Z)=\nu(Z)$.}

  \tcp{Recommendation: $F_{0}(CZ)$  can be obtained by rescaling a good PEF with power $\beta$ at $\nu(CZ)$.}

  \Output{Best $F(CZ),f_{\max}(F(CZ))$ found and its log-prob rate $r_{F(CZ)}$.}

  \BlankLine

  Initialize an empty list $L$ of triples of candidates $F(CZ)$, $f_{\max}(F(CZ))$ and
  their log-prob rates $r_{F(CZ)}$\;

  \While{stopping criteria are not satisfied}{

    \tcp{Stopping criteria may be satisfied if resource limits are reached or log-prob rates
      are not improving sufficiently anymore.}

    \eIf{$L$ is empty}{

      Set $F(CZ)=F_{0}(CZ)$\;

    }{

      Determine the next candidate $F(CZ)\geq 0$, $\sum_{cz}F(cz)=1$ by
      using the triples in $L$ as a discrete sample of the \QEF landscape\;

    }

    Compute $f_{\max}(F(CZ))$
    \tcp*{Strategies are given in the text.}

    Compute $r_{F(CZ)}$ and add $(F(CZ), f_{\max}(F(CZ)), r_{F(CZ)})$ to $L$\;

  }

\end{algorithm}

\subsection{Optimal PEFs for Comparison}

In Protocol~\ref{prot:qefopt_outline}, we suggested starting \QEF
optimization with a good PEF previously determined for the
$(k,2,2)$-Bell-test configuration at trial probability distribution
$\nu(CZ)$. In Ref.~\cite{knill:qc2017a}, we gave algorithms for
determining such PEFs with respect to polytope envelopes of the
classical-side-information models. The simplest such polytope is the
non-signaling polytope, which can be restricted with Tsirelson's bounds
or other linear inequalities obtained from the hierarchy of
semidefinite programs in Ref.~\cite{navascues:2007}.
The schema for \QEF optimization suggests optimizing PEFs directly
using the reduction enabled by Thm.~\ref{thm:stdk22}.
The PEF optimization problem then reduces to an analog of the
\QEF optimization problem Prob.~\ref{prob:qefoptk22} 
as follows:

\begin{alignat}{3}
    \textrm{Maximize:\ }&\sum_{cz}\nu(cz)\log(F'(cz))-\log(f'_{\max})\notag\\
    \textrm{Variables:\ }& F'(CZ), f'_{\max}\notag\\
    \textrm{Subject to:\ }& F'(CZ)\geq 0,\sum_{cz}F'(cz)=1,\notag\\
    & 
    f'_{\max}\geq \sum_{cz}\mu(z)F'(cz)
      \tr(\tau P_{c|z;\theta})^{\alpha} \textrm{\ for all
    $\tau\geq 0$ with $\tr(\tau)=1$ and $\theta$}. \label{prob:pefoptk22}
\end{alignat}
The PEF constraint is obtained since
$\nu'(cz)=\mu(z)\tr(\tau P_{c|z;\theta})$ defines the trial probability
distribution for the model state under consideration.  The coefficient
of $F'(CZ)$ is $\nu'(cz)\nu'(c|z)^{\beta}$. The PEF
constraint on $f'_{\max}$ is convex in $\tau$, so we cannot use the
same argument to restrict $\tau$ to real density operators.  However,
convexity implies that $\tau$ can be restricted to pure states.  In
solving Prob.~\ref{prob:pefoptk22}, we can set $f'_{\max}$ to the
maximum value of $Q'_{\alpha}(F'(CZ),\theta,\tau) \defeq \sum_{cz}\mu(z)F'(cz)
\tr(\tau P_{c|z;\theta})^{\alpha}$ over $\tau$ and $\theta$.

\begin{lemma}
  In Prob.~\ref{prob:pefoptk22}, the operator $\tau$ may be restricted
  to pure states $\hat\psi$ with $\ket{\psi}$ real, and it suffices to
  consider $\theta$ with $\theta_{i}\in[0,\pi]$.
\end{lemma}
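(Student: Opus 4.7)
The plan is to mirror the structure of Lem.~\ref{lem:qefoptk22_real}, but with the key observation that $Q'_{\alpha}(F'(CZ),\theta,\tau)$ is \emph{convex} (not concave) in $\tau$, because it is a positive sum of terms of the form $\tr(\tau P_{c|z;\theta})^{\alpha}$, each of which is the composition of a linear function of $\tau$ with the convex map $x\mapsto x^{\alpha}$ (for $\alpha\geq 1$). Convexity over the convex set of density operators implies the maximum is attained at an extreme point, i.e.\ at a pure state $\hat\psi$. This gives the pure-state reduction.

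To reduce further to real pure states, I would exploit that each $P_{c|z;\theta}$ is real in the logical basis, so $\tr(P_{c|z;\theta}\hat\psi)$ depends only on the real part of $\hat\psi$. Concretely, write $\ket\psi=\ket a+i\ket b$ with $\ket a,\ket b\in\rls^{2^{k}}$; then $\tr(P_{c|z;\theta}\hat\psi)=\tr(P_{c|z;\theta}(\hat a+\hat b))$, and $\tau'\defeq \hat a+\hat b$ is a real density operator (using $\|a\|^{2}+\|b\|^{2}=\tr(\hat\psi)=1$). Hence $Q'_{\alpha}(F'(CZ),\theta,\hat\psi)=Q'_{\alpha}(F'(CZ),\theta,\tau')$. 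Spectrally decomposing the rank-$\leq 2$ real operator $\tau'=\|a\|^{2}\hat{a'}+\|b\|^{2}\hat{b'}$ with real unit vectors $a',b'$ and applying convexity of $Q'_{\alpha}$ in $\tau$ yields
\begin{equation}
Q'_{\alpha}(F'(CZ),\theta,\hat\psi)\leq \|a\|^{2}Q'_{\alpha}(F'(CZ),\theta,\hat{a'})+\|b\|^{2}Q'_{\alpha}(F'(CZ),\theta,\hat{b'}),
\end{equation}
so one of the two real pure states attains at least the value at $\hat\psi$. Taking suprema, it suffices to maximize over real pure states.

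For the restriction to $\theta_{i}\in[0,\pi]$, I would repeat verbatim the symmetry step at the end of the proof of Lem.~\ref{lem:qefoptk22_real}: By periodicity assume $\theta_{i}\in[-\pi,\pi]$, then use that conjugation by $\sigma_{z}^{(i)}$ (which preserves both $\sigma_{z}$ and flips the sign in front of $\sigma_{x}$) sends $P_{c|z;\theta}$ to $P_{c|z;\theta'}$ with $\theta'_{j}=\theta_{j}$ for $j\neq i$ and $\theta'_{i}=-\theta_{i}$. Since $\tau\mapsto\sigma_{z}^{(i)}\tau\sigma_{z}^{(i)}$ is a bijection of density operators preserving realness (as $\sigma_{z}$ is real), we have
\begin{equation}
\tr(\tau P_{c|z;\theta})=\tr(\sigma_{z}^{(i)}\tau\sigma_{z}^{(i)}P_{c|z;\theta'}),
\end{equation}
so the supremum of $Q'_{\alpha}$ over (real) density operators is unchanged under $\theta_{i}\mapsto -\theta_{i}$.

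No step looks especially difficult; the only subtlety is being careful that the pure-state reduction is valid under convexity (rather than concavity as in Lem.~\ref{lem:qefoptk22_real}), and that when one passes from a complex pure state to the real density operator $\hat a+\hat b$, a further convexity step is needed to return to a \emph{real pure} state. This is where to be careful, since $\hat a+\hat b$ itself need not be pure.
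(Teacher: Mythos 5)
Your proof is correct and takes essentially the same route as the paper: the paper passes from $\tau$ to $(\tau+\bar\tau)/2$ and then applies the real spectral decomposition plus convexity, while you pass from $\hat\psi$ to $\hat a+\hat b$ (which is exactly $(\hat\psi+\overline{\hat\psi})/2$ when $\hat\psi$ is pure) and then do the same. One small nit: the eigenvalues of $\hat a+\hat b$ are not $\|a\|^{2}$ and $\|b\|^{2}$ unless $a\perp b$, so write the real spectral decomposition as $\tau'=\lambda_1\hat{a'}+\lambda_2\hat{b'}$ with $\lambda_1+\lambda_2=1$ and the convexity step goes through unchanged.
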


\begin{proof}
  We noted before the lemma that $\tau$ may be assumed to be pure.
  That we only need to consider $\theta_{i}\in[0,\pi]$ follows by the
  same argument as that used to prove the corresponding statement of
  Lem.~\ref{lem:qefoptk22_real}. Suppose $\tau$ is not real.  Then the
  conditional probabilities $\nu'(c|z)=\tr(\tau P_{c|z;\theta})$
  contributing to $Q'_{\alpha}$ satisfy 
  \begin{equation}
    \tr(\tau P_{c|z;\theta}) = \tr(\overline\tau P_{c|z;\theta})
    =\tr(\frac{1}{2}(\tau+\overline\tau)P_{c|z;\theta}),
  \end{equation}
  so the set of constraints on $f'_{\max}$ is unchanged if we restrict
  $\tau$ to real density matrices. Since real density matrices can
  be diagonalized over the reals, they are mixtures of real pure
  states and by convexity we can further restrict to real pure states.
\end{proof}

While we cannot take advantage of concavity to simplify maximizing
$Q'_{\alpha}(F'(CZ),\theta,\tau)$ with respect to $\tau$, we can take
advantage of convexity as before, but need to extend the strategy used
to optimize over $\theta$ to also include $\tau$.  With the notation
of Sect.~\ref{subsec:optschemas}, $Q'_{\alpha}(F'(CZ),\theta,\hat\psi)=\tilde 
Q_{\alpha}(F'(CZ),\hat\psi\otimes r(\theta))$ (see Eq.~\ref{eq:def_tilde_Q}), 
and $\tilde Q_{\alpha}(F'(CZ),u)$ is convex in $u$. 
If we can maximize over real $\ket{\psi}$ for given $\theta$,
then the schemas for maximizing over $\theta$ in
Sect.~\ref{subsec:optschemas} can also be used here. To perform the
maximization over $\ket{\psi}$, we describe an inner approximation
generalizing the one used to maximize over the $\theta_{i}\in[0,\pi]$.
The real pure states $\ket{\psi}$ can be identified with points in the
sphere $S_{2^{k}-1}$.  We reduce the inner-most maximization problem
to one of maximizing over $\ket{\psi}$ contained in convex cones
spanned by small sets of points on the sphere with large overlaps as
vectors.  Refinement involves subdividing the cones.  In the case of
$k=2$, we suggest sets of points defining the eight corners of a
cuboid.  For describing the technique, we fix $\theta$ and $F'(CZ)$, and omit 
them from expressions. 
In particular,
we abbreviate $\tilde Q_{\alpha}(F'(CZ),\hat\psi\otimes r(\theta))$ as 
$\tilde Q_{\alpha}(\hat\psi)$. The general goal is to upper bound a non-negative,
convex function $\tilde Q_{\alpha}(\hat\psi)$ homogeneous of degree $\alpha$ in
$\hat\psi$ over $\ket{\psi}\in S_{2^{k}-1}$, where the function
$\tilde Q_{\alpha}(\tau)$ is operator monotone in $\tau$.  We switch to
mathematical notation for real vectors, omitting kets and bras.

\begin{lemma}\label{lem:tightreals}
  Fix $\epsilon\in (0,1)$.
  Let  $I$ be a finite index set and for $i\in I$, let $x_{i}$ be real unit vectors with
  $x_{i}^{T}x_{j}\geq 1-\epsilon$ for all $j\in I$.  If $y$ is a unit vector
  that is a positive
  combination of the $x_{i}$, then 
  there is a convex combination
  $\rho$ of the $x_{i}x_{i}^{T}$ such that $yy^{T}\leq
  \rho/(1-\epsilon)$.
\end{lemma}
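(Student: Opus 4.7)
The plan is to produce the witness $\rho$ explicitly from the given positive-combination expression of $y$, via a single Cauchy–Schwarz argument. Write $y = \sum_{i\in I} \lambda_i x_i$ with $\lambda_i\ge 0$, and set $s = \sum_i \lambda_i$. The candidate convex combination is $\rho = \sum_i \mu_i x_i x_i^T$ with $\mu_i = \lambda_i/s$, which is manifestly a probability distribution on $I$.

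The first key computation is to bound $s$ in terms of $\epsilon$. Using $y^T y = 1$ and the pairwise inner-product hypothesis,
\begin{equation}
1 = y^T y = \sum_{i,j}\lambda_i\lambda_j\, x_i^T x_j \;\ge\; (1-\epsilon)\Big(\sum_i\lambda_i\Big)^{\!2} = (1-\epsilon)s^2,
\end{equation}
so $s^2 \le 1/(1-\epsilon)$. This is the only place where the near-parallel assumption $x_i^T x_j \ge 1-\epsilon$ enters.

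The second key computation is an operator inequality obtained by Cauchy–Schwarz. For any real vector $v$,
\begin{equation}
v^T(yy^T)v = \Big(\sum_i \lambda_i\, v^T x_i\Big)^{\!2} = \Big(\sum_i \sqrt{\lambda_i}\cdot\sqrt{\lambda_i}\,v^T x_i\Big)^{\!2} \le \Big(\sum_i\lambda_i\Big)\Big(\sum_i\lambda_i (v^T x_i)^2\Big) = s\, v^T\Big(\sum_i\lambda_i x_i x_i^T\Big)v,
\end{equation}
so $yy^T \le s\sum_i\lambda_i x_i x_i^T = s^2\rho$. Combining with the bound on $s^2$ gives $yy^T \le \rho/(1-\epsilon)$, as required.

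There is no real obstacle here; the statement is essentially a quantitative form of the observation that when unit vectors cluster within an angular cap of half-width $\arccos\sqrt{1-\epsilon}$, any unit vector in their positive cone is captured, up to a factor $1/(1-\epsilon)$, by the convex hull of the rank-one projectors. The only thing to double-check when writing out the full proof is that the weighted Cauchy–Schwarz step does not require any normalization that we have overlooked and that the convex combination weights $\mu_i$ are well defined, i.e., $s>0$; but $s=0$ would force $y=0$, contradicting $\|y\|=1$.
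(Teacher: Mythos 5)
Your proof is correct, and it takes a genuinely different route from the paper's. The paper constructs the (unnormalized) operator $\rho' = \sum_i \lambda_i\, x_i x_i^T/(x_i^T y)$, observes that $\rho' y = y$ so that $y$ is a unit eigenvector of $\rho'$ with eigenvalue $1$, concludes $\rho' \geq yy^T$, and then bounds $\tr(\rho') \leq 1/(1-\epsilon)$ using $x_i^T y \geq (1-\epsilon)\sum_j\lambda_j$; normalizing $\rho'$ gives the convex combination. You instead take the natural convex combination with weights $\lambda_i/s$ where $s=\sum_i\lambda_i$, apply Cauchy--Schwarz to get $yy^T \leq s^2\rho$, and use the near-parallel hypothesis once to bound $s^2 \leq 1/(1-\epsilon)$. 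The paper's $\rho$ and yours are in general different convex combinations (the paper's weights are proportional to $\lambda_i/(x_i^T y)$), so these really are distinct constructions even though both land on the same factor $1/(1-\epsilon)$. Your argument is arguably the more elementary of the two: it avoids the eigenvector observation and the implicit appeal to the fact that a PSD operator with unit-norm eigenvector $y$ at eigenvalue $1$ dominates $yy^T$, trading it for a direct Cauchy--Schwarz inequality. The paper's construction has the mild advantage of isolating where the hypothesis enters (only in the trace bound), whereas your version uses it once, cleanly, in bounding $s^2$; the trade-offs are minor and both are complete.
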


\begin{proof}
  Write $y$ as an explicit positive combination $y=\sum_{i}\lambda_{i}x_{i}$.
  Define 
  \begin{equation}
     \rho'=\sum_{i}\lambda_{i} \frac{x_{i}x_{i}^{T}}{x_{i}^{T}y}.
  \end{equation}
  Then for any real vector $z$, $z^{T}\rho' z\geq 0$, that is 
  $\rho'\geq 0$. Moreover, $\rho' y = \sum_{i}\lambda_{i}x_{i}=y$ 
  so that $y$ is a unit eigenvector with eigenvalue $1$ of $\rho'$.
  Therefore $\rho'\geq yy^{T}$.  
  Let $\lambda = \sum_{i}\lambda_{i}$.  Compute
  \begin{equation}
    x_{i}^{T}y = \sum_{j}\lambda_{j}x_{i}^{T}x_{j}\geq \sum_{j}\lambda_{j}(1-\epsilon)
    = \lambda (1-\epsilon),
  \end{equation}
  which gives
  \begin{equation}
    \tr(\rho') = \sum_{i}\frac{\lambda_{i}}{x_{i}^{T}y}
    \leq \sum_{i}\frac{\lambda_{i}}{\lambda(1-\epsilon)}
    = \frac{1}{1-\epsilon}.
  \end{equation}
  To complete the proof of the lemma, we set $\rho=\rho'/\tr(\rho')$.
\end{proof}

\begin{lemma}\label{lem:tightrealsbnd}
  Fix $\epsilon\in (0,1)$.
  Let $I$ be a finite index set and for $i\in I$, let $x_{i}$ be real unit vectors with
  $x_{i}^{T}x_{j}\geq 1-\epsilon$ for all $j\in I$.  Let $q_{i}=\tilde
  Q_{\alpha}(x_{i}x_{i}^{T})$. Then for all unit vectors $y$ in the positive convex cone
  generated by the $x_{i}$, $\tilde Q_{\alpha}(yy^{T})\leq \max_{i}q_{i}/(1-\epsilon)^{\alpha}$.
\end{lemma}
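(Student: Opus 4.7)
The plan is to combine Lemma~\ref{lem:tightreals} with the stated properties of $\tilde Q_{\alpha}$, namely that it is non-negative, convex, homogeneous of degree $\alpha$, and operator monotone in its argument. The upper bound $\max_i q_i/(1-\epsilon)^{\alpha}$ makes this strategy essentially forced: the factor $(1-\epsilon)^{-\alpha}$ is exactly what appears when the operator inequality $yy^T\leq \rho/(1-\epsilon)$ from Lemma~\ref{lem:tightreals} is pushed through an operator-monotone, degree-$\alpha$ homogeneous functional, and the $\max_i q_i$ term is what results from applying convexity to a convex combination of the $x_i x_i^T$.

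First, I would apply Lemma~\ref{lem:tightreals} directly to the given $y$. Since $y$ is a unit vector in the positive cone generated by the $x_i$, and the $x_i$ satisfy $x_i^T x_j \geq 1-\epsilon$, there exists a convex combination $\rho = \sum_i \mu_i\, x_i x_i^T$ such that
\begin{equation}
yy^T \leq \frac{1}{1-\epsilon}\,\rho.
\end{equation}

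Next, I would invoke operator monotonicity of $\tilde Q_{\alpha}$ to get $\tilde Q_{\alpha}(yy^T) \leq \tilde Q_{\alpha}(\rho/(1-\epsilon))$, then pull out the scalar by homogeneity of degree $\alpha$, obtaining $\tilde Q_{\alpha}(\rho/(1-\epsilon)) = (1-\epsilon)^{-\alpha}\tilde Q_{\alpha}(\rho)$. Finally, I would apply convexity of $\tilde Q_{\alpha}$ to the convex combination $\rho$, yielding
\begin{equation}
\tilde Q_{\alpha}(\rho) \leq \sum_i \mu_i\, \tilde Q_{\alpha}(x_i x_i^T) = \sum_i \mu_i q_i \leq \max_i q_i,
\end{equation}
and chaining these three estimates gives $\tilde Q_{\alpha}(yy^T)\leq \max_i q_i/(1-\epsilon)^{\alpha}$ as claimed.

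There is no real obstacle; the only thing to double-check is that the properties of $\tilde Q_{\alpha}$ asserted in the paragraph just before the lemma (convexity, degree-$\alpha$ homogeneity, operator monotonicity in the argument) are exactly what we need, and that the statement of Lemma~\ref{lem:tightreals} requires only that $y$ be a \emph{unit} vector in the positive cone — which is precisely the hypothesis here. The role of $\epsilon\in(0,1)$ is only to ensure the denominator $1-\epsilon$ is positive so that Lemma~\ref{lem:tightreals} applies and the factor $(1-\epsilon)^{-\alpha}$ is well-defined.
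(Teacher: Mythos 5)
Your proof is correct and follows exactly the same route as the paper's: invoke Lemma~\ref{lem:tightreals} to obtain the operator inequality $yy^{T}\leq\rho/(1-\epsilon)$ for a convex combination $\rho$ of the $x_{i}x_{i}^{T}$, then push this through operator monotonicity, degree-$\alpha$ homogeneity, and convexity of $\tilde Q_{\alpha}$ in that order. The properties you cite for $\tilde Q_{\alpha}$ are precisely those asserted in the paragraph preceding the lemma, so there is no gap.
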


\begin{proof}
  Let $\rho=\sum_{i}\lambda_{i}x_{i}x_{i}^{T}$ be a
  convex combination of $x_{i}x_{i}^{T}$ with $y y^{T}\leq \rho/(1-\epsilon)$
  according to Lem.~\ref{lem:tightreals}.
  Then by monotonicity, homogeneity of degree $\alpha$ and convexity of $\tilde Q_{\alpha}$, we have
  \begin{align}
    \tilde Q_{\alpha}(yy^{T})&\leq \frac{1}{(1-\epsilon)^{\alpha}}\tilde Q_{\alpha}(\rho)\notag\\
     &\leq \frac{1}{(1-\epsilon)^{\alpha}}\sum_{i}\lambda_{i}\tilde Q_{\alpha}(x_{i}x_{i}^{T})\notag\\
     &\leq \frac{1}{(1-\epsilon)^{\alpha}}\max_{i}q_{i}.
  \end{align}
\end{proof}

We describe the $\hat\psi$-maximization strategy for the case $k=2$,
so that $\ket{\psi}\in S_{3}\subset\rls^{4}$.  We parametrize
$x\in S_{3}$ with angles $\phi_{1}\in[0,\pi/2]$, $\phi_{2}\in[0,2\pi]$
and $\phi_{3}\in[0,2\pi]$ according to
\begin{equation}
  x(\phi_{1},\phi_{2},\phi_{3})=
  \sin(\phi_{1})(\sin(\phi_{2}),\cos(\phi_{2}),0,0)^{T} 
           + \cos(\phi_{1})(0,0,\sin(\phi_{3}),\cos(\phi_{3}))^{T}.
\end{equation}
Because $x$ and $-x$ correspond to the same density matrix, we can
restrict $\phi_{2}$ to $[0,\pi]$.  To start the maximization, we can
choose points according to a cubical grid on $[0,\pi/2] \times [0,\pi]
\times [0,2\pi]$. For this, fix $m\ge 2$ and let $x_{i,j,k}=x(i
\pi/(2m),j\pi/(2m),k\pi/(2m))$ for $i\in\{0,\ldots,m\}$,
$j\in\{0,\ldots,2m\}$ and $k\in\{0,\ldots,4m\}$.  We identify a set of
facets, where each facet is defined by the eight corners of the cubes
in the cubical grid.  The facets may be identified with the sets of
points defined by
$f_{i,j,k}=\big\{x_{i+b_{1},j+b_{2},k+b_{3}}:b_{1},b_{2},b_{3}\in\{0,1\}\big\}$
for $i\in\{0,\ldots,m-1\}$, $j\in\{0,\ldots,2m-1\}$ and
$k\in\{0,\ldots,4m-1\}$. The positive convex cones generated by the
$f_{i,j,k}$ cover the half space of $\rls^{4}$ with non-negative first
coordinate. Thus we can first compute $\tilde Q_{\alpha}$ for all
$x_{i,j,k}x_{i,j,k}^{T}$ to get a lower bound and then compute an upper
bound for each facet according to Lem.~\ref{lem:tightrealsbnd}. Facets
whose upper bounds are below one of the values of $\tilde Q_{\alpha}$ obtained
can be abandoned. Facets for which the upper bound exceeds the maximum
value of $\tilde Q_{\alpha}$ over all vertices by more than the tolerance can
be refined by dividing the angle intervals determining the facet's
cube in half. This determines $19$ new points and $8$ subfacets.

The strategy of the previous paragraph can be combined with that for
maximizing over the $\theta$ by covering $S_{3}\times
[0,\pi]^{2}$ with an initial cubical grid and refining
cuboids as described in Sect.~\ref{subsec:optschemas}. In this
case the cuboids are five-dimensional.

\subsection{Examples}
\label{subsec:examples}

In Ref.~\cite{knill:qc2017a} we analyzed PEF performance on photonic
and atomic experimental data from published experiments, and in
Ref.~\cite{zhang:qc2018a} we determined PEF finite-data performance in
comparison to other methods, in particular trial-wise guessing
probability~\cite{pironio:qc2010a,fehr:qc2013a,pironio:qc2011a,acin:qc2012a,nieto-silleras:qc2014a,bancal:qc2014a,nieto-silleras:qc2016a}
and entropy accumulation~\cite{dupuis:qc2016a,arnon-friedman:qc2018a}.
Here we repeat some of these analyses and perform comparisons with QEFs instead.  For
this, we do not optimize QEFs. Instead, we compute optimal PEFs
$F'(CZ)$ for $C|Z$ with appropriate parameters, determine an upper
bound on $f_{\max}$ for each $F'(CZ)$ according to the methods in
Sect.~\ref{subsec:optschemas}, and obtain a QEF $F(CZ)$ by dividing the PEF by
$f_{\max}$, that is $F(CZ)=F'(CZ)/f_{\max}$. Throughout, we assume that the 
PEFs are for the classical trial model $\cT$ where the 
input distribution is uniform and the input-conditional output 
distributions satisfy non-signaling and Tsirelson's bounds, see Ref.~\cite{knill:qc2017a},
Sect.~VIII for details.  This classical trial model includes $\tr(\cC_{222}(CZ))$ 
with the uniform input distribution. In each case, we optimize the 
expected net log$_{2}$-prob for $\cT$ at a trial distribution $\nu(CZ)$, where 
the expected net log$_{2}$-prob is computed according to Eq.~\eqref{eq:netlogprob} with $\bar{\kappa}=1$.  When
obtaining a bound on $f_{\max}$, we stopped refining the evaluation
grid when the difference between lower and upper bounds on $f_{\max}$
was smaller than a stopping criterion determined by the
application. We set the stopping criterion so that the difference
between the upper and lower bounds on $f_{\max}$ has negligible impact on the
QEF's performance.  For all PEFs checked, we found that $f_{\max}$ was
indistinguishable from $1$ at numerical precision.  We conjecture that these 
PEFs are QEFs with the same power $\beta$ for $C|Z$ and $\cC_{222}(CZ)$ 
with the uniform input distribution.

We first reconsider the results from the first experiment to
demonstrate certified conditional min-entropy with a Bell
test~\cite{pironio:qc2010a}. The experiment established entangled
states of two ions in two separate ion-traps by entanglement swapping
with photons as intermediaries.  From the results of the experiment,
the authors claimed $42$ bits of conditional min-entropy at a
smoothness error bounded by $0.01$.  That the claim did not take into
account probability of success or quantum side information was
clarified in subsequent papers~\cite{fehr:qc2013a,pironio:qc2011a}. A
question is whether the experiment could have certified positive
conditional min-entropy with respect to quantum side information. To
answer this question we repeated the analysis of
Ref.~\cite{knill:qc2017a}, Sect.~VIII.E with modifications for quantum
side information. The experiment consisted of $3016$ trials, of which
we used the first $1000$ for training.  We optimized a PEF on the
training set by maximizing the expected net log$_{2}$-prob in the
remaining $2016$ trials, where the expected net log$_{2}$-prob is
computed according to Eq.~\eqref{eq:netlogprob} with $\bar{\kappa}=1$.
For this we also optimized the power $\beta$. The PEF is designed for
the trial model $\cT$. After training, we determined that $f_{\max}$
for the PEF found satisfies $f_{\max}\in [1,1+9.56\times 10^{-6}]$.
The upper bound was computed at numerical precision with Matlab, then
verified with Mathematica at a precision of $10^{-32}$.  We then
divided the PEF used by the upper bound on $f_{\max}$ to construct a
valid QEF.  After applying this QEF to the remaining $2016$ trials, we
found that it witnesses $127.86$ bits of quantum net log-prob at
smoothness error $\epsilon=0.01$ and presumed lower bound $\kappa=1$
of the success probability.  For the observed frequencies in this
experiment, entropy accumulation requires $54688$ trials to certify
any random bits at $\epsilon=0.01$ and $\kappa=1$ with the
min-tradeoff functions given in Ref.~\cite{arnon-friedman:qc2018a}.
Here, the assignment of $\kappa=1$ is purely formal for comparison
with respect to the soundness criteria implicit in
Ref.~\cite{pironio:qc2010a}. These soundness criteria are now
considered inadequate. With modern soundness criteria and at
$\epsilon=0.03$ and $\kappa=0.03$, the number of bits witnessed by the
QEF is $72.70$. This number is derived from the experimental QEF
value. In a protocol, the number of bits to be produced needs to be
decided before the experiment and would have been less to ensure
sufficiently high probability of success.

Next we compare the finite-data efficiency of QEFs to that of entropy
accumulation with the min-tradeoff functions given in the EAT references for computed trial results distributions with uniform
inputs. We consider the families of distributions,
$\cP_{E}=\{\nu_{E,\theta}\}_{0\leq\theta\leq\pi/4}$,
$\cP_{W}=\{\nu_{W,p}\}_{1/\sqrt{2}< p\leq 1}$ and $\cP_{P}=\{\nu_{P,\eta}\}_{2/3<\eta\leq 1}$
studied in Ref.~\cite{zhang:qc2018a}.  They are defined as follows:
For the first and third, the two-party device to be measured is
initially in the unbalanced Bell state defined by
$\ket{\Psi_{\theta}}=\cos(\theta)\ket{00}+\sin(\theta)\ket{11}$.  For
the second, the initial state is the Werner state
$p\dyad{\Psi_{\pi/4}}+(1-p)\one/4$.  To compute $\nu_{E,\theta}$ and
$\nu_{W,p}$, the input-dependent measurements are chosen so as to
maximize the expected CHSH value $\hat I$~\cite{clauser:qc1969a}
defined by $\hat I = \Exp(4(1-2XY)(-1)^{A+B})$ with $A, B, X, Y\in \{0,1\}$, where 
$X$ and $Y$ are the inputs and $A$ and $B$ are the 
outputs of Alice and Bob, respectively. For local realistic
distributions, $\hat I\leq 2$ and for quantum distributions, $\hat
I\leq 2\sqrt{2}$. To compute $\nu_{P,\eta}$, we use detectors 
of efficiency $\eta\in(2/3,1]$ and choose both 
the state $\ket{\Psi_{\theta}}$ and the input-dependent measurements 
such that the statistical strength for rejecting 
local realism~\cite{vanDam:2005, zhang:2010} is maximized.  
The value of $\hat I$ for each family is monotonic in the parameters. That is, for
$\nu_{E,\theta}$, $\hat I$ increases with $\theta$ for
$\theta\in[0,\pi/4]$, for $\nu_{W,p}$ it increases with $p$ for
$p\in(1/\sqrt{2},1]$, and for $\nu_{P,\eta}$ it increases with
$\eta\in(2/3,1]$. The family $\cP_{E}$ and $\cP_{W}$ represent the
best and worst cases for conditional min-entropy as a function of $\hat
I$, while $\cP_{P}$ is experimentally relevant, particularly for
photonic experiments.

Entropy accumulation is formulated to yield smooth min-entropy
estimates and we compare performances accordingly.  Specifically, we
consider protocols for certifying $\epsilon$-smooth min-entropy
conditional on success that satisfy the following: For specified
values of $\sigma$, $\epsilon$ and $\kappa$, for all states in the
model, if the probability of success is at least $\kappa$, then the
$\epsilon$-smooth min-entropy of the output conditional on success is
at least $\sigma$. A QEF protocol is determined by the application of
Thm.~\ref{thm:bnds_from_qef} to all states in the model for which the
probability of success is at least $\kappa$, and where $p$ and
$\delta$ satisfy $-\log_{2}(p/\kappa^{\alpha/\beta})\geq\sigma$ and
$\delta=\epsilon^{2}/2$.  Here, we refer to the quantity
$\log_{2}(F(CZ))/\beta+\log_{2}(\epsilon^2/2)/\beta+\alpha\log_{2}(\kappa)/\beta$
in such a protocol as its min-entropy estimate.  We remark that for
randomness generation, the quantum net log-prob has better dependence
on the probability of success parameter.  Both entropy accumulation
and QEFs give valid estimates regardless of the experimental
distributions provided that the model is satisfied.  But the
performances are determined by the actual trial distributions. EAT
protocols also have an associated min-entropy estimate determined from
an affine min-tradeoff function.

We assume that for the ``honest'' devices, namely the devices as
designed, the trials are i.i.d. with distribution $\nu$ in one of the
families $\cP_{E}$, $\cP_{W}$ and $\cP_{P}$. We are interested in the
minimum number of trials required for a protocol with parameters
$\sigma$, $\epsilon$ and $\kappa$ as described in the previous
paragraph. To be useful, such a protocol should have a large
probability of success greater than $\kappa$ for honest devices. For
QEFs, the probability of success is determined by the distribution of
the min-entropy estimate, which is obtained from a sum of
i.i.d. random variables for honest devices. In the absence of specific
information of the QEF defining these random variable, the probability
of success cannot be estimated. Instead, we set $\sigma$ to the
expectation of the min-entropy estimate.  Generically, this implies an
honest probability of success near $1/2$, at least for large enough
$n$.  For the EAT, we use the same strategy, setting $\sigma$ to the
expectation of the EAT min-entropy estimate.  For both QEFs and the
EAT, the probability of success can be made close to $1$ by reducing
$\sigma$, provided the number of trials is large enough.  For a
representative comparison, we formally set $\epsilon=10^{-6}$ and
$\kappa=1$ to determine the minimum number of trials required for
positive $\sigma$. The assignment $\kappa=1$ is singular but chosen as
a convenient reference point for values of $\kappa$ that are not
small.  The improvements obtained by QEFs are as significant for all
meaningful assignments with the same value for the product
$\epsilon\kappa$.

First consider \QEFs. Suppose that $F(CZ)$ is a trial-wise \QEF with power
$\beta$ and log$_2$-prob rate g.  According to
Thm.~\ref{thm:bnds_from_qef}, the expected $\epsilon$-smooth conditional min-entropy
estimate in bits for $n$ trials is
\begin{equation}
  ng+\frac{\log_{2}(\epsilon^{2}/2)}{\beta}
  +\frac{\alpha\log_{2}(\kappa)}{\beta},
\end{equation}
so the minimum number of trials required for positive $\epsilon$-smooth conditional min-entropy
is
\begin{equation}
  n_{\min,\textrm{QEF}}(F(CZ);\beta,\epsilon,\kappa) = \frac{1}{g \beta}|\log_{2}(\epsilon^{2}\kappa^{\alpha}/2)|.
  \label{eq:qef_mintrials}
\end{equation}
For simplicity we do not require that the number of trials is an
integer.
Except for the replacement of the error bound $\epsilon$ by
$\epsilon^{2}/2$, this agrees with the expressions in
Ref.~\cite{zhang:qc2018a}. 

For entropy accumulation, we can apply Thm.~\ref{thm:eat} with an
entropy estimator, where the entropy estimator can be derived either
from the QEF $F(CZ)$, or from the min-tradeoff function given in
Ref.~\cite{arnon-friedman:qc2018a}. With the QEF, from
Thm.~\ref{thm:eat} in terms of bits, with $h$ replaced by the
log$_{2}$-prob rate $g$ and $k_{\infty}=\lceil\max
|\log_{2}(F(CZ))/\beta|\rceil$, the expected $\epsilon$-smooth 
conditional min-entropy estimate is
\begin{equation}
  ng -2\left(\log_{2}(9)+\lceil k_{\infty}\rceil\right)
  \sqrt{1-2\log_{2}(\epsilon\kappa)}\sqrt{n},
\end{equation}
which implies that the minimum number of trials is
\begin{equation}
  n_{\min,\textrm{EAT}}(F(CZ);\beta,\epsilon,\kappa) = 
   \frac{4}{g^{2}}\left(\log_{2}(9)+\lceil k_{\infty}\rceil\right)^{2}
   (1-2\log_{2}(\epsilon\kappa)).
\end{equation}
We write $n_{\min,\textrm{EAT}}(T;\epsilon,\kappa)$ for the same 
quantity but computed for the min-tradeoff function $T$ given in
Ref.~\cite{arnon-friedman:qc2018a}.  An explicit but involved
expression for $n_{\min,\textrm{EAT}}(T;\epsilon,\kappa)$ is given in
Ref.~\cite{zhang:qc2018a}, which we do not repeat here. Its evaluation
involves optimizing over additional parameters.

For the comparison at a given distribution $\nu$, we first minimize
the expression for
$n_{\min,\textrm{QEF}}(F'(CZ);\beta,\epsilon,\kappa)$ over $\beta$ and
PEFs $F'(CZ)$ for $\cT$. The minimum found is witnessed by PEF
$F'(CZ)$ and $\beta$. We then compute $f_{\max}$ for $F'(CZ)$, which
determines a valid \QEF $F(CZ)=F'(CZ)/f_{\max}$ with the same power $\beta$.
This determines $n_{\nu, \textrm{QEF}} \defeq
n_{\min,\textrm{QEF}}(F(CZ);\beta,\epsilon,\kappa)$.  We then obtain
$n_{\nu,F,\textrm{EAT}} \doteq
n_{\min,\textrm{EAT}}(F(CZ);\beta,\epsilon,\kappa)$ according to the
above formula and $n_{\nu,T,\textrm{EAT}}\doteq
n_{\min,\textrm{EAT}}(T;\epsilon,\kappa)$ according to the
instructions in Ref.~\cite{zhang:qc2018a}.  The QEF advantages are
determined by the ratios
$f_{\nu,F}=n_{\nu,F,\textrm{EAT}}/n_{\nu,\textrm{QEF}}$ and
$f_{\nu,T}=n_{\nu,T,\textrm{EAT}}/n_{\nu,\textrm{QEF}}$.  For the
distributions $\nu_{W,p}$, the advantage $f_{\nu,T}$ depends weakly on
$\hat I$: $f_{\nu_{W,p},T}$ increases from $36.9$ at $\hat I =
2.008$ to $38.2$ at $\hat I=2\sqrt{2}$.
For the other
distributions, $f_{\nu,T}$ can be much larger, particularly at $\hat
I$ near $2$, as shown in Fig.~\ref{fig:finite_data_comparison}.  We
also find that $f_{\nu,F}$ is systematically larger than $f_{\nu,T}$
by factors of at least two near maximum $\hat I$ and growing
substantially toward minimum $\hat I$. Thus, determining the entropy
estimator from the \QEFs found and applying the EAT performs worse
than applying the EAT with the min-tradeoff function from
Ref.~\cite{arnon-friedman:qc2018a}.  This suggests that the
  problem of optimizing QEFs and that of optimizing entropy estimators
  or min-tradeoff functions are not well matched.  With entropy estimators
determined from \QEFs optimized for powers near zero, the EAT
performance improves substantially. In some cases, the performance is
better than the EAT with the min-tradeoff function
given in Ref.~\cite{arnon-friedman:qc2018a}. We remark that this
comparison does not take advantage of the improvements to the EAT
implied by Thm.~\ref{thm:ee_to_qef}.

\begin{figure}
  \begin{center}
    \resizebox{5in}{!}{\includegraphics[viewport=1.25in 3.25in 7in 7.5in,clip=true]{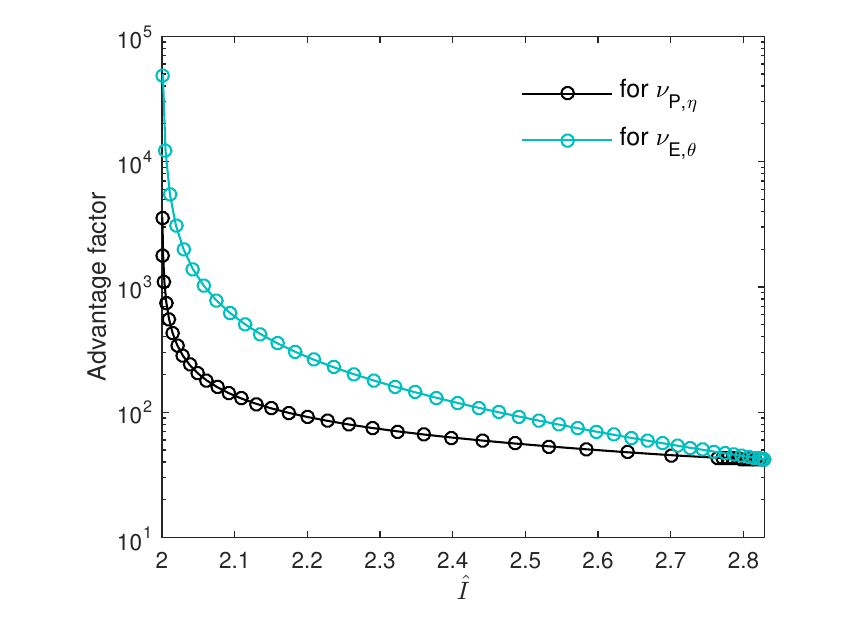}}
 \end{center}
 \caption{ QEF advantage factors for $\cP_{E}$ and $\cP_{P}$ as a
   function of $\hat I$.  Shown are values for $f_{\nu_{E,\theta},T}$
   and $f_{\nu_{P,\eta},T}$.  We verified that the quantity 
   $f_{\max}$ is indistinguishable from $1$ at high precision for each of the points indicated by 
   open circles.    
   \label{fig:finite_data_comparison}
 }
\end{figure}

For the last example, we consider the problem of producing $512$ bits
at smoothness error $\epsilon=2^{-64}$ and
probability of success parameter $\kappa=2^{-64}$ with trials whose
results distribution matches that observed in the photonic
loophole-free randomness generation experiment reported in
Ref.~\cite{bierhorst:qc2018a}.  For this, we do not consider the
overhead of extracting the random bits and ask for the minimum number
of trials for which $512$ bits of smooth conditional min-entropy can
be certified at the given $\epsilon,\kappa$.  We optimized the minimum
number of trials required according to Eq.~\ref{eq:qef_mintrials} over
PEFs and powers, assuming that the PEFs are QEFs.  We confirmed that
the best PEF found has $f_{\max}\leq 1+9.88\times 10^{-9}$, which we
verified with Mathematica at a precision of $10^{-32}$. The QEF thus
found requires $6.97\times 10^{7}$ trials on average.  For entropy
accumulation, $2.89\times 10^{11}$ trials are required, as reported in
Ref.~\cite{zhang:qc2018a}. Given the trial rate in the experiment of
Ref.~\cite{bierhorst:qc2018a}, this would require $11.62$ minutes of
experimental time with QEFs, and $802.1$ hours with entropy
accumulation.

\begin{acknowledgments}
  We thank Carl Miller and Peter Bierhorst for stimulating
  discussions, help with moving this project forward and editorial
  help.  This work includes contributions of the National Institute of
  Standards and Technology, which are not subject to U.S. copyright.
  The use of trade names is for informational purposes only and does
  not imply endorsement or recommendation by the U.S. government.
\end{acknowledgments}

\bibliography{qefs}

\begin{thebibliography}{49}%
\makeatletter
\providecommand \@ifxundefined [1]{%
 \@ifx{#1\undefined}
}%
\providecommand \@ifnum [1]{%
 \ifnum #1\expandafter \@firstoftwo
 \else \expandafter \@secondoftwo
 \fi
}%
\providecommand \@ifx [1]{%
 \ifx #1\expandafter \@firstoftwo
 \else \expandafter \@secondoftwo
 \fi
}%
\providecommand \natexlab [1]{#1}%
\providecommand \enquote  [1]{``#1''}%
\providecommand \bibnamefont  [1]{#1}%
\providecommand \bibfnamefont [1]{#1}%
\providecommand \citenamefont [1]{#1}%
\providecommand \href@noop [0]{\@secondoftwo}%
\providecommand \href [0]{\begingroup \@sanitize@url \@href}%
\providecommand \@href[1]{\@@startlink{#1}\@@href}%
\providecommand \@@href[1]{\endgroup#1\@@endlink}%
\providecommand \@sanitize@url [0]{\catcode `\\12\catcode `\$12\catcode
  `\&12\catcode `\#12\catcode `\^12\catcode `\_12\catcode `\%12\relax}%
\providecommand \@@startlink[1]{}%
\providecommand \@@endlink[0]{}%
\providecommand \url  [0]{\begingroup\@sanitize@url \@url }%
\providecommand \@url [1]{\endgroup\@href {#1}{\urlprefix }}%
\providecommand \urlprefix  [0]{URL }%
\providecommand \Eprint [0]{\href }%
\providecommand \doibase [0]{https://doi.org/}%
\providecommand \selectlanguage [0]{\@gobble}%
\providecommand \bibinfo  [0]{\@secondoftwo}%
\providecommand \bibfield  [0]{\@secondoftwo}%
\providecommand \translation [1]{[#1]}%
\providecommand \BibitemOpen [0]{}%
\providecommand \bibitemStop [0]{}%
\providecommand \bibitemNoStop [0]{.\EOS\space}%
\providecommand \EOS [0]{\spacefactor3000\relax}%
\providecommand \BibitemShut  [1]{\csname bibitem#1\endcsname}%
\let\auto@bib@innerbib\@empty
\bibitem [{\citenamefont {Knill}\ \emph {et~al.}(2017)\citenamefont {Knill},
  \citenamefont {Zhang},\ and\ \citenamefont {Bierhorst}}]{knill:qc2017a}%
  \BibitemOpen
  \bibfield  {author} {\bibinfo {author} {\bibfnamefont {E.}~\bibnamefont
  {Knill}}, \bibinfo {author} {\bibfnamefont {Y.}~\bibnamefont {Zhang}},\ and\
  \bibinfo {author} {\bibfnamefont {P.}~\bibnamefont {Bierhorst}},\ }\bibfield
  {title} {\bibinfo {title} {Quantum randomness from probability estimation
  with classical side information}} (\bibinfo {year} {2017}),\ \bibinfo {note}
  {arXiv:1709.06159}\BibitemShut {NoStop}%
\bibitem [{\citenamefont {Miller}\ and\ \citenamefont
  {Shi}(2014)}]{miller_c:qc2014a}%
  \BibitemOpen
  \bibfield  {author} {\bibinfo {author} {\bibfnamefont {C.~A.}\ \bibnamefont
  {Miller}}\ and\ \bibinfo {author} {\bibfnamefont {Y.}~\bibnamefont {Shi}},\
  }\bibfield  {title} {\bibinfo {title} {Robust protocols for securely
  expanding randomness and distributing keys using untrusted quantum devices},\
  }in\ \href {https://doi.org/10.1145/2591796.2591843} {\emph {\bibinfo
  {booktitle} {STOC '14 Proceedings of the 46th Annual ACM Symposium on Theory
  of Computing}}}\ (\bibinfo {year} {2014})\ pp.\ \bibinfo {pages}
  {417--426}\BibitemShut {NoStop}%
\bibitem [{\citenamefont {Miller}\ and\ \citenamefont
  {Shi}(2016)}]{miller_c:qc2014b}%
  \BibitemOpen
  \bibfield  {author} {\bibinfo {author} {\bibfnamefont {C.~A.}\ \bibnamefont
  {Miller}}\ and\ \bibinfo {author} {\bibfnamefont {Y.}~\bibnamefont {Shi}},\
  }\bibfield  {title} {\bibinfo {title} {Universal security for randomness
  expansion from the spot-checking protocol},\ }\href@noop {} {\bibfield
  {journal} {\bibinfo  {journal} {J.~ACM}\ }\textbf {\bibinfo {volume} {63}},\
  \bibinfo {pages} {Art. No. 33} (\bibinfo {year} {2016})},\ \bibinfo {note}
  {arXiv:1411.6608}\BibitemShut {NoStop}%
\bibitem [{\citenamefont {Dupuis}\ \emph {et~al.}(2016)\citenamefont {Dupuis},
  \citenamefont {Fawzi},\ and\ \citenamefont {Renner}}]{dupuis:qc2016a}%
  \BibitemOpen
  \bibfield  {author} {\bibinfo {author} {\bibfnamefont {F.}~\bibnamefont
  {Dupuis}}, \bibinfo {author} {\bibfnamefont {O.}~\bibnamefont {Fawzi}},\ and\
  \bibinfo {author} {\bibfnamefont {R.}~\bibnamefont {Renner}},\ }\bibfield
  {title} {\bibinfo {title} {Entropy accumulation}} (\bibinfo {year} {2016}),\
  \bibinfo {note} {arXiv:1607.01796 (specific citations are for version
  1)}\BibitemShut {NoStop}%
\bibitem [{\citenamefont {Arnon-Friedman}\ \emph {et~al.}(2018)\citenamefont
  {Arnon-Friedman}, \citenamefont {Dupuis}, \citenamefont {Fawzi},
  \citenamefont {Renner},\ and\ \citenamefont
  {Vidick}}]{arnon-friedman:qc2018a}%
  \BibitemOpen
  \bibfield  {author} {\bibinfo {author} {\bibfnamefont {R.}~\bibnamefont
  {Arnon-Friedman}}, \bibinfo {author} {\bibfnamefont {F.}~\bibnamefont
  {Dupuis}}, \bibinfo {author} {\bibfnamefont {O.}~\bibnamefont {Fawzi}},
  \bibinfo {author} {\bibfnamefont {R.}~\bibnamefont {Renner}},\ and\ \bibinfo
  {author} {\bibfnamefont {T.}~\bibnamefont {Vidick}},\ }\bibfield  {title}
  {\bibinfo {title} {Practical device-independent quantum cryptography via
  entropy accumulation},\ }\href {https://doi.org/10.1038/s41467-017-02307-4}
  {\bibfield  {journal} {\bibinfo  {journal} {Nature Communications}\ }\textbf
  {\bibinfo {volume} {9}},\ \bibinfo {pages} {459} (\bibinfo {year}
  {2018})}\BibitemShut {NoStop}%
\bibitem [{\citenamefont {Dupuis}\ and\ \citenamefont
  {Fawzi}(2018)}]{dupuis:qc2018a}%
  \BibitemOpen
  \bibfield  {author} {\bibinfo {author} {\bibfnamefont {F.}~\bibnamefont
  {Dupuis}}\ and\ \bibinfo {author} {\bibfnamefont {O.}~\bibnamefont {Fawzi}},\
  }\bibfield  {title} {\bibinfo {title} {Entropy accumulation with improved
  second-order}} (\bibinfo {year} {2018}),\ \bibinfo {note}
  {arXiv:1805.11652}\BibitemShut {NoStop}%
\bibitem [{\citenamefont {Tomamichel}\ \emph {et~al.}(2009)\citenamefont
  {Tomamichel}, \citenamefont {Colbeck},\ and\ \citenamefont
  {Renner}}]{tomamichel:qc2009a}%
  \BibitemOpen
  \bibfield  {author} {\bibinfo {author} {\bibfnamefont {M.}~\bibnamefont
  {Tomamichel}}, \bibinfo {author} {\bibfnamefont {R.}~\bibnamefont
  {Colbeck}},\ and\ \bibinfo {author} {\bibfnamefont {R.}~\bibnamefont
  {Renner}},\ }\bibfield  {title} {\bibinfo {title} {A fully quantum asymptotic
  equipartition property},\ }\href {https://doi.org/10.1109/TIT.2009.2032797}
  {\bibfield  {journal} {\bibinfo  {journal} {IEEE Trans. Inf. Theory}\
  }\textbf {\bibinfo {volume} {55}},\ \bibinfo {pages} {5840} (\bibinfo {year}
  {2009})}\BibitemShut {NoStop}%
\bibitem [{\citenamefont {Fischer}(2011)}]{fischer:qc2011a}%
  \BibitemOpen
  \bibfield  {author} {\bibinfo {author} {\bibfnamefont {M.~J.}\ \bibnamefont
  {Fischer}},\ }\bibfield  {title} {\bibinfo {title} {A public randomness
  service},\ }in\ \href@noop {} {\emph {\bibinfo {booktitle} {SECRYPT 2011}}}\
  (\bibinfo {year} {2011})\ pp.\ \bibinfo {pages} {434--438}\BibitemShut
  {NoStop}%
\bibitem [{\citenamefont {Zhang}\ \emph
  {et~al.}(2020{\natexlab{a}})\citenamefont {Zhang}, \citenamefont {Fu},\ and\
  \citenamefont {Knill}}]{zhang_y:qc2020a}%
  \BibitemOpen
  \bibfield  {author} {\bibinfo {author} {\bibfnamefont {Y.}~\bibnamefont
  {Zhang}}, \bibinfo {author} {\bibfnamefont {H.}~\bibnamefont {Fu}},\ and\
  \bibinfo {author} {\bibfnamefont {E.}~\bibnamefont {Knill}},\ }\bibfield
  {title} {\bibinfo {title} {Efficient randomness certification by quantum
  probability estimation},\ }\href
  {https://doi.org/10.1103/PhysRevResearch.2.013016} {\bibfield  {journal}
  {\bibinfo  {journal} {Physical Review Research}\ }\textbf {\bibinfo {volume}
  {2}},\ \bibinfo {pages} {013016} (\bibinfo {year}
  {2020}{\natexlab{a}})}\BibitemShut {NoStop}%
\bibitem [{\citenamefont {Zhang}\ \emph
  {et~al.}(2020{\natexlab{b}})\citenamefont {Zhang}, \citenamefont {Shalm},
  \citenamefont {Bienfang}, \citenamefont {Stevens}, \citenamefont {Mazurek},
  \citenamefont {Nam}, \citenamefont {Abellán}, \citenamefont {Amaya},
  \citenamefont {Mitchell}, \citenamefont {Fu}, \citenamefont {Miller},
  \citenamefont {Mink},\ and\ \citenamefont {Knill}}]{zhang_y:qc2018a}%
  \BibitemOpen
  \bibfield  {author} {\bibinfo {author} {\bibfnamefont {Y.}~\bibnamefont
  {Zhang}}, \bibinfo {author} {\bibfnamefont {L.~K.}\ \bibnamefont {Shalm}},
  \bibinfo {author} {\bibfnamefont {J.~C.}\ \bibnamefont {Bienfang}}, \bibinfo
  {author} {\bibfnamefont {M.~J.}\ \bibnamefont {Stevens}}, \bibinfo {author}
  {\bibfnamefont {M.~D.}\ \bibnamefont {Mazurek}}, \bibinfo {author}
  {\bibfnamefont {S.~W.}\ \bibnamefont {Nam}}, \bibinfo {author} {\bibfnamefont
  {C.}~\bibnamefont {Abellán}}, \bibinfo {author} {\bibfnamefont
  {W.}~\bibnamefont {Amaya}}, \bibinfo {author} {\bibfnamefont {M.~W.}\
  \bibnamefont {Mitchell}}, \bibinfo {author} {\bibfnamefont {H.}~\bibnamefont
  {Fu}}, \bibinfo {author} {\bibfnamefont {C.~A.}\ \bibnamefont {Miller}},
  \bibinfo {author} {\bibfnamefont {A.}~\bibnamefont {Mink}},\ and\ \bibinfo
  {author} {\bibfnamefont {E.}~\bibnamefont {Knill}},\ }\bibfield  {title}
  {\bibinfo {title} {Experimental low-latency device-independent quantum
  randomness},\ }\href {https://doi.org/10.1103/PhysRevLett.124.010505}
  {\bibfield  {journal} {\bibinfo  {journal} {Phys. Rev. Lett.}\ }\textbf
  {\bibinfo {volume} {124}},\ \bibinfo {pages} {010505} (\bibinfo {year}
  {2020}{\natexlab{b}})},\ \bibinfo {note} {arXiv:1812.07786}\BibitemShut
  {NoStop}%
\bibitem [{\citenamefont {Mauerer}\ \emph {et~al.}(2012)\citenamefont
  {Mauerer}, \citenamefont {Portmann},\ and\ \citenamefont
  {Scholz}}]{mauerer:2012}%
  \BibitemOpen
  \bibfield  {author} {\bibinfo {author} {\bibfnamefont {W.}~\bibnamefont
  {Mauerer}}, \bibinfo {author} {\bibfnamefont {C.}~\bibnamefont {Portmann}},\
  and\ \bibinfo {author} {\bibfnamefont {V.~B.}\ \bibnamefont {Scholz}},\
  }\bibfield  {title} {\bibinfo {title} {A modular framework for randomness
  extraction based on {T}revisan's construction}} (\bibinfo {year} {2012}),\
  \bibinfo {note} {arXiv:1212.0520, code available on
  \texttt{github}.}\BibitemShut {Stop}%
\bibitem [{\citenamefont {Bierhorst}\ \emph {et~al.}(2017)\citenamefont
  {Bierhorst}, \citenamefont {Knill}, \citenamefont {Glancy}, \citenamefont
  {Mink}, \citenamefont {Jordan}, \citenamefont {Rommal}, \citenamefont {Liu},
  \citenamefont {Christensen}, \citenamefont {Nam},\ and\ \citenamefont
  {Shalm}}]{bierhorst:qc2017a}%
  \BibitemOpen
  \bibfield  {author} {\bibinfo {author} {\bibfnamefont {P.}~\bibnamefont
  {Bierhorst}}, \bibinfo {author} {\bibfnamefont {E.}~\bibnamefont {Knill}},
  \bibinfo {author} {\bibfnamefont {S.}~\bibnamefont {Glancy}}, \bibinfo
  {author} {\bibfnamefont {A.}~\bibnamefont {Mink}}, \bibinfo {author}
  {\bibfnamefont {S.}~\bibnamefont {Jordan}}, \bibinfo {author} {\bibfnamefont
  {A.}~\bibnamefont {Rommal}}, \bibinfo {author} {\bibfnamefont {Y.-K.}\
  \bibnamefont {Liu}}, \bibinfo {author} {\bibfnamefont {B.}~\bibnamefont
  {Christensen}}, \bibinfo {author} {\bibfnamefont {S.~W.}\ \bibnamefont
  {Nam}},\ and\ \bibinfo {author} {\bibfnamefont {L.~K.}\ \bibnamefont
  {Shalm}},\ }\bibfield  {title} {\bibinfo {title} {Experimentally generated
  random numbers certified by the impossibility of superluminal signaling
  (version 1)}} (\bibinfo {year} {2017}),\ \bibinfo {note}
  {arXiv:1702.05178v1}\BibitemShut {NoStop}%
\bibitem [{\citenamefont {Bierhorst}\ \emph {et~al.}(2018)\citenamefont
  {Bierhorst}, \citenamefont {Knill}, \citenamefont {Glancy}, \citenamefont
  {Zhang}, \citenamefont {Mink}, \citenamefont {Jordan}, \citenamefont
  {Rommal}, \citenamefont {Liu}, \citenamefont {Christensen}, \citenamefont
  {Nam}, , \citenamefont {Stevens},\ and\ \citenamefont
  {Shalm}}]{bierhorst:qc2018a}%
  \BibitemOpen
  \bibfield  {author} {\bibinfo {author} {\bibfnamefont {P.}~\bibnamefont
  {Bierhorst}}, \bibinfo {author} {\bibfnamefont {E.}~\bibnamefont {Knill}},
  \bibinfo {author} {\bibfnamefont {S.}~\bibnamefont {Glancy}}, \bibinfo
  {author} {\bibfnamefont {Y.}~\bibnamefont {Zhang}}, \bibinfo {author}
  {\bibfnamefont {A.}~\bibnamefont {Mink}}, \bibinfo {author} {\bibfnamefont
  {S.}~\bibnamefont {Jordan}}, \bibinfo {author} {\bibfnamefont
  {A.}~\bibnamefont {Rommal}}, \bibinfo {author} {\bibfnamefont {Y.-K.}\
  \bibnamefont {Liu}}, \bibinfo {author} {\bibfnamefont {B.}~\bibnamefont
  {Christensen}}, \bibinfo {author} {\bibfnamefont {S.~W.}\ \bibnamefont
  {Nam}}, , \bibinfo {author} {\bibfnamefont {M.~J.}\ \bibnamefont {Stevens}},\
  and\ \bibinfo {author} {\bibfnamefont {L.~K.}\ \bibnamefont {Shalm}},\
  }\bibfield  {title} {\bibinfo {title} {Experimentally generated random
  numbers certified by the impossibility of superluminal signaling},\
  }\href@noop {} {\bibfield  {journal} {\bibinfo  {journal} {Nature}\ }\textbf
  {\bibinfo {volume} {556}},\ \bibinfo {pages} {223} (\bibinfo {year}
  {2018})}\BibitemShut {NoStop}%
\bibitem [{\citenamefont {Shafer}\ \emph {et~al.}(2011)\citenamefont {Shafer},
  \citenamefont {Shen}, \citenamefont {Vereshchagin},\ and\ \citenamefont
  {Vovk}}]{shafer:qc2009a}%
  \BibitemOpen
  \bibfield  {author} {\bibinfo {author} {\bibfnamefont {G.}~\bibnamefont
  {Shafer}}, \bibinfo {author} {\bibfnamefont {A.}~\bibnamefont {Shen}},
  \bibinfo {author} {\bibfnamefont {N.}~\bibnamefont {Vereshchagin}},\ and\
  \bibinfo {author} {\bibfnamefont {V.}~\bibnamefont {Vovk}},\ }\bibfield
  {title} {\bibinfo {title} {Test martingales, {Bayes} factors and
  $p$-values},\ }\href {https://doi.org/10.1214/10-STS347} {\bibfield
  {journal} {\bibinfo  {journal} {Statistical Science}\ }\textbf {\bibinfo
  {volume} {26}},\ \bibinfo {pages} {84} (\bibinfo {year} {2011})}\BibitemShut
  {NoStop}%
\bibitem [{\citenamefont {Zhang}\ \emph {et~al.}(2011)\citenamefont {Zhang},
  \citenamefont {Glancy},\ and\ \citenamefont {Knill}}]{zhang:2011}%
  \BibitemOpen
  \bibfield  {author} {\bibinfo {author} {\bibfnamefont {Y.}~\bibnamefont
  {Zhang}}, \bibinfo {author} {\bibfnamefont {S.}~\bibnamefont {Glancy}},\ and\
  \bibinfo {author} {\bibfnamefont {E.}~\bibnamefont {Knill}},\ }\bibfield
  {title} {\bibinfo {title} {Asymptotically optimal data analysis for rejecting
  local realism},\ }\href@noop {} {\bibfield  {journal} {\bibinfo  {journal}
  {Phys. Rev. A}\ }\textbf {\bibinfo {volume} {84}},\ \bibinfo {pages} {062118}
  (\bibinfo {year} {2011})}\BibitemShut {NoStop}%
\bibitem [{\citenamefont {Arnon-Friedman}\ \emph {et~al.}(2016)\citenamefont
  {Arnon-Friedman}, \citenamefont {Renner},\ and\ \citenamefont
  {Vidick}}]{arnon-friedman:qc2016a}%
  \BibitemOpen
  \bibfield  {author} {\bibinfo {author} {\bibfnamefont {R.}~\bibnamefont
  {Arnon-Friedman}}, \bibinfo {author} {\bibfnamefont {R.}~\bibnamefont
  {Renner}},\ and\ \bibinfo {author} {\bibfnamefont {T.}~\bibnamefont
  {Vidick}},\ }\bibfield  {title} {\bibinfo {title} {Simple and tight
  device-independent security proofs}} (\bibinfo {year} {2016}),\ \bibinfo
  {note} {arXiv:1607.01797 (specific citations are for version 1)}\BibitemShut
  {NoStop}%
\bibitem [{\citenamefont {Pironio}\ \emph {et~al.}(2009)\citenamefont
  {Pironio}, \citenamefont {Acin}, \citenamefont {Brunner}, \citenamefont
  {Gisin}, \citenamefont {Massar},\ and\ \citenamefont
  {Scarani}}]{pironio:qc2009a}%
  \BibitemOpen
  \bibfield  {author} {\bibinfo {author} {\bibfnamefont {S.}~\bibnamefont
  {Pironio}}, \bibinfo {author} {\bibfnamefont {A.}~\bibnamefont {Acin}},
  \bibinfo {author} {\bibfnamefont {N.}~\bibnamefont {Brunner}}, \bibinfo
  {author} {\bibfnamefont {N.}~\bibnamefont {Gisin}}, \bibinfo {author}
  {\bibfnamefont {S.}~\bibnamefont {Massar}},\ and\ \bibinfo {author}
  {\bibfnamefont {V.}~\bibnamefont {Scarani}},\ }\bibfield  {title} {\bibinfo
  {title} {Device-independent quantum key distribution secure against
  collective attacks},\ }\href {https://doi.org/10.1088/1367-2630/11/4/045021}
  {\bibfield  {journal} {\bibinfo  {journal} {New Journal of Physics}\ }\textbf
  {\bibinfo {volume} {11}},\ \bibinfo {pages} {045021} (\bibinfo {year}
  {2009})}\BibitemShut {NoStop}%
\bibitem [{\citenamefont {Tsirelson}(1993)}]{tsirelson:qc1993a}%
  \BibitemOpen
  \bibfield  {author} {\bibinfo {author} {\bibfnamefont {B.}~\bibnamefont
  {Tsirelson}},\ }\bibfield  {title} {\bibinfo {title} {Some results and
  problems on quantum bell-type inequalities},\ }\href@noop {} {\bibfield
  {journal} {\bibinfo  {journal} {Hadronic J. Suppl.}\ }\textbf {\bibinfo
  {volume} {8}},\ \bibinfo {pages} {329} (\bibinfo {year} {1993})}\BibitemShut
  {NoStop}%
\bibitem [{\citenamefont {Masanes}(2006)}]{masanes:qc2006a}%
  \BibitemOpen
  \bibfield  {author} {\bibinfo {author} {\bibfnamefont {L.}~\bibnamefont
  {Masanes}},\ }\bibfield  {title} {\bibinfo {title} {Asymptotic violation of
  bell inequalities and distillability},\ }\href@noop {} {\bibfield  {journal}
  {\bibinfo  {journal} {Phys. Rev. Lett.}\ }\textbf {\bibinfo {volume} {97}},\
  \bibinfo {pages} {050503/1} (\bibinfo {year} {2006})}\BibitemShut {NoStop}%
\bibitem [{\citenamefont {Pironio}\ \emph {et~al.}(2010)\citenamefont
  {Pironio}, \citenamefont {Acin}, \citenamefont {Massar}, \citenamefont {de~la
  Giroday}, \citenamefont {Matsukevich}, \citenamefont {Maunz}, \citenamefont
  {Olmschenk}, \citenamefont {Hayes}, \citenamefont {Luo}, \citenamefont
  {Manning},\ and\ \citenamefont {Monroe}}]{pironio:qc2010a}%
  \BibitemOpen
  \bibfield  {author} {\bibinfo {author} {\bibfnamefont {S.}~\bibnamefont
  {Pironio}}, \bibinfo {author} {\bibfnamefont {A.}~\bibnamefont {Acin}},
  \bibinfo {author} {\bibfnamefont {S.}~\bibnamefont {Massar}}, \bibinfo
  {author} {\bibfnamefont {A.~B.}\ \bibnamefont {de~la Giroday}}, \bibinfo
  {author} {\bibfnamefont {D.~N.}\ \bibnamefont {Matsukevich}}, \bibinfo
  {author} {\bibfnamefont {P.}~\bibnamefont {Maunz}}, \bibinfo {author}
  {\bibfnamefont {S.}~\bibnamefont {Olmschenk}}, \bibinfo {author}
  {\bibfnamefont {D.}~\bibnamefont {Hayes}}, \bibinfo {author} {\bibfnamefont
  {L.}~\bibnamefont {Luo}}, \bibinfo {author} {\bibfnamefont {T.~A.}\
  \bibnamefont {Manning}},\ and\ \bibinfo {author} {\bibfnamefont
  {C.}~\bibnamefont {Monroe}},\ }\bibfield  {title} {\bibinfo {title} {Random
  numbers certified by bell's theorem},\ }\href@noop {} {\bibfield  {journal}
  {\bibinfo  {journal} {Nature}\ }\textbf {\bibinfo {volume} {464}},\ \bibinfo
  {pages} {1021} (\bibinfo {year} {2010})}\BibitemShut {NoStop}%
\bibitem [{\citenamefont {Carlen}(2010)}]{carlen:qc2009a}%
  \BibitemOpen
  \bibfield  {author} {\bibinfo {author} {\bibfnamefont {E.~A.}\ \bibnamefont
  {Carlen}},\ }\bibfield  {title} {\bibinfo {title} {Trace inequalities and
  quantum entropy: An introductory course},\ }in\ \href@noop {} {\emph
  {\bibinfo {booktitle} {Entropy and the Quantum}}},\ \bibinfo {series}
  {Contemporary Mathematics}, Vol.\ \bibinfo {volume} {529}\ (\bibinfo
  {publisher} {American Mathematical Society},\ \bibinfo {year} {2010})\ pp.\
  \bibinfo {pages} {73--140}\BibitemShut {NoStop}%
\bibitem [{\citenamefont {Hayden}\ \emph {et~al.}(2004)\citenamefont {Hayden},
  \citenamefont {Jozsa}, \citenamefont {Petz},\ and\ \citenamefont
  {Winter}}]{hayden:qc2004b}%
  \BibitemOpen
  \bibfield  {author} {\bibinfo {author} {\bibfnamefont {P.}~\bibnamefont
  {Hayden}}, \bibinfo {author} {\bibfnamefont {R.}~\bibnamefont {Jozsa}},
  \bibinfo {author} {\bibfnamefont {D.}~\bibnamefont {Petz}},\ and\ \bibinfo
  {author} {\bibfnamefont {A.}~\bibnamefont {Winter}},\ }\bibfield  {title}
  {\bibinfo {title} {Structure of states which satisfy strong subadditivity of
  quantum entropy with equality},\ }\href
  {https://doi.org/10.1007/s00220-004-1049-z} {\bibfield  {journal} {\bibinfo
  {journal} {Comm. Math. Phys.}\ }\textbf {\bibinfo {volume} {246}},\ \bibinfo
  {pages} {359} (\bibinfo {year} {2004})}\BibitemShut {NoStop}%
\bibitem [{\citenamefont {Tomamichel}(2012)}]{tomamichel:qc2012a}%
  \BibitemOpen
  \bibfield  {author} {\bibinfo {author} {\bibfnamefont {M.}~\bibnamefont
  {Tomamichel}},\ }\emph {\bibinfo {title} {A Framework for Non-Asymptotic
  Quantum Information Theory}},\ \href@noop {} {Ph.D. thesis},\ \bibinfo
  {school} {ETH}, \bibinfo {address} {Z\"urich, Switzerland} (\bibinfo {year}
  {2012}),\ \bibinfo {note} {(specific citations are for arXiv:1203.2142
  version 2, note that definitions, lemmas, propositions, etc. are
  independently numbered)}\BibitemShut {NoStop}%
\bibitem [{\citenamefont {Tomamichel}(2016)}]{tomamichel:qc2015a}%
  \BibitemOpen
  \bibfield  {author} {\bibinfo {author} {\bibfnamefont {M.}~\bibnamefont
  {Tomamichel}},\ }\href {https://doi.org/10.1007/978-3-319-21891-5} {\emph
  {\bibinfo {title} {Quantum Information Processing with Finite Resources -
  Mathematical Foundations}}},\ SpringerBriefs in Mathematical Physics\
  (\bibinfo  {publisher} {Springer Verlag},\ \bibinfo {year} {2016})\ \bibinfo
  {note} {(specific citations are for arXiv:1504.00233 version 3, note that
  definitions, lemmas, propositions, etc. are independently
  numbered)}\BibitemShut {NoStop}%
\bibitem [{\citenamefont {Nielsen}\ and\ \citenamefont
  {Chuang}(2001)}]{nielsen:qc2001a}%
  \BibitemOpen
  \bibfield  {author} {\bibinfo {author} {\bibfnamefont {M.~A.}\ \bibnamefont
  {Nielsen}}\ and\ \bibinfo {author} {\bibfnamefont {I.~L.}\ \bibnamefont
  {Chuang}},\ }\href@noop {} {\emph {\bibinfo {title} {Quantum Computation and
  Quantum Information}}}\ (\bibinfo  {publisher} {Cambridge University Press},\
  \bibinfo {address} {Cambridge, UK},\ \bibinfo {year} {2001})\BibitemShut
  {NoStop}%
\bibitem [{\citenamefont {Bhatia}(1997)}]{bhatia:qc1997a}%
  \BibitemOpen
  \bibfield  {author} {\bibinfo {author} {\bibfnamefont {R.}~\bibnamefont
  {Bhatia}},\ }\href@noop {} {\emph {\bibinfo {title} {Matrix Analysis}}}\
  (\bibinfo  {publisher} {Springer},\ \bibinfo {address} {New York},\ \bibinfo
  {year} {1997})\BibitemShut {NoStop}%
\bibitem [{\citenamefont {M\"uller-Lennert}\ \emph {et~al.}(2013)\citenamefont
  {M\"uller-Lennert}, \citenamefont {Dupuis}, \citenamefont {Szehr},
  \citenamefont {Fehr},\ and\ \citenamefont
  {Tomamichel}}]{mueller-lennert:qc2013a}%
  \BibitemOpen
  \bibfield  {author} {\bibinfo {author} {\bibfnamefont {M.}~\bibnamefont
  {M\"uller-Lennert}}, \bibinfo {author} {\bibfnamefont {F.}~\bibnamefont
  {Dupuis}}, \bibinfo {author} {\bibfnamefont {O.}~\bibnamefont {Szehr}},
  \bibinfo {author} {\bibfnamefont {S.}~\bibnamefont {Fehr}},\ and\ \bibinfo
  {author} {\bibfnamefont {M.}~\bibnamefont {Tomamichel}},\ }\bibfield  {title}
  {\bibinfo {title} {On quantum {R\'enyi} entropies: A new generalization and
  some properties},\ }\href {https://doi.org/10.1063/1.4838856} {\bibfield
  {journal} {\bibinfo  {journal} {J. Math. Phys.}\ }\textbf {\bibinfo {volume}
  {54}},\ \bibinfo {pages} {122203} (\bibinfo {year} {2013})}\BibitemShut
  {NoStop}%
\bibitem [{\citenamefont {Frank}\ and\ \citenamefont
  {Lieb}(2013)}]{frank:qc2013a}%
  \BibitemOpen
  \bibfield  {author} {\bibinfo {author} {\bibfnamefont {R.~L.}\ \bibnamefont
  {Frank}}\ and\ \bibinfo {author} {\bibfnamefont {E.~H.}\ \bibnamefont
  {Lieb}},\ }\bibfield  {title} {\bibinfo {title} {Monotonicity of a relative
  {R\'enyi} entropy},\ }\href {https://doi.org/10.1063/1.4838835} {\bibfield
  {journal} {\bibinfo  {journal} {J. Math. Phys.}\ }\textbf {\bibinfo {volume}
  {54}},\ \bibinfo {pages} {122201} (\bibinfo {year} {2013})}\BibitemShut
  {NoStop}%
\bibitem [{\citenamefont {Beigi}(2013)}]{beigi:qc2013a}%
  \BibitemOpen
  \bibfield  {author} {\bibinfo {author} {\bibfnamefont {S.}~\bibnamefont
  {Beigi}},\ }\bibfield  {title} {\bibinfo {title} {Sandwiche {R\'enyi}
  divergence satisfies data processing inequality},\ }\href
  {https://doi.org/10.1063/1.4838855} {\bibfield  {journal} {\bibinfo
  {journal} {J. Math. Phys}\ }\textbf {\bibinfo {volume} {54}},\ \bibinfo
  {pages} {122202} (\bibinfo {year} {2013})}\BibitemShut {NoStop}%
\bibitem [{\citenamefont {K\"onig}\ \emph {et~al.}(2009)\citenamefont
  {K\"onig}, \citenamefont {Renner},\ and\ \citenamefont
  {Schaffner}}]{koenig:qc2009a}%
  \BibitemOpen
  \bibfield  {author} {\bibinfo {author} {\bibfnamefont {R.}~\bibnamefont
  {K\"onig}}, \bibinfo {author} {\bibfnamefont {R.}~\bibnamefont {Renner}},\
  and\ \bibinfo {author} {\bibfnamefont {C.}~\bibnamefont {Schaffner}},\
  }\bibfield  {title} {\bibinfo {title} {The operational meaning of min- and
  max-entropy},\ }\href@noop {} {\bibfield  {journal} {\bibinfo  {journal}
  {IEEE Trans. Inf. Th.}\ }\textbf {\bibinfo {volume} {55}},\ \bibinfo {pages}
  {4337} (\bibinfo {year} {2009})}\BibitemShut {NoStop}%
\bibitem [{\citenamefont {Renner}(2005)}]{renner:qc2005}%
  \BibitemOpen
  \bibfield  {author} {\bibinfo {author} {\bibfnamefont {R.}~\bibnamefont
  {Renner}},\ }\emph {\bibinfo {title} {Security of Quantum Key
  Distribution}},\ \href@noop {} {Ph.D. thesis},\ \bibinfo  {school} {ETH},
  \bibinfo {address} {Z\"urich, Switzerland} (\bibinfo {year} {2005}),\
  \bibinfo {note} {(available as arXiv:quant-ph/0512258 version 2)}\BibitemShut
  {NoStop}%
\bibitem [{\citenamefont {Kessler}\ and\ \citenamefont
  {Arnon-Friedman}(2017)}]{kessler:qc2017a}%
  \BibitemOpen
  \bibfield  {author} {\bibinfo {author} {\bibfnamefont {M.}~\bibnamefont
  {Kessler}}\ and\ \bibinfo {author} {\bibfnamefont {R.}~\bibnamefont
  {Arnon-Friedman}},\ }\bibfield  {title} {\bibinfo {title} {Device-independent
  randomness amplification and privatization}} (\bibinfo {year} {2017}),\
  \bibinfo {note} {arXiv:1705.04148}\BibitemShut {NoStop}%
\bibitem [{\citenamefont {Barrett}\ \emph {et~al.}(2013)\citenamefont
  {Barrett}, \citenamefont {Colbeck},\ and\ \citenamefont
  {Kent}}]{barrett_j:qc2012a}%
  \BibitemOpen
  \bibfield  {author} {\bibinfo {author} {\bibfnamefont {J.}~\bibnamefont
  {Barrett}}, \bibinfo {author} {\bibfnamefont {R.}~\bibnamefont {Colbeck}},\
  and\ \bibinfo {author} {\bibfnamefont {A.}~\bibnamefont {Kent}},\ }\bibfield
  {title} {\bibinfo {title} {Memory attacks on device-independent quantum
  cryptography},\ }\href@noop {} {\bibfield  {journal} {\bibinfo  {journal}
  {Phys. Rev. Lett.}\ }\textbf {\bibinfo {volume} {110}},\ \bibinfo {pages}
  {010503} (\bibinfo {year} {2013})}\BibitemShut {NoStop}%
\bibitem [{\citenamefont {Boyd}\ and\ \citenamefont
  {Vandenberghe}(2004)}]{boyd:qc2004a}%
  \BibitemOpen
  \bibfield  {author} {\bibinfo {author} {\bibfnamefont {S.}~\bibnamefont
  {Boyd}}\ and\ \bibinfo {author} {\bibfnamefont {L.}~\bibnamefont
  {Vandenberghe}},\ }\href@noop {} {\emph {\bibinfo {title} {Convex
  Optimization}}}\ (\bibinfo  {publisher} {Cambridge University Press},\
  \bibinfo {address} {Cambridge, UK},\ \bibinfo {year} {2004})\BibitemShut
  {NoStop}%
\bibitem [{\citenamefont {Kadison}\ and\ \citenamefont
  {Ringrose}(1997)}]{kadison:qf1997a}%
  \BibitemOpen
  \bibfield  {author} {\bibinfo {author} {\bibfnamefont {R.~V.}\ \bibnamefont
  {Kadison}}\ and\ \bibinfo {author} {\bibfnamefont {J.~R.}\ \bibnamefont
  {Ringrose}},\ }\href@noop {} {\emph {\bibinfo {title} {Fundamentals of Theory
  of Operator Algebras. Vol. I: Elementary Theory}}},\ \bibinfo {series}
  {Graduate Studies in Mathematics}, Vol.~\bibinfo {volume} {15}\ (\bibinfo
  {publisher} {American Mathematical Socieity},\ \bibinfo {address}
  {Providence, RI},\ \bibinfo {year} {1997})\BibitemShut {NoStop}%
\bibitem [{\citenamefont {Jaggi}(2013)}]{jaggi:qc2013a}%
  \BibitemOpen
  \bibfield  {author} {\bibinfo {author} {\bibfnamefont {M.}~\bibnamefont
  {Jaggi}},\ }\bibfield  {title} {\bibinfo {title} {Revisiting frank-wolfe:
  Projection-free sparse convex optimization},\ }in\ \href@noop {} {\emph
  {\bibinfo {booktitle} {Proceedings of the 30th International Conference on
  Machine Learning}}},\ \bibinfo {series} {Proceedings of Machine Learning
  Research}, Vol.~\bibinfo {volume} {28}\ (\bibinfo {year} {2013})\ pp.\
  \bibinfo {pages} {427--435}\BibitemShut {NoStop}%
\bibitem [{\citenamefont {Hradil}\ \emph {et~al.}(2004)\citenamefont {Hradil},
  \citenamefont {Rehacek}, \citenamefont {Fiurasek},\ and\ \citenamefont
  {Jezek}}]{hradil:qc2004a}%
  \BibitemOpen
  \bibfield  {author} {\bibinfo {author} {\bibfnamefont {Z.}~\bibnamefont
  {Hradil}}, \bibinfo {author} {\bibfnamefont {J.}~\bibnamefont {Rehacek}},
  \bibinfo {author} {\bibfnamefont {J.}~\bibnamefont {Fiurasek}},\ and\
  \bibinfo {author} {\bibfnamefont {M.}~\bibnamefont {Jezek}},\ }\bibfield
  {title} {\bibinfo {title} {Maximum-likelihood methods in quantum mechanics},\
  }in\ \href@noop {} {\emph {\bibinfo {booktitle} {Quantum State Estimation}}}\
  (\bibinfo  {publisher} {Springer-Verlag},\ \bibinfo {address} {New York},\
  \bibinfo {year} {2004})\ pp.\ \bibinfo {pages} {163--172}\BibitemShut
  {NoStop}%
\bibitem [{\citenamefont {Rehacek}\ \emph {et~al.}(2006)\citenamefont
  {Rehacek}, \citenamefont {Hradil}, \citenamefont {Knill},\ and\ \citenamefont
  {Lvovsky}}]{rehacek:qc2006a}%
  \BibitemOpen
  \bibfield  {author} {\bibinfo {author} {\bibfnamefont {J.}~\bibnamefont
  {Rehacek}}, \bibinfo {author} {\bibfnamefont {Z.}~\bibnamefont {Hradil}},
  \bibinfo {author} {\bibfnamefont {E.}~\bibnamefont {Knill}},\ and\ \bibinfo
  {author} {\bibfnamefont {A.~I.}\ \bibnamefont {Lvovsky}},\ }\bibfield
  {title} {\bibinfo {title} {Diluted maximum-likelihood algorithm for quantum
  tomography},\ }\href {https://doi.org/10.1007/s10751-007-9571-y} {\bibfield
  {journal} {\bibinfo  {journal} {Phys. Rev. A}\ }\textbf {\bibinfo {volume}
  {75}},\ \bibinfo {pages} {042108/1} (\bibinfo {year} {2006})},\ \bibinfo
  {note} {arXiv:quant-ph/0611244}\BibitemShut {NoStop}%
\bibitem [{\citenamefont {Navascu\'es}\ \emph {et~al.}(2007)\citenamefont
  {Navascu\'es}, \citenamefont {Pironio},\ and\ \citenamefont
  {Ac\'{\i}n}}]{navascues:2007}%
  \BibitemOpen
  \bibfield  {author} {\bibinfo {author} {\bibfnamefont {M.}~\bibnamefont
  {Navascu\'es}}, \bibinfo {author} {\bibfnamefont {S.}~\bibnamefont
  {Pironio}},\ and\ \bibinfo {author} {\bibfnamefont {A.}~\bibnamefont
  {Ac\'{\i}n}},\ }\bibfield  {title} {\bibinfo {title} {Bounding the set of
  quantum correlations},\ }\href
  {https://doi.org/10.1103/PhysRevLett.98.010401} {\bibfield  {journal}
  {\bibinfo  {journal} {Phys. Rev. Lett.}\ }\textbf {\bibinfo {volume} {98}},\
  \bibinfo {pages} {010401} (\bibinfo {year} {2007})}\BibitemShut {NoStop}%
\bibitem [{\citenamefont {Zhang}\ \emph {et~al.}(2018)\citenamefont {Zhang},
  \citenamefont {Knill},\ and\ \citenamefont {Bierhorst}}]{zhang:qc2018a}%
  \BibitemOpen
  \bibfield  {author} {\bibinfo {author} {\bibfnamefont {Y.}~\bibnamefont
  {Zhang}}, \bibinfo {author} {\bibfnamefont {E.}~\bibnamefont {Knill}},\ and\
  \bibinfo {author} {\bibfnamefont {P.}~\bibnamefont {Bierhorst}},\ }\bibfield
  {title} {\bibinfo {title} {Certifying quantum randomness by probability
  estimation},\ }\href {https://doi.org/10.1103/PhysRevA.98.040304} {\bibfield
  {journal} {\bibinfo  {journal} {Phys. Rev. A}\ }\textbf {\bibinfo {volume}
  {98}},\ \bibinfo {pages} {040304(R)} (\bibinfo {year} {2018})}\BibitemShut
  {NoStop}%
\bibitem [{\citenamefont {Fehr}\ \emph {et~al.}(2013)\citenamefont {Fehr},
  \citenamefont {Gelles},\ and\ \citenamefont {Schaffner}}]{fehr:qc2013a}%
  \BibitemOpen
  \bibfield  {author} {\bibinfo {author} {\bibfnamefont {S.}~\bibnamefont
  {Fehr}}, \bibinfo {author} {\bibfnamefont {R.}~\bibnamefont {Gelles}},\ and\
  \bibinfo {author} {\bibfnamefont {C.}~\bibnamefont {Schaffner}},\ }\bibfield
  {title} {\bibinfo {title} {Security and composability of randomness expansion
  from {Bell} inequalities},\ }\href@noop {} {\bibfield  {journal} {\bibinfo
  {journal} {Phys. Rev. A}\ }\textbf {\bibinfo {volume} {87}},\ \bibinfo
  {pages} {012335} (\bibinfo {year} {2013})}\BibitemShut {NoStop}%
\bibitem [{\citenamefont {Pironio}\ and\ \citenamefont
  {Massar}(2013)}]{pironio:qc2011a}%
  \BibitemOpen
  \bibfield  {author} {\bibinfo {author} {\bibfnamefont {S.}~\bibnamefont
  {Pironio}}\ and\ \bibinfo {author} {\bibfnamefont {S.}~\bibnamefont
  {Massar}},\ }\bibfield  {title} {\bibinfo {title} {Security of practical
  private randomness generation},\ }\href
  {https://doi.org/10.1103/PhysRevA.87.012336} {\bibfield  {journal} {\bibinfo
  {journal} {Phys. Rev. A}\ }\textbf {\bibinfo {volume} {87}},\ \bibinfo
  {pages} {012336} (\bibinfo {year} {2013})},\ \bibinfo {note}
  {arXiv:1111.6056}\BibitemShut {NoStop}%
\bibitem [{\citenamefont {Acin}\ \emph {et~al.}(2012)\citenamefont {Acin},
  \citenamefont {Massar},\ and\ \citenamefont {Pironio}}]{acin:qc2012a}%
  \BibitemOpen
  \bibfield  {author} {\bibinfo {author} {\bibfnamefont {A.}~\bibnamefont
  {Acin}}, \bibinfo {author} {\bibfnamefont {S.}~\bibnamefont {Massar}},\ and\
  \bibinfo {author} {\bibfnamefont {S.}~\bibnamefont {Pironio}},\ }\bibfield
  {title} {\bibinfo {title} {Randomness versus nonlocality and entanglement},\
  }\href@noop {} {\bibfield  {journal} {\bibinfo  {journal} {Phys. Rev. Lett}\
  }\textbf {\bibinfo {volume} {108}},\ \bibinfo {pages} {100402/1} (\bibinfo
  {year} {2012})}\BibitemShut {NoStop}%
\bibitem [{\citenamefont {Nieto-Silleras}\ \emph {et~al.}(2014)\citenamefont
  {Nieto-Silleras}, \citenamefont {Pironio},\ and\ \citenamefont
  {Silman}}]{nieto-silleras:qc2014a}%
  \BibitemOpen
  \bibfield  {author} {\bibinfo {author} {\bibfnamefont {O.}~\bibnamefont
  {Nieto-Silleras}}, \bibinfo {author} {\bibfnamefont {S.}~\bibnamefont
  {Pironio}},\ and\ \bibinfo {author} {\bibfnamefont {J.}~\bibnamefont
  {Silman}},\ }\bibfield  {title} {\bibinfo {title} {Using complete measurement
  statistics for optimal device-independent randomness evaluation},\
  }\href@noop {} {\bibfield  {journal} {\bibinfo  {journal} {New Journal of
  Physics}\ }\textbf {\bibinfo {volume} {16}},\ \bibinfo {pages} {013035}
  (\bibinfo {year} {2014})}\BibitemShut {NoStop}%
\bibitem [{\citenamefont {Bancal}\ \emph {et~al.}(2014)\citenamefont {Bancal},
  \citenamefont {Sheridan},\ and\ \citenamefont {Scarani}}]{bancal:qc2014a}%
  \BibitemOpen
  \bibfield  {author} {\bibinfo {author} {\bibfnamefont {J.-D.}\ \bibnamefont
  {Bancal}}, \bibinfo {author} {\bibfnamefont {L.}~\bibnamefont {Sheridan}},\
  and\ \bibinfo {author} {\bibfnamefont {V.}~\bibnamefont {Scarani}},\
  }\bibfield  {title} {\bibinfo {title} {More randomness from the same data},\
  }\href@noop {} {\bibfield  {journal} {\bibinfo  {journal} {New Journal of
  Physics}\ }\textbf {\bibinfo {volume} {16}},\ \bibinfo {pages} {033011}
  (\bibinfo {year} {2014})}\BibitemShut {NoStop}%
\bibitem [{\citenamefont {Nieto-Silleras}\ \emph {et~al.}(2016)\citenamefont
  {Nieto-Silleras}, \citenamefont {Bamps}, \citenamefont {Silman},\ and\
  \citenamefont {Pironio}}]{nieto-silleras:qc2016a}%
  \BibitemOpen
  \bibfield  {author} {\bibinfo {author} {\bibfnamefont {O.}~\bibnamefont
  {Nieto-Silleras}}, \bibinfo {author} {\bibfnamefont {C.}~\bibnamefont
  {Bamps}}, \bibinfo {author} {\bibfnamefont {J.}~\bibnamefont {Silman}},\ and\
  \bibinfo {author} {\bibfnamefont {S.}~\bibnamefont {Pironio}},\ }\bibfield
  {title} {\bibinfo {title} {Device-independent randomness generation from
  several {Bell} estimators}} (\bibinfo {year} {2016}),\ \bibinfo {note}
  {arXiv:1611.00352}\BibitemShut {NoStop}%
\bibitem [{\citenamefont {Clauser}\ \emph {et~al.}(1969)\citenamefont
  {Clauser}, \citenamefont {Horne}, \citenamefont {Shimony},\ and\
  \citenamefont {Holt}}]{clauser:qc1969a}%
  \BibitemOpen
  \bibfield  {author} {\bibinfo {author} {\bibfnamefont {J.~F.}\ \bibnamefont
  {Clauser}}, \bibinfo {author} {\bibfnamefont {M.~A.}\ \bibnamefont {Horne}},
  \bibinfo {author} {\bibfnamefont {A.}~\bibnamefont {Shimony}},\ and\ \bibinfo
  {author} {\bibfnamefont {R.~A.}\ \bibnamefont {Holt}},\ }\bibfield  {title}
  {\bibinfo {title} {Proposed experiment to test local hidden-variable
  theories},\ }\href@noop {} {\bibfield  {journal} {\bibinfo  {journal} {Phys.
  Rev. Lett.}\ }\textbf {\bibinfo {volume} {23}},\ \bibinfo {pages} {880}
  (\bibinfo {year} {1969})}\BibitemShut {NoStop}%
\bibitem [{\citenamefont {van Dam}\ \emph {et~al.}(2005)\citenamefont {van
  Dam}, \citenamefont {Gill},\ and\ \citenamefont {Grunwald}}]{vanDam:2005}%
  \BibitemOpen
  \bibfield  {author} {\bibinfo {author} {\bibfnamefont {W.}~\bibnamefont {van
  Dam}}, \bibinfo {author} {\bibfnamefont {R.~D.}\ \bibnamefont {Gill}},\ and\
  \bibinfo {author} {\bibfnamefont {P.~D.}\ \bibnamefont {Grunwald}},\
  }\bibfield  {title} {\bibinfo {title} {The statistical strength of
  nonlocality proofs},\ }\href@noop {} {\bibfield  {journal} {\bibinfo
  {journal} {IEEE Trans. Inf. Theory.}\ }\textbf {\bibinfo {volume} {51}},\
  \bibinfo {pages} {2812} (\bibinfo {year} {2005})}\BibitemShut {NoStop}%
\bibitem [{\citenamefont {Zhang}\ \emph {et~al.}(2010)\citenamefont {Zhang},
  \citenamefont {Knill},\ and\ \citenamefont {Glancy}}]{zhang:2010}%
  \BibitemOpen
  \bibfield  {author} {\bibinfo {author} {\bibfnamefont {Y.}~\bibnamefont
  {Zhang}}, \bibinfo {author} {\bibfnamefont {E.}~\bibnamefont {Knill}},\ and\
  \bibinfo {author} {\bibfnamefont {S.}~\bibnamefont {Glancy}},\ }\bibfield
  {title} {\bibinfo {title} {Statistical strength of experiments to reject
  local realism with photon pairs and inefficient detectors},\ }\href
  {https://doi.org/10.1103/PhysRevA.81.032117} {\bibfield  {journal} {\bibinfo
  {journal} {Phys. Rev. A}\ }\textbf {\bibinfo {volume} {81}},\ \bibinfo
  {pages} {032117} (\bibinfo {year} {2010})}\BibitemShut {NoStop}%
\end{thebibliography}%

{\small
  \begin{description}[\compact]
  \item[]\textbf{arXiv revision notes:}
  \item[]
    \begin{description}[\compact]
    \item[V1.] Original submission.
    \item[V2.] First revision.
    \item[]
      \begin{description}[\compact]
      \item[1.] Clarified soundness definitions in Sect.~\ref{subsec:soundness}
        and corrected the discussion of dependence and extension to
        initial classical variables.
      \item[2.] Added remarks on and references to Dupuis and Fawzi's
        second-order improvement of the EAT~\cite{dupuis:qc2018a}.
      \item[3.] Clarified definition of expected quantum net log-prob.
      \item[4.] Miscellaneous clarifications and minor corrections.
      \end{description}
    \item[V3.] Second revision.
    \item[]
      \begin{description}[\compact]
      \item[1.] Clarified the comparison to EAT in Sect.~\ref{subsec:examples}
        and improved the treatment and discussion of the probability of success
        parameter $\kappa$.
      \item[2.] Added references to our published papers based on this work.
        Ref.~\cite{zhang_y:qc2020a} covers the basic theory of QEFs
        for randomness generation and Ref.~\cite{zhang_y:qc2018a} describes an
        experimental implementation for repeated and low-latency production
        of blocks of $512$ random bits.
      \end{description}
    \item[V4.] Third revision.
    \item[]
      \begin{description}[\compact]
      \item[1.] Corrected the definition of $n$ and the argument to the
        extractor in Protocol~\ref{prot:condimplicit}. This protocol
        does not require the Markov chain condition and works best if the settings distribution
        is explicitly generated from a random source that is designed to be uniform.
        Uniformity is not required for validity of the protocol.
      \item[2.] Fixed the definition of conditional states, it was
        unintentionally much too restrictive.
      \item[3.] Fixed a mistake in Thm.~\ref{thm:ee_conduniformbnd}: \(\pCP\) closure
        needs to be assumed rather than added.
      \item[4.] Edited Sects.~\ref{sec:qefsandee} and~\ref{sec:qef_constructs} for
        clarity and typos.
      \end{description}
    \end{description}
  \end{description}
}
\end{document}